\providecommand{\U}[1]{\protect\rule{.1in}{.1in}}
\providecommand{\U}[1]{\protect\rule{.1in}{.1in}}
\numberwithin{equation}{section}
\numberwithin{figure}{section}
\numberwithin{equation}{section}
\numberwithin{figure}{section}
\numberwithin{equation}{section}
\numberwithin{figure}{section}
\theoremstyle{plain}
\newtheorem{theorem}{Theorem}[section]
\newtheorem{corollary}[theorem]{Corollary}
\newtheorem{lemma}[theorem]{Lemma}
\newtheorem{proposition}[theorem]{Proposition}
\renewenvironment{proof}[1][Proof]{\textbf{#1.} }{\ \rule{0.5em}{0.5em}}
\theoremstyle{definition}
\newtheorem{definition}[theorem]{Definition}
\newtheorem{notation}[theorem]{Notation}
\newtheorem{example}[theorem]{Example}
\theoremstyle{remark}
\newtheorem{remark}[theorem]{Remark}
\theoremstyle{plain}
\theoremstyle{definition}
\newtheorem{assumption}[]{Assumption}
\numberwithin{equation}{section}
\begin{document}
\author[Driver]{Bruce K. Driver}
\author[Tong]{Pun Wai Tong}
\address[Driver]{Department of Mathematics, University of California, San Diego, La
Jolla, California 92093 USA.}
\email{bdriver@ucsd.edu}
\urladdr{www.math.ucsd.edu/$\sim$bdriver/}

\address[Tong]{Department of Mathematics, University of California, San Diego, La
Jolla, California 92093 USA. }
\email{p1tong@ucsd.edu}
\urladdr{www.math.ucsd.edu/$\sim$p1tong/}

\thanks{The research of Bruce K. Driver was supported in part by NSF Grant
DMS-0739164. The author also greatfully acknowledges the generosity and
hospitality of the Imperial College Mathematics department where the author
was a visiting Nelder fellow in the Fall of 2014.}
\thanks{The research of Pun Wai Tong was supported in part by NSF Grant DMS-0739164}

\subjclass{Primary  81Q20 , 81S05 ; Secondary 47D08, 47A63}
\keywords{Quantum Mechanics, Classical limit, Hepp's method}

\title{On the classical limit of quantum mechanics I.}
\date{November 7, 2015}
\subjclass{Primary  81Q20 , 81S05 ; Secondary 47D08, 47A63}
\keywords{Quantum Mechanics, Classical limit, Hepp's method}

\begin{abstract}
This paper is devoted to the study of the classical limit of quantum
mechanics. In more detail we will elaborate on a method introduced by Hepp in
1974 for studying the asymptotic behavior of quantum expectations in the limit
as Plank's constant ($\hbar)$ tends to zero. Our goal is to allow for
unbounded observables which are (non-commutative) polynomial functions of the
position and momentum operators. This is in contrast to Hepp's original paper
where the observables were, roughly speaking, required to be bounded functions
of the position and momentum operators. As expected the leading order
contributions of the quantum expectations come from evaluating the observables
along the classical trajectories while the next order contributions are
computed by evolving the $\hbar=1$ observables by a linear canonical
transformations which is determined by the second order pieces of the
quantum mechanical Hamiltonian.

\end{abstract}
\maketitle
\tableofcontents

\section{Introduction\label{sec.1}}

In the limit where Plank's constant $\left(  \hbar\right)  $ tends to zero,
quantum mechanics is supposed to reduce to the laws of classical mechanics and
their connection was first shown by P. Ehrenfest in \citep{Ehrenfest1927}.
There is in fact a very large literature devoted in one way or another to this
theme. Although it is not our intent nor within our ability to review this
large literature here, nevertheless the interested reader can find more
information by searching for terms like, correspondence principle, WKB
approximation, pseudo-differential operators, micro-local analysis, Moyal
brackets, star products, deformation quantization, Gaussian wave packet and
stationary phase approximation in the context of Feynmann path integrals to
name a few. Also,
\citep{Brian2013,Folland1989,Knowles2009,Littlejohn1986,Zworski2012,Heller1975}
may introduce readers a broad background on the subject of semi-classical
limit in one aspect or another. In this paper we wish to concentrate on a
formulation and a method to understand the classical limit of quantum
mechanics which was introduced by Hepp \citep{Hepp1974} in 1974.

This paper is an elaboration on Hepp's method to allow for unbounded
observables which was motivated by Rodnianski and Schlein's
\citep{Rodnianski2009} treatment of the mean field dynamics associated to Bose
Einstein condensation. In fact, some of the ideas in
\citep{Grillakis2011,Hagedorn1980,Hagedorn1981,Hagedorn1981III,Hagedorn1985,Knowles2010,Ammari2008, Rodnianski2009}
and \citep{Chiara2015} already appeared in Hepp's \citep{Hepp1974} paper. In
order to emphasize the main ideas and to not be needlessly encumbered by more
complicated notation we will restrict our attention to systems with only one
degree of freedom. Before summarizing the main results of this paper, we first
need to introduce some notation. [See section \ref{sec.2} below for more
details on the basic setup-used in this paper.]

\subsection{Basic Setup\label{sec.1.1}}

Let $\alpha_{0}=\left(  \xi+i\pi\right)  /\sqrt{2}\in\mathbb{C~}%
$($\mathbb{C\cong~}T^{\ast}\mathbb{R}$ is to be thought of as phase space)$,$
$H\left(  \theta,\theta^{\ast}\right)  $ be a symmetric [see Notation
\ref{not.2.12}] non-commutative polynomial in two indeterminates, $\left\{
\theta,\theta^{\ast}\right\}  ,$ $H^{\mathrm{cl}}\left(  z\right)  :=H\left(
z,\bar{z}\right)  $ for all $z\in\mathbb{C}$ be the \textbf{symbol} of $H.$
[By Remark \ref{rem.2.19} below, we know $H^{\mathrm{cl}}$ is real valued.] A
differentiable function, $\alpha\left(  t\right)  \in\mathbb{C},$ is said to
satisfy Hamilton's equations of motion with an initial condition $\alpha
_{0}\in\mathbb{C}$ if
\begin{equation}
i\dot{\alpha}\left(  t\right)  =\left(  \frac{\partial}{\partial\bar{\alpha}%
}H^{\mathrm{cl}}\right)  \left(  \alpha\left(  t\right)  \right)  \text{ and
}\alpha\left(  0\right)  =\alpha_{0}. \label{equ.1.1}%
\end{equation}
[See Section \ref{sub.2.1} where we recall that Eq. (\ref{equ.1.1}) is
equivalent to the standard real form of Hamilton's equations of motion.]
Further, let $\Phi\left(  t,\alpha_{0}\right)  =\alpha\left(  t\right)  $
(where $\alpha\left(  t\right)  $ is the solution to Eq. (\ref{equ.1.1}) ) be
the flow associated to Eq. (\ref{equ.1.1}) and $\Phi^{\prime}\left(
t,\alpha_{0}\right)  :\mathbb{C\rightarrow C}$ be the real-linear differential
of this flow relative to its starting point, i.e. for all $z\in\mathbb{C}$
let
\begin{equation}
\Phi^{\prime}\left(  t,\alpha_{0}\right)  z:=\frac{d}{ds}|_{s=0}\Phi\left(
t,\alpha_{0}+sz\right)  . \label{equ.1.2}%
\end{equation}
As $z\rightarrow\Phi^{\prime}\left(  t,\alpha_{0}\right)  z$ is a real-linear
function of $z,$ for each $\alpha_{0}\in\mathbb{C}$ there exists unique
complex valued functions $\gamma\left(  t\right)  $ and $\delta\left(
t\right)  $ such that
\begin{equation}
\Phi^{\prime}\left(  t,\alpha_{0}\right)  z=\gamma\left(  t\right)
z+\delta\left(  t\right)  \bar{z}. \label{equ.1.3}%
\end{equation}
where $\gamma\left(  0\right)  =1$ and $\delta\left(  0\right)  =0.$

We now turn to the quantum mechanical setup. Let $L^{2}\left(  m\right)  $
$:=L^{2}\left(  \mathbb{R},m\right)  $ be the Hilbert space of square
integrable complex valued functions on $\mathbb{R}$ relative to Lebesgue
measure, $m.$ The inner product on $L^{2}\left(  m\right)  $ is taken to be
\begin{equation}
\left\langle f,g\right\rangle :=\int_{\mathbb{R}}f\left(  x\right)  \bar
{g}\left(  x\right)  dm\left(  x\right)  ~\forall~f,g\in L^{2}\left(
m\right)  \label{equ.1.4}%
\end{equation}
and the corresponding norm is $\left\Vert f\right\Vert =\left\Vert
f\right\Vert _{2}=\sqrt{\left\langle f,f\right\rangle }.$ [Note that we are
using the mathematics convention that $\left\langle f,g\right\rangle $ is
linear in the first variable and conjugate linear in the second.] We say $A$
is an operator on $L^{2}\left(  m\right)  $ if $A$ is a linear (possibly
unbounded) operator from a dense subspace, $D\left(  A\right)  ,$ to
$L^{2}\left(  m\right)  .$ As usual if $A$ is closable, then its adjoint,
$A^{\ast},$ also has a dense domain and $A^{\ast\ast}=$ $\bar{A}$ where
$\bar{A}$ is the closure of $A.$

\begin{notation}
\label{not.1.1}As is customary, let $\mathcal{S}:=\mathcal{S}\left(
\mathbb{R}\right)  \subset L^{2}\left(  m\right)  $ denote Schwartz space of
smooth rapidly decreasing complex valued functions on $\mathbb{R}.$
\end{notation}

\begin{definition}
[Formal Adjoint]\label{def.1.2}If $A$ is a closable operator on $L^{2}\left(
m\right)  $ such that $D\left(  A\right)  =\mathcal{S}$ and $\mathcal{S}%
\subset D\left(  A^{\ast}\right)  ,$ then we define the \textbf{formal adjoint
}of $A$ to be the operator, $A^{\dag}:=A^{\ast}|_{\mathcal{S}}.$ Thus
$A^{\dag}$ is the unique operator with $D\left(  A^{\dag}\right)
=\mathcal{S}$ such that $\left\langle Af,g\right\rangle =\left\langle
f,A^{\dag}g\right\rangle $ for all $f,g\in\mathcal{S}.$
\end{definition}

\begin{definition}
[Annihilation and Creation operators]\label{def.1.3}For $\hbar>0,$ let
$a_{\hbar}$ be the \textbf{annihilation operator} acting on $L^{2}\left(
m\right)  $ defined so that $D\left(  a_{\hbar}\right)  =\mathcal{S}$ and
\begin{equation}
\left(  a_{\hbar}f\right)  \left(  x\right)  :=\sqrt{\frac{\hbar}{2}}\left(
xf\left(  x\right)  +\partial_{x}f\left(  x\right)  \right)  \text{ for }%
f\in\mathcal{S}. \label{equ.1.5}%
\end{equation}
The corresponding \textbf{creation operator }is $a_{\hbar}^{\dag}$ -- the
formal adjoint of $a_{\hbar},$ i.e.
\begin{equation}
\left(  a_{\hbar}^{\dag}f\right)  \left(  x\right)  :=\sqrt{\frac{\hbar}{2}%
}\left(  xf\left(  x\right)  -\partial_{x}f\left(  x\right)  \right)  \text{
for }f\in\mathcal{S}. \label{equ.1.6}%
\end{equation}
We write $a$ and $a^{\dag}$ for $a_{\hbar}$ and $a_{\hbar}^{\dag}$
respectively when $\hbar=1.$
\end{definition}

Notice that both the creation $\left(  a_{\hbar}^{\dag}\right)  $ and
annihilation $\left(  a_{\hbar}\right)  $ operators preserve $\mathcal{S}$ and
satisfy the canonical commutation relations (CCRs),
\begin{equation}
\left[  a_{\hbar},a_{\hbar}^{\dag}\right]  =\hbar I|_{\mathcal{S}}.
\label{equ.1.7}%
\end{equation}

For each $t\in\mathbb{R}$ and $\alpha_{0}\in\mathbb{C}$ we also define two
operators, $a\left(  t,\alpha_{0}\right)  $ and $a^{\dag}\left(  t,\alpha
_{0}\right)  $ acting on $\mathcal{S}$ by,
\begin{align}
a\left(  t,\alpha_{0}\right)   &  =\gamma\left(  t\right)  a+\delta\left(
t\right)  a^{\dag}\text{ and}\label{equ.1.8}\\
a^{\dag}\left(  t,\alpha_{0}\right)   &  =\bar{\gamma}\left(  t\right)
a^{\dag}+\bar{\delta}\left(  t\right)  a, \label{equ.1.9}%
\end{align}
where $\gamma\left(  t\right)  $ and $\delta\left(  t\right)  $ are determined
as in Eq. (\ref{equ.1.3}). Because we are going to fix $\alpha_{0}%
\in\mathbb{C}$ once and for all in this paper we will simply write $a\left(
t\right)  $ and $a^{\dag}\left(  t\right)  $ for $a\left(  t,\alpha
_{0}\right)  $ and $a^{\dag}\left(  t,\alpha_{0}\right)  $ respectively. These
operators still satisfy the CCRs, indeed making use of Eq. (\ref{equ.2.12})
below we find,
\begin{align}
\left[  a\left(  t\right)  ,a^{\dag}\left(  t\right)  \right]   &  =\left[
\bar{\gamma}\left(  t\right)  a^{\dag}+\bar{\delta}\left(  t\right)
a,\gamma\left(  t\right)  a+\delta\left(  t\right)  a^{\dag}\right]
\nonumber\\
&  =\left(  \left\vert \gamma\left(  t\right)  \right\vert ^{2}-\left\vert
\delta\left(  t\right)  \right\vert ^{2}\right)  I=I. \label{equ.1.10}%
\end{align}
This result also may be deduced from Theorem \ref{the.5.18} below.

\begin{definition}
[Harmonic Oscillator Hamiltonian]\label{def.1.4}The \textbf{Harmonic
Oscillator Hamiltonian }is the self-adjoint operator on $L^{2}\left(
m\right)  $ defined by
\begin{equation}
\mathcal{N}_{\hbar}:=a_{\hbar}^{\ast}\bar{a}_{\hbar}=\hbar a^{\ast}\bar{a}.
\label{equ.1.11}%
\end{equation}
As above we write $\mathcal{N}$ for $\mathcal{N}_{1}$ and refer to
$\mathcal{N}$ as the \textbf{Number operator}.
\end{definition}

\begin{remark}
\label{rem.1.5}The operator, $\mathcal{N}_{\hbar},$ is self-adjoint by a well
know theorem of Von Neumann (see for example \citep[Theorem 3.24, p.
275 in ][]{Kato1966}). It is also standard and well known (or see Corollary
\ref{cor.3.36} below) that
\[
D\left(  a_{\hbar}^{\ast}\right)  =D\left(  \bar{a}_{\hbar}\right)  =D\left(
\mathcal{N}_{\hbar}^{1/2}\right)  =D\left(  \partial_{x}\right)  \cap D\left(
M_{x}\right)  .
\]

\end{remark}

\begin{definition}
[Weyl Operators]\label{def.1.6}For $\alpha:=\left(  \xi+i\pi\right)  /\sqrt
{2}\in\mathbb{C}$ as in Eq. (\ref{equ.2.1}), define the unitary \textbf{Weyl
Operator} $U\left(  \alpha\right)  $ on $L^{2}\left(  m\right)  $ by
\begin{equation}
U\left(  \alpha\right)  =e^{\left(  \overline{\alpha\cdot a^{\dag}-\bar
{\alpha}\cdot a}\right)  }=e^{i\left(  \overline{\pi M_{x}-\frac{\xi}%
{i}\partial_{x}}\right)  .} \label{equ.1.12}%
\end{equation}
More generally, if $\hbar>0,$ let
\begin{equation}
U_{\hbar}\left(  \alpha\right)  =U\left(  \frac{\alpha}{\sqrt{\hbar}}\right)
=\exp\left(  \frac{1}{\hbar}\left(  \overline{\alpha\cdot a_{\hbar}^{\dag
}-\bar{\alpha}\cdot a_{\hbar}}\right)  \right)  . \label{equ.1.13}%
\end{equation}

\end{definition}

The symmetric operator, $i\left(  \alpha\cdot a_{\hbar}^{\dag}-\bar{\alpha
}\cdot a_{\hbar}\right)  ,$ can be shown to be essentially self adjoint on
$\mathcal{S}$ by the same methods used to show $\frac{1}{i}\partial_{x}$ is
essentially self adjoint on $C_{c}^{\infty}\left(  \mathbb{R}\right)  $ in
\citep[Proposition 9.29]{Brian2013}. Hence the Weyl operators, $U_{\hbar
}\left(  \alpha\right)  ,$ are well defined unitary operators by Stone's
theorem. Alternatively, see Proposition \ref{pro.2.6} below for an explicit
description of $U_{\hbar}\left(  \alpha\right)  .$

\begin{definition}
\label{def.1.8}Given an operator $A$ on $L^{2}\left(  m\right)  $ let
\[
\left\langle A\right\rangle _{\psi}:=\,\left\langle A\psi,\psi\right\rangle
\]
denote the \textbf{expectation }of $A$ relative to a normalized state $\psi\in
D\left(  A\right)  .$ The \textbf{variance} of $A$ relative to a normalized
state $\psi\in D\left(  A^{2}\right)  $ is then defined as
\[
\operatorname*{Var}\nolimits_{\psi}\left(  A\right)  :=\left\langle
A^{2}\right\rangle _{\psi}-\left\langle A\right\rangle _{\psi}^{2}.
\]

\end{definition}

From Corollary \ref{cor.3.10} below; if $\psi\in\mathcal{S}$ is a normalized
state and $P\left(  \theta,\theta^{\ast}\right)  $ is a non-commutative
polynomial in two variables $\left\{  \theta,\theta^{\ast}\right\}  ,$ then
\begin{align*}
\left\langle P\left(  a_{\hbar},a_{\hbar}^{\dag}\right)  \right\rangle
_{U_{\hbar}\left(  \alpha\right)  \psi}  &  =P\left(  \alpha,\bar{\alpha
}\right)  +O\left(  \sqrt{\hbar}\right) \\
\operatorname*{Var}\nolimits_{U_{\hbar}\left(  \alpha\right)  \psi}\left(
P\left(  a_{\hbar},a_{\hbar}^{\ast}\right)  \right)   &  =O\left(  \sqrt
{\hbar}\right)  .
\end{align*}
Consequently, $U_{\hbar}\left(  \alpha\right)  \psi$ is a state which is
concentrated in phase space near the $\alpha$ and are therefore reasonable
quantum mechanical approximations of the classical state $\alpha.$

\begin{definition}
[Non-Commutative Laws]\label{def.1.9}If $A_{1},\dots,A_{k}$ are operators on
$L^{2}\left(  m\right)  $ having a common dense domain $D$ such that
$A_{j}D\subset D,$ $D\subset D\left(  A_{j}^{\ast}\right)  ,$ and $A_{j}%
^{\ast}D\subset D$ for $1\leq j\leq k,$ then for a unit vector, $\psi\in D,$
and a non-commutative polynomial,
\[
\mathbf{P}:=P\left(  \theta_{1},\dots,\theta_{k},\theta_{1}^{\ast}%
,\dots,\theta_{k}^{\ast}\right)
\]
in $2k$ indeterminants, we let
\[
\mu\left(  \mathbf{P}\right)  :=\left\langle P\left(  A_{1},\dots,A_{k}%
,A_{1}^{\ast},\dots,A_{k}^{\ast}\right)  \right\rangle _{\psi}=\left\langle
P\left(  A_{1},\dots,A_{k},A_{1}^{\ast},\dots,A_{k}^{\ast}\right)  \psi
,\psi\right\rangle .
\]
The linear functional, $\mu,$ on the linear space of non-commutative
polynomials in $2k$ -- variables is referred to as the \textbf{law} of
$\left(  A_{1},\dots,A_{k}\right)  $ relative to $\psi$ and we will in the
sequel denote $\mu$ by $\operatorname*{Law}\nolimits_{\psi}\left(  A_{1}%
,\dots,A_{k}\right)  .$
\end{definition}

\subsection{Main results\label{sec.1.2}}

Theorem \ref{the.1.20} and Corollaries \ref{cor.1.22} and \ref{cor.1.24} below
on the convergence of correlation functions are the main results of this
paper. [The proofs of these results will be given Section \ref{sec.9}.] The
results of this paper will be proved under the Assumption \ref{ass.1}
described below. First we need a little more notation.

\begin{definition}
\label{def.1.10}Let $S$ be a dense subspace of a Hilbert space $\mathcal{K}$
and $A$ be an operator on $\mathcal{K}.$ We say $A$ is \textbf{symmetric on
}$S$ provided, $S\subseteq D\left(  A\right)  $ and $A|_{\mathcal{S}}\subseteq
A|_{\mathcal{S}}^{\ast},$ i.e. $\left\langle Af,g\right\rangle =\left\langle
f,Ag\right\rangle $ for all $f,g\in S.$
\end{definition}

We now introduce three different partial ordering on symmetric operators on a
Hilbert space.

\begin{notation}
\label{not.1.11}Let $S$ be a dense subspace of a Hilbert space, $\mathcal{K},
$ and $A$ and $B$ be two densely defined operators on $\mathcal{K}.$

\begin{enumerate}
\item We write $A\preceq_{S}B$ if both $A$ and $B$ are symmetric on $S$ and
\[
\left\langle A\psi,\psi\right\rangle _{\mathcal{K}}\leq\left\langle B\psi
,\psi\right\rangle _{\mathcal{K}}\text{ for all }\psi\in S.
\]

\item We write $A\preceq B$ if $A\preceq_{D\left(  B\right)  }B,$ i.e.
$D\left(  B\right)  \subset D\left(  A\right)  ,$ $A$ and $B$ are both
symmetric on $D\left(  B\right)  ,$ and
\[
\left\langle A\psi,\psi\right\rangle _{\mathcal{K}}\leq\left\langle B\psi
,\psi\right\rangle _{\mathcal{K}}\text{ for all }\psi\in D\left(  B\right)  .
\]

\item If $A$ and $B$ are non-negative (i.e. $0\preceq A$ and $0\preceq B$)
self adjoint operators on a Hilbert space $\mathcal{K},$ then we say $A\leq B
$ if and only if $D\left(  \sqrt{B}\right)  \subseteq D\left(  \sqrt
{A}\right)  $ and
\[
\left\Vert \sqrt{A}\psi\right\Vert \leq\left\Vert \sqrt{B}\psi\right\Vert
\text{ for all }\psi\in D\left(  \sqrt{B}\right)  .
\]

\end{enumerate}
\end{notation}

Interested readers may read Section 10.3 of \citep{Schmudgen2012} to learn
more properties and relations among these different partial orderings. Let us
now record the main assumptions which will be needed for the main theorems in
this paper. In this assumption, $\mathbb{R}\left\langle \theta,\theta^{\ast
}\right\rangle $ denotes the subspace of non-commutative polynomials with real
coefficients, see Subsection \ref{sub.2.3}.

\begin{assumption}
\label{ass.1}We say $H\left(  \theta,\theta^{\ast}\right)  \in\mathbb{R}%
\left\langle \theta,\theta^{\ast}\right\rangle $ satisfies Assumption 1. if,
$H$ is symmetric (see Definition \ref{def.2.14}), $d=\deg_{\theta}H\geq2$ (see
Notation \ref{not.2.12}) is even and $H_{\hbar}:=\overline{H\left(  a_{\hbar
},a_{\hbar}^{\dag}\right)  }$ satisfies; there exists constants $C>0, $
$C_{\beta}>0$ for $\beta\geq0,$ and $1\geq\eta>0$ such that for all $\hbar
\in\left(  0,\eta\right)  ,$

\begin{enumerate}
\item $H_{\hbar}$ is self-adjoint and $H_{\hbar}+C\geq I,$ and

\item for all $\beta\geq0,$
\begin{equation}
\mathcal{N}_{\hbar}^{\beta}\preceq C_{\beta}(H_{\hbar}+C)^{\beta}.
\label{equ.1.14}%
\end{equation}

\end{enumerate}
\end{assumption}

The next Proposition provides a simple class of example $H\in\mathbb{R}%
\left\langle \theta,\theta^{\ast}\right\rangle $ satisfying Assumption
\ref{ass.1} whose infinite dimensional analogues feature in some of the papers
involving Bose-Einstein condensation, see for example, \citep{Ammari2008,
Rodnianski2009}.

\begin{proposition}
[$p\left(  \theta^{\ast}\theta\right)  $ -- examples]\label{pro.1.14}Let
$p\left(  x\right)  \in\mathbb{R}\left[  x\right]  $ (the polynomials in $x$
with real coefficients) and suppose $\operatorname{deg}\left(  p\right)
\geq1$ and the leading order coefficient is positive. Then $H\left(
\theta,\theta^{\ast}\right)  =p\left(  \theta^{\ast}\theta\right)
\in\mathbb{R}\left\langle \theta,\theta^{\ast}\right\rangle $ will satisfy the
hypothesis of Assumption \ref{ass.1}.
\end{proposition}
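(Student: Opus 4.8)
The plan is to verify the two clauses of Assumption \ref{ass.1} for $H=p(\theta^{\ast}\theta)$ by reducing everything to spectral calculus for the number operator. Write $p(x)=\sum_{j=0}^{n}c_{j}x^{j}$ with $c_{n}>0$ and $n=\deg p\ge 1$. The key observation is that $\theta^{\ast}\theta$ is a symmetric element of $\mathbb{R}\langle\theta,\theta^{\ast}\rangle$, so $H$ is symmetric and $d=\deg_{\theta}H=2n$ is even and $\ge 2$, settling the structural hypotheses immediately. The substantive point is that, because $H$ is built only from the single word $\theta^{\ast}\theta$, its Weyl (or any reasonable) quantization $H_{\hbar}=\overline{H(a_{\hbar},a_{\hbar}^{\dag})}$ is a polynomial in $a_{\hbar}^{\ast}\bar a_{\hbar}=\mathcal{N}_{\hbar}$ up to lower-order commutator corrections. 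More precisely I would show by induction on $j$, using the CCR $[a_{\hbar},a_{\hbar}^{\dag}]=\hbar I$ on $\mathcal{S}$, that $(a_{\hbar}^{\dag})^{j}a_{\hbar}^{j}=\prod_{\ell=0}^{j-1}(\mathcal{N}_{\hbar}-\ell\hbar)$ on $\mathcal{S}$, hence the symmetrized quantization of $(\theta^{\ast}\theta)^{j}$ is a polynomial $q_{j}(\mathcal{N}_{\hbar},\hbar)$ of degree $j$ in $\mathcal{N}_{\hbar}$ whose top coefficient is $1$ and whose $\hbar$-dependent lower-order terms are themselves polynomials in $\mathcal{N}_{\hbar}$ of degree $\le j-1$ with coefficients of order $\hbar$. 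Summing against the $c_{j}$ gives
\[
H_{\hbar}=c_{n}\mathcal{N}_{\hbar}^{n}+R_{\hbar}(\mathcal{N}_{\hbar}),
\]
where $R_{\hbar}$ is a polynomial of degree $\le n-1$ with coefficients that stay bounded (indeed converge) as $\hbar\downarrow 0$, since $\hbar<\eta\le 1$.

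Given this representation, self-adjointness in clause (1) follows from the spectral theorem: $\mathcal{N}_{\hbar}$ is self-adjoint with spectrum $\hbar\mathbb{Z}_{\ge 0}$, and $H_{\hbar}=P_{\hbar}(\mathcal{N}_{\hbar})$ for the real polynomial $P_{\hbar}(x)=c_{n}x^{n}+R_{\hbar}(x)$, so $H_{\hbar}$ is self-adjoint on the natural domain $D(\mathcal{N}_{\hbar}^{n})$ (one should note $\mathcal{S}$ is a core, and that the various quantization conventions for $H(a_\hbar,a_\hbar^\dag)$ agree up to the lower-order terms already absorbed into $R_\hbar$). For the semiboundedness $H_{\hbar}+C\ge I$: since $c_{n}>0$, there is an $x_{0}$ and a constant so that $P_{\hbar}(x)\ge 1$ for $x\ge x_{0}$ uniformly in $\hbar\in(0,\eta)$ (the coefficients of $R_{\hbar}$ are uniformly bounded), while on the finite interval $[0,x_{0}]$ the polynomial $P_{\hbar}$ is bounded below uniformly in $\hbar$; taking $C$ to be the negative part of this uniform lower bound plus $1$ gives $P_{\hbar}(x)\ge 1-C$ for all $x\ge 0$, i.e. $H_{\hbar}+C\ge I$ by the spectral theorem applied to $\mathcal{N}_{\hbar}$.

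For clause (2) I would again work entirely through $f(\mathcal{N}_{\hbar})$. Fix $\beta\ge 0$. We must produce $C_{\beta}$ with $\mathcal{N}_{\hbar}^{\beta}\preceq C_{\beta}(H_{\hbar}+C)^{\beta}=C_{\beta}(P_{\hbar}(\mathcal{N}_{\hbar})+C)^{\beta}$. By the spectral theorem for $\mathcal{N}_{\hbar}$ it suffices to find $C_{\beta}$ such that the scalar inequality
\[
x^{\beta}\le C_{\beta}\bigl(P_{\hbar}(x)+C\bigr)^{\beta}\qquad\text{for all }x\ge 0,\ \hbar\in(0,\eta)
\]
holds; then integrating $d\langle E_{\lambda}^{\mathcal{N}_{\hbar}}\psi,\psi\rangle$ gives the operator inequality on $D((H_{\hbar}+C)^{\beta})$, which is contained in $D(\mathcal{N}_{\hbar}^{\beta})$ because $\deg P_{\hbar}=n\ge 1$. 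The scalar estimate is elementary: $P_{\hbar}(x)+C\ge c_{n}x^{n}-C'$ for a uniform $C'$, so $P_{\hbar}(x)+C\ge c x^{n}$ for $x$ large uniformly in $\hbar$, whence $x^{\beta}\le x^{n\beta}\cdot(\text{const})\le C_\beta (P_\hbar(x)+C)^\beta$ for $x$ large (using $n\ge 1$ so $x^\beta\le x^{n\beta}$ for $x\ge 1$, together with $H_\hbar+C\ge I$ forcing $P_\hbar(x)+C\ge 1$); on the remaining bounded range of $x$ one uses $P_{\hbar}(x)+C\ge 1$ and boundedness of $x^{\beta}$. Choosing $C_{\beta}$ as the max of the constants from the two regimes finishes it.

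The main obstacle I anticipate is not any single estimate but the bookkeeping of the $\hbar$-dependent lower-order terms: one must be careful that the quantization convention used to define $H_{\hbar}$ (which could differ by ordering of $a_{\hbar}$ and $a_{\hbar}^{\dag}$) produces only corrections of order $\hbar$ relative to $c_{n}\mathcal{N}_{\hbar}^{n}$, and that all constants ($C$, $C_{\beta}$) can be chosen uniformly over $\hbar\in(0,\eta)$ rather than $\hbar$-by-$\hbar$. Both issues are handled by the explicit normal-ordering identity $(a_{\hbar}^{\dag})^{j}a_{\hbar}^{j}=\prod_{\ell=0}^{j-1}(\mathcal{N}_{\hbar}-\ell\hbar)$ together with $\hbar\le 1$, so no deep input is needed beyond the spectral theorem and the CCR, but the argument should be written so that the uniformity is manifest.
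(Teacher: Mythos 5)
Your proposal is correct in substance and lands on the same core mechanism as the paper: reduce everything to the functional calculus of $\mathcal{N}_{\hbar}$ and then verify scalar inequalities of the form $x^{\beta}\leq C_{\beta}(p(x)+C)^{\beta}$ on $[0,\infty)$. Two differences are worth noting. First, your normal-ordering detour via $(a_{\hbar}^{\dag})^{j}a_{\hbar}^{j}=\prod_{\ell=0}^{j-1}(\mathcal{N}_{\hbar}-\ell\hbar)$ is unnecessary and rests on a slight misreading of the monomial: under the paper's substitution homomorphism (Definition \ref{def.2.16}) the word $(\theta^{\ast}\theta)^{j}$ maps to $(a_{\hbar}^{\dag}a_{\hbar})^{j}$, i.e.\ the $j$-fold product of the single operator $a_{\hbar}^{\dag}a_{\hbar}=\mathcal{N}_{\hbar}|_{\mathcal{S}}$, not to $(a_{\hbar}^{\dag})^{j}a_{\hbar}^{j}$; hence $H(a_{\hbar},a_{\hbar}^{\dag})=p(\mathcal{N}_{\hbar})|_{\mathcal{S}}$ exactly, with no $\hbar$-dependent lower-order corrections at all. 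There is no ambiguity of ``quantization convention'' to worry about here, since $H$ is a fixed element of $\mathbb{R}\langle\theta,\theta^{\ast}\rangle$. Your more general form $c_{n}\mathcal{N}_{\hbar}^{n}+R_{\hbar}(\mathcal{N}_{\hbar})$ with uniformly bounded coefficients still supports all the subsequent estimates, so nothing breaks; it is just extra work. Second, the one place where the paper does expend genuine effort is the identification $\overline{p(a_{\hbar}^{\dag}a_{\hbar})}=p(\mathcal{N}_{\hbar})$, i.e.\ that the closure of the operator defined on $\mathcal{S}$ coincides with the self-adjoint functional-calculus operator; the paper proves this by combining the inclusion $p(\mathcal{N}_{\hbar})\subset\overline{p(a_{\hbar}^{\dag}a_{\hbar})}$ from Corollaries \ref{cor.3.26} and \ref{cor.3.40} with an adjoint/symmetry argument to get the reverse inclusion. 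You flag this only parenthetically (``one should note $\mathcal{S}$ is a core''), so in a full write-up you would need to supply that argument or cite the core statement in Theorem \ref{the.3.22}.
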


\begin{proof}
First we will show
\[
H_{\hbar}=\overline{p\left(  a_{\hbar}^{\dagger}a_{\hbar}\right)  }=p\left(
\mathcal{N}_{\hbar}\right)  .
\]

We know that $p\left(  \mathcal{N}_{\hbar}\right)  $ is self-adjoint and by
Corollaries \ref{cor.3.26} and \ref{cor.3.40} we have
\[
p\left(  \mathcal{N}_{\hbar}\right)  =p\left(  a_{\hbar}^{\ast}\bar{a}_{\hbar
}\right)  =p\left(  \overline{a_{\hbar}^{\dagger}}\bar{a}_{\hbar}\right)
\subset\overline{p\left(  a_{\hbar}^{\dagger}a_{\hbar}\right)  }.
\]
Taking adjoint of this inclusion implies
\[
p\left(  a_{\hbar}^{\dagger}a_{\hbar}\right)  ^{\ast}=\overline{p\left(
a_{\hbar}^{\dagger}a_{\hbar}\right)  }^{\ast}\subset p\left(  \mathcal{N}%
_{\hbar}\right)  ^{\ast}=p\left(  \mathcal{N}_{\hbar}\right)  .
\]
However, since $p\left(  a_{\hbar}^{\dagger}a_{\hbar}\right)  $ is symmetric
we also have
\[
p\left(  a_{\hbar}^{\dagger}a_{\hbar}\right)  \subset p\left(  a_{\hbar
}^{\dagger}a_{\hbar}\right)  ^{\ast}=\overline{p\left(  a_{\hbar}^{\dagger
}a_{\hbar}\right)  }^{\ast}\subset p\left(  \mathcal{N}_{\hbar}\right)
\]
which implies
\[
\overline{p\left(  a_{\hbar}^{\dagger}a_{\hbar}\right)  }\subset p\left(
\mathcal{N}_{\hbar}\right)  .
\]
Since there exists $C>0$ and $C_{\beta}$ for any $\beta\geq0$ such that $x\leq
C_{\beta}\left(  p\left(  x\right)  +C\right)  $ for $x\geq0,$ it follows by
the spectral theorem that $H_{\hbar}$ satisfies Eq. (\ref{equ.1.14}).
\end{proof}

The next example provides a much broader class of $H\in\mathbb{R}\left\langle
\theta,\theta^{\ast}\right\rangle $ satisfying Assumption \ref{ass.1} while
the corresponding operators, $H_{\hbar},$ no longer typically commute with the
number operator.

\begin{example}
[Example Hamiltonians]\label{exa.1.15}Let $m\geq1,$ $b_{k}\in\mathbb{R}\left[
x\right]  $ for $0\leq k\leq m,$ and
\begin{equation}
H\left(  \theta,\theta^{\ast}\right)  :=\sum_{k=0}^{m}\frac{\left(  -1\right)
^{k}}{2^{k}}\left(  \theta-\theta^{\ast}\right)  ^{k}b_{k}\left(  \frac
{1}{\sqrt{2}}\left(  \theta+\theta^{\ast}\right)  \right)  \left(
\theta-\theta^{\ast}\right)  ^{k}. \label{equ.1.15}%
\end{equation}
With the use of Eqs. (\ref{equ.1.5}) and (\ref{equ.1.6}), it follows
\begin{equation}
H_{\hbar}=\sum_{k=0}^{m}\hbar^{k}\partial_{x}^{k}M_{b_{k}\left(  \sqrt{\hbar
}x\right)  }\partial_{x}^{k}\text{ on }\mathcal{S} \label{equ.1.16}%
\end{equation}
If

\begin{enumerate}
\item each $b_{k}\left(  x\right)  $ is an even polynomial in $x$ with
positive leading order coefficient, and $b_{m}>0,$ and

\item $\deg_{x}(b_{0})\geq2$ and $\deg_{x}(b_{k})\leq\deg_{x}(b_{k-1})$ for
$1\leq k\leq m,$
\end{enumerate}

then by Corollary 1.10 in \citep{BruceDriver20151st} $H\left(  \theta
,\theta^{\ast}\right)  $ satisfies Assumption \ref{ass.1}. In particular, if
$m>0$ and $V\in\mathbb{R}\left[  x\right]  $ such that $\deg_{x}
V\in2\mathbb{N} $ such that $\lim_{x\rightarrow\infty}V\left(  x\right)
=\infty,$ then
\begin{align}
H\left(  \theta,\theta^{\ast}\right)   &  =-\frac{m}{2}\left(  \frac
{\theta-\theta^{\ast}}{\sqrt{2}}\right)  ^{2}+V\left(  \frac{1}{\sqrt{2}%
}\left(  \theta+\theta^{\ast}\right)  \right)  \text{ and}\label{equ.1.17}\\
H\left(  a_{\hbar},a_{\hbar}^{\dag}\right)   &  =-\frac{1}{2}\hbar
m\partial_{x}^{2}+V\left(  \sqrt{\hbar}x\right)  \label{equ.1.18}%
\end{align}
satisfies Assumption \ref{ass.1}.
\end{example}

\begin{remark}
\label{rem.1.16}The essential self-adjointness of $H\left(  a_{\hbar}%
,a_{\hbar}^{\dag}\right)  $ in Eq. (\ref{equ.1.18}) and all of its
non-negative integer powers on $\mathcal{S}$ may be deduced using results in
Kato \citep{Kato1973} and Chernoff \citep{Chernoff1973}. This fact along with
the Eq. (\ref{equ.1.14}) restricted to hold on $\mathcal{S}$ and for $\beta
\in\mathbb{N}$ could be combined together to prove Eq. (\ref{equ.1.14}) for
all $\beta\geq0$ as is explained in Lemma 6.13 in \citep{BruceDriver20151st}.

Using Theorem A.1 of \citep{BruceDriver20151st}, for any symmetric
noncommutative polynomial, $H\left(  \theta,\theta^{\ast}\right)
\in\mathbb{R}\left\langle \theta,\theta^{\ast}\right\rangle ,$ there exists
polynomials, $b_{l}\left(  \sqrt{\hbar},x\right)  \in\mathbb{R}\left[
\sqrt{\hbar},x\right]  , $ (polynomials in $\sqrt{\hbar}$ and $x$ with real
coefficients), such that
\[
H\left(  a_{\hbar},a_{\hbar}^{\dag}\right)  =\sum_{k=0}^{m}\hbar^{k}%
\partial_{x}^{k}M_{b_{k}\left(  \sqrt{\hbar},\sqrt{\hbar}x\right)  }%
\partial_{x}^{k}\text{ on }\mathcal{S}.
\]
If it so happens that these $b_{k}\left(  \sqrt{\hbar},\sqrt{\hbar}x\right)  $
satisfy the assumptions of Corollary 1.10 of \citep{BruceDriver20151st}, then
Assumption \ref{ass.1} will hold for this $H.$
\end{remark}

\begin{example}
\label{ex.1.17}Let
\begin{equation}
H\left(  \theta,\theta^{\ast}\right)  =\theta^{4}+\theta^{\ast4}-\frac{7}%
{8}\left(  \theta-\theta^{\ast}\right)  \left(  \theta+\theta^{\ast}\right)
^{2}\left(  \theta-\theta^{\ast}\right)  \in\mathbb{R}\left\langle
\theta,\theta^{\ast}\right\rangle . \label{equ.1.19}%
\end{equation}
By using product rule repeatedly with Eqs. (\ref{equ.1.5}) and (\ref{equ.1.6}%
), it follows that
\[
H\left(  a_{h},a_{h}^{\dag}\right)  =\hbar^{2}\partial_{x}^{2}b_{2}\left(
\sqrt{\hbar},\sqrt{\hbar}x\right)  \partial_{x}^{2}-\hbar\partial_{x}%
b_{1}\left(  \sqrt{\hbar},\sqrt{\hbar}x\right)  \partial_{x}+b_{0}\left(
\sqrt{\hbar},\sqrt{\hbar}x\right)
\]
where
\[
b_{0}\left(  \sqrt{\hbar},x\right)  =\frac{1}{2}x^{4}+\frac{3h^{2}}{2},\text{
}b_{1}\left(  \sqrt{\hbar},x\right)  =\frac{1}{2}x^{2},\text{ and }%
b_{2}\left(  \sqrt{\hbar},x\right)  =\frac{1}{2}.
\]
These polynomials satisfy the assumptions of Corollary 1.10 of
\citep{BruceDriver20151st} and therefore $H\left(  \theta,\theta^{\ast
}\right)  $ in Eq. (\ref{equ.1.19}) satisfies Assumption \ref{ass.1}.
\end{example}

\begin{notation}
\label{not.1.19} Given a non-commutative polynomial
\begin{equation}
P\left(  \left\{  \theta_{i},\theta_{i}^{\ast}\right\}  _{i=1}^{n}\right)
:=P\left(  \theta_{1},\dots,\theta_{n},\theta_{1}^{\ast},\dots,\theta
_{n}^{\ast}\right)  \in\mathbb{C}\left\langle \theta_{1},\dots,\theta
_{n},\theta_{1}^{\ast},\dots,\theta_{n}^{\ast}\right\rangle , \label{equ.1.20}%
\end{equation}
in $2n$ -- indeterminants,
\begin{equation}
\Lambda_{n}:=\left\{  \theta_{1},\dots,\theta_{n},\theta_{1}^{\ast}%
,\dots,\theta_{n}^{\ast}\right\}  , \label{equ.1.21}%
\end{equation}
let $p_{\min}$ denote the minimum degree among all non-constant monomials
terms appearing in $P\left(  \left\{  \theta_{i},\theta_{i}^{\ast}\right\}
_{i=1}^{n}\right)  .$ In more detail there is a constant, $P_{0}\in
\mathbb{C},$ such that $P\left(  \theta_{1},\dots,\theta_{n},\theta_{1}^{\ast
},\dots,\theta_{n}^{\ast}\right)  -P_{0}$ may be written as a linear
combination in words in the alphabet, $\Lambda_{n},$ which have length no
smaller than $p_{\min}.$
\end{notation}

\begin{theorem}
\label{the.1.20}Suppose $H\left(  \theta,\theta^{\ast}\right)  \in
\mathbb{R}\left\langle \theta,\theta^{\ast}\right\rangle ,$ $d=\deg_{\theta
}H>0$ and $1\geq\eta>0$ satisfy Assumptions \ref{ass.1}, $\alpha_{0}%
\in\mathbb{C},$ $\psi\in\mathcal{S}$ is an $L^{2}\left(  m\right)  $ --
normalized state and then let;

\begin{enumerate}
\item $\alpha\left(  t\right)  \in\mathbb{C}$ be the solution (which exists
for all time by Proposition \ref{pro.3.13}) to Hamilton's (classical)
equations of motion (\ref{equ.1.1}),

\item $a\left(  t\right)  =a\left(  t,\alpha_{0}\right)  $ be the annihilation
operator on $L^{2}\left(  m\right)  $ as in Eq. (\ref{equ.1.8}), and

\item $A_{\hbar}\left(  t\right)  $ denote $a_{\hbar}$ in the Heisenberg
picture, i.e.
\begin{equation}
A_{\hbar}\left(  t\right)  :=e^{iH_{\hbar}t/\hbar}a_{\hbar}e^{-iH_{\hbar
}t/\hbar}. \label{equ.1.22}%
\end{equation}

\end{enumerate}

If $\left\{  t_{i}\right\}  _{i=1}^{n}\subset\mathbb{R}$ and $P\left(
\left\{  \theta_{i},\theta_{i}^{\ast}\right\}  _{i=1}^{n}\right)
\in\mathbb{C}\left\langle \theta_{1},\dots,\theta_{n},\theta_{1}^{\ast}%
,\dots,\theta_{n}^{\ast}\right\rangle $ is a non-commutative polynomial in
$2n$ -- indeterminants, then for $0<\hbar<\eta,$ we have
\begin{align}
&  \left\langle P\left(  \left\{  A_{\hbar}\left(  t_{i}\right)
-\alpha\left(  t_{i}\right)  ,A_{\hbar}^{\dag}\left(  t_{i}\right)
-\overline{\alpha}\left(  t_{i}\right)  \right\}  _{i=1}^{n}\right)
\right\rangle _{U_{\hbar}\left(  \alpha_{0}\right)  \psi}\nonumber\\
&  \quad\quad=\left\langle P\left(  \left\{  \sqrt{\hbar}a\left(
t_{i}\right)  ,\sqrt{\hbar}a^{\dag}\left(  t_{i}\right)  \right\}  _{i=1}%
^{n}\right)  \right\rangle _{\psi}+O\left(  \hbar^{\frac{p_{\min}+1}{2}%
}\right)  . \label{equ.1.23}%
\end{align}

\end{theorem}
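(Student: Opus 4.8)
The plan is to follow Hepp's method: conjugate everything by the Weyl operator $U_\hbar(\alpha_0)$ to pass to ``fluctuation variables,'' and then show that the rescaled Heisenberg fluctuations converge to the linear (Bogoliubov) flow $a(t)$. Concretely, set $B_\hbar(t):=U_\hbar(\alpha_0)^{\ast}\bigl(A_\hbar(t)-\alpha(t)\bigr)U_\hbar(\alpha_0)$ and $B_\hbar^{\dag}(t)$ its formal adjoint, so that the left side of \eqref{equ.1.23} becomes $\bigl\langle P\bigl(\{B_\hbar(t_i),B_\hbar^{\dag}(t_i)\}\bigr)\bigr\rangle_\psi$. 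The Weyl translation identity $U_\hbar(\alpha_0)^{\ast}a_\hbar U_\hbar(\alpha_0)=a_\hbar+\alpha_0$ (the analogue of Proposition~\ref{pro.2.6}), together with the fact that $\alpha(t)$ solves \eqref{equ.1.1}, should show that $B_\hbar(t)$ is generated by the \emph{translated, recentered} Hamiltonian $\widetilde H_\hbar:=U_\hbar(\alpha_0)^{\ast}H_\hbar U_\hbar(\alpha_0)-(\text{c-number and linear-in-}a\text{ terms that cancel the classical drift})$. After rescaling $a_\hbar=\sqrt{\hbar}\,a$, the quadratic part of $\widetilde H_\hbar$ is $\hbar$ times a ($\hbar$-independent) quadratic form whose generated flow is exactly the symplectic flow $\Phi'(t,\alpha_0)$ of \eqref{equ.1.3}, i.e. produces $a(t)=\gamma(t)a+\delta(t)a^{\dag}$; the higher-degree parts of $\widetilde H_\hbar$ carry positive powers of $\sqrt{\hbar}$.

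The core estimate is then a Duhamel/Gronwall argument comparing the true flow $B_\hbar(t)$ (rescaled: $\beta_\hbar(t):=B_\hbar(t)/\sqrt{\hbar}$) with the free Bogoliubov flow $a(t)$. Writing $\beta_\hbar(t)-a(t)$ via the Duhamel formula applied to the difference of the two generators, one gets an integral of the commutator of $a_\hbar$ with the ``interaction'' (cubic-and-higher) part of $\widetilde H_\hbar$, conjugated by the dynamics. Each such commutator is a noncommutative polynomial in $a_\hbar,a_\hbar^{\dag}$ of controlled degree with an explicit power $\hbar^{1/2}$ (or higher) in front. To turn operator bounds into the $O(\hbar^{(p_{\min}+1)/2})$ error in expectation against $\psi\in\mathcal S$, I would prove a priori moment bounds: for every $k$ and every word $w$ in the alphabet $\{a_\hbar,a_\hbar^{\dag}\}$ of length $\le k$,
\[
\bigl\|\, w\bigl(e^{-iH_\hbar t/\hbar}U_\hbar(\alpha_0)\psi\bigr)\bigr\|
\le C_{k}(t)\,\hbar^{\,?}\,\|(\mathcal N+1)^{k/2}\psi\|,
\]
uniformly for $\hbar\in(0,\eta)$ and $t$ in compact sets. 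This is precisely where Assumption~\ref{ass.1} enters: part~(1) makes $e^{-iH_\hbar t/\hbar}$ a well-defined unitary group, and the comparison $\mathcal N_\hbar^{\beta}\preceq C_\beta(H_\hbar+C)^{\beta}$ in part~(2), combined with conservation of $H_\hbar$ along its own flow, propagates number-operator moments through the Heisenberg evolution — this is the mechanism that controls the unbounded observables, the novelty over Hepp's bounded setting.

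I would organize the induction on $n$ and on total degree: telescoping
\[
P\bigl(\{\beta_\hbar(t_i),\beta_\hbar^{\dag}(t_i)\}\bigr)-P\bigl(\{a(t_i),a^{\dag}(t_i)\}\bigr)
\]
into a sum over factors, replacing one operator $\beta_\hbar(t_i)$ by $a(t_i)$ at a time, so each term is a product of (bounded-degree) words times a single difference $\beta_\hbar(t_i)-a(t_i)$ or its adjoint; the Duhamel estimate gives each difference an extra $\sqrt{\hbar}$ when applied to a state with enough number-moments, and the moment bounds handle the remaining factors. Tracking the minimal degree: since $P-P_0$ consists of words of length $\ge p_{\min}$, after the rescaling $a_\hbar=\sqrt{\hbar}a$ the leading-order term $\bigl\langle P(\{\sqrt\hbar a(t_i),\sqrt\hbar a^{\dag}(t_i)\})\bigr\rangle_\psi$ is already $O(\hbar^{p_{\min}/2})$, and the telescoped remainder gains one more half-power, giving the claimed $O(\hbar^{(p_{\min}+1)/2})$. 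The main obstacle I anticipate is making the Duhamel/Gronwall scheme rigorous at the level of \emph{unbounded} operators: one must justify differentiating $s\mapsto e^{is\widetilde H_\hbar/\hbar}(\cdots)e^{-is\widetilde H_\hbar/\hbar}$ applied to $\psi$, control domains throughout (all operators preserve $\mathcal S$ formally but the unitary groups do not), and get constants in the word-norm bounds that are uniform in $\hbar$ and grow only exponentially in $|t|$ — this is exactly the step where Assumption~\ref{ass.1}(2) and the ``$H_\hbar$ dominates all powers of $\mathcal N_\hbar$'' philosophy must be invoked carefully, presumably via results of the companion paper \citep{BruceDriver20151st}.
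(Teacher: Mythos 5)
Your architecture matches the paper's almost exactly: conjugate by $U_{\hbar}\left(\alpha_{0}\right)$ and recenter along $\alpha\left(t\right)$ (this is the operator $W_{\hbar}\left(t\right)$ of Eq. (\ref{equ.7.13}), whose generator $L_{\hbar}\left(t\right)$ in Eq. (\ref{equ.7.16}) is exactly your ``translated, recentered Hamiltonian'' with the linear terms cancelled by Hamilton's equations); identify the quadratic part as the generator of the Bogoliubov flow $a\left(t\right)=\gamma\left(t\right)a+\delta\left(t\right)a^{\dag}$ (Theorem \ref{the.5.18}); compare via Duhamel, the difference of generators being $\hbar^{-1}H_{\geq3}=O\left(\sqrt{\hbar}\right)$ (Proposition \ref{pro.8.2} and Theorem \ref{the.9.4}); and telescope through the word $P$ one factor at a time (Lemma \ref{lem.9.13}), with the degree count giving $O\left(\hbar^{\left(p_{\min}+1\right)/2}\right)$.

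There is, however, one genuine gap, and it sits precisely at the step you flag as the ``main obstacle'' without resolving it. The mechanism you propose for the a priori moment bounds --- conservation of $H_{\hbar}$ along its own flow plus $\mathcal{N}_{\hbar}^{\beta}\preceq C_{\beta}\left(H_{\hbar}+C\right)^{\beta}$ --- only propagates moments of the \emph{semiclassical} number operator $\mathcal{N}_{\hbar}=\hbar\mathcal{N}$. The telescoping argument, by contrast, needs operator-norm bounds on the conjugated evolution $W_{\hbar}\left(t\right)$ between the $\hbar$-\emph{independent} Sobolev spaces $D\left(\mathcal{N}^{\beta}\right)$ (these are the spaces on which $a\left(t\right),a^{\dag}\left(t\right)$ and the differences $W_{\hbar}^{\ast}aW_{\hbar}-W_{0}^{\ast}aW_{0}$ act), and converting $\left\Vert\cdot\right\Vert_{\left(\mathcal{N}_{\hbar}+I\right)^{\beta}}$ into $\left\Vert\cdot\right\Vert_{\left(\mathcal{N}+I\right)^{\beta}}$ costs a factor $\hbar^{-\beta}$. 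Your scheme as written therefore only yields the crude bound of Theorem \ref{the.7.12}, $\left\Vert\mathcal{N}^{\beta}W_{\hbar}\left(t,s\right)\psi\right\Vert\leq C\hbar^{-\beta}\left\Vert\psi\right\Vert_{\beta d/2}$, which blows up as $\hbar\downarrow0$ and destroys the Gronwall/telescoping constants for every intermediate index $\beta>0$. The missing idea is the interpolation with the level-$M$ truncated evolution at the specific cutoff $M=\hbar^{-1}$: the truncated propagator $W_{\hbar}^{\hbar^{-1}}\left(t,s\right)$ has $\hbar$-uniform $\beta\rightarrow\beta$ bounds by construction (Corollary \ref{cor.4.8}), it differs from $W_{\hbar}\left(t,s\right)$ by $O\left(\hbar^{\beta-1}\right)$ in $\left\Vert\cdot\right\Vert_{\beta\rightarrow0}$ (Theorem \ref{the.8.3}), and splitting $\left\langle W_{\hbar}\psi,\mathcal{N}^{2\beta}W_{\hbar}\psi\right\rangle$ accordingly lets the crude $\hbar^{-2\beta}$ loss be absorbed by the $\hbar^{2\beta+\frac{d}{2}-1}$ gain from the truncation error (proof of Theorem \ref{the.9.1}). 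Without this (or an equivalent device), the uniform-in-$\hbar$ bounds you need cannot be extracted from Assumption \ref{ass.1} alone, and the rest of the argument does not close.
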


\begin{remark}
\label{rem.1.21}The left member of Eq. (\ref{equ.1.23}) is well defined
because; 1) $U_{\hbar}\left(  \alpha_{0}\right)  \mathcal{S}=\mathcal{S}$ (see
Proposition \ref{pro.2.6}) and 2) $e^{itH_{\hbar}/\hbar}\mathcal{S}%
=\mathcal{S}$ (see Proposition \ref{pro.6.5}) from which it follows that
$A_{\hbar}\left(  t\right)  $ and $A_{\hbar}\left(  t\right)  ^{\dag
}=e^{iH_{\hbar}t/\hbar}a_{\hbar}^{\dag}e^{-iH_{\hbar}t/\hbar}$ both preserve
$\mathcal{S}$ for all $t\in\mathbb{R}.$
\end{remark}

This theorem is a variant of the results in Hepp \citep{Hepp1974} which now
allows for unbounded observables. It should be emphasized that the operators,
$a\left(  t\right)  ,$ are constructed using only knowledge of solutions to
the classical ordinary differential equations of motions while the
construction of $A_{\hbar}\left(  t\right)  $ requires knowledge of the
quantum mechanical evolution. As an easy consequence of Theorem \ref{the.1.20}
we may conclude that
\begin{equation}
\operatorname*{Law}\nolimits_{U_{\hbar}\left(  \alpha_{0}\right)  \psi}\left(
\left\{  A_{\hbar}\left(  t_{i}\right)  \right\}  _{i=1}^{n}\right)
\cong\operatorname*{Law}\nolimits_{\psi}\left(  \left\{  \alpha\left(
t_{i}\right)  +\sqrt{\hbar}a\left(  t_{i}\right)  \right\}  _{i=1}^{n}\right)
\text{ for }0<\hbar\ll1. \label{equ.1.24}%
\end{equation}
The precise meaning of Eq. (\ref{equ.1.24}) is given in the following corollary.

\begin{corollary}
\label{cor.1.22}If we assume the same conditions and notations as in Theorem
\ref{the.1.20}, then (for $0<\hbar<\eta)$%
\begin{align}
&  \left\langle P\left(  \left\{  A_{\hbar}\left(  t_{i}\right)  ,A_{\hbar
}^{\dag}\left(  t_{i}\right)  \right\}  _{i=1}^{n}\right)  \right\rangle
_{U_{\hbar}\left(  \alpha_{0}\right)  \psi}\nonumber\\
&  \quad\quad=\left\langle P\left(  \left\{  \alpha\left(  t_{i}\right)
+\sqrt{\hbar}a\left(  t_{i}\right)  ,\overline{\alpha}\left(  t_{i}\right)
+\sqrt{\hbar}a^{\dag}\left(  t_{i}\right)  \right\}  _{i=1}^{n}\right)
\right\rangle _{\psi}+O\left(  \hbar\right)  . \label{equ.1.25}%
\end{align}

\end{corollary}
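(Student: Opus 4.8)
The plan is to derive Corollary \ref{cor.1.22} from Theorem \ref{the.1.20} by a simple algebraic substitution. The key observation is that the polynomial $P\left(\left\{\theta_i,\theta_i^{\ast}\right\}_{i=1}^n\right)$ appearing in \eqref{equ.1.25} can be rewritten, by the change of variables $\theta_i\mapsto\theta_i-\alpha(t_i)$ and $\theta_i^{\ast}\mapsto\theta_i^{\ast}-\overline{\alpha}(t_i)$, as a new non-commutative polynomial
\[
Q\left(\left\{\theta_i,\theta_i^{\ast}\right\}_{i=1}^n\right):=P\left(\left\{\theta_i+\alpha(t_i),\theta_i^{\ast}+\overline{\alpha}(t_i)\right\}_{i=1}^n\right),
\]
so that $P\left(\left\{\theta_i,\theta_i^{\ast}\right\}\right)=Q\left(\left\{\theta_i-\alpha(t_i),\theta_i^{\ast}-\overline{\alpha}(t_i)\right\}\right)$. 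Here $\alpha(t_i)$ and $\overline{\alpha}(t_i)$ are scalars (commuting with everything), so this substitution is well defined on the free algebra and produces an honest element of $\mathbb{C}\left\langle\theta_1,\dots,\theta_n,\theta_1^{\ast},\dots,\theta_n^{\ast}\right\rangle$.

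With this $Q$ in hand, applying Theorem \ref{the.1.20} to $Q$ gives
\begin{align*}
&\left\langle Q\left(\left\{A_{\hbar}(t_i)-\alpha(t_i),A_{\hbar}^{\dag}(t_i)-\overline{\alpha}(t_i)\right\}_{i=1}^n\right)\right\rangle_{U_{\hbar}(\alpha_0)\psi}\\
&\quad\quad=\left\langle Q\left(\left\{\sqrt{\hbar}a(t_i),\sqrt{\hbar}a^{\dag}(t_i)\right\}_{i=1}^n\right)\right\rangle_{\psi}+O\left(\hbar^{\frac{q_{\min}+1}{2}}\right).
\end{align*}
The left-hand side is exactly $\left\langle P\left(\left\{A_{\hbar}(t_i),A_{\hbar}^{\dag}(t_i)\right\}_{i=1}^n\right)\right\rangle_{U_{\hbar}(\alpha_0)\psi}$ by the defining relation between $P$ and $Q$, while the right-hand side's main term is $\left\langle Q\left(\left\{\sqrt{\hbar}a(t_i),\sqrt{\hbar}a^{\dag}(t_i)\right\}\right)\right\rangle_{\psi}=\left\langle P\left(\left\{\alpha(t_i)+\sqrt{\hbar}a(t_i),\overline{\alpha}(t_i)+\sqrt{\hbar}a^{\dag}(t_i)\right\}\right)\right\rangle_{\psi}$, again by the same relation. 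So up to the error term the two sides of \eqref{equ.1.25} already agree.

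The only remaining point is the size of the error term: Theorem \ref{the.1.20} yields $O\left(\hbar^{(q_{\min}+1)/2}\right)$ where $q_{\min}$ is the minimal degree of a non-constant monomial of $Q$. Since $Q$ is obtained from $P$ by a translation by scalars, $Q$ may contain non-constant monomials of degree $1$ (coming from expanding higher-degree monomials of $P$), so in general $q_{\min}\geq1$, giving an error of $O(\hbar)$ — precisely the claim. I expect this degree bookkeeping to be the only mildly delicate step: one must note that even if $P$ itself has $p_{\min}$ large, the shifted polynomial $Q$ can have $q_{\min}=1$ unless the linear terms happen to cancel, so $O(\hbar)$ is the best one can assert uniformly. (If one wanted, one could also just invoke Theorem \ref{the.1.20} with the cruder bound $O(\hbar^{1})$ since $q_{\min}\geq 1$ whenever $Q$ is non-constant, and handle the constant case trivially.) No genuine analytic obstacle arises here; the work is entirely in Theorem \ref{the.1.20}, and this corollary is a formal consequence.
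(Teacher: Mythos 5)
Your proposal is correct and follows essentially the same route as the paper: the paper's proof likewise defines the shifted polynomial $\widetilde P(\{\alpha(t_i):\theta_i,\theta_i^{\ast}\})=P(\{\theta_i+\alpha(t_i),\theta_i^{\ast}+\overline{\alpha}(t_i)\})$, applies Theorem \ref{the.1.20} to it, and concludes with the same observation that $\widetilde p_{\min}\geq 1$ yields the $O(\hbar)$ error. Your remark on the degree bookkeeping (that the translation can create linear terms, so $O(\hbar)$ is the uniform bound) matches the paper's reasoning exactly.
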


By expanding out the right side of Eq.(\ref{equ.1.25}), it follows that
\begin{align}
&  \left\langle P\left(  \left\{  A_{\hbar}\left(  t_{i}\right)  ,A_{\hbar
}^{\dag}\left(  t_{i}\right)  \right\}  _{i=1}^{n}\right)  \right\rangle
_{U_{\hbar}\left(  \alpha_{0}\right)  \psi}\nonumber\\
&  \quad=P\left(  \left\{  \alpha\left(  t_{i}\right)  ,\bar{\alpha}\left(
t_{i}\right)  \right\}  _{i=1}^{n}\right)  +\sqrt{\hbar}\left\langle
P_{1}\left(  \left\{  \alpha\left(  t_{i}\right)  :a\left(  t_{i}\right)
,a^{\dag}\left(  t_{i}\right)  \right\}  _{i=1}^{n}\right)  \right\rangle
_{\psi}+O\left(  \hbar\right)  \label{equ.1.26}%
\end{align}
where $P_{1}\left(  \left\{  \alpha\left(  t_{i}\right)  :\theta_{i}%
,\theta_{i}^{\ast}\right\}  _{i=1}^{n}\right)  $ is a degree one homogeneous
polynomial of $\left\{  \theta_{i},\theta_{i}^{\ast}\right\}  _{i=1}^{n}$ with
coefficients depending smoothly on $\left\{  \alpha\left(  t_{i}\right)
\right\}  _{i=1}^{n}.$ Equation (\ref{equ.1.26}) states that the quantum
expectation values,
\begin{equation}
\left\langle P\left(  \left\{  A_{\hbar}\left(  t_{i}\right)  ,A_{\hbar}%
^{\dag}\left(  t_{i}\right)  \right\}  _{i=1}^{n}\right)  \right\rangle
_{U_{\hbar}\left(  \alpha_{0}\right)  \psi}, \label{equ.1.27}%
\end{equation}
closely track the corresponding classical values $P\left(  \left\{
\alpha\left(  t_{i}\right)  ,\bar{\alpha}\left(  t_{i}\right)  \right\}
_{i=1}^{n}\right)  .$ The $\sqrt{\hbar}$ term in Eq. (\ref{equ.1.26})
represent the first quantum corrections (or fluctuations ) beyond the leading
order classical behavior.

\begin{remark}
\label{rem.1.23}If both $H\left(  \theta,\theta^{\ast}\right)  ,$
$\widetilde{H}\left(  \theta,\theta^{\ast}\right)  \in\mathbb{R}\left\langle
\theta,\theta^{\ast}\right\rangle $ both satisfy Assumption \ref{ass.1} and
are such that $H^{\mathrm{cl}}\left(  \alpha\right)  :=H\left(  \alpha
,\overline{\alpha}\right)  $ and $\widetilde{H}^{\mathrm{cl}}\left(
\alpha\right)  :=\widetilde{H}\left(  \alpha,\overline{\alpha}\right)  $ are
equal modulo a constant, then Eq. (\ref{equ.1.26}) also holds with the
$A_{\hbar}\left(  t_{i}\right)  $ and $A_{\hbar}^{\dag}\left(  t_{i}\right)  $
appearing on the left side of this equation being replaced by
\[
e^{i\tilde{H}_{\hbar}t_{i}/\hbar}a_{\hbar}e^{-i\tilde{H}_{\hbar}t_{i}/\hbar
}\text{ and }e^{i\tilde{H}_{\hbar}t_{i}/\hbar}a_{\hbar}^{\dag}e^{-i\tilde
{H}_{\hbar}t_{i}/\hbar}%
\]
where $\tilde{H}_{\hbar}:=\overline{\tilde{H}\left(  a_{\hbar},a_{\hbar}%
^{\dag}\right)  }.$ In other words, if we view $H$ and $\tilde{H}$ as two
\textquotedblleft quantizations\textquotedblright of $H^{\mathrm{cl}},$ then
the quantum expectations relative to $H$ and $\tilde{H}$ agree up to order
$\sqrt{\hbar}.$
\end{remark}

\begin{corollary}
\label{cor.1.24} Under the same conditions in Theorem \ref{the.1.20}, we let
$\psi_{\hbar}=U_{\hbar}\left(  \alpha_{0}\right)  \psi.$ As $\hbar
\rightarrow0^{+}$, we have
\begin{equation}
\left\langle P\left(  \left\{  A_{\hbar}\left(  t_{i}\right)  ,A_{\hbar}%
^{\dag}\left(  t_{i}\right)  \right\}  _{i=1}^{n}\right)  \right\rangle
_{\psi_{\hbar}}\rightarrow P\left(  \left\{  \alpha_{i}\left(  t\right)
,\bar{\alpha}_{i}\left(  t\right)  \right\}  _{i=1}^{n}\right)  ,
\label{equ.1.28}%
\end{equation}
and
\begin{equation}
\left\langle P\left(  \left\{  \frac{A_{\hbar}\left(  t_{i}\right)
-\alpha\left(  t_{i}\right)  }{\sqrt{\hbar}},\frac{A_{\hbar}^{\dag}\left(
t_{i}\right)  -\bar{\alpha}\left(  t_{i}\right)  }{\sqrt{\hbar}}\right\}
_{i=1}^{n}\right)  \right\rangle _{\psi_{\hbar}}\rightarrow\left\langle
P\left(  \left\{  a\left(  t_{i}\right)  ,a^{\dag}\left(  t_{i}\right)
\right\}  _{i=1}^{n}\right)  \right\rangle _{\psi}. \label{equ.1.29}%
\end{equation}
We abbreviate this convergence by saying
\[
\operatorname*{Law}\nolimits_{\psi_{\hbar}}\left(  \left\{  \frac{A_{\hbar
}\left(  t_{i}\right)  -\alpha\left(  t_{i}\right)  }{\sqrt{\hbar}}%
,\frac{A_{\hbar}^{\dag}\left(  t_{i}\right)  -\bar{\alpha}\left(
t_{i}\right)  }{\sqrt{\hbar}}\right\}  _{i=1}^{n}\right)  \rightarrow
\operatorname*{Law}\nolimits_{\psi}\left(  \left\{  a\left(  t_{i}\right)
,a^{\dag}\left(  t_{i}\right)  \right\}  _{i=1}^{n}\right)  .
\]

\end{corollary}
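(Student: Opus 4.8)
The plan is to derive Corollary \ref{cor.1.24} as an immediate consequence of Theorem \ref{the.1.20} together with Corollary \ref{cor.1.22}; essentially everything here is a matter of reading off the $\hbar\to 0^{+}$ limit from the two error estimates already established. Both assertions are of the form ``quantum correlation function $=$ classical/linearized quantity $+\,O(\hbar^{\text{positive power}})$,'' so once those estimates are in hand the convergence statements follow by letting $\hbar\to 0^{+}$. The only genuine content is the bookkeeping of which polynomial $P$ (and hence which $p_{\min}$) the error estimate is being applied to in each of the two displays.

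First I would treat Eq. (\ref{equ.1.28}). Apply Corollary \ref{cor.1.22} directly with the given non-commutative polynomial $P(\{\theta_i,\theta_i^{\ast}\}_{i=1}^{n})$. The right-hand side of Eq. (\ref{equ.1.25}) is, by construction, a polynomial in $\sqrt{\hbar}$ whose $\hbar=0$ value is exactly $\left\langle P(\{\alpha(t_i),\bar\alpha(t_i)\}_{i=1}^{n})\right\rangle_{\psi}=P(\{\alpha(t_i),\bar\alpha(t_i)\}_{i=1}^{n})$ (the scalar comes out of the expectation since $\|\psi\|=1$); indeed this is precisely the leading term exhibited in Eq. (\ref{equ.1.26}). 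Since the remaining terms are $O(\sqrt{\hbar})+O(\hbar)=O(\sqrt{\hbar})$, letting $\hbar\to 0^{+}$ gives Eq. (\ref{equ.1.28}) (modulo matching the notation $\alpha_i(t)$ in the statement with $\alpha(t_i)$ here).

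Next I would treat Eq. (\ref{equ.1.29}). Here the clean move is to apply Theorem \ref{the.1.20} rather than Corollary \ref{cor.1.22}. Given $P(\{\theta_i,\theta_i^{\ast}\}_{i=1}^{n})$, define the non-commutative polynomial $\widetilde P$ by $\widetilde P(\{\theta_i,\theta_i^{\ast}\}_{i=1}^{n}):=P(\{\hbar^{-1/2}\theta_i,\hbar^{-1/2}\theta_i^{\ast}\}_{i=1}^{n})$ — equivalently, just observe that since $P$ is a fixed polynomial of some degree $N$, each monomial of degree $j$ picks up a factor $\hbar^{-j/2}$ under the rescaling $\theta_i\mapsto\hbar^{-1/2}\theta_i$. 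Thus
\begin{align}
&\left\langle P\!\left(\left\{\tfrac{A_{\hbar}(t_i)-\alpha(t_i)}{\sqrt{\hbar}},\tfrac{A_{\hbar}^{\dag}(t_i)-\bar\alpha(t_i)}{\sqrt{\hbar}}\right\}_{i=1}^{n}\right)\right\rangle_{\psi_{\hbar}}\nonumber\\
&\qquad=\sum_{j}\hbar^{-j/2}\left\langle P^{(j)}\!\left(\left\{A_{\hbar}(t_i)-\alpha(t_i),A_{\hbar}^{\dag}(t_i)-\bar\alpha(t_i)\right\}_{i=1}^{n}\right)\right\rangle_{U_{\hbar}(\alpha_0)\psi},\nonumber
\end{align}
where $P^{(j)}$ collects the degree-$j$ homogeneous part of $P$. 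Apply Eq. (\ref{equ.1.23}) of Theorem \ref{the.1.20} to each $P^{(j)}$: since $P^{(j)}$ is homogeneous of degree $j$, its $p_{\min}$ equals $j$ (when $P^{(j)}\neq 0$), so Eq. (\ref{equ.1.23}) gives
$\left\langle P^{(j)}(\{A_{\hbar}(t_i)-\alpha(t_i),\dots\})\right\rangle_{U_{\hbar}(\alpha_0)\psi}=\left\langle P^{(j)}(\{\sqrt{\hbar}\,a(t_i),\sqrt{\hbar}\,a^{\dag}(t_i)\})\right\rangle_{\psi}+O(\hbar^{(j+1)/2})=\hbar^{j/2}\left\langle P^{(j)}(\{a(t_i),a^{\dag}(t_i)\})\right\rangle_{\psi}+O(\hbar^{(j+1)/2})$,
using homogeneity again to pull out $\hbar^{j/2}$. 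Multiplying by $\hbar^{-j/2}$ and summing over $j$ yields
$\left\langle P(\{a(t_i),a^{\dag}(t_i)\}_{i=1}^{n})\right\rangle_{\psi}+O(\hbar^{1/2})$, and letting $\hbar\to0^{+}$ gives Eq. (\ref{equ.1.29}). Finally, the abbreviated ``$\operatorname*{Law}$'' convergence statement is simply the assertion that Eq. (\ref{equ.1.29}) holds for every non-commutative polynomial $P$ in $2n$ indeterminates, which is what has just been shown, so there is nothing further to prove.

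The main (and really only) obstacle is making sure the homogeneity accounting is airtight: one must check that applying Theorem \ref{the.1.20} termwise to the homogeneous pieces $P^{(j)}$ — each with its own $p_{\min}=j$ — and then reassembling produces an overall $O(\hbar^{1/2})$ error rather than something that blows up after the $\hbar^{-j/2}$ rescaling. The point is that the gain in Eq. (\ref{equ.1.23}) is $\hbar^{(p_{\min}+1)/2}$, i.e. one half-power better than the $\hbar^{p_{\min}/2}$ size of the main term, and this half-power margin is exactly what survives the rescaling uniformly in $j$; a one-line induction on $\deg P$ (or just the explicit finite sum above) makes this rigorous. Everything else is routine once Theorem \ref{the.1.20} and Corollary \ref{cor.1.22} are granted.
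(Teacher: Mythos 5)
Your proposal is correct and follows essentially the same route as the paper: Eq. (\ref{equ.1.28}) is read off from Corollary \ref{cor.1.22} (i.e.\ from the expansion in Eq. (\ref{equ.1.26})) by letting $\hbar\to0^{+}$, and Eq. (\ref{equ.1.29}) is obtained by decomposing $P$ into homogeneous components $P_k$, applying Eq. (\ref{equ.1.23}) to each with $p_{\min}=k$, pulling out $\hbar^{k/2}$ by homogeneity, and cancelling it against the $\hbar^{-k/2}$ from the rescaled arguments, leaving an overall $O(\hbar^{1/2})$ error. This is exactly the argument given in Section \ref{sec.9}.
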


\subsection{Comparison with Hepp\label{sec.1.3}}

The primary difference between our results and Hepp's results in
\citep{Hepp1974} is that we allow for non-bounded (polynomial in $a_{\hbar}$
and $a_{\hbar}^{\dag})$ observables where as Hepp's \textquotedblleft
observables\textquotedblright\ are unitary operators of the form
\[
U_{\hbar}\left(  z\right)  =\exp\left(  \overline{za_{\hbar}-\bar{z}a_{\hbar
}^{\dag}}\right)  \text{ for }z\in\mathbb{C}.
\]
As these observables are bounded operators, Hepp is able to prove his results
under less restrictive assumptions than those in Assumption \ref{ass.1} of
this paper. For the most part Hepp primarily works with Hamiltonian operators
in the Schr\"{o}dinger form of Eq. (\ref{equ.1.17}) where the potential
function, $V,$ is not necessarily restricted to be a polynomial function.
[Hepp does however allude to being able to allow for more general Hamiltonian
operators which are not necessarily of the Schr\"{o}dinger form in Eq.
(\ref{equ.1.17}).] The analogue of Corollary \ref{cor.1.24} (for $n=1)$ in
Hepp \citep{Hepp1974}, is his Theorem 2.1 which states; if $z\in\mathbb{C}$
and $\psi\in L^{2}\left(  \mathbb{R}\right)  $, then%
\[
\lim_{\hbar\downarrow0}\left\langle \exp\left(  \overline{z\frac{a_{\hbar
}-\alpha\left(  t\right)  }{\sqrt{\hbar}}-\bar{z}\frac{a_{\hbar}^{\dag}%
-\bar{\alpha}\left(  t\right)  }{\sqrt{\hbar}}}\right)  \right\rangle
_{\psi_{\hbar\left(  t\right)  }}=\left\langle \exp\left(  \overline{za\left(
t\right)  -\bar{z}a^{\dag}\left(  t\right)  }\right)  \right\rangle _{\psi},
\]
where $\psi_{\hbar}\left(  t\right)  :=e^{-iH_{\hbar}t/\hbar}U_{\hbar}\left(
\alpha_{0}\right)  \psi.$

\textit{Acknowledgment.} Both authors would like to thank Ioan Bejenaru, Brian
C. Hall, Rupert L. Frank, and Jacob Sterbenz for helpful discussions at
various stages of this work.

\section{Background and Setup\label{sec.2}}

In this section we will expand on the basic setup described above and recall
some basic facts that will be needed throughout the paper.

\subsection{Classical Setup\label{sub.2.1}}

In this paper, we take configuration space to be $\mathbb{R}$ so that our
classical state space is $T^{\ast}\mathbb{R\cong R}^{2}.$ [Extensions to
higher and to infinite dimensions will be considered elsewhere.] Following
Hepp \citep{Hepp1974}, we identify $T^{\ast}\mathbb{R}$ with $\mathbb{C}$ via
\begin{equation}
T^{\ast}\mathbb{R\ni}\left(  \xi,\pi\right)  \rightarrow\alpha:=\frac{1}%
{\sqrt{2}}\left(  \xi+i\pi\right)  . \label{equ.2.1}%
\end{equation}
Taking in account the \textquotedblleft$\sqrt{2}$\textquotedblright above, we
set
\[
\frac{\partial}{\partial\alpha}:=\frac{1}{\sqrt{2}}\left(  \partial_{\xi
}-i\partial_{\pi}\right)  \text{ and }\frac{\partial}{\partial\bar{\alpha}%
}:=\frac{1}{\sqrt{2}}\left(  \partial_{\xi}+i\partial_{\pi}\right)
\]
so that $\frac{\partial}{\partial\alpha}\alpha=1=\frac{\partial}{\partial
\bar{\alpha}}\bar{\alpha}$ and $\frac{\partial}{\partial\alpha}\bar{\alpha
}=0=\frac{\partial}{\partial\bar{\alpha}}\alpha.$ As usual given a smooth real
valued function,\footnote{Later $H^{cl}$ will be the symbol of a symmetric
element of $H\in\mathbb{C}\left\langle \theta,\theta^{\ast}\right\rangle $ as
described in subsection \ref{sub.2.3}.} $H^{\mathrm{cl}}\left(  \xi
,\pi\right)  ,$ on $T^{\ast}\mathbb{R}$ we say $\left(  \xi\left(  t\right)
,\pi\left(  t\right)  \right)  $ solves Hamilton's equations of motion
provided,
\begin{equation}
\dot{\xi}\left(  t\right)  =H_{\pi}^{\mathrm{cl}}\left(  \xi\left(  t\right)
,\pi\left(  t\right)  \right)  \text{ and }\dot{\pi}\left(  t\right)
=-H_{\xi}^{\mathrm{cl}}\left(  \xi\left(  t\right)  ,\pi\left(  t\right)
\right)  \label{equ.2.2}%
\end{equation}
where $H_{\pi}^{\mathrm{cl}}:=\partial H^{\mathrm{cl}}/\partial\pi$ and
$H_{\xi}^{\mathrm{cl}}:=\partial H^{\mathrm{cl}}/\partial\xi$ . A simple
verifications shows; if
\[
\alpha\left(  t\right)  :=\frac{1}{\sqrt{2}}\left(  \xi\left(  t\right)
+i\pi\left(  t\right)  \right)  ,
\]
then $\left(  \xi\left(  t\right)  ,\pi\left(  t\right)  \right)  $ solves
Hamilton's Eqs. (\ref{equ.2.2}) iff $\alpha\left(  t\right)  $ satisfies
\begin{equation}
i\dot{\alpha}\left(  t\right)  =\left(  \frac{\partial}{\partial\bar{\alpha}%
}\tilde{H}^{\mathrm{cl}}\right)  \left(  \alpha\left(  t\right)  \right)
\label{equ.2.3}%
\end{equation}
where
\[
\tilde{H}^{\mathrm{cl}}\left(  \alpha\right)  :=H^{\mathrm{cl}}\left(  \xi
,\pi\right)  \text{ where }\alpha=\frac{1}{\sqrt{2}}\left(  \xi+i\pi\right)
\in\mathbb{C}.
\]
In the future we will identify $\tilde{H}^{\mathrm{cl}}$ with $H^{\mathrm{cl}%
}$ and drop the tilde from our notation.

\begin{example}
\label{exa.2.2}If $H\left(  \alpha\right)  =\left\vert \alpha\right\vert
^{2}+\frac{1}{2}\left\vert \alpha\right\vert ^{4},$ then the associated
Hamiltonian equations of motion are given by
\[
i\dot{\alpha}=\frac{\partial}{\partial\bar{\alpha}}\left(  \alpha\bar{\alpha
}+\frac{1}{2}\alpha^{2}\bar{\alpha}^{2}\right)  =\alpha+\alpha^{2}\bar{\alpha
}=\alpha+\left\vert \alpha\right\vert ^{2}\alpha.
\]

\end{example}

\begin{proposition}
\label{pro.2.3} Let $z\left(  t\right)  :=\Phi^{\prime}\left(  t,\alpha
_{0}\right)  z$ be the \textbf{real} differential of the flow associated to
Eq. (\ref{equ.1.1}) as in Eq. (\ref{equ.1.2}). Then $z\left(  t\right)  $
satisfies $z\left(  0\right)  =z$ and
\begin{equation}
i\dot{z}\left(  t\right)  =u\left(  t\right)  \bar{z}\left(  t\right)
+v\left(  t\right)  z\left(  t\right)  , \label{equ.2.4}%
\end{equation}
where
\begin{equation}
u\left(  t\right)  :=\left(  \frac{\partial^{2}}{\partial\bar{\alpha}^{2}%
}H^{\mathrm{cl}}\right)  \left(  \alpha\left(  t\right)  \right)
\in\mathbb{C}\text{ and }v\left(  t\right)  =\left(  \frac{\partial^{2}%
}{\partial\alpha\partial\bar{\alpha}}H^{\mathrm{cl}}\right)  \left(
\alpha\left(  t\right)  \right)  \in\mathbb{R}. \label{equ.2.5}%
\end{equation}
Moreover, if we express $z\left(  t\right)  =\gamma\left(  t\right)
z+\delta\left(  t\right)  \bar{z}$ as in Eq. (\ref{equ.1.3}) and let
\begin{equation}
\Lambda\left(  t\right)  :=\left[
\begin{array}
[c]{cc}%
\gamma\left(  t\right)  & \delta\left(  t\right) \\
\bar{\delta}\left(  t\right)  & \bar{\gamma}\left(  t\right)
\end{array}
\right]  , \label{equ.2.6}%
\end{equation}
then
\[
\det\Lambda\left(  t\right)  =\left\vert \gamma\left(  t\right)  \right\vert
^{2}-\left\vert \delta\left(  t\right)  \right\vert ^{2}=1~\forall
~t\in\mathbb{R}%
\]
and
\begin{equation}
i\dot{\Lambda}\left(  t\right)  =\left[
\begin{array}
[c]{cc}%
v\left(  t\right)  & u\left(  t\right) \\
-\bar{u}\left(  t\right)  & -\bar{v}\left(  t\right)
\end{array}
\right]  \Lambda\left(  t\right)  \text{ and }\Lambda\left(  0\right)  =I.
\label{equ.2.7}%
\end{equation}

\end{proposition}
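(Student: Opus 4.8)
The plan is to differentiate the defining relation \eqref{equ.1.2} for $z(t) := \Phi'(t,\alpha_0)z$ directly from Hamilton's equation \eqref{equ.1.1}. Concretely, I would set $\alpha_s(t) := \Phi(t,\alpha_0 + sz)$, which by definition satisfies $i\dot\alpha_s(t) = (\partial_{\bar\alpha}H^{\mathrm{cl}})(\alpha_s(t))$ with $\alpha_s(0) = \alpha_0 + sz$. Assuming enough smoothness of the flow (which follows from the fact that $\partial_{\bar\alpha}H^{\mathrm{cl}}$ is a polynomial in $\alpha,\bar\alpha$, hence smooth, so standard ODE theory gives smooth dependence on initial conditions), I would differentiate this identity in $s$ at $s=0$. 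Writing $z(t) = \frac{d}{ds}|_{s=0}\alpha_s(t)$ and noting that $\frac{d}{ds}|_{s=0}\bar\alpha_s(t) = \overline{z(t)}$, the chain rule applied to the real-differentiable function $\alpha \mapsto (\partial_{\bar\alpha}H^{\mathrm{cl}})(\alpha)$ yields
\[
i\dot z(t) = \left(\frac{\partial^2}{\partial\alpha\,\partial\bar\alpha}H^{\mathrm{cl}}\right)(\alpha(t))\,z(t) + \left(\frac{\partial^2}{\partial\bar\alpha^2}H^{\mathrm{cl}}\right)(\alpha(t))\,\overline{z(t)},
\]
which is exactly \eqref{equ.2.4} with $u(t),v(t)$ as in \eqref{equ.2.5}; the initial condition $z(0) = z$ is immediate. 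The reality of $v(t)$ follows from $H^{\mathrm{cl}}$ being real valued (Remark \ref{rem.2.19}), since $\partial_\alpha\partial_{\bar\alpha}$ applied to a real function is real, as $\partial_\alpha$ and $\partial_{\bar\alpha}$ are complex conjugates of each other as operators.

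Next I would substitute the decomposition $z(t) = \gamma(t)z + \delta(t)\bar z$ from \eqref{equ.1.3} into \eqref{equ.2.4}. Since this must hold for every $z \in \mathbb{C}$, I can equate the coefficients of $z$ and of $\bar z$ separately (using that $1$ and $i$, or equivalently $z$ and $\bar z$ for a generic complex number, are real-linearly independent). Matching the coefficient of $z$ gives $i\dot\gamma = v\gamma + u\bar\delta$, and matching the coefficient of $\bar z$ gives $i\dot\delta = v\delta + u\bar\gamma$. Taking complex conjugates of these two relations produces $-i\dot{\bar\delta} = \bar v\bar\delta + \bar u\gamma$ and $-i\dot{\bar\gamma} = \bar v\bar\gamma + \bar u\delta$, i.e. $i\dot{\bar\delta} = -\bar u\gamma - \bar v\bar\delta$ and $i\dot{\bar\gamma} = -\bar u\delta - \bar v\bar\gamma$. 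Assembling the four scalar equations into the matrix $\Lambda(t)$ of \eqref{equ.2.6} gives precisely \eqref{equ.2.7}, and $\Lambda(0) = I$ follows from $\gamma(0)=1$, $\delta(0)=0$.

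For the determinant identity, I would argue that $\frac{d}{dt}\det\Lambda(t) = \operatorname{tr}\!\left(\begin{smallmatrix} v & u \\ -\bar u & -\bar v \end{smallmatrix}\right)\cdot(-i)\det\Lambda(t)$ by Liouville's formula applied to \eqref{equ.2.7} (rewriting it as $\dot\Lambda = -i M(t)\Lambda$ with $M(t) = \left(\begin{smallmatrix} v & u \\ -\bar u & -\bar v \end{smallmatrix}\right)$). Since $\operatorname{tr} M(t) = v - \bar v = 0$ because $v$ is real, we get $\frac{d}{dt}\det\Lambda(t) = 0$, so $\det\Lambda(t) = \det\Lambda(0) = 1$ for all $t$. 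Equating this with the explicit expansion of the $2\times 2$ determinant gives $|\gamma(t)|^2 - |\delta(t)|^2 = 1$. The main obstacle — really the only nontrivial point — is justifying the differentiation under the flow in the first step, i.e. that $\Phi(t,\cdot)$ is $C^1$ in its initial condition with the derivative solving the variational equation; but this is standard ODE theory given that the vector field $\alpha \mapsto -i(\partial_{\bar\alpha}H^{\mathrm{cl}})(\alpha)$ is polynomial, hence smooth, and I would simply cite the relevant theorem on smooth dependence on initial data rather than reprove it. (One should also note global existence of $\alpha(t)$, which is available from Proposition \ref{pro.3.13}, so the variational equation is posed on all of $\mathbb{R}$.)
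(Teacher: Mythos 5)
Your proposal is correct and follows essentially the same route as the paper: differentiate the flow equation in the initial condition, apply the chain rule for the real differential to obtain Eq. (\ref{equ.2.4}), and then equate coefficients of $z$ and $\bar z$ to derive the system for $\gamma,\delta$. The only (cosmetic) difference is that you obtain $\left\vert \gamma\right\vert ^{2}-\left\vert \delta\right\vert ^{2}=1$ via Liouville's formula and the vanishing trace $v-\bar v=0$, whereas the paper differentiates $\left\vert \gamma\right\vert ^{2}-\left\vert \delta\right\vert ^{2}$ directly; both rest on the reality of $v$ and are equivalent.
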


\begin{proof}
First recall if $f:\mathbb{C\rightarrow C}$ is a smooth function (not analytic
in general), then the \textbf{real }differential, $z\rightarrow f^{\prime
}\left(  \alpha\right)  z:=\frac{d}{ds}|_{0}f\left(  \alpha+sz\right)  ,$ of
$f$ at $\alpha$ satisfies
\begin{equation}
f^{\prime}\left(  \alpha\right)  z=\left(  \frac{\partial}{\partial\alpha
}f\right)  \left(  \alpha\right)  z+\left(  \frac{\partial}{\partial
\bar{\alpha}}f\right)  \left(  \alpha\right)  \bar{z}.\label{equ.2.8}%
\end{equation}
By definition $\Phi\left(  t,\alpha_{0}\right)  $ satisfies the differential
equation,
\[
i\dot{\Phi}\left(  t,\alpha_{0}\right)  =\left(  \frac{\partial}{\partial
\bar{\alpha}}H^{\mathrm{cl}}\right)  \left(  \Phi\left(  t,\alpha_{0}\right)
\right)  \text{ and }\Phi\left(  0,\alpha_{0}\right)  =\alpha_{0}.
\]
Differentiating this equation relative to $\alpha_{0}$ using the chain rule
along with Eq. (\ref{equ.2.8}) shows $z\left(  t\right)  :=\Phi^{\prime
}\left(  t,\alpha_{0}\right)  z$ satisfies Eq. (\ref{equ.2.4}). The fact that
$v\left(  t\right)  $ is real valued follows from its definition in Eq.
(\ref{equ.2.5}) and the fact that $H^{\mathrm{cl}}$ is a real valued function.

Inserting the expression, $z\left(  t\right)  =\gamma\left(  t\right)
z+\delta\left(  t\right)  \bar{z},$ into Eq. (\ref{equ.2.4}) one shows after a
little algebra that,
\[
i\dot{\gamma}\left(  t\right)  z+i\dot{\delta}\left(  t\right)  \bar
{z}=\left(  u\left(  t\right)  \bar{\delta}\left(  t\right)  +v\left(
t\right)  \gamma\left(  t\right)  \right)  z+\left(  u\left(  t\right)
\bar{\gamma}\left(  t\right)  +v\left(  t\right)  \delta\left(  t\right)
\right)  \bar{z}%
\]
from which we conclude that $\left(  \gamma\left(  t\right)  ,\delta\left(
t\right)  \right)  \in\mathbb{C}^{2}$ satisfy the equations
\begin{align}
i\dot{\gamma}\left(  t\right)   &  =u\left(  t\right)  \bar{\delta}\left(
t\right)  +v\left(  t\right)  \gamma\left(  t\right)  \text{ and}%
\label{equ.2.9}\\
i\dot{\delta}\left(  t\right)   &  =u\left(  t\right)  \bar{\gamma}\left(
t\right)  +v\left(  t\right)  \delta\left(  t\right)  . \label{equ.2.10}%
\end{align}
Using these equations we then find;
\begin{align}
\frac{d}{dt}\left(  \left\vert \gamma\right\vert ^{2}-\left\vert
\delta\right\vert ^{2}\right)   &  =2\operatorname{Re}\left(  \dot{\gamma}%
\bar{\gamma}-\dot{\delta}\bar{\delta}\right) \nonumber\\
&  =2\operatorname{Re}\left(  -i\left(  u\bar{\delta}+v\gamma\right)
\bar{\gamma}+i\left(  u\bar{\gamma}+v\delta\right)  \bar{\delta}\right)
\nonumber\\
&  =2\operatorname{Re}\left(  -iv\left\vert \gamma\right\vert ^{2}%
+iv\left\vert \delta\right\vert ^{2}\right)  =0. \label{equ.2.11}%
\end{align}
Since $z\left(  0\right)  =z,$ $\gamma\left(  0\right)  =1$ and $\delta\left(
0\right)  =1$ and so from Eq. (\ref{equ.2.11}) we learn
\begin{equation}
\left(  \left\vert \gamma\right\vert ^{2}-\left\vert \delta\right\vert
^{2}\right)  \left(  t\right)  =\left(  \left\vert \gamma\right\vert
^{2}-\left\vert \delta\right\vert ^{2}\right)  \left(  0\right)  =1^{2}%
-0^{2}=1. \label{equ.2.12}%
\end{equation}
Finally, Eq. (\ref{equ.2.7}) is simply the vector form of Eqs. (\ref{equ.2.9})
and (\ref{equ.2.10}).
\end{proof}

\begin{remark}
\label{rem.2.5}Equation (\ref{equ.2.4}) may be thought of as the time
dependent Hamiltonian flow,
\[
i\dot{z}\left(  t\right)  =\frac{\partial q\left(  t,\cdot\right)  }%
{\partial\bar{z}}\left(  z\left(  t\right)  \right)
\]
where $q\left(  t,z\right)  \in\mathbb{R}$ is the quadratic time dependent
Hamiltonian defined by
\begin{align*}
q\left(  t:z\right)  =  &  \frac{1}{2}u\left(  t\right)  z^{2}+\frac{1}{2}%
\bar{u}\left(  t\right)  \bar{z}^{2}+v\left(  t\right)  \bar{z}z\\
=  &  \frac{1}{2}\left(  \frac{\partial^{2}}{\partial\alpha^{2}}%
H^{\mathrm{cl}}\right)  \left(  \alpha\left(  t\right)  \right)  z^{2}%
+\frac{1}{2}\left(  \frac{\partial^{2}}{\partial\bar{\alpha}^{2}%
}H^{\mathrm{cl}}\right)  \left(  \alpha\left(  t\right)  \right)  \bar{z}%
^{2}\\
&  +\left(  \frac{\partial}{\partial\alpha}\frac{\partial}{\partial\bar
{\alpha}}H^{\mathrm{cl}}\right)  \left(  \alpha\left(  t\right)  \right)
\left\vert z\right\vert ^{2}.
\end{align*}

\end{remark}

\subsection{Quantum Mechanical Setup\label{sub.2.2}}

Recall that our quantum mechanical Hilbert space is taken to be the space of
Lebesgue square integrable complex valued functions on $\mathbb{R}$
($L^{2}\left(  m\right)  $ $:=L^{2}\left(  \mathbb{R},m\right)  )$ equipped
with the usual $L^{2}\left(  m\right)  $-inner product as in Eq.
(\ref{equ.1.4}). To each $\hbar>0$ ($\hbar$ is to be thought of as Planck's
constant), let
\begin{equation}
q_{\hbar}:=\sqrt{\hbar}M_{x}\text{ and }p_{\hbar}:=\sqrt{\hbar}\frac{1}%
{i}\frac{d}{dx} \label{e.2.13}%
\end{equation}
interpreted as self-adjoint operators on $L^{2}\left(  m\right)
:=L^{2}\left(  \mathbb{R},m\right)  $ with domains
\begin{align*}
D\left(  q_{\hbar}\right)   &  =\left\{  f\in L^{2}\left(  m\right)
:x\rightarrow xf\left(  x\right)  \in L^{2}\left(  m\right)  \right\}  \text{
and}\\
D\left(  p_{\hbar}\right)   &  =D\left(  \frac{d}{dx}\right)  =\left\{  f\in
L^{2}\left(  m\right)  :x\rightarrow f\left(  x\right)  \text{ is A.C. and
}f^{\prime}\in L^{2}\left(  m\right)  \right\}
\end{align*}
where A.C. is an abbreviation of absolutely continuous. Using Corollary
\ref{cor.3.36} below, the annihilation and creation operators in Definition
\ref{def.1.3} may be expressed as
\begin{align}
\bar{a}_{\hbar}  &  :=\frac{q_{\hbar}+ip_{\hbar}}{\sqrt{2}}=\sqrt{\frac{\hbar
}{2}}\left(  M_{x}+\frac{d}{dx}\right)  \text{ and }\label{e.2.14}\\
a_{\hbar}^{\ast}  &  :=\frac{q_{\hbar}-ip_{\hbar}}{\sqrt{2}}=\sqrt{\frac
{\hbar}{2}}\left(  M_{x}-\frac{d}{dx}\right)  . \label{e.2.15}%
\end{align}

\subsubsection{Weyl Operator\label{sub.2.2.1}}

\begin{proposition}
\label{pro.2.6}Let $\alpha:=\left(  \xi+i\pi\right)  /\sqrt{2}\in\mathbb{C},$
$\hbar>0,$ and $U\left(  \alpha\right)  $ and $U_{\hbar}\left(  \alpha\right)
$ be as in Definition \ref{def.1.6}. Then
\begin{equation}
\left(  U\left(  \alpha\right)  f\right)  \left(  x\right)  =\exp\left(
i\pi\left(  x-\frac{1}{2}\xi\right)  \right)  f\left(  x-\xi\right)
~\forall~f\in L^{2}\left(  m\right)  , \label{equ.2.16}%
\end{equation}
$U\left(  \alpha\right)  \mathcal{S}=\mathcal{S},$
\begin{align}
U_{\hbar}\left(  \alpha\right)  ^{\ast}a_{\hbar}U_{\hbar}\left(
\alpha\right)   &  =a_{\hbar}+\alpha,\text{ and}\label{equ.2.17}\\
U_{\hbar}\left(  \alpha\right)  ^{\ast}a_{\hbar}^{\dag}U_{\hbar}\left(
\alpha\right)   &  =a_{\hbar}^{\dag}+\bar{\alpha}, \label{equ.2.18}%
\end{align}
as identities on $\mathcal{S}.$
\end{proposition}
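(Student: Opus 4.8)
The plan is to prove the explicit formula \eqref{equ.2.16} first and then to read off the other three assertions from it by elementary manipulations of the explicit kernel. For $t\in\mathbb{R}$ I would introduce the operator $W_{t}$ on $L^{2}\left(m\right)$ defined by $\left(W_{t}f\right)\left(x\right):=\exp\left(i\pi\left(tx-\frac{t^{2}}{2}\xi\right)\right)f\left(x-t\xi\right)$ and check, by inspection of this kernel, that $\left(W_{t}\right)_{t\in\mathbb{R}}$ is a strongly continuous one-parameter unitary group: $W_{0}=I$; $\left|\left(W_{t}f\right)\left(x\right)\right|=\left|f\left(x-t\xi\right)\right|$ pointwise, so $W_{t}$ is isometric with inverse $W_{-t}$; in $W_{s}W_{t}$ the phases add up to that of $W_{s+t}$ (the cross term completes $s^{2}\xi/2+t^{2}\xi/2$ to $\left(s+t\right)^{2}\xi/2$); and strong continuity at $t=0$ is continuity of translation plus a dominated-convergence estimate for the phase, which propagates to all $t$ by the group law. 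Using $\alpha=\left(\xi+i\pi\right)/\sqrt{2}$ together with \eqref{e.2.14}--\eqref{e.2.15}, one computes $\alpha a^{\dag}-\bar{\alpha}a=i\pi M_{x}-\xi\partial_{x}$ on $\mathcal{S}$, and differentiating the kernel of $W_{t}f$ in $t$ (valid in $L^{2}\left(m\right)$ for $f\in\mathcal{S}$, the difference quotients being dominated) and re-expressing the answer in terms of $W_{t}f$ and $\partial_{x}W_{t}f$ gives, after the two $t\xi$-terms cancel,
\[
\frac{d}{dt}W_{t}f=\left(i\pi M_{x}-\xi\partial_{x}\right)W_{t}f=\left(\alpha a^{\dag}-\bar{\alpha}a\right)W_{t}f\qquad\text{for all }f\in\mathcal{S}.
\]
Since $W_{t}\mathcal{S}\subseteq\mathcal{S}$ and $\alpha a^{\dag}-\bar{\alpha}a$ is essentially skew-adjoint on $\mathcal{S}$ (the remark after Definition \ref{def.1.6}), both $t\mapsto W_{t}f$ and $t\mapsto U\left(t\alpha\right)f=e^{t\overline{\alpha a^{\dag}-\bar{\alpha}a}}f$ solve the abstract problem $\dot{u}=\overline{\alpha a^{\dag}-\bar{\alpha}a}\,u$, $u\left(0\right)=f$, inside the domain of $\overline{\alpha a^{\dag}-\bar{\alpha}a}$; skew-adjointness of the generator makes $t\mapsto\left\Vert W_{t}f-U\left(t\alpha\right)f\right\Vert^{2}$ constant, hence identically zero, so $W_{t}=U\left(t\alpha\right)$ by density. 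Taking $t=1$ is precisely \eqref{equ.2.16}.

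\textit{Invariance and the conjugation identities.} By \eqref{equ.2.16}, applied both to $\alpha$ and to $-\alpha$, each of $U\left(\pm\alpha\right)$ is a translation followed by multiplication by a smooth function all of whose derivatives are bounded, so $U\left(\pm\alpha\right)\mathcal{S}\subseteq\mathcal{S}$; since $U\left(\alpha\right)U\left(-\alpha\right)=I$ this forces $U\left(\alpha\right)\mathcal{S}=\mathcal{S}$. For the last two identities I would first reduce to $\hbar=1$: as $U_{\hbar}\left(\alpha\right)=U\left(\alpha/\sqrt{\hbar}\right)$ and $a_{\hbar}=\sqrt{\hbar}\,a$, $a_{\hbar}^{\dag}=\sqrt{\hbar}\,a^{\dag}$ on $\mathcal{S}$, one has $U_{\hbar}\left(\alpha\right)^{\ast}a_{\hbar}U_{\hbar}\left(\alpha\right)=\sqrt{\hbar}\,U\left(\alpha/\sqrt{\hbar}\right)^{\ast}aU\left(\alpha/\sqrt{\hbar}\right)$, so \eqref{equ.2.17}--\eqref{equ.2.18} follow from the $\hbar=1$ statements $U\left(\beta\right)^{\ast}aU\left(\beta\right)=a+\beta$ and $U\left(\beta\right)^{\ast}a^{\dag}U\left(\beta\right)=a^{\dag}+\bar{\beta}$ (at $\beta=\alpha/\sqrt{\hbar}$) after multiplying by $\sqrt{\hbar}$. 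For the $\hbar=1$ statements, fix $f\in\mathcal{S}$; then $U\left(\beta\right)f\in\mathcal{S}$ by the invariance just proved, so $aU\left(\beta\right)f$ makes sense, and substituting $a=\frac{1}{\sqrt{2}}\left(M_{x}+\partial_{x}\right)$ into the explicit kernel of $U\left(\beta\right)f$ and differentiating gives, after cancellation, $aU\left(\beta\right)f=U\left(\beta\right)af+\beta\,U\left(\beta\right)f$; equivalently $U\left(\beta\right)^{\ast}aU\left(\beta\right)=a+\beta$ on $\mathcal{S}$, using $U\left(\beta\right)^{\ast}=U\left(\beta\right)^{-1}$. The identity for $a^{\dag}=\frac{1}{\sqrt{2}}\left(M_{x}-\partial_{x}\right)$ is the same computation with one sign changed, or one may instead take formal adjoints of the $a$-identity against arbitrary $f,g\in\mathcal{S}$ (legitimate precisely because $U\left(\beta\right)$ preserves $\mathcal{S}$).

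The only step that is not routine bookkeeping is the identification $W_{1}=U\left(\alpha\right)$ inside the first part, i.e.\ matching the concretely defined group $W_{t}$ with the abstract operator $e^{\overline{\alpha a^{\dag}-\bar{\alpha}a}}$. This rests on Stone's theorem together with the essential skew-adjointness of $\alpha a^{\dag}-\bar{\alpha}a$ on $\mathcal{S}$ that is already recorded in the text; everything after that point is manipulation of the explicit kernel.
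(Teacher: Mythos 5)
Your proof is correct and follows essentially the same route as the paper: the paper also identifies $U\left(t\alpha\right)f$ with the explicit phase-times-translation formula by observing that both satisfy the transport equation $\partial_{t}F=\left(i\pi M_{x}-\xi\partial_{x}\right)F$ with the same initial data, and then derives the $\mathcal{S}$-invariance and the conjugation identities \eqref{equ.2.17}--\eqref{equ.2.18} by direct computation from \eqref{equ.2.16} plus scaling in $\hbar$. The only difference is that you spell out the uniqueness step (norm conservation for the skew-adjoint generator) that the paper leaves implicit in its appeal to the method of characteristics; this is a welcome addition, not a deviation.
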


\begin{proof}
Given $f\in\mathcal{S}$ let $F\left(  t,x\right)  :=\left(  U\left(
t\alpha\right)  f\right)  \left(  x\right)  $ so that
\begin{equation}
\frac{\partial}{\partial t}F\left(  t,x\right)  =\left(  i\pi x-\xi
\frac{\partial}{\partial x}\right)  F\left(  t,x\right)  \text{ with }F\left(
0,x\right)  =f\left(  x\right)  . \label{equ.2.19}%
\end{equation}
Solving this equation by the method of characteristics then gives Eq.
(\ref{equ.2.16}). [Alternatively one easily verifies directly that
\[
F\left(  t,x\right)  :=\exp(it\pi(x-\frac{1}{2}t\xi))f(x-t\xi)
\]
solves Eq. (\ref{equ.2.19}).] It is clear from Eq. (\ref{equ.2.16}) that
$U\left(  \alpha\right)  \mathcal{S\subset S}$ and $U\left(  -\alpha\right)
U\left(  \alpha\right)  =I$ for all $\alpha\in\mathbb{C}.$ Therefore
$\mathcal{S}\subset U\left(  -\alpha\right)  \mathcal{S}.$ Replacing $\alpha$
by $-\alpha$ in this last inclusion allows us to conclude that $U\left(
\alpha\right)  \mathcal{S=S}.$ The formula in Eq. (\ref{equ.2.19}) also
directly extends to $L^{2}\left(  m\right)  $ where it defines a unitary
operator. The identities in Eqs. (\ref{equ.2.17}) and (\ref{equ.2.18}) for
$\hbar=1$ follows by simple direct calculations using Eq. (\ref{equ.2.16}).
The case of general $\hbar>0$ then follows by simple scaling arguments.
\end{proof}

\begin{remark}
\label{rem.2.8}Another way to prove Eq. (\ref{equ.2.17}) is to integrate the
identity,
\[
\frac{d}{dt}U_{\hbar}\left(  t\alpha\right)  ^{\ast}a_{\hbar}U_{\hbar}\left(
t\alpha\right)  =-U_{\hbar}\left(  t\alpha\right)  ^{\ast}\left[  \frac
{1}{\hbar}\left(  \alpha\cdot a_{\hbar}^{\dag}-\bar{\alpha}\cdot a_{\hbar
}\right)  ,a_{\hbar}\right]  U_{\hbar}\left(  t\alpha\right)  =\alpha,
\]
with respect to $t$ on $\mathcal{S}$ and the initial condition $U\left(
0\right)  =I.$
\end{remark}

\begin{definition}
\label{def.2.9} Suppose that $\left\{  W\left(  t\right)  \right\}
_{t\in\mathbb{R}}$ is a one parameter family of (possibly) unbounded operators
on a Hilbert space $\left\langle \mathcal{K},\left\langle \cdot,\cdot
\right\rangle _{\mathcal{K}}\right\rangle .$ Given a dense subspace,
$D\subset\mathcal{K},$ we say $W\left(  t\right)  $ is \textbf{strongly
}$\left\Vert \cdot\right\Vert _{\mathcal{K}}$-norm\textbf{\ differentiable} on
$D$ if 1) $D\subset D\left(  W\left(  t\right)  \right)  $ for all
$t\in\mathbb{R}$ and 2) for all $\psi\in D,$ $t\rightarrow W\left(  t\right)
\psi$ is $\left\Vert \cdot\right\Vert _{\mathcal{K}}$-norm differentiable. For
notational simplicity we will write $\dot{W}\left(  t\right)  \psi$ for
$\frac{d}{dt}\left[  W\left(  t\right)  \psi\right]  .$
\end{definition}

\begin{proposition}
\label{pro.2.10}If $\mathbb{R}\ni t\rightarrow\alpha\left(  t\right)
\in\mathbb{C}$ is a $C^{1}$ function and $\mathcal{N}:=\mathcal{N}_{\hbar
}|_{\hbar=1}$ the number operator defined in Eq. (\ref{equ.1.11}), then
$\left\{  U\left(  \alpha\left(  t\right)  \right)  \right\}  _{t\in
\mathbb{R}}$ is strongly $L^{2}\left(  m\right)  $-norm differentiable on
$D\left(  \sqrt{\mathcal{N}}\right)  $ as in the Definition \ref{def.2.9} and
for all $f\in D\left(  \sqrt{\mathcal{N}}\right)  $ we have
\begin{align*}
\frac{d}{dt}(U\left(  \alpha\left(  t\right)  \right)  f)  &  =\left(
\dot{\alpha}\left(  t\right)  a^{\ast}-\overline{\dot{\alpha}\left(  t\right)
}\bar{a}+i\operatorname{Im}\left(  \alpha\left(  t\right)  \overline
{\dot{\alpha}\left(  t\right)  }\right)  \right)  U\left(  \alpha\left(
t\right)  \right)  f\\
&  =U\left(  \alpha\left(  t\right)  \right)  \left(  \dot{\alpha}\left(
t\right)  a^{\ast}-\overline{\dot{\alpha}\left(  t\right)  }\bar
{a}-i\operatorname{Im}\left(  \alpha\left(  t\right)  \overline{\dot{\alpha
}\left(  t\right)  }\right)  \right)  f.
\end{align*}
Moreover, $U\left(  \alpha\left(  t\right)  \right)  $ preserves $D\left(
\sqrt{\mathcal{N}}\right)  $, $C_{c}(\mathbb{R}),$ and $\mathcal{S}.$
\end{proposition}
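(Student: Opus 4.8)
The plan is to prove the differentiation formula first on $C_c^\infty(\mathbb{R})$, where everything reduces to an explicit computation with the formula of Proposition \ref{pro.2.6}, and then to bootstrap to all of $D(\sqrt{\mathcal{N}})$ by a density argument together with uniform estimates. Throughout write $\alpha(t)=\frac{1}{\sqrt{2}}(\xi(t)+i\pi(t))$ with $\xi,\pi\in C^{1}(\mathbb{R},\mathbb{R})$, and recall that $(U(\alpha(t))f)(x)=e^{i\pi(t)(x-\xi(t)/2)}f(x-\xi(t))$.

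\textbf{Step 1: the subspace $C_c^\infty(\mathbb{R})$.} For $f\in C_c^\infty(\mathbb{R})$, differentiating the explicit formula in $t$, then rewriting $M_x=\frac{1}{\sqrt{2}}(a^{\ast}+\bar a)$ and $\partial_x=\frac{1}{\sqrt{2}}(\bar a-a^{\ast})$ and using the algebraic identity $\operatorname{Im}(\alpha\overline{\dot\alpha})=\frac{1}{2}(\dot\xi\pi-\xi\dot\pi)$, shows that for each fixed $x$,
\[
\frac{\partial}{\partial t}(U(\alpha(t))f)(x)=(G(t))(x),\qquad G(t):=\big(\dot\alpha(t)a^{\ast}-\overline{\dot\alpha(t)}\bar a+i\operatorname{Im}(\alpha(t)\overline{\dot\alpha(t)})\big)U(\alpha(t))f.
\]
For $t$ in a fixed compact interval, $G(t)$ is supported in a fixed compact set and bounded by a fixed constant (since $\xi,\pi,\dot\xi,\dot\pi$ are bounded there and $f,f'$ are compactly supported), so $t\mapsto G(t)\in L^{2}(m)$ is norm-continuous by dominated convergence; combined with the pointwise fundamental theorem of calculus this yields the $L^{2}(m)$-identity $U(\alpha(t))f=U(\alpha(0))f+\int_{0}^{t}G(s)\,ds$, hence $t\mapsto U(\alpha(t))f$ is $C^{1}$ into $L^{2}(m)$ with derivative $G(t)$. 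This is the first displayed formula on $C_c^\infty(\mathbb{R})$. The second formula follows by conjugating with $U(\alpha(t))$, using the intertwining relations $a^{\ast}U(\alpha)f=U(\alpha)(a^{\ast}+\bar\alpha)f$ and $\bar a U(\alpha)f=U(\alpha)(\bar a+\alpha)f$ (from Eqs. (\ref{equ.2.17})--(\ref{equ.2.18}), valid on $\mathcal{S}$) together with $\dot\alpha\bar\alpha-\overline{\dot\alpha}\alpha=-2i\operatorname{Im}(\alpha\overline{\dot\alpha})$, which converts the $+i\operatorname{Im}$ phase into $-i\operatorname{Im}$.

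\textbf{Step 2: extension to $D(\sqrt{\mathcal{N}})$.} By Remark \ref{rem.1.5} and Corollary \ref{cor.3.36}, $D(\sqrt{\mathcal{N}})=D(\bar a)=D(a^{\ast})=D(M_x)\cap D(\partial_x)$ and $C_c^\infty(\mathbb{R})$ is a core for each of these closed operators. Since $U(\alpha)$ is unitary and the two intertwining relations extend from $C_c^\infty(\mathbb{R})$ to all of $D(\sqrt{\mathcal{N}})$ by closedness of $a^{\ast}$ and $\bar a$, the operator $U(\alpha)$ maps $D(\sqrt{\mathcal{N}})$ into itself with $\|a^{\ast}U(\alpha)f\|\le\|a^{\ast}f\|+|\alpha|\,\|f\|$ and similarly for $\bar a$. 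Now given $f\in D(\sqrt{\mathcal{N}})$, choose $f_{n}\in C_c^\infty(\mathbb{R})$ with $f_{n}\to f$ in the graph norm of $\sqrt{\mathcal{N}}$; then $U(\alpha(t))f_{n}\to U(\alpha(t))f$ in $L^{2}(m)$ and $G_{n}(t):=\big(\dot\alpha(t)a^{\ast}-\overline{\dot\alpha(t)}\bar a+i\operatorname{Im}(\alpha(t)\overline{\dot\alpha(t)})\big)U(\alpha(t))f_{n}\to G(t)$ in $L^{2}(m)$, both \emph{uniformly for $t$ in each compact interval}, because $|\alpha(t)|$ and $|\dot\alpha(t)|$ are bounded there. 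The standard theorem on differentiating a uniform limit of $C^{1}$ curves whose derivatives converge uniformly then gives that $t\mapsto U(\alpha(t))f$ is $L^{2}(m)$-norm differentiable with derivative $G(t)$, which is the first formula; the second is obtained as in Step 1. Finally, the ``moreover'' assertions: $U(\alpha(t))$ preserves $D(\sqrt{\mathcal{N}})$ by the displayed bound above, preserves $\mathcal{S}$ by Proposition \ref{pro.2.6}, and preserves $C_c(\mathbb{R})$ because $(U(\alpha)f)(x)=e^{i\pi(x-\xi/2)}f(x-\xi)$ is a translate of $f$ times a continuous nowhere-vanishing function.

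\textbf{Main obstacle.} The delicate point is Step 2: one must upgrade the intertwining relations (\ref{equ.2.17})--(\ref{equ.2.18}), initially proved only on $\mathcal{S}$, to all of $D(\sqrt{\mathcal{N}})$, so that $U(\alpha)$ is seen to preserve $D(\sqrt{\mathcal{N}})$ with quantitative norm control. That control is exactly what forces the difference quotients (equivalently, the curves $G_{n}$) to converge uniformly on compact $t$-intervals, which is what is needed to pass the derivative formula to the limit off the core. This in turn rests on the domain identifications $D(\sqrt{\mathcal{N}})=D(\bar a)=D(a^{\ast})$ and on $C_c^\infty(\mathbb{R})$ (or $\mathcal{S}$) being a core, supplied by Remark \ref{rem.1.5} and Corollary \ref{cor.3.36}. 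By contrast the pointwise differentiation of the explicit Weyl formula and the dominated-convergence upgrade to an $L^{2}(m)$-statement in Step 1 are routine.
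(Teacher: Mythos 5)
Your proof is correct and follows essentially the route the paper itself indicates: a direct verification based on the explicit Weyl formula of Eq. (\ref{equ.2.16}) together with the domain identification $D(\sqrt{\mathcal{N}})=D(M_{x})\cap D(\partial_{x})$ from Corollary \ref{cor.3.36}, which is precisely the "straightforward verification" the paper leaves to the reader; your Step 2 supplies the core-plus-uniform-convergence details that make the extension off $C_{c}^{\infty}(\mathbb{R})$ rigorous.
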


\begin{proof}
From Corollary \ref{cor.3.36} below we know $D(\partial_{x})\cap
D(M_{x})=D\left(  \sqrt{\mathcal{N}}\right)  .$ Using this fact, the
proposition is a straightforward verification based on Eq. (\ref{equ.2.16}).
The reader not wishing to carry out these computations may find it instructive
to give a formal proof based on the algebraic fact that $e^{A+B}=e^{A}%
e^{B}e^{-\frac{1}{2}\left[  A,B\right]  }$ where $A$ and $B$ are operators
such that the commutator, $\left[  A,B\right]  :=AB-BA,$ commutes with both
$A$ and $B.$
\end{proof}

As we do not wish to make any particular choice of quantization scheme, in
this paper we will describe all operators as a non-commutative polynomial
functions of $a_{\hbar}$ and $a_{\hbar}^{\dag}.$ This is the topic of the next subsection.

\subsection{Non-commutative Polynomial Expansions\label{sub.2.3}}

\begin{notation}
\label{not.2.12}Let $\mathbb{C}\left\langle \theta,\theta^{\ast}\right\rangle
$ be the space of non-commutative polynomials in the non-commutative
indeterminates. That is to say $\mathbb{C}\left\langle \theta,\theta^{\ast
}\right\rangle $ is the vector space over $\mathbb{C}$ whose basis consists of
words in the two letter alphabet, $\Lambda_{1}=\left\{  \theta,\theta^{\ast
}\right\}  ,$ cf. Eq. (\ref{equ.1.21}). The general element, $P\left(
\theta,\theta^{\ast}\right)  ,$ of $\mathbb{C}\left\langle \theta,\theta
^{\ast}\right\rangle $ may be written as
\begin{equation}
P\left(  \theta,\theta^{\ast}\right)  =\sum_{k=0}^{d}\sum_{\mathbf{b}=\left(
b_{1},\dots,b_{k}\right)  \in\Lambda_{1}^{k}}c_{k}\left(  \mathbf{b}\right)
b_{1}\dots b_{k}, \label{equ.2.20}%
\end{equation}
where $d\in\mathbb{N}_{0}$ and
\[
\left\{  c_{k}\left(  \mathbf{b}\right)  :0\leq k\leq d\text{ and }%
\mathbf{b}=\left(  b_{1},\dots,b_{k}\right)  \mathbf{\in}\Lambda_{1}%
^{k}\right\}  \subset\mathbb{C}.
\]
If $c_{d}:\Lambda_{1}^{d}\rightarrow\mathbb{C}$ is not the zero function, we
say $d=:\deg_{\theta}P$ is the degree of $P.$
\end{notation}

It is sometimes convenient to decompose $P\left(  \theta,\theta^{\ast}\right)
$ in Eq. (\ref{equ.2.20}) as
\begin{equation}
P\left(  \theta,\theta^{\ast}\right)  =\sum_{k=0}^{d}P_{k}\left(
\theta,\theta^{\ast}\right)  \label{equ.2.21}%
\end{equation}
where
\begin{equation}
P_{k}\left(  \theta,\theta^{\ast}\right)  =\sum_{b_{1},\dots,b_{k}\in
\Lambda_{1}}c_{k}\left(  b_{1},\dots,b_{k}\right)  b_{1}\dots b_{k}.
\label{equ.2.22}%
\end{equation}
Polynomials of the form in Eq. (\ref{equ.2.22}) are said to be
\textbf{homogeneous }of degree $k.$ By convention, $P_{0}:=P_{0}\left(
\theta,\theta^{\ast}\right)  $ is just an element of $\mathbb{C}.$ We endow
$\mathbb{C}\left\langle \theta,\theta^{\ast}\right\rangle $ with its $\ell
^{1}$ -- norm, $\left\vert \cdot\right\vert ,$ defined for $P$ as in Eq.
(\ref{equ.2.20}) by
\begin{equation}
\left\vert P\right\vert :=\sum_{k=0}^{d}\left\vert P_{k}\right\vert \text{
where }\left\vert P_{k}\right\vert =\sum_{\mathbf{b}=\left(  b_{1},\dots
,b_{k}\right)  \in\Lambda_{1}^{k}}\left\vert c_{k}\left(  \mathbf{b}\right)
\right\vert . \label{equ.2.23}%
\end{equation}

\begin{definition}
[Monomials]\label{def.2.13}For $\mathbf{b}=\left(  b_{1},\dots,b_{k}\right)
\in\left\{  \theta,\theta^{\ast}\right\}  ^{k}$ let $u_{\mathbf{b}}%
\in\mathbb{C}\left\langle \theta,\theta^{\ast}\right\rangle $ be the
monomial,
\begin{equation}
u_{\mathbf{b}}\left(  \theta,\theta^{\ast}\right)  =b_{1}\dots b_{k}
\label{equ.2.24}%
\end{equation}
with the convention that for $k=0$ we associate the unit element $u_{0}=1.$
\end{definition}

As usual, we make $\mathbb{C}\left\langle \theta,\theta^{\ast}\right\rangle $
into a non-commutative algebra with its natural multiplication determined on
the word basis elements $\cup_{k=0}^{\infty}\left\{  u_{\mathbf{b}%
}:\mathbf{b\in}\left\{  \theta,\theta^{\ast}\right\}  ^{k}\right\}  $ by
concatenation of words, i.e. $u_{\mathbf{b}}u_{\mathbf{d}}=u_{\left(
\mathbf{b,d}\right)  }$ where if $\mathbf{d}=\left(  d_{1},\dots,d_{l}\right)
\in\left\{  \theta,\theta^{\ast}\right\}  ^{l}$
\[
\left(  \mathbf{b,d}\right)  :=\left(  b_{1},\dots,b_{k},d_{1},\dots
,d_{l}\right)  \in\left\{  \theta,\theta^{\ast}\right\}  ^{k+l}.
\]
For example, $\theta\theta\theta^{\ast}\cdot\theta^{\ast}\theta=\theta
\theta\theta^{\ast}\theta^{\ast}\theta.$ We also define a \textbf{natural
involution} on $\mathbb{C}\left\langle \theta,\theta^{\ast}\right\rangle $
determined by $\left(  \theta\right)  ^{\ast}=\theta^{\ast},$ $\left(
\theta^{\ast}\right)  ^{\ast}=\theta,$ $z^{\ast}=\bar{z}$ for $z\in
\mathbb{C},$ and $\left(  \alpha\cdot\beta\right)  ^{\ast}=\beta^{\ast}%
\alpha^{\ast}$ for $\alpha,\beta\in$ $\mathbb{C}\left\langle \theta
,\theta^{\ast}\right\rangle .$ Formally, if $\mathbf{b}=\left(  b_{1}%
,\dots,b_{k}\right)  \in\left\{  \theta,\theta^{\ast}\right\}  ^{k},$ then
\begin{equation}
u_{\mathbf{b}}^{\ast}=b_{k}^{\ast}b_{k-1}^{\ast}\dots b_{1}^{\ast
}=u_{\mathbf{b}^{\ast}}\text{ where }\mathbf{b}^{\ast}:=\left(  b_{k}^{\ast
},b_{k-1}^{\ast},\dots,b_{1}^{\ast}\right)  . \label{equ.2.25}%
\end{equation}
In what follows we will often denote an $P\in\mathbb{C}\left\langle
\theta,\theta^{\ast}\right\rangle $ by $P\left(  \theta,\theta^{\ast}\right)
. $

\begin{definition}
[Symmetric Polynomials]\label{def.2.14}We say $P\in\mathbb{C}\left\langle
\theta,\theta^{\ast}\right\rangle $ is \textbf{symmetric} provided $P=P^{\ast
}.$
\end{definition}

If $\mathcal{A}$ is any unital algebra equipped with an involution,
$\xi\rightarrow\xi^{\dag},$ and $\xi$ is any fixed element of $\mathcal{A},$
then there exists a unique algebra homomorphism
\[
P\left(  \theta,\theta^{\ast}\right)  \in\mathbb{C}\left\langle \theta
,\theta^{\ast}\right\rangle \rightarrow P\left(  \xi,\xi^{\dag}\right)
\in\mathcal{A}%
\]
determined by substituting $\xi$ for $\theta$ and $\xi^{\dag}$ for
$\theta^{\ast}.$ Moreover, the homomorphism preserves involutions, i.e.
$\left[  P\left(  \xi,\xi^{\dag}\right)  \right]  ^{\dag}=P^{\ast}\left(
\xi,\xi^{\dag}\right)  .$ The two special cases of this construction that we
need here are contained in the following two definitions.

\begin{definition}
[Classical Symbols]\label{def.2.15}The symbol (or\textbf{\ classical }residue)
of $P\in\mathbb{C}\left\langle \theta,\theta^{\ast}\right\rangle $ is the
function $P^{\mathrm{cl}}\in\mathbb{C}\left[  z,\bar{z}\right]  $ ($=$ the
commutative polynomials in $z$ and $\bar{z}$ with complex coefficients)
defined by $P^{\mathrm{cl}}\left(  \alpha\right)  :=P\left(  \alpha
,\bar{\alpha}\right)  $ where we view $\mathbb{C}$ as a commutative algebra
with an involution given by complex conjugation.
\end{definition}

\begin{definition}
[Polynomial Operators]\label{def.2.16}If $P\left(  \theta,\theta^{\ast
}\right)  \in\mathbb{C}\left\langle \theta,\theta^{\ast}\right\rangle $ is a
non-commutative polynomial and $\hbar>0,$ then $P\left(  a_{\hbar},a_{\hbar
}^{\dag}\right)  $ is a differential operator on $L^{2}\left(  m\right)  $
whose domain is $\mathcal{S}.$ [Notice that $P\left(  a_{\hbar},a_{\hbar
}^{\dag}\right)  $ preserves $\mathcal{S},$ i.e. $P\left(  a_{\hbar},a_{\hbar
}^{\dag}\right)  \mathcal{S}\subset\mathcal{S}.]$ We further let $P_{\hbar
}:=\overline{P\left(  a_{\hbar},a_{\hbar}^{\dag}\right)  }$ be the closure of
$P\left(  a_{\hbar},a_{\hbar}^{\dag}\right)  .$ Any linear differential
operator of the form $P\left(  a_{\hbar},a_{\hbar}^{\dag}\right)  $ for some
$P\left(  \theta,\theta^{\ast}\right)  \in\mathbb{C}\left\langle \theta
,\theta^{\ast}\right\rangle $ will be called a \textbf{polynomial operator}.
\end{definition}

We introduce the following notation in order to write out $P\left(  a_{\hbar
},a_{\hbar}^{\dag}\right)  $ more explicitly.

\begin{notation}
\label{not.2.17}For any $\hbar>0$ let $\Xi_{\hbar}:\left\{  \theta
,\theta^{\ast}\right\}  \rightarrow\left\{  a_{\hbar},a_{\hbar}^{\dag
}\right\}  $ be define by
\begin{equation}
\Xi_{\hbar}\left(  b\right)  =\left\{
\begin{array}
[c]{ccc}%
a_{\hbar} & \text{if} & b=\theta\\
a_{\hbar}^{\dag} & \text{if} & b=\theta^{\ast}%
\end{array}
\right.  . \label{equ.2.26}%
\end{equation}
In the special case where $\hbar=1$ we will simply denote $\Xi_{1}$ by $\Xi.$
\end{notation}

With this notation if $P\in\mathbb{C}\left\langle \theta,\theta^{\ast
}\right\rangle $ is as in Eq. (\ref{equ.2.20}), then $P\left(  a_{\hbar
},a_{\hbar}^{\dag}\right)  $ may be written as,
\begin{equation}
P\left(  a_{\hbar},a_{\hbar}^{\dag}\right)  =\sum_{k=0}^{d}\sum_{\mathbf{b}%
=\left(  b_{1},\dots,b_{k}\right)  \in\Lambda_{1}^{k}}c_{k}\left(
\mathbf{b}\right)  \Xi_{\hbar}\left(  b_{1}\right)  \dots\Xi_{\hbar}\left(
b_{k}\right)  \label{equ.2.27}%
\end{equation}
or as
\begin{equation}
P\left(  a_{\hbar},a_{\hbar}^{\dag}\right)  =\sum_{k=0}^{d}\sum_{\mathbf{b}%
=\left(  b_{1},\dots,b_{k}\right)  \in\Lambda_{1}^{k}}\hbar^{k/2}c_{k}\left(
\mathbf{b}\right)  u_{\mathbf{b}}\left(  a,a^{\dag}\right)  \label{equ.2.28}%
\end{equation}

\begin{definition}
[Monomial Operators]\label{def.2.18}Any linear differential operator of the
form $u_{\mathbf{b}}\left(  a,a^{\dag}\right)  =\Xi_{1}\left(  b_{1}\right)
\dots\Xi_{1}\left(  b_{k}\right)  $ for some $\mathbf{b}=\left(  b_{1}%
,\dots,b_{k}\right)  \in\left\{  \theta,\theta^{\ast}\right\}  ^{k}$ and
$k\in\mathbb{N}_{0}$ will be called a \textbf{monomial operator.}
\end{definition}

\begin{remark}
\label{rem.2.19}If $H\left(  \theta,\theta^{\ast}\right)  \in\mathbb{C}%
\left\langle \theta,\theta^{\ast}\right\rangle $ is symmetric (i.e.
$H=H^{\ast})$, then;

\begin{enumerate}
\item $H\left(  a_{\hbar},a_{\hbar}^{\dag}\right)  $ is a symmetric operator
on $\mathcal{S}$ (i.e. $\left[  H\left(  a_{\hbar},a_{\hbar}^{\dag}\right)
\right]  ^{\dag}=H\left(  a_{\hbar},a_{\hbar}^{\dag}\right)  )$ for any
$\hbar>0 $ and

\item $H^{\mathrm{cl}}\left(  z\right)  :=H\left(  z,\overline{z}\right)  $ is
a real valued function on $\mathbb{C}.$
\end{enumerate}

Indeed,
\[
\left[  H\left(  a_{\hbar},a_{\hbar}^{\dag}\right)  \right]  ^{\dag}=H^{\ast
}\left(  a_{\hbar},a_{\hbar}^{\dag}\right)  =H\left(  a_{\hbar},a_{\hbar
}^{\dag}\right)
\]
and
\[
\overline{H^{\mathrm{cl}}\left(  \alpha\right)  }:=\overline{H\left(
\alpha,\bar{\alpha}\right)  }=H^{\ast}\left(  \alpha,\bar{\alpha}\right)
=H\left(  \alpha,\bar{\alpha}\right)  =H^{\mathrm{cl}}\left(  \alpha\right)
.
\]
The main point of this paper is to show under Assumption \ref{ass.1} on $H$
that classical Hamiltonian dynamics associated to $H^{\mathrm{cl}}$ determine
the limiting quantum mechanical dynamics determined by $H_{\hbar}%
:=\overline{H\left(  a_{\hbar},a_{\hbar}^{\dag}\right)  }.$
\end{remark}

We have analogous definitions and statements for the non-commutative algebra,
$\mathbb{C}\left\langle \theta_{1},\dots,\theta_{n},\theta_{1}^{\ast}%
,\dots,\theta_{n}^{\ast}\right\rangle ,$ of non-commuting polynomials in $2n$
-- indeterminants, $\Lambda_{n}=\left\{  \theta_{1},\dots,\theta_{n}%
,\theta_{1}^{\ast},\dots,\theta_{n}^{\ast}\right\}  ,$ as in Eq.
(\ref{equ.1.21}).

\begin{notation}
\label{not.2.20}Let $\mathbb{C}\left[  x\right]  \left\langle \theta
,\theta^{\ast}\right\rangle $ and $\mathbb{C}\left[  \alpha,\bar{\alpha
}\right]  \left\langle \theta,\theta^{\ast}\right\rangle $ denote the
non-commutative polynomials in $\left\{  \theta,\theta^{\ast}\right\}  $ with
coefficients in the commutative polynomial rings, $\mathbb{C}\left[  x\right]
$ and $\mathbb{C}\left[  \alpha,\bar{\alpha}\right]  $ respectively. For
$P\in\mathbb{C}\left[  x\right]  \left\langle \theta,\theta^{\ast
}\right\rangle $ or $P\in\mathbb{C}\left[  \alpha,\bar{\alpha}\right]
\left\langle \theta,\theta^{\ast}\right\rangle $ we will write $\deg_{\theta
}P$ to indicate that we are computing the degree relative to $\left\{
\theta,\theta^{\ast}\right\}  $ and not relative to $x$ or $\left\{
\alpha,\bar{\alpha}\right\}  .$
\end{notation}

For any $\alpha\in\mathbb{C}$ and $P\left(  \theta,\theta^{\ast}\right)
\in\mathbb{C}\left\langle \theta,\theta^{\ast}\right\rangle \ $with
$d=\deg_{\theta}P,$ let $\left\{  P_{k}\left(  \alpha:\theta,\theta^{\ast
}\right)  \right\}  _{k=0}^{d}\subset\mathbb{C}\left[  \alpha,\bar{\alpha
}\right]  \left\langle \theta,\theta^{\ast}\right\rangle $ denote the unique
homogeneous polynomials in $\mathbb{C}\left\langle \theta,\theta^{\ast
}\right\rangle $ with coefficients which are polynomials in $\alpha$ and
$\bar{\alpha}$ such that $\deg_{\theta}P_{k}\left(  \alpha:\theta,\theta
^{\ast}\right)  =k$ and
\begin{equation}
P\left(  \theta+\alpha,\theta^{\ast}+\bar{\alpha}\right)  =\sum_{k=0}^{d}%
P_{k}\left(  \alpha:\theta,\theta^{\ast}\right)  . \label{equ.2.29}%
\end{equation}

\begin{example}
\label{exa.2.21}If
\[
P\left(  \theta,\theta^{\ast}\right)  =\theta\theta^{\ast}\theta+\theta^{\ast
}\theta\theta^{\ast}%
\]
then
\begin{align*}
P\left(  \theta+\alpha,\theta^{\ast}+\bar{\alpha}\right)   &  =\left(
\theta+\alpha\right)  \left(  \theta^{\ast}+\bar{\alpha}\right)  \left(
\theta+\alpha\right)  +\left(  \theta^{\ast}+\bar{\alpha}\right)  \left(
\theta+\alpha\right)  \left(  \theta^{\ast}+\bar{\alpha}\right) \\
&  =P_{0}+P_{1}+P_{2}+P_{\geq3}%
\end{align*}
where
\begin{align*}
P_{0}\left(  \alpha,\theta,\theta^{\ast}\right)   &  =\alpha^{2}\bar{\alpha
}+\bar{\alpha}^{2}\alpha=P^{\mathrm{cl}}\left(  \alpha\right) \\
P_{1}\left(  \alpha,\theta,\theta^{\ast}\right)   &  =\left(  2\left\vert
\alpha\right\vert ^{2}+\overline{\alpha}^{2}\right)  \theta+\left(
2\left\vert \alpha\right\vert ^{2}+\alpha^{2}\right)  \theta^{\ast}\\
&  =\frac{\partial P^{\mathrm{cl}}}{\partial\alpha}\left(  \alpha\right)
\theta+\frac{\partial P^{\mathrm{cl}}}{\partial\bar{\alpha}}\left(
\alpha\right)  \theta^{\ast}\\
P_{2}\left(  \alpha,\theta,\theta^{\ast}\right)   &  =\overline{\alpha}%
\theta^{2}+\alpha\theta^{\ast2}+\left(  \alpha+\overline{\alpha}\right)
\theta^{\ast}\theta+\left(  \alpha+\overline{\alpha}\right)  \theta
\theta^{\ast}\\
&  =\frac{1}{2}\left(  \frac{\partial^{2}P^{\mathrm{cl}}}{\partial\alpha^{2}%
}\left(  \alpha\right)  \theta^{2}+\frac{\partial^{2}P^{\mathrm{cl}}}%
{\partial\bar{\alpha}^{2}}\left(  \alpha\right)  \theta^{\ast2}\right)
+\frac{d}{dt}|_{t=0}\frac{d}{ds}|_{s=0}P\left(  s\theta+\alpha,t\theta^{\ast
}+\bar{\alpha}\right) \\
P_{\geq3}\left(  \alpha,\theta,\theta^{\ast}\right)   &  =\theta\theta^{\ast
}\theta+\theta^{\ast}\theta\theta^{\ast}.
\end{align*}
This example is generalized in the following theorem.
\end{example}

\begin{theorem}
\label{the.2.22}Let $P\left(  \theta,\theta^{\ast}\right)  \in\mathbb{C}%
\left\langle \theta,\theta^{\ast}\right\rangle $ and $\alpha\in\mathbb{C},$
then
\begin{align}
P_{0}\left(  \alpha:\theta,\theta^{\ast}\right)   &  =P^{\mathrm{cl}}\left(
\alpha\right) \nonumber\\
P_{1}\left(  \alpha:\theta,\theta^{\ast}\right)   &  =\left[  \frac{\partial
P^{\mathrm{cl}}}{\partial\alpha}\left(  \alpha\right)  \theta+\frac{\partial
P^{\mathrm{cl}}}{\partial\bar{\alpha}}\left(  \alpha\right)  \theta^{\ast
}\right]  \text{ and}\nonumber\\
P_{2}\left(  \alpha:\theta,\theta^{\ast}\right)   &  =\frac{1}{2}\left(
\frac{\partial^{2}P^{\mathrm{cl}}}{\partial\alpha^{2}}\left(  \alpha\right)
\theta^{2}+\frac{\partial^{2}P^{\mathrm{cl}}}{\partial\bar{\alpha}^{2}}\left(
\alpha\right)  \theta^{\ast2}\right) \nonumber\\
&  +\frac{d}{dt}|_{t=0}\frac{d}{ds}|_{s=0}P\left(  s\theta+\alpha
,t\theta^{\ast}+\bar{\alpha}\right)  . \label{equ.2.30}%
\end{align}
where
\[
\frac{d}{dt}|_{t=0}\frac{d}{ds}|_{s=0}P\left(  s\theta+\alpha,t\theta^{\ast
}+\bar{\alpha}\right)  =\frac{\partial^{2}P^{\mathrm{cl}}}{\partial
\alpha\partial\bar{\alpha}}\left(  \alpha\right)  \theta^{\ast}\theta\text{
}\operatorname{mod}~\theta^{\ast}\theta=\theta\theta^{\ast}%
\]
for all $\alpha\in\mathbb{C}.$ So we have
\begin{align}
P  &  \left(  \theta+\alpha,\theta^{\ast}+\bar{\alpha}\right) \nonumber\\
&  =P^{\mathrm{cl}}\left(  \alpha\right)  +\left[  \frac{\partial
P^{\mathrm{cl}}}{\partial\alpha}\left(  \alpha\right)  \theta+\left(
\frac{\partial}{\partial\bar{\alpha}}P^{\mathrm{cl}}\right)  \left(
\alpha\right)  \theta^{\ast}\right]  +P_{2}\left(  \alpha:\theta,\theta^{\ast
}\right)  +P_{\geq3}\left(  \alpha:\theta,\theta^{\ast}\right)
\label{equ.2.31}%
\end{align}
where the remainder term, $P_{\geq3}$ is a sum of homogeneous terms of degree
$3$ or more. Moreover if $P=P^{\ast},$ then $P_{2}^{\ast}=P_{2}$ and
$P_{\geq3}^{\ast}=P_{\geq3}.$
\end{theorem}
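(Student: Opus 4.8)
The plan is to derive \eqref{equ.2.30} and \eqref{equ.2.31} by a purely algebraic computation inside $\mathbb{C}\left[\alpha,\bar\alpha\right]\left\langle\theta,\theta^{\ast}\right\rangle$, treating the substitution $\theta\mapsto\theta+\alpha$, $\theta^{\ast}\mapsto\theta^{\ast}+\bar\alpha$ as an algebra homomorphism and then collecting terms by their $\theta$-degree. Because the map $P\mapsto P\left(\theta+\alpha,\theta^{\ast}+\bar\alpha\right)$ is linear in $P$, it suffices to prove the formulas on monomials $u_{\mathbf{b}}\left(\theta,\theta^{\ast}\right)=b_1\cdots b_k$ and then sum. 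So first I would fix a word $\mathbf{b}=(b_1,\dots,b_k)\in\left\{\theta,\theta^{\ast}\right\}^k$, write $b_j+c_j$ where $c_j\in\left\{\alpha,\bar\alpha\right\}$ is the scalar corresponding to $b_j$, and expand the product $\prod_{j=1}^k(b_j+c_j)$. The degree-$0$ part is $\prod_j c_j=u_{\mathbf b}^{\mathrm{cl}}(\alpha)$; summing over $\mathbf b$ with the coefficients $c_k(\mathbf b)$ gives $P_0(\alpha:\theta,\theta^{\ast})=P^{\mathrm{cl}}(\alpha)$ by Definition \ref{def.2.15}. The degree-$1$ part of the monomial expansion is $\sum_{i=1}^k\left(\prod_{j<i}c_j\right)b_i\left(\prod_{j>i}c_j\right)$, which because the $c_j$ are scalars equals $\sum_{i=1}^k\left(\prod_{j\neq i}c_j\right)b_i$; recognizing $\sum_{i:\,b_i=\theta}\prod_{j\neq i}c_j$ as $\frac{\partial}{\partial\alpha}u_{\mathbf b}^{\mathrm{cl}}(\alpha)$ (the commutative derivative just deletes one $\alpha$-factor and sums) and similarly for $\theta^{\ast}$ with $\partial/\partial\bar\alpha$, summing over $\mathbf b$ yields the stated $P_1$.

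For the degree-$2$ part the same bookkeeping gives $\sum_{i<i'}\left(\prod_{j\neq i,i'}c_j\right)\,b_i\,b_{i'}$ where I pull the scalars out but must retain the \emph{order} of $b_i$ and $b_{i'}$. I would split this sum according to the pair of letters $(b_i,b_{i'})$: the $(\theta,\theta)$ terms collect (over $\mathbf b$) to $\frac12\frac{\partial^2 P^{\mathrm{cl}}}{\partial\alpha^2}(\alpha)\,\theta^2$ (the $\frac12$ and the $i<i'$ restriction matching the second derivative of a commutative polynomial), the $(\theta^{\ast},\theta^{\ast})$ terms to $\frac12\frac{\partial^2 P^{\mathrm{cl}}}{\partial\bar\alpha^2}(\alpha)\,\theta^{\ast2}$, and the mixed terms $(\theta,\theta^{\ast})$ and $(\theta^{\ast},\theta)$ to the noncommutative remainder, which is exactly $\frac{d}{dt}|_{t=0}\frac{d}{ds}|_{s=0}P\left(s\theta+\alpha,t\theta^{\ast}+\bar\alpha\right)$ since that double derivative picks out precisely the terms that are linear in $\theta$ and linear in $\theta^{\ast}$. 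To verify the displayed ``$\operatorname{mod}\ \theta^{\ast}\theta=\theta\theta^{\ast}$'' identity I would observe that modulo the commutation relation the ordering of the mixed pair is irrelevant, so the mixed part becomes $\left(\sum_i \prod_{j\neq i}c_j\ \text{with one }\alpha\text{ and one }\bar\alpha\text{ removed}\right)\theta^{\ast}\theta=\frac{\partial^2 P^{\mathrm{cl}}}{\partial\alpha\partial\bar\alpha}(\alpha)\,\theta^{\ast}\theta$, again matching the commutative mixed partial. Equation \eqref{equ.2.31} is then just the reassembly of $P_0+P_1+P_2+P_{\geq3}$.

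Finally, for the symmetry claim, I would use the fact established just before Definition \ref{def.2.15} that the substitution homomorphism preserves the involution: if $P=P^{\ast}$ then $P\left(\theta+\alpha,\theta^{\ast}+\bar\alpha\right)$ is invariant under $\ast$ (since $(\theta+\alpha)^{\ast}=\theta^{\ast}+\bar\alpha$). The involution maps homogeneous degree-$k$ parts to homogeneous degree-$k$ parts (it reverses words but preserves length), so each $P_k(\alpha:\theta,\theta^{\ast})$ is individually $\ast$-invariant; in particular $P_2^{\ast}=P_2$ and $P_{\geq3}^{\ast}=P_{\geq3}$.

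I expect the only real subtlety to be the degree-$2$ mixed term: keeping careful track of which ordered pairs $(b_i,b_{i'})$ arise, confirming the combinatorial factor $\frac12$ on the pure second derivatives, and identifying the double-derivative expression with the mixed part rather than just asserting it. Everything else is a routine, if slightly tedious, matching of ``delete a scalar factor and sum'' against formal differentiation of the commutative polynomial $P^{\mathrm{cl}}$.
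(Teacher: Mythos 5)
Your proposal is correct and amounts to essentially the same computation as the paper's proof: the paper extracts $P_{k}\left(\alpha:\theta,\theta^{\ast}\right)$ as $\frac{1}{k!}\left(\frac{d}{dt}\right)^{k}|_{t=0}P\left(t\theta+\alpha,t\theta^{\ast}+\bar{\alpha}\right)$ (a formal Taylor expansion in an auxiliary parameter) and then splits the second derivative into the pure $\theta^{2}$, $\theta^{\ast2}$ pieces and the mixed $s,t$-derivative exactly as you do by hand. Your monomial-by-monomial bookkeeping is the unpackaged version of that differentiation, and your symmetry argument (the substitution homomorphism commutes with the involution, which preserves the $\theta$-degree grading) matches the paper's closing observation that each $P_{k}\left(\alpha:\theta,\theta^{\ast}\right)$ inherits the symmetry of $P.$
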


\begin{proof}
If $p=\deg_{\theta}P,$ then
\[
P\left(  t\theta+\alpha,t\theta^{\ast}+\bar{\alpha}\right)  =\sum_{k=0}%
^{p}t^{k}P_{k}\left(  \alpha:\theta,\theta^{\ast}\right)  \text{ }\forall
~t\in\mathbb{R},
\]
and it follows (by Taylor's theorem) that
\begin{equation}
P_{k}\left(  \alpha:\theta,\theta^{\ast}\right)  =\frac{1}{k!}\left(  \frac
{d}{dt}\right)  ^{k}|_{t=0}P\left(  t\theta+\alpha,t\theta^{\ast}+\bar{\alpha
}\right)  . \label{equ.2.32}%
\end{equation}
From Eq. (\ref{equ.2.32}),
\begin{align*}
P_{0}\left(  \alpha:\theta,\theta^{\ast}\right)   &  =P\left(  \alpha
,\bar{\alpha}\right)  =P^{\mathrm{cl}}\left(  \alpha\right)  \text{ and}\\
P_{1}\left(  \alpha:\theta,\theta^{\ast}\right)   &  =\frac{d}{dt}%
|_{t=0}P\left(  t\theta+\alpha,t\theta^{\ast}+\bar{\alpha}\right) \\
&  =\frac{d}{dt}|_{t=0}P\left(  t\theta+\alpha,\bar{\alpha}\right)  +\frac
{d}{dt}|_{t=0}P\left(  \alpha,t\theta^{\ast}+\bar{\alpha}\right) \\
&  =\frac{\partial P^{\mathrm{cl}}}{\partial\alpha}\left(  \alpha\right)
\theta+\frac{\partial P^{\mathrm{cl}}}{\partial\bar{\alpha}}\left(
\alpha\right)  \theta^{\ast}.
\end{align*}
Similarly from Eq. (\ref{equ.2.32}),
\begin{align*}
P_{2}\left(  \alpha:\theta,\theta^{\ast}\right)  =  &  \frac{1}{2}\left(
\frac{d}{dt}\right)  ^{2}|_{t=0}P\left(  t\theta+\alpha,t\theta^{\ast}%
+\bar{\alpha}\right) \\
=  &  \frac{1}{2}\left(  \frac{d}{dt}\right)  ^{2}|_{t=0}\left[  P\left(
t\theta+\alpha,\bar{\alpha}\right)  +P\left(  \alpha,t\theta^{\ast}%
+\bar{\alpha}\right)  \right] \\
&  +\frac{d}{dt}|_{t=0}\frac{d}{ds}|_{s=0}P\left(  s\theta+\alpha
,t\theta^{\ast}+\bar{\alpha}\right) \\
=  &  \frac{1}{2}\left(  \frac{\partial^{2}P^{\mathrm{cl}}}{\partial\alpha
^{2}}\left(  \alpha\right)  \theta^{2}+\frac{\partial^{2}P^{\mathrm{cl}}%
}{\partial\bar{\alpha}^{2}}\left(  \alpha\right)  \theta^{\ast2}\right) \\
&  +\frac{d}{dt}|_{t=0}\frac{d}{ds}|_{s=0}P\left(  s\theta+\alpha
,t\theta^{\ast}+\bar{\alpha}\right)  .
\end{align*}

If $P\left(  \theta,\theta^{\ast}\right)  \in\mathbb{C}\left\langle
\theta,\theta^{\ast}\right\rangle $ is symmetric, then $P\left(
t\theta+\alpha,t\theta^{\ast}+\bar{\alpha}\right)  \in\mathbb{C}\left\langle
\theta,\theta^{\ast}\right\rangle $ is symmetric and hence from Eq.
(\ref{equ.2.32}) it follows that $P_{k}\left(  \alpha:\theta,\theta^{\ast
}\right)  \in\mathbb{C}\left\langle \theta,\theta^{\ast}\right\rangle $ is
still symmetric and therefore so is the remainder term,
\[
P_{\geq3}\left(  \alpha:\theta,\theta^{\ast}\right)  =\sum_{k=3}^{p}%
P_{k}\left(  \alpha:\theta,\theta^{\ast}\right)  .
\]

\end{proof}

\section{Polynomial Operators\label{sec.3}}

\subsection{Algebra of Polynomial Operators\label{sub.3.1}}

\begin{notation}
\label{not.3.1}For $\mathbf{b}=\left(  b_{1},\dots,b_{k}\right)  \in\left\{
\theta,\theta^{\ast}\right\}  ^{k},$ $p\left(  \mathbf{b}\right)  ,$ $q\left(
\mathbf{b}\right)  ,$ and $\ell\left(  \mathbf{b}\right)  $ be the
$\mathbb{Z}$ -- valued functions defined by
\begin{align}
p\left(  \mathbf{b}\right)   &  :=\#\left\{  i:b_{i}=\theta\right\}  ,\text{
}q\left(  \mathbf{b}\right)  :=\#\left\{  i:b_{i}=\theta^{\ast}\right\}
,\text{ and}\label{equ.3.1}\\
\ell\left(  \mathbf{b}\right)   &  :=\sum_{i=1}^{k}\left(  1_{b_{i}%
=\theta^{\ast}}-1_{b_{i}=\theta}\right)  =q\left(  \mathbf{b}\right)
-p\left(  \mathbf{b}\right)  . \label{equ.3.2}%
\end{align}
Thus $p\left(  \mathbf{b}\right)  $ $\left(  q\left(  \mathbf{b}\right)
\right)  $ is the number of $\theta$'s $\left(  \theta^{\ast}\text{'s}\right)
$ in $\mathbf{b}$ and $\ell\left(  \mathbf{b}\right)  $ counts the excess
number of $\theta^{\ast}$'s over $\theta$'s in $\mathbf{b.}$
\end{notation}

\begin{lemma}
[Normal Ordering]\label{lem.3.2}If $P\left(  \theta,\theta^{\ast}\right)  \in$
$\mathbb{C}\left\langle \theta,\theta^{\ast}\right\rangle $ with
$d=\deg_{\theta}P,$ then there exists $R\left(  \hbar:\theta,\theta^{\ast
}\right)  \in\mathbb{C}\left[  \hbar\right]  \left\langle \theta,\theta^{\ast
}\right\rangle $ (a non-commutative polynomial in $\left\{  \theta
,\theta^{\ast}\right\}  $ with polynomial coefficients in $\hbar)$ such that
$\deg_{\theta}R\left(  \hbar:\theta,\theta^{\ast}\right)  \leq d-2$ and
\[
P\left(  a_{\hbar},a_{\hbar}^{\dag}\right)  =\sum_{0\leq k,l;~k+l\leq d}%
\frac{1}{k!\cdot l!}\left(  \frac{\partial^{k+l}P^{\mathrm{cl}}}%
{\partial\overline{\alpha}^{k}\partial\alpha^{l}}\right)  \left(  0\right)
a_{\hbar}^{\dag k}a_{\hbar}^{l}+\hbar R\left(  \hbar:a_{\hbar},a_{\hbar}%
^{\dag}\right)  ~\forall~\hbar>0.
\]

\end{lemma}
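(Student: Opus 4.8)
The statement is the assertion that any polynomial operator $P(a_\hbar, a_\hbar^\dag)$ can be put into normal-ordered form, with the normal-ordered coefficients given explicitly by derivatives of the symbol $P^{\mathrm{cl}}$ evaluated at $0$, and with the non-leading corrections collected into an $\hbar$-multiple of another polynomial operator of degree $\le d-2$. The plan is to proceed by induction on the degree $d = \deg_\theta P$ using the CCR $[a_\hbar, a_\hbar^\dag] = \hbar I$ from Eq. (\ref{equ.1.7}) to move all creation operators to the left of all annihilation operators, one transposition at a time. Each time we commute an $a_\hbar$ past an $a_\hbar^\dag$ we pick up a term with the two operators in normal order (contributing to the leading sum) plus an $\hbar$ times a word that is two letters shorter (contributing to the remainder $R$). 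So the skeleton is: (i) reduce to the case of a single monomial $u_{\mathbf b}(a_\hbar, a_\hbar^\dag)$ by linearity; (ii) show by induction on the word length $|\mathbf b| = k$ that $u_{\mathbf b}(a_\hbar, a_\hbar^\dag) = a_\hbar^{\dag q(\mathbf b)} a_\hbar^{p(\mathbf b)} + \hbar\, (\text{lower-degree polynomial operator})$, using the elementary commutation identity $a_\hbar^l a_\hbar^\dag = a_\hbar^\dag a_\hbar^l + \hbar\, l\, a_\hbar^{l-1}$; (iii) sum over monomials and identify the coefficient of $a_\hbar^{\dag k} a_\hbar^l$ in the fully normal-ordered piece.

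The one genuinely substantive point — and the step I expect to be the main obstacle — is the identification of the leading coefficient as $\frac{1}{k!\, l!}\bigl(\partial_{\bar\alpha}^k \partial_\alpha^l P^{\mathrm{cl}}\bigr)(0)$. The cleanest way I would handle this is to compare two things that must agree. On one hand, the normal-ordering procedure is a linear map $P \mapsto \tilde P$ on $\mathbb{C}\langle\theta,\theta^\ast\rangle$ which, when $\hbar = 1$, sends $P(a,a^\dag)$ to an operator expressed in normal-ordered monomials $a^{\dag k} a^l$, and the remainder is $\hbar$ times something, so the normal-ordered coefficients themselves do not depend on $\hbar$ (they come from the combinatorics of the transpositions, which at $\hbar=1$ are $\pm$ integers). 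On the other hand, observe that the map sending a word $a^{\dag k} a^l$ (normal-ordered) to the commutative monomial $\bar z^k z^l$ can be inverted: if $\tilde P = \sum_{k,l} d_{k,l}\, a^{\dag k} a^l$ is fully normal-ordered, then $d_{k,l} = \frac{1}{k!\, l!}\bigl(\partial_{\bar z}^k \partial_z^l\bigr)\bigl(\sum_{k',l'} d_{k',l'}\bar z^{k'} z^{l'}\bigr)(0)$ simply by Taylor's formula for a polynomial in the two independent variables $z, \bar z$. So it suffices to check that $\sum_{k,l} d_{k,l}\bar z^k z^l = P^{\mathrm{cl}}(z) = P(z,\bar z)$, i.e. that the symbol is unchanged under normal ordering. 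This is true because the symbol map $\theta \mapsto z$, $\theta^\ast \mapsto \bar z$ is an algebra homomorphism into the \emph{commutative} algebra $\mathbb{C}[z,\bar z]$, so it kills the commutator $[\theta,\theta^\ast]$, hence $[a,a^\dag]$-corrections (which are the only thing normal ordering introduces at $\hbar=1$ beyond relabeling words) evaluate to zero on the symbol side.

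Putting this together: after normal ordering at general $\hbar$, write $P(a_\hbar, a_\hbar^\dag) = \sum_{k,l} d_{k,l}^{(\hbar)} a_\hbar^{\dag k} a_\hbar^l + \hbar R(\hbar: a_\hbar, a_\hbar^\dag)$ where the $d^{(\hbar)}_{k,l}$ for $k + l$ equal to the top degrees are $\hbar$-independent (only the strictly lower-degree coefficients, which can be absorbed into $\hbar R$, carry extra $\hbar$ powers); a small bookkeeping argument — essentially tracking that each application of $a^l a^\dag = a^\dag a^l + \hbar l a^{l-1}$ strictly decreases length by $2$ in the correction term — lets us absorb everything except the "fully-straightened-with-no-contraction" part into $\hbar R$, and it shows $\deg_\theta R \le d - 2$ since the first correction already has two fewer letters. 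The fully-straightened part has coefficients equal to the $\hbar=1$ coefficients $d_{k,l}$, which by the Taylor argument above equal $\frac{1}{k!\, l!}(\partial_{\bar\alpha}^k\partial_\alpha^l P^{\mathrm{cl}})(0)$. The routine calculations I would not grind through are the explicit recursion for commuting a single $a_\hbar$ leftward through a word and the verification that the accumulated $\hbar$-corrections have polynomial (not merely formal) coefficients in $\hbar$ — both follow immediately from finitely many applications of the CCR.
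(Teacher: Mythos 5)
Your proposal is correct and follows essentially the same route as the paper: reduce to a single word by linearity, straighten it with the CCR so that each contraction contributes a factor of $\hbar$ and shortens the word by two letters (hence $\deg_{\theta}R\le d-2$), and identify the leading coefficients by noting that the symbol of a word depends only on its letter counts, so Taylor's formula in $\alpha,\bar{\alpha}$ recovers $\sum_{\mathbf{b}}c\left(  \mathbf{b}\right)  $ over words with $k$ $\theta^{\ast}$'s and $l$ $\theta$'s. One caution on phrasing: the symbol is \emph{not} unchanged under genuine operator normal ordering (e.g. $aa^{\dag}=a^{\dag}a+1$ has normal-ordered symbol $\left\vert z\right\vert ^{2}+1$), so the Taylor identification must be applied, as you in fact do in your final paragraph, only to the contraction-free part of the straightened word, the contraction terms having already been absorbed into $\hbar R.$
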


\begin{proof}
By linearity it suffices to consider the case here $P\left(  \theta
,\theta^{\ast}\right)  $ is a homogeneous polynomial of degree $d$ which may
be written as
\begin{equation}
P\left(  \theta,\theta^{\ast}\right)  =\sum_{\mathbf{b}\in\left\{
\theta,\theta^{\ast}\right\}  ^{d}}c\left(  \mathbf{b}\right)  u_{\mathbf{b}%
}\left(  \theta,\theta^{\ast}\right)  =\sum_{p=0}^{d}\sum_{\mathbf{b}%
\in\left\{  \theta,\theta^{\ast}\right\}  ^{d}}1_{p\left(  \mathbf{b}\right)
=p}c\left(  \mathbf{b}\right)  u_{\mathbf{b}}\left(  \theta,\theta^{\ast
}\right)  . \label{equ.3.3}%
\end{equation}
Since
\[
P\left(  \alpha,\bar{\alpha}\right)  =\sum_{p=0}^{d}\left[  \sum
_{\mathbf{b}\in\left\{  \theta,\theta^{\ast}\right\}  ^{d}}1_{p\left(
\mathbf{b}\right)  =p}c\left(  \mathbf{b}\right)  \right]  \alpha^{p}%
\bar{\alpha}^{d-p}%
\]
it follows that
\[
\frac{1}{\left(  d-p\right)  !\cdot p!}\left(  \frac{\partial^{d}P^{cp}%
}{\partial\overline{\alpha}^{d-p}\partial\alpha^{p}}\right)  \left(  0\right)
=\sum_{\mathbf{b}\in\left\{  \theta,\theta^{\ast}\right\}  ^{d}}1_{p\left(
\mathbf{b}\right)  =p}c\left(  \mathbf{b}\right)  .
\]
On the other hand, if $\mathbf{b}\in\left\{  \theta,\theta^{\ast}\right\}
^{d} $ and $p:=p\left(  \mathbf{b}\right)  ,$ then making use of the CCRs of
Eq. (\ref{equ.1.7}) it is easy to show there exists $R_{\mathbf{b}}\left(
\hbar,\theta,\theta^{\ast}\right)  \in\mathbb{C}\left[  \hbar\right]
\left\langle \theta,\theta^{\ast}\right\rangle $ such that $\deg_{\theta
}R_{\mathbf{b}}\left(  \hbar,\theta,\theta^{\ast}\right)  \leq d-2$ such that
\begin{equation}
u_{\mathbf{b}}\left(  a_{\hbar},a_{\hbar}^{\dag}\right)  =a_{\hbar}%
^{\dag\left(  d-p\right)  }a_{\hbar}^{p}+\hbar R_{\mathbf{b}}\left(
\hbar,a_{\hbar},a_{\hbar}^{\dag}\right)  . \label{equ.3.4}%
\end{equation}
Replacing $\theta$ by $a_{\hbar}$ and $\theta^{\ast}$ by $a_{\hbar}^{\dag}$ in
Eq. (\ref{equ.3.3}) and using Eq. (\ref{equ.3.4}) we find,
\begin{align*}
P\left(  a_{\hbar},a_{\hbar}^{\dag}\right)   &  =\sum_{p=0}^{d}\sum
_{\mathbf{b}\in\left\{  \theta,\theta^{\ast}\right\}  ^{d}}1_{p\left(
\mathbf{b}\right)  =p}c\left(  \mathbf{b}\right)  u_{\mathbf{b}}\left(
a_{\hbar},a_{\hbar}^{\dag}\right) \\
&  =\sum_{p=0}^{d}\left[  \sum_{\mathbf{b}\in\left\{  \theta,\theta^{\ast
}\right\}  ^{d}}1_{p\left(  \mathbf{b}\right)  =p}c\left(  \mathbf{b}\right)
\right]  a_{\hbar}^{\dag\left(  d-p\right)  }a_{\hbar}^{p}+\hbar
\sum_{\mathbf{b}\in\left\{  \theta,\theta^{\ast}\right\}  ^{d}}c\left(
\mathbf{b}\right)  R_{\mathbf{b}}\left(  \hbar,a_{\hbar},a_{\hbar}^{\dag
}\right) \\
&  =\sum_{p=0}^{d}\frac{1}{\left(  d-p\right)  !\cdot p!}\left(
\frac{\partial^{d}P^{\mathrm{cl}}}{\partial\overline{\alpha}^{d-p}%
\partial\alpha^{p}}\right)  \left(  0\right)  a_{\hbar}^{\dag\left(
d-p\right)  }a_{\hbar}^{p}+\hbar R\left(  \hbar,a_{\hbar},a_{\hbar}^{\dag
}\right)
\end{align*}
where
\[
R\left(  \hbar,\theta,\theta^{\ast}\right)  =\sum_{\mathbf{b}\in\left\{
\theta,\theta^{\ast}\right\}  ^{d}}c\left(  \mathbf{b}\right)  R_{\mathbf{b}%
}\left(  \hbar,\theta,\theta^{\ast}\right)  .
\]

\end{proof}

\begin{corollary}
\label{cor.3.3} If $P\left(  \theta,\theta^{\ast}\right)  $ and $Q\left(
\theta,\theta^{\ast}\right)  $ are non-commutative polynomials such that
$P^{\mathrm{cl}}=Q^{\mathrm{cl}},$ then there exists $R\left(  \hbar
:\theta,\theta^{\ast}\right)  \in\mathbb{C}\left[  \hbar\right]  \left\langle
\theta,\theta^{\ast}\right\rangle $ with $\deg_{\theta}R\left(  \hbar
:\theta,\theta^{\ast}\right)  \leq\deg_{\theta}\left(  P-Q\right)  \left(
\theta,\theta^{\ast}\right)  -2$ such that
\[
P\left(  a_{\hbar},a_{\hbar}^{\dag}\right)  =Q\left(  a_{\hbar},a_{\hbar
}^{\dag}\right)  +\hbar R\left(  \hbar,a_{\hbar},a_{\hbar}^{\dag}\right)  .
\]

\end{corollary}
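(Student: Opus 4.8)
The plan is to reduce everything to the Normal Ordering Lemma (Lemma~\ref{lem.3.2}) applied to the single polynomial $D:=P-Q$. First I would note that the substitution map $P\left(\theta,\theta^{\ast}\right)\mapsto P\left(a_{\hbar},a_{\hbar}^{\dag}\right)$ is linear (indeed an algebra homomorphism, as recorded in the discussion preceding Definition~\ref{def.2.15}), so that $P\left(a_{\hbar},a_{\hbar}^{\dag}\right)-Q\left(a_{\hbar},a_{\hbar}^{\dag}\right)=D\left(a_{\hbar},a_{\hbar}^{\dag}\right)$ as operators on $\mathcal{S}$, and likewise $D^{\mathrm{cl}}=P^{\mathrm{cl}}-Q^{\mathrm{cl}}\equiv0$ by hypothesis.

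Next I would apply Lemma~\ref{lem.3.2} to $D$: writing $d:=\deg_{\theta}D$, the lemma furnishes $R\left(\hbar:\theta,\theta^{\ast}\right)\in\mathbb{C}\left[\hbar\right]\left\langle\theta,\theta^{\ast}\right\rangle$ with $\deg_{\theta}R\left(\hbar:\theta,\theta^{\ast}\right)\leq d-2$ such that, for all $\hbar>0$,
\[
D\left(a_{\hbar},a_{\hbar}^{\dag}\right)=\sum_{0\leq k,l;~k+l\leq d}\frac{1}{k!\cdot l!}\left(\frac{\partial^{k+l}D^{\mathrm{cl}}}{\partial\overline{\alpha}^{k}\partial\alpha^{l}}\right)\left(0\right)a_{\hbar}^{\dag k}a_{\hbar}^{l}+\hbar R\left(\hbar:a_{\hbar},a_{\hbar}^{\dag}\right).
\]
Since $D^{\mathrm{cl}}\equiv0$, every coefficient $\left(\partial^{k+l}D^{\mathrm{cl}}/\partial\overline{\alpha}^{k}\partial\alpha^{l}\right)\left(0\right)$ is zero, so the whole normal-ordered sum disappears and we are left with $D\left(a_{\hbar},a_{\hbar}^{\dag}\right)=\hbar R\left(\hbar:a_{\hbar},a_{\hbar}^{\dag}\right)$, where $\deg_{\theta}R\leq d-2=\deg_{\theta}\left(P-Q\right)-2$ as required.

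Finally, adding $Q\left(a_{\hbar},a_{\hbar}^{\dag}\right)$ back in yields $P\left(a_{\hbar},a_{\hbar}^{\dag}\right)=Q\left(a_{\hbar},a_{\hbar}^{\dag}\right)+\hbar R\left(\hbar,a_{\hbar},a_{\hbar}^{\dag}\right)$, which is precisely the asserted identity. There is essentially no obstacle in this argument; the one point that warrants care is that the degree bound must be phrased in terms of $\deg_{\theta}\left(P-Q\right)$ rather than $\max\left(\deg_{\theta}P,\deg_{\theta}Q\right)$ — the latter can be strictly larger when the top-degree parts of $P$ and $Q$ cancel — but because Lemma~\ref{lem.3.2} is invoked directly on $D=P-Q$, the sharper bound $\deg_{\theta}R\leq\deg_{\theta}\left(P-Q\right)-2$ comes out automatically, and the degenerate case $P=Q$ is handled by taking $R=0$.
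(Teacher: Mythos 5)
Your proof is correct and follows exactly the paper's route: the paper's own proof is the one-line instruction ``apply Lemma \ref{lem.3.2} to $P-Q$,'' and you have simply spelled out why the normal-ordered sum vanishes (because $\left(P-Q\right)^{\mathrm{cl}}\equiv0$ kills every coefficient), which is the intended argument. The remark about stating the degree bound in terms of $\deg_{\theta}\left(P-Q\right)$ rather than $\max\left(\deg_{\theta}P,\deg_{\theta}Q\right)$ is a correct and worthwhile observation, but it does not change the substance.
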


\begin{proof}
Apply Lemma \ref{lem.3.2} to the non-commutative polynomial, $P\left(
\theta,\theta^{\ast}\right)  -Q\left(  \theta,\theta^{\ast}\right)  .$
\end{proof}

\begin{proposition}
\label{pro.3.5}For all $H\in\mathbb{C}\left\langle \theta,\theta^{\ast
}\right\rangle ,$ there exists a polynomial, $p_{H}\in\mathbb{C}\left[
z,\bar{z}\right]  $ such that
\begin{align*}
H_{2}  &  \left(  \alpha:a,a^{\dag}\right) \\
&  =\frac{1}{2}\left(  \frac{\partial^{2}H^{\mathrm{cl}}}{\partial\alpha^{2}%
}\right)  \left(  \alpha\right)  a^{2}+\frac{1}{2}\left(  \frac{\partial
^{2}H^{\mathrm{cl}}}{\partial\overline{\alpha}^{2}}\right)  \left(
\alpha\right)  a^{\dag2}+\left(  \frac{\partial^{2}H^{\mathrm{cl}}}%
{\partial\alpha\partial\overline{\alpha}}\right)  \left(  \alpha\right)
a^{\dag}a+p_{H}\left(  \alpha,\bar{\alpha}\right)  I
\end{align*}
for all $\alpha\in\mathbb{C}$ where $H_{2}\left(  \alpha:\theta,\theta^{\ast
}\right)  $ is defined in Eq. (\ref{equ.2.29}).
\end{proposition}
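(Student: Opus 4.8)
The plan is to read off $H_{2}\left(\alpha:\theta,\theta^{\ast}\right)$ from Theorem \ref{the.2.22} and then to normal order its ``cross'' piece by a single application of the canonical commutation relation. By Eq. (\ref{equ.2.30}),
\[
H_{2}\left(\alpha:\theta,\theta^{\ast}\right)=\frac{1}{2}\left(\frac{\partial^{2}H^{\mathrm{cl}}}{\partial\alpha^{2}}\left(\alpha\right)\theta^{2}+\frac{\partial^{2}H^{\mathrm{cl}}}{\partial\bar{\alpha}^{2}}\left(\alpha\right)\theta^{\ast2}\right)+C\left(\alpha:\theta,\theta^{\ast}\right),
\]
where $C\left(\alpha:\theta,\theta^{\ast}\right):=\frac{d}{dt}|_{t=0}\frac{d}{ds}|_{s=0}H\left(s\theta+\alpha,t\theta^{\ast}+\bar{\alpha}\right)$ is, by inspection, exactly the part of $H\left(\theta+\alpha,\theta^{\ast}+\bar{\alpha}\right)$ built from the monomials containing precisely one $\theta$ and one $\theta^{\ast}$ (each slot not selected in $u_{\mathbf b}$ gets replaced by $\alpha$ or $\bar{\alpha}$). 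Since the only words in one $\theta$ and one $\theta^{\ast}$ are $\theta\theta^{\ast}$ and $\theta^{\ast}\theta$, I may write $C\left(\alpha:\theta,\theta^{\ast}\right)=A\left(\alpha,\bar{\alpha}\right)\theta\theta^{\ast}+B\left(\alpha,\bar{\alpha}\right)\theta^{\ast}\theta$ for uniquely determined $A,B\in\mathbb{C}\left[\alpha,\bar{\alpha}\right]$, and the final ``$\operatorname{mod}~\theta^{\ast}\theta=\theta\theta^{\ast}$'' clause of Theorem \ref{the.2.22} says exactly that $A+B=\partial^{2}H^{\mathrm{cl}}/\partial\alpha\partial\bar{\alpha}$.

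Next I substitute $a$ for $\theta$ and $a^{\dag}$ for $\theta^{\ast}$. The $\theta^{2}$ and $\theta^{\ast2}$ terms become $a^{2}$ and $a^{\dag2}$ untouched, so only $C\left(\alpha:a,a^{\dag}\right)=A\cdot aa^{\dag}+B\cdot a^{\dag}a$ needs rearranging. Using the CCR $\left[a,a^{\dag}\right]=I$ of Eq. (\ref{equ.1.7}) (the $\hbar=1$ case) in the form $aa^{\dag}=a^{\dag}a+I$ gives
\[
C\left(\alpha:a,a^{\dag}\right)=\left(A+B\right)a^{\dag}a+A\cdot I=\left(\frac{\partial^{2}H^{\mathrm{cl}}}{\partial\alpha\partial\bar{\alpha}}\right)\left(\alpha\right)a^{\dag}a+A\left(\alpha,\bar{\alpha}\right)I,
\]
and combining this with the displayed formula for $H_{2}\left(\alpha:\theta,\theta^{\ast}\right)$ yields the asserted identity with $p_{H}:=A\in\mathbb{C}\left[z,\bar{z}\right]$.

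The proof is therefore essentially the bookkeeping already packaged in Theorem \ref{the.2.22} together with one commutator, so I do not anticipate a genuine obstacle; the only step deserving care is the identification $A+B=\partial^{2}H^{\mathrm{cl}}/\partial\alpha\partial\bar{\alpha}$, which is precisely the content of the last display of Theorem \ref{the.2.22}. (If one prefers not to name $A$ and $B$: $C\left(\alpha:a,a^{\dag}\right)$ is degree two in $a,a^{\dag}$ with no $a^{2}$ or $a^{\dag2}$ terms, hence after normal ordering it equals $\lambda\left(\alpha,\bar{\alpha}\right)a^{\dag}a+\mu\left(\alpha,\bar{\alpha}\right)I$ for some $\lambda,\mu\in\mathbb{C}\left[\alpha,\bar{\alpha}\right]$; applying the classical-symbol map $\theta\mapsto z$, $\theta^{\ast}\mapsto\bar{z}$ to the decomposition $H\left(z+\alpha,\bar{z}+\bar{\alpha}\right)=H^{\mathrm{cl}}\left(\alpha\right)+H_{1}+H_{2}+H_{\geq3}$ of Eq. (\ref{equ.2.31}) and comparing the coefficient of $z\bar{z}$ forces $\lambda=\partial^{2}H^{\mathrm{cl}}/\partial\alpha\partial\bar{\alpha}$.)
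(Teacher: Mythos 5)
Your proof is correct and follows essentially the same route as the paper's: both decompose $H_{2}$ via Theorem \ref{the.2.22} into the $\theta^{2},\theta^{\ast2}$ pieces plus a cross term $A\,\theta\theta^{\ast}+B\,\theta^{\ast}\theta$ whose symbol identifies $A+B$ with $\partial^{2}H^{\mathrm{cl}}/\partial\alpha\partial\bar{\alpha}$, and then apply the CCR once to normal order, yielding $p_{H}=A$ (the paper's $\delta$). No gaps.
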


\begin{proof}
As we have seen the structure of $H_{2}\left(  \alpha:\theta,\theta^{\ast
}\right)  $ implies there exists $\rho,\gamma,\delta\in\mathbb{C}\left[
\alpha,\bar{\alpha}\right]  $ such that
\[
2H_{2}\left(  \alpha:\theta,\theta^{\ast}\right)  =\rho\left(  \alpha
,\bar{\alpha}\right)  \theta^{2}+\overline{\rho\left(  \alpha,\bar{\alpha
}\right)  }\theta^{\ast2}+\gamma\left(  \alpha,\bar{\alpha}\right)
\theta^{\ast}\theta+\delta\left(  \alpha,\bar{\alpha}\right)  \theta
\theta^{\ast}.
\]
From this equation we find,
\[
2H_{2}\left(  \alpha:z,\bar{z}\right)  =\rho\left(  \alpha,\bar{\alpha
}\right)  z^{2}+\overline{\rho\left(  \alpha,\bar{\alpha}\right)  }\bar{z}%
^{2}+\left[  \gamma\left(  \alpha,\bar{\alpha}\right)  +\delta\left(
\alpha,\bar{\alpha}\right)  \right]  z\bar{z}%
\]
while form Eq. (\ref{equ.2.30}) we may conclude that
\begin{equation}
2H_{2}\left(  \alpha:z,\bar{z}\right)  =\left(  \frac{\partial^{2}%
H^{\mathrm{cl}}}{\partial\alpha^{2}}\right)  \left(  \alpha\right)
z^{2}+\left(  \frac{\partial^{2}H^{\mathrm{cl}}}{\partial\overline{\alpha}%
^{2}}\right)  \left(  \alpha\right)  \bar{z}^{2}+2\left(  \frac{\partial
^{2}H^{\mathrm{cl}}}{\partial\alpha\partial\overline{\alpha}}\right)  \left(
\alpha\right)  \bar{z}z. \label{equ.3.5}%
\end{equation}
Comparing these last two equations shows,
\begin{align*}
\left(  \frac{\partial^{2}H^{\mathrm{cl}}}{\partial\alpha^{2}}\right)  \left(
\alpha\right)   &  =\rho\left(  \alpha,\bar{\alpha}\right)  ,\text{ }\left(
\frac{\partial^{2}H^{\mathrm{cl}}}{\partial\overline{\alpha}^{2}}\right)
\left(  \alpha\right)  =\overline{\rho\left(  \alpha,\bar{\alpha}\right)
},\text{ and}\\
\left(  \frac{\partial^{2}H^{\mathrm{cl}}}{\partial\alpha\partial
\overline{\alpha}}\right)  \left(  \alpha\right)   &  =\frac{1}{2}\left[
\gamma\left(  \alpha,\bar{\alpha}\right)  +\delta\left(  \alpha,\bar{\alpha
}\right)  \right]  .
\end{align*}
Using these last identities and the canonical commutations relations we find,
\begin{align*}
2  &  H_{2}\left(  \alpha:a,a^{\dag}\right) \\
&  =\rho\left(  \alpha,\bar{\alpha}\right)  a^{2}+\overline{\rho\left(
\alpha,\bar{\alpha}\right)  }a^{\dag2}+\gamma\left(  \alpha,\bar{\alpha
}\right)  a^{\dag}a+\delta\left(  \alpha,\bar{\alpha}\right)  aa^{\dag}\\
&  =\rho\left(  \alpha,\bar{\alpha}\right)  a^{2}+\overline{\rho\left(
\alpha,\bar{\alpha}\right)  }a^{\dag2}+\left[  \gamma\left(  \alpha
,\bar{\alpha}\right)  +\delta\left(  \alpha,\bar{\alpha}\right)  \right]
a^{\dag}a+\delta\left(  \alpha,\bar{\alpha}\right)  I\\
&  =\left(  \frac{\partial^{2}H^{\mathrm{cl}}}{\partial\alpha^{2}}\right)
\left(  \alpha\right)  a^{2}+\left(  \frac{\partial^{2}H^{\mathrm{cl}}%
}{\partial\overline{\alpha}^{2}}\right)  \left(  \alpha\right)  a^{\dag
2}+2\left(  \frac{\partial^{2}H^{\mathrm{cl}}}{\partial\alpha\partial
\overline{\alpha}}\right)  \left(  \alpha\right)  a^{\dag}a+p_{H}\left(
\alpha,\bar{\alpha}\right)  I
\end{align*}
with $p_{H}\left(  \alpha,\bar{\alpha}\right)  =\delta\left(  \alpha
,\bar{\alpha}\right)  .$
\end{proof}

Proposition \ref{pro.3.5} and the following simple commutator formulas,
\begin{align*}
\left[  a^{\dag}a,a\right]   &  =-a,\quad\left[  a^{\dag2},a\right]
=-2a^{\dag},\quad\\
\left[  a^{\dag}a,a^{\dag}\right]   &  =a^{\dag},\text{ and }\left[
a^{2},a^{\dag}\right]  =2a,
\end{align*}
immediately give the following corollary.

\begin{corollary}
\label{cor.3.8}If $H\in\mathbb{C}\left\langle \theta,\theta^{\ast
}\right\rangle $ and $\alpha\in\mathbb{C},$ then
\begin{align*}
\left[  H_{2}\left(  \alpha:a,a^{\dag}\right)  ,a\right]   &  =-\left(
\frac{\partial^{2}H^{\mathrm{cl}}}{\partial\alpha\partial\overline{\alpha}%
}\right)  \left(  \alpha\right)  a-\left(  \frac{\partial^{2}H^{\mathrm{cl}}%
}{\partial\overline{\alpha}^{2}}\right)  \left(  \alpha\right)  a^{\dag}\\
\left[  H_{2}\left(  \alpha:a,a^{\dag}\right)  ,a^{\dag}\right]   &  =\left(
\frac{\partial^{2}H^{\mathrm{cl}}}{\partial\alpha^{2}}\right)  \left(
\alpha\right)  a+\left(  \frac{\partial^{2}H^{\mathrm{cl}}}{\partial
\alpha\partial\overline{\alpha}}\right)  \left(  \alpha\right)  a^{\dag}.
\end{align*}

\end{corollary}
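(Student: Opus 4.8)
The plan is to reduce everything to Proposition \ref{pro.3.5}, which does all the real work. That proposition expresses $H_{2}\left(  \alpha:a,a^{\dag}\right)  $ as an explicit $\mathbb{C}$--linear combination of the four operators $a^{2}$, $a^{\dag2}$, $a^{\dag}a$, and $I$, namely
\[
H_{2}\left(  \alpha:a,a^{\dag}\right)  =\frac{1}{2}\left(  \frac{\partial^{2}H^{\mathrm{cl}}}{\partial\alpha^{2}}\right)  \left(  \alpha\right)  a^{2}+\frac{1}{2}\left(  \frac{\partial^{2}H^{\mathrm{cl}}}{\partial\overline{\alpha}^{2}}\right)  \left(  \alpha\right)  a^{\dag2}+\left(  \frac{\partial^{2}H^{\mathrm{cl}}}{\partial\alpha\partial\overline{\alpha}}\right)  \left(  \alpha\right)  a^{\dag}a+p_{H}\left(  \alpha,\bar{\alpha}\right)  I.
\]
Since the commutator bracket is bilinear, it suffices to commute each of these four building blocks against $a$ and against $a^{\dag}$ and then assemble the result.

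First I would record the elementary brackets on $\mathcal{S}$: trivially $[a^{2},a]=0$, $[a^{\dag2},a^{\dag}]=0$, and $[I,a]=[I,a^{\dag}]=0$; the remaining four, $[a^{\dag}a,a]=-a$, $[a^{\dag2},a]=-2a^{\dag}$, $[a^{\dag}a,a^{\dag}]=a^{\dag}$, and $[a^{2},a^{\dag}]=2a$, are exactly the commutator formulas displayed just before the statement, all of which follow from the CCR $[a,a^{\dag}]=I$ of Eq. (\ref{equ.1.7}) at $\hbar=1$.

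Then for $[H_{2}\left(  \alpha:a,a^{\dag}\right)  ,a]$ only the $a^{\dag2}$ and $a^{\dag}a$ terms contribute; substituting the brackets above gives $-\frac{1}{2}\left(  \partial^{2}H^{\mathrm{cl}}/\partial\overline{\alpha}^{2}\right)  \left(  \alpha\right)  \cdot2a^{\dag}-\left(  \partial^{2}H^{\mathrm{cl}}/\partial\alpha\partial\overline{\alpha}\right)  \left(  \alpha\right)  \cdot a$, which is precisely the first claimed identity. Similarly, for $[H_{2}\left(  \alpha:a,a^{\dag}\right)  ,a^{\dag}]$ only the $a^{2}$ and $a^{\dag}a$ terms contribute, yielding $\frac{1}{2}\left(  \partial^{2}H^{\mathrm{cl}}/\partial\alpha^{2}\right)  \left(  \alpha\right)  \cdot2a+\left(  \partial^{2}H^{\mathrm{cl}}/\partial\alpha\partial\overline{\alpha}\right)  \left(  \alpha\right)  \cdot a^{\dag}$, the second identity.

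I do not anticipate any real obstacle: the substance is entirely contained in Proposition \ref{pro.3.5}, and the corollary is a one-line bookkeeping calculation. The only point that deserves a moment's attention is matching the factor $\frac{1}{2}$ on the quadratic coefficients against the factor $2$ appearing in $[a^{\dag2},a]$ and $[a^{2},a^{\dag}]$, which cancel to produce the clean coefficients stated. Everything takes place on the invariant dense subspace $\mathcal{S}$, so no questions of domains or closability intervene.
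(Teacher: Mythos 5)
Your proof is correct and is exactly the paper's argument: the paper derives the corollary by combining Proposition \ref{pro.3.5} with the same four elementary commutator formulas, observing that the result is then "immediate." Your verification of the coefficient bookkeeping (the $\tfrac{1}{2}$'s cancelling against the $2$'s) is accurate, and the remark that everything lives on the invariant subspace $\mathcal{S}$ is a fine, if unstated in the paper, observation.
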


\subsection{Expectations and variances for translated states\label{sub.3.2}}

The next result is a fairly easy consequence of Proposition \ref{pro.2.6} and
the expansion of non-commutative polynomials into their homogeneous components.

\begin{corollary}
[Concentrated states]\label{cor.3.10}Let $P\left(  \theta,\theta^{\ast
}\right)  \in\mathbb{C}\left\langle \theta,\theta^{\ast}\right\rangle ,$
$\psi\in\mathcal{S},$ $\hbar>0,$ and $\alpha\in\mathbb{C},$ then
\begin{align}
\left\langle P\left(  a_{\hbar},a_{\hbar}^{\dag}\right)  \right\rangle
_{U_{\hbar}\left(  \alpha\right)  \psi}  &  =P\left(  \alpha,\bar{\alpha
}\right)  +O\left(  \sqrt{\hbar}\right) \label{equ.3.6}\\
\operatorname*{Var}\nolimits_{U_{\hbar}\left(  \alpha\right)  \psi}\left(
P\left(  a_{\hbar},a_{\hbar}^{\dag}\right)  \right)   &  =O\left(  \sqrt
{\hbar}\right)  , \label{equ.3.7}%
\end{align}
and
\begin{equation}
\lim_{\hbar\downarrow0}\left\langle P\left(  \frac{a_{\hbar}-\alpha}%
{\sqrt{\hbar}},\frac{a_{\hbar}^{\dag}-\bar{\alpha}}{\sqrt{\hbar}}\right)
\right\rangle _{U_{\hbar}\left(  \alpha\right)  \psi}=\left\langle P\left(
a,a^{\dag}\right)  \right\rangle _{\psi} \label{equ.3.8}%
\end{equation}
where $\left\langle \cdot\right\rangle _{U_{\hbar}\left(  \alpha\right)  \psi}
$ is defined in Definition \ref{def.1.8}. [In fact, the equality in the last
equation holds before taking the limit as $\hbar\rightarrow0.]$
\end{corollary}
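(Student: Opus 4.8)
The plan is to reduce the entire statement to the conjugation identities of Proposition \ref{pro.2.6} together with the homogeneous expansion of Theorem \ref{the.2.22}, so that no genuine analysis is needed. First I would record that $a_{\hbar}=\sqrt{\hbar}\,a$ and $a_{\hbar}^{\dag}=\sqrt{\hbar}\,a^{\dag}$ as operators on $\mathcal{S}$ (immediate from Eqs. (\ref{equ.1.5}) and (\ref{equ.1.6})), and that, since $U_{\hbar}\left(  \alpha\right)  \mathcal{S}=\mathcal{S}$ by Proposition \ref{pro.2.6} and every polynomial operator preserves $\mathcal{S}$ by Definition \ref{def.2.16}, the map $A\mapsto U_{\hbar}\left(  \alpha\right)  ^{\ast}AU_{\hbar}\left(  \alpha\right)  $ is an algebra homomorphism on the algebra of operators with domain $\mathcal{S}$ preserving $\mathcal{S}$. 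Consequently Eqs. (\ref{equ.2.17})--(\ref{equ.2.18}) promote, for any $Q\in\mathbb{C}\left\langle \theta,\theta^{\ast}\right\rangle $, to the identity
\[
U_{\hbar}\left(  \alpha\right)  ^{\ast}Q\left(  a_{\hbar},a_{\hbar}^{\dag
}\right)  U_{\hbar}\left(  \alpha\right)  =Q\left(  a_{\hbar}+\alpha,a_{\hbar
}^{\dag}+\bar{\alpha}\right)  =Q\left(  \sqrt{\hbar}\,a+\alpha,\sqrt{\hbar
}\,a^{\dag}+\bar{\alpha}\right)  \text{ on }\mathcal{S}.
\]

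I would handle Eq. (\ref{equ.3.8}) first, since it is the cleanest. Applying the displayed identity to the (noncommutative) polynomial obtained by substituting $\left(  \tfrac{\theta-\alpha}{\sqrt{\hbar}},\tfrac{\theta^{\ast}-\bar{\alpha}}{\sqrt{\hbar}}\right)  $ into $P$, and using that conjugation carries $\tfrac{a_{\hbar}-\alpha}{\sqrt{\hbar}}$ to $\tfrac{\left(  a_{\hbar}+\alpha\right)  -\alpha}{\sqrt{\hbar}}=a$ and $\tfrac{a_{\hbar}^{\dag}-\bar{\alpha}}{\sqrt{\hbar}}$ to $a^{\dag}$, one obtains
\[
U_{\hbar}\left(  \alpha\right)  ^{\ast}P\left(  \tfrac{a_{\hbar}-\alpha
}{\sqrt{\hbar}},\tfrac{a_{\hbar}^{\dag}-\bar{\alpha}}{\sqrt{\hbar}}\right)
U_{\hbar}\left(  \alpha\right)  =P\left(  a,a^{\dag}\right)  \text{ on
}\mathcal{S},
\]
so pairing against $\psi$ gives $\left\langle P\left(  \tfrac{a_{\hbar}-\alpha}{\sqrt{\hbar}},\tfrac{a_{\hbar}^{\dag}-\bar{\alpha}}{\sqrt{\hbar}}\right)  \right\rangle _{U_{\hbar}\left(  \alpha\right)  \psi}=\left\langle P\left(  a,a^{\dag}\right)  \right\rangle _{\psi}$ as an exact equality for every $\hbar>0$, which is stronger than the limiting statement (\ref{equ.3.8}).

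For Eq. (\ref{equ.3.6}) I would expand $P\left(  \sqrt{\hbar}\,a+\alpha,\sqrt{\hbar}\,a^{\dag}+\bar{\alpha}\right)  $ by substituting $\theta\mapsto\sqrt{\hbar}\,a$, $\theta^{\ast}\mapsto\sqrt{\hbar}\,a^{\dag}$ into Eq. (\ref{equ.2.29}) and using that each $P_{k}\left(  \alpha:\theta,\theta^{\ast}\right)  $ is homogeneous of degree $k$; with $d=\deg_{\theta}P$ this gives
\[
U_{\hbar}\left(  \alpha\right)  ^{\ast}P\left(  a_{\hbar},a_{\hbar}^{\dag
}\right)  U_{\hbar}\left(  \alpha\right)  =\sum_{k=0}^{d}\hbar^{k/2}
P_{k}\left(  \alpha:a,a^{\dag}\right)  \text{ on }\mathcal{S},
\]
with $P_{0}\left(  \alpha:a,a^{\dag}\right)  =P^{\mathrm{cl}}\left(  \alpha\right)  =P\left(  \alpha,\bar{\alpha}\right)  $ by Theorem \ref{the.2.22}. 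Taking the expectation in $\psi$ then yields $\left\langle P\left(  a_{\hbar},a_{\hbar}^{\dag}\right)  \right\rangle _{U_{\hbar}\left(  \alpha\right)  \psi}=P\left(  \alpha,\bar{\alpha}\right)  \left\langle \psi,\psi\right\rangle +\sum_{k=1}^{d}\hbar^{k/2}\left\langle P_{k}\left(  \alpha:a,a^{\dag}\right)  \right\rangle _{\psi}$, and since each $P_{k}\left(  \alpha:a,a^{\dag}\right)  $ maps $\mathcal{S}\ni\psi$ into $\mathcal{S}$, the numbers $\left\langle P_{k}\left(  \alpha:a,a^{\dag}\right)  \right\rangle _{\psi}$ are finite, so the $k\geq 1$ part is $O\left(  \sqrt{\hbar}\right)  $ for $0<\hbar\leq 1$; this is (\ref{equ.3.6}) for a normalized $\psi$.

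Finally, for Eq. (\ref{equ.3.7}) I would let $Q\in\mathbb{C}\left\langle \theta,\theta^{\ast}\right\rangle $ be the square of $P$ in the noncommutative algebra, so that $P\left(  a_{\hbar},a_{\hbar}^{\dag}\right)  ^{2}=Q\left(  a_{\hbar},a_{\hbar}^{\dag}\right)  $ on $\mathcal{S}$, apply (\ref{equ.3.6}) to both $Q$ and $P$, and use that $z\mapsto P\left(  z,\bar{z}\right)  $ is an algebra homomorphism of $\mathbb{C}\left\langle \theta,\theta^{\ast}\right\rangle $ into $\mathbb{C}$, so that $Q^{\mathrm{cl}}\left(  \alpha\right)  =P^{\mathrm{cl}}\left(  \alpha\right)  ^{2}$. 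Then the leading terms cancel in
\[
\operatorname*{Var}\nolimits_{U_{\hbar}\left(  \alpha\right)  \psi}\left(
P\left(  a_{\hbar},a_{\hbar}^{\dag}\right)  \right)  =\left\langle Q\left(
a_{\hbar},a_{\hbar}^{\dag}\right)  \right\rangle _{U_{\hbar}\left(
\alpha\right)  \psi}-\left\langle P\left(  a_{\hbar},a_{\hbar}^{\dag}\right)
\right\rangle _{U_{\hbar}\left(  \alpha\right)  \psi}^{2},
\]
leaving an $O\left(  \sqrt{\hbar}\right)  $ remainder. I do not expect any real obstacle: the only point needing care is the legitimacy of treating conjugation by $U_{\hbar}\left(  \alpha\right)  $ as multiplicative on products of unbounded operators, and this is harmless because all operators in sight share the common invariant core $\mathcal{S}$; the corollary is essentially a bookkeeping consequence of Proposition \ref{pro.2.6} and Theorem \ref{the.2.22}.
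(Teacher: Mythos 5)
Your proposal is correct and follows essentially the same route as the paper: conjugate $P(a_{\hbar},a_{\hbar}^{\dag})$ by $U_{\hbar}(\alpha)$ via Proposition \ref{pro.2.6} to get $P(a_{\hbar}+\alpha,a_{\hbar}^{\dag}+\bar{\alpha})$, expand in the homogeneous components of Eq. (\ref{equ.2.29}) so that each $P_{k}$ contributes $\hbar^{k/2}$, and for the variance apply the same expansion to $P^{2}$ using $(P^{2})_{0}=P_{0}^{2}$ so the leading terms cancel. The only cosmetic differences are that you treat Eq. (\ref{equ.3.8}) first and spell out why conjugation is multiplicative on operators with the common invariant core $\mathcal{S}$, a point the paper leaves implicit.
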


\begin{proof}
From Proposition \ref{pro.2.6} and Eq. (\ref{equ.2.29}),
\begin{equation}
U_{\hbar}\left(  \alpha\right)  ^{\ast}P\left(  a_{\hbar},a_{\hbar}^{\dag
}\right)  U_{\hbar}\left(  \alpha\right)  =P\left(  a_{\hbar}+\alpha,a_{\hbar
}^{\dag}+\bar{\alpha}\right)  =\sum_{k=0}^{d}P_{k}\left(  \alpha:a_{\hbar
},a_{\hbar}^{\dag}\right)  \label{equ.3.9}%
\end{equation}
and hence
\begin{align*}
\left\langle P\left(  a_{\hbar},a_{\hbar}^{\dag}\right)  \right\rangle
_{U_{\hbar}\left(  \alpha\right)  \psi}  &  =\left\langle U_{\hbar}\left(
\alpha\right)  ^{\ast}P\left(  a_{\hbar},a_{\hbar}^{\dag}\right)  U_{\hbar
}\left(  \alpha\right)  \right\rangle _{\psi}=\left\langle P\left(  a_{\hbar
}+\alpha,a_{\hbar}^{\dag}+\bar{\alpha}\right)  \right\rangle _{\psi}\\
&  =\left\langle \sum_{k=0}^{d}P_{k}\left(  \alpha:a_{\hbar},a_{\hbar}^{\dag
}\right)  \right\rangle _{\psi}=P_{0}\left(  \alpha\right)  +\sum_{k=1}%
^{d}\hbar^{k/2}\left\langle P_{k}\left(  \alpha:a,a^{\dag}\right)
\right\rangle _{\psi}%
\end{align*}
from which Eq. (\ref{equ.3.6}) follows where $P_{0}\left(  \alpha\right)  $ is
defined in Notation \ref{not.2.12}. Similarly, making use of the fact that
$\left(  P^{2}\right)  _{0}\left(  \alpha\right)  =\left(  P_{0}^{2}\right)
\left(  \alpha\right)  $
\begin{equation}
\left\langle P^{2}\left(  a_{\hbar},a_{\hbar}^{\dag}\right)  \right\rangle
_{U_{\hbar}\left(  \alpha\right)  \psi}=\left(  P_{0}^{2}\right)  \left(
\alpha\right)  +\sum_{k=1}^{2d}\hbar^{k/2}\left\langle \left(  P^{2}\right)
_{k}\left(  \alpha:a,a^{\dag}\right)  \right\rangle _{\psi} \label{equ.3.10}%
\end{equation}
and hence
\begin{align*}
\operatorname*{Var}\nolimits_{U_{\hbar}\left(  \alpha\right)  \psi}\left(
P\left(  a_{\hbar},a_{\hbar}^{\dag}\right)  \right)   &  =\left(  P_{0}%
^{2}\right)  \left(  \alpha\right)  +\sum_{k=1}^{2d}\hbar^{k/2}\left\langle
\left(  P^{2}\right)  _{k}\left(  \alpha:a,a^{\dag}\right)  \right\rangle
_{\psi}\\
&  -\left[  \left(  P_{0}\left(  \alpha\right)  +\sum_{k=0}^{d}\hbar
^{k/2}\left\langle P_{k}\left(  \alpha:a,a^{\dag}\right)  \right\rangle
_{\psi}\right)  \right]  ^{2}\\
&  =O\left(  \sqrt{\hbar}\right)  .
\end{align*}
Lastly, using Eq. (\ref{equ.3.9}) one shows,
\[
\left\langle P\left(  \frac{a_{\hbar}-\alpha}{\sqrt{\hbar}},\frac{a_{\hbar
}^{\dag}-\bar{\alpha}}{\sqrt{\hbar}}\right)  \right\rangle _{U_{\hbar}\left(
\alpha\right)  \psi}=\left\langle P\left(  \frac{a_{\hbar}+\alpha-\alpha
}{\sqrt{\hbar}},\frac{a_{\hbar}^{\dag}+\bar{\alpha}-\bar{\alpha}}{\sqrt{\hbar
}}\right)  \right\rangle _{\psi}=\left\langle P\left(  a,a^{\dag}\right)
\right\rangle _{\psi}%
\]
which certainly implies Eq. (\ref{equ.3.8}).
\end{proof}

\begin{remark}
\label{rem.3.11}If $\psi\in\mathcal{S}$ and $\alpha\in\mathbb{C},$ Eqs.
(\ref{equ.3.6}) and (\ref{equ.3.7}) should be interpreted to say that for
small $\hbar>0,$ $U_{\hbar}\left(  \alpha\right)  \psi$ is a state which is
concentrated in phase space near $\alpha.$ Consequently, these are good
initial states for discussing the classical ($\hbar\rightarrow0)$ limit of
quantum mechanics.
\end{remark}

The next result shows that, under Assumption \ref{ass.1}, the classical
equations of motions in Eq. (\ref{equ.1.1}) have global solutions which remain
bounded in time.

\begin{proposition}
\label{pro.3.13}If $C$ and $C_{1}$ are the constants appearing in Eq.
(\ref{equ.1.14}) of Assumption \ref{ass.1}, $\alpha_{0}\in\mathbb{C},$ and
$\alpha\left(  t\right)  \in\mathbb{C}$ is the maximal solution of Hamilton's
ordinary differential equations (\ref{equ.1.1}), then $\alpha\left(  t\right)
$ is defined for all time $t$ and moreover,
\begin{equation}
\left\vert \alpha\left(  t\right)  \right\vert ^{2}\leq C_{1}\left(
H^{\text{cl}}\left(  \alpha\left(  0\right)  \right)  +C\right)  ,
\label{equ.3.11}%
\end{equation}
where $H^{\text{cl}}\left(  \alpha\right)  :=H\left(  \alpha,\bar{\alpha
}\right)  .$
\end{proposition}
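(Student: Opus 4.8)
The plan is to combine conservation of the classical energy $H^{\mathrm{cl}}$ along the flow (\ref{equ.1.1}) with an a~priori pointwise bound on $H^{\mathrm{cl}}$ extracted from Assumption~\ref{ass.1}.

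First I would check that $H^{\mathrm{cl}}$ is a first integral of (\ref{equ.1.1}). Since $H^{\mathrm{cl}}$ is real valued (Remark~\ref{rem.2.19}) we have $\partial_{\bar{\alpha}}H^{\mathrm{cl}}=\overline{\partial_{\alpha}H^{\mathrm{cl}}}$, so applying the chain rule in the form of Eq.~(\ref{equ.2.8}) and using $\dot{\alpha}=-i\,\partial_{\bar{\alpha}}H^{\mathrm{cl}}$ gives
\[
\frac{d}{dt}H^{\mathrm{cl}}\left(\alpha\left(t\right)\right)=2\operatorname{Re}\left[\left(\partial_{\alpha}H^{\mathrm{cl}}\right)\left(\alpha\left(t\right)\right)\dot{\alpha}\left(t\right)\right]=2\operatorname{Re}\left[-i\left\vert\left(\partial_{\alpha}H^{\mathrm{cl}}\right)\left(\alpha\left(t\right)\right)\right\vert^{2}\right]=0 ,
\]
hence $H^{\mathrm{cl}}\left(\alpha\left(t\right)\right)\equiv H^{\mathrm{cl}}\left(\alpha_{0}\right)$ on the open maximal interval of existence $I\ni0$.

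Next I would extract from Assumption~\ref{ass.1} the scalar inequality $\left\vert\alpha\right\vert^{2}\le C_{1}\left(H^{\mathrm{cl}}\left(\alpha\right)+C\right)$ for every $\alpha\in\mathbb{C}$. Fixing an $L^{2}\left(m\right)$-normalized $\psi\in\mathcal{S}$, I would test the operator inequality (\ref{equ.1.14}) with $\beta=1$, i.e. $\mathcal{N}_{\hbar}\preceq C_{1}\left(H_{\hbar}+C\right)$, against the concentrated state $U_{\hbar}\left(\alpha\right)\psi$; this is legitimate since $U_{\hbar}\left(\alpha\right)\psi\in U_{\hbar}\left(\alpha\right)\mathcal{S}=\mathcal{S}\subset D\left(H_{\hbar}\right)\subset D\left(\mathcal{N}_{\hbar}\right)$, and it gives, using that $U_{\hbar}\left(\alpha\right)$ is unitary,
\[
\left\langle\mathcal{N}_{\hbar}\right\rangle_{U_{\hbar}\left(\alpha\right)\psi}\le C_{1}\left(\left\langle H_{\hbar}\right\rangle_{U_{\hbar}\left(\alpha\right)\psi}+C\right).
\]
Since on $\mathcal{S}$ the operators $\mathcal{N}_{\hbar}$ and $H_{\hbar}$ coincide with $P\left(a_{\hbar},a_{\hbar}^{\dag}\right)$ for $P=\theta^{\ast}\theta$ and $P=H$, whose symbols are $\left\vert\alpha\right\vert^{2}$ and $H^{\mathrm{cl}}\left(\alpha\right)$, Corollary~\ref{cor.3.10} gives $\left\langle\mathcal{N}_{\hbar}\right\rangle_{U_{\hbar}\left(\alpha\right)\psi}=\left\vert\alpha\right\vert^{2}+O\left(\sqrt{\hbar}\right)$ and $\left\langle H_{\hbar}\right\rangle_{U_{\hbar}\left(\alpha\right)\psi}=H^{\mathrm{cl}}\left(\alpha\right)+O\left(\sqrt{\hbar}\right)$; substituting these and letting $\hbar\downarrow0$ yields the claimed bound.

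Finally I would close with the standard continuation argument: combining the two steps, for $t\in I$,
\[
\left\vert\alpha\left(t\right)\right\vert^{2}\le C_{1}\left(H^{\mathrm{cl}}\left(\alpha\left(t\right)\right)+C\right)=C_{1}\left(H^{\mathrm{cl}}\left(\alpha_{0}\right)+C\right),
\]
which is a finite constant independent of $t$. Because the right-hand side of (\ref{equ.1.1}) is a polynomial in $\left(\alpha,\bar{\alpha}\right)$, hence smooth and locally Lipschitz, a maximal solution that fails to exist for all time must leave every compact subset of $\mathbb{C}$ as $t$ approaches an endpoint of $I$; the uniform bound above forbids this, so $I=\mathbb{R}$ and (\ref{equ.3.11}) holds for all $t$. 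I expect the only non-routine point to be the middle step, namely converting the quantum operator inequality (\ref{equ.1.14}) into the classical inequality between $\left\vert\alpha\right\vert^{2}$ and $H^{\mathrm{cl}}\left(\alpha\right)$; this is handled by testing on the concentrated states and invoking Corollary~\ref{cor.3.10}, while energy conservation and the escape-time criterion are routine.
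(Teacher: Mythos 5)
Your proposal is correct and follows essentially the same route as the paper's proof: test the $\beta=1$ case of Eq.~(\ref{equ.1.14}) on the concentrated states $U_{\hbar}\left(\alpha\right)\psi$, invoke Corollary~\ref{cor.3.10} and let $\hbar\downarrow0$ to get $\left\vert\alpha\right\vert^{2}\leq C_{1}\left(H^{\mathrm{cl}}\left(\alpha\right)+C\right)$, then combine with conservation of $H^{\mathrm{cl}}$ along the flow to rule out finite-time blow-up. You merely spell out two steps the paper leaves implicit (the computation showing $H^{\mathrm{cl}}$ is a first integral, and the escape-time criterion), and both are handled correctly.
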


\begin{proof}
Equation (\ref{equ.1.14}) with $\beta=1$ implies
\begin{equation}
\left\langle \mathcal{N}_{\hbar}\right\rangle _{\psi}\leq C_{1}\left\langle
H_{\hbar}+C\right\rangle _{\psi}\text{ for all }\psi\in\mathcal{S}.
\label{equ.3.12}%
\end{equation}
Replacing $\psi$ by $U_{\hbar}\left(  \alpha\right)  \psi$ in Eq.
(\ref{equ.3.12}) and then letting $\hbar\downarrow0$ gives (with the aid of
Corollary \ref{cor.3.10}) the estimate,
\begin{equation}
\left\vert \alpha\right\vert ^{2}\leq C_{1}\left(  H^{\text{cl}}\left(
\alpha\right)  +C\right)  \text{ for all }\alpha\in\mathbb{C}.
\label{equ.3.13}%
\end{equation}
If $\alpha\left(  t\right)  $ solves Hamilton's Eq. (\ref{equ.1.1}) then
$H^{\text{cl}}\left(  \alpha\left(  t\right)  \right)  =H^{\text{cl}}\left(
\alpha\left(  0\right)  \right)  $ for all $t.$ As the level sets of
$H^{\text{cl}}$ are compact because of the estimate in Eq. (\ref{equ.3.13})
there is no possibility for $\alpha\left(  t\right)  $ to explode and hence
solutions will exist for all times $t$ and moreover must satisfy the estimate
in Eq. (\ref{equ.3.11}).
\end{proof}

\subsection{Analysis of Monomial Operators of $a$ and $a^{\dag}$
\label{sub.3.3}}

In this subsection, recall that $a=a_{1}$ and $a^{\dag}=a_{1}^{\dag}$ as in
Definition \ref{def.1.3}. Let
\begin{equation}
\Omega_{0}\left(  x\right)  :=\frac{1}{\sqrt[4]{4\pi}}\exp\left(  -\frac{1}%
{2}x^{2}\right)  \text{ and }\left\{  \Omega_{n}:=\frac{1}{\sqrt{n!}}a^{\dag
n}\Omega_{0}\right\}  _{n=0}^{\infty}. \label{equ.3.14}%
\end{equation}
\textbf{Convention: }$\Omega_{n}\equiv0$ for all $n\in\mathbb{Z}$ with $n<0.$

The following theorem summarizes the basic well known and easily verified
properties of these functions which essentially are all easy consequences of
the canonical commutation relations, $\left[  a,a^{\dag}\right]  =I$ on
$\mathcal{S}.$ We will provide a short proof of these well known results for
the readers convenience.

\begin{theorem}
\label{the.3.14}The functions $\left\{  \Omega_{n}\right\}  _{n=0}^{\infty
}\subset\mathcal{S}$ form an orthonormal basis for $L^{2}\left(  m\right)  $
which satisfy for all $n\in\mathbb{N}_{0},$
\begin{align}
a\Omega_{n}  &  =\sqrt{n}\Omega_{n-1},\text{ }\label{equ.3.15}\\
a^{\dag}\Omega_{n}  &  =\sqrt{n+1}\Omega_{n+1}\text{ and}\label{equ.3.16}\\
a^{\dag}a\Omega_{n}  &  =n\Omega_{n}. \label{equ.3.17}%
\end{align}

\end{theorem}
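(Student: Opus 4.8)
The plan is to isolate the one genuinely analytic ingredient — completeness of $\left\{\Omega_n\right\}$ in $L^2(m)$ — from the purely algebraic content, namely the relations (\ref{equ.3.15})--(\ref{equ.3.17}) together with orthonormality; these all follow formally from the CCR $[a,a^\dag]=I$ on $\mathcal S$ (Eq. (\ref{equ.1.7}) with $\hbar=1$) plus the single fact that $a$ annihilates the ground state.

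First I would record the basic properties of $\Omega_0$: it lies in $\mathcal S$ since it is a Gaussian, a Gaussian integral shows it is a unit vector, and the computation $\partial_x\Omega_0=-x\Omega_0$ gives $a\Omega_0=\tfrac{1}{\sqrt2}(x\Omega_0+\partial_x\Omega_0)=0$. Since $a^\dag$ preserves $\mathcal S$ (noted after Definition \ref{def.1.3}), every $\Omega_n=\tfrac{1}{\sqrt{n!}}a^{\dag n}\Omega_0$ lies in $\mathcal S$. From $[a,a^\dag]=I$ on $\mathcal S$ a one-line induction gives $[a,a^{\dag n}]=n\,a^{\dag(n-1)}$ on $\mathcal S$ for all $n\ge1$. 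Now (\ref{equ.3.16}) is immediate from the definition of $\Omega_{n+1}$; pushing $a$ past $a^{\dag n}$ in $a\Omega_n=\tfrac{1}{\sqrt{n!}}a\,a^{\dag n}\Omega_0$ using this commutator identity and $a\Omega_0=0$ gives $a\Omega_n=\tfrac{n}{\sqrt{n!}}a^{\dag(n-1)}\Omega_0=\sqrt n\,\Omega_{n-1}$, which is (\ref{equ.3.15}); and (\ref{equ.3.17}) follows by composing, $a^\dag a\,\Omega_n=\sqrt n\,a^\dag\Omega_{n-1}=n\,\Omega_n$.

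For orthonormality I would use that $a^\dag$ is the formal adjoint of $a$ on $\mathcal S$, so $\langle a^\dag f,g\rangle=\langle f,ag\rangle$ for $f,g\in\mathcal S$. For $1\le m\le n$, writing $\Omega_n=\tfrac1{\sqrt n}a^\dag\Omega_{n-1}$ and moving the $a^\dag$ onto the second slot via (\ref{equ.3.15}) gives $\langle\Omega_n,\Omega_m\rangle=\sqrt{\tfrac mn}\,\langle\Omega_{n-1},\Omega_{m-1}\rangle$; iterating $m$ times reduces this to (a positive multiple of) $\langle\Omega_{n-m},\Omega_0\rangle=\tfrac1{\sqrt{(n-m)!}}\langle\Omega_0,a^{n-m}\Omega_0\rangle$, which is $0$ when $n>m$ (because $a\Omega_0=0$) and equals $\|\Omega_0\|^2=1$ when $n=m$. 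Hence $\langle\Omega_n,\Omega_m\rangle=\delta_{nm}$.

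The remaining — and, I expect, the only real — obstacle is completeness. Here I would first show by induction, using the explicit action $a^\dag\big(p(x)e^{-x^2/2}\big)=\tfrac1{\sqrt2}\big(2xp(x)-p'(x)\big)e^{-x^2/2}$ (an identity that, unlike the steps above, uses the concrete form of $a^\dag$ rather than merely the CCR), that $\Omega_n=h_n(x)\Omega_0(x)$ with $h_n$ a polynomial of degree exactly $n$; consequently $\mathrm{span}\{\Omega_0,\dots,\Omega_n\}$ equals the set of $p(x)e^{-x^2/2}$ with $\deg p\le n$. If $f\in L^2(m)$ is orthogonal to every $\Omega_n$, it is therefore orthogonal to $x^n e^{-x^2/2}$ for all $n\ge0$; the function $F(z):=\int_{\mathbb R}f(x)e^{-x^2/2}e^{izx}\,dm(x)$ is then well defined and entire (the Gaussian weight dominates $e^{izx}$ for every $z$, which justifies differentiation under the integral via Morera's theorem) and satisfies $F^{(n)}(0)=i^n\int f(x)x^n e^{-x^2/2}\,dm=0$ for all $n$, so $F\equiv0$; by injectivity of the Fourier transform $f(x)e^{-x^2/2}=0$ a.e., i.e. $f=0$. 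Thus $\left\{\Omega_n\right\}$ is total in $L^2(m)$, and combined with the orthonormality above this shows it is an orthonormal basis. (Alternatively one could simply invoke the standard fact that the Hermite functions form an orthonormal basis of $L^2(\mathbb R)$, e.g. as in \citep{Brian2013}.)
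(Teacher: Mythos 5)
Your proof is correct and follows essentially the same route as the paper's: the relations (\ref{equ.3.15})--(\ref{equ.3.17}) come from the derivation property of $\operatorname{ad}_a$ together with $a\Omega_0=0$; orthonormality you obtain by iterating the adjointness of $a$ and $a^{\dag}$ (the paper instead notes that eigenvectors of the symmetric operator $a^{\dag}a$ with distinct eigenvalues are orthogonal and then runs a norm recursion from $aa^{\dag}=a^{\dag}a+I$, a cosmetic difference); and completeness is in both cases reduced to the observation that $\Omega_n$ is a degree-$n$ polynomial times $\Omega_0$, except that where the paper simply invokes density of $\mathbb{C}\left[  x\right]  $ in the Gaussian-weighted $L^{2}$ space you actually prove that density via the entire-function/Fourier-injectivity argument, which is a worthwhile addition. (One incidental remark: for $\left\Vert \Omega_{0}\right\Vert =1$ to hold literally, the constant in Eq. (\ref{equ.3.14}) should be $\pi^{-1/4}$ rather than $\left(  4\pi\right)  ^{-1/4}$; both you and the paper implicitly assume the corrected normalization.)
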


\begin{proof}
First observe that $\Omega_{n}\left(  x\right)  $ is a polynomial $\left(
p_{n}\left(  x\right)  \right)  $ of degree $n$ times $\Omega_{0}\left(
x\right)  .$ Therefore the span of $\left\{  \Omega_{n}\right\}
_{n=0}^{\infty} $ are all functions of the form $p\left(  x\right)  \Omega
_{0}\left(  x\right)  $ where $p\in\mathbb{C}\left[  x\right]  .$ As
$\mathbb{C}\left[  x\right]  $ is dense in $L^{2}\left(  \Omega_{0}^{2}\left(
x\right)  dx\right)  $ it follows that $\left\{  \Omega_{n}\right\}
_{n=0}^{\infty}$ is total in $L^{2}\left(  m\right)  .$

For the remaining assertions let us recall, if $A$ and $B$ are operators on
some vector space (like $\mathcal{S})$ and $ad_{A}B:=\left[  A,B\right]  ,$
then $ad_{A}$ acts as a derivation, i.e.
\begin{equation}
ad_{A}\left(  BC\right)  =\left(  ad_{A}B\right)  C+B\left(  ad_{A}C\right)  .
\label{equ.3.18}%
\end{equation}
Combining this observation with $ad_{a}a^{\dag}=I$ then shows $ad_{a}a^{\dag
n}=na^{\dag n-1}$ so that
\[
a\Omega_{n}=a\frac{1}{\sqrt{n!}}a^{\dag n}\Omega_{0}=\frac{1}{\sqrt{n!}%
}\left(  ad_{a}a^{\dag n}\right)  \Omega_{0}=\frac{n}{\sqrt{n!}}a^{\dag\left(
n-1\right)  }\Omega_{0}=\sqrt{n}\Omega_{n-1}%
\]
which proves Eq. (\ref{equ.3.15}). Equation (\ref{equ.3.16}) is obvious from
the definition of $\left\{  \Omega_{n}\right\}  _{n=0}^{\infty}$ and Eq.
(\ref{equ.3.17}) follows from Eqs. (\ref{equ.3.15}) and (\ref{equ.3.16}). As
$\left\{  \Omega_{n}\right\}  _{n=0}^{\infty}$ are eigenvectors of the
symmetric operator $a^{\dag}a$ with distinct eigenvalues it follows that
$\left\langle \Omega_{n},\Omega_{m}\right\rangle =0$ if $m\neq n.$ So it only
remains to show $\left\Vert \Omega_{n}\right\Vert ^{2}=1$ for all $n.$
However, taking the $L^{2}\left(  m\right)  $ -norm of Eq. (\ref{equ.3.16})
gives
\begin{align*}
\left(  n+1\right)  \left\Vert \Omega_{n+1}\right\Vert ^{2}  &  =\left\Vert
a^{\dag}\Omega_{n}\right\Vert ^{2}=\left\langle \Omega_{n},aa^{\dag}\Omega
_{n}\right\rangle =\left\langle \Omega_{n},\left(  a^{\dag}a+I\right)
\Omega_{n}\right\rangle \\
&  =\left(  n+1\right)  \left\Vert \Omega_{n}\right\Vert ^{2},
\end{align*}
i.e. $n\rightarrow\left\Vert \Omega_{n}\right\Vert ^{2}$ is constant in $n.$
As we normalized $\Omega_{0}$ to be a unit vector, the proof is complete.
\end{proof}

\begin{notation}
\label{not.3.16}For $N\in\mathbb{N}_{0},$ let $\mathcal{P}_{N}$ denote
orthogonal projection of $L^{2}\left(  m\right)  $ onto $\operatorname*{span}%
\left\{  \Omega_{n}:0\leq n\leq N\right\}  ,$ i.e.
\begin{equation}
\mathcal{P}_{N}f:=\sum_{n=0}^{N}\left\langle f,\Omega_{n}\right\rangle
\Omega_{n}\text{ for all }f\in L^{2}\left(  m\right)  . \label{equ.3.19}%
\end{equation}

\end{notation}

\begin{notation}
[Standing Notation]\label{not.3.17}For the remainder of this section let
$k,j\in\mathbb{N},$ $\mathbf{b}=\left(  b_{1},\dots,b_{k}\right)  \in\left\{
\theta,\theta^{\ast}\right\}  ^{k},$ $q:=q\left(  \mathbf{b}\right)  ,$
$l:=\mathcal{\ell}\left(  \mathbf{b}\right)  ,$ $\mathbf{d}=\left(
d_{1},\dots,d_{j}\right)  \in\left\{  \theta,\theta^{\ast}\right\}  ^{j},$ and
$\mathcal{\ell}\left(  \mathbf{d}\right)  $ be as in Notation \ref{not.3.1}.
We further let $\mathcal{A}$ and $\mathcal{D}$ be the two monomial operators,
\begin{align*}
\mathcal{A}  &  :=u_{\mathbf{b}}\left(  a,a^{\dag}\right)  =\Xi\left(
b_{1}\right)  \dots\Xi\left(  b_{k}\right)  \text{ and }\\
\mathcal{D}  &  :=u_{\mathbf{d}}\left(  a,a^{\dag}\right)  =\Xi\left(
d_{1}\right)  \dots\Xi\left(  d_{j}\right)  .
\end{align*}

\end{notation}

\begin{lemma}
\label{lem.3.18}To each monomial operator $\mathcal{A}=u_{\mathbf{b}}\left(
a,a^{\dag}\right)  $ as in Notation \ref{not.3.17}, there exists
$c_{\mathcal{A}}:\mathbb{N}_{0}\rightarrow\lbrack0,\infty)$ such that
\begin{equation}
\mathcal{A}\Omega_{n}=c_{\mathcal{A}}\left(  n\right)  \cdot\Omega_{n+l}\text{
for all }n\in\mathbb{N}_{0} \label{equ.3.20}%
\end{equation}
where (as above) $\Omega_{m}:=0$ if $m<0.$ Moreover, $c_{\mathcal{A}}$
satisfies $c_{\mathcal{A}^{\dag}}\left(  n\right)  =c_{\mathcal{A}}\left(
n-l\right)  $ (where by convention $c_{\mathcal{A}}\left(  n\right)  \equiv0$
if $n<0),$
\begin{equation}
0\leq c_{\mathcal{A}}\left(  n\right)  \leq\left(  n+q\right)  ^{\frac{k}{2}%
}\text{ and }c_{\mathcal{A}}\left(  n\right)  \asymp n^{k/2}\text{ (i.e. }%
\lim_{n\rightarrow\infty}\frac{c_{\mathcal{A}}\left(  n\right)  }{n^{k/2}}=1).
\label{equ.3.21}%
\end{equation}

\end{lemma}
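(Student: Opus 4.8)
The plan is to prove all three assertions by iterating the ladder relations (\ref{equ.3.15})--(\ref{equ.3.17}) of Theorem \ref{the.3.14}, reading the word $\mathbf{b}$ from right to left so that $\Xi\left(b_{k}\right)$ acts on $\Omega_{n}$ first. Concretely, I would set $s_{0}:=n$ and, for $1\le i\le k$, let $s_{i}:=s_{i-1}+\left(1_{b_{k+1-i}=\theta^{\ast}}-1_{b_{k+1-i}=\theta}\right)$ be the index reached after the first $i$ operators have acted, so that $s_{k}=n+\ell\left(\mathbf{b}\right)=n+l$. A straightforward induction on $i$, using (\ref{equ.3.15}), (\ref{equ.3.16}) and the convention $\Omega_{m}:=0$ for $m<0$, shows that $\Xi\left(b_{k+1-i}\right)\cdots\Xi\left(b_{k}\right)\Omega_{n}=\left(\prod_{j=1}^{i}t_{j}\right)\Omega_{s_{i}}$, where $t_{j}=\sqrt{s_{j}}$ if $b_{k+1-j}=\theta^{\ast}$ and $t_{j}=\sqrt{s_{j-1}}$ if $b_{k+1-j}=\theta$; in either case $t_{j}=\sqrt{\max\left(s_{j-1},s_{j}\right)}\ge 0$, with $t_{j}=0$ precisely when an $a$ is applied to $\Omega_{0}$. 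Taking $i=k$ then gives (\ref{equ.3.20}) with
\[
c_{\mathcal{A}}\left(n\right):=\prod_{j=1}^{k}t_{j}\ge 0,
\]
which is $0$ whenever $n+l<0$ (then some $s_{i}<0$, forcing an earlier factor $\sqrt{0}$); this is the value assigned in the degenerate case where $\Omega_{n+l}=0$ fails to pin down $c_{\mathcal{A}}\left(n\right)$. The base case $k=1$ is just $a\Omega_{n}=\sqrt{n}\,\Omega_{n-1}$, $a^{\dag}\Omega_{n}=\sqrt{n+1}\,\Omega_{n+1}$.

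The two bounds in (\ref{equ.3.21}) would follow by reading off size information from the $s_{i}$. Since $\mathbf{b}$ contains only $q=q\left(\mathbf{b}\right)$ letters equal to $\theta^{\ast}$, the index can be raised above $n$ at most $q$ times, so $0\le s_{i}\le n+q$ for every $i$ for which the product is nonzero; hence each $t_{j}=\sqrt{\max\left(s_{j-1},s_{j}\right)}\le\sqrt{n+q}$, and multiplying the $k$ factors gives $0\le c_{\mathcal{A}}\left(n\right)\le\left(n+q\right)^{k/2}$. For the asymptotic statement I would observe that $\left\vert s_{i}-n\right\vert\le k$ for all $i$, so for $n>k$ every $s_{i}\ge n-k>0$, no factor vanishes, and
\[
\frac{c_{\mathcal{A}}\left(n\right)}{n^{k/2}}=\prod_{j=1}^{k}\sqrt{\frac{\max\left(s_{j-1},s_{j}\right)}{n}}\longrightarrow 1\quad\text{as }n\to\infty,
\]
since each factor is of the form $\sqrt{1+O\left(1/n\right)}$; this is exactly $c_{\mathcal{A}}\left(n\right)\asymp n^{k/2}$.

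For the adjoint relation, I would first record that the substitution $P\mapsto P\left(a,a^{\dag}\right)$ carries the involution on $\mathbb{C}\left\langle\theta,\theta^{\ast}\right\rangle$ to the formal adjoint on $\mathcal{S}$ (with $a^{\dag}$ the formal adjoint of $a$), so $\mathcal{A}^{\dag}=u_{\mathbf{b}^{\ast}}\left(a,a^{\dag}\right)$ with $\mathbf{b}^{\ast}=\left(b_{k}^{\ast},\dots,b_{1}^{\ast}\right)$; reversing the word and starring each letter interchanges $\theta$ and $\theta^{\ast}$, so $\ell\left(\mathbf{b}^{\ast}\right)=-\ell\left(\mathbf{b}\right)=-l$ and the first part already gives $\mathcal{A}^{\dag}\Omega_{n}=c_{\mathcal{A}^{\dag}}\left(n\right)\Omega_{n-l}$. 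For $n\ge l$, pairing against $\Omega_{n-l}$ and using $\left\Vert\Omega_{n}\right\Vert=1$ yields
\[
c_{\mathcal{A}^{\dag}}\left(n\right)=\left\langle\mathcal{A}^{\dag}\Omega_{n},\Omega_{n-l}\right\rangle=\left\langle\Omega_{n},\mathcal{A}\Omega_{n-l}\right\rangle=\left\langle\Omega_{n},c_{\mathcal{A}}\left(n-l\right)\Omega_{n}\right\rangle=c_{\mathcal{A}}\left(n-l\right),
\]
while for $0\le n<l$ both sides vanish by the convention $c_{\mathcal{A}}\left(m\right)\equiv 0$ for $m<0$. No step here is deep; the only points that require care are the bookkeeping around the convention $\Omega_{m}=0$ for $m<0$ (so that $c_{\mathcal{A}}$ is genuinely well defined, with value $0$ in the degenerate cases) and the identification $\mathcal{A}^{\dag}=u_{\mathbf{b}^{\ast}}\left(a,a^{\dag}\right)$, both of which I expect to be the main, mild, obstacles.
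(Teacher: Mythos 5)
Your proof is correct and follows essentially the same route as the paper: iterating the ladder relations to obtain the product formula $c_{\mathcal{A}}\left(n\right)=\sqrt{\prod_{j}\left(n+\delta_{j}\right)}$ with $\delta_{j}\leq q$ (your $\delta_{j}=\max\left(s_{j-1},s_{j}\right)-n$), reading off the bound and the asymptotics from that formula, and deriving $c_{\mathcal{A}^{\dag}}\left(n\right)=c_{\mathcal{A}}\left(n-l\right)$ by pairing $\mathcal{A}^{\dag}\Omega_{n}$ against $\Omega_{n-l}$ via the formal adjoint. Your version is merely more explicit about the bookkeeping of the intermediate indices and the degenerate cases where the convention $\Omega_{m}=0$ for $m<0$ intervenes, which the paper leaves to the reader.
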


\begin{proof}
Since $a$ and $a^{\dag}$ shift $\Omega_{n}$ to its adjacent $\Omega_{n-1}$ and
$\Omega_{n+1}$ respectively from Theorem \ref{the.3.14}, it is easy to see
that Eq. (\ref{equ.3.20}) holds for some constants $c_{\mathcal{A}}\left(
n\right)  \in\mathbb{R}.$ Moreover a simple induction argument on $k$ shows
there exists $\delta_{i}\in\mathbb{Z}$ with $\delta_{i}\leq q$ such that
\begin{equation}
c_{\mathcal{A}}\left(  n\right)  =\left(  \sqrt{\prod_{i=1}^{k}\left(
n+\delta_{i}\right)  }\right)  \geq0. \label{equ.3.22}%
\end{equation}
The estimate and the limit statement in Eq. (\ref{equ.3.21}) now follows
directly from the Eq. (\ref{equ.3.22}).

Since $\mathcal{A}^{\dag}\Omega_{n}=c_{\mathcal{A}^{\dag}}\left(  n\right)
\Omega_{n-l},$ we find
\[
c_{\mathcal{A}^{\dag}}\left(  n\right)  =\left\langle \mathcal{A}^{\dag}%
\Omega_{n},\Omega_{n-l}\right\rangle =\left\langle \Omega_{n},\mathcal{A}%
\Omega_{n-l}\right\rangle =\left\langle \Omega_{n},c_{\mathcal{A}}\left(
n-l\right)  \Omega_{n}\right\rangle =c_{\mathcal{A}}\left(  n-l\right)  .
\]

\end{proof}

\begin{example}
\label{exa.3.20}Suppose that $p,q\in\mathbb{N}_{0},$ $k=p+q,$ $\ell=q-p,$ and
$\mathcal{A}=a^{p}a^{\dag q}.$ Then
\begin{align*}
\mathcal{A}\Omega_{n}  &  =a^{p}a^{\dag q}\Omega_{n}=a^{p}\sqrt{\prod
_{i=1}^{q}\left(  n+i\right)  }\cdot\Omega_{n+q}\\
&  =\sqrt{\prod_{i=1}^{q}\left(  n+i\right)  }\cdot a^{p}\Omega_{n+q}%
=\sqrt{\prod_{i=1}^{q}\left(  n+i\right)  }\sqrt{\prod_{j=0}^{p-1}\left(
n+q-j\right)  }\Omega_{n+\ell}%
\end{align*}
where
\begin{equation}
0\leq c_{\mathcal{A}}\left(  n\right)  =\sqrt{\prod_{i=1}^{q}\left(
n+i\right)  }\cdot\sqrt{\prod_{j=0}^{p-1}\left(  n+q-j\right)  }\leq\left(
n+q\right)  ^{\frac{k}{2}}. \label{equ.3.23}%
\end{equation}

\end{example}

\begin{definition}
\label{def.3.21}For $\beta\geq0,$ let
\[
D_{\beta}:=\left\{  f\in L^{2}\left(  \mathbb{R}\right)  :\sum_{n=0}^{\infty
}\left\vert \left\langle f,\Omega_{n}\right\rangle \right\vert ^{2}n^{2\beta
}<\infty\right\}  .
\]
[We will see shortly that $D_{\beta}=D\left(  \mathcal{N}^{\beta}\right)  ,$
see Example \ref{exa.3.28}.]
\end{definition}

\begin{theorem}
\label{the.3.22}Let $k=\deg_{\theta}u_{\mathbf{b}}\left(  \theta,\theta^{\ast
}\right)  ,$ $\mathcal{A}=u_{\mathbf{b}}\left(  a,a^{\dag}\right)  ,$
$l=\ell\left(  \mathbf{b}\right)  \in\mathbb{Z}$ be as in Notations
\ref{not.3.17} and \ref{not.3.1} and $c_{\mathcal{A}}\left(  n\right)  $ be
coefficients in Lemma \ref{lem.3.18}. Then $\mathcal{A}$ and $\mathcal{A}%
^{\dag}$ are closable operators satisfying;

\begin{enumerate}
\item $\mathcal{\bar{A}=A}^{\dag\ast}$ and $\overline{\mathcal{A}^{\dag}%
}=\mathcal{A}^{\ast}$ where we write $\mathcal{A}\mathbf{^{\dag}}^{\ast}$ for
$\left(  \mathcal{A}^{\dag}\right)  ^{\ast}.$

\item $D\left(  \mathcal{\bar{A}}\right)  =D_{k/2}=D\left(  \overline
{\mathcal{A}^{\dag}}\right)  $ and if $g\in D_{k/2},$ then
\begin{align}
\mathcal{A}^{\ast}g  &  =\sum_{n=0}^{\infty}\left\langle g,\Omega
_{n}\right\rangle \mathcal{A}^{\dag}\Omega_{n}=\sum_{n=0}^{\infty}\left\langle
g,\Omega_{n}\right\rangle c_{\mathcal{A}}\left(  n-l\right)  \Omega
_{n-l}\text{ and}\label{equ.3.24}\\
\mathcal{A}\mathbf{^{\dag}}^{\ast}g  &  =\mathcal{\bar{A}}g=\sum_{n=0}%
^{\infty}\left\langle g,\Omega_{n}\right\rangle \mathcal{A}\Omega_{n}%
=\sum_{n=0}^{\infty}\left\langle g,\Omega_{n}\right\rangle c_{\mathcal{A}%
}\left(  n\right)  \Omega_{n+l} \label{equ.3.25}%
\end{align}
with the conventions that $c_{\mathcal{A}}\left(  n\right)  $ and $\Omega
_{n}=0$ if $n<0.$

\item The subspace,
\begin{equation}
\mathcal{S}_{0}:=\operatorname{span}\left\{  \Omega_{n}\right\}
_{n=0}^{\infty}\subset\mathcal{S}\subset L^{2}\left(  m\right)
\label{equ.3.26}%
\end{equation}
is a core of both $\mathcal{\bar{A}}$ and $\overline{\mathcal{A}^{\dag}}.$
More explicitly if $g\in D_{k/2},$ then
\[
\mathcal{\bar{A}}g=\lim_{N\rightarrow\infty}\mathcal{AP}_{N}g\text{ and
}\overline{\mathcal{A}^{\dag}}g=\lim_{N\rightarrow\infty}\mathcal{A}^{\dag
}\mathcal{P}_{N}g\text{ }%
\]
where $\mathcal{P}_{N}$ is the orthogonal projection operator onto
$\operatorname*{span}\left\{  \Omega_{k}\right\}  _{k=0}^{n}$ as in Notation
\ref{not.3.16}.
\end{enumerate}
\end{theorem}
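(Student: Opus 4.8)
The plan is to work from the explicit action of $\mathcal{A}$ and $\mathcal{A}^{\dag}$ on the orthonormal basis $\{\Omega_{n}\}$ established in Lemma \ref{lem.3.18}, namely $\mathcal{A}\Omega_{n}=c_{\mathcal{A}}(n)\Omega_{n+l}$ and $\mathcal{A}^{\dag}\Omega_{n}=c_{\mathcal{A}^{\dag}}(n)\Omega_{n-l}=c_{\mathcal{A}}(n-l)\Omega_{n-l}$, together with the polynomial growth bound $c_{\mathcal{A}}(n)\asymp n^{k/2}$, so that $c_{\mathcal{A}}(n)\le C(1+n)^{k/2}$ for a suitable constant. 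First I would observe that $\mathcal{A}$ and $\mathcal{A}^{\dag}$ are densely defined (domain $\mathcal{S}\supset\mathcal{S}_0$) and are formal adjoints of one another on $\mathcal{S}$ (this is Definition \ref{def.1.2} applied to monomial operators), so each is contained in the adjoint of the other; since adjoints are always closed, both $\mathcal{A}$ and $\mathcal{A}^{\dag}$ are closable. Then I would introduce the candidate operator $B$ with domain $D_{k/2}$ defined by the norm-convergent series $Bg:=\sum_{n}\langle g,\Omega_n\rangle c_{\mathcal{A}}(n)\Omega_{n+l}$; this series converges in $L^2(m)$ precisely because $\sum_n|\langle g,\Omega_n\rangle|^2 c_{\mathcal{A}}(n)^2\le C^2\sum_n|\langle g,\Omega_n\rangle|^2(1+n)^{k}<\infty$ for $g\in D_{k/2}$, using $c_{\mathcal{A}}(n)\le C(1+n)^{k/2}$ and the definition of $D_{k/2}$ (and, symmetrically, the companion operator $B'$ built from $c_{\mathcal{A}}(n-l)$).

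The core of the argument is then to identify $B$ with $\mathcal{A}^{\dag\ast}=\bar{\mathcal{A}}$ and $B'$ with $\mathcal{A}^{\ast}=\overline{\mathcal{A}^{\dag}}$. For the inclusion $\mathcal{A}^{\dag\ast}\subseteq B$: if $g\in D(\mathcal{A}^{\dag\ast})$, then for every $n$ we have $\langle \mathcal{A}^{\dag\ast}g,\Omega_{n+l}\rangle=\langle g,\mathcal{A}^{\dag}\Omega_{n+l}\rangle=\langle g,c_{\mathcal{A}}(n)\Omega_n\rangle=\overline{\langle g,\Omega_n\rangle}\,c_{\mathcal{A}}(n)$ — wait, more carefully, $\langle \mathcal{A}^{\dag\ast}g,\Omega_{m}\rangle=\langle g,\mathcal{A}^{\dag}\Omega_m\rangle=c_{\mathcal{A}}(m-l)\langle g,\Omega_{m-l}\rangle$, which exhibits the Fourier coefficients of $\mathcal{A}^{\dag\ast}g$ as exactly those of the series defining $Bg$, and in particular forces $g\in D_{k/2}$ and $\mathcal{A}^{\dag\ast}g=Bg$. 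For the reverse inclusion $B\subseteq\mathcal{A}^{\dag\ast}$: one checks directly that for $g\in D_{k/2}$ and any $h\in\mathcal{S}_0$, $\langle Bg,h\rangle=\langle g,\mathcal{A}^{\dag}h\rangle$ (a finite sum computation using the basis expansion of $h$), and then extends from $h\in\mathcal{S}_0$ to all $h\in\mathcal{S}=D(\mathcal{A}^{\dag})$ by approximating $h$ by $\mathcal{P}_N h$ and using that $\mathcal{A}^{\dag}\mathcal{P}_N h\to \mathcal{A}^{\dag}h$ (finite-rank truncation of a convergent expansion — this is where one verifies $\mathcal{S}_0$ is a core for $\overline{\mathcal{A}^{\dag}}$). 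The same argument with roles of $\mathcal{A}$ and $\mathcal{A}^{\dag}$ swapped gives $\mathcal{A}^{\ast}=B'$ with domain $D_{k/2}$ (the shift $n\mapsto n\pm l$ does not change which Sobolev-type space appears, since $(1+n)^k$ and $(1+n\mp l)^k$ are comparable). Since $\mathcal{A}$ is closable, $\bar{\mathcal{A}}=\mathcal{A}^{\ast\ast}$, and because $\mathcal{A}^{\dag}$ and $\mathcal{A}$ are formal adjoints with $\mathcal{S}_0$ a core, one gets $\bar{\mathcal{A}}=(\mathcal{A}^{\dag})^{\ast}=B$, i.e.\ part (1) and the domain identities in part (2). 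Part (3), the core statement, then follows since $Bg=\lim_{N\to\infty}B\mathcal{P}_N g=\lim_{N\to\infty}\mathcal{A}\mathcal{P}_N g$ by the norm convergence of the defining series (the partial sum over $n\le N$ is exactly $\mathcal{A}\mathcal{P}_N g$), and likewise for $\overline{\mathcal{A}^{\dag}}$.

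The main obstacle I anticipate is the careful bookkeeping around the index shift by $l\in\mathbb{Z}$ (which may be positive or negative) and the convention $\Omega_m=0$, $c_{\mathcal{A}}(m)=0$ for $m<0$: one must make sure that the series $\sum_n\langle g,\Omega_n\rangle c_{\mathcal{A}}(n)\Omega_{n+l}$ and its "adjoint" counterpart are indexed consistently, that the identity $c_{\mathcal{A}^{\dag}}(n)=c_{\mathcal{A}}(n-l)$ from Lemma \ref{lem.3.18} is applied with the correct argument, and that when checking $\langle Bg,h\rangle=\langle g,\mathcal{A}^{\dag}h\rangle$ no boundary terms are lost near $n=0$. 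The functional-analytic skeleton — closability via formal adjoints, identification of the closure with an explicit series operator on a weighted $\ell^2$ domain, and the core property via finite-rank truncations — is otherwise routine, and the only quantitative input needed is the bound $c_{\mathcal{A}}(n)\le C(1+n)^{k/2}$ together with $c_{\mathcal{A}}(n)\asymp n^{k/2}$, both already in hand from Lemma \ref{lem.3.18}.
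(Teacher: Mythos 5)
Your proposal is correct and follows essentially the same route as the paper's proof: both identify $\mathcal{A}^{\ast}$ and $\mathcal{A}^{\dag\ast}$ by expanding in the $\{\Omega_n\}$ basis, use $c_{\mathcal{A}}(n)\asymp n^{k/2}$ (upper bound for convergence of the series on $D_{k/2}$, the asymptotic lower bound to show membership in $D(\mathcal{A}^{\dag\ast})$ forces $g\in D_{k/2}$), and obtain the core property and the identities $\bar{\mathcal{A}}=\mathcal{A}^{\dag\ast}$, $\overline{\mathcal{A}^{\dag}}=\mathcal{A}^{\ast}$ from the truncations $\mathcal{A}\mathcal{P}_N g$. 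Your introduction of the auxiliary operator $B$ and the detour through $\mathcal{S}_0$ before extending to $\mathcal{S}$ are only cosmetic variations on the paper's direct computation for $f\in\mathcal{S}$.
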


\begin{proof}
Since $\left\langle \mathcal{A}f,g\right\rangle =\left\langle f,\mathcal{A}%
^{\dag}g\right\rangle $ for all $f,g\in\mathcal{S}=D\left(  \mathcal{A}%
\right)  =D\left(  \mathcal{A}^{\dag}\right)  ,$ it follows that
$\mathcal{A}\subset\mathcal{A}\mathbf{^{\dag}}^{\ast}$ and $\mathcal{A}^{\dag
}\subset\mathcal{A}^{\ast}$ and therefore both $\mathcal{A}$ and
$\mathcal{A}^{\dag}$ are closable (see
\citep[Theorem VIII.1 on p.252]{Reed1980}) and
\begin{equation}
\overline{\mathcal{A}^{\dag}}\subset\mathcal{A}^{\ast}\text{ and
}\mathcal{\bar{A}}\subset\mathcal{A}^{\dag\ast}. \label{equ.3.27}%
\end{equation}
If $g\in D\left(  \mathcal{A}\mathbf{^{\ast}}\right)  \subset L^{2}\left(
m\right)  ,$then from Theorem \ref{the.3.14} and Lemma \ref{lem.3.18}, we
have
\begin{align}
\mathcal{A}^{\ast}g  &  =\sum_{n=0}^{\infty}\left\langle \mathcal{A}^{\ast
}g,\Omega_{n}\right\rangle \Omega_{n}=\sum_{n=0}^{\infty}\left\langle
g,\mathcal{A}\Omega_{n}\right\rangle \Omega_{n}\nonumber\\
&  =\sum_{n=0}^{\infty}\left\langle g,c_{\mathcal{A}}\left(  n\right)
\Omega_{n+l}\right\rangle \Omega_{n}=\sum_{n=0}^{\infty}\left\langle
g,\Omega_{n+l}\right\rangle c_{\mathcal{A}}\left(  n\right)  \Omega
_{n}\nonumber\\
&  =\sum_{n=0}^{\infty}\left\langle g,\Omega_{n}\right\rangle c_{\mathcal{A}%
}\left(  n-l\right)  \Omega_{n-l}=\sum_{n=0}^{\infty}\left\langle g,\Omega
_{n}\right\rangle \mathcal{A}\mathbf{^{\dag}}\Omega_{n}, \label{equ.3.28}%
\end{align}
wherein we have used the conventions stated after Eq. (\ref{equ.3.25})
repeatedly. Since, by Lemma \ref{lem.3.18}, $\left\{  \mathcal{A}%
\mathbf{^{\dag}}\Omega_{n}=c_{\mathcal{A}}\left(  n-l\right)  \Omega
_{n-l}\right\}  _{n=0}^{\infty}$ is an orthogonal set such that
\[
\left\Vert \mathcal{A}\mathbf{^{\dag}}\Omega_{n}\right\Vert _{2}%
^{2}=\left\vert c_{\mathcal{A}}\left(  n-l\right)  \right\vert ^{2}\asymp
n^{k},
\]
it follows that the last sum in Eq. (\ref{equ.3.28}) is convergent iff
\[
\sum_{n=0}^{\infty}\left\vert \left\langle g,\Omega_{n}\right\rangle
\right\vert ^{2}n^{k}<\infty\iff g\in D_{k/2}.
\]
Conversely if $g\in D_{k/2}$ and $f\in\mathcal{S}=D\left(  \mathcal{A}\right)
$ we have,
\begin{align*}
\left\langle \sum_{n=0}^{\infty}\left\langle g,\Omega_{n}\right\rangle
\mathcal{A}\mathbf{^{\dag}}\Omega_{n},f\right\rangle  &  =\sum_{n=0}^{\infty
}\left\langle g,\Omega_{n}\right\rangle \left\langle \mathcal{A}%
\mathbf{^{\dag}}\Omega_{n},f\right\rangle \\
&  =\sum_{n=0}^{\infty}\left\langle g,\Omega_{n}\right\rangle \left\langle
\Omega_{n},\mathcal{A}f\right\rangle =\left\langle g,\mathcal{A}%
f\right\rangle
\end{align*}
from which it follows that $g\in D\left(  \mathcal{A}\mathbf{^{\ast}}\right)
$ and $\mathcal{A}^{\ast}g$ is given as in Eq. (\ref{equ.3.28}).

In summary, we have shown $D\left(  \mathcal{A}\mathbf{^{\ast}}\right)
=D_{k/2}$ and $\mathcal{A}^{\ast}g$ is given by Eq. (\ref{equ.3.28}).
Moreover, from Eq. (\ref{equ.3.28}), if $g\in D_{k/2}$ then
\[
\mathcal{A}^{\ast}g=\lim_{N\rightarrow\infty}\sum_{n=0}^{N}\left\langle
g,\Omega_{n}\right\rangle \mathcal{A}\mathbf{^{\dag}}\Omega_{n}=\lim
_{N\rightarrow\infty}\mathcal{A}^{\dag}\mathcal{P}_{N}g
\]
which implies $g\in D\left(  \overline{\mathcal{A}^{\dag}}\right)  $ and
$\mathcal{A}^{\ast}g=\overline{\mathcal{A}^{\dag}}g,$ i.e. $\mathcal{A}^{\ast
}\subset\overline{\mathcal{A}^{\dag}}.$ Combining this last assertion with the
first inclusion in Eq. (\ref{equ.3.27}) implies and $\mathcal{A}^{\ast
}=\overline{\mathcal{A}^{\dag}}.$ This proves all of the assertions involving
$\mathcal{A}^{\ast}$ and $\overline{\mathcal{A}^{\dag}}.$ We may now complete
the proof by applying these assertions with $\mathcal{A}=u_{\mathbf{b}}\left(
a,a^{\dag}\right)  $ replaced by $\mathcal{A}^{\dag}=u_{\mathbf{b}^{\ast}%
}\left(  a,a^{\dag}\right)  $ and using the facts that $\mathcal{A}^{\dag\dag
}=\mathcal{A},$ $\ell\left(  \mathbf{b}^{\ast}\right)  =-\ell\left(
\mathbf{b}\right)  =-l,$ and $c_{\mathcal{A}^{\dag}}\left(  n\right)
=c_{\mathcal{A}}\left(  n-l\right)  .$
\end{proof}

\begin{theorem}
\label{the.3.23}Let $k=\deg_{\theta}u_{\mathbf{b}}\left(  \theta,\theta^{\ast
}\right)  ,$ $j=\deg_{\theta}u_{\mathbf{d}}\left(  \theta,\theta^{\ast
}\right)  ,$ $\mathcal{A}=u_{\mathbf{b}}\left(  a,a^{\dag}\right)
,\mathcal{\ D}=u_{\mathbf{d}}\left(  a,a^{\dag}\right)  ,$ $\ell\left(
\mathbf{b}\right)  ,$ and $\ell\left(  \mathbf{d}\right)  $ be as in Notations
\ref{not.3.17} and \ref{not.3.1}. Then;

\begin{enumerate}
\item $\overline{\mathcal{AD}}=\mathcal{\bar{A}\bar{D}},$

\item $\left(  \mathcal{AD}\right)  ^{\ast}=\mathcal{D}^{\ast}\mathcal{A}%
^{\ast},$ and

\item $\mathcal{\bar{A}}:=\overline{u_{\mathbf{b}}\left(  a,a^{\dag}\right)
}=u_{\mathbf{b}}\left(  \bar{a},a^{\ast}\right)  ,$ i.e. if $\mathcal{A}$ is a
monomial operator in $a$ and $a^{\dag},$ then $\mathcal{\bar{A}}$ is the
operator resulting from replacing $a$ by $\bar{a}$ and $a^{\dag}$ by $a^{\ast
}$ everywhere in $\mathcal{A}.$
\end{enumerate}
\end{theorem}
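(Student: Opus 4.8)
The plan is to prove the three assertions in the order (3), (1), (2), since (3) provides the cleanest description of the closures and (1)--(2) then follow from Theorem \ref{the.3.22} and a short induction. For item (3), I would argue by induction on $k = \deg_\theta u_{\mathbf{b}}$. The base case $k=0$ is trivial ($\mathcal{A} = I$), and for $k=1$ we have $\mathcal{A} = a$ or $\mathcal{A} = a^{\dag}$, in which case Remark \ref{rem.1.5} (equivalently Theorem \ref{the.3.22} with $k=1$, giving $D(\bar a) = D(\overline{a^{\dag}}) = D_{1/2}$) already identifies $\bar a$ and $\overline{a^{\dag}} = a^{\ast}$. For the inductive step write $\mathbf{b} = (b_1, \mathbf{b}')$ so that $\mathcal{A} = \Xi(b_1) \mathcal{A}'$ with $\mathcal{A}' = u_{\mathbf{b}'}(a,a^{\dag})$ a monomial operator of degree $k-1$; by the inductive hypothesis $\overline{\mathcal{A}'} = u_{\mathbf{b}'}(\bar a, a^{\ast})$. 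The key point is then that $\overline{\Xi(b_1)\mathcal{A}'} = \overline{\Xi(b_1)}\,\overline{\mathcal{A}'}$, which is exactly item (1) in the two-factor case; so in fact (1) and (3) should be proved together by a single induction on total degree. Concretely, I would use the action on the orthonormal basis: by Lemma \ref{lem.3.18}, $\mathcal{A}\Omega_n = c_{\mathcal{A}}(n)\Omega_{n+l}$ with $c_{\mathcal{A}}(n) = \prod_{i}(n+\delta_i)^{1/2}$ built multiplicatively from the factors, and Theorem \ref{the.3.22} tells us $D(\bar{\mathcal{A}}) = D_{k/2}$ with $\bar{\mathcal{A}}g = \sum_n \langle g,\Omega_n\rangle c_{\mathcal{A}}(n)\Omega_{n+l}$.

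For assertion (1), $\overline{\mathcal{AD}} = \bar{\mathcal{A}}\bar{\mathcal{D}}$: the inclusion $\mathcal{AD} \subset \bar{\mathcal{A}}\bar{\mathcal{D}}$ is immediate on $\mathcal{S}$ since $\mathcal{D}$ preserves $\mathcal{S}$ and $\bar{\mathcal{A}}$ extends $\mathcal{A}$, and $\bar{\mathcal{A}}\bar{\mathcal{D}}$ is closed (it is a product of a closed operator with a bounded-below-behaving operator — more carefully, one checks closedness via the basis expansion below), hence $\overline{\mathcal{AD}} \subset \bar{\mathcal{A}}\bar{\mathcal{D}}$. For the reverse inclusion I would use Theorem \ref{the.3.22}(3): $\mathcal{S}_0 = \operatorname{span}\{\Omega_n\}$ is a core for every monomial-operator closure, so given $g \in D(\bar{\mathcal{A}}\bar{\mathcal{D}})$ it suffices to approximate. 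Since $\mathcal{D}\Omega_n = c_{\mathcal{D}}(n)\Omega_{n+\ell(\mathbf{d})}$ and then $\mathcal{A}$ applied to that gives $c_{\mathcal{D}}(n)c_{\mathcal{A}}(n+\ell(\mathbf{d}))\Omega_{n+\ell(\mathbf{d})+\ell(\mathbf{b})}$, one sees $\mathcal{AD}\Omega_n = c_{\mathcal{AD}}(n)\Omega_{n+\ell(\mathbf{b})+\ell(\mathbf{d})}$ with $c_{\mathcal{AD}}(n) = c_{\mathcal{A}}(n+\ell(\mathbf{d}))\,c_{\mathcal{D}}(n) \asymp n^{(k+j)/2}$, so $D(\overline{\mathcal{AD}}) = D_{(k+j)/2}$ by Theorem \ref{the.3.22}(2) applied to the single monomial $u_{(\mathbf{b},\mathbf{d})}$. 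It then remains to check $D(\bar{\mathcal{A}}\bar{\mathcal{D}}) = D_{(k+j)/2}$ as well, which is a direct computation with the $D_\beta$ norms: $g \in D(\bar{\mathcal{D}}) = D_{j/2}$ puts $\bar{\mathcal{D}}g$ into $L^2$ with coefficients $\langle \bar{\mathcal{D}}g, \Omega_{n}\rangle = c_{\mathcal{D}}(n-\ell(\mathbf{d}))\langle g,\Omega_{n-\ell(\mathbf{d})}\rangle$, and then $\bar{\mathcal{D}}g \in D(\bar{\mathcal{A}}) = D_{k/2}$ translates (after shifting the index) into $\sum_n |\langle g,\Omega_n\rangle|^2 n^{j} n^{k} < \infty$, i.e. $g \in D_{(k+j)/2}$; on this common domain both operators act by the same absolutely convergent series, so they agree.

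Assertion (2), $(\mathcal{AD})^{\ast} = \mathcal{D}^{\ast}\mathcal{A}^{\ast}$, follows from (1) together with Theorem \ref{the.3.22}(1). Indeed $\mathcal{AD} = u_{(\mathbf{b},\mathbf{d})}(a,a^{\dag})$ is itself a monomial operator with $(\mathbf{b},\mathbf{d})^{\ast} = (\mathbf{d}^{\ast},\mathbf{b}^{\ast})$, so Theorem \ref{the.3.22}(1) gives $(\mathcal{AD})^{\ast} = \overline{(\mathcal{AD})^{\dag}} = \overline{u_{(\mathbf{d}^{\ast},\mathbf{b}^{\ast})}(a,a^{\dag})} = \overline{\mathcal{D}^{\dag}\mathcal{A}^{\dag}}$, and then applying (1) to the pair $(\mathcal{D}^{\dag},\mathcal{A}^{\dag})$ yields $\overline{\mathcal{D}^{\dag}\mathcal{A}^{\dag}} = \overline{\mathcal{D}^{\dag}}\,\overline{\mathcal{A}^{\dag}} = \mathcal{D}^{\ast}\mathcal{A}^{\ast}$, using Theorem \ref{the.3.22}(1) once more to rewrite $\overline{\mathcal{D}^{\dag}} = \mathcal{D}^{\ast}$ and $\overline{\mathcal{A}^{\dag}} = \mathcal{A}^{\ast}$. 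I expect the main obstacle to be the bookkeeping in part (1): one must be careful that the index shifts $\ell(\mathbf{b})$, $\ell(\mathbf{d})$ (which can be negative) interact correctly with the truncation conventions $\Omega_m = 0$, $c_{\mathcal{A}}(n) = 0$ for $n<0$, so that the domain identification $D(\bar{\mathcal{A}}\bar{\mathcal{D}}) = D_{(k+j)/2}$ genuinely holds rather than just an inclusion — in particular one should verify that $\bar{\mathcal{D}}$ does not map a nonzero vector of $D_{(k+j)/2}$ outside $D(\bar{\mathcal{A}})$ and conversely that nothing in $D(\overline{\mathcal{AD}})$ is lost. The growth estimate $c_{\mathcal{A}}(n) \asymp n^{k/2}$ from Eq. \eqref{equ.3.21} is what makes these domains literally equal, and it is worth stating explicitly that the multiplicativity $c_{\mathcal{AD}}(n) = c_{\mathcal{A}}(n+\ell(\mathbf{d}))c_{\mathcal{D}}(n)$ is itself just the associativity of composition on $\mathcal{S}_0$.
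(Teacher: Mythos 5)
Your proposal is correct and follows essentially the same route as the paper: identify $D(\overline{\mathcal{AD}})=D_{(k+j)/2}$ via Theorem \ref{the.3.22}, show $D(\bar{\mathcal{A}}\bar{\mathcal{D}})$ coincides with it by the basis computation $\langle\bar{\mathcal{D}}f,\Omega_n\rangle=c_{\mathcal{D}^\dag}(n)\langle f,\Omega_{n-\ell(\mathbf{d})}\rangle$ together with $|c_{\mathcal{D}^\dag}(n)|^2\asymp n^{j}$, verify both operators act by the same series, and then deduce (2) from $(\mathcal{AD})^{\ast}=\overline{(\mathcal{AD})^{\dag}}=\overline{\mathcal{D}^{\dag}\mathcal{A}^{\dag}}$ and (3) by induction on the degree using (1). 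The only cosmetic difference is the order of presentation (the paper proves (1), (2), (3) in sequence, with (3) as the induction), and your parenthetical appeal to closedness of $\bar{\mathcal{A}}\bar{\mathcal{D}}$ is unnecessary since the domain-plus-agreement argument you give afterwards already yields equality.
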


\begin{proof}
Because of the conventions described after Eq. (\ref{equ.3.25}), in the
argument below it will be easier to view all sums over $n\in\mathbb{Z}$
instead of $n\in\mathbb{N}_{0}.$ We will denote all of these infinite sums
simply as $\sum_{n}.$ We now prove each item in turn.

\begin{enumerate}
\item Since $\mathcal{AD}$ is a monomial operator of degree $k+j$ it follows
from Theorem \ref{the.3.22} that $D\left(  \overline{\mathcal{AD}}\right)
=D_{\left(  k+j\right)  /2}.$ On the other hand, $f\in D\left(  \mathcal{\bar
{A}\bar{D}}\right)  $ iff $f\in D\left(  \mathcal{\bar{D}}\right)  =D_{j/2}$
and $\overline{\mathcal{D}}f\in D\left(  \mathcal{\bar{A}}\right)  =D_{k/2}.$
Moreover, $\overline{\mathcal{D}}f=\mathcal{D}^{\dag\ast}f\in D\left(
\mathcal{\bar{A}}\right)  =D_{k/2}$ iff
\begin{align}
\infty &  >\sum_{n}\left\vert \left\langle \overline{\mathcal{D}}f,\Omega
_{n}\right\rangle \right\vert ^{2}n^{k}=\sum_{n}\left\vert \left\langle
f,\mathcal{D^{\dag}}\Omega_{n}\right\rangle \right\vert ^{2}n^{k}\nonumber\\
&  =\sum_{n}\left\vert \left\langle f,\Omega_{n-\ell\left(  \mathbf{d}\right)
}\right\rangle \right\vert ^{2}\left\vert c_{\mathcal{D}^{\dag}}\left(
n\right)  \right\vert ^{2}n^{k}. \label{equ.3.29}%
\end{align}
However, by Lemma \ref{lem.3.18} we know $\left\vert c_{\mathcal{D}^{\dag}%
}\left(  n\right)  \right\vert ^{2}\asymp n^{j}$ and so the condition in Eq.
(\ref{equ.3.29}) is the same as saying $f\in D_{\left(  k+j\right)  /2}. $
Thus we have shown $D\left(  \mathcal{\bar{A}\bar{D}}\right)  =D\left(
\overline{\mathcal{AD}}\right)  .$ Moreover, if $f\in D_{\left(  k+j\right)
/2},$ then by Theorem \ref{the.3.22} and Lemma \ref{lem.3.18} we find,
\begin{align}
\mathcal{\bar{A}\bar{D}}f  &  =\sum_{n}\left\langle \overline{\mathcal{D}%
}f,\Omega_{n}\right\rangle \mathcal{A}\Omega_{n}=\sum_{n}\left\langle
f,\mathcal{D}^{\dag}\Omega_{n}\right\rangle \mathcal{A}\Omega_{n}\nonumber\\
&  =\sum_{n}\left\langle f,c_{\mathcal{D}}\left(  n-\ell\left(  \mathbf{d}%
\right)  \right)  \Omega_{n-\ell\left(  \mathbf{d}\right)  }\right\rangle
\mathcal{A}\Omega_{n}\nonumber\\
&  =\sum_{n}\left\langle f,\Omega_{n}\right\rangle \mathcal{A}c_{\mathcal{D}%
}\left(  n\right)  \Omega_{n+\ell\left(  \mathbf{d}\right)  }\nonumber\\
&  =\sum_{n}\left\langle f,\Omega_{n}\right\rangle \mathcal{AD}\Omega
_{n}=\overline{\mathcal{AD}}f. \label{equ.3.30}%
\end{align}

\item By item 1. of Theorem \ref{the.3.22} and item 1. of this theorem,
\[
\left(  \mathcal{AD}\right)  ^{\ast}=\overline{\left(  \mathcal{AD}\right)
^{\dag}}=\overline{\mathcal{D}^{\dag}\mathcal{A^{\dag}}}=\overline
{\mathcal{D}^{\dag}}\overline{\mathcal{A^{\dag}}}=\mathcal{D}^{\ast
}\mathcal{A}^{\ast}.
\]

\item This follows by induction on $k=\deg_{\theta}u_{\mathbf{b}}$ making use
of item 1. of Theorem \ref{the.3.22} and item 1.
\end{enumerate}
\end{proof}

\begin{corollary}
[Diagonal form of the Number Operator]\label{cor.3.26}If $\mathcal{N=}%
u_{\left(  \theta^{\ast},\theta\right)  }\left(  \bar{a},a^{\ast}\right)
=a^{\ast}\bar{a}$ as in Definition \ref{def.1.4}, then by $\mathcal{N}%
=\overline{a^{\dag}a},$
\[
D\left(  \mathcal{N}\right)  =D_{1}=\left\{  f\in L^{2}\left(  m\right)
:\sum_{n=0}^{\infty}n^{2}\left\vert \left\langle f,\Omega_{n}\right\rangle
\right\vert ^{2}<\infty\right\}  ,
\]
and for $f\in D\left(  \mathcal{N}\right)  ,$
\[
\mathcal{N}f=\sum_{n=0}^{\infty}n\left\langle f,\Omega_{n}\right\rangle
\Omega_{n}.
\]

\end{corollary}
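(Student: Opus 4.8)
The plan is to obtain the corollary as a direct specialization of Theorems \ref{the.3.22} and \ref{the.3.23} to the word $\mathbf{b}=\left(  \theta^{\ast},\theta\right)  $. First I would observe that $a^{\dag}a=u_{\mathbf{b}}\left(  a,a^{\dag}\right)  $ is a monomial operator with $k=\deg_{\theta}u_{\mathbf{b}}=2$ and $p\left(  \mathbf{b}\right)  =q\left(  \mathbf{b}\right)  =1$, so that $l:=\ell\left(  \mathbf{b}\right)  =q\left(  \mathbf{b}\right)  -p\left(  \mathbf{b}\right)  =0$. Applying item 3 of Theorem \ref{the.3.23} gives $\overline{a^{\dag}a}=u_{\mathbf{b}}\left(  \bar{a},a^{\ast}\right)  =a^{\ast}\bar{a}$, which is precisely $\mathcal{N}$ as defined in Definition \ref{def.1.4}; this is the identity $\mathcal{N}=\overline{a^{\dag}a}$ invoked in the statement, and it is worth recording explicitly since two a priori different descriptions of $\mathcal{N}$ are being used.

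Next I would invoke item 2 of Theorem \ref{the.3.22} with $\mathcal{A}=u_{\mathbf{b}}\left(  a,a^{\dag}\right)  $ and $k=2$. This immediately yields the domain statement $D\left(  \mathcal{N}\right)  =D\left(  \overline{\mathcal{A}}\right)  =D_{k/2}=D_{1}$, and, for $f\in D_{1}$, the series representation in Eq. (\ref{equ.3.25}),
\[
\mathcal{N}f=\overline{\mathcal{A}}f=\sum_{n=0}^{\infty}\left\langle f,\Omega_{n}\right\rangle c_{\mathcal{A}}\left(  n\right)  \Omega_{n+l}=\sum_{n=0}^{\infty}\left\langle f,\Omega_{n}\right\rangle c_{\mathcal{A}}\left(  n\right)  \Omega_{n},
\]
using $l=0$. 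It then remains only to identify the coefficient function $c_{\mathcal{A}}$. From Eq. (\ref{equ.3.17}) of Theorem \ref{the.3.14} we have $\mathcal{A}\Omega_{n}=a^{\dag}a\,\Omega_{n}=n\,\Omega_{n}$, so comparing with Eq. (\ref{equ.3.20}) gives $c_{\mathcal{A}}\left(  n\right)  =n$. Substituting this into the display above produces $\mathcal{N}f=\sum_{n=0}^{\infty}n\left\langle f,\Omega_{n}\right\rangle \Omega_{n}$, which is the desired formula.

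I do not anticipate any real obstacle: the corollary is pure bookkeeping on top of the two preceding theorems, the only points to verify being that $\theta^{\ast}\theta$ has $\theta$-degree $2$, net shift $\ell=0$, and coefficient function $c_{\mathcal{A}}(n)=n$. If one wished to make the statement completely self-contained, one could also remark that the description $D\left(  \mathcal{N}\right)  =D_{1}$ with $\mathcal{N}f=\sum_{n}n\langle f,\Omega_{n}\rangle\Omega_{n}$ is manifestly that of a self-adjoint operator diagonalized by the orthonormal basis $\{\Omega_{n}\}$, consistent with Remark \ref{rem.1.5}.
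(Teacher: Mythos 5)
Your proposal is correct and follows essentially the same route as the paper: the identity $\mathcal{N}=\overline{a^{\dag}a}$ via item 3 of Theorem \ref{the.3.23}, the domain identification $D\left(\mathcal{N}\right)=D_{k/2}=D_{1}$ via Theorem \ref{the.3.22}, and the diagonal series from Eq. (\ref{equ.3.25}) together with $a^{\dag}a\Omega_{n}=n\Omega_{n}$. Your explicit bookkeeping of $k=2$, $\ell\left(\mathbf{b}\right)=0$, and $c_{\mathcal{A}}\left(n\right)=n$ is just a slightly more detailed rendering of the same argument.
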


\begin{proof}
Since $\mathcal{N=}u_{\left(  \theta^{\ast},\theta\right)  }\left(  \bar
{a},a^{\ast}\right)  ,$ it follows by Theorem \ref{the.3.23} that
\[
\mathcal{N=}\overline{u_{\left(  \theta^{\ast},\theta\right)  }\left(
a,a^{\dag}\right)  }=\overline{a^{\dag}a}%
\]
and then by Theorem \ref{the.3.22} that $D\left(  \mathcal{N}\right)  =D_{1}.$
Moreover, by items 1 and 2 in the Theorem \ref{the.3.22}, if $f\in D\left(
\mathcal{N}\right)  ,$ then
\[
\mathcal{N}f=\sum_{n=0}^{\infty}\left\langle f,\Omega_{n}\right\rangle
a^{\dag}a\Omega_{n}=\sum_{n=0}^{\infty}n\left\langle f,\Omega_{n}\right\rangle
\Omega_{n}.
\]

\end{proof}

\begin{definition}
[Functional Calculus for $\mathcal{N}$]\label{def.3.27}Given a function
$G:\mathbb{N}_{0}\rightarrow\mathbb{C}$ let $G\left(  \mathcal{N}\right)  $ be
the unique closed operator on $L^{2}\left(  m\right)  $ such that $G\left(
\mathcal{N}\right)  \Omega_{n}:=G\left(  n\right)  \Omega_{n}$ for all
$n\in\mathbb{N}_{0}.$ In more detail,
\begin{equation}
D\left(  G\left(  \mathcal{N}\right)  \right)  :=\left\{  u\in L^{2}\left(
m\right)  :\sum_{n=0}^{\infty}\left\vert G\left(  n\right)  \right\vert
^{2}\left\vert \left\langle u,\Omega_{n}\right\rangle \right\vert ^{2}%
<\infty\right\}  \label{equ.3.31}%
\end{equation}
and for $u\in D\left(  G\left(  \mathcal{N}\right)  \right)  ,$
\[
G\left(  \mathcal{N}\right)  u:=\sum_{n=0}^{\infty}G\left(  n\right)
\left\langle u,\Omega_{n}\right\rangle \Omega_{n}.
\]

\end{definition}

\begin{example}
\label{exa.3.28}If $\beta\geq0,$ then $D\left(  \mathcal{N}^{\beta}\right)
=D_{\beta}$ where $D_{\beta}$ was defined in Definition \ref{def.3.21}.
\end{example}

\begin{notation}
\label{not.3.29}If $J\subset\mathbb{N}_{0}$ and
\[
\mathbf{1}_{J}\left(  n\right)  :=\left\{
\begin{array}
[c]{cc}%
1 & \text{if }n\in J\\
0 & \text{otherwise}%
\end{array}
,\right.
\]
then
\begin{equation}
\mathbf{1}_{J}\left(  \mathcal{N}\right)  f=\sum_{n\in J}\left\langle
f,\Omega_{n}\right\rangle \Omega_{n} \label{equ.3.32}%
\end{equation}
is orthogonal projection onto $\overline{\operatorname*{span}\left\{
\Omega_{n}:n\in J\right\}  }.$ When $J=\left\{  0,1,\dots,N\right\}  ,$ then
$\mathbf{1}_{J}\left(  \mathcal{N}\right)  $ (or also write $1_{\mathcal{N}%
\leq N})$ is precisely the orthogonal projection operator already defined in
Eq. (\ref{equ.3.19}) above.
\end{notation}

At this point it is convenient to introduce a scale of Sobolev type norms on
$L^{2}\left(  m\right)  .$

\begin{notation}
[$\beta$ -- Norms]\label{not.3.30}For $\beta\geq0$ and $f\in L^{2}\left(  m\right),$ let
\begin{equation}
\left\Vert f\right\Vert _{\beta}^{2}:=\sum_{n=0}^{\infty}\left\vert
\left\langle f,\Omega_{n}\right\rangle \right\vert ^{2}\left(  n+1\right)
^{2\beta}. \label{equ.3.33}%
\end{equation}
\end{notation}

\begin{remark}
\label{rem.3.31}From Definition \ref{def.3.27} and Notation \ref{not.3.30}, it
is readily seen that
\begin{align*}
D_{\beta}  &  =D\left(  \mathcal{N}^{\beta}\right)  =\left\{  f\in
L^{2}\left(  m\right)  :\left\Vert f\right\Vert _{\beta}^{2}<\infty\right\}
,\\
\left\Vert f\right\Vert _{\beta}^{2}  &  =\left\Vert \left(  \mathcal{N}%
+I\right)  ^{\beta}f\right\Vert _{L^{2}\left(  m\right)  }^{2}\text{ }%
\forall~f\in D\left(  \mathcal{N}^{\beta}\right)  ,\\
D\left(  \mathcal{N}^{\beta}\right)   &  =D\left(  \left(  \mathcal{N}%
+1\right)  ^{\beta}\right)  \text{ for all }\beta\geq0,\text{ and}\\
\left\Vert \cdot\right\Vert _{\beta_{1}}  &  \leq\left\Vert \cdot\right\Vert
_{\beta_{2}}\text{ and }D\left(  \mathcal{N}^{\beta_{2}}\right)  \subseteq
D\left(  \mathcal{N}^{\beta_{1}}\right)  \text{ for all }0\leq\beta_{1}%
\leq\beta_{2}.
\end{align*}
The normed space, $\left(  D\left(  \mathcal{N}^{\beta}\right)  ,\left\Vert
\cdot\right\Vert _{\beta}\right)  ,$ is a Hilbertian space which is isomorphic
to $\ell^{2}\left(  \mathbb{N}_{0},\mu_{\beta}\right)  $ where $\mu_{\beta
}\left(  n\right)  :=\left(  1+n\right)  ^{2\beta}.$ The isomorphism is given
by the unitary map,
\[
f\in D\left(  \mathcal{N}^{\beta}\right)  \rightarrow\left\{  \left\langle
f,\Omega_{n}\right\rangle \right\}  _{n=0}^{\infty}\in\ell^{2}\left(
\mathbb{N}_{0},\mu_{\beta}\right)  .
\]

\end{remark}

It is well known (see for example, \citep[Theorem 1]{Simon1971}) that
\begin{equation}
\mathcal{S}=\bigcap_{n=0}^{\infty}D\left(  \mathcal{N}^{n}\right)
=\bigcap_{\beta\geq0}D\left(  \mathcal{N}^{\beta}\right)  . \label{equ.3.34}%
\end{equation}
The inclusion $\mathcal{S}\subset\bigcap_{n=0}^{\infty}D\left(  \mathcal{N}%
^{n}\right)  $ is easy to understand since if $n\in\mathbb{N}_{0},$ $\left(
a^{\dag}a+I\right)  ^{n}$ is symmetric on $\mathcal{S}$ and therefore if
$f\in\mathcal{S}$ we have,
\begin{align*}
\left\Vert f\right\Vert _{n}^{2}  &  =\sum_{n=0}^{\infty}\left\vert
\left\langle f,\Omega_{n}\right\rangle \right\vert ^{2}\left(  n+1\right)
^{2n}=\sum_{n=0}^{\infty}\left\vert \left\langle f,\left(  a^{\dag}a+I\right)
^{n}\Omega_{n}\right\rangle \right\vert ^{2}\\
&  =\sum_{n=0}^{\infty}\left\vert \left\langle \left(  a^{\dag}a+I\right)
^{n}f,\Omega_{n}\right\rangle \right\vert ^{2}=\left\Vert \left(  a^{\dag
}a+I\right)  ^{n}f\right\Vert _{L^{2}\left(  m\right)  }^{2}<\infty.
\end{align*}
The following related result will be useful in the sequel.

\begin{proposition}
\label{pro.3.32}The subspace $\mathcal{S}_{0}$ in Eq. (\ref{equ.3.26}) is
dense (and so is $\mathcal{S)}$ in $\left(  D\left(  \mathcal{N}^{\beta
}\right)  ,\left\Vert \cdot\right\Vert _{\beta}\right)  $ for all $\beta\geq0.
$ Moreover, if $f\in D\left(  \mathcal{N}^{\beta}\right)  ,$ then
$\mathcal{P}_{N}f\in\mathcal{S}_{0}$ and $\left\Vert f-\mathcal{P}%
_{N}f\right\Vert _{\beta}\rightarrow0$ as $N\rightarrow\infty.$
\end{proposition}

\begin{proof}
If $f\in D\left(  \mathcal{N}^{\beta}\right)  ,$ then
\[
\sum_{n=0}^{\infty}\left\vert \left\langle f,\Omega_{n}\right\rangle
\right\vert ^{2}\left(  1+n\right)  ^{2\beta}=\left\Vert f\right\Vert _{\beta
}^{2}<\infty
\]
and hence
\[
\left\Vert f-\mathcal{P}_{N}f\right\Vert _{\beta}^{2}=\sum_{n=N+1}^{\infty
}\left\vert \left\langle f,\Omega_{n}\right\rangle \right\vert ^{2}\left(
1+n\right)  ^{2\beta}\rightarrow0\text{ as }N\rightarrow\infty.
\]

\end{proof}

\begin{remark}
\label{rem.3.33}The zero norm, $\left\Vert \cdot\right\Vert _{0},$ is just a
standard $L^{2}\left(  m\right)  $-norm and we will typically drop the
subscript $0$ and simply write $\left\Vert \cdot\right\Vert $ for $\left\Vert
\cdot\right\Vert _{0}=\left\Vert \cdot\right\Vert _{L^{2}\left(  m\right)  }.$
\end{remark}

\begin{remark}
\label{rem.3.34}If $\mathcal{A}=u_{\mathbf{b}}\left(  a,a^{\dag}\right)  $ and
$k=\deg_{\theta}u_{\mathbf{b}}\left(  \theta,\theta^{\ast}\right)  ,$ then by
Eq. (\ref{equ.3.33}) and the Theorem \ref{the.3.22}, we have
\begin{equation}
D\left(  \mathcal{\bar{A}}\right)  =D_{k/2}=D\left(  \mathcal{N}^{\frac{k}{2}%
}\right)  . \label{equ.3.35}%
\end{equation}

\end{remark}

\begin{corollary}
\label{cor.3.36}The following domain statement holds;
\begin{equation}
D\left(  \bar{a}\right)  =D\left(  a^{\ast}\right)  =D\left(  \mathcal{N}%
^{1/2}\right)  =D\left(  M_{x}\right)  \cap D\left(  \partial_{x}\right)  .
\label{equ.3.36}%
\end{equation}
Moreover for $f\in D\left(  \mathcal{N}^{1/2}\right)  ,$
\begin{align}
\bar{a}f  &  =\sum_{n=1}^{\infty}\sqrt{n}\left\langle f,\Omega_{n}%
\right\rangle \Omega_{n-1}\text{ and }\label{equ.3.37}\\
a^{\ast}f  &  =\sum_{n=0}^{\infty}\sqrt{n+1}\left\langle f,\Omega
_{n}\right\rangle \Omega_{n+1}. \label{equ.3.38}%
\end{align}

\end{corollary}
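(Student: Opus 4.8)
The plan is to read the equalities $D(\bar a)=D(a^{*})=D(\mathcal{N}^{1/2})$ and the series formulas (\ref{equ.3.37})--(\ref{equ.3.38}) straight off the monomial‑operator machinery, and then to match this common domain with $D(M_x)\cap D(\partial_x)$ by a short distribution‑theoretic computation in one direction and a standard approximation argument in the other. For the first part I apply Theorem~\ref{the.3.22} to the monomial operator $\mathcal{A}=u_{(\theta)}(a,a^{\dag})=a$, for which $k=\deg_\theta u_{(\theta)}=1$, $\ell((\theta))=-1$, and whose formal adjoint is $\mathcal{A}^{\dag}=a^{\dag}$. Item~(1) of that theorem gives $\bar a=(a^{\dag})^{*}$ and $\overline{a^{\dag}}=a^{*}$; item~(2) gives $D(\bar a)=D_{1/2}=D(a^{*})$ together with $\bar a g=\sum_n\langle g,\Omega_n\rangle\, a\Omega_n$ and $a^{*}g=\overline{a^{\dag}}g=\sum_n\langle g,\Omega_n\rangle\, a^{\dag}\Omega_n$ for $g$ in that domain, and substituting $a\Omega_n=\sqrt n\,\Omega_{n-1}$ and $a^{\dag}\Omega_n=\sqrt{n+1}\,\Omega_{n+1}$ from Eqs.~(\ref{equ.3.15}) and (\ref{equ.3.16}) yields exactly (\ref{equ.3.37}) and (\ref{equ.3.38}). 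Finally Example~\ref{exa.3.28} identifies $D_{1/2}$ with $D(\mathcal{N}^{1/2})$, so $D(\bar a)=D(a^{*})=D(\mathcal{N}^{1/2})$.

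For the inclusion $D(\mathcal{N}^{1/2})\subseteq D(M_x)\cap D(\partial_x)$, let $T$ and $\widetilde T$ be the operators $f\mapsto\tfrac1{\sqrt2}(xf+f')$ and $f\mapsto\tfrac1{\sqrt2}(xf-f')$ on the maximal domains $\{f\in L^{2}(m):xf+f'\in L^{2}(m)\}$ and $\{f\in L^{2}(m):xf-f'\in L^{2}(m)\}$, where $f'$ denotes the distributional derivative. A routine argument (using continuity of multiplication by $x$ and of differentiation on $\mathcal{D}'$) shows $T$ and $\widetilde T$ are closed, and since $a|_{\mathcal{S}}\subseteq T$ and $a^{\dag}|_{\mathcal{S}}\subseteq\widetilde T$ by Eqs.~(\ref{equ.1.5}) and (\ref{equ.1.6}), we get $\bar a\subseteq T$ and $a^{*}=\overline{a^{\dag}}\subseteq\widetilde T$. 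Hence for $f\in D(\mathcal{N}^{1/2})=D(\bar a)=D(a^{*})$, both $xf+f'=\sqrt2\,\bar a f$ and $xf-f'=\sqrt2\,a^{*}f$ lie in $L^{2}(m)$; adding and subtracting gives $xf\in L^{2}(m)$ and $f'\in L^{2}(m)$, i.e. $f\in D(M_x)\cap D(\partial_x)$.

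For the reverse inclusion, given $f\in D(M_x)\cap D(\partial_x)$ I would construct $g_n\in C_c^{\infty}(\mathbb{R})\subseteq\mathcal{S}$ with $g_n\to f$, $M_x g_n\to M_x f$ and $\partial_x g_n\to\partial_x f$ in $L^{2}(m)$: first mollify, noting that $\rho_\epsilon*f$, $\rho_\epsilon*(\partial_x f)$ and $\rho_\epsilon*(M_x f)$ converge to $f$, $\partial_x f$, $M_x f$, while $M_x(\rho_\epsilon*f)$ differs from $\rho_\epsilon*(M_x f)$ by $(M_x\rho_\epsilon)*f$, which is $O(\|M_x\rho_\epsilon\|_{L^{1}})=O(\epsilon)$ in $L^{2}$; then multiply by cut‑offs $\chi(x/R)$ (the commutator term $R^{-1}\chi'(x/R)g_\epsilon$ is $O(1/R)$ in $L^{2}$) and diagonalize. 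Then $a g_n=\tfrac1{\sqrt2}(M_x g_n+\partial_x g_n)\to\tfrac1{\sqrt2}(M_x f+\partial_x f)$ in $L^{2}(m)$, so $f\in D(\bar a)=D(\mathcal{N}^{1/2})$ (and incidentally $\bar a f=\tfrac1{\sqrt2}(M_x f+\partial_x f)$). Equivalently, this step asserts that $\mathcal{S}$ is a core for the closed operator $T$, which combined with $\bar a\subseteq T$ forces $\bar a=T$.

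The only genuinely non‑formal point is this last step — the mollify‑and‑cut‑off verification that $\mathcal{S}$ (or $C_c^{\infty}(\mathbb{R})$) is dense in $D(M_x)\cap D(\partial_x)$ in the graph norm of $M_x$ and $\partial_x$ simultaneously; everything else is bookkeeping on top of Theorems~\ref{the.3.14} and \ref{the.3.22}.
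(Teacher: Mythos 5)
Your proof is correct, but for the substantive inclusion $D(M_x)\cap D(\partial_x)\subseteq D(\mathcal{N}^{1/2})$ you take a genuinely different route from the paper. The paper avoids mollification entirely: for $f\in D(M_x)\cap D(\partial_x)$ it computes the Hermite coefficients directly, $\sqrt{n}\,\langle f,\Omega_n\rangle=\langle f,a^{\dag}\Omega_{n-1}\rangle=\tfrac{1}{\sqrt{2}}\langle (M_x+\partial_x)f,\Omega_{n-1}\rangle$ (integration by parts against the Schwartz functions $\Omega_{n-1}$), so Parseval gives $\sum_n n|\langle f,\Omega_n\rangle|^2=\tfrac12\|(M_x+\partial_x)f\|^2<\infty$ and $f\in D_{1/2}$ in one line, fully inside the basis machinery of Theorem~\ref{the.3.22}. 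You instead prove that $\mathcal{S}$ is a core for the maximal operator $T$ via mollify-and-cut-off; this works (and is the standard PDE-style argument) but is the longest and least ``bookkeeping'' part of your write-up, whereas the paper's duality trick makes it immediate. For the reverse inclusion the two arguments are essentially equivalent in content but packaged differently: the paper approximates $f\in D(\mathcal{N}^{1/2})$ by $\mathcal{P}_m f$ and invokes closedness of $M_x$ and $\partial_x$ on the combinations $\tfrac{1}{\sqrt2}(\bar a\pm a^{*})\mathcal{P}_m f$, while you observe that $\bar a$ and $a^{*}$ are restrictions of the maximal distributional operators $T,\widetilde T$ and add/subtract; both hinge on the same linear-combination identity. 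One point you leave implicit and should state: having $f\in L^2$ with distributional derivative $f'\in L^2$ yields the absolutely continuous representative required by the paper's definition of $D(\partial_x)$ (standard, but it is the bridge between your $T$ and the paper's $d/dx$).
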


\begin{proof}
$D\left(  \bar{a}\right)  =D\left(  a^{\ast}\right)  =D\left(  \mathcal{N}%
^{1/2}\right)  $ is followed by the Eq. (\ref{equ.3.35}) in the Remark
\ref{rem.3.34}. Eqs (\ref{equ.3.37}) and (\ref{equ.3.38}) are consequence from
Theorem \ref{the.3.22}. The only new statement to prove here is that $D\left(
\mathcal{N}^{1/2}\right)  =D\left(  M_{x}\right)  \cap D\left(  \partial
_{x}\right)  .$ If $f\in D\left(  M_{x}\right)  \cap D\left(  \partial
_{x}\right)  $ we have
\begin{align*}
\sqrt{n}\left\langle f,\Omega_{n}\right\rangle  &  =\left\langle f,a^{\dag
}\Omega_{n-1}\right\rangle =\frac{1}{\sqrt{2}}\left\langle f,\left(
M_{x}-\partial_{x}\right)  \Omega_{n-1}\right\rangle \\
&  =\frac{1}{\sqrt{2}}\left\langle \left(  M_{x}+\partial_{x}\right)
f,\Omega_{n-1}\right\rangle
\end{align*}
from which it follows that
\[
\sum_{n=1}^{\infty}\left\vert \sqrt{n}\left\langle f,\Omega_{n}\right\rangle
\right\vert ^{2}=\frac{1}{2}\left\Vert \left(  M_{x}+\partial_{x}\right)
f\right\Vert ^{2}<\infty
\]
and therefore $f\in D\left(  \bar{a}\right)  =D\left(  N^{1/2}\right)  .$
Conversely if $f\in D\left(  \mathcal{N}^{1/2}\right)  $ and we let
$f_{m}:=\sum_{k=0}^{m}\left\langle f,\Omega_{k}\right\rangle \Omega_{k}$ for
all $m\in\mathbb{N},$ then $f_{m}\rightarrow f,$ $\bar{a}f_{m}\rightarrow
\bar{a}f$ and $a^{\ast}f_{m}\rightarrow a^{\ast}f$ in $L^{2}.$ Thus it follows
that in the limit as $m\rightarrow\infty,$
\begin{align*}
M_{x}f_{m}  &  =\frac{1}{\sqrt{2}}\left(  \bar{a}+a^{\ast}\right)
f_{m}\rightarrow\frac{1}{\sqrt{2}}\left(  \bar{a}+a^{\ast}\right)  f\text{
and}\\
\partial_{x}f_{m}  &  =\frac{1}{\sqrt{2}}\left(  \bar{a}-a^{\ast}\right)
f_{m}\rightarrow\frac{1}{\sqrt{2}}\left(  \bar{a}-a^{\ast}\right)  f.
\end{align*}
As $M_{x}$ and $\partial_{x}$ are closed operators, it follows that $f\in
D\left(  M_{x}\right)  \cap D\left(  \partial_{x}\right)  .$
\end{proof}

\subsection{Operator Inequalities\label{sub.3.4}}

\begin{notation}
[$\beta_{1},\beta_{2}$ -- Operator Norms]\label{not.3.37} Let $\beta_{1}%
,\beta_{2}\geq0.$ If $T:D\left(  \mathcal{N}^{\beta_{1}}\right)  \rightarrow
D\left(  \mathcal{N}^{\beta_{2}}\right)  $ is a linear map, let
\begin{equation}
\left\Vert T\right\Vert _{\beta_{1}\rightarrow\beta_{2}}:=\sup_{0\neq\psi\in
D\left(  \mathcal{N}^{\beta_{1}}\right)  }\frac{\left\Vert T\psi\right\Vert
_{\beta_{2}}}{\left\Vert \psi\right\Vert _{\beta_{1}}}. \label{equ.3.39}%
\end{equation}
denote the corresponding operator norm. We say that $T$ is $\beta
_{1}\rightarrow\beta_{2}$ bounded if $\left\Vert T\right\Vert _{\beta
_{1}\rightarrow\beta_{2}}<\infty.$ In the special case when $\beta_{1}%
=\beta_{2}=\beta,$ let $\left(  B\left(  D\left(  \mathcal{N}^{\beta}\right)
\right)  ,\left\Vert \cdot\right\Vert _{\beta\rightarrow\beta}\right)  $
denote the Banach space of all $\beta\rightarrow\beta$ bounded linear
operators, $T:D\left(  \mathcal{N}^{\beta}\right)  \rightarrow D\left(
\mathcal{N}^{\beta}\right)  .$
\end{notation}

\begin{remark}
\label{rem.3.38} Let $\beta_{1},\beta_{2},\beta_{3}\geq0.$ As usual, if
$T:D\left(  \mathcal{N}^{\beta_{1}}\right)  \rightarrow D\left(
\mathcal{N}^{\beta_{2}}\right)  $ and $S:D\left(  \mathcal{N}^{\beta_{2}%
}\right)  \rightarrow D\left(  \mathcal{N}^{\beta_{3}}\right)  $ are any
linear operators, then
\begin{equation}
\left\Vert ST\right\Vert _{\beta_{1}\rightarrow\beta_{3}}\leq\left\Vert
S\right\Vert _{\beta_{2}\rightarrow\beta_{3}}\left\Vert T\right\Vert
_{\beta_{1}\rightarrow\beta_{2}}. \label{equ.3.40}%
\end{equation}

\end{remark}

\begin{proposition}
\label{pro.3.39} Let $k=\deg_{\theta}u_{\mathbf{b}}\left(  \theta,\theta
^{\ast}\right)  $ and $\mathcal{A}=u_{\mathbf{b}}\left(  a,a^{\dag}\right)  $
be as in Notations \ref{not.3.17} and \ref{not.3.1}. If $\beta\geq0,$ then
$\mathcal{\bar{A}}D\left(  \mathcal{N}^{\beta+k/2}\right)  \subset D\left(
\mathcal{N}^{\beta}\right)  $ and
\begin{equation}
\left\Vert \mathcal{\bar{A}}\right\Vert _{\beta+\frac{k}{2}\rightarrow\beta
}^{2}\leq k^{k}\left(  k+1\right)  ^{2\beta}\leq\left(  k+1\right)
^{2\beta+k}.\text{ } \label{equ.3.41}%
\end{equation}
Moreover,
\begin{equation}
\left\Vert \mathcal{\bar{A}}f\right\Vert _{\beta}\leq\left\Vert \left(
\mathcal{N}+k\right)  ^{k/2}\left(  \mathcal{N}+k+1\right)  ^{\beta
}f\right\Vert ~\forall~f\in D\left(  \mathcal{N}^{\beta+k/2}\right)  .
\label{equ.3.42}%
\end{equation}

\end{proposition}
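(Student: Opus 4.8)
The plan is to reduce the whole statement to elementary scalar estimates by working in the orthonormal basis $\{\Omega_n\}_{n=0}^\infty$. By Theorem \ref{the.3.22}, $\bar{\mathcal{A}}$ is closed with $D(\bar{\mathcal{A}})=D_{k/2}=D(\mathcal{N}^{k/2})$ and acts by $\bar{\mathcal{A}}\,\Omega_n=c_{\mathcal{A}}(n)\,\Omega_{n+l}$, where $l=\ell(\mathbf{b})$ and $\Omega_m=c_{\mathcal{A}}(m)=0$ for $m<0$. Since $D(\mathcal{N}^{\beta+k/2})\subset D(\mathcal{N}^{k/2})$, for $f\in D(\mathcal{N}^{\beta+k/2})$ I would expand $f=\sum_n\langle f,\Omega_n\rangle\,\Omega_n$, apply $\bar{\mathcal{A}}$ termwise, and use orthogonality of $\{\Omega_{n+l}\}$ together with the definition of $\|\cdot\|_\beta$ in Notation \ref{not.3.30} to get
\[
\|\bar{\mathcal{A}} f\|_\beta^2 \;=\; \sum_{n} |\langle f,\Omega_n\rangle|^2\, c_{\mathcal{A}}(n)^2\, (n+l+1)^{2\beta},
\]
the sum ranging over those $n$ with $n+l\ge 0$.

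Next I would insert the pointwise bound from Lemma \ref{lem.3.18}, namely $0\le c_{\mathcal{A}}(n)\le (n+q)^{k/2}$ with $q=q(\mathbf{b})\le k$, so that $c_{\mathcal{A}}(n)^2\le (n+k)^k$; and since $l=q-p\le q\le k$ one has $(n+l+1)^{2\beta}\le (n+k+1)^{2\beta}$ for every admissible $n$ (this is trivial when $l<0$). Recognizing that $(\mathcal{N}+k)^{k/2}(\mathcal{N}+k+1)^\beta$ acts on $\Omega_n$ as multiplication by $(n+k)^{k/2}(n+k+1)^\beta$ (Definition \ref{def.3.27}), these two estimates give Eq. (\ref{equ.3.42}) at once, and in particular show $\|\bar{\mathcal{A}} f\|_\beta<\infty$ for $f\in D(\mathcal{N}^{\beta+k/2})$, hence $\bar{\mathcal{A}}D(\mathcal{N}^{\beta+k/2})\subset D(\mathcal{N}^\beta)$.

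For the operator-norm bound (\ref{equ.3.41}) I would then use the elementary inequalities $n+k\le k(n+1)$ and $n+k+1\le (k+1)(n+1)$, valid for all $n\ge 0$ and $k\ge 1$, to obtain $c_{\mathcal{A}}(n)^2(n+l+1)^{2\beta}\le k^k(k+1)^{2\beta}(n+1)^{2\beta+k}$, and therefore $\|\bar{\mathcal{A}} f\|_\beta^2\le k^k(k+1)^{2\beta}\|f\|_{\beta+k/2}^2$; the coarser bound $k^k(k+1)^{2\beta}\le (k+1)^{2\beta+k}$ is immediate. I do not expect a genuine obstacle here: the only points needing care are the bookkeeping of the shift $l$ — keeping the convention $\Omega_m=0$ for $m<0$ so that no spurious terms are introduced when $l<0$ — and checking that the scalar inequalities $n+c\le c(n+1)$ hold uniformly in $n$, which they do for every integer $c\ge 1$.
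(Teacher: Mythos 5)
Your proposal is correct and follows essentially the same route as the paper: diagonalize $\bar{\mathcal{A}}$ on the basis $\{\Omega_n\}$ via Lemma \ref{lem.3.18} and Theorem \ref{the.3.22}, bound $c_{\mathcal{A}}(n)^2\le (n+k)^k$ and $(n+l+1)^{2\beta}\le (n+k+1)^{2\beta}$ to get Eq. (\ref{equ.3.42}), then use $n+a\le a(n+1)$ for $a=k,k+1$ to get Eq. (\ref{equ.3.41}). The only cosmetic difference is that the paper computes $\langle\bar{\mathcal{A}}f,\Omega_n\rangle=\langle f,\mathcal{A}^{\dag}\Omega_n\rangle$ and reindexes, whereas you apply $\bar{\mathcal{A}}$ termwise to the expansion of $f$; the resulting sums and estimates are identical.
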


\begin{proof}
Let $f\in D\left(  \mathcal{N}^{\beta+k/2}\right)  \subset D\left(
\mathcal{N}^{k/2}\right)  $ and recall from Lemma \ref{lem.3.18} that
$c_{\mathcal{A}}^{\dag}\left(  n\right)  =c_{\mathcal{A}}\left(  n-l\right)  $
and $\left\vert c_{\mathcal{A}}\left(  n\right)  \right\vert ^{2}\leq\left(
n+k\right)  ^{k}.$ Using these facts and the fact that $\mathcal{\bar{A}%
=A}^{\dag\ast}$ (see Theorem \ref{the.3.22}), we find,
\begin{align}
\left\Vert \mathcal{\bar{A}}f\right\Vert _{\beta}^{2}=  &  \sum_{n}\left\vert
\left\langle \mathcal{\bar{A}}f,\Omega_{n}\right\rangle \right\vert
^{2}\left(  1+n\right)  ^{2\beta}=\sum_{n}\left\vert \left\langle
f,\mathcal{A}^{\dag}\Omega_{n}\right\rangle \right\vert ^{2}\left(
1+n\right)  ^{2\beta}1_{n\geq0}\nonumber\\
=  &  \sum_{n}\left\vert \left\langle f,\Omega_{n-l}\right\rangle \right\vert
^{2}\left\vert c_{\mathcal{A}}\left(  n-l\right)  \right\vert \left(
1+n\right)  ^{2\beta}1_{n\geq0}\nonumber\\
=  &  \sum_{n}\left\vert \left\langle f,\Omega_{n}\right\rangle \right\vert
^{2}\left(  1+n+l\right)  ^{2\beta}1_{n+l\geq0}\left\vert c_{\mathcal{A}%
}\left(  n\right)  \right\vert ^{2}\nonumber\\
\leq &  \sum_{n}\left\vert \left\langle f,\Omega_{n}\right\rangle \right\vert
^{2}\left(  n+k+1\right)  ^{2\beta}\left(  n+k\right)  ^{k}\label{equ.3.43}\\
=  &  \left\Vert \left(  \mathcal{N}+k\right)  ^{k/2}\left(  \mathcal{N}%
+k+1\right)  ^{\beta}f\right\Vert _{0}^{2}\nonumber
\end{align}
which proves Eq. (\ref{equ.3.42}). Using
\begin{equation}
n+a\leq a\left(  n+1\right)  \text{ for }a\geq1\text{ and }n\in\mathbb{N}_{0}
\label{equ.3.44}%
\end{equation}
in Eq. (\ref{equ.3.43}) with $a=k$ and $a=k+1$ shows,
\[
\left\Vert \mathcal{\bar{A}}f\right\Vert _{\beta}^{2}\leq k^{k}\left(
k+1\right)  ^{2\beta}\sum_{n}\left\vert \left\langle f,\Omega_{n}\right\rangle
\right\vert ^{2}\left(  1+n\right)  ^{2\beta+k}=k^{k}\left(  k+1\right)
^{2\beta}\left\Vert f\right\Vert _{\beta+k/2}^{2}.
\]
The previous inequality proves Eq. (\ref{equ.3.41}) and also $\mathcal{\bar
{A}}D\left(  \mathcal{N}^{\beta+k/2}\right)  \subset D\left(  \mathcal{N}%
^{\beta}\right)  .$
\end{proof}

\begin{corollary}
\label{cor.3.40}If $P\left(  \theta,\theta^{\ast}\right)  \in\mathbb{C}%
\left\langle \theta,\theta^{\ast}\right\rangle $ and $d=\deg_{\theta}P,$ then
$D\left(  \mathcal{N}^{d/2}\right)  =D\left(  P\left(  \bar{a},a^{\ast
}\right)  \right)  ,$ $P\left(  \bar{a},a^{\ast}\right)  \subseteq
\overline{P\left(  a,a^{\dag}\right)  },$ and
\begin{equation}
\left\Vert P\left(  \bar{a},a^{\ast}\right)  \right\Vert _{\beta
+d/2\rightarrow\beta}\leq\sum_{k=0}^{d}k^{k/2}\left(  k+1\right)  ^{\beta
}\left\vert P_{k}\right\vert \text{ for all }\beta\geq0. \label{equ.3.45}%
\end{equation}

\end{corollary}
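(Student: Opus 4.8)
The plan is to reduce everything to the monomial estimates of Proposition \ref{pro.3.39} by expanding $P$ into its homogeneous components. Write $P\left(  \theta,\theta^{\ast}\right)  =\sum_{k=0}^{d}P_{k}\left(  \theta,\theta^{\ast}\right)  $ with $P_{k}=\sum_{\mathbf{b}\in\left\{  \theta,\theta^{\ast}\right\}  ^{k}}c_{k}\left(  \mathbf{b}\right)  u_{\mathbf{b}}$, so that $P\left(  \bar{a},a^{\ast}\right)  =\sum_{k=0}^{d}\sum_{\mathbf{b}\in\left\{  \theta,\theta^{\ast}\right\}  ^{k}}c_{k}\left(  \mathbf{b}\right)  u_{\mathbf{b}}\left(  \bar{a},a^{\ast}\right)  $, where by Theorem \ref{the.3.23} each $u_{\mathbf{b}}\left(  \bar{a},a^{\ast}\right)  =\overline{u_{\mathbf{b}}\left(  a,a^{\dag}\right)  }$ is a genuine closed operator with domain $D\left(  \mathcal{N}^{k/2}\right)  $ by Theorem \ref{the.3.22}. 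First I would record, directly from Proposition \ref{pro.3.39} together with Theorem \ref{the.3.23}(3), that for each word $\mathbf{b}$ of length $k$ and each $\beta\geq0$ the operator $u_{\mathbf{b}}\left(  \bar{a},a^{\ast}\right)  $ maps $D\left(  \mathcal{N}^{\beta+k/2}\right)  $ into $D\left(  \mathcal{N}^{\beta}\right)  $ with $\left\Vert u_{\mathbf{b}}\left(  \bar{a},a^{\ast}\right)  \right\Vert _{\beta+k/2\rightarrow\beta}\leq k^{k/2}\left(  k+1\right)  ^{\beta}$.

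For the operator-norm bound (Eq. (\ref{equ.3.45})) and the domain statement, I would use the monotonicity $\left\Vert \cdot\right\Vert _{\beta_{1}}\leq\left\Vert \cdot\right\Vert _{\beta_{2}}$ for $\beta_{1}\leq\beta_{2}$ from Remark \ref{rem.3.31}: since every monomial occurring in $P$ has length $k\leq d$, each term is defined on $D\left(  \mathcal{N}^{\beta+d/2}\right)  $ and satisfies $\left\Vert u_{\mathbf{b}}\left(  \bar{a},a^{\ast}\right)  \right\Vert _{\beta+d/2\rightarrow\beta}\leq\left\Vert u_{\mathbf{b}}\left(  \bar{a},a^{\ast}\right)  \right\Vert _{\beta+k/2\rightarrow\beta}\leq k^{k/2}\left(  k+1\right)  ^{\beta}$. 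Summing over all monomials of $P$ and applying the triangle inequality in $\left\Vert \cdot\right\Vert _{\beta+d/2\rightarrow\beta}$ yields $\left\Vert P\left(  \bar{a},a^{\ast}\right)  \right\Vert _{\beta+d/2\rightarrow\beta}\leq\sum_{k=0}^{d}k^{k/2}\left(  k+1\right)  ^{\beta}\left\vert P_{k}\right\vert $. The inclusion $D\left(  \mathcal{N}^{d/2}\right)  \subseteq D\left(  P\left(  \bar{a},a^{\ast}\right)  \right)  $ is the $\beta=0$ case of this; conversely, $\deg_{\theta}P=d$ forces at least one word $\mathbf{b}$ of length $d$ with $c_{d}\left(  \mathbf{b}\right)  \neq0$ to appear, and for it $D\left(  u_{\mathbf{b}}\left(  \bar{a},a^{\ast}\right)  \right)  =D\left(  \mathcal{N}^{d/2}\right)  $ by Theorem \ref{the.3.22}, whence $D\left(  P\left(  \bar{a},a^{\ast}\right)  \right)  \subseteq D\left(  \mathcal{N}^{d/2}\right)  $.

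For the inclusion $P\left(  \bar{a},a^{\ast}\right)  \subseteq\overline{P\left(  a,a^{\dag}\right)  }$, I would invoke the density of $\mathcal{S}_{0}=\operatorname{span}\left\{  \Omega_{n}\right\}  $ in $\left(  D\left(  \mathcal{N}^{d/2}\right)  ,\left\Vert \cdot\right\Vert _{d/2}\right)  $ from Proposition \ref{pro.3.32}. Given $f\in D\left(  \mathcal{N}^{d/2}\right)  $, set $f_{N}:=\mathcal{P}_{N}f\in\mathcal{S}_{0}\subseteq\mathcal{S}$, so $\left\Vert f-f_{N}\right\Vert _{d/2}\rightarrow0$. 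On $\mathcal{S}$ one has $u_{\mathbf{b}}\left(  a,a^{\dag}\right)  \subseteq\overline{u_{\mathbf{b}}\left(  a,a^{\dag}\right)  }=u_{\mathbf{b}}\left(  \bar{a},a^{\ast}\right)  $ for each word, hence $P\left(  a,a^{\dag}\right)  f_{N}=P\left(  \bar{a},a^{\ast}\right)  f_{N}$. By the $\beta=0$ bound just proved, $\left\Vert P\left(  \bar{a},a^{\ast}\right)  f_{N}-P\left(  \bar{a},a^{\ast}\right)  f\right\Vert \leq\left(  \sum_{k=0}^{d}k^{k/2}\left\vert P_{k}\right\vert \right)  \left\Vert f_{N}-f\right\Vert _{d/2}\rightarrow0$, so $P\left(  a,a^{\dag}\right)  f_{N}\rightarrow P\left(  \bar{a},a^{\ast}\right)  f$ in $L^{2}\left(  m\right)  $ while $f_{N}\rightarrow f$ in $L^{2}\left(  m\right)  $. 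By the definition of the closure, $f\in D\left(  \overline{P\left(  a,a^{\dag}\right)  }\right)  $ and $\overline{P\left(  a,a^{\dag}\right)  }f=P\left(  \bar{a},a^{\ast}\right)  f$.

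The whole argument is essentially bookkeeping on top of Proposition \ref{pro.3.39}; the only step requiring a little care is the limiting argument in the last paragraph, where one must combine the uniform $0$-norm bound with the $\left\Vert \cdot\right\Vert _{d/2}$-density of $\mathcal{S}_{0}$ in order to promote the identity $P\left(  a,a^{\dag}\right)  =P\left(  \bar{a},a^{\ast}\right)  $ from $\mathcal{S}_{0}$ to all of $D\left(  \mathcal{N}^{d/2}\right)  $.
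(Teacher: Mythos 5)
Your proof is correct and follows essentially the same route as the paper's: decompose $P$ into monomials, apply the bound of Proposition \ref{pro.3.39} term by term with the triangle inequality and the norm monotonicity, and then use the $\left\Vert \cdot\right\Vert _{d/2}$-density of $\mathcal{S}_{0}$ (Proposition \ref{pro.3.32}) together with the uniform $\beta=0$ estimate to pass from $\mathcal{P}_{N}f$ to $f$ and obtain $P\left(  \bar{a},a^{\ast}\right)  \subseteq\overline{P\left(  a,a^{\dag}\right)  }$. Your slightly more explicit treatment of the reverse domain inclusion is just an unpacking of the paper's terse statement, not a different argument.
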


\begin{proof}
The operator $P\left(  \bar{a},a^{\ast}\right)  $ is a linear combination of
operators of the form $u_{\mathbf{b}}\left(  \bar{a},a^{\ast}\right)  $ where
$k=\deg_{\theta}u_{\mathbf{b}}\left(  \theta,\theta^{\ast}\right)  \leq d.$ By
Theorem \ref{the.3.22}, it follows that $D\left(  u_{\mathbf{b}}\left(
\bar{a},a^{\ast}\right)  \right)  =D\left(  \mathcal{N}^{k/2}\right)
\supseteq D\left(  \mathcal{N}^{d/2}\right)  $ and hence $D\left(
\mathcal{N}^{d/2}\right)  =D\left(  P\left(  \bar{a},a^{\ast}\right)  \right)
.$ Further, Proposition \ref{pro.3.39} shows
\[
\left\Vert u_{\mathbf{b}}\left(  \bar{a},a^{\ast}\right)  \right\Vert
_{_{\beta+d/2\rightarrow\beta}}\leq\left\Vert u_{\mathbf{b}}\left(  \bar
{a},a^{\ast}\right)  \right\Vert _{_{\beta+k/2\rightarrow\beta}}\leq
k^{k/2}\left(  k+1\right)  ^{\beta}.
\]
This estimate, the triangle inequality, and the definition of $\left\vert
P_{k}\right\vert $ in Eq. (\ref{equ.2.23}) leads directly to the inequality in
Eq. (\ref{equ.3.45}).

If $f\in D\left(  \mathcal{N}^{d/2}\right)  ,$ it follows from Eq.
(\ref{equ.3.45}) and Proposition \ref{pro.3.32} that
\[
P\left(  \bar{a},a^{\ast}\right)  f=\lim_{N\rightarrow\infty}P\left(  \bar
{a},a^{\ast}\right)  \mathcal{P}_{N}f=\lim_{N\rightarrow\infty}P\left(
a,a^{\dag}\right)  \mathcal{P}_{N}f
\]
which shows $f\in D\left(  \overline{P\left(  a,a^{\dag}\right)  }\right)  $
and $\overline{P\left(  a,a^{\dag}\right)  }f=P\left(  \bar{a},a^{\ast
}\right)  f.$
\end{proof}

\begin{notation}
\label{not.3.41}For $x\in\mathbb{R}$ let $\left(  x\right)  _{+}:=\max\left(
x,0\right)  .$
\end{notation}

\begin{lemma}
\label{lem.3.42}If $\mathcal{A}=u_{\mathbf{b}}\left(  a,a^{\dag}\right)  ,$
$k=\deg_{\theta}u_{\mathbf{b}}\left(  \theta,\theta^{\ast}\right)  ,$
$l=\ell\left(  \mathbf{b}\right)  \in\mathbb{Z}$ are as in Notations
\ref{not.3.17} and \ref{not.3.1}, then for all $\beta\geq0$ we have,
\begin{equation}
\left(  \mathcal{N}+1\right)  ^{\beta}\mathcal{\bar{A}}f=\mathcal{\bar{A}%
}\left(  \left(  \mathcal{N}+l\right)  _{+}+1\right)  ^{\beta}f\text{ for all
}f\in D\left(  \mathcal{N}^{\beta+\frac{k}{2}}\right)  . \label{equ.3.46}%
\end{equation}

\end{lemma}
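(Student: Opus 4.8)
The statement is an "intertwining" identity between the monomial operator $\mathcal{\bar A}$ and a polynomial in the number operator, reflecting the fact that $\mathcal{A}$ shifts the eigenbasis $\{\Omega_n\}$ by $l = \ell(\mathbf b)$ (see Lemma~\ref{lem.3.18}). The natural strategy is to verify the identity by testing both sides against the orthonormal basis $\{\Omega_n\}_{n=0}^\infty$ and then extend from $\mathcal{S}_0$ to all of $D(\mathcal{N}^{\beta+k/2})$ by a density/continuity argument using the $\beta$-norms from Notation~\ref{not.3.30}.

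First I would establish the identity on basis vectors. Fix $n \in \mathbb{N}_0$. By Lemma~\ref{lem.3.18}, $\mathcal{\bar A}\Omega_n = c_{\mathcal A}(n)\,\Omega_{n+l}$ (with the convention $\Omega_m = 0$ for $m<0$), so the left side of Eq.~(\ref{equ.3.46}) applied to $\Omega_n$ equals $(\mathcal{N}+1)^\beta c_{\mathcal A}(n)\,\Omega_{n+l} = c_{\mathcal A}(n)\,(n+l+1)^\beta\,\Omega_{n+l}$, where this is interpreted as $0$ when $n+l<0$; note that when $n+l \geq 0$ we automatically have $n+l+1 \geq 1 > 0$ so the exponentiation is unproblematic, and when $c_{\mathcal A}(n) = 0$ for $n+l < 0$ the whole expression vanishes. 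For the right side, $\big((\mathcal{N}+l)_+ + 1\big)^\beta \Omega_n = \big((n+l)_+ + 1\big)^\beta \Omega_n$, and then applying $\mathcal{\bar A}$ gives $c_{\mathcal A}(n)\,\big((n+l)_+ + 1\big)^\beta\,\Omega_{n+l}$. So the identity on $\Omega_n$ reduces to the scalar claim: whenever $c_{\mathcal A}(n) \neq 0$ (equivalently $n+l \geq 0$, by the support properties of $c_{\mathcal A}$ recorded in Lemma~\ref{lem.3.18} via Eq.~(\ref{equ.3.22})), one has $(n+l)_+ = n+l$, hence $(n+l+1)^\beta = \big((n+l)_+ + 1\big)^\beta$. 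This is immediate. By linearity the identity then holds on $\mathcal{S}_0 = \operatorname{span}\{\Omega_n\}$.

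Next I would pass to the limit. Both sides of Eq.~(\ref{equ.3.46}) define operators from $D(\mathcal{N}^{\beta+k/2})$ into $L^2(m)$: on the left, $\mathcal{\bar A}$ maps $D(\mathcal{N}^{\beta+k/2})$ into $D(\mathcal{N}^\beta)$ by Proposition~\ref{pro.3.39}, and $(\mathcal{N}+1)^\beta$ then maps into $L^2(m)$; on the right, $\big((\mathcal{N}+l)_+ + 1\big)^\beta$ is a bounded multiplier (via functional calculus, Definition~\ref{def.3.27}) from $D(\mathcal{N}^{\beta+k/2})$ to itself with norm control $\big((n+l)_+ + 1\big)^\beta \le (n+1+\lvert l\rvert)^\beta \lesssim (n+1)^\beta$, so composing with $\mathcal{\bar A}$ again lands in $L^2(m)$. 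Given $f \in D(\mathcal{N}^{\beta+k/2})$, Proposition~\ref{pro.3.32} gives $\mathcal{P}_N f \in \mathcal{S}_0$ with $\lVert f - \mathcal{P}_N f\rVert_{\beta+k/2} \to 0$. Since the identity holds for each $\mathcal{P}_N f$, I let $N \to \infty$ and use boundedness of both composite operators in the $\lVert \cdot\rVert_{\beta+k/2} \to \lVert\cdot\rVert_0$ operator norm (Proposition~\ref{pro.3.39} combined with Remark~\ref{rem.3.38} for the left side, and the multiplier bound plus Proposition~\ref{pro.3.39} for the right side) to conclude that the identity persists for $f$.

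The only genuinely delicate point is the bookkeeping of the conventions around negative indices: one must be careful that $(\mathcal{N}+l)_+$ rather than $\mathcal{N}+l$ appears on the right precisely to avoid raising a negative or zero quantity to the power $\beta$ (which is real, not necessarily an integer), while simultaneously the factor $c_{\mathcal A}(n)$ kills exactly those terms $\Omega_n$ with $n + l < 0$ so that no discrepancy arises. Making the support statement "$c_{\mathcal A}(n) = 0$ whenever $n + l < 0$" explicit — which follows from Eq.~(\ref{equ.3.22}) since $\delta_i \le q$ forces $n + \delta_i \ge n + q \ge n + l$... more carefully, one uses that $\mathcal{\bar A}\Omega_n = 0$ when $n+l<0$ directly from Theorem~\ref{the.3.14} applied letter by letter — is the crux, and once it is in hand everything else is routine.
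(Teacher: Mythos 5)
Your proposal is correct and follows essentially the same route as the paper: verify the identity on the basis vectors $\Omega_{n}$ using $\mathcal{\bar{A}}\Omega_{n}=c_{\mathcal{A}}\left(  n\right)  \Omega_{n+l}$ (which vanishes when $n+l<0$), then extend to all of $D\left(  \mathcal{N}^{\beta+k/2}\right)  $ by density of $\mathcal{S}_{0}$ together with the boundedness of both sides as operators into $L^{2}\left(  m\right)  $ via Proposition \ref{pro.3.39} and Remark \ref{rem.3.38}. Your self-corrected justification of the vanishing (letter-by-letter via Theorem \ref{the.3.14}) is the right one; the first attempted inequality from Eq. (\ref{equ.3.22}) was not needed.
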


\begin{proof}
Using Proposition \ref{pro.3.39} and Remark \ref{rem.3.38} it is readily
verified that the operators on both sides of Eq. (\ref{equ.3.46}) are bounded
linear operators from $D\left(  \mathcal{N}^{\beta+\frac{k}{2}}\right)  $ to
$L^{2}\left(  m\right)  .$ Since $\mathcal{S}_{0}$ is dense in $D\left(
\mathcal{N}^{\beta+\frac{k}{2}}\right)  $ (see Proposition \ref{pro.3.32}) it
suffices to verify Eq. (\ref{equ.3.46}) for $f=\Omega_{n}$ for all
$n\in\mathbb{N}_{0}$ which is trivial. Indeed, $\mathcal{\bar{A}}\Omega
_{n}=c_{\mathcal{A}}\left(  n\right)  \Omega_{n+l}$ which is zero if $n+l<0$
and hence
\begin{align*}
\left(  \mathcal{N}+1\right)  ^{\beta}\mathcal{\bar{A}}\Omega_{n}  &  =\left(
\left(  n+l\right)  _{+}+1\right)  ^{\beta}\mathcal{\bar{A}}\Omega
_{n}=\mathcal{\bar{A}}\left(  \left(  n+l\right)  _{+}+1\right)  ^{\beta
}\Omega_{n}\\
&  =\mathcal{\bar{A}}\left(  \left(  \mathcal{N}+l\right)  _{+}+1\right)
^{\beta}\Omega_{n}.
\end{align*}

\end{proof}

\begin{proposition}
\label{pro.3.43}Let $k\in\mathbb{N},$ $\mathbf{b\in}\left\{  \theta
,\theta^{\ast}\right\}  ^{k},$ $\mathcal{A},$ and $\ell\left(  \mathbf{b}%
\right)  $ be as in Notation \ref{not.3.1}. For any $\beta\geq0,$ it gets
\begin{align}
&  \left\Vert \left[  \left(  \mathcal{N}+1\right)  ^{\beta},\mathcal{\bar{A}%
}\right]  \left(  \mathcal{N}+1\right)  ^{-\beta}\varphi\right\Vert
\nonumber\\
&  \quad\quad\leq\beta k^{k/2}\left\vert \ell\left(  \mathbf{b}\right)
\right\vert \left(  1+\left\vert \ell\left(  \mathbf{b}\right)  \right\vert
\right)  ^{\left\vert \beta-1\right\vert }\left\Vert \left(  \mathcal{N}%
+1\right)  ^{k/2-1}\mathbf{1}_{\mathcal{N}\geq-l}\varphi\right\Vert
\label{equ.3.47}\\
&  \quad\quad\leq\beta k^{k/2}\left\vert \ell\left(  \mathbf{b}\right)
\right\vert \left(  1+\left\vert \ell\left(  \mathbf{b}\right)  \right\vert
\right)  ^{\left\vert \beta-1\right\vert }\left\Vert \left(  \mathcal{N}%
+1\right)  ^{k/2-1}\varphi\right\Vert \label{equ.3.48}%
\end{align}
for all $\varphi\in D\left(  \mathcal{N}^{k/2}\right)  .$
\end{proposition}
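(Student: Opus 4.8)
The plan is to reduce the estimate to a pointwise (in $n$) computation on the orthonormal basis $\{\Omega_n\}$ and then sum, exactly in the spirit of the proof of Proposition \ref{pro.3.39}. Write $l:=\ell\left(\mathbf{b}\right)$ and recall from Lemma \ref{lem.3.18} that $\mathcal{\bar{A}}\Omega_n=c_{\mathcal{A}}\left(n\right)\Omega_{n+l}$ with $0\le c_{\mathcal{A}}\left(n\right)\le\left(n+k\right)^{k/2}$ (interpreting $\Omega_m=0$ and $c_{\mathcal{A}}\left(n\right)=0$ for $m,n<0$). By Remark \ref{rem.3.38} and Proposition \ref{pro.3.39}, both $\left(\mathcal{N}+1\right)^{\beta}\mathcal{\bar{A}}\left(\mathcal{N}+1\right)^{-\beta}$ and $\mathcal{\bar{A}}$ are bounded from $D\left(\mathcal{N}^{k/2}\right)$ into $L^2\left(m\right)$, so it suffices to prove the bound for $\varphi=\Omega_n$ and then sum the squares using the orthogonality of the images. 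For $\varphi=\Omega_n$ one computes directly, using Lemma \ref{lem.3.42} (or just the definition), that
\begin{align*}
\left[\left(\mathcal{N}+1\right)^{\beta},\mathcal{\bar{A}}\right]\left(\mathcal{N}+1\right)^{-\beta}\Omega_n
&=\left(1+n\right)^{-\beta}\left[\left(n+l+1\right)_{+}^{\beta}-\left(n+1\right)^{\beta}\right]c_{\mathcal{A}}\left(n\right)\Omega_{n+l}\,\mathbf{1}_{n+l\ge0},
\end{align*}
where the factor $\mathbf{1}_{n+l\ge0}=\mathbf{1}_{n\ge-l}$ appears because $\Omega_{n+l}=0$ otherwise (this is where the projection $\mathbf{1}_{\mathcal{N}\ge-l}$ on the right side of (\ref{equ.3.47}) comes from).

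The key step is then the elementary inequality: for $n\in\mathbb{N}_0$ with $n+l\ge0$,
\[
\left|\frac{\left(n+l+1\right)^{\beta}-\left(n+1\right)^{\beta}}{\left(n+1\right)^{\beta}}\right|\le\beta\left|l\right|\left(1+\left|l\right|\right)^{\left|\beta-1\right|}\left(n+1\right)^{-1}.
\]
This follows from the mean value theorem applied to $x\mapsto x^{\beta}$ on the interval between $n+1$ and $n+l+1$: the derivative is $\beta x^{\beta-1}$, evaluated at some point $\xi$ between these two, and one bounds $\xi^{\beta-1}/\left(n+1\right)^{\beta}$. When $\beta\ge1$ one uses $\xi\le n+1+\left|l\right|\le\left(1+\left|l\right|\right)\left(n+1\right)$, so $\xi^{\beta-1}\le\left(1+\left|l\right|\right)^{\beta-1}\left(n+1\right)^{\beta-1}$; when $0\le\beta<1$ one uses $\xi\ge n+1-\left|l\right|$, but since we also have the trivial bound $\xi\ge1$ when $n+l\ge0$ (as $\xi\ge\min\left(n+1,n+l+1\right)\ge1$), one gets $\xi^{\beta-1}\le1\le\left(1+\left|l\right|\right)^{1-\beta}\left(n+1\right)^{\beta-1}\cdot\left(n+1\right)^{1-\beta}$, i.e. the same form of bound with exponent $\left|\beta-1\right|=1-\beta$. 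Combining with $c_{\mathcal{A}}\left(n\right)\le\left(n+k\right)^{k/2}$ and the inequality (\ref{equ.3.44}), $n+k\le k\left(n+1\right)$, gives for $\varphi=\Omega_n$ (with $n+l\ge0$)
\[
\left\|\left[\left(\mathcal{N}+1\right)^{\beta},\mathcal{\bar{A}}\right]\left(\mathcal{N}+1\right)^{-\beta}\Omega_n\right\|\le\beta k^{k/2}\left|l\right|\left(1+\left|l\right|\right)^{\left|\beta-1\right|}\left(n+1\right)^{k/2-1}.
\]

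Finally I extend to general $\varphi\in D\left(\mathcal{N}^{k/2}\right)$ by expanding $\varphi=\sum_n\left\langle\varphi,\Omega_n\right\rangle\Omega_n$; since the vectors $\left[\left(\mathcal{N}+1\right)^{\beta},\mathcal{\bar{A}}\right]\left(\mathcal{N}+1\right)^{-\beta}\Omega_n=\left(\text{scalar}\right)\Omega_{n+l}\mathbf{1}_{n\ge-l}$ are mutually orthogonal (distinct $n$ give distinct $n+l$), the squared norm of the sum is the sum of squared norms, yielding
\[
\left\|\left[\left(\mathcal{N}+1\right)^{\beta},\mathcal{\bar{A}}\right]\left(\mathcal{N}+1\right)^{-\beta}\varphi\right\|^2\le\beta^2 k^{k}\left|l\right|^2\left(1+\left|l\right|\right)^{2\left|\beta-1\right|}\sum_{n\ge-l}\left|\left\langle\varphi,\Omega_n\right\rangle\right|^2\left(n+1\right)^{k-2},
\]
and the last sum is exactly $\left\|\left(\mathcal{N}+1\right)^{k/2-1}\mathbf{1}_{\mathcal{N}\ge-l}\varphi\right\|^2$, proving (\ref{equ.3.47}); dropping the projection gives (\ref{equ.3.48}). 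The main obstacle is getting the case analysis in the mean-value-theorem step clean enough to land the uniform exponent $\left|\beta-1\right|$ in both regimes $\beta\ge1$ and $\beta<1$; everything else is bookkeeping with the basis expansion. One should also double-check the edge cases $n+l<0$ (where both sides of the pointwise identity vanish) and $k/2-1<0$ (harmless, since $\left(n+1\right)^{k/2-1}$ is still summable against $\varphi\in D\left(\mathcal{N}^{k/2}\right)$, indeed against any $\varphi\in L^2$).
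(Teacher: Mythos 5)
Your overall strategy is sound and is in essence a diagonalized version of the paper's own argument: the paper moves $\left(\mathcal{N}+1\right)^{\beta}$ through $\mathcal{\bar{A}}$ via Lemma \ref{lem.3.42}, writes the resulting difference as $\beta\int_{0}^{l}\left(\mathcal{N}+1+r\right)^{\beta-1}dr$ on the range of $\mathbf{1}_{\mathcal{N}+l\geq0}$, applies $\left\Vert \mathcal{\bar{A}}f\right\Vert \leq\left\Vert \left(\mathcal{N}+k\right)^{k/2}f\right\Vert$ from Proposition \ref{pro.3.39}, and finishes with the scalar inequality $\left(x+1+r\right)^{\beta-1}\leq\left(1+\left\vert l\right\vert\right)^{\left\vert \beta-1\right\vert}\left(x+1\right)^{\beta-1}$ for $x\geq\max\left(0,-l\right)$ and $r$ between $0$ and $l$. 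Your computation on each $\Omega_{n}$ followed by orthogonal summation is the same calculation, with the mean value theorem playing the role of the integral representation. The pointwise identity, the bound $c_{\mathcal{A}}\left(n\right)\leq\left(n+k\right)^{k/2}\leq k^{k/2}\left(n+1\right)^{k/2}$, and the orthogonality of the images $\Omega_{n+l}$ are all fine.

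However, your justification of the key scalar inequality in the regime $0\leq\beta<1$ fails as written. You bound $\xi^{\beta-1}\leq1$ and then note $1\leq\left(1+\left\vert l\right\vert\right)^{1-\beta}$; this chain only yields $\xi^{\beta-1}\leq\left(1+\left\vert l\right\vert\right)^{1-\beta}$, whereas what you need is the strictly stronger estimate $\xi^{\beta-1}\leq\left(1+\left\vert l\right\vert\right)^{1-\beta}\left(n+1\right)^{\beta-1}$ (and $\left(n+1\right)^{\beta-1}\leq1$ here, so the missing factor is not free). With only $\xi^{\beta-1}\leq\left(1+\left\vert l\right\vert\right)^{1-\beta}$ your pointwise bound degrades to a constant times $\left(n+1\right)^{k/2-\beta}$ rather than $\left(n+1\right)^{k/2-1}$, and after summation you would obtain $\left\Vert \left(\mathcal{N}+1\right)^{k/2-\beta}\mathbf{1}_{\mathcal{N}\geq-l}\varphi\right\Vert$, a strictly larger norm than the one asserted in Eq. (\ref{equ.3.47}) when $\beta<1$. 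The repair is short: since $n+l\geq0$ and $\xi$ lies between $n+1$ and $n+l+1$, one has $\xi\geq\min\left(n+1,n+l+1\right)\geq n+1-\left\vert l\right\vert\geq1$, hence $\left(1+\left\vert l\right\vert\right)\xi\geq\left(n+1-\left\vert l\right\vert\right)+\left\vert l\right\vert\left(n+1-\left\vert l\right\vert\right)\geq\left(n+1-\left\vert l\right\vert\right)+\left\vert l\right\vert=n+1$, so $\xi\geq\left(n+1\right)/\left(1+\left\vert l\right\vert\right)$ and therefore $\xi^{\beta-1}\leq\left(1+\left\vert l\right\vert\right)^{1-\beta}\left(n+1\right)^{\beta-1}$ as required. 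With that one step corrected, your proof goes through.
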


\begin{proof}
Let $l:=\ell\left(  \mathbf{b}\right)  .$ By Lemma \ref{lem.3.42} and the
identity, $\mathcal{\bar{A}}=\mathcal{\bar{A}}\mathbf{1}_{\mathcal{N}+l\geq
0},$ for all $\psi\in D\left(  \mathcal{N}^{k/2+\beta}\right)  $ we have,
\begin{align*}
\left[  \left(  \mathcal{N}+1\right)  ^{\beta},\mathcal{\bar{A}}\right]  \psi
&  =\left[  \left(  \mathcal{N}+1\right)  ^{\beta}\mathcal{\bar{A}%
}-\mathcal{\bar{A}}\left(  \mathcal{N}+1\right)  ^{\beta}\right]  \psi\\
&  =\mathcal{\bar{A}}\left[  \left(  \left(  \mathcal{N}+l\right)
_{+}+1\right)  ^{\beta}-\left(  \mathcal{N}+1\right)  ^{\beta}\right]  \psi\\
&  =\mathcal{\bar{A}}\mathbf{1}_{\mathcal{N}+l\geq0}\left[  \left(  \left(
\mathcal{N}+l\right)  _{+}+1\right)  ^{\beta}-\left(  \mathcal{N}+1\right)
^{\beta}\right]  \psi\\
&  =\mathcal{\bar{A}}\left[  \left(  \mathcal{N}+l+1\right)  ^{\beta}-\left(
\mathcal{N}+1\right)  ^{\beta}\right]  \mathbf{1}_{\mathcal{N}+l\geq0}\psi\\
&  =\mathcal{\bar{A}}\left[  \beta\int_{0}^{l}\left(  \mathcal{N}+1+r\right)
^{\beta-1}dr\right]  \mathbf{1}_{\mathcal{N}+l\geq0}\psi.
\end{align*}
Combining this equation with Eq. (\ref{equ.3.42}) of Proposition
\ref{pro.3.39} shows,
\begin{align*}
\left\Vert \left[  \left(  \mathcal{N}+1\right)  ^{\beta},\mathcal{\bar{A}%
}\right]  \psi\right\Vert  &  \leq\left\Vert \left(  \mathcal{N}+k\right)
^{k/2}\left[  \beta\int_{0}^{l}\left(  \mathcal{N}+1+r\right)  ^{\beta
-1}dr\right]  \mathbf{1}_{\mathcal{N}\geq-l}\psi\right\Vert \\
&  \leq\beta\left\vert \int_{0}^{l}\left\Vert \left(  \mathcal{N}+k\right)
^{k/2}\left(  \mathcal{N}+1+r\right)  ^{\beta-1}\mathbf{1}_{\mathcal{N}\geq
-l}\psi\right\Vert dr\right\vert .\\
&  \leq\beta k^{k/2}\left\vert \int_{0}^{l}\left\Vert \left(  \mathcal{N}%
+1\right)  ^{k/2}\left(  \mathcal{N}+1+r\right)  ^{\beta-1}\mathbf{1}%
_{\mathcal{N}\geq-l}\psi\right\Vert dr\right\vert .
\end{align*}
For $x\geq\max\left(  0,-l\right)  $ and $r$ between $0$ and $l,$ one shows
\[
\left(  x+1+r\right)  ^{\beta-1}\leq\left(  1+\left\vert l\right\vert \right)
^{\left\vert \beta-1\right\vert }\left(  x+1\right)  ^{\beta-1}%
\]
which combined with the previously displayed equation implies,
\begin{equation}
\left\Vert \left[  \left(  \mathcal{N}+1\right)  ^{\beta}%
,\mathcal{\mathcal{\bar{A}}}\right]  \psi\right\Vert \leq\beta k^{k/2}\left(
1+\left\vert l\right\vert \right)  ^{\left\vert \beta-1\right\vert }\left\vert
l\right\vert \left\Vert \left(  \mathcal{N}+1\right)  ^{\frac{k}{2}+\beta
-1}\mathbf{1}_{\mathcal{N}\geq-l}\psi\right\Vert . \label{equ.3.49}%
\end{equation}
Finally, Eq. (\ref{equ.3.47}) easily follows by replacing $\psi$ by $\left(
\mathcal{N}+1\right)  ^{-\beta}\varphi\in D\left(  \mathcal{N}^{k/2}\right)  $
in Eq. (\ref{equ.3.49}).
\end{proof}

\subsection{Truncated Estimates\label{sub.3.5}}

\begin{notation}
[Operator Truncation]\label{not.3.45}If $Q=P\left(  a,a^{\dag}\right)  $ is a
polynomial operator on $L^{2}\left(  m\right)  $ and $M>0,$ let
\begin{equation}
Q_{M}:=\mathbf{1}_{\mathcal{N}\leq M}Q\mathbf{1}_{\mathcal{N}\leq
M}=\mathcal{P}_{M}Q\mathcal{P}_{M}. \label{equ.3.50}%
\end{equation}
and refer to $Q_{M}$ as the \textbf{level-}$M$\textbf{\ truncation of }$Q.$
[Recall that $\mathcal{P}_{M}=$ $\mathbf{1}_{\mathcal{N}\leq M}$ are as in
Notations \ref{not.3.16} and \ref{not.3.29}.]
\end{notation}

\begin{proposition}
\label{pro.3.46}If $k\in\mathbb{N},$ $\beta\geq0,$ $0<M<\infty,$
$\mathbf{b}\in\left\{  \theta,\theta^{\ast}\right\}  ^{k},$ $\mathcal{A}%
=u_{\mathbf{b}}\left(  a,a^{\dag}\right)  ,$ and $\ell\left(  \mathbf{b}%
\right)  $ are as in Notation \ref{not.3.1}, then
\begin{equation}
\left\Vert \mathcal{A}_{M}\right\Vert _{\beta\rightarrow\beta}\leq\left(
M+k\right)  ^{k/2}\left(  1+\left\vert \ell\left(  \mathbf{b}\right)
\right\vert \right)  ^{\beta}\leq k^{k/2}\left(  1+\left\vert \ell\left(
\mathbf{b}\right)  \right\vert \right)  ^{\beta}\left(  M+1\right)  ^{k/2}.
\label{equ.3.51}%
\end{equation}
Consequently if $P\in\mathbb{C}\left\langle \theta,\theta^{\ast}\right\rangle
$ with $d=\deg_{\theta}P,$ then
\begin{equation}
\left\Vert \left[  P\left(  a,a^{\dag}\right)  \right]  _{M}\right\Vert
_{\beta\rightarrow\beta}\leq\sum_{k=0}^{d}\left(  M+k\right)  ^{k/2}\left(
1+k\right)  ^{\beta}\left\vert P_{k}\right\vert \label{equ.3.52}%
\end{equation}
which in particular implies that the map,
\[
P\in\mathbb{C}\left\langle \theta,\theta^{\ast}\right\rangle \rightarrow
\left[  P\left(  a,a^{\dag}\right)  \right]  _{M}\in\left(  B\left(  D\left(
\mathcal{N}^{\beta}\right)  \right)  ,\left\Vert \cdot\right\Vert
_{\beta\rightarrow\beta}\right)  ,
\]
depends continuously on the coefficients of $P.$
\end{proposition}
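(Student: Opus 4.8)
The plan is to reduce the whole statement to the action of $\mathcal{A}$ on the orthonormal basis $\{\Omega_{n}\}_{n=0}^{\infty}$ together with the pointwise bound on $c_{\mathcal{A}}$ supplied by Lemma \ref{lem.3.18}. First I would record that $\mathcal{P}_{M}=\mathbf{1}_{\mathcal{N}\le M}$ has range $\operatorname{span}\{\Omega_{n}:0\le n\le\lfloor M\rfloor\}\subset\mathcal{S}_{0}$, so that $\mathcal{A}_{M}=\mathcal{P}_{M}\mathcal{A}\mathcal{P}_{M}$ is a bounded operator on $L^{2}(m)$ whose range lies in $\mathcal{S}_{0}\subset D(\mathcal{N}^{\beta})$; in particular $\|\mathcal{A}_{M}\|_{\beta\to\beta}$ is meaningful. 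Writing $l:=\ell(\mathbf{b})$ and using $\mathcal{A}\Omega_{n}=c_{\mathcal{A}}(n)\Omega_{n+l}$ (with the standing convention $\Omega_{m}=0$ for $m<0$), a direct computation gives, for $f\in L^{2}(m)$,
\[
\mathcal{A}_{M}f=\sum_{\substack{0\le n\le\lfloor M\rfloor\\ 0\le n+l\le\lfloor M\rfloor}}\langle f,\Omega_{n}\rangle\,c_{\mathcal{A}}(n)\,\Omega_{n+l},
\]
and since the vectors $\Omega_{n+l}$ occurring here are orthonormal,
\[
\|\mathcal{A}_{M}f\|_{\beta}^{2}=\sum_{\substack{0\le n\le\lfloor M\rfloor\\ 0\le n+l\le\lfloor M\rfloor}}|\langle f,\Omega_{n}\rangle|^{2}\,c_{\mathcal{A}}(n)^{2}\,(1+n+l)^{2\beta}.
\]

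Next I would estimate the two weights occurring in each summand. For $n\le M$, Lemma \ref{lem.3.18} together with $q(\mathbf{b})\le k$ gives $c_{\mathcal{A}}(n)^{2}\le(n+q(\mathbf{b}))^{k}\le(n+k)^{k}\le(M+k)^{k}$; and, whether $l\ge0$ or $l<0$, on the above index set one has $0\le n+l\le n+|l|\le(1+|l|)(1+n)$, hence $(1+n+l)^{2\beta}\le(1+|l|)^{2\beta}(1+n)^{2\beta}$. Substituting these two bounds into the identity above yields
\[
\|\mathcal{A}_{M}f\|_{\beta}^{2}\le(M+k)^{k}(1+|l|)^{2\beta}\sum_{n}|\langle f,\Omega_{n}\rangle|^{2}(1+n)^{2\beta}=(M+k)^{k}(1+|l|)^{2\beta}\|f\|_{\beta}^{2},
\]
which is the first inequality in Eq. (\ref{equ.3.51}); the second follows from $M+k\le k(M+1)$ (cf. Eq. (\ref{equ.3.44}) with $a=k$), so that $(M+k)^{k/2}\le k^{k/2}(M+1)^{k/2}$.

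For the polynomial assertion I would use that operator truncation is linear: expanding $P(a,a^{\dag})=\sum_{k=0}^{d}\sum_{\mathbf{b}\in\{\theta,\theta^{\ast}\}^{k}}c_{k}(\mathbf{b})\,u_{\mathbf{b}}(a,a^{\dag})$ and using boundedness of $\mathcal{P}_{M}$ gives $[P(a,a^{\dag})]_{M}=\sum_{k,\mathbf{b}}c_{k}(\mathbf{b})\,[u_{\mathbf{b}}(a,a^{\dag})]_{M}$. Since $|\ell(\mathbf{b})|\le k$ whenever $\mathbf{b}\in\{\theta,\theta^{\ast}\}^{k}$, Eq. (\ref{equ.3.51}) bounds each $\|[u_{\mathbf{b}}(a,a^{\dag})]_{M}\|_{\beta\to\beta}$ by $(M+k)^{k/2}(1+k)^{\beta}$, and the triangle inequality together with $\sum_{\mathbf{b}}|c_{k}(\mathbf{b})|=|P_{k}|$ (Eq. (\ref{equ.2.23})) yields Eq. (\ref{equ.3.52}). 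Finally, Eq. (\ref{equ.3.52}) bounds the norm of the linear map $P\mapsto[P(a,a^{\dag})]_{M}$ restricted to the (finite-dimensional) space of polynomials of degree at most $d$ by $\max_{0\le k\le d}(M+k)^{k/2}(1+k)^{\beta}$ times $|P|$, so that map is bounded, hence continuous, with respect to the $\ell^{1}$ norm $|\cdot|$ on the coefficients.

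The only point requiring genuine care is the bookkeeping around the convention $\Omega_{m}=0$ for $m<0$ and around using $\lfloor M\rfloor$ in place of $M$ in $\mathcal{P}_{M}$: one must pin down the exact index set $\{n:0\le n\le\lfloor M\rfloor,\ 0\le n+l\le\lfloor M\rfloor\}$ and check that the surviving vectors $\Omega_{n+l}$ are genuinely orthonormal, so that the Pythagorean identity applies term by term. Once that is set up, the remaining steps --- comparing $(1+n+l)^{2\beta}$ with $(1+n)^{2\beta}$, applying the bound on $c_{\mathcal{A}}(n)$, and summing --- are entirely routine.
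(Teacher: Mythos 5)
Your proof is correct and follows essentially the same route as the paper's: expand $\mathcal{A}_M f$ in the Hermite basis, use the pointwise bound $c_{\mathcal{A}}(n)^2\le (n+k)^k\le (M+k)^k$ from Lemma \ref{lem.3.18} together with $(1+n+l)^{2\beta}\le(1+|l|)^{2\beta}(1+n)^{2\beta}$, and then sum; the polynomial case follows by linearity and the triangle inequality exactly as in the paper. The only cosmetic difference is that the paper routes the computation through $\mathcal{A}_M^{\ast}\Omega_n$ before reindexing, whereas you expand $\mathcal{A}_M f$ directly, but the resulting sums are identical.
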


\begin{proof}
With $l=\ell\left(  \mathbf{b}\right)  ,$ we have for all $n\in\mathbb{N}%
_{0},$
\begin{align}
\mathcal{A}_{M}^{\ast}\Omega_{n}  &  =\left(  \mathcal{P}_{M}\mathcal{A}%
\mathcal{P}_{M}\right)  ^{\ast}\Omega_{n}=\mathcal{P}_{M}\mathcal{A}^{\ast
}\mathcal{P}_{M}\Omega_{n}\nonumber\\
&  =1_{n\leq M}\mathcal{P}_{M}\mathcal{A}^{\dag}\Omega_{n}=1_{n\leq
M}c_{\mathcal{A}}\left(  n-l\right)  \mathcal{P}_{M}\Omega_{n-l}\nonumber\\
&  =1_{n\leq M}1_{n-l\leq M}c_{\mathcal{A}}\left(  n-l\right)  \Omega_{n-l}.
\label{equ.3.53}%
\end{align}
From this identity and simple estimates using Eq. (\ref{equ.3.44}) repeatedly
we find, for $f\in D\left(  \mathcal{N}^{\beta}\right)  ,$
\begin{align*}
\left\Vert \mathcal{A}_{M}f\right\Vert _{\beta}^{2}  &  =\sum_{n}\left\vert
\left\langle \mathcal{A}_{M}f,\Omega_{n}\right\rangle \right\vert ^{2}\left(
1+n\right)  ^{2\beta}\\
&  =\sum_{n}1_{0\leq n\leq M}1_{0\leq n-l\leq M}\left\vert \left\langle
f,\Omega_{n-l}\right\rangle \right\vert ^{2}\left\vert c_{\mathcal{A}}\left(
n-l\right)  \right\vert ^{2}\left(  1+n\right)  ^{2\beta}\\
&  =\sum_{n}1_{0\leq n+l\leq M}1_{0\leq n\leq M}\left\vert \left\langle
f,\Omega_{n}\right\rangle \right\vert ^{2}\left\vert c_{\mathcal{A}}\left(
n\right)  \right\vert ^{2}\left(  1+n+l\right)  ^{2\beta}\\
&  \leq\sum_{n}1_{0\leq n+l\leq M}1_{0\leq n\leq M}\left\vert \left\langle
f,\Omega_{n}\right\rangle \right\vert ^{2}\left(  k+n\right)  ^{k}\left(
1+n+\left\vert l\right\vert \right)  ^{2\beta}\\
&  \leq\left(  M+k\right)  ^{k}\left(  1+\left\vert l\right\vert \right)
^{2\beta}\sum_{n}1_{0\leq n+l\leq M}1_{0\leq n\leq M}\left\vert \left\langle
f,\Omega_{n}\right\rangle \right\vert ^{2}\left(  1+n\right)  ^{2\beta}\\
&  \leq\left(  M+k\right)  ^{k}\left(  1+\left\vert \ell\left(  \mathbf{b}%
\right)  \right\vert \right)  ^{2\beta}\left\Vert f\right\Vert _{\beta}%
^{2}\leq k^{k}\left(  M+1\right)  ^{k}\left(  1+\left\vert \ell\left(
\mathbf{b}\right)  \right\vert \right)  ^{2\beta}\left\Vert f\right\Vert
_{\beta}^{2}.
\end{align*}

\end{proof}

\begin{theorem}
\label{the.3.48}Let $k\in\mathbb{N},$ $\beta\geq0,$ $\mathbf{b}\in\left\{
\theta,\theta^{\ast}\right\}  ^{k},$ and $\mathcal{A}=u_{\mathbf{b}}\left(
a,a^{\dag}\right)  $ be as in Notation \ref{not.3.1}. If $\alpha\geq
\beta+k/2,$ then
\begin{equation}
\left\Vert \mathcal{\bar{A}}-\mathcal{A}_{M}\right\Vert _{\alpha
\rightarrow\beta}\leq\left(  M-k+2\right)  ^{\left(  \beta+k/2-\alpha\right)
}\text{ for all }M\geq k. \label{equ.3.55}%
\end{equation}
Consequently, if $\alpha>\beta+k/2,$ then
\begin{equation}
\lim_{M\rightarrow\infty}\left\Vert \left(  \mathcal{\bar{A}}-\mathcal{A}%
_{M}\right)  \varphi\right\Vert _{\beta}^{2}=0~\forall~\varphi\in D\left(
\mathcal{N}^{\alpha}\right)  . \label{equ.3.56}%
\end{equation}

\end{theorem}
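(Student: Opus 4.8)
The plan is to work entirely in the orthonormal basis $\left\{  \Omega_{n}\right\}  $ of $L^{2}\left(  m\right)  $ and to exploit the explicit action of a monomial operator on it. By Theorem \ref{the.3.22} and Lemma \ref{lem.3.18} one has $\mathcal{\bar{A}}\Omega_{n}=c_{\mathcal{A}}\left(  n\right)  \Omega_{n+l}$ for every $n\in\mathbb{N}_{0}$ (with the standing convention that $\Omega_{m}=0$ and $c_{\mathcal{A}}\left(  m\right)  =0$ for $m<0$), while the same computation as in the proof of Proposition \ref{pro.3.46} gives
\[
\mathcal{A}_{M}\Omega_{n}=\mathcal{P}_{M}\mathcal{A}\mathcal{P}_{M}\Omega_{n}=\mathbf{1}_{0\leq n\leq M}\,\mathbf{1}_{0\leq n+l\leq M}\,c_{\mathcal{A}}\left(  n\right)  \Omega_{n+l}.
\]
Subtracting, $\left(  \mathcal{\bar{A}}-\mathcal{A}_{M}\right)  \Omega_{n}=c_{\mathcal{A}}\left(  n\right)  \Omega_{n+l}\bigl(  \mathbf{1}_{n+l\geq0}-\mathbf{1}_{0\leq n\leq M}\mathbf{1}_{0\leq n+l\leq M}\bigr)  $, and a short case distinction on the sign of $l=\ell\left(  \mathbf{b}\right)  $ shows the bracket vanishes unless $n\geq M+1-\left(  \ell\left(  \mathbf{b}\right)  \right)  _{+}$. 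Since $\left(  \ell\left(  \mathbf{b}\right)  \right)  _{+}\leq q\left(  \mathbf{b}\right)  \leq k$, this forces $n+1\geq M-k+2$; in particular $\mathcal{\bar{A}}-\mathcal{A}_{M}$ annihilates the range of $\mathbf{1}_{\mathcal{N}\leq M-k}$.

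Next I would write out the $\beta$-norm. For $\varphi\in D\left(  \mathcal{N}^{\alpha}\right)  \subseteq D\left(  \mathcal{N}^{\beta+k/2}\right)  =D\left(  \mathcal{\bar{A}}\right)  $ (the inclusion uses $\alpha\geq\beta+k/2$ and Proposition \ref{pro.3.39}, which also guarantees $\mathcal{\bar{A}}\varphi\in D\left(  \mathcal{N}^{\beta}\right)  $, while $\mathcal{A}_{M}$ is finite rank), Parseval's identity gives
\[
\left\Vert \left(  \mathcal{\bar{A}}-\mathcal{A}_{M}\right)  \varphi\right\Vert _{\beta}^{2}=\sum_{n:\,n+1\geq M-k+2}\left\vert \left\langle \varphi,\Omega_{n}\right\rangle \right\vert ^{2}\,c_{\mathcal{A}}\left(  n\right)  ^{2}\,\left(  1+n+l\right)  _{+}^{2\beta}.
\]
Using $c_{\mathcal{A}}\left(  n\right)  ^{2}\leq\left(  n+k\right)  ^{k}$ and $1+n+l\leq1+n+k$ from Eq. (\ref{equ.3.21}) together with $\left\vert l\right\vert \leq k$, and then pulling out the factor $\left(  1+n\right)  ^{2\alpha}$ so as to recover $\left\Vert \varphi\right\Vert _{\alpha}^{2}$, the estimate reduces to the elementary bound
\[
\sup_{n:\,n+1\geq M-k+2}\frac{\left(  n+k\right)  ^{k}\left(  1+n+k\right)  ^{2\beta}}{\left(  1+n\right)  ^{2\alpha}}\leq\left(  M-k+2\right)  ^{2\left(  \beta+k/2-\alpha\right)  }.
\]
Because $\alpha\geq\beta+k/2$, a check of the sign of the logarithmic derivative shows the function $n\mapsto\left(  n+k\right)  ^{k}\left(  1+n+k\right)  ^{2\beta}\left(  1+n\right)  ^{-2\alpha}$ is non-increasing in $n\geq0$, so its supremum over the relevant range is attained at the left endpoint $n+1=M-k+2$; evaluating there and simplifying with $M\geq k$ delivers Eq. (\ref{equ.3.55}).

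Finally, Eq. (\ref{equ.3.56}) is an immediate consequence. When $\alpha>\beta+k/2$ the exponent $\beta+k/2-\alpha$ is strictly negative, so $\left(  M-k+2\right)  ^{\beta+k/2-\alpha}\rightarrow0$ as $M\rightarrow\infty$; combining this with Eq. (\ref{equ.3.55}) and $\left\Vert \cdot\right\Vert _{\alpha\rightarrow\beta}$ being the operator norm gives $\left\Vert \left(  \mathcal{\bar{A}}-\mathcal{A}_{M}\right)  \varphi\right\Vert _{\beta}\leq\left(  M-k+2\right)  ^{\beta+k/2-\alpha}\left\Vert \varphi\right\Vert _{\alpha}\rightarrow0$ for every $\varphi\in D\left(  \mathcal{N}^{\alpha}\right)  $.

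The main obstacle is the final elementary supremum estimate: one has to align the bookkeeping of $q\left(  \mathbf{b}\right)  $, $l=\ell\left(  \mathbf{b}\right)  $, $\left(  \ell\left(  \mathbf{b}\right)  \right)  _{+}$ and $k$ so that the support of $\mathcal{\bar{A}}-\mathcal{A}_{M}$ is exactly captured by the condition $n+1\geq M-k+2$, and then control the endpoint value of the monotone ratio above by $\left(  M-k+2\right)  ^{2\left(  \beta+k/2-\alpha\right)  }$ uniformly in $\beta$, $k$ and $M\geq k$; all the remaining ingredients are the explicit basis formulas of Subsection \ref{sub.3.3}, the norm estimate of Proposition \ref{pro.3.39}, and the truncation computation in Proposition \ref{pro.3.46}.
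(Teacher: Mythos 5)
Your route is the same as the paper's: compute $\left(\mathcal{\bar{A}}-\mathcal{A}_{M}\right)\Omega_{n}$ in the basis $\left\{\Omega_{n}\right\}$, observe that the difference is supported on modes $n\geq M+1-\ell\left(\mathbf{b}\right)_{+}\geq M-k+1$, apply Parseval, bound $c_{\mathcal{A}}\left(n\right)^{2}\leq\left(n+k\right)^{k}$, and control the resulting weight by its value at the left endpoint. (The paper works with $\mathcal{A}^{\dag}\Omega_{n}$ and re-indexes, which is only cosmetically different; your support condition matches its indicator $1_{n+l>M\wedge\left(M+l\right)}$.) One small imprecision: your Parseval display should be an inequality rather than an equality, since you have enlarged the index set from $\left\{n\geq M+1-\ell\left(\mathbf{b}\right)_{+}\right\}$ to $\left\{n+1\geq M-k+2\right\}$ and dropped the indicator.

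The step that actually fails is the final ``elementary bound.'' The monotonicity claim is fine ($\alpha\geq\beta+k/2$ does make $n\mapsto\left(n+k\right)^{k}\left(1+n+k\right)^{2\beta}\left(1+n\right)^{-2\alpha}$ non-increasing), but the endpoint value at $n=M-k+1$ is $\left(M+1\right)^{k}\left(M+2\right)^{2\beta}\left(M-k+2\right)^{-2\alpha}$, which exceeds $\left(M-k+2\right)^{2\left(\beta+k/2-\alpha\right)}$ by the factor $\left(\frac{M+1}{M-k+2}\right)^{k}\left(\frac{M+2}{M-k+2}\right)^{2\beta}>1$ whenever $k\geq2$ or $\beta>0$. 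In fact the constant-free form of Eq. (\ref{equ.3.55}) is false: for $\mathcal{A}=a^{\dag2}$ (so $k=l=2$), $M=2$, $\beta=0$, $\alpha=1$ one has $\left(\mathcal{\bar{A}}-\mathcal{A}_{2}\right)\Omega_{1}=\sqrt{6}\,\Omega_{3}$, hence $\left\Vert\mathcal{\bar{A}}-\mathcal{A}_{2}\right\Vert_{1\rightarrow0}\geq\sqrt{6}/2>1=\left(M-k+2\right)^{\beta+k/2-\alpha}$, so no argument can deliver it. What your computation (like the paper's) does prove is the bound with an extra constant, $\left\Vert\mathcal{\bar{A}}-\mathcal{A}_{M}\right\Vert_{\alpha\rightarrow\beta}\leq k^{k/2}\left(k+1\right)^{\beta}\left(M-k+2\right)^{\beta+k/2-\alpha}$; note the paper's own proof ends with $\rho\left(n\right)\leq k^{k}\left(k+1\right)^{2\beta}\left(M-k+2\right)^{2\left(\beta+k/2-\alpha\right)}$ and then silently drops the constant in the statement. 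Since this factor depends only on $k$ and $\beta$, Eq. (\ref{equ.3.56}) and every downstream use (Corollary \ref{cor.3.50}, Proposition \ref{pro.5.3}, Theorem \ref{the.8.3}) survive unchanged, but you should not claim that evaluating at the endpoint ``delivers Eq. (\ref{equ.3.55})'' as literally stated.
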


\begin{proof}
Let $M\geq k.$ From Proposition \ref{pro.3.39}, $\mathcal{\bar{A}}%
-\mathcal{A}_{M}$ is a bounded operator from $\left(  D\left(  \mathcal{N}%
^{\alpha}\right)  ,\left\Vert \cdot\right\Vert _{\alpha}\right)  $ to $\left(
D\left(  \mathcal{N}^{\beta}\right)  ,\left\Vert \cdot\right\Vert _{\beta
}\right)  .$ Making use of Eq. (\ref{equ.3.53}) we find
\begin{align*}
\left(  \mathcal{A}^{\dag}-\mathcal{P}_{M}\mathcal{A}^{\dag}\mathcal{P}%
_{M}\right)  \Omega_{n}  &  =c_{\mathcal{A}}\left(  n-l\right)  \left[
1-1_{n\leq M}\cdot1_{n-l\leq M}\right]  \Omega_{n-l}\\
&  =c_{\mathcal{A}}\left(  n-l\right)  1_{n>M\wedge\left(  M+l\right)  }%
\Omega_{n-l}\text{ for all }n\in\mathbb{Z}.
\end{align*}
Hence, if $\varphi\in D\left(  \mathcal{N}^{\alpha}\right)  \subset D\left(
\mathcal{N}^{k/2}\right)  =D\left(  \mathcal{\bar{A}}\right)  ,$ then
\begin{align*}
\left\Vert \left(  \mathcal{\bar{A}}-\mathcal{A}_{M}\right)  \varphi
\right\Vert _{\beta}^{2}  &  =\sum_{n}\left\vert \left\langle \left(
\mathcal{\bar{A}}-\mathcal{A}_{M}\right)  \varphi,\Omega_{n}\right\rangle
\right\vert ^{2}\left(  n+1\right)  ^{2\beta}\\
&  =\sum_{n}\left\vert \left\langle \varphi,\left(  \mathcal{A}^{\dag
}-\mathcal{P}_{M}\mathcal{A}^{\dag}\mathcal{P}_{M}\right)  \Omega
_{n}\right\rangle \right\vert ^{2}\left(  n+1\right)  ^{2\beta}\\
&  =\sum_{n}1_{n>M\wedge\left(  M+l\right)  }\left(  n+1\right)  ^{2\beta
}\left\vert \left\langle \varphi,\Omega_{n-l}\right\rangle \right\vert
^{2}\left\vert c_{\mathcal{A}}\left(  n-l\right)  \right\vert ^{2}\\
&  =\sum_{n}1_{n+l>M\wedge\left(  M+l\right)  }\left(  n+l+1\right)  ^{2\beta
}\left\vert \left\langle \varphi,\Omega_{n}\right\rangle \right\vert
^{2}\left\vert c_{\mathcal{A}}\left(  n\right)  \right\vert ^{2}\\
&  =\sum_{n}\rho\left(  n\right)  \left(  n+1\right)  ^{2\alpha}\left\vert
\left\langle \varphi,\Omega_{n}\right\rangle \right\vert ^{2}\leq\max_{n}%
\rho\left(  n\right)  \left\Vert \varphi\right\Vert _{\alpha}^{2}%
\end{align*}
where
\[
\rho\left(  n\right)  :=1_{n+l>M\wedge\left(  M+l\right)  }\frac{\left(
n+l+1\right)  ^{2\beta}}{\left(  n+1\right)  ^{2\alpha}}\left\vert
c_{\mathcal{A}}\left(  n\right)  \right\vert ^{2}.
\]
This completes the proof since simple estimates using Lemma \ref{lem.3.18} and
the fact that $n\geq M-k+1$ shows,
\[
\rho\left(  n\right)  \leq k^{k}\left(  k+1\right)  ^{2\beta}\left(
M-k+2\right)  ^{2\left(  \beta+k/2-\alpha\right)  }.
\]

\end{proof}

\begin{corollary}
\label{cor.3.50}If $P\left(  \theta,\theta^{\ast}\right)  \in\mathbb{C}%
\left\langle \theta,\theta^{\ast}\right\rangle ,$ $d=\deg_{\theta}P,$
$\beta\geq0,$ and $\alpha\geq\beta+d/2,$ then for any $M\geq d,$
\begin{align}
\left\Vert \left[  P\left(  a,a^{\dag}\right)  \right]  _{M}-P\left(  \bar
{a},a^{\ast}\right)  \right\Vert _{\alpha\rightarrow\beta}  &  \leq\sum
_{k=0}^{d}\left\vert P_{k}\right\vert \left(  M-k+2\right)  ^{\left(
\beta+k/2-\alpha\right)  }\nonumber\\
&  \leq\left(  M-d+2\right)  ^{\left(  \beta+d/2-\alpha\right)  }\left\vert
P\right\vert . \label{equ.3.57}%
\end{align}

\end{corollary}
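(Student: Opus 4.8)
The plan is to reduce the claim to Theorem \ref{the.3.48} applied one monomial at a time, and then to carry out an elementary estimate on the resulting exponents. First I would write $P$ in terms of its homogeneous components, $P=\sum_{k=0}^{d}P_{k}$ as in Eq. (\ref{equ.2.21}), and each $P_{k}$ in terms of its monomials, $P_{k}\left(  \theta,\theta^{\ast}\right)  =\sum_{\mathbf{b}\in\left\{  \theta,\theta^{\ast}\right\}  ^{k}}c_{k}\left(  \mathbf{b}\right)  u_{\mathbf{b}}\left(  \theta,\theta^{\ast}\right)  $ as in Eq. (\ref{equ.2.22}). The level-$M$ truncation $Q\mapsto Q_{M}=\mathcal{P}_{M}Q\mathcal{P}_{M}$ is linear, and by Corollary \ref{cor.3.40} and Theorem \ref{the.3.23} we have $P\left(  \bar{a},a^{\ast}\right)  =\sum_{k=0}^{d}\sum_{\mathbf{b}}c_{k}\left(  \mathbf{b}\right)  \overline{u_{\mathbf{b}}\left(  a,a^{\dag}\right)  }$ on $D\left(  \mathcal{N}^{d/2}\right)  $, which contains $D\left(  \mathcal{N}^{\alpha}\right)  $ because $\alpha\geq\beta+d/2\geq d/2$ (Remark \ref{rem.3.31}). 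Hence, as operators on $D\left(  \mathcal{N}^{\alpha}\right)  $,
\[
\left[  P\left(  a,a^{\dag}\right)  \right]  _{M}-P\left(  \bar{a},a^{\ast}\right)  =\sum_{k=0}^{d}\sum_{\mathbf{b}\in\left\{  \theta,\theta^{\ast}\right\}  ^{k}}c_{k}\left(  \mathbf{b}\right)  \left(  \left[  u_{\mathbf{b}}\left(  a,a^{\dag}\right)  \right]  _{M}-\overline{u_{\mathbf{b}}\left(  a,a^{\dag}\right)  }\right)  .
\]

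Next, for each $\mathbf{b}\in\left\{  \theta,\theta^{\ast}\right\}  ^{k}$ with $1\leq k\leq d$, Theorem \ref{the.3.48} applies (since $M\geq d\geq k$ and $\alpha\geq\beta+d/2\geq\beta+k/2$) and gives
\[
\left\Vert \overline{u_{\mathbf{b}}\left(  a,a^{\dag}\right)  }-\left[  u_{\mathbf{b}}\left(  a,a^{\dag}\right)  \right]  _{M}\right\Vert _{\alpha\rightarrow\beta}\leq\left(  M-k+2\right)  ^{\beta+k/2-\alpha};
\]
the $k=0$ summand is $c_{0}\left(  I-\mathcal{P}_{M}\right)  $, and the same computation as in the proof of Theorem \ref{the.3.48} (now with $\ell=0$ and $c_{\mathcal{A}}\equiv1$) bounds its $\alpha\rightarrow\beta$ norm by $\left(  M+2\right)  ^{\beta-\alpha}$. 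Summing over $\mathbf{b}$ and $k$, using the triangle inequality, and recalling $\left\vert P_{k}\right\vert =\sum_{\mathbf{b}}\left\vert c_{k}\left(  \mathbf{b}\right)  \right\vert $ from Eq. (\ref{equ.2.23}) yields the first inequality in Eq. (\ref{equ.3.57}).

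For the second inequality I would argue purely elementarily. Since $0\leq k\leq d\leq M$ we have $M-k+2\geq M-d+2\geq2>1$, while $\alpha\geq\beta+d/2$ forces each exponent $\beta+k/2-\alpha$ to be $\leq0$ and non-decreasing in $k$; hence
\[
\left(  M-k+2\right)  ^{\beta+k/2-\alpha}\leq\left(  M-d+2\right)  ^{\beta+k/2-\alpha}\leq\left(  M-d+2\right)  ^{\beta+d/2-\alpha}.
\]
Multiplying by $\left\vert P_{k}\right\vert $, summing over $k$, and using $\sum_{k=0}^{d}\left\vert P_{k}\right\vert =\left\vert P\right\vert $ finishes the proof. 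I do not expect any genuine obstacle here: all of the analytic content is packaged in Theorem \ref{the.3.48}, and the only things to be careful about are the bookkeeping of the degree-$0$ term and the correct direction of the two monotonicity estimates on the exponents.
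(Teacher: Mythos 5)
Your proof is correct and follows essentially the same route as the paper's, which simply invokes Theorem \ref{the.3.48}, the triangle inequality, and the elementary exponent estimate $\left(M-k+2\right)^{\beta+k/2-\alpha}\leq\left(M-d+2\right)^{\beta+d/2-\alpha}$. Your explicit handling of the $k=0$ term (which falls outside the stated hypothesis $k\in\mathbb{N}$ of Theorem \ref{the.3.48}) is a welcome bit of care that the paper's one-line proof glosses over.
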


\begin{proof}
This result a simple consequence of Theorem \ref{the.3.48}, the triangle
inequality, and the elementary estimate,
\[
\left(  M-k+2\right)  ^{\left(  \beta+k/2-\alpha\right)  }\leq\left(
M-d+2\right)  ^{\left(  \beta+d/2-\alpha\right)  }\text{ for }0\leq k\leq d.
\]

\end{proof}

\begin{proposition}
\label{pro.3.52}If $P\left(  \theta,\theta^{\ast}\right)  \in\mathbb{C}%
\left\langle \theta,\theta^{\ast}\right\rangle $ is as in Eq. (\ref{equ.2.20})
and $\left\vert P_{k}\right\vert $ is as in Eq. (\ref{equ.2.23}), then for all
$\beta\geq0,$
\begin{align}
&  \left\Vert \left[  \left(  \mathcal{N}+1\right)  ^{\beta},P\left(
a,a^{\dag}\right)  _{M}\right]  \left(  \mathcal{N}+1\right)  ^{-\beta
}\right\Vert _{0\rightarrow0}\nonumber\\
&  \qquad\leq\sum_{k=1}^{d}\beta k^{k/2}k\left(  1+k\right)  ^{\left\vert
\beta-1\right\vert }\left(  M+1\right)  ^{\left(  k/2-1\right)  _{+}%
}\left\vert P_{k}\right\vert \label{equ.3.58}\\
&  \qquad\leq K\left(  \beta,d\right)  \cdot\sum_{k=1}^{d}\left(  M+1\right)
^{\left(  k/2-1\right)  _{+}}\left\vert P_{k}\right\vert \label{equ.3.59}%
\end{align}
where
\begin{equation}
K\left(  \beta,d\right)  :=\beta d^{1+\frac{d}{2}}\left(  1+d\right)
^{\left\vert \beta-1\right\vert }. \label{equ.3.60}%
\end{equation}

\end{proposition}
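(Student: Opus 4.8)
The plan is to deduce everything from the monomial commutator bound already established in Proposition \ref{pro.3.43}. By linearity it suffices to bound $\left\Vert \left[ \left( \mathcal{N}+1\right) ^{\beta},\mathcal{A}_{M}\right] \left( \mathcal{N}+1\right) ^{-\beta}\right\Vert _{0\rightarrow0}$ for a single monomial operator $\mathcal{A}=u_{\mathbf{b}}\left( a,a^{\dag}\right) $ with $k=\deg_{\theta}u_{\mathbf{b}}\geq1$ and $l=\ell\left( \mathbf{b}\right) $; the constant term $P_{0}$ contributes nothing since $\left[ P_{0}\left( a,a^{\dag}\right) \right] _{M}=P_{0}\mathcal{P}_{M}$ commutes with $\left( \mathcal{N}+1\right) ^{\beta}$. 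The two algebraic facts to exploit are: (i) $\left( \mathcal{N}+1\right) ^{\beta}$ and the truncation projection $\mathcal{P}_{M}=\mathbf{1}_{\mathcal{N}\leq M}$ are both functions of $\mathcal{N}$, hence commute; and (ii) $\mathcal{A}\mathcal{P}_{M}=\mathcal{\bar{A}}\mathcal{P}_{M}$ because $\mathcal{P}_{M}$ has range in $\mathcal{S}_{0}\subset\mathcal{S}$. Using these — most transparently by evaluating both sides on $\Omega_{n}$ and invoking that both sides are bounded operators agreeing on the core $\mathcal{S}_{0}$ — I would first establish
\[
\left[ \left( \mathcal{N}+1\right) ^{\beta},\mathcal{A}_{M}\right] \left( \mathcal{N}+1\right) ^{-\beta}=\mathcal{P}_{M}\left[ \left( \mathcal{N}+1\right) ^{\beta},\mathcal{\bar{A}}\right] \left( \mathcal{N}+1\right) ^{-\beta}\mathcal{P}_{M}.
\]

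Next I would feed this into Eq. (\ref{equ.3.47}) of Proposition \ref{pro.3.43}. For $\varphi\in L^{2}\left( m\right) $ the vector $\mathcal{P}_{M}\varphi$ lies in $\mathcal{S}_{0}\subset D\left( \mathcal{N}^{k/2}\right) $, so $\left\Vert \mathcal{P}_{M}\right\Vert _{0\rightarrow0}\leq1$ together with Eq. (\ref{equ.3.47}) gives
\[
\left\Vert \left[ \left( \mathcal{N}+1\right) ^{\beta},\mathcal{A}_{M}\right] \left( \mathcal{N}+1\right) ^{-\beta}\varphi\right\Vert \leq\beta k^{k/2}\left\vert l\right\vert \left( 1+\left\vert l\right\vert \right) ^{\left\vert \beta-1\right\vert }\left\Vert \left( \mathcal{N}+1\right) ^{k/2-1}\mathbf{1}_{\mathcal{N}\geq-l}\mathcal{P}_{M}\varphi\right\Vert .
\]
Since $\left( \mathcal{N}+1\right) ^{k/2-1}$ acts on $\Omega_{n}$ by the scalar $\left( n+1\right) ^{k/2-1}$, which is increasing in $n$ when $k\geq3$ and bounded by $1$ when $k\leq2$, one has $\left\Vert \left( \mathcal{N}+1\right) ^{k/2-1}\mathbf{1}_{\mathcal{N}\leq M}\right\Vert _{0\rightarrow0}\leq\left( M+1\right) ^{\left( k/2-1\right) _{+}}$; as $\mathbf{1}_{\mathcal{N}\geq-l}\mathcal{P}_{M}$ projects onto a subspace of the range of $\mathbf{1}_{\mathcal{N}\leq M}$ this yields $\left\Vert \left[ \left( \mathcal{N}+1\right) ^{\beta},\mathcal{A}_{M}\right] \left( \mathcal{N}+1\right) ^{-\beta}\right\Vert _{0\rightarrow0}\leq\beta k^{k/2}\left\vert l\right\vert \left( 1+\left\vert l\right\vert \right) ^{\left\vert \beta-1\right\vert }\left( M+1\right) ^{\left( k/2-1\right) _{+}}$. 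Using $\left\vert l\right\vert =\left\vert \ell\left( \mathbf{b}\right) \right\vert \leq k$ and monotonicity in $\left\vert l\right\vert $, then summing over the words $\mathbf{b}\in\left\{ \theta,\theta^{\ast}\right\} ^{k}$ for $1\leq k\leq d$ and recalling $\left\vert P_{k}\right\vert =\sum_{\mathbf{b}}\left\vert c_{k}\left( \mathbf{b}\right) \right\vert $, gives Eq. (\ref{equ.3.58}); Eq. (\ref{equ.3.59}) with $K\left( \beta,d\right) $ as in Eq. (\ref{equ.3.60}) then follows from the crude bounds $k^{k/2}k\leq d^{1+d/2}$ and $\left( 1+k\right) ^{\left\vert \beta-1\right\vert }\leq\left( 1+d\right) ^{\left\vert \beta-1\right\vert }$ for $1\leq k\leq d$.

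The computations are routine; the only delicate points are the bookkeeping that moves the truncation projections through $\left( \mathcal{N}+1\right) ^{\pm\beta}$ and the elementary case split $k\leq2$ versus $k\geq3$ that produces the exponent $\left( k/2-1\right) _{+}$. An equivalent self-contained route, if one prefers not to quote Proposition \ref{pro.3.43}, is to note directly that $\left[ \left( \mathcal{N}+1\right) ^{\beta},\mathcal{A}_{M}\right] \left( \mathcal{N}+1\right) ^{-\beta}\Omega_{n}=\mathbf{1}_{0\leq n\leq M}\,\mathbf{1}_{0\leq n+l\leq M}\bigl[\left( n+l+1\right) ^{\beta}-\left( n+1\right) ^{\beta}\bigr]\left( n+1\right) ^{-\beta}c_{\mathcal{A}}\left( n\right) \Omega_{n+l}$, so the operator is a weighted shift whose norm equals the supremum of the moduli of these weights; that supremum is controlled using $\left( n+l+1\right) ^{\beta}-\left( n+1\right) ^{\beta}=\beta\int_{0}^{l}\left( n+1+r\right) ^{\beta-1}dr$ together with $\left\vert c_{\mathcal{A}}\left( n\right) \right\vert \leq\left( n+k\right) ^{k/2}\leq k^{k/2}\left( n+1\right) ^{k/2}$ from Lemma \ref{lem.3.18}, exactly reproducing the estimate above.
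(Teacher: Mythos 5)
Your proof is correct and follows essentially the same route as the paper's: commute $\mathcal{P}_{M}$ through $\left(\mathcal{N}+1\right)^{\pm\beta}$ to reduce to the untruncated commutator sandwiched by projections, apply Proposition \ref{pro.3.43}, absorb the remaining $\left(\mathcal{N}+1\right)^{k/2-1}\mathcal{P}_{M}$ factor into $\left(M+1\right)^{\left(k/2-1\right)_{+}}$, and sum over monomials using the $\ell^{1}$ norms $\left\vert P_{k}\right\vert$. The alternative weighted-shift computation you sketch at the end is a fine self-contained check but adds nothing essential.
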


\begin{proof}
If $f\in L^{2}\left(  m\right)  ,$ $\mathbf{b\in}\left\{  \theta,\theta^{\ast
}\right\}  ^{k}$ and $\mathcal{A}_{\mathbf{b}}:=u_{\mathbf{b}}\left(
a,a^{\dag}\right)  ,$ then by Proposition \ref{pro.3.43}),
\begin{align*}
\left\Vert \left[  \left(  \mathcal{N}+1\right)  ^{\beta},\left[
\mathcal{A}_{\mathbf{b}}\right]  _{M}\right]  \left(  \mathcal{N}+1\right)
^{-\beta}f\right\Vert  &  =\left\Vert \left[  \left(  \mathcal{N}+1\right)
^{\beta},\mathcal{P}_{M}\mathcal{A}_{\mathbf{b}}\mathcal{P}_{M}\right]
\left(  \mathcal{N}+1\right)  ^{-\beta}f\right\Vert \\
&  =\left\Vert \mathcal{P}_{M}\left[  \left(  \mathcal{N}+1\right)  ^{\beta
},\mathcal{A}_{\mathbf{b}}\right]  \left(  \mathcal{N}+1\right)  ^{-\beta
}\mathcal{P}_{M}f\right\Vert \\
&  \leq\beta k^{k/2}k\left(  1+k\right)  ^{\left\vert \beta-1\right\vert
}\left\Vert \left(  \mathcal{N}+1\right)  ^{k/2-1}\mathcal{P}_{M}f\right\Vert
\\
&  \leq\beta k^{k/2}k\left(  1+k\right)  ^{\left\vert \beta-1\right\vert
}\left(  M+1\right)  ^{\left(  k/2-1\right)  _{+}}\left\Vert f\right\Vert .
\end{align*}
Hence $P\in\mathbb{C}\left\langle \theta,\theta^{\ast}\right\rangle $ with
$d=\deg_{\theta}P$ is given as in Eq. (\ref{equ.2.20}) (so that $P\left(
a,a^{\dag}\right)  $ is as in Eq. (\ref{equ.2.28}) with $\hbar=1$), then by
the triangle inequality we find,
\begin{align*}
&  \left\Vert \left[  \left(  \mathcal{N}+1\right)  ^{\beta},P\left(
a,a^{\dag}\right)  _{M}\right]  \left(  \mathcal{N}+1\right)  ^{-\beta
}\right\Vert _{0\rightarrow0}\\
&  \qquad\leq\sum_{k=1}^{d}\left\Vert \left[  \left(  \mathcal{N}+1\right)
^{\beta},P_{k}\left(  a,a^{\dag}\right)  _{M}\right]  \left(  \mathcal{N}%
+1\right)  ^{-\beta}\right\Vert _{0\rightarrow0}\\
&  \qquad\leq\sum_{k=1}^{d}\beta k^{k/2}k\left(  1+k\right)  ^{\left\vert
\beta-1\right\vert }\left(  M+1\right)  ^{\left(  k/2-1\right)  _{+}%
}\left\vert P_{k}\right\vert
\end{align*}
where the absence of the $k=0$ term is a consequence $P_{0}\left(  a,a^{\dag
}\right)  _{M}$ is proportional to $\mathcal{P}_{M}$ and hence commutes with
$\left(  \mathcal{N}+1\right)  ^{\beta}.$
\end{proof}

\section{Basic Linear ODE Results\label{sec.4}}

\begin{notation}
\label{not.4.1} If $\left(  X,\left\Vert \cdot\right\Vert \right)  $ is a
Banach space, then $B\left(  X\right)  $ is notated as a collection of bounded
linear operators from $X$ to itself and $\left\Vert \cdot\right\Vert
_{B\left(  X\right)  }$ is denoted as an operator norm. (e.g. $\left(
B\left(  D\left(  \mathcal{N}^{\beta}\right)  \right)  ,\left\Vert
\cdot\right\Vert _{\beta\rightarrow\beta}\right)  $ in Notation \ref{not.3.37}%
. )
\end{notation}

\begin{lemma}
[Basic Linear ODE Theorem]\label{lem.4.2} Suppose that $\left(  X,\left\Vert
\cdot\right\Vert \right)  $ is a Banach space and $t\rightarrow C\left(
t\right)  \in B\left(  X\right)  $ is an operator norm continuous map. Then to
each $s\in\mathbb{R}$ there exists a unique solution, $U\left(  t,s\right)
\in B\left(  X\right)  ,$ to the ordinary differential equation,
\begin{equation}
\frac{d}{dt}U\left(  t,s\right)  =C\left(  t\right)  U\left(  t,s\right)
\text{ with }U\left(  s,s\right)  =I. \label{equ.4.1}%
\end{equation}
Moreover, the function $\left(  t,s\right)  \rightarrow U\left(  t,s\right)
\in B\left(  X\right)  $ is operator norm continuously differentiable in each
of its variables and $\left(  t,s\right)  \rightarrow\partial_{t}U\left(
t,s\right)  $ and $\left(  t,s\right)  \rightarrow\partial_{s}U\left(
t,s\right)  $ are operator norm continuous functions into $B\left(  X\right)
,$
\begin{align*}
\partial_{s}U\left(  t,s\right)   &  =-U\left(  t,s\right)  C\left(  s\right)
\text{ with }U\left(  t,t\right)  =I,\text{ and}\\
U\left(  t,s\right)  U\left(  s,\sigma\right)   &  =U\left(  t,\sigma\right)
\text{ for all }s,\sigma,t\in\mathbb{R}.
\end{align*}

\end{lemma}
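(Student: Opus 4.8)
The plan is to reduce the differential equation to the Volterra integral equation
\[
U(t,s) = I + \int_s^t C(r)\,U(r,s)\,dr
\]
and solve it by Picard iteration. First I would fix $s$ and a compact interval $J = [s-T, s+T]$, put $K := \sup_{r \in J}\|C(r)\|_{B(X)} < \infty$, and define the iterates $U_0(t,s) := I$ and $U_{n+1}(t,s) := I + \int_s^t C(r)U_n(r,s)\,dr$ inside the Banach space $(C(J, B(X)), \|\cdot\|_\infty)$. A routine induction gives $\|U_{n+1}(t,s) - U_n(t,s)\|_{B(X)} \le (K|t-s|)^{n+1}/(n+1)!$, so the iterates converge uniformly on $J$ to a continuous solution $U(\cdot,s)$ of the integral equation; since $T$ is arbitrary this produces a solution on all of $\mathbb{R}$. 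Differentiating the integral equation and using continuity of $r \mapsto C(r)U(r,s)$ shows $t \mapsto U(t,s)$ is operator-norm $C^1$ with $\partial_t U(t,s) = C(t)U(t,s)$, and that $\partial_t U$ is operator-norm continuous.

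For uniqueness I would use a Gr\"onwall argument: if $U$ and $V$ both solve the initial value problem then $\Delta(t) := \|U(t,s) - V(t,s)\|_{B(X)}$ satisfies $\Delta(t) \le |\int_s^t K\,\Delta(r)\,dr|$ with $\Delta(s) = 0$, forcing $\Delta \equiv 0$. The same Gr\"onwall estimate proves uniqueness for the initial value problem with the value prescribed at an arbitrary time, which is what I will use for the evolution identity: fixing $\sigma$, both $t \mapsto U(t,s)U(s,\sigma)$ and $t \mapsto U(t,\sigma)$ solve $\dot W = C(t)W$ and agree at $t = s$ (both equal $U(s,\sigma)$), hence $U(t,s)U(s,\sigma) = U(t,\sigma)$ for all $s,\sigma,t$. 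Taking $\sigma = t$ gives $U(t,s)U(s,t) = I$, so each $U(t,s)$ is invertible.

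Joint continuity of $(t,s) \mapsto U(t,s)$ I would read off from the uniform convergence of the Picard series together with continuity of each iterate, or alternatively from $U(t,s) = U(t,0)U(0,s) = U(t,0)\,U(s,0)^{-1}$ combined with continuity in the first slot and continuity of inversion in $B(X)$. Granting this, the backward equation follows from the cocycle relation: write
\[
\frac{U(t,s+h) - U(t,s)}{h} = U(t,s+h)\cdot\left(-\frac{1}{h}\int_s^{s+h} C(r)U(r,s)\,dr\right),
\]
and let $h \to 0$, using $U(r,s) \to I$ as $r \to s$, continuity of $C$, and joint continuity of $U$, to obtain $\partial_s U(t,s) = -U(t,s)C(s)$, with $U(t,t) = I$ immediate. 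Operator-norm continuity of $\partial_t U$ and $\partial_s U$ then follows from the formulas $\partial_t U = C(t)U$ and $\partial_s U = -U\,C(s)$ together with joint continuity of $U$ and $C$.

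I do not expect a genuine obstacle here — this is the Banach-space incarnation of the classical linear ODE theorem. The only point requiring a little care is organizing the argument so that joint continuity of $U$ is available before differentiating in the $s$ variable; this is why I would establish the evolution identity (and hence invertibility) first and extract both the joint continuity and the backward equation from it, rather than attempting to differentiate in $s$ directly from the integral equation.
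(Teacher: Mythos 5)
Your proposal is correct, but it is organized differently from the paper's argument. The paper does not iterate the two\-parameter Volterra equation at all: it takes the two one\-parameter solutions $V(t)$ of $\dot V=C(t)V$, $V(0)=I$, and $W(t)$ of $\dot W=-W C(t)$, $W(0)=I$ (whose existence is the same Picard\-iteration fact you prove, applied once), shows they are mutually inverse by observing that $W V$ has vanishing derivative while $Z:=VW$ solves $\dot Z=[C(t),Z]$, $Z(0)=I$, whose unique solution is $Z\equiv I$, and then simply \emph{defines} $U(t,s):=V(t)W(s)$. With that factorization the cocycle identity, joint norm continuity, and the backward equation $\partial_s U(t,s)=-U(t,s)C(s)$ all drop out in one line, since $W$ is manifestly norm\-differentiable in $s$. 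Your route instead constructs $U(\cdot,s)$ for each fixed $s$, derives the cocycle identity from uniqueness, and then extracts the $s$\-derivative from the difference quotient $U(t,s+h)-U(t,s)=U(t,s+h)\bigl(-\int_s^{s+h}C(r)U(r,s)\,dr\bigr)$ — which works, and you correctly flagged that joint continuity must be secured before this step. What the paper's approach buys is that the delicate part of your argument (joint continuity and $s$\-differentiability of a two\-parameter family) is replaced by elementary one\-variable calculus on $V$ and $W$; what your approach buys is that it generalizes immediately to situations where no such global factorization is convenient. Both are complete proofs.
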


\begin{proof}
Let $V\left(  t\right)  $ and $W\left(  t\right)  $ in $B\left(  X\right)  $
solve the ordinary differential equations,
\begin{align*}
\frac{d}{dt}V\left(  t\right)   &  =C\left(  t\right)  V\left(  t\right)
\text{ with }V\left(  0\right)  =I\text{ and }\\
\frac{d}{dt}W\left(  t\right)   &  =-W\left(  t\right)  C\left(  t\right)
\text{ with }W\left(  0\right)  =I.\text{ }%
\end{align*}
We then have
\[
\frac{d}{dt}\left[  W\left(  t\right)  V\left(  t\right)  \right]  =-W\left(
t\right)  C\left(  t\right)  V\left(  t\right)  +W\left(  t\right)  C\left(
t\right)  V\left(  t\right)  =0
\]
so that $W\left(  t\right)  V\left(  t\right)  =I$ for all $t.$ Moreover,
$Z\left(  t\right)  :=V\left(  t\right)  W\left(  t\right)  $ solves the
differential equation,
\begin{align*}
\frac{d}{dt}Z\left(  t\right)   &  =-V\left(  t\right)  W\left(  t\right)
C\left(  t\right)  +C\left(  t\right)  V\left(  t\right)  W\left(  t\right) \\
&  =\left[  C\left(  t\right)  ,Z\left(  t\right)  \right]  \text{ with
}Z\left(  0\right)  =V\left(  0\right)  W\left(  0\right)  =I.
\end{align*}
The unique solution to this differential equation is $Z\left(  t\right)  =I$
from which we conclude $V\left(  t\right)  W\left(  t\right)  =I$ for all
$t\in\mathbb{R}.$ In summary, we have shown $W\left(  t\right)  $ and
$V\left(  t\right)  $ are inverses of one another. It is now easy to check
that
\[
U\left(  t,s\right)  =V\left(  t\right)  V\left(  s\right)  ^{-1}=V\left(
t\right)  W\left(  s\right)
\]
from which all of the rest of the stated results easily follow.
\end{proof}

\begin{proposition}
[Operator Norm Bounds]\label{pro.4.3} Suppose that $\left(  K,\left\langle
\cdot,\cdot\right\rangle \right)  $ is a Hilbert space, $A,$ is a self-adjoint
operators on $K$ with $A\geq I,$ and make $D\left(  A\right)  $ into a Hilbert
space using the inner product, $\left\langle \cdot,\cdot\right\rangle _{A},$
defined by
\[
\left\langle \psi,\varphi\right\rangle _{A}:=\left\langle A\psi,A\varphi
\right\rangle \text{ for all }\varphi,\psi\in D\left(  A\right)  .
\]
Further suppose that $t\rightarrow C\left(  t\right)  \in B\left(  K\right)
$[see Notation \ref{not.4.1}] is a $\left\Vert \cdot\right\Vert _{K}$-operator
norm continuous map such that $C\left(  t\right)  D\left(  A\right)  \subset
D\left(  A\right)  $ for all $t$ and the map $t\rightarrow C\left(  t\right)
|_{D\left(  A\right)  }\in B\left(  D\left(  A\right)  \right)  $ is
$\left\Vert \cdot\right\Vert _{A}$-operator norm continuous. Let $U\left(
t,s\right)  \in B\left(  K\right)  $ be as in Lemma \ref{lem.4.2}. Then,

\begin{enumerate}
\item $U\left(  t,s\right)  D\left(  A\right)  \subset D\left(  A\right)  $
for all $s,t\in\mathbb{R},$ and
\[
U\left(  t,s\right)  U\left(  s,\sigma\right)  =U\left(  t,\sigma\right)  .
\]

\item $U\left(  t,s\right)  |_{D\left(  A\right)  }$ solves
\[
\frac{d}{dt}U\left(  t,s\right)  |_{D\left(  A\right)  }=C\left(  t\right)
|_{D\left(  A\right)  }U\left(  t,s\right)  |_{D\left(  A\right)  }\text{ with
}U\left(  s,s\right)  |_{D\left(  A\right)  }=I_{D\left(  A\right)  }%
\]
where the derivative on the left side of this equation is taken relative to
the operator norm on the Hilbert space, $\left(  D\left(  A\right)
,\left\langle \cdot,\cdot\right\rangle _{A}\right)  .$

\item For all $s,t\in\mathbb{R},$
\begin{equation}
\left\Vert U\left(  t,s\right)  \right\Vert _{B\left(  K\right)  }\leq
\exp\left(  \frac{1}{2}\left\vert \int_{s}^{t}\left\Vert C\left(  \tau\right)
+C^{\ast}\left(  \tau\right)  \right\Vert _{B\left(  K\right)  }%
d\tau\right\vert \right)  \label{equ.4.2}%
\end{equation}
where $\left\Vert \cdot\right\Vert _{B\left(  K\right)  }$ is as in Notation
\ref{not.4.1}. Moreover, $U\left(  t,s\right)  $ is unitary on $K$ if
$C\left(  t\right)  $ is skew adjoint for all $t\in\mathbb{R}.$

\item For all $s,t\in\mathbb{R},$
\begin{align}
&  \left\Vert U\left(  t,s\right)  \right\Vert _{B\left(  D\left(  A\right)
\right)  }\nonumber\\
&  \quad\leq\exp\left(  \left\vert \int_{s}^{t}\left[  \frac{1}{2}\left\Vert
C\left(  \tau\right)  +C^{\ast}\left(  \tau\right)  \right\Vert _{B\left(
K\right)  }+\left\Vert \left[  A,C\left(  \tau\right)  \right]  A^{-1}%
\right\Vert _{B\left(  K\right)  }\right]  d\tau\right\vert \right)  .
\label{equ.4.3}%
\end{align}
and
\begin{align}
&  \left\Vert U\left(  t,s\right)  \right\Vert _{B\left(  D\left(  A\right)
\right)  }\nonumber\\
&  \quad\geq\exp\left(  -\left\vert \int_{s}^{t}\left[  \frac{1}{2}\left\Vert
C\left(  \tau\right)  +C^{\ast}\left(  \tau\right)  \right\Vert _{B\left(
K\right)  }+\left\Vert \left[  A,C\left(  \tau\right)  \right]  A^{-1}%
\right\Vert _{B\left(  K\right)  }\right]  d\tau\right\vert \right)
\label{equ.4.4}%
\end{align}
where $\left\Vert \left[  A,C\left(  \tau\right)  \right]  A^{-1}\right\Vert
_{B\left(  K\right)  }$ is defined to be $\infty$ if $\left[  A,C\left(
\tau\right)  \right]  A^{-1}$ is an unbounded operator on $K.$
\end{enumerate}
\end{proposition}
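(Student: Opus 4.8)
The plan is to deduce (1)--(2) from a second application of Lemma~\ref{lem.4.2}, and to deduce (3)--(4) from Gronwall-type differential inequalities for $t\mapsto\|U(t,s)\psi\|_K^2$ and $t\mapsto\|AU(t,s)\psi\|_K^2$. For (1) and (2): since $A\geq I$, the operator $A$ is a bijection of $D(A)$ onto $K$ with $\|A^{-1}\|_{B(K)}\leq1$ and $\|\psi\|_K\leq\|\psi\|_A$ for $\psi\in D(A)$, so $(D(A),\|\cdot\|_A)$ is a Hilbert space continuously embedded in $K$. Applying Lemma~\ref{lem.4.2} with $X=(D(A),\|\cdot\|_A)$ and the operator-norm continuous map $t\mapsto C(t)|_{D(A)}\in B(D(A))$ produces $\widetilde U(t,s)\in B(D(A))$ solving $\partial_t\widetilde U(t,s)=C(t)|_{D(A)}\widetilde U(t,s)$, $\widetilde U(s,s)=I_{D(A)}$, with the composition identity $\widetilde U(t,s)\widetilde U(s,\sigma)=\widetilde U(t,\sigma)$. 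Because $\|\cdot\|_A$-norm differentiability implies $\|\cdot\|_K$-norm differentiability, $\widetilde U(t,s)$ viewed as an operator on $K$ also solves Eq.~(\ref{equ.4.1}); by the uniqueness clause of Lemma~\ref{lem.4.2}, $\widetilde U(t,s)=U(t,s)|_{D(A)}$. This immediately gives $U(t,s)D(A)\subset D(A)$, the $\|\cdot\|_A$-ODE of (2), and the composition property of (1).

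For (3): fix $\psi\in K$ and put $\phi(t):=U(t,s)\psi$, which by Lemma~\ref{lem.4.2} is $\|\cdot\|_K$-differentiable with continuous derivative $\dot\phi(t)=C(t)\phi(t)$. Then $f(t):=\|\phi(t)\|_K^2$ is $C^1$ and
\[
f'(t)=\langle\dot\phi(t),\phi(t)\rangle+\langle\phi(t),\dot\phi(t)\rangle=\langle(C(t)+C(t)^{\ast})\phi(t),\phi(t)\rangle ,
\]
so $|f'(t)|\leq\|C(t)+C(t)^{\ast}\|_{B(K)}\,f(t)$. Since $U(t,s)$ is invertible (inverse $U(s,t)$), $f>0$ for $\psi\neq0$, hence $|\tfrac{d}{dt}\log f(t)|\leq\|C(t)+C(t)^{\ast}\|_{B(K)}$; integrating and taking square roots yields Eq.~(\ref{equ.4.2}). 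If each $C(t)$ is skew-adjoint then $C(t)+C(t)^{\ast}=0$, so $U(t,s)$ is isometric, and being invertible it is unitary.

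For (4): let $\psi\in D(A)\setminus\{0\}$, $\phi(t):=U(t,s)\psi\in D(A)$, and $\eta(t):=A\phi(t)$. By (2), $\phi$ is $\|\cdot\|_A$-differentiable with $\dot\phi(t)=C(t)\phi(t)$, which is precisely the assertion that $\eta$ is $\|\cdot\|_K$-differentiable with $\dot\eta(t)=AC(t)\phi(t)$. If $\|[A,C(\tau)]A^{-1}\|_{B(K)}=\infty$ for some $\tau$ then Eq.~(\ref{equ.4.3}) is vacuous, so assume $[A,C(\tau)]A^{-1}\in B(K)$ for all $\tau$; using $C(t)\phi(t)\in D(A)$ (here the hypothesis $C(t)D(A)\subset D(A)$ enters) and $\phi(t)=A^{-1}\eta(t)$ we may write
\[
\dot\eta(t)=C(t)A\phi(t)+[A,C(t)]\phi(t)=\bigl(C(t)+[A,C(t)]A^{-1}\bigr)\eta(t)=:\widetilde C(t)\eta(t),\qquad\widetilde C(t)\in B(K).
\]
As in (3), $g(t):=\|\eta(t)\|_K^2>0$ satisfies $|g'(t)|\leq\|\widetilde C(t)+\widetilde C(t)^{\ast}\|_{B(K)}\,g(t)\leq\bigl(\|C(t)+C(t)^{\ast}\|_{B(K)}+2\|[A,C(t)]A^{-1}\|_{B(K)}\bigr)g(t)$, so Gronwall's inequality gives $\|AU(t,s)\psi\|_K\leq\|A\psi\|_K\exp\bigl(\bigl|\int_s^t(\tfrac12\|C+C^{\ast}\|_{B(K)}+\|[A,C]A^{-1}\|_{B(K)})\,d\tau\bigr|\bigr)$; dividing by $\|\psi\|_A=\|A\psi\|_K$ and taking the supremum over $\psi$ gives Eq.~(\ref{equ.4.3}). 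Then $\|U(t,s)\|_{B(D(A))}\,\|U(s,t)\|_{B(D(A))}\geq\|U(t,s)U(s,t)\|_{B(D(A))}=1$, and Eq.~(\ref{equ.4.4}) follows from Eq.~(\ref{equ.4.3}) applied with $s$ and $t$ interchanged.

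The only delicate step is the one highlighted in (4): converting the abstract $\|\cdot\|_A$-valued ODE for $U(t,s)|_{D(A)}$ into a genuine bounded-operator ODE for $\eta(t)=AU(t,s)\psi$ on $K$, i.e. justifying $AC(t)\phi(t)=C(t)A\phi(t)+[A,C(t)]\phi(t)$ and identifying $[A,C(t)]A^{-1}$ as the relevant (possibly unbounded, hence possibly-$\infty$-norm) coefficient. Once this reduction is made, (3) and (4) are literally the same Gronwall estimate run with $C(t)$ replaced by $C(t)+[A,C(t)]A^{-1}$, and the lower bound (\ref{equ.4.4}) is a formal consequence of the upper bound together with $U(t,s)^{-1}=U(s,t)$.
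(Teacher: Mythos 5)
Your proposal is correct and follows essentially the same route as the paper: items (1)--(2) via a second application of Lemma~\ref{lem.4.2} on $\left(  D\left(  A\right)  ,\left\Vert \cdot\right\Vert _{A}\right)  $ together with uniqueness of the vector-valued ODE in $K$, and items (3)--(4) via the Gronwall inequality for $\left\Vert AU\left(  t,s\right)  \psi\right\Vert _{K}^{2}$ after the commutator split $AC=CA+\left[  A,C\right]  A^{-1}A$. The only (immaterial) deviations are that you prove Eq.~(\ref{equ.4.2}) directly rather than as the $A=I$ case of Eq.~(\ref{equ.4.3}), and you obtain the lower bound (\ref{equ.4.4}) from $\left\Vert U\left(  t,s\right)  \right\Vert \left\Vert U\left(  s,t\right)  \right\Vert \geq1$ instead of integrating the two-sided differential inequality.
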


\begin{proof}
Let $U\left(  t,s\right)  $ be as in Lemma \ref{lem.4.2} when $X=K$ and
$U_{A}\left(  t,s\right)  $ be as in Lemma \ref{lem.4.2} when $X=D\left(
A\right)  .$ Further suppose that $\psi_{0}\in D\left(  A\right)  $ and let
$\psi\left(  t\right)  :=U\left(  t,s\right)  \psi_{0}$ and $\psi_{A}\left(
t\right)  :=U_{A}\left(  t,s\right)  \psi_{0}.$ We now prove each item in turn.

\begin{enumerate}
\item Since $\psi\left(  t\right)  $ and $\psi_{A}\left(  t\right)  $ both
solve the differential equation (in the $K$ -- norm) [Note: $\left\Vert
\cdot\right\Vert _{A}\geq\left\Vert \cdot\right\Vert _{K}$]
\begin{equation}
\dot{\varphi}\left(  t\right)  =C\left(  t\right)  \varphi\left(  t\right)
\text{ with }\varphi\left(  s\right)  =\psi_{0}, \label{equ.4.5}%
\end{equation}
it follows by the uniqueness of solutions to ODE that
\[
U\left(  t,s\right)  \psi_{0}=\psi\left(  t\right)  =\psi_{A}\left(  t\right)
=U_{A}\left(  t,s\right)  \psi_{0}\in D\left(  A\right)  .
\]
The results of items 1. and 2. now easily follow.

\item It is well known and easily verified that $U\left(  t,s\right)  $ is
unitary on $K$ if $C\left(  t\right)  $ is skew adjoint. The estimate in Eq.
(\ref{equ.4.2}) is a special case of the estimate in Eq. (\ref{equ.4.3}) when
$A=I$ so it suffices to prove the latter estimate.

\item With $\psi\left(  t\right)  =U\left(  t,s\right)  \psi_{0}=U_{A}\left(
t,s\right)  \psi_{0}\in D\left(  A\right)  $ as above we have,
\begin{align*}
\frac{d}{dt}\left\Vert \psi\right\Vert _{A}^{2}  &  =2\operatorname{Re}%
\left\langle C\psi,\psi\right\rangle _{A}=2\operatorname{Re}\left\langle
AC\psi,A\psi\right\rangle \\
&  =2\operatorname{Re}\left[  \left\langle CA\psi,A\psi\right\rangle
+\left\langle \left[  A,C\right]  \psi,A\psi\right\rangle \right] \\
&  =\left\langle \left(  C+C^{\ast}\right)  A\psi,A\psi\right\rangle
+2\operatorname{Re}\left\langle \left[  A,C\right]  A^{-1}A\psi,A\psi
\right\rangle
\end{align*}
and therefore,
\[
\left\vert \frac{d}{dt}\left\Vert \psi\right\Vert _{A}^{2}\right\vert
\leq\left(  \left\Vert C+C^{\ast}\right\Vert _{B\left(  K\right)
}+2\left\Vert \left[  A,C\right]  A^{-1}\right\Vert _{B\left(  K\right)
}\right)  \left\Vert \psi\right\Vert _{A}^{2}.
\]
This last inequality may be integrated to find,
\[
\left(  \frac{\left\Vert \psi\left(  t\right)  \right\Vert _{A}^{2}%
}{\left\Vert \psi_{0}\right\Vert _{A}^{2}}\right)  ^{\pm1}\leq\exp\left(
\left\vert \int_{s}^{t}\left[  \left\Vert C\left(  \tau\right)  +C^{\ast
}\left(  \tau\right)  \right\Vert _{B\left(  K\right)  }+2\left\Vert \left[
A,C\left(  \tau\right)  \right]  A^{-1}\right\Vert _{B\left(  K\right)
}\right]  d\tau\right\vert \right)
\]
from which Eqs. (\ref{equ.4.3}) and (\ref{equ.4.4}) easily follow.
\end{enumerate}
\end{proof}

\subsection{Truncated Evolutions\label{sec.4.1}}

Now suppose that $P\left(  t:\theta,\theta^{\ast}\right)  \in\mathbb{C}%
\left\langle \theta,\theta^{\ast}\right\rangle $ with $\deg_{\theta}P\left(
t:\theta,\theta^{\ast}\right)  =d\in\mathbb{N}$ is a one parameter family of
\textbf{symmetric} non-commutative polynomials whose coefficients depend
continuously on $t.$ In more detail we may write $P\left(  t:\theta
,\theta^{\ast}\right)  $ as;
\begin{align}
P\left(  t:\theta,\theta^{\ast}\right)   &  =\sum_{k=0}^{d}P_{k}\left(
t:\theta,\theta^{\ast}\right)  \text{ where}\label{equ.4.6}\\
P_{k}\left(  t:\theta,\theta^{\ast}\right)   &  =\sum_{\mathbf{b}\in\left\{
\theta,\theta^{\ast}\right\}  ^{k}}c_{k}\left(  t,\mathbf{b}\right)
u_{\mathbf{b}}\left(  \theta,\theta^{\ast}\right)  \label{equ.4.7}%
\end{align}
and all coefficients, $t\rightarrow c_{k}\left(  t,\mathbf{b}\right)  $ are
continuous in $t.$ Let $Q\left(  t\right)  :=P\left(  t:a,a^{\dag}\right)  $
and for any $M>0$ let $Q_{M}\left(  t\right)  =\mathcal{P}_{M}Q\left(
t\right)  \mathcal{P}_{M}$ be the truncation of $Q\left(  t\right)  $ as in
Notation \ref{not.3.45}. Applying Lemma \ref{lem.4.2} with $C\left(  t\right)
=-iQ_{M}\left(  t\right)  $ shows, for each $M\in\mathbb{N}$ there exists
$U^{M}\left(  t,s\right)  \in B\left(  L^{2}\left(  m\right)  \right)  $ such
that for all $s\in\mathbb{R},$
\begin{equation}
i\frac{d}{dt}U^{M}\left(  t,s\right)  =Q_{M}\left(  t\right)  U^{M}\left(
t,s\right)  \text{ with }U^{M}\left(  s,s\right)  =I. \label{equ.4.8}%
\end{equation}

\begin{theorem}
\label{the.4.5}Let $M>0$ and $U^{M}\left(  t,s\right)  $ be defined as in Eq.
(\ref{equ.4.8}). Then;

\begin{enumerate}
\item $\left(  t,s\right)  \rightarrow U^{M}\left(  t,s\right)  \in B\left(
L^{2}\left(  m\right)  \right)  $ are jointly operator norm continuous in
$\left(  t,s\right)  $ and $U^{M}\left(  t,s\right)  $ is unitary on
$L^{2}\left(  m\right)  $ for each $t,s\in\mathbb{R}.$

\item If $\sigma,s,t\in\mathbb{R},$ then
\begin{equation}
U^{M}\left(  t,s\right)  U^{M}\left(  s,\sigma\right)  =U^{M}\left(
t,\sigma\right)  . \label{equ.4.9}%
\end{equation}

\item If $\beta\geq0$ and $s,t\in\mathbb{R},$ then $U^{M}\left(  t,s\right)
D\left(  \mathcal{N}^{\beta}\right)  =D\left(  \mathcal{N}^{\beta}\right)  ,$
$U^{M}\left(  t,s\right)  |_{D\left(  \mathcal{N}^{\beta}\right)  }$ is
continuous in $\left(  t,s\right)  $ in the $\left\Vert \cdot\right\Vert
_{\beta}$-operator norm topology, $\partial_{t}U^{M}\left(  t,s\right)
|_{D\left(  \mathcal{N}^{\beta}\right)  },$ and $\partial_{s}U^{M}\left(
t,s\right)  |_{D\left(  \mathcal{N}^{\beta}\right)  }$ exists in the
$\left\Vert \cdot\right\Vert _{\beta}$-operator norm topology (see Notation
\ref{not.3.30}) and again are continuous functions of $\left(  t,s\right)  $
in this topology and satisfy
\begin{align}
i\frac{d}{dt}U^{M}\left(  t,s\right)  \varphi &  =Q_{M}\left(  t\right)
U^{M}\left(  t,s\right)  \varphi\label{equ.4.10}\\
i\frac{d}{ds}U^{M}\left(  t,s\right)  \varphi &  =-U^{M}\left(  t,s\right)
Q_{M}\left(  s\right)  \varphi.\text{ } \label{equ.4.11}%
\end{align}

\item If $\beta\geq0$ and $t,s\in\mathbb{R},$ then with $K\left(
\beta,d\right)  <\infty$ as in Eq. (\ref{equ.3.60}) we have%
\begin{equation}
\left\Vert U^{M}\left(  t,s\right)  \right\Vert _{\beta\rightarrow\beta}%
\leq\exp\left(  K\left(  \beta,d\right)  \sum_{k=1}^{d}\left(  M+1\right)
^{\left(  k/2-1\right)  _{+}}\int_{J_{st}}\left\vert P_{k}\left(  \tau
,\theta,\theta^{\ast}\right)  \right\vert d\tau\right)  . \label{equ.4.12}%
\end{equation}
where $J_{st}=\left[  \min\left(  s,t\right)  ,\max\left(  s,t\right)
\right]  ,$ and $\left\Vert \cdot\right\Vert _{\beta\rightarrow\beta}$ is as
in Notation \ref{not.3.37}, $P_{k}$ as in Eq. (\ref{equ.4.7}) and $K\left(
\beta,d\right)  $ is as in Eq. (\ref{equ.3.60}).
\end{enumerate}
\end{theorem}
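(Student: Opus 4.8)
Our plan is to obtain all four assertions by feeding the truncated operators $Q_M(t)=\mathcal{P}_MQ(t)\mathcal{P}_M$ into the abstract ODE machinery of Lemma \ref{lem.4.2} and Proposition \ref{pro.4.3}, using the quantitative truncation estimates of Propositions \ref{pro.3.46} and \ref{pro.3.52}. The first step is to record two structural facts. Since $P\left(t:\theta,\theta^{\ast}\right)$ is symmetric, $Q(t)=P\left(t:a,a^{\dag}\right)$ is a symmetric operator on $\mathcal{S}$ by Remark \ref{rem.2.19}; because $\mathcal{P}_M$ has $(M+1)$-dimensional range contained in $\mathcal{S}_0\subset\mathcal{S}$, it follows at once that $Q_M(t)$ is a bounded self-adjoint operator on $L^{2}\left(m\right)$ whose range lies in $\mathcal{S}_0\subset D\left(\mathcal{N}^{\beta}\right)$ for every $\beta\geq0$. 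Moreover, by Proposition \ref{pro.3.46} each $Q_M(t)$ is $\beta\rightarrow\beta$ bounded for every $\beta\geq0$, and since the coefficients $c_k\left(t,\mathbf{b}\right)$ are continuous in $t$, the continuity statement at the end of Proposition \ref{pro.3.46} shows that $t\rightarrow Q_M(t)$ is operator-norm continuous both on $L^{2}\left(m\right)$ and on each Hilbertian space $\left(D\left(\mathcal{N}^{\beta}\right),\left\Vert\cdot\right\Vert_{\beta}\right)$.

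With these facts in hand, items 1 and 2 are immediate. Applying Lemma \ref{lem.4.2} with $X=L^{2}\left(m\right)$ and $C(t)=-iQ_M(t)$ produces the unique solution $U^{M}\left(t,s\right)$ of Eq. (\ref{equ.4.8}), the cocycle identity (\ref{equ.4.9}), and (via the factorization $U^{M}(t,s)=V(t)W(s)$ of that lemma) joint continuity in $\left(t,s\right)$; unitarity of $U^{M}\left(t,s\right)$ on $L^{2}\left(m\right)$ then follows from Proposition \ref{pro.4.3}(3) since $C(t)$ is skew adjoint. For item 3, I would apply Proposition \ref{pro.4.3} with $K=L^{2}\left(m\right)$ and $A=\left(\mathcal{N}+1\right)^{\beta}$, whose associated Hilbert space $\left(D(A),\left\langle\cdot,\cdot\right\rangle_{A}\right)$ is exactly $\left(D\left(\mathcal{N}^{\beta}\right),\left\Vert\cdot\right\Vert_{\beta}\right)$ by Remark \ref{rem.3.31}. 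The two hypotheses of Proposition \ref{pro.4.3} needed here — that $C(t)$ preserves $D(A)$ and that $t\rightarrow C(t)|_{D(A)}$ is $\left\Vert\cdot\right\Vert_{A}$-operator-norm continuous — are precisely the structural facts above. Proposition \ref{pro.4.3}(1)--(2) then yields $U^{M}\left(t,s\right)D\left(\mathcal{N}^{\beta}\right)\subset D\left(\mathcal{N}^{\beta}\right)$, the $\left\Vert\cdot\right\Vert_{\beta}$-continuity and differentiability of $U^{M}\left(t,s\right)|_{D\left(\mathcal{N}^{\beta}\right)}$ in each variable, and the equation $i\frac{d}{dt}U^{M}\left(t,s\right)\varphi=Q_M(t)U^{M}\left(t,s\right)\varphi$ in the $\left\Vert\cdot\right\Vert_{\beta}$-topology, i.e. Eq. (\ref{equ.4.10}); Eq. (\ref{equ.4.11}) comes from the $\partial_s$-equation of Lemma \ref{lem.4.2}. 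The equality $U^{M}\left(t,s\right)D\left(\mathcal{N}^{\beta}\right)=D\left(\mathcal{N}^{\beta}\right)$ is then obtained by composing with $U^{M}\left(s,t\right)$ and invoking the cocycle property.

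For item 4, I would invoke the quantitative bound (\ref{equ.4.3}) of Proposition \ref{pro.4.3}(4) with $A=\left(\mathcal{N}+1\right)^{\beta}$. Since $C(\tau)=-iQ_M(\tau)$ is skew adjoint, the term $\left\Vert C(\tau)+C^{\ast}(\tau)\right\Vert$ vanishes, leaving
\[
\left\Vert U^{M}\left(t,s\right)\right\Vert_{\beta\rightarrow\beta}\leq\exp\left(\left\vert\int_{s}^{t}\left\Vert\left[\left(\mathcal{N}+1\right)^{\beta},Q_M(\tau)\right]\left(\mathcal{N}+1\right)^{-\beta}\right\Vert_{0\rightarrow0}\,d\tau\right\vert\right).
\]
Bounding the integrand by the second inequality of Proposition \ref{pro.3.52} applied to $P=P\left(\tau:\theta,\theta^{\ast}\right)$ (with $K\left(\beta,d\right)$ as in Eq. (\ref{equ.3.60})), and using that the integrand is non-negative so that the absolute value of the integral equals the integral over $J_{st}$, produces exactly Eq. (\ref{equ.4.12}).

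Essentially the whole argument is an assembly of results already in place, and the computations are routine bookkeeping. The one point that calls for a little care is the verification that $t\rightarrow Q_M(t)$ is continuous in the $\left\Vert\cdot\right\Vert_{\beta\rightarrow\beta}$-operator norm for every $\beta\geq0$, since this is the hypothesis of Proposition \ref{pro.4.3} that is not a priori obvious; it relies on the finite-rank nature of the truncation together with the continuity statement of Proposition \ref{pro.3.46}.
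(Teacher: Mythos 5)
Your proposal is correct and follows essentially the same route as the paper's proof: both reduce all four items to Lemma \ref{lem.4.2} and Proposition \ref{pro.4.3} with $C(t)=-iQ_{M}(t)$ and $A=\left(\mathcal{N}+I\right)^{\beta}$, verify the hypotheses via Propositions \ref{pro.3.46} and \ref{pro.3.52}, and obtain the reverse domain inclusion from the cocycle identity. No gaps.
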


\begin{remark}
\label{rem.4.6}Taking $t=\sigma$ in Eq. (\ref{equ.4.9}) and using the fact
that $U^{M}\left(  t,s\right)  $ is unitary on $L^{2}\left(  m\right)  ,$ it
follows that
\begin{equation}
U^{M}\left(  t,s\right)  ^{-1}=U^{M}\left(  s,t\right)  =U^{M}\left(
t,s\right)  ^{\ast}. \label{equ.4.13}%
\end{equation}

\end{remark}

\begin{remark}
\label{rem.4.7}From the item 3 of the Theorem and Eq. (\ref{equ.3.34}), we can
conclude that $U^{M}\left(  t,s\right)  \mathcal{S}=\mathcal{S}.$
\end{remark}

\begin{proof}
The continuity of $U^{M}$ in the item 1. and the identity in Eq.
(\ref{equ.4.9}) both follow from Lemma \ref{lem.4.2}. Since $Q_{M}\left(
t\right)  ^{\ast}=Q_{M}\left(  t\right)  $ it follows that $C\left(  t\right)
:=-iQ_{M}\left(  t\right)  $ is skew-adjoint and so the unitary property in
the first item is a consequence of item 3. of Proposition \ref{pro.4.3}. The
remaining item 3. and 4. follow from Proposition \ref{pro.4.3} with
$A:=\left(  \mathcal{N}+I\right)  ^{\beta}$ and $C\left(  t\right)
:=-iQ_{M}\left(  t\right)  .$ The hypothesis that $C\left(  t\right)  D\left(
A\right)  \subset D\left(  A\right)  $ and $t\rightarrow C\left(  t\right)
\in B\left(  D\left(  A\right)  \right)  $ is $\left\Vert \cdot\right\Vert
_{\beta}$-operator norm continuous in $t$ has been verified in Proposition
\ref{pro.3.46}. Moreover, from Eq. (\ref{equ.3.59}) of Proposition
\ref{pro.3.52} we know
\[
\left\Vert \left[  A,C\left(  \tau\right)  \right]  A^{-1}\right\Vert
_{B\left(  L^{2}\left(  m\right)  \right)  }\leq K\left(  \beta,d\right)
\sum_{k=1}^{d}\left(  M+1\right)  ^{\left(  k/2-1\right)  _{+}}\left\vert
P_{k}\left(  \tau,\theta,\theta^{\ast}\right)  \right\vert .
\]
Equation (\ref{equ.4.12}) now follows directly from Eq. (\ref{equ.4.3}) and
the fact that $C\left(  t\right)  $ is skew adjoint. Finally, the inclusion,
$U^{M}\left(  t,s\right)  D\left(  \mathcal{N}^{\beta}\right)  \subseteq
D\left(  \mathcal{N}^{\beta}\right)  ,$ follows by Proposition \ref{pro.4.3}.
The opposite inclusion is then deduced using $U^{M}\left(  t,s\right)
^{-1}=U^{M}\left(  s,t\right)  $ which follows from Eq. (\ref{equ.4.9}).
\end{proof}

\begin{corollary}
\label{cor.4.8} Recall $P\left(  t:\theta,\theta^{\ast}\right)  $ as in Eq.
(\ref{equ.4.6}). Let $\hbar>0,$ $M>0,$ $U_{\hbar}^{M}\left(  t,s\right)  $
denotes the solution to the ordinary differential equation,
\[
i\hbar\frac{d}{dt}U_{\hbar}^{M}\left(  t,s\right)  =\left[  P\left(
t:a_{\hbar},a_{\hbar}^{\dag}\right)  \right]  _{M}U_{\hbar}^{M}\left(
t,s\right)  \text{ with }U_{\hbar}^{M}\left(  s,s\right)  =I,
\]
If $\beta\geq0$ and $s,t\in\mathbb{R},$ then
\begin{equation}
\left\Vert U_{\hbar}^{M}\left(  t,s\right)  \right\Vert _{\beta\rightarrow
\beta}\leq e^{K\left(  \beta,d\right)  \sum_{k=1}^{d}\hbar^{k/2-1}\left(
M+1\right)  ^{\left(  k/2-1\right)  _{+}}\int_{J_{s,t}}\left\vert P_{k}\left(
\tau:\theta,\theta^{\ast}\right)  \right\vert d\tau}, \label{equ.4.14}%
\end{equation}
where $K\left(  \beta,d\right)  <\infty$ is as in Eq. (\ref{equ.3.60}). In
particular if $P_{1}\left(  t:\theta,\theta^{\ast}\right)  \equiv0,$ $\eta
\in(0,1],$and $0<\hbar\leq\eta\leq1,$ then
\begin{equation}
\left\Vert U_{\hbar}^{M}\left(  t,s\right)  \right\Vert _{\beta\rightarrow
\beta}\leq e^{K\left(  \beta,d\right)  \left(  \hbar M+1\right)  ^{\frac{d}%
{2}-1}\sum_{k=2}^{d}\int_{J_{s,t}}\left\vert P_{k}\left(  \tau:\theta
,\theta^{\ast}\right)  \right\vert d\tau}. \label{equ.4.15}%
\end{equation}

\end{corollary}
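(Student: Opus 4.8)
The plan is to reduce Corollary \ref{cor.4.8} to Theorem \ref{the.4.5} by ``unfolding'' Planck's constant. The key algebraic input is Eq. (\ref{equ.2.28}): for each fixed $t$,
\[
P\left(t:a_{\hbar},a_{\hbar}^{\dag}\right)=\sum_{k=0}^{d}\hbar^{k/2}P_{k}\left(t:a,a^{\dag}\right),
\]
and since the level-$M$ truncation $Q\mapsto\mathcal{P}_{M}Q\mathcal{P}_{M}$ is linear, $\left[P\left(t:a_{\hbar},a_{\hbar}^{\dag}\right)\right]_{M}=\sum_{k=0}^{d}\hbar^{k/2}\left[P_{k}\left(t:a,a^{\dag}\right)\right]_{M}$. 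Dividing the defining equation $i\hbar\frac{d}{dt}U_{\hbar}^{M}=\left[P\left(t:a_{\hbar},a_{\hbar}^{\dag}\right)\right]_{M}U_{\hbar}^{M}$ by $\hbar$ therefore recasts it as $i\frac{d}{dt}U_{\hbar}^{M}\left(t,s\right)=\left[\widetilde{P}\left(t:a,a^{\dag}\right)\right]_{M}U_{\hbar}^{M}\left(t,s\right)$ with $\widetilde{P}\left(t:\theta,\theta^{\ast}\right):=\sum_{k=0}^{d}\hbar^{k/2-1}P_{k}\left(t:\theta,\theta^{\ast}\right)$, which is exactly Eq. (\ref{equ.4.8}) for the polynomial $\widetilde{P}$.

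Next I would verify that $\widetilde{P}$ meets the hypotheses of Theorem \ref{the.4.5}: each homogeneous component $P_{k}$ of the symmetric polynomial $P$ is itself symmetric (the natural involution preserves word length, so symmetry passes to homogeneous parts), multiplication by the positive real scalars $\hbar^{k/2-1}$ preserves symmetry, the coefficients $\hbar^{k/2-1}c_{k}\left(t,\mathbf{b}\right)$ remain continuous in $t$, and $\deg_{\theta}\widetilde{P}=d$ because $\hbar^{d/2-1}\neq0$. By uniqueness of solutions to the linear ODE (Lemma \ref{lem.4.2}), $U_{\hbar}^{M}\left(t,s\right)$ coincides with the operator furnished by Theorem \ref{the.4.5} for $\widetilde{P}$; and since $\left\vert\widetilde{P}_{k}\right\vert=\hbar^{k/2-1}\left\vert P_{k}\right\vert$ (the $\ell^{1}$-norm is scaled by the positive scalar), the bound (\ref{equ.4.12}) applied to $\widetilde{P}$ is literally the claimed inequality (\ref{equ.4.14}).

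For the refinement (\ref{equ.4.15}) I would impose $P_{1}\equiv0$, so that the sum in (\ref{equ.4.14}) runs over $2\le k\le d$, where $\left(k/2-1\right)_{+}=k/2-1$. Then $\hbar^{k/2-1}\left(M+1\right)^{k/2-1}=\left(\hbar\left(M+1\right)\right)^{k/2-1}$, and since $0<\hbar\le\eta\le1$ gives $\hbar\left(M+1\right)=\hbar M+\hbar\le\hbar M+1$ while $\hbar M+1\ge1$ and $0\le k/2-1\le d/2-1$, monotonicity of $t\mapsto\left(\hbar M+1\right)^{t}$ yields $\left(\hbar\left(M+1\right)\right)^{k/2-1}\le\left(\hbar M+1\right)^{d/2-1}$ uniformly in $k$. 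Pulling this factor out of the $k$-sum and the $\tau$-integral in (\ref{equ.4.14}) produces (\ref{equ.4.15}).

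The argument has no genuinely hard step; it is the substitution $a_{\hbar}=\sqrt{\hbar}\,a$ together with bookkeeping of powers of $\hbar$. The only place demanding care is the final exponent estimate, where one must use $\hbar\le1$ (so that inserting a factor of $\hbar$ does not push $M+1$ above $\hbar M+1$) and the fact that $d/2-1\ge0$, which is the implicit role of $d\ge2$; in the degenerate case where $P$ reduces to a constant, $U_{\hbar}^{M}$ is unitary and commutes with $\mathcal{N}$, the sum in (\ref{equ.4.15}) is empty, and both sides equal $1$.
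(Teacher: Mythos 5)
Your proposal is correct and follows essentially the same route as the paper: the substitution $P_k(t:a_\hbar,a_\hbar^\dag)/\hbar = \hbar^{k/2-1}P_k(t:a,a^\dag)$ reducing the claim to Theorem \ref{the.4.5}, followed by the estimate $\hbar^{k/2-1}(M+1)^{(k/2-1)_+}=(\hbar M+\hbar)^{k/2-1}\le(\hbar M+1)^{d/2-1}$ for $2\le k\le d$. Your added checks (linearity of truncation, symmetry of the homogeneous components, uniqueness of the ODE solution) are details the paper leaves implicit, but the argument is the same.
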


\begin{proof}
Since
\[
\frac{1}{\hbar}P_{k}\left(  t:a_{\hbar},a_{\hbar}^{\dag}\right)  =\frac
{1}{\hbar}\hbar^{k/2}P_{k}\left(  t:a,a^{\dag}\right)  =\hbar^{k/2-1}%
P_{k}\left(  t:a,a^{\dag}\right)  ,
\]
Eq. (\ref{equ.4.14}) follows from Theorem \ref{the.4.5} after making the
replacement,
\[
P\left(  t:,\theta,\theta^{\ast}\right)  \longrightarrow\sum_{k=0}^{d}%
\hbar^{k/2-1}P_{k}\left(  t:\theta,\theta^{\ast}\right)  .
\]
Equation (\ref{equ.4.15}) then follows from Eq. (\ref{equ.4.14}) since for
$2\leq k\leq d$ and $0<\hbar\leq\eta\leq1,$
\[
\hbar^{k/2-1}\left(  M+1\right)  ^{\left(  k/2-1\right)  _{+}}=\left(  \hbar
M+\hbar\right)  ^{\left(  k/2-1\right)  }\leq\left(  \hbar M+1\right)
^{\frac{d}{2}-1}.
\]

\end{proof}

\section{Quadratically Generated Unitary Groups\label{sec.5}}

Let $P\left(  t:\theta,\theta^{\ast}\right)  \in\mathbb{C}\left\langle
\theta,\theta^{\ast}\right\rangle $ be a continuously varying one parameter
family of \textbf{symmetric} polynomials with $d=\deg_{\theta}P\left(
t:\theta,\theta^{\ast}\right)  \leq2.$ Then $Q\left(  t\right)  :=P\left(
t:a,a^{\dag}\right)  $ may be decomposed as;
\begin{equation}
Q\left(  t\right)  =\sum_{j=0}^{6}c_{j}\left(  t\right)  \mathcal{A}^{\left(
j\right)  } \label{equ.5.1}%
\end{equation}
where $\mathcal{A}^{\left(  j\right)  }$ is a monomial in $a$ and $a^{\dag}$
of degree no bigger than $2$ and $c_{j}\left(  \cdot\right)  $ is continuous
for each $0\leq j\leq6$ and $\mathcal{A}^{\left(  0\right)  }=1$ by
convention. The main goal of this chapter is to record the relevant
information we need about solving the following time dependent Schr\"{o}dinger
equation;
\begin{equation}
i\dot{\psi}\left(  t\right)  =\overline{Q\left(  t\right)  }\psi\left(
t\right)  \text{ with }\psi\left(  s\right)  =\varphi, \label{equ.5.2}%
\end{equation}
where $s\in\mathbb{R}$ and $\varphi\in D\left(  \mathcal{N}\right)  $ and the
derivative is taken in $L^{2}\left(  m\right)  .$

\begin{theorem}
[Uniqueness of Solutions]\label{the.5.1}If $\mathbb{R\ni}t\rightarrow
\psi\left(  t\right)  \in D\left(  \mathcal{N}\right)  $ solves Eq.
(\ref{equ.5.2}) then $\left\Vert \psi\left(  t\right)  \right\Vert =\left\Vert
\varphi\right\Vert $ for all $t\in\mathbb{R}.$ Moreover, there is at most one
solution to Eq. (\ref{equ.5.2}).
\end{theorem}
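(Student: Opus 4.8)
The statement to prove is a uniqueness/conservation result for the time-dependent Schrödinger equation (\ref{equ.5.2}) with a quadratically generated Hamiltonian $\overline{Q(t)}$ acting on solutions $t\mapsto\psi(t)\in D(\mathcal{N})$. The plan is to first establish conservation of the $L^2$-norm along any such solution via a standard differentiation-under-the-norm argument, and then deduce uniqueness from linearity.

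\begin{proof}[Proof sketch]
\textbf{Norm conservation.} Suppose $t\mapsto\psi(t)\in D(\mathcal{N})$ solves Eq.~(\ref{equ.5.2}), where the derivative is taken in $L^2(m)$. First I would note that since $\deg_\theta P(t:\theta,\theta^\ast)\le 2$, Corollary \ref{cor.3.40} gives $D(\overline{Q(t)})\supseteq D(\mathcal{N})$, so $\overline{Q(t)}\psi(t)$ makes sense and equals $Q(t)\psi(t)$ on the relevant domain; moreover $Q(t)=P(t:a,a^\dag)$ is symmetric on $\mathcal{S}$ by Remark \ref{rem.2.19} (as $P(t:\cdot)$ is symmetric), hence $\overline{Q(t)}$ is symmetric on $D(\mathcal{N})$ in the sense of Notation \ref{not.1.11} (indeed $Q(t)^\dag=Q(t)$ and the pairing $\langle Q(t)f,g\rangle=\langle f,Q(t)g\rangle$ extends by continuity in the $\|\cdot\|_1$-norm using Corollary \ref{cor.3.40}). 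Then I would compute, using that $\psi$ is $L^2$-differentiable with $\dot\psi(t)=-i\,\overline{Q(t)}\psi(t)$,
\[
\frac{d}{dt}\|\psi(t)\|^2
=2\operatorname{Re}\langle \dot\psi(t),\psi(t)\rangle
=2\operatorname{Re}\langle -i\,\overline{Q(t)}\psi(t),\psi(t)\rangle
=2\operatorname{Re}\bigl(-i\langle \overline{Q(t)}\psi(t),\psi(t)\rangle\bigr).
\]
Since $\overline{Q(t)}$ is symmetric on $D(\mathcal{N})$ and $\psi(t)\in D(\mathcal{N})$, the quantity $\langle \overline{Q(t)}\psi(t),\psi(t)\rangle$ is real, so its product with $-i$ is purely imaginary and the real part vanishes. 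Hence $\frac{d}{dt}\|\psi(t)\|^2=0$ for all $t$, giving $\|\psi(t)\|=\|\psi(s)\|=\|\varphi\|$.

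\textbf{Uniqueness.} If $\psi_1,\psi_2$ both solve Eq.~(\ref{equ.5.2}) with the same initial data $\varphi$ at time $s$, then by linearity $\psi:=\psi_1-\psi_2$ is a $D(\mathcal{N})$-valued solution of Eq.~(\ref{equ.5.2}) with initial data $0$ at time $s$. By the norm conservation just proved, $\|\psi(t)\|=\|0\|=0$ for all $t$, so $\psi_1\equiv\psi_2$.
\end{proof}

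\textbf{The main obstacle.} The only non-routine point is justifying that $\langle \overline{Q(t)}\psi(t),\psi(t)\rangle$ is real for $\psi(t)$ merely in $D(\mathcal{N})$ rather than in $\mathcal{S}$ — i.e., that symmetry of $Q(t)=P(t:a,a^\dag)$ on the core $\mathcal{S}$ passes to its closure on $D(\mathcal{N})$. This is handled by Corollary \ref{cor.3.40}: since $\deg_\theta P(t:\theta,\theta^\ast)\le 2$, the operator $\overline{Q(t)}=P(t:\bar a,a^\ast)$ is $\beta+1\to\beta$ bounded (for $\beta=0$: bounded from $D(\mathcal{N})$ to $L^2$), and $\mathcal{S}_0=\operatorname{span}\{\Omega_n\}$ is a core for it by Proposition \ref{pro.3.32}; approximating $\psi(t)$ by $\mathcal{P}_N\psi(t)\in\mathcal{S}_0$ and passing to the limit in the identity $\langle Q(t)\mathcal{P}_N\psi(t),\mathcal{P}_N\psi(t)\rangle=\langle \mathcal{P}_N\psi(t),Q(t)\mathcal{P}_N\psi(t)\rangle$ yields the reality of $\langle\overline{Q(t)}\psi(t),\psi(t)\rangle$. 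One should also check that $\psi(t)\in D(\mathcal{N})$ is preserved by the flow as an input hypothesis — here it is given, so no separate argument is needed at this stage (existence and regularity of solutions, including the $D(\mathcal{N})$-valued property, will presumably be addressed in the sequel of Section \ref{sec.5}).
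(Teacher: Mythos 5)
Your proposal is correct and follows essentially the same argument as the paper: differentiate $\|\psi(t)\|^{2}$, use the symmetry of $\overline{Q(t)}$ on $D(\mathcal{N})$ to kill the real part, and conclude uniqueness by linearity. The extra paragraph justifying that symmetry passes from $\mathcal{S}$ to $D(\mathcal{N})$ via Corollary \ref{cor.3.40} and Proposition \ref{pro.3.32} is a detail the paper takes for granted, and your treatment of it is sound.
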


\begin{proof}
If $\psi\left(  t\right)  $ solves Eq. (\ref{equ.5.2}), then because
$\overline{Q\left(  t\right)  }$ is symmetric on $D\left(  \mathcal{N}\right)
,$
\[
\frac{d}{dt}\left\Vert \psi\left(  t\right)  \right\Vert ^{2}%
=2\operatorname{Re}\left\langle \dot{\psi}\left(  t\right)  ,\psi\left(
t\right)  \right\rangle =2\operatorname{Re}\left\langle -i\overline{Q\left(
t\right)  }\psi\left(  t\right)  ,\psi\left(  t\right)  \right\rangle =0.
\]
Therefore it follows that $\left\Vert \psi\left(  t\right)  \right\Vert
^{2}=\left\Vert \psi\left(  s\right)  \right\Vert ^{2}=\left\Vert
\varphi\right\Vert ^{2}$ which proves the isometry property and because the
equation (\ref{equ.5.2}) is linear this also proves uniqueness of solutions.
\end{proof}

Theorem \ref{the.5.6} below (among other things) guarantees the existence of
solutions to Eq. (\ref{equ.5.2}). This result may be in fact be viewed as an
aspect of the well known metaplectic representation. Nevertheless, we will
provide a full proof as we need some detailed bounds on the solutions to Eq.
(\ref{equ.5.2}).

In order to prove existence to Eq. (\ref{equ.5.2}) we are going to construct
the evolution operator $U\left(  t,s\right)  $ associated to Eq.
(\ref{equ.5.2}) as a limit of the truncated evolution operators, $U^{M}\left(
t,s\right)  ,$ defined by Eq. (\ref{equ.4.8}) with $Q_{M}\left(  t\right)
=\mathcal{P}_{M}Q\left(  t\right)  \mathcal{P}_{M} $ where $Q\left(  t\right)
$ is as in Eq. (\ref{equ.5.1}). The next estimate provides uniform bounds on
$U^{M}\left(  t,s\right)  .$

\begin{corollary}
[Uniform Bounds]\label{cor.5.2}Continuing the notation above if $\beta\geq0,$
$-\infty<S<T<\infty,$ and $M\in\mathbb{N},$ then
\begin{equation}
\left\Vert U^{M}\left(  t,s\right)  \right\Vert _{\beta\rightarrow\beta}%
\leq\exp\left(  K\left(  \beta,S,T,P\right)  \left\vert t-s\right\vert
\right)  \text{ for all }S<s,t\leq T \label{equ.5.3}%
\end{equation}
where
\begin{equation}
K\left(  \beta,S,T,P\right)  =\beta4\cdot3^{\left\vert \beta-1\right\vert
}\sum_{j=1}^{6}\max_{\tau\in\left[  S,T\right]  }\left\vert c_{j}\left(
\tau\right)  \right\vert <\infty. \label{equ.5.4}%
\end{equation}

\end{corollary}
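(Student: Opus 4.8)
The plan is to obtain this as a direct specialization of the operator-norm estimate in item~4 of Theorem~\ref{the.4.5}. Since we are in the situation $d=\deg_{\theta}P\left(t:\theta,\theta^{\ast}\right)\leq2$, the only homogeneous pieces that can occur are $P_{0},P_{1},P_{2}$, and in the exponents $\left(k/2-1\right)_{+}$ appearing in Eq.~(\ref{equ.4.12}) we have $\left(1/2-1\right)_{+}=0$ and $\left(2/2-1\right)_{+}=0$. Hence every factor $\left(M+1\right)^{\left(k/2-1\right)_{+}}$ on the right-hand side of Eq.~(\ref{equ.4.12}) equals $1$, and that estimate becomes completely independent of $M$; this is the key observation producing a bound uniform in the truncation level.

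First I would record that, with $d\leq2$, the constant $K\left(\beta,d\right)=\beta d^{1+d/2}\left(1+d\right)^{\left\vert\beta-1\right\vert}$ from Eq.~(\ref{equ.3.60}) is dominated by $K\left(\beta,2\right)=\beta\cdot4\cdot3^{\left\vert\beta-1\right\vert}$, which is precisely the prefactor in $K\left(\beta,S,T,P\right)$ of Eq.~(\ref{equ.5.4}). Next, in the decomposition $Q\left(t\right)=P\left(t:a,a^{\dag}\right)=\sum_{j=0}^{6}c_{j}\left(t\right)\mathcal{A}^{\left(j\right)}$ of Eq.~(\ref{equ.5.1}) with $\mathcal{A}^{\left(0\right)}=1$, the monomials $\mathcal{A}^{\left(1\right)},\dots,\mathcal{A}^{\left(6\right)}$ are exactly the six nonconstant ordered words of length at most $2$ in $\left\{a,a^{\dag}\right\}$, so the $\ell^{1}$-norms of the homogeneous components satisfy $\left\vert P_{1}\left(\tau:\theta,\theta^{\ast}\right)\right\vert+\left\vert P_{2}\left(\tau:\theta,\theta^{\ast}\right)\right\vert=\sum_{j=1}^{6}\left\vert c_{j}\left(\tau\right)\right\vert$. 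Since the hypothesis $S<s,t\leq T$ forces $J_{st}=\left[\min\left(s,t\right),\max\left(s,t\right)\right]\subseteq\left[S,T\right]$, I would then bound
\[
\sum_{k=1}^{d}\int_{J_{st}}\left\vert P_{k}\left(\tau:\theta,\theta^{\ast}\right)\right\vert d\tau\leq\int_{J_{st}}\sum_{j=1}^{6}\left\vert c_{j}\left(\tau\right)\right\vert d\tau\leq\left\vert t-s\right\vert\sum_{j=1}^{6}\max_{\tau\in\left[S,T\right]}\left\vert c_{j}\left(\tau\right)\right\vert .
\]
Substituting the last two displays into Eq.~(\ref{equ.4.12}) gives Eq.~(\ref{equ.5.3}) with the stated $K\left(\beta,S,T,P\right)$.

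The finiteness assertion $K\left(\beta,S,T,P\right)<\infty$ is then immediate, since each coefficient function $c_{j}$ is continuous on the compact interval $\left[S,T\right]$ and hence bounded there. I do not expect any real obstacle in this argument; the only points needing care are verifying that the truncation exponents $\left(k/2-1\right)_{+}$ vanish for $k\in\{1,2\}$ (so that the bound is genuinely $M$-uniform) and correctly matching $\left\vert P_{1}\right\vert+\left\vert P_{2}\right\vert$ with $\sum_{j=1}^{6}\left\vert c_{j}\right\vert$ under the parametrization~(\ref{equ.5.1}). As an alternative to invoking Eq.~(\ref{equ.4.12}) as a black box, one could instead run the energy estimate of Proposition~\ref{pro.4.3} directly with $A=\left(\mathcal{N}+I\right)^{\beta}$ and $C\left(t\right)=-iQ_{M}\left(t\right)$ (so that $C+C^{\ast}=0$), using Eq.~(\ref{equ.3.59}) of Proposition~\ref{pro.3.52} to control $\left\Vert\left[A,C\left(t\right)\right]A^{-1}\right\Vert_{B\left(L^{2}\left(m\right)\right)}$ by $K\left(\beta,2\right)\sum_{j=1}^{6}\left\vert c_{j}\left(t\right)\right\vert$; this reproduces the same constant.
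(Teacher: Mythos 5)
Your proposal is correct and is essentially the paper's own argument: the paper's proof simply cites Theorem \ref{the.4.5}, the continuity of the coefficients, and the assumption $\deg_{\theta}P\leq2$, which is exactly the specialization you carry out (the vanishing of the exponents $\left(k/2-1\right)_{+}$ for $k\in\{1,2\}$ being the reason the bound is uniform in $M$). Your bookkeeping matching $K\left(\beta,2\right)=4\beta\cdot3^{\left\vert\beta-1\right\vert}$ and $\left\vert P_{1}\right\vert+\left\vert P_{2}\right\vert=\sum_{j=1}^{6}\left\vert c_{j}\right\vert$ is accurate.
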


\begin{proof}
This result follows directly from Theorem \ref{the.4.5} and the assumed
continuity of the coefficients of $P\left(  t:\theta,\theta^{\ast}\right)  $
along with the assumption that $d=\deg_{\theta}P\left(  t:\theta,\theta^{\ast
}\right)  \leq2.$
\end{proof}

The next proposition will be a key ingredient in the proof of Proposition
\ref{pro.5.5} below which guarantees that $\lim_{M\rightarrow\infty}%
U^{M}\left(  t,s\right)  $ exists.

\begin{proposition}
\label{pro.5.3}If $\beta\in\mathbb{R}$ and $\psi\in D\left(  \mathcal{N}%
^{\beta+1}\right)  ,$ then for all $-\infty<S<T<\infty$
\begin{align}
\lim_{M\rightarrow\infty}\sup_{K<\infty}\sup_{S\leq s,\tau\leq T}\left\Vert
\left[  \overline{Q\left(  \tau\right)  }-Q_{M}\left(  \tau\right)  \right]
U^{K}\left(  \tau,s\right)  \psi\right\Vert _{\beta}  &  =0\text{
and}\label{equ.5.5}\\
\lim_{M\rightarrow\infty}\sup_{K<\infty}\sup_{S\leq s,\tau\leq T}\left\Vert
U^{K}\left(  \tau,s\right)  \left[  \overline{Q\left(  s\right)  }%
-Q_{M}\left(  s\right)  \right]  \psi\right\Vert _{\beta}  &  =0.
\label{equ.5.6}%
\end{align}

\end{proposition}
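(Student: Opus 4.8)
The plan is to reduce both limits (\ref{equ.5.5}) and (\ref{equ.5.6}) to the truncation-level--independent operator-norm bounds for the $U^{K}(t,s)$ from Corollary~\ref{cor.5.2}, combined with the truncation estimates of Corollary~\ref{cor.3.50}. The one genuine subtlety is that, because $d\le 2$, the cheap bound $\|\overline{Q(\tau)}-Q_M(\tau)\|_{\beta+1\to\beta}$ is only \emph{bounded} in $M$ and does not tend to $0$ (this is exactly the critical case $\alpha=\beta+d/2$ of Corollary~\ref{cor.3.50}), so a one-line operator-norm argument cannot work; instead I will split $\psi$ into a smooth part (where a strictly-smoothing, hence decaying, norm is affordable) and an arbitrarily small part.

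I would first fix $-\infty<S<T<\infty$ and $\psi\in D(\mathcal N^{\beta+1})$, write $Q(\tau)=P(\tau:a,a^{\dag})$ so that $\overline{Q(\tau)}=P(\tau:\bar a,a^{*})$ by Corollary~\ref{cor.3.40} (legitimate since $d\le2$ forces $D(\mathcal N)\subset D(\mathcal N^{d/2})$), and $Q_M(\tau)=[P(\tau:a,a^{\dag})]_M$ in the sense of Notation~\ref{not.3.45}, and then record three facts. (i) Since the coefficients of $P(\tau:\theta,\theta^{*})$ are continuous in $\tau$, the $\ell^{1}$-norm $|P(\tau)|$ of Eq.~(\ref{equ.2.23}) is bounded on $[S,T]$; set $C_P:=\sup_{\tau\in[S,T]}|P(\tau)|<\infty$. (ii) By Corollary~\ref{cor.5.2} there is, for each $\gamma\ge0$, a constant $C_\gamma<\infty$ \emph{not depending on the truncation level} with $\|U^{K}(t,s)\|_{\gamma\to\gamma}\le C_\gamma$ for all $K$ and all $s,t\in[S,T]$, and by Theorem~\ref{the.4.5} each $U^{K}(t,s)$ preserves every $D(\mathcal N^{\gamma})$. (iii) By Corollary~\ref{cor.3.50} (for $M\ge d$), $\|\overline{Q(\tau)}-Q_M(\tau)\|_{\beta+1\to\beta}\le C_P$, while
\[
\varepsilon_M:=\sup_{\tau\in[S,T]}\bigl\|\overline{Q(\tau)}-Q_M(\tau)\bigr\|_{\beta+2\to\beta}\le C_P\,(M-d+2)^{\,d/2-2},
\]
which tends to $0$ as $M\to\infty$ because $d/2-2\le-1<0$.

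Next I would run the cutoff. Given $\varepsilon>0$, use Proposition~\ref{pro.3.32} to pick $L$ with $\|\psi-\mathcal P_L\psi\|_{\beta+1}<\varepsilon$, and put $\psi':=\mathcal P_L\psi$ and $\psi'':=\psi-\psi'$; being a finite linear combination of the $\Omega_n$, $\psi'\in D(\mathcal N^{\gamma})$ for every $\gamma\ge0$, hence $U^{K}(\tau,s)\psi'\in D(\mathcal N^{\beta+2})$. Then for $M\ge d$, any $K$, and any $s,\tau\in[S,T]$,
\[
\bigl\|[\overline{Q(\tau)}-Q_M(\tau)]\,U^{K}(\tau,s)\psi\bigr\|_\beta
\le \varepsilon_M\,C_{\beta+2}\,\|\psi'\|_{\beta+2}+C_P\,C_{\beta+1}\,\|\psi''\|_{\beta+1}
\le \varepsilon_M\,C_{\beta+2}\,\|\psi'\|_{\beta+2}+C_P\,C_{\beta+1}\,\varepsilon .
\]
Taking $\limsup_{M\to\infty}$ of the supremum over $K$ and over $s,\tau\in[S,T]$ kills the first term and leaves the bound $C_P C_{\beta+1}\varepsilon$; since $\varepsilon>0$ is arbitrary this gives (\ref{equ.5.5}). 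For (\ref{equ.5.6}) I would first write $\|U^{K}(\tau,s)[\overline{Q(s)}-Q_M(s)]\psi\|_\beta\le C_\beta\|[\overline{Q(s)}-Q_M(s)]\psi\|_\beta$ and then split $\psi=\psi'+\psi''$ in exactly the same way to bound $\|[\overline{Q(s)}-Q_M(s)]\psi\|_\beta$ by $\varepsilon_M\|\psi'\|_{\beta+2}+C_P\varepsilon$, after which the same $\limsup$-then-$\varepsilon$ limits conclude.

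The main obstacle is precisely the criticality $\alpha=\beta+d/2$ at $d=2$: one cannot simply dominate the left-hand sides by $\|\overline{Q(\tau)}-Q_M(\tau)\|_{\beta+1\to\beta}\,\|U^{K}\|_{\beta+1\to\beta+1}\,\|\psi\|_{\beta+1}$ and pass to the limit, since the first factor stays bounded away from $0$. The real content is thus the bookkeeping that makes the two-scale splitting uniform simultaneously in the truncation level $K$ and in $(s,\tau)\in[S,T]^2$ — uniformity supplied exactly by the $M$-independent estimates of Corollary~\ref{cor.5.2} and by the boundedness of $|P(\tau)|$ on the compact interval $[S,T]$.
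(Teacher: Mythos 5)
Your proof is correct and follows essentially the same route as the paper: the uniform-in-$K$ bounds of Corollary \ref{cor.5.2} combined with the truncation estimates (the paper uses Theorem \ref{the.3.48} monomial by monomial where you use Corollary \ref{cor.3.50} for the whole polynomial, which is equivalent), with the strictly smoothing case $\alpha>\beta+1$ giving decay and the critical case $\alpha=\beta+1$ handled by density. The only difference is that the paper leaves the final step as "a standard $3\varepsilon$ argument" while you carry out the splitting $\psi=\mathcal{P}_L\psi+(\psi-\mathcal{P}_L\psi)$ explicitly; your bookkeeping there is accurate.
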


\begin{proof}
Let us express $Q\left(  t\right)  $ as in Eq. (\ref{equ.5.1}). Since
\begin{equation}
Q_{M}\left(  t\right)  =\sum_{j=0}^{6}c_{j}\left(  t\right)  \mathcal{A}%
_{M}^{(j)} \label{equ.5.7}%
\end{equation}
where $\mathcal{A}_{M}^{(j)}$ is the truncation of $\mathcal{A}^{(j)}$ as in
Notation \ref{not.3.45}, to complete the proof it suffices to show,
\begin{align}
\lim_{M\rightarrow\infty}\sup_{K<\infty}\sup_{S\leq s,\tau\leq T}\left\Vert
\left[  \mathcal{\bar{A}}-\mathcal{A}_{M}\right]  U^{K}\left(  \tau,s\right)
\psi\right\Vert _{\beta}  &  =0\text{ and}\label{equ.5.8}\\
\lim_{M\rightarrow\infty}\sup_{K<\infty}\sup_{S\leq s,\tau\leq T}\left\Vert
U^{K}\left(  \tau,s\right)  \left[  \mathcal{\bar{A}}-\mathcal{A}_{M}\right]
\psi\right\Vert _{\beta}  &  =0 \label{equ.5.9}%
\end{align}
where $\mathcal{A}$ is a monomial in $a$ and $a^{\dag}$ with degree $2$ or less.

According to Theorem \ref{the.3.48} and Corollary \ref{cor.5.2}, if $\psi\in
D\left(  \mathcal{N}^{\alpha}\right)  $ with $\alpha\geq\beta+1,$ then
\begin{align}
\left\Vert \left[  \mathcal{\bar{A}}-\mathcal{A}_{M}\right]  U^{K}\left(
\tau,s\right)  \psi\right\Vert _{\beta}  &  \leq\left\Vert \left[
\mathcal{\bar{A}}-\mathcal{A}_{M}\right]  U^{K}\left(  \tau,s\right)
\right\Vert _{\alpha\rightarrow\beta}\left\Vert \psi\right\Vert _{\alpha
}\nonumber\\
&  \leq\left\Vert \left[  \mathcal{\bar{A}}-\mathcal{A}_{M}\right]
\right\Vert _{\alpha\rightarrow\beta}\left\Vert U^{K}\left(  \tau,s\right)
\right\Vert _{\alpha\rightarrow\alpha}\left\Vert \psi\right\Vert _{\alpha
}\nonumber\\
&  \leq C\left(  \alpha,\beta,S,T,P\right)  \left(  M+1\right)  ^{\beta
+1-\alpha}\left\Vert \psi\right\Vert _{\alpha} \label{equ.5.10}%
\end{align}
and
\begin{align}
\left\Vert U^{K}\left(  \tau,s\right)  \left[  \mathcal{\bar{A}}%
-\mathcal{A}_{M}\right]  \psi\right\Vert _{\beta}  &  \leq\left\Vert
U^{K}\left(  \tau,s\right)  \right\Vert _{\beta\rightarrow\beta}\left\Vert
\left[  \mathcal{\bar{A}}-\mathcal{A}_{M}\right]  \psi\right\Vert _{\beta
}\nonumber\\
&  \leq\left\Vert U^{K}\left(  \tau,s\right)  \right\Vert _{\beta
\rightarrow\beta}\left\Vert \left[  \mathcal{\bar{A}}-\mathcal{A}_{M}\right]
\right\Vert _{\alpha\rightarrow\beta}\left\Vert \psi\right\Vert _{\alpha
}\nonumber\\
&  \leq\tilde{C}\left(  \alpha,\beta,S,T,P\right)  \left(  M+1\right)
^{\beta+1-\alpha}\left\Vert \psi\right\Vert _{\alpha} \label{equ.5.11}%
\end{align}
from which Eqs. (\ref{equ.5.8}) and (\ref{equ.5.9}) follow if $\psi\in
D\left(  \mathcal{N}^{\alpha}\right)  $ with $\alpha>\beta+1.$

The general case, $\alpha=\beta+1,$ follows by a standard \textquotedblleft%
$3\varepsilon$\textquotedblright argument, the uniform (in $M>0)$ estimates in
Eq. (\ref{equ.5.10}) and (\ref{equ.5.11}) and the density of $\mathcal{S}%
_{0}\subset\mathcal{S}\subset D\left(  \mathcal{N}^{\beta+1}\right)  $ from
Proposition \ref{pro.3.32}.
\end{proof}

\begin{proposition}
\label{pro.5.5}If $\beta\geq0,$ $-\infty<S<T<\infty$ and $\psi\in D\left(
\mathcal{N}^{\beta}\right)  $, then it follows that
\begin{equation}
\lim_{M,K\rightarrow\infty}\sup_{S\leq s,t\leq T}\left\Vert \left[
U^{K}\left(  t,s\right)  -U^{M}\left(  t,s\right)  \right]  \psi\right\Vert
_{\beta}=0. \label{equ.5.12}%
\end{equation}

\end{proposition}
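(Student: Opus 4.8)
The plan is to show that the family $\{U^M(t,s)\psi\}_M$ is Cauchy in $\|\cdot\|_\beta$ uniformly for $S\le s,t\le T$ by the standard Duhamel (variation of parameters) trick: express the difference $U^K(t,s)-U^M(t,s)$ as an integral involving the difference of generators $Q_K-Q_M$ sandwiched between two truncated evolution operators, and then apply the uniform convergence estimate from Proposition \ref{pro.5.3}. First I would fix $\psi\in D(\mathcal{N}^\beta)$. Since $U^M(s,s)=U^K(s,s)=I$, I would write, using the fundamental theorem of calculus applied to the map $r\mapsto U^K(t,r)U^M(r,s)\psi$ (which is $\|\cdot\|_\beta$-differentiable by item 3 of Theorem \ref{the.4.5}),
\begin{equation}
\left[U^K(t,s)-U^M(t,s)\right]\psi = -i\int_s^t U^K(t,r)\left[Q_K(r)-Q_M(r)\right]U^M(r,s)\psi\,dr.
\end{equation}
To justify this, I differentiate: $\frac{d}{dr}\left[U^K(t,r)U^M(r,s)\psi\right] = -U^K(t,r)(-iQ_K(r))U^M(r,s)\psi + U^K(t,r)(-iQ_M(r))U^M(r,s)\psi = iU^K(t,r)[Q_K(r)-Q_M(r)]U^M(r,s)\psi$, using Eqs. (\ref{equ.4.10}) and (\ref{equ.4.11}), and then integrate from $r=s$ to $r=t$; the endpoint values are $U^M(t,s)\psi$ and $U^K(t,s)\psi$ respectively.

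**Estimating the integral.** Next I would take $\|\cdot\|_\beta$ of both sides. Using that $U^K(t,r)$ is $\beta\to\beta$ bounded with a bound uniform in $K$ and in $r,t\in[S,T]$ (Corollary \ref{cor.5.2}, Eq. (\ref{equ.5.3})), I get
\begin{equation}
\left\|\left[U^K(t,s)-U^M(t,s)\right]\psi\right\|_\beta \le e^{K(\beta,S,T,P)(T-S)}\left|\int_s^t\left\|\left[Q_K(r)-Q_M(r)\right]U^M(r,s)\psi\right\|_\beta\,dr\right|.
\end{equation}
Now the integrand is bounded by $\sup_{L<\infty}\sup_{S\le s,r\le T}\|[Q_L(r)-\overline{Q(r)}]U^M(r,s)\psi\|_\beta + \|[\overline{Q(r)}-Q_K(r)]U^M(r,s)\psi\|_\beta$ — wait, more cleanly: $Q_K(r)-Q_M(r) = [Q_K(r)-\overline{Q(r)}] + [\overline{Q(r)}-Q_M(r)]$ as operators on the relevant domain, so the integrand is controlled by twice the quantity appearing in Eq. (\ref{equ.5.5}) of Proposition \ref{pro.5.3} (with the roles of $M$ and $K$ taken by $\min(M,K)$, or just bounding each term separately). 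Since that proposition gives convergence to $0$ as $\min(M,K)\to\infty$, uniformly over all auxiliary evolution operators $U^M(r,s)$ and over $S\le s,r\le T$, and the length of the interval of integration is at most $T-S$, the right-hand side tends to $0$ as $M,K\to\infty$, which is exactly Eq. (\ref{equ.5.12}).

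**The main obstacle.** The one delicate point is that Proposition \ref{pro.5.3} is stated for $\psi\in D(\mathcal{N}^{\beta+1})$, whereas here $\psi$ is only assumed in $D(\mathcal{N}^\beta)$. I expect this to be the real content of the argument, handled by a $3\varepsilon$-density argument exactly as at the end of the proof of Proposition \ref{pro.5.3}: given $\varepsilon>0$, choose $\psi_0\in\mathcal{S}_0$ with $\|\psi-\psi_0\|_\beta$ small (possible by Proposition \ref{pro.3.32}); bound $\|[U^K(t,s)-U^M(t,s)](\psi-\psi_0)\|_\beta$ by $2e^{K(\beta,S,T,P)(T-S)}\|\psi-\psi_0\|_\beta$ using the uniform operator bound of Corollary \ref{cor.5.2}; and apply the already-established convergence to $\psi_0$ (which lies in every $D(\mathcal{N}^{\gamma})$). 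Everything else is routine: the differentiability needed to set up the Duhamel formula is guaranteed by item 3 of Theorem \ref{the.4.5}, and the uniformity in $s,t$ comes for free from the uniformity already present in Corollary \ref{cor.5.2} and Proposition \ref{pro.5.3}.
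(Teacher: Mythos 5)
Your proposal is correct and follows essentially the same route as the paper: the Duhamel identity obtained by differentiating the product of the two truncated evolutions in the intermediate variable, the uniform $\beta\to\beta$ bound of Corollary \ref{cor.5.2} on the outer factor, the splitting $Q_K-Q_M=[Q_K-\bar{Q}]+[\bar{Q}-Q_M]$ handled by Proposition \ref{pro.5.3}, and the final density argument via Proposition \ref{pro.3.32} to pass from $D(\mathcal{N}^{\beta+1})$ to $D(\mathcal{N}^{\beta})$. The only (immaterial) difference is that the paper differentiates $U^{M}(s,t)U^{K}(t,s)$ in $t$ and then multiplies by $U^{M}(t,s)$, whereas you differentiate $U^{K}(t,r)U^{M}(r,s)$ in the middle variable $r$; both give the same identity up to swapping the roles of $K$ and $M$.
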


\begin{proof}
By item 3 in Theorem \ref{the.4.5}, we have
\begin{equation}
i\frac{d}{dt}\left[  U^{M}\left(  s,t\right)  U^{K}\left(  t,s\right)
\right]  =U^{M}\left(  s,t\right)  \left[  Q_{K}\left(  t\right)
-Q_{M}\left(  t\right)  \right]  U^{K}\left(  t,s\right)  \label{equ.5.13}%
\end{equation}
in the sense of $\left\Vert \cdot\right\Vert _{\beta}$-operator norm.
Integrating the identity Eq. (\ref{equ.5.13}) gives
\begin{equation}
U^{M}\left(  s,t\right)  U^{K}\left(  t,s\right)  =I-i\int_{s}^{t}U^{M}\left(
s,\tau\right)  \left[  Q_{K}\left(  \tau\right)  -Q_{M}\left(  \tau\right)
\right]  U^{K}\left(  \tau,s\right)  d\tau.\label{equ.5.14}%
\end{equation}
Using Eq. (\ref{equ.4.9}) in Theorem \ref{the.4.5} and multiplying this
identity by $U^{M}\left(  t,s\right)  $ then shows,
\[
U^{K}\left(  t,s\right)  -U^{M}\left(  t,s\right)  =-i\int_{s}^{t}U^{M}\left(
t,\tau\right)  \left[  Q_{K}\left(  \tau\right)  -Q_{M}\left(  \tau\right)
\right]  U^{K}\left(  \tau,s\right)  d\tau.
\]
Applying this equation to $\psi\in D\left(  \mathcal{N}^{\beta+1}\right)  $
and then making use of Corollary \ref{cor.5.2} and the triangle inequality for
integrals shows,
\begin{align*}
&  \left\Vert \left[  U^{K}\left(  t,s\right)  -U^{M}\left(  t,s\right)
\right]  \psi\right\Vert _{\beta}\\
&  \qquad\leq\left\vert \int_{s}^{t}\left\Vert U^{M}\left(  t,\tau\right)
\left[  Q_{K}\left(  \tau\right)  -Q_{M}\left(  \tau\right)  \right]
U^{K}\left(  \tau,s\right)  \psi\right\Vert _{\beta}d\tau\right\vert \\
&  \qquad\leq\int_{s}^{t}\left\Vert U^{M}\left(  t,\tau\right)  \right\Vert
_{\beta\rightarrow\beta}\left\Vert \left[  Q_{K}\left(  \tau\right)
-Q_{M}\left(  \tau\right)  \right]  U^{K}\left(  \tau,s\right)  \psi
\right\Vert _{\beta}d\tau\\
&  \qquad\leq K\left(  \beta,S,T\right)  \left\vert \int_{s}^{t}\left\Vert
\left[  Q_{K}\left(  \tau\right)  -Q_{M}\left(  \tau\right)  \right]
U^{K}\left(  \tau,s\right)  \psi\right\Vert _{\beta}d\tau\right\vert \\
&  \qquad\leq K\left(  \beta,S,T\right)  \left\vert \int_{s}^{t}\left\Vert
\left[  Q_{K}\left(  \tau\right)  -\bar{Q}\left(  \tau\right)  \right]
U^{K}\left(  \tau,s\right)  \psi\right\Vert _{\beta}d\tau\right\vert \\
&  \qquad\qquad+K\left(  \beta,S,T\right)  \left\vert \int_{s}^{t}\left\Vert
\left[  \bar{Q}\left(  \tau\right)  -Q_{M}\left(  \tau\right)  \right]
U^{K}\left(  \tau,s\right)  \psi\right\Vert _{\beta}d\tau\right\vert
\end{align*}
and the latter expression tends to zero locally uniformly in $\left(
t,s\right)  $ as $K,M\rightarrow\infty$ by Proposition \ref{pro.5.3}. This
proves Eq. (\ref{equ.5.12}) for $\psi\in D\left(  \mathcal{N}^{\beta
+1}\right)  .$ Note that $\mathcal{S}$ is dense in $\left(  D\left(
\mathcal{N}^{\beta}\right)  ,\left\Vert \cdot\right\Vert _{\beta}\right)  $
from Proposition \ref{pro.3.32}. The uniform estimate in Eq. (\ref{equ.5.3})
of Corollary \ref{cor.5.2} along with a standard density argument shows Eq.
(\ref{equ.5.12}) holds for $\psi\in D\left(  \mathcal{N}^{\beta}\right)  .$
\end{proof}

\begin{theorem}
\label{the.5.6}Let $Q\left(  t\right)  :=P\left(  t:a,a^{\dag}\right)  $ be as
above, i.e. $P$ is a symmetric non-commutative polynomial of $\left\{
\theta,\theta^{\ast}\right\}  $ of $\deg_{\theta}P\leq2$ and having
coefficients depending continuously on $t\in\mathbb{R}.$ Then there exists a
unique strongly continuous family of unitary operators $\left\{  U\left(
t,s\right)  \right\}  _{t,s\in\mathbb{R}}$ on $L^{2}\left(  m\right)  $ such
that for all $\varphi\in D\left(  \mathcal{N}\right)  ,$ $\psi\left(
t\right)  :=U\left(  t,s\right)  \varphi$ solves Eq. (\ref{equ.5.2}).
Furthermore $\left\{  U\left(  t,s\right)  \right\}  _{t,s\in\mathbb{R}}$
satisfies the following properties;

\begin{enumerate}
\item For all $s,t,\tau\in\mathbb{R}$ we have
\begin{equation}
U\left(  t,s\right)  =U\left(  t,\tau\right)  U\left(  \tau,s\right)  .
\label{equ.5.15}%
\end{equation}

\item For all $\beta\geq0$ and $s,t\in\mathbb{R},$ $U\left(  t,s\right)
D\left(  \mathcal{N}^{\beta}\right)  =D\left(  \mathcal{N}^{\beta}\right)  $
and $\left(  t,s\right)  \rightarrow U\left(  t,s\right)  \varphi$ are jointly
$\left\Vert \cdot\right\Vert _{\beta}$-norm continuous for all $\varphi\in
D\left(  \mathcal{N}^{\beta}\right)  .$

\item If $-\infty<S<T<\infty,$ then
\begin{equation}
C\left(  \beta,S,T\right)  :=\sup_{S\leq s,t\leq T}\left\Vert U\left(
t,s\right)  \right\Vert _{\beta\rightarrow\beta}<\infty. \label{equ.5.16}%
\end{equation}

\item For $\beta\geq0$ and $\varphi\in D\left(  \mathcal{N}^{\beta+1}\right)
, $ $t\rightarrow U\left(  t,s\right)  \varphi$ and $s\rightarrow U\left(
t,s\right)  \varphi$ are strongly $\left\Vert \cdot\right\Vert _{\beta}$
--differentiable (see Definition \ref{def.2.9}) and satisfy
\begin{equation}
i\frac{d}{dt}U\left(  t,s\right)  \varphi=\bar{Q}\left(  t\right)  U\left(
t,s\right)  \varphi\text{ with }U\left(  s,s\right)  \varphi=\varphi
\label{equ.5.17}%
\end{equation}
and
\begin{equation}
i\frac{d}{ds}U\left(  t,s\right)  \varphi=-U\left(  t,s\right)  \bar{Q}\left(
s\right)  \varphi\text{ with }U\left(  s,s\right)  \varphi=\varphi
\label{equ.5.18}%
\end{equation}
where the derivatives are taken relative to the $\beta$ -- norm, $\left\Vert
\cdot\right\Vert _{\beta}.$
\end{enumerate}
\end{theorem}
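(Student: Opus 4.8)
The plan is to build $U\left(t,s\right)$ as the strong limit of the truncated evolutions $U^{M}\left(t,s\right)$ from Section~\ref{sec.4.1} and then to obtain every listed property by passing to the limit in the corresponding property of $U^{M}$. For $\psi\in D\left(\mathcal{N}^{\beta}\right)$ I would set $U\left(t,s\right)\psi:=\lim_{M\rightarrow\infty}U^{M}\left(t,s\right)\psi$, the limit existing in $\left\Vert\cdot\right\Vert_{\beta}$ and locally uniformly in $\left(t,s\right)$ by Proposition~\ref{pro.5.5} (the space $\left(D\left(\mathcal{N}^{\beta}\right),\left\Vert\cdot\right\Vert_{\beta}\right)$ being complete by Remark~\ref{rem.3.31}). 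Since $\mathcal{S}$ is $\left\Vert\cdot\right\Vert_{\beta}$-dense in $D\left(\mathcal{N}^{\beta}\right)$ for all $\beta$ (Proposition~\ref{pro.3.32}) and the $U^{M}\left(t,s\right)$ are $L^{2}\left(m\right)$-unitary, $U\left(t,s\right)$ is a well-defined isometry on $L^{2}\left(m\right)$; applying the same construction to $U\left(s,t\right)$ and using uniform boundedness of the $U^{M}$ to pass to the limit in products, one gets $U\left(t,s\right)U\left(s,t\right)=I=U\left(s,t\right)U\left(t,s\right)$, so $U\left(t,s\right)$ is unitary. Passing to the limit the same way in $U^{M}\left(t,\tau\right)U^{M}\left(\tau,s\right)=U^{M}\left(t,s\right)$ (Eq.~(\ref{equ.4.9})) gives the cocycle identity Eq.~(\ref{equ.5.15}).

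For item 2 and item 3: because the convergence $U^{M}\left(t,s\right)\psi\rightarrow U\left(t,s\right)\psi$ happens in $\left\Vert\cdot\right\Vert_{\beta}$, we get $U\left(t,s\right)D\left(\mathcal{N}^{\beta}\right)\subseteq D\left(\mathcal{N}^{\beta}\right)$, and the uniform bound of Corollary~\ref{cor.5.2} together with lower semicontinuity of $\left\Vert\cdot\right\Vert_{\beta}$ under the limit gives Eq.~(\ref{equ.5.16}); the equality $U\left(t,s\right)D\left(\mathcal{N}^{\beta}\right)=D\left(\mathcal{N}^{\beta}\right)$ then follows because $U\left(s,t\right)$ is the inverse of $U\left(t,s\right)$ and also preserves $D\left(\mathcal{N}^{\beta}\right)$. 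Joint $\left\Vert\cdot\right\Vert_{\beta}$-continuity of $\left(t,s\right)\mapsto U\left(t,s\right)\varphi$ follows since each $\left(t,s\right)\mapsto U^{M}\left(t,s\right)\varphi$ is jointly $\left\Vert\cdot\right\Vert_{\beta}$-continuous (Theorem~\ref{the.4.5}, item 3) and the convergence is locally uniform.

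For item 4, fix $\varphi\in D\left(\mathcal{N}^{\beta+1}\right)$ and integrate Eq.~(\ref{equ.4.10}) to obtain $U^{M}\left(t,s\right)\varphi=\varphi-i\int_{s}^{t}Q_{M}\left(\tau\right)U^{M}\left(\tau,s\right)\varphi\,d\tau$. Writing
\[
Q_{M}\left(\tau\right)U^{M}\left(\tau,s\right)\varphi-\bar{Q}\left(\tau\right)U\left(\tau,s\right)\varphi=\left[Q_{M}\left(\tau\right)-\bar{Q}\left(\tau\right)\right]U^{M}\left(\tau,s\right)\varphi+\bar{Q}\left(\tau\right)\left[U^{M}\left(\tau,s\right)\varphi-U\left(\tau,s\right)\varphi\right],
\]
the first term $\rightarrow0$ in $\left\Vert\cdot\right\Vert_{\beta}$ uniformly on compact $\tau$-intervals by Proposition~\ref{pro.5.3} (with $K=M$), and the second does so because $\bar{Q}\left(\tau\right)$ is $\left(\beta+1\right)\rightarrow\beta$ bounded with locally bounded norm (Corollary~\ref{cor.3.40}, since $\deg_{\theta}P\leq2$) while $U^{M}\left(\tau,s\right)\varphi\rightarrow U\left(\tau,s\right)\varphi$ in $\left\Vert\cdot\right\Vert_{\beta+1}$ locally uniformly (Proposition~\ref{pro.5.5} with $\beta$ replaced by $\beta+1$). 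Hence $U\left(t,s\right)\varphi=\varphi-i\int_{s}^{t}\bar{Q}\left(\tau\right)U\left(\tau,s\right)\varphi\,d\tau$ in $\left\Vert\cdot\right\Vert_{\beta}$, and since $\tau\mapsto\bar{Q}\left(\tau\right)U\left(\tau,s\right)\varphi$ is $\left\Vert\cdot\right\Vert_{\beta}$-continuous (continuity of the coefficients of $P$ and of $\tau\mapsto U\left(\tau,s\right)\varphi$ in $\left\Vert\cdot\right\Vert_{\beta+1}$), the fundamental theorem of calculus gives Eq.~(\ref{equ.5.17}). For Eq.~(\ref{equ.5.18}) I would use the cocycle identity: $U\left(t,s+h\right)\varphi-U\left(t,s\right)\varphi=U\left(t,s+h\right)\left[\varphi-U\left(s+h,s\right)\varphi\right]$ and $\varphi-U\left(s+h,s\right)\varphi=i\int_{s}^{s+h}\bar{Q}\left(\tau\right)U\left(\tau,s\right)\varphi\,d\tau=ih\,\bar{Q}\left(s\right)\varphi+o\left(h\right)$ in $\left\Vert\cdot\right\Vert_{\beta}$, using $\bar{Q}\left(s\right)\varphi\in D\left(\mathcal{N}^{\beta}\right)$, the uniform bound Eq.~(\ref{equ.5.16}), and the joint continuity of item 2. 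Finally, uniqueness of the family $\left\{U\left(t,s\right)\right\}$ is immediate: if $\psi\left(t\right):=U\left(t,s\right)\varphi$ solves Eq.~(\ref{equ.5.2}) for every $\varphi\in D\left(\mathcal{N}\right)$, Theorem~\ref{the.5.1} pins down the value $U\left(t,s\right)\varphi$ on the dense set $D\left(\mathcal{N}\right)$, and boundedness of $U\left(t,s\right)$ does the rest.

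The main obstacle is the differentiability step: one must upgrade the strong convergence $U^{M}\rightarrow U$ so that it holds in the $\left(\beta+1\right)$-norm, which is exactly what is needed to estimate $\bar{Q}\left(\tau\right)\left[U^{M}\left(\tau,s\right)-U\left(\tau,s\right)\right]\varphi$ and to know the limiting integrand $\tau\mapsto\bar{Q}\left(\tau\right)U\left(\tau,s\right)\varphi$ is $\left\Vert\cdot\right\Vert_{\beta}$-continuous, so that the integral equation may be differentiated; once Proposition~\ref{pro.5.5} is invoked at level $\beta+1$ this is just careful bookkeeping, and everything else reduces to routine limit-passing using the uniform bounds of Corollary~\ref{cor.5.2}.
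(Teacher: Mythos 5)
Your proposal is correct and follows essentially the same route as the paper: construct $U\left(t,s\right)$ as the locally uniform strong limit of the truncated evolutions $U^{M}\left(t,s\right)$ via Proposition \ref{pro.5.5}, obtain items 1--3 by passing to the limit with the uniform bounds of Corollary \ref{cor.5.2}, and prove item 4 by passing to the limit in the integrated form of Eq. (\ref{equ.4.10}) using exactly the splitting and the $\left(\beta+1\right)\rightarrow\beta$ boundedness of $\bar{Q}$ that the paper isolates as its ``Claim.'' The only (harmless) deviation is Eq. (\ref{equ.5.18}), which the paper gets by passing to the limit in Eq. (\ref{equ.4.11}) while you derive it from the $t$-integral equation, the cocycle identity, and a difference-quotient argument; both are valid.
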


\begin{proof}
\textbf{Item 1. }Let $\varphi\in D\left(  \mathcal{N}^{\beta}\right)  .$ From
Proposition \ref{pro.5.5} we know that $L_{\varphi}\left(  t,s\right)
:=\lim_{M\rightarrow\infty}U^{M}\left(  t,s\right)  \varphi$ exists locally
uniformly in $\left(  t,s\right)  $ in the $\beta$ -- norm and therefore
$\left(  t,s\right)  \rightarrow L_{\varphi}\left(  t,s\right)  \in D\left(
\mathcal{N}^{\beta}\right)  $ is $\beta$ -- norm continuous jointly in
$\left(  t,s\right)  .$ In particular, this observation with $\beta=0$ allows
us to define
\[
U\left(  t,s\right)  =s-\lim_{M\rightarrow\infty}U^{M}\left(  t,s\right)
\]
where the limit is taken in the strong $L^{2}\left(  m\right)  $ - operator
topology. Since the operator product is continuous under strong convergence,
by taking the strong limit of Eq. (\ref{equ.4.9}) shows the first equality in
Eq. (\ref{equ.5.15}) holds. By taking $s=t$ in Eq. (\ref{equ.5.15}) we
conclude that $U\left(  t,s\right)  $ is invertible and hence is unitary on
$L^{2}\left(  m\right)  $ as it is already known to be an isometry because it
is the strong limit of unitary operators. This proves the item 1. of the theorem.

\textbf{Items 2. }As we have just seen, for any $\varphi\in D\left(
\mathcal{N}^{\beta}\right)  $ we know that $\left(  t,s\right)  \rightarrow
U\left(  t,s\right)  \varphi=L_{\varphi}\left(  t,s\right)  \in D\left(
\mathcal{N}^{\beta}\right)  $ is $\left\Vert \cdot\right\Vert _{\beta}$ --
continuous which proves item 2. Along the way we have shown $U\left(
t,s\right)  D\left(  \mathcal{N}^{\beta}\right)  \subset D\left(
\mathcal{N}^{\beta}\right)  $ and equality then follows using Eq.
(\ref{equ.5.15}).

\textbf{Item 3 }follows by the Eq. (\ref{equ.5.3}) in Corollary \ref{cor.5.2}
where the bounds are independent of $M.$

So it only remains to prove item 4. of the theorem. We begin with proving the
following claim.

\textbf{Claim. }If $\varphi\in D\left(  \mathcal{N}^{\beta+1}\right)  ,$ then
\begin{align}
Q_{M}\left(  \tau\right)  U^{M}\left(  \tau,s\right)  \varphi &
\rightarrow\bar{Q}\left(  \tau\right)  U\left(  \tau,s\right)  \varphi\text{
as }M\rightarrow\infty\text{ and}\label{equ.5.19}\\
U^{M}\left(  \tau,s\right)  Q_{M}\left(  s\right)  \varphi &  \rightarrow
U\left(  \tau,s\right)  \bar{Q}\left(  s\right)  \varphi\text{ as
}M\rightarrow\infty\label{equ.5.20}%
\end{align}
locally uniformly in $\left(  \tau,s\right)  $ in the $\left\Vert
\cdot\right\Vert _{\beta}$ -- topology.

\textbf{Proof of the claim. }Using $\sup_{\tau\in\left[  S,T\right]
}\left\Vert \bar{Q}\left(  \tau\right)  \right\Vert _{\beta+1\rightarrow\beta
}<\infty$ (see Corollary \ref{cor.3.40}) and the simple estimate,
\begin{align*}
&  \left\Vert Q_{M}\left(  \tau\right)  U^{M}\left(  \tau,s\right)
\varphi-\bar{Q}\left(  \tau\right)  U\left(  \tau,s\right)  \varphi\right\Vert
_{\beta}\\
&  \qquad\leq\left\Vert \left[  Q_{M}\left(  \tau\right)  -\bar{Q}\left(
\tau\right)  \right]  U^{M}\left(  \tau,s\right)  \varphi\right\Vert _{\beta
}+\left\Vert \bar{Q}\left(  \tau\right)  \left[  U^{M}\left(  \tau,s\right)
-U\left(  \tau,s\right)  \right]  \varphi\right\Vert _{\beta}\\
&  \qquad\leq\left\Vert \left[  Q_{M}\left(  \tau\right)  -\bar{Q}\left(
\tau\right)  \right]  U^{M}\left(  \tau,s\right)  \varphi\right\Vert _{\beta
}+\left\Vert \bar{Q}\left(  \tau\right)  \right\Vert _{\beta+1\rightarrow
\beta}\left\Vert \left[  U^{M}\left(  \tau,s\right)  -U\left(  \tau,s\right)
\right]  \varphi\right\Vert _{\beta+1},
\end{align*}
the local uniform convergence in Eq. (\ref{equ.5.19}) is now a consequence of
Propositions \ref{pro.5.3} and \ref{pro.5.5}. The local uniform convergence in
Eq. (\ref{equ.5.20}) holds by the same methods now based on the simple
estimate,
\begin{align}
&  \left\Vert U^{M}\left(  \tau,s\right)  Q_{M}\left(  \tau\right)
\varphi-U\left(  \tau,s\right)  \bar{Q}\left(  \tau\right)  \varphi\right\Vert
_{\beta}\nonumber\\
&  \qquad\leq\left\Vert U^{M}\left(  \tau,s\right)  \left[  {Q}_{M}\left(
\tau\right)  -\bar{Q}\left(  \tau\right)  \right]  \varphi\right\Vert _{\beta
}+\left\Vert \left[  U^{M}\left(  \tau,s\right)  -U\left(  \tau,s\right)
\right]  \bar{Q}\left(  \tau\right)  \varphi\right\Vert _{\beta}
\label{equ.5.21}%
\end{align}
along with Propositions \ref{pro.5.3} and \ref{pro.5.5}. Indeed, since (see Eq.
(\ref{equ.5.1})) $\bar{Q}(t)\varphi=\sum_{j=0}^{6}c_{j}\left(  t\right)
\mathcal{\bar{A}}^{\left(  j\right)  }\in D\left(  \mathcal{N}^{\beta}\right)
$ where each $c_{j}\left(  t\right)  $ is continuous in $t,$ the latter term
in Eq. (\ref{equ.5.21}) is estimated by a sum of $7$ terms resulting from the
estimates in Proposition \ref{pro.5.5} with $\psi=\mathcal{\bar{A}}^{\left(
j\right)  }\varphi$ for $0\leq j\leq6.$ This completes the proof of the claim.

\textbf{Item 4. }By integrating Eqs. (\ref{equ.4.10}) and (\ref{equ.4.11}) on
$t$ we find,
\begin{align}
U^{M}\left(  t,s\right)  \varphi &  =\varphi-i\int_{s}^{t}Q_{M}\left(
\tau\right)  U^{M}\left(  \tau,s\right)  \varphi d\tau\text{ and
}\label{equ.5.22}\\
U^{M}\left(  t,s\right)  \varphi &  =\varphi+i\int_{t}^{s}U^{M}\left(
t,\sigma\right)  Q_{M}\left(  \sigma\right)  \varphi d\sigma\label{equ.5.23}%
\end{align}
where the integrands are $\left\Vert \cdot\right\Vert _{\beta}$ -- continuous
and the integrals are taken relative to the $\left\Vert \cdot\right\Vert
_{\beta}$ -- topology. As a consequence of the above claim, we may let
$M\rightarrow\infty$ in Eqs. (\ref{equ.5.22}) and (\ref{equ.5.23}) to find
\begin{align*}
U\left(  t,s\right)  \varphi &  =\varphi-i\int_{s}^{t}\bar{Q}\left(
\tau\right)  U\left(  \tau,s\right)  \varphi d\tau\text{ and }\\
U\left(  t,s\right)  \varphi &  =\varphi+i\int_{t}^{s}U\left(  t,\sigma
\right)  \bar{Q}\left(  \sigma\right)  \varphi d\sigma
\end{align*}
where again the integrands are $\left\Vert \cdot\right\Vert _{\beta}$ --
continuous and the integrals are taken relative to the $\left\Vert
\cdot\right\Vert _{\beta}$ -- topology. Equations (\ref{equ.5.17}) and
(\ref{equ.5.18}) follow directly from the previously displayed equations along
with the fundamental theorem of calculus.
\end{proof}

\begin{remark}
\label{rem.5.9}By taking $t=s$ in Eq. (\ref{equ.5.15}) and using the fact that
$U\left(  t,s\right)  $ is unitary on $L^{2}\left(  m\right)  $, it follows
that
\begin{equation}
U\left(  t,\tau\right)  ^{-1}=U\left(  \tau,t\right)  =U^{\ast}\left(
t,\tau\right)  , \label{equ.5.26}%
\end{equation}
where $U^{\ast}\left(  t,\tau\right)  $ is the $L^{2}\left(  m\right)  $ -
adjoint of $U\left(  \tau,t\right)  .$ Also observe from Item 2. of Theorem
\ref{the.5.6} and Eq. (\ref{equ.3.34}) that
\begin{equation}
U\left(  t,s\right)  \mathcal{S}=\mathcal{S}\text{ for all }s,t\in\mathbb{R}.
\label{equ.5.27}%
\end{equation}

\end{remark}

\begin{remark}
\label{rem.5.10}Recall that if $X$ is a Banach space, $\psi\left(  h\right)
\in X,$ $T\left(  h\right)  \in B\left(  X\right)  $ for $0<\left\vert
h\right\vert <1,$ and $\psi\left(  h\right)  \rightarrow\psi\in X$ and
$T\left(  h\right)  \overset{s}{\rightarrow}T\in B\left(  X\right)  $ as
$h\rightarrow0,$ then $T\left(  h\right)  \psi\left(  h\right)  \rightarrow
T\psi$ as $h\rightarrow0.$
\end{remark}

\begin{theorem}
\label{the.5.12}Let $Q\left(  t\right)  $ and $U\left(  t,s\right)  $ be as in
Theorem \ref{the.5.6} and set $W\left(  t\right)  :=U\left(  t,0\right)  .$ If
$\varphi\in\mathcal{S},$ $R\in\mathbb{C}\left\langle \theta,\theta^{\ast
}\right\rangle ,$ and $\mathcal{R}:=R\left(  a,a^{\dag}\right)  ,$ then
\[
\frac{d}{dt}W\left(  t\right)  ^{\ast}\mathcal{R}W\left(  t\right)
\varphi=iW\left(  t\right)  ^{\ast}\left[  Q\left(  t\right)  ,\mathcal{R}%
\right]  W\left(  t\right)  \varphi
\]
where the derivative may be taken relative to the $\left\Vert \cdot\right\Vert
_{\beta}$ -- topology for any $\beta\geq0.$
\end{theorem}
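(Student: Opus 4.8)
The plan is to differentiate the product $W\left(t\right)^{\ast}\mathcal{R}W\left(t\right)\varphi$ using the product rule, with the two outer factors handled by the differentiation results already established for $U\left(t,s\right)$ in Theorem \ref{the.5.6}. The main point to check is that the inner operator $\mathcal{R}=R\left(a,a^{\dag}\right)$ interacts well with everything: since $\varphi\in\mathcal{S}$ and $U\left(t,0\right)\mathcal{S}=\mathcal{S}$ by Eq. (\ref{equ.5.27}), the vector $W\left(t\right)\varphi$ stays in $\mathcal{S}$, so $\mathcal{R}W\left(t\right)\varphi$ is well-defined and itself lies in $\mathcal{S}$ (as $R\left(a,a^{\dag}\right)\mathcal{S}\subset\mathcal{S}$, Definition \ref{def.2.16}). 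Thus all three quantities in the product are always in $\mathcal{S}=\bigcap_{\beta\ge0}D\left(\mathcal{N}^{\beta}\right)$ by Eq. (\ref{equ.3.34}), and every norm $\left\Vert\cdot\right\Vert_{\beta}$ is finite on them.

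First I would fix $\beta\ge0$ and write, for small $h$,
\[
\frac{1}{h}\left[W\left(t+h\right)^{\ast}\mathcal{R}W\left(t+h\right)\varphi-W\left(t\right)^{\ast}\mathcal{R}W\left(t\right)\varphi\right]
\]
and split it into the usual three telescoping differences. For the right-hand factor I would use Eq. (\ref{equ.5.17}) of Theorem \ref{the.5.6}: since $\mathcal{R}\varphi\in\mathcal{S}\subset D\left(\mathcal{N}^{\gamma+1}\right)$ for any $\gamma$, the map $t\mapsto W\left(t\right)\mathcal{R}\varphi$ is strongly $\left\Vert\cdot\right\Vert_{\beta}$-differentiable with derivative $-i\bar{Q}\left(t\right)W\left(t\right)\mathcal{R}\varphi$; but I actually want $\frac{d}{dt}W\left(t\right)\varphi$, which is $-i\bar{Q}\left(t\right)W\left(t\right)\varphi$. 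Then $\mathcal{R}$ applied to this: since $\bar{Q}\left(t\right)W\left(t\right)\varphi\in\mathcal{S}$ and $\mathcal{R}$ maps $\mathcal{S}\to\mathcal{S}$ with $\left\Vert\mathcal{R}\,\cdot\right\Vert_{\beta}$ controlled by a higher $\left\Vert\cdot\right\Vert$-norm via Corollary \ref{cor.3.40}, the map $t\mapsto\mathcal{R}W\left(t\right)\varphi$ is $\left\Vert\cdot\right\Vert_{\beta}$-differentiable with derivative $-i\mathcal{R}\bar{Q}\left(t\right)W\left(t\right)\varphi$. For the left-hand factor $W\left(t\right)^{\ast}=U\left(0,t\right)$ by Eq. (\ref{equ.5.26}), so I would apply Eq. (\ref{equ.5.18}) (with the roles of the variables matched up): $s\mapsto U\left(0,s\right)\chi$ is $\left\Vert\cdot\right\Vert_{\beta}$-differentiable with derivative $-U\left(0,s\right)\left(-i\bar{Q}\left(s\right)\right)\chi=iU\left(0,s\right)\bar{Q}\left(s\right)\chi$, provided $\chi$ is regular enough — and here $\chi=\mathcal{R}W\left(t\right)\varphi\in\mathcal{S}$, so this is fine.

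Assembling the three pieces by the product rule (using Remark \ref{rem.5.10} to justify passing limits through the products of a strongly convergent operator family with a norm-convergent vector family, to handle the cross term where $W\left(t+h\right)$ acts on a difference quotient of $\mathcal{R}W\left(\cdot\right)\varphi$) gives
\[
\frac{d}{dt}W\left(t\right)^{\ast}\mathcal{R}W\left(t\right)\varphi
= iW\left(t\right)^{\ast}\bar{Q}\left(t\right)\mathcal{R}W\left(t\right)\varphi
- iW\left(t\right)^{\ast}\mathcal{R}\bar{Q}\left(t\right)W\left(t\right)\varphi,
\]
which is $iW\left(t\right)^{\ast}\left[\bar{Q}\left(t\right),\mathcal{R}\right]W\left(t\right)\varphi=iW\left(t\right)^{\ast}\left[Q\left(t\right),\mathcal{R}\right]W\left(t\right)\varphi$ on $\mathcal{S}$ (the bar on $Q$ is harmless since everything is evaluated on $\mathcal{S}$, where $\bar{Q}\left(t\right)$ agrees with $Q\left(t\right)$). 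The one step I expect to require genuine care is the cross term: I need that as $h\to0$, $W\left(t+h\right)^{\ast}$ applied to $\frac{1}{h}\left[\mathcal{R}W\left(t+h\right)\varphi-\mathcal{R}W\left(t\right)\varphi\right]$ converges in $\left\Vert\cdot\right\Vert_{\beta}$ to $W\left(t\right)^{\ast}$ applied to $-i\mathcal{R}\bar{Q}\left(t\right)W\left(t\right)\varphi$; this needs the difference quotient to converge in a norm strong enough that the uniformly $\left\Vert\cdot\right\Vert_{\beta}$-bounded family $W\left(t+h\right)^{\ast}$ (Theorem \ref{the.5.6}, item 3) can be applied, together with the strong continuity $W\left(t+h\right)^{\ast}\to W\left(t\right)^{\ast}$ in the $\left\Vert\cdot\right\Vert_{\beta}$-topology from item 2. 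Because $\mathcal{R}$ only costs finitely many orders of $\mathcal{N}$ (Corollary \ref{cor.3.40}) and $W\left(\cdot\right)\varphi$ is differentiable in every $\left\Vert\cdot\right\Vert_{\gamma}$-norm, these regularity bookkeeping requirements are all met, and the argument closes.
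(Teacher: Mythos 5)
Your proposal is correct and follows essentially the same route as the paper: the paper sets $\psi(t):=\mathcal{R}W(t)\varphi$, differentiates it in the $\left\Vert\cdot\right\Vert_{\beta}$-norm using Theorem \ref{the.5.6} together with $\left\Vert\mathcal{R}\right\Vert_{\beta+d/2\rightarrow\beta}<\infty$ from Corollary \ref{cor.3.40}, and then handles the outer factor $W(t)^{\ast}=U(0,t)$ and the cross term exactly as you do, via Remark \ref{rem.5.10} and the $\beta$-norm strong continuity and differentiability statements of Theorem \ref{the.5.6}. The only cosmetic difference is that the paper groups the computation into two difference quotients (for $W(t+h)^{\ast}$ and for $\psi(t+h)$) rather than your three-fold telescoping, which changes nothing of substance.
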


\begin{proof}
Let $d=\deg_{\theta}R,$ $\psi\left(  t\right)  =\mathcal{R}W\left(  t\right)
\varphi$ and
\[
f\left(  t\right)  :=W\left(  t\right)  ^{\ast}\mathcal{R}W\left(  t\right)
\varphi=W\left(  t\right)  ^{\ast}\psi\left(  t\right)  =U\left(  0,t\right)
\psi\left(  t\right)  .
\]
In the proof we will write $\left\Vert \cdot\right\Vert _{\beta}$-$\frac
{d}{dt}\psi\left(  t\right)  $ to indicate that we are taking the derivative
relative to the $\beta$ -- norm topology.

Using the result of Theorem \ref{the.5.6} and the fact that $\left\Vert
\mathcal{R}\right\Vert _{\beta+d/2\rightarrow\beta}<\infty$ (Corollary
\ref{cor.3.40}) it easily follows that
\begin{equation}
\left\Vert \cdot\right\Vert _{\beta}\text{-}\frac{d}{dt}\psi\left(  t\right)
=-i\mathcal{R}Q\left(  t\right)  W\left(  t\right)  \varphi. \label{equ.5.28}%
\end{equation}
Combining this assertion with Remark \ref{rem.5.10} and the $\beta$ -- norm
strong continuity of $W\left(  t\right)  ^{\ast}$ (again Theorem
\ref{the.5.6}) we may conclude that
\[
\left\Vert \cdot\right\Vert _{\beta}\text{-}\lim_{h\rightarrow0}W\left(
t+h\right)  ^{\ast}\frac{\psi\left(  t+h\right)  -\psi\left(  t\right)  }%
{h}=W\left(  t\right)  ^{\ast}\dot{\psi}\left(  t\right)  =-iW\left(
t\right)  ^{\ast}\mathcal{R}Q\left(  t\right)  W\left(  t\right)  \varphi.
\]
Hence, as
\[
\frac{f\left(  t+h\right)  -f\left(  t\right)  }{h}=W\left(  t+h\right)
^{\ast}\frac{\psi\left(  t+h\right)  -\psi\left(  t\right)  }{h}%
+\frac{W\left(  t+h\right)  ^{\ast}-W\left(  t\right)  ^{\ast}}{h}\psi\left(
t\right)  ,
\]
we may conclude
\begin{align*}
\left\Vert \cdot\right\Vert _{\beta}\text{-}\frac{d}{dt}f\left(  t\right)   &
=\left\Vert \cdot\right\Vert _{\beta}\text{-}\lim_{h\rightarrow0}%
\frac{f\left(  t+h\right)  -f\left(  t\right)  }{h}\\
&  =-iW\left(  t\right)  ^{\ast}\mathcal{R}Q\left(  t\right)  W\left(
t\right)  \varphi+\dot{W}^{\ast}\left(  t\right)  \psi\left(  t\right) \\
&  =-iW\left(  t\right)  ^{\ast}\mathcal{R}Q\left(  t\right)  W\left(
t\right)  \varphi+iW\left(  t\right)  ^{\ast}Q\left(  t\right)  \mathcal{R}%
W\left(  t\right)  \varphi
\end{align*}
which completes the proof.
\end{proof}

\subsection{Consequences of Theorem \ref{the.5.6}\label{sec.5.1}}

\begin{notation}
\label{not.5.13} Let $H\in\mathbb{C}\left\langle \theta,\theta^{\ast
}\right\rangle $ be a symmetric non-commutative polynomial in $\theta$ and
$\theta^{\ast}.$ Let $\alpha\in\mathbb{C}$ and $H_{2}\left(  \alpha
:\theta,\theta^{\ast}\right)  $ as in Eq. (\ref{equ.2.21}) be the degree 2
homogeneous component of $H\left(  \theta+\alpha,\theta^{\ast}+\bar{\alpha
}\right)  .$ From Remark \ref{rem.2.19} and Theorem \ref{the.2.22},
$H^{\mathrm{cl}}\left(  \alpha\right)  $ is real-valued and $H_{2}\left(
\alpha:\theta,\theta^{\ast}\right)  $ is still symmetric.
\end{notation}

\begin{corollary}
\label{cor.5.14}Let $H\in\mathbb{C}\left\langle \theta,\theta^{\ast
}\right\rangle $ be a symmetric non-commutative polynomial in $\theta$ and
$\theta^{\ast},$ $H_{2}\left(  \alpha:\theta,\theta^{\ast}\right)  $ be as in
Notation \ref{not.5.13}, and suppose that $\mathbb{R}\ni t\rightarrow
\alpha\left(  t\right)  \in\mathbb{C}$ is a given continuous function. Then
there exists a unique one parameter strongly continuous family of unitary
operators $\left\{  W_{0}\left(  t\right)  \right\}  _{t\in\mathbb{R}}$ on
$L^{2}\left(  m\right)  $ such that (with $W_{0}^{\ast}\left(  t\right)  $
being the $L^{2}$ - adjoint of $W_{0}\left(  t\right)  )$;

\begin{enumerate}
\item $W_{0}\left(  t\right)  \mathcal{S}=\mathcal{S}$ and $W_{0}^{\ast
}\left(  t\right)  \mathcal{S}=\mathcal{S}.$

\item $W_{0}\left(  t\right)  D\left(  \mathcal{N}^{\beta}\right)  =D\left(
\mathcal{N}^{\beta}\right)  ,$ $W_{0}\left(  t\right)  ^{\ast}D\left(
\mathcal{N}^{\beta}\right)  =D\left(  \mathcal{N}^{\beta}\right)  ,$ and for
all $0\leq T<\infty,$ there exists $C_{T,\beta}=C_{T,\beta}\left(
\alpha\right)  <\infty$ such that
\begin{equation}
\sup_{\left\vert t\right\vert \leq T}\left\Vert W_{0}\left(  t\right)
\right\Vert _{\beta\rightarrow\beta}\vee\left\Vert W_{0}\left(  t\right)
^{\ast}\right\Vert _{\beta\rightarrow\beta}\leq C_{T,\beta}. \label{equ.5.29}%
\end{equation}

\item The maps $t\rightarrow W_{0}\left(  t\right)  \psi$ and $t\rightarrow
W_{0}^{\ast}\left(  t\right)  \psi$ are $\left\Vert \cdot\right\Vert _{\beta}%
$-norm continuous for all $\psi\in D\left(  \mathcal{N}^{\beta}\right)  .$

\item For each $\beta\geq0$ and $\psi\in D\left(  \mathcal{N}^{\beta
+1}\right)  ;$
\begin{equation}
i\left(  \left\Vert \cdot\right\Vert _{\beta}\text{-}\frac{\partial}{\partial
t}\right)  W_{0}\left(  t\right)  \psi=\overline{H_{2}\left(  \alpha\left(
t\right)  :a,a^{\dag}\right)  }W_{0}\left(  t\right)  \psi\text{ with }%
W_{0}\left(  0\right)  \psi=\psi\label{equ.5.30}%
\end{equation}
and
\begin{equation}
-i\left(  \left\Vert \cdot\right\Vert _{\beta}\text{-}\frac{\partial}{\partial
t}\right)  W_{0}\left(  t\right)  ^{\ast}\psi=W_{0}\left(  t\right)  ^{\ast
}\overline{H_{2}\left(  \alpha\left(  t\right)  :a,a^{\dag}\right)  }%
\psi\text{ with }W_{0}\left(  0\right)  ^{\ast}\psi=\psi. \label{equ.5.31}%
\end{equation}

\end{enumerate}

[In Eqs. (\ref{equ.5.30}) and (\ref{equ.5.31}), one may replace $\overline
{H_{2}\left(  \alpha\left(  t\right)  :a,a^{\dag}\right)  }$ by $H_{2}\left(
\alpha\left(  t\right)  :\overline{a},a^{\ast}\right)  $ as both operators are
equal on $D\left(  \mathcal{N}\right)  $ by Corollary \ref{cor.3.40}.]
\end{corollary}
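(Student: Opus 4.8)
The plan is to obtain Corollary \ref{cor.5.14} as an immediate specialization of Theorem \ref{the.5.6}. First I would set
$P\left(t:\theta,\theta^{\ast}\right):=H_{2}\left(\alpha\left(t\right):\theta,\theta^{\ast}\right)$
and check that this one-parameter family meets the hypotheses of Theorem \ref{the.5.6}. By Theorem \ref{the.2.22}, since $H=H^{\ast}$ the homogeneous component $H_{2}\left(\alpha:\theta,\theta^{\ast}\right)$ is symmetric for every $\alpha\in\mathbb{C}$, and it has $\deg_{\theta}\le 2$ by construction. By the proof of Proposition \ref{pro.3.5}, the coefficients of $H_{2}\left(\alpha:\theta,\theta^{\ast}\right)$ are polynomials in $\alpha$ and $\overline{\alpha}$ (built from the second derivatives of $H^{\mathrm{cl}}$); composing with the given continuous curve $t\mapsto\alpha\left(t\right)$ makes these coefficients continuous in $t$. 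Hence Theorem \ref{the.5.6} applies with $Q\left(t\right)=P\left(t:a,a^{\dag}\right)=H_{2}\left(\alpha\left(t\right):a,a^{\dag}\right)$, producing a unique strongly continuous unitary evolution family $\left\{U\left(t,s\right)\right\}_{t,s\in\mathbb{R}}$ on $L^{2}\left(m\right)$, and I would define $W_{0}\left(t\right):=U\left(t,0\right)$.

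Next I would simply transcribe the four listed properties from the corresponding parts of Theorem \ref{the.5.6} and Remark \ref{rem.5.9}. Item 1 is Eq. (\ref{equ.5.27}), together with the observation that $W_{0}^{\ast}\left(t\right)=U\left(0,t\right)$ by Eq. (\ref{equ.5.26}), so the adjoint also preserves $\mathcal{S}$. Item 2 is Item 2 of Theorem \ref{the.5.6} applied to $U\left(t,0\right)$ and, again via $W_{0}^{\ast}\left(t\right)=U\left(0,t\right)$, to $U\left(0,t\right)$; the bound (\ref{equ.5.29}) is Item 3 of Theorem \ref{the.5.6} (Eq. (\ref{equ.5.16})) used with $S=-T$ for both $U\left(t,0\right)$ and $U\left(0,t\right)$, taking $C_{T,\beta}$ to be the larger of the two constants. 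Item 3 is the joint $\left\Vert\cdot\right\Vert_{\beta}$-norm continuity of $\left(t,s\right)\mapsto U\left(t,s\right)\varphi$ from Item 2 of Theorem \ref{the.5.6}, restricted to $s=0$ (and $t=0$ for the adjoint). Item 4, with $\overline{Q}\left(t\right)=\overline{H_{2}\left(\alpha\left(t\right):a,a^{\dag}\right)}$, is Eqs. (\ref{equ.5.17}) and (\ref{equ.5.18}) evaluated at $s=0$ and $t=0$ respectively. The parenthetical replacement of $\overline{H_{2}\left(\alpha\left(t\right):a,a^{\dag}\right)}$ by $H_{2}\left(\alpha\left(t\right):\overline{a},a^{\ast}\right)$ on $D\left(\mathcal{N}\right)$ follows from Corollary \ref{cor.3.40}: with $d=\deg_{\theta}H_{2}\le 2$ one has $D\left(\mathcal{N}\right)\subseteq D\left(\mathcal{N}^{d/2}\right)=D\left(H_{2}\left(\overline{a},a^{\ast}\right)\right)$ and $H_{2}\left(\overline{a},a^{\ast}\right)\subseteq\overline{H_{2}\left(a,a^{\dag}\right)}$, so the two operators agree on $D\left(\mathcal{N}\right)$.

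For uniqueness I would appeal to Theorem \ref{the.5.1}. If $\left\{\widetilde{W}_{0}\left(t\right)\right\}$ is any strongly continuous unitary family satisfying Eq. (\ref{equ.5.30}), then for each $\varphi\in D\left(\mathcal{N}\right)$ the curve $t\mapsto\widetilde{W}_{0}\left(t\right)\varphi$ solves Eq. (\ref{equ.5.2}) with $s=0$ and $Q\left(t\right)=H_{2}\left(\alpha\left(t\right):a,a^{\dag}\right)$; Theorem \ref{the.5.1} gives at most one such solution, so $\widetilde{W}_{0}\left(t\right)=W_{0}\left(t\right)$ on the dense subspace $D\left(\mathcal{N}\right)$ and hence everywhere by continuity.

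Candidly, there is no genuine obstacle here: all the analytic work — construction of the evolution, unitarity, the $\beta$-norm bounds, invariance of $D\left(\mathcal{N}^{\beta}\right)$ and $\mathcal{S}$, and differentiability — is already packaged in Theorem \ref{the.5.6} and its inputs (Corollary \ref{cor.5.2}, Propositions \ref{pro.5.3} and \ref{pro.5.5}). The only step warranting a little care is the hypothesis check at the start, namely confirming that $t\mapsto H_{2}\left(\alpha\left(t\right):\theta,\theta^{\ast}\right)$ really is a continuously varying \emph{symmetric} non-commutative polynomial of $\theta$-degree at most $2$; this is exactly where Theorem \ref{the.2.22} (symmetry of $H_{2}$) and Proposition \ref{pro.3.5} (polynomial dependence of its coefficients on $\alpha,\overline{\alpha}$) enter. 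Everything downstream of that is direct citation.
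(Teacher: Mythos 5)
Your proposal is correct and follows exactly the paper's own route: specialize Theorem \ref{the.5.6} (together with Remark \ref{rem.5.9}) to $Q(t)=H_{2}\left(\alpha(t):a,a^{\dag}\right)$ and set $W_{0}(t)=U(t,0)$, so that $W_{0}^{\ast}(t)=U(0,t)$. The extra details you supply — the hypothesis check via Theorem \ref{the.2.22} and Proposition \ref{pro.3.5}, uniqueness via Theorem \ref{the.5.1}, and the identification on $D(\mathcal{N})$ via Corollary \ref{cor.3.40} — are all consistent with (and slightly more explicit than) the paper's one-line proof.
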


\begin{proof}
The stated results follow from Theorem \ref{the.5.6} and Remark \ref{rem.5.9}
with $Q\left(  t\right)  :=H_{2}\left(  \alpha\left(  t\right)  :a,a^{\dag
}\right)  $ after setting $W_{0}\left(  t\right)  =U\left(  t,0\right)  $ in
which case that $W_{0}\left(  t\right)  ^{\ast}=U\left(  t,0\right)  ^{\ast
}=U\left(  0,t\right)  .$
\end{proof}

\begin{corollary}
\label{cor.5.15}If $\alpha\in\mathbb{C},$ $U\left(  \alpha\right)  $ is as in
Definition \ref{def.1.6}, and $U\left(  \alpha\right)  ^{\ast}$ is the
$L^{2}\left(  m\right)  $-adjoint of $U\left(  \alpha\right)  ,$ then for any
$\beta\geq0;$

\begin{enumerate}
\item $U\left(  \alpha\right)  \mathcal{S}=\mathcal{S}$ and $U\left(
\alpha\right)  ^{\ast}\mathcal{S}=\mathcal{S}$ (also seen in Proposition
\ref{pro.2.6}),

\item $U\left(  \alpha\right)  D\left(  \mathcal{N}^{\beta}\right)  =D\left(
\mathcal{N}^{\beta}\right)  $ and $U\left(  \alpha\right)  ^{\ast}D\left(
\mathcal{N}^{\beta}\right)  =D\left(  \mathcal{N}^{\beta}\right)  ,$ and

\item the following operator norm bounds hold,%
\begin{equation}
\left\Vert U\left(  \alpha\right)  \right\Vert _{\beta\rightarrow\beta}%
\vee\left\Vert U\left(  \alpha\right)  ^{\ast}\right\Vert _{\beta
\rightarrow\beta}\leq\exp\left(  8\beta\cdot3^{\left\vert \beta-1\right\vert
}\left\vert \alpha\right\vert \right)  . \label{equ.5.32}%
\end{equation}

\end{enumerate}
\end{corollary}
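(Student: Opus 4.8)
The plan is to identify $U\left(\alpha\right)$ with the time-one map of the quadratically generated unitary evolution built in Theorem \ref{the.5.6}, applied to the (particularly simple) degree-one generator associated to $\alpha$. First I would set
\[
Q:=i\left(\alpha a^{\dag}-\bar{\alpha}a\right)=P\left(a,a^{\dag}\right),\qquad P\left(\theta,\theta^{\ast}\right):=i\alpha\theta^{\ast}-i\bar{\alpha}\theta ,
\]
and observe that $P$ is a symmetric non-commutative polynomial with $\deg_{\theta}P=1\leq2$ and constant (hence continuous) coefficients, so Theorem \ref{the.5.6} applies and produces a strongly continuous unitary family $\left\{U\left(t,s\right)\right\}_{t,s\in\mathbb{R}}$ on $L^{2}\left(m\right)$. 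Since scalar multiples commute with closures, $-i\bar{Q}=\overline{\alpha a^{\dag}-\bar{\alpha}a}$, so by Definition \ref{def.1.6} we have $U\left(\alpha\right)=e^{\overline{\alpha a^{\dag}-\bar{\alpha}a}}=e^{-i\bar{Q}}$, and more generally $U\left(t\alpha\right)=e^{-it\bar{Q}}$. For $\varphi\in D\left(\mathcal{N}\right)\subseteq D\left(\mathcal{N}^{1/2}\right)=D\left(\bar{Q}\right)$ (using $\deg_{\theta}P=1$ and Corollary \ref{cor.3.40}), the curve $t\mapsto U\left(t\alpha\right)\varphi$ solves $i\dot{\psi}=\bar{Q}\psi$ with $\psi\left(0\right)=\varphi$; by the uniqueness statement of Theorem \ref{the.5.1} this forces $U\left(t\alpha\right)=U\left(t,0\right)$, and in particular $U\left(\alpha\right)=U\left(1,0\right)$. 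By Remark \ref{rem.5.9} it then also follows that $U\left(\alpha\right)^{\ast}=U\left(1,0\right)^{\ast}=U\left(0,1\right)$.

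Items 1 and 2 now drop out immediately. Item 2 of Theorem \ref{the.5.6} gives $U\left(t,s\right)D\left(\mathcal{N}^{\beta}\right)=D\left(\mathcal{N}^{\beta}\right)$ for all $\beta\geq0$, and Remark \ref{rem.5.9} (cf. Eq. (\ref{equ.5.27})) gives $U\left(t,s\right)\mathcal{S}=\mathcal{S}$; specializing to $\left(t,s\right)=\left(1,0\right)$ and $\left(t,s\right)=\left(0,1\right)$ yields both the $\mathcal{S}$-invariance and the $D\left(\mathcal{N}^{\beta}\right)$-invariance for $U\left(\alpha\right)$ and $U\left(\alpha\right)^{\ast}$.

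For item 3 I would invoke the uniform bound of Corollary \ref{cor.5.2}. Decomposing $Q=\sum_{j=0}^{6}c_{j}\mathcal{A}^{\left(j\right)}$ as in Eq. (\ref{equ.5.1}) over the seven monomials $1,a,a^{\dag},a^{2},a^{\dag2},a^{\dag}a,aa^{\dag}$ of degree $\leq2$, only the coefficients of $a$ and $a^{\dag}$ are nonzero, with moduli $\left\vert i\alpha\right\vert=\left\vert\bar{\alpha}\right\vert=\left\vert\alpha\right\vert$, so $\sum_{j=1}^{6}\max_{\tau}\left\vert c_{j}\left(\tau\right)\right\vert=2\left\vert\alpha\right\vert$ and hence $K\left(\beta,S,T,P\right)=8\beta\cdot3^{\left\vert\beta-1\right\vert}\left\vert\alpha\right\vert$ in Eq. (\ref{equ.5.4}). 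Since $\left\vert 1-0\right\vert=\left\vert 0-1\right\vert=1$, Corollary \ref{cor.5.2} gives, uniformly in $M$,
\[
\left\Vert U^{M}\left(1,0\right)\right\Vert_{\beta\rightarrow\beta}\vee\left\Vert U^{M}\left(0,1\right)\right\Vert_{\beta\rightarrow\beta}\leq\exp\left(8\beta\cdot3^{\left\vert\beta-1\right\vert}\left\vert\alpha\right\vert\right).
\]
By Proposition \ref{pro.5.5} (as used in the proof of Theorem \ref{the.5.6}), $U\left(t,s\right)\varphi=\lim_{M\rightarrow\infty}U^{M}\left(t,s\right)\varphi$ in $\left\Vert\cdot\right\Vert_{\beta}$ for every $\varphi\in D\left(\mathcal{N}^{\beta}\right)$, so this bound passes to the limit and yields $\left\Vert U\left(\alpha\right)\right\Vert_{\beta\rightarrow\beta}\vee\left\Vert U\left(\alpha\right)^{\ast}\right\Vert_{\beta\rightarrow\beta}\leq\exp\left(8\beta\cdot3^{\left\vert\beta-1\right\vert}\left\vert\alpha\right\vert\right)$, which is Eq. (\ref{equ.5.32}).

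I do not anticipate a genuine obstacle: all of the analytic work has already been carried out in Sections \ref{sec.4} and \ref{sec.5}, and this corollary is essentially the remark that a degree-one generator is a special case of the degree $\leq2$ theory. The only points needing a little care are (i) matching $U\left(\alpha\right)$ with the evolution family $U\left(1,0\right)$ through uniqueness of the Schr\"{o}dinger flow rather than a direct computation, and (ii) the coefficient bookkeeping in Eq. (\ref{equ.5.4}) required to recover the precise constant $8\beta\cdot3^{\left\vert\beta-1\right\vert}\left\vert\alpha\right\vert$ instead of a cruder bound.
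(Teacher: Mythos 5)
Your proposal is correct and follows essentially the same route as the paper: identify $U\left(t\alpha\right)$ with the evolution $U\left(t,0\right)$ of Theorem \ref{the.5.6} for the degree-one symmetric generator (the paper does this via Proposition \ref{pro.2.10} and uniqueness, exactly as you do), deduce items 1 and 2 from Theorem \ref{the.5.6} and Remark \ref{rem.5.9}, and obtain item 3 from the $M$-uniform bound of Corollary \ref{cor.5.2} with $\sum_{j}\max\left\vert c_{j}\right\vert =2\left\vert \alpha\right\vert$ followed by the limit $M\rightarrow\infty$. The only cosmetic differences are that you write the generator as the manifestly symmetric $Q=i\left(\alpha a^{\dag}-\bar{\alpha}a\right)$ and treat the adjoint via $U\left(0,1\right)$, whereas the paper uses $U\left(\alpha\right)^{\ast}=U\left(-\alpha\right)$.
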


\begin{proof}
Let $\alpha\left(  t\right)  =t\alpha,$%
\[
H\left(  t:\theta,\theta^{\ast}\right)  =\dot{\alpha}\left(  t\right)
\theta^{\ast}-\overline{\dot{\alpha}\left(  t\right)  }\theta
+i\operatorname{Im}\left(  \alpha\left(  t\right)  \overline{\dot{\alpha
}\left(  t\right)  }\right)  =\alpha\theta^{\ast}-\overline{\alpha}\theta
\]
so that
\[
Q\left(  t\right)  =\alpha a^{\dag}-\overline{\alpha}a+i\operatorname{Im}%
\left(  t\alpha\overline{\alpha}\right)  =\alpha a^{\dag}-\overline{\alpha}a.
\]
By Proposition \ref{pro.2.10}, if $\varphi\in D\left(  \mathcal{N}\right)  ,$
$\psi\left(  t\right)  :=U\left(  t\alpha\right)  U\left(  s\alpha\right)
^{\ast}\varphi,$ then $\psi$ satisfies Eq. (\ref{equ.5.2}) and therefore items
1. and 2. follow Theorem \ref{the.5.6} and Remark \ref{rem.5.9}. To get the
explicit upper bound in Eq. (\ref{equ.5.32}), we apply Corollary \ref{cor.5.2}
with $S=0,$ $T=1,$ $P\left(  t,\theta,\theta^{\ast}\right)  =\alpha
\theta^{\ast}-\bar{\alpha}\theta$ in order to conclude, for any $M\in\left(
0,\infty\right)  $, that%
\[
\left\Vert U^{M}\left(  \alpha\right)  \right\Vert _{\beta\rightarrow\beta
}\leq\exp\left(  \beta4\cdot3^{\left\vert \beta-1\right\vert }\left[
\left\vert \alpha\right\vert +\left\vert \bar{\alpha}\right\vert \right]
\right)  =\exp\left(  8\beta\cdot3^{\left\vert \beta-1\right\vert }\left\vert
\alpha\right\vert \right)
\]
Letting $M\rightarrow\infty$ (as in the proof of Theorem \ref{the.5.6}) then
implies
\[
\left\Vert U\left(  \alpha\right)  \right\Vert _{\beta\rightarrow\beta}%
\leq\exp\left(  8\beta\cdot3^{\left\vert \beta-1\right\vert }\left\vert
\alpha\right\vert \right)  .
\]
Using $U\left(  \alpha\right)  ^{\ast}=U\left(  -\alpha\right)  ,$ the
previous equation is sufficient to prove the estimated in Eq. (\ref{equ.5.32}).
\end{proof}

\begin{corollary}
\label{cor.5.17}Let $U\left(  \alpha\right)  $ be as in Definition
\ref{def.1.6}, $U\left(  \alpha\right)  ^{\ast}$ be the $L^{2}\left(
m\right)  $-adjoint of $U\left(  \alpha\right)  ,$ $\mathbb{R}\ni
t\rightarrow\alpha\left(  t\right)  \in\mathbb{C}$ be a $C^{1}$ function, and
\[
Q\left(  t\right)  :=\dot{\alpha}\left(  t\right)  a^{\dag}-\overline
{\dot{\alpha}\left(  t\right)  }a+i\operatorname{Im}\left(  \alpha\left(
t\right)  \overline{\dot{\alpha}\left(  t\right)  }\right)  .
\]
Then for any $\beta\geq0;$

\begin{enumerate}
\item the maps $t\rightarrow U\left(  \alpha\left(  t\right)  \right)  \psi$
and $t\rightarrow U\left(  \alpha\left(  t\right)  \right)  ^{\ast}\psi$ are
$\left\Vert \cdot\right\Vert _{\beta}$-continuous for all $\psi\in D\left(
\mathcal{N}^{\beta}\right)  ,$ and

\item for each $\beta\geq0$ and $\psi\in D\left(  \mathcal{N}^{\beta
+1}\right)  ;$
\begin{equation}
i\left(  \left\Vert \cdot\right\Vert _{\beta}\text{-}\frac{\partial}{\partial
t}\right)  U\left(  \alpha\left(  t\right)  \right)  \psi=\overline{Q\left(
t\right)  }U\left(  \alpha\left(  t\right)  \right)  \psi\label{equ.5.33}%
\end{equation}
and
\begin{equation}
-i\left(  \left\Vert \cdot\right\Vert _{\beta}\text{-}\frac{\partial}{\partial
t}\right)  U\left(  \alpha\left(  t\right)  \right)  ^{\ast}\psi=U\left(
\alpha\left(  t\right)  \right)  ^{\ast}\overline{Q\left(  t\right)  }\psi.
\label{equ.5.34}%
\end{equation}

\end{enumerate}
\end{corollary}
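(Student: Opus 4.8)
Reducing to the machinery of Section~\ref{sec.5} is the natural route. By Corollary~\ref{cor.5.15} we already know that each $U(\alpha(t))$ and $U(\alpha(t))^\ast$ maps $D(\mathcal N^\beta)$ onto itself and that $\sup_{|t|\le T}\|U(\alpha(t))\|_{\beta\to\beta}<\infty$ for every $T$, so the only genuinely new content is the $\|\cdot\|_\beta$-norm continuity in $t$ and the $\|\cdot\|_\beta$-norm differentiability. The starting point for these is Proposition~\ref{pro.2.10}: for $f\in D(\sqrt{\mathcal N})$ the path $t\mapsto U(\alpha(t))f$ is $L^2(m)$-differentiable and
\[
\frac{d}{dt}\bigl(U(\alpha(t))f\bigr)=\overline{Q(t)}\,U(\alpha(t))f ,
\]
whose generator $\overline{Q(t)}$ is the closure of a continuously varying symmetric polynomial operator in $a,a^\dag$ of degree one (hence $\le 2$, the range handled by Theorem~\ref{the.5.6}; one feeds Theorem~\ref{the.5.6} the polynomial $P(t:\theta,\theta^\ast)$ with $\overline{P(t:a,a^\dag)}=\overline{Q(t)}$, exactly as in the constant-velocity computation in the proof of Corollary~\ref{cor.5.15}). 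Thus Theorem~\ref{the.5.6} produces a strongly continuous unitary family $\{V(t,s)\}_{t,s\in\mathbb R}$ solving Eq.~(\ref{equ.5.2}) for this $Q$, enjoying domain preservation, joint $\|\cdot\|_\beta$-continuity, the uniform bounds, and $\|\cdot\|_\beta$-differentiability (Eqs.~(\ref{equ.5.17})--(\ref{equ.5.18})) from its items~2--4.

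The second step is to identify $V(t,s)=U(\alpha(t))U(\alpha(s))^\ast$. Fix $\varphi\in D(\mathcal N)$ and put $g:=U(\alpha(s))^\ast\varphi$. By Corollary~\ref{cor.5.15}, $g\in D(\mathcal N)$ and $U(\alpha(t))g\in D(\mathcal N)$ for all $t$, so Proposition~\ref{pro.2.10} shows $\psi(t):=U(\alpha(t))g$ is a $D(\mathcal N)$-valued solution of Eq.~(\ref{equ.5.2}) with $\psi(s)=\varphi$. The uniqueness Theorem~\ref{the.5.1} forces $\psi(t)=V(t,s)\varphi$, and since $D(\mathcal N)$ is dense and both operators are bounded on $L^2(m)$, the identity $V(t,s)=U(\alpha(t))U(\alpha(s))^\ast$ holds on all of $L^2(m)$. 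Specializing $s=0$ gives $U(\alpha(t))=V(t,0)U(\alpha(0))$ and $U(\alpha(t))^\ast=U(\alpha(0))^\ast V(0,t)$, the latter via $V(t,0)^\ast=V(0,t)$ from Remark~\ref{rem.5.9}. Item~1 is then immediate: for $\psi\in D(\mathcal N^\beta)$ the vectors $U(\alpha(0))\psi$ and $U(\alpha(0))^\ast\psi$ again lie in $D(\mathcal N^\beta)$, $U(\alpha(0))^\ast$ is $\|\cdot\|_\beta$-bounded there, and $V(\cdot,\cdot)$ is jointly $\|\cdot\|_\beta$-continuous on $D(\mathcal N^\beta)$ by Theorem~\ref{the.5.6}, so the two compositions are $\|\cdot\|_\beta$-continuous in $t$. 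For item~2, apply Eq.~(\ref{equ.5.17}) to the fixed vector $U(\alpha(0))\psi\in D(\mathcal N^{\beta+1})$ to obtain Eq.~(\ref{equ.5.33}), and apply Eq.~(\ref{equ.5.18}) (its time slots specialized to $V(0,t)$) to $\psi\in D(\mathcal N^{\beta+1})$, then collapse $U(\alpha(0))^\ast V(0,t)$ back to $U(\alpha(t))^\ast$ on the right-hand side, to obtain Eq.~(\ref{equ.5.34}).

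The main obstacle is the bookkeeping in the identification step: one must verify that the generator delivered by Proposition~\ref{pro.2.10}, written in terms of $a^\ast$ and $\bar a$, is \emph{literally} the closure of the symmetric polynomial operator fed to Theorem~\ref{the.5.6} (using Corollary~\ref{cor.3.40} to match $a^\ast=\overline{a^\dag}$ and keeping track of the scalar term), and that every vector appearing in the uniqueness argument genuinely stays in $D(\mathcal N)$ --- which is precisely where the domain preservation of $U(\alpha)$ from Corollary~\ref{cor.5.15} is indispensable, since without it Theorem~\ref{the.5.1} could not be invoked. Once this matching is arranged, nothing further is needed: all continuity, domain, and differentiability assertions are inherited verbatim from Theorem~\ref{the.5.6}, with no new estimates.
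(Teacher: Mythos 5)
Your proposal is correct and follows essentially the same route as the paper: the paper's (very terse) proof likewise verifies via Proposition \ref{pro.2.10} that $\psi(t):=U(\alpha(t))U(\alpha(s))^{\ast}\varphi$ solves Eq.~(\ref{equ.5.2}) for $\varphi\in D(\mathcal{N})$ and then appeals to Theorem \ref{the.5.6} and Remark \ref{rem.5.9}, the uniqueness identification you carry out explicitly being exactly what that appeal relies on. Your added care with the generator bookkeeping ($a^{\ast}=\overline{a^{\dag}}$, the scalar term, and the factor of $i$ needed to make the polynomial fed to Theorem \ref{the.5.6} symmetric) is a faithful expansion of what the paper leaves implicit.
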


\begin{proof}
Let
\[
H\left(  t:\theta,\theta^{\ast}\right)  :=\dot{\alpha}\left(  t\right)
\theta^{\ast}-\overline{\dot{\alpha}\left(  t\right)  }\theta
+i\operatorname{Im}\left(  \alpha\left(  t\right)  \overline{\dot{\alpha
}\left(  t\right)  }\right)
\]
so that $Q\left(  t\right)  =H\left(  t:a,a^{\dag}\right)  .$ By Proposition
\ref{pro.2.10} if $\varphi\in D\left(  \mathcal{N}\right)  ,$ $\psi\left(
t\right)  :=U\left(  \alpha\left(  t\right)  \right)  U\left(  \alpha\left(
s\right)  \right)  ^{\ast}\varphi,$ then $\psi$ satisfies Eq. (\ref{equ.5.2})
and therefore the corollary again follows from Theorem \ref{the.5.6} and
Remark \ref{rem.5.9}.
\end{proof}

\begin{theorem}
[Properties of $a\left(  t\right)  $]\label{the.5.18}Let $H\in\mathbb{C}%
\left\langle \theta,\theta^{\ast}\right\rangle $ be symmetric and
$H^{\text{cl}}\in\mathbb{C}\left[  z,\bar{z}\right]  $ be the symbol of $H,$
($H^{\text{cl }}$ is necessarily real valued by Remark \ref{rem.2.19}.)
Further suppose that $\alpha\left(  t\right)  \in\mathbb{C}$ satisfying
Hamilton's equations of motion (see Eq. (\ref{equ.2.3}) has global solutions,
$a\left(  t\right)  $ and $a^{\dag}\left(  t\right)  $ are the operators on
$\mathcal{S}$ as described in Eqs. (\ref{equ.1.8}), and (\ref{equ.1.9}), and
$W_{0}\left(  t\right)  $ is the unitary operator in Corollary \ref{cor.5.14}.
Then for all $t\in\mathbb{R}$ the following identities hold;
\begin{align}
W_{0}\left(  t\right)  ^{\ast}aW_{0}\left(  t\right)   &  =a\left(  t\right)
,\text{ \quad}W_{0}\left(  t\right)  ^{\ast}a^{\dag}W_{0}\left(  t\right)
=a^{\dag}\left(  t\right)  ,\label{equ.5.35}\\
W_{0}\left(  t\right)  ^{\ast}\overline{a}W_{0}\left(  t\right)   &
=\overline{a\left(  t\right)  },\text{ \quad}W_{0}\left(  t\right)  ^{\ast
}a^{\ast}W_{0}\left(  t\right)  =a^{\ast}\left(  t\right)  ,~\quad
\label{equ.5.36}\\
W_{0}\left(  t\right)  ^{\ast}\overline{a^{\dag}}W_{0}\left(  t\right)   &
=\overline{a^{\dag}\left(  t\right)  }\label{equ.5.37}\\
D\left(  \overline{a\left(  t\right)  }\right)   &  =D\left(  \sqrt
{\mathcal{N}}\right)  =D\left(  a^{\ast}\left(  t\right)  \right)
\label{equ.5.38}\\
a^{\ast}\left(  t\right)   &  =\overline{a^{\dag}\left(  t\right)
},\label{equ.5.39}\\
\overline{a\left(  t\right)  }  &  =\gamma\left(  t\right)  \overline
{a}+\delta\left(  t\right)  a^{\ast},\text{\quad and}~\label{equ.5.40}\\
a^{\ast}\left(  t\right)   &  =\overline{\delta\left(  t\right)  }\bar
{a}+\overline{\gamma\left(  t\right)  }a^{\ast}, \label{equ.5.41}%
\end{align}
where the closures and adjoints are taken relative to the $L^{2}\left(
m\right)  $-inner product.
\end{theorem}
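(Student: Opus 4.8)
The plan is to reduce everything to the single pair of identities in Eq.~(\ref{equ.5.35}), namely $W_{0}\left(  t\right)^{\ast}aW_{0}\left(  t\right)  =a\left(  t\right)  $ and $W_{0}\left(  t\right)^{\ast}a^{\dag}W_{0}\left(  t\right)  =a^{\dag}\left(  t\right)  $ on $\mathcal{S}$, and then to obtain the closure, domain, and adjoint statements by routine operator–theoretic manipulations built on Corollaries \ref{cor.3.40} and \ref{cor.5.14}. To prove Eq.~(\ref{equ.5.35}), fix $\varphi\in\mathcal{S}$ and set $u\left(  t\right)  :=\left(  \partial_{\bar{\alpha}}^{2}H^{\mathrm{cl}}\right)  \left(  \alpha\left(  t\right)  \right)  $ and $v\left(  t\right)  :=\left(  \partial_{\alpha}\partial_{\bar{\alpha}}H^{\mathrm{cl}}\right)  \left(  \alpha\left(  t\right)  \right)  $ as in Proposition \ref{pro.2.3}; since $\alpha$ solves Eq.~(\ref{equ.2.3}) it is $C^{1}$, so $u,v$ are continuous and $v$ is real. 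Apply Theorem \ref{the.5.12} with $W\left(  t\right)  =W_{0}\left(  t\right)  =U\left(  t,0\right)  $ (the unitary family of Corollary \ref{cor.5.14} associated to $Q\left(  t\right)  =H_{2}\left(  \alpha\left(  t\right)  :a,a^{\dag}\right)  $, which is symmetric of $\theta$-degree two) and with $\mathcal{R}=a$, then $\mathcal{R}=a^{\dag}$. Using the commutator formulas of Corollary \ref{cor.3.8} (together with $\partial_{\alpha}^{2}H^{\mathrm{cl}}=\overline{u}$, valid because $H^{\mathrm{cl}}$ is real) and the fact that $W_{0}\left(  t\right)  \mathcal{S}=\mathcal{S}$, one finds that $x\left(  t\right)  :=W_{0}\left(  t\right)^{\ast}aW_{0}\left(  t\right)  \varphi$ and $y\left(  t\right)  :=W_{0}\left(  t\right)^{\ast}a^{\dag}W_{0}\left(  t\right)  \varphi$ solve the $L^{2}\left(  m\right)$-valued linear system
\begin{align*}
\dot{x}\left(  t\right)   &  =-iv\left(  t\right)  x\left(  t\right)  -iu\left(  t\right)  y\left(  t\right)  ,\\
\dot{y}\left(  t\right)   &  =i\overline{u\left(  t\right)  }x\left(  t\right)  +iv\left(  t\right)  y\left(  t\right)  ,
\end{align*}
with $x\left(  0\right)  =a\varphi$ and $y\left(  0\right)  =a^{\dag}\varphi$.

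On the other hand, writing $a\left(  t\right)  =\gamma\left(  t\right)  a+\delta\left(  t\right)  a^{\dag}$ and $a^{\dag}\left(  t\right)  =\overline{\gamma\left(  t\right)  }a^{\dag}+\overline{\delta\left(  t\right)  }a$ as in Eqs.~(\ref{equ.1.8})–(\ref{equ.1.9}) and differentiating using the equations (\ref{equ.2.9})–(\ref{equ.2.10}) for $\dot{\gamma},\dot{\delta}$, a short computation shows that $\left(  a\left(  t\right)  \varphi,a^{\dag}\left(  t\right)  \varphi\right)  $ solves the \emph{same} linear system with the same initial data (recall $\gamma\left(  0\right)  =1$, $\delta\left(  0\right)  =0$, and $W_{0}\left(  0\right)  =I$). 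Since this system has scalar, norm-continuous coefficients, uniqueness of solutions of linear ordinary differential equations in the Banach space $L^{2}\left(  m\right)  \oplus L^{2}\left(  m\right)  $ forces $x\left(  t\right)  =a\left(  t\right)  \varphi$ and $y\left(  t\right)  =a^{\dag}\left(  t\right)  \varphi$ for all $t$, which is Eq.~(\ref{equ.5.35}).

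With Eq.~(\ref{equ.5.35}) in hand the remaining statements are soft. At a fixed $t$, $a\left(  t\right)  $ is the degree-one polynomial operator $P\left(  a,a^{\dag}\right)  $ with $P=\gamma\left(  t\right)  \theta+\delta\left(  t\right)  \theta^{\ast}$, so Corollary \ref{cor.3.40} gives $\overline{a\left(  t\right)  }=\gamma\left(  t\right)  \bar{a}+\delta\left(  t\right)  a^{\ast}$ with domain $D\left(  \mathcal{N}^{1/2}\right)  =D\left(  \sqrt{\mathcal{N}}\right)  $, and likewise $\overline{a^{\dag}\left(  t\right)  }=\overline{\gamma\left(  t\right)  }a^{\ast}+\overline{\delta\left(  t\right)  }\bar{a}$ with domain $D\left(  \sqrt{\mathcal{N}}\right)  $; these are Eqs.~(\ref{equ.5.40}), (\ref{equ.5.41}), and (\ref{equ.5.38}) (the latter once Eq.~(\ref{equ.5.39}) is known). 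Next, because $\bar{a}$ is closed with $D\left(  \bar{a}\right)  =D\left(  \mathcal{N}^{1/2}\right)  $ and because $W_{0}\left(  t\right)  $ and $W_{0}\left(  t\right)^{\ast}$ preserve $D\left(  \mathcal{N}^{1/2}\right)  $ by Corollary \ref{cor.5.14}, the operator $W_{0}\left(  t\right)^{\ast}\bar{a}W_{0}\left(  t\right)  $ is closed with domain $D\left(  \mathcal{N}^{1/2}\right)  $; it extends $a\left(  t\right)  |_{\mathcal{S}}$ by Eq.~(\ref{equ.5.35}) and has the same domain as $\overline{a\left(  t\right)  }$, so the two agree, giving the first identity of Eq.~(\ref{equ.5.36}). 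Identically $W_{0}\left(  t\right)^{\ast}\overline{a^{\dag}}W_{0}\left(  t\right)  =\overline{a^{\dag}\left(  t\right)  }$, which is Eq.~(\ref{equ.5.37}) (note $\overline{a^{\dag}}=a^{\ast}$ by Corollary \ref{cor.3.36}). Finally, taking $L^{2}\left(  m\right)  $-adjoints of $\overline{a\left(  t\right)  }=W_{0}\left(  t\right)^{\ast}\bar{a}W_{0}\left(  t\right)  $ and using the elementary identity $\left(  W^{\ast}TW\right)^{\ast}=W^{\ast}T^{\ast}W$ for a unitary $W$, together with $\bar{a}^{\ast}=a^{\ast}$, yields
\[
a^{\ast}\left(  t\right)  :=\left(  \overline{a\left(  t\right)  }\right)^{\ast}=W_{0}\left(  t\right)^{\ast}a^{\ast}W_{0}\left(  t\right)  =\overline{a^{\dag}\left(  t\right)  },
\]
which is Eq.~(\ref{equ.5.39}) and the second identity of Eq.~(\ref{equ.5.36}); Eqs.~(\ref{equ.5.38}) and (\ref{equ.5.41}) now follow.

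The only step that is not pure bookkeeping is Step 1: one must genuinely justify differentiating $t\mapsto W_{0}\left(  t\right)^{\ast}aW_{0}\left(  t\right)  \varphi$ in the $L^{2}\left(  m\right)  $-norm, which is precisely the content of Theorem \ref{the.5.12}, and then recognize the resulting operator-valued equation as the lift of the linear flow of Proposition \ref{pro.2.3}. Once the two sides are seen to solve the same Banach-space linear ODE, uniqueness closes the argument, and everything downstream is manipulation of closures and adjoints under the bounded invertible maps $W_{0}\left(  t\right)  $ via Corollaries \ref{cor.3.40} and \ref{cor.5.14}.
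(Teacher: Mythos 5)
Your proposal is correct and follows essentially the same route as the paper: Eq.~(\ref{equ.5.35}) is obtained by combining Theorem \ref{the.5.12} with the commutator formulas of Corollary \ref{cor.3.8} and matching the resulting $L^{2}\left(  m\right)\oplus L^{2}\left(  m\right)$-valued linear ODE against the flow $\Lambda\left(  t\right)$ of Proposition \ref{pro.2.3}, and the remaining closure, domain, and adjoint identities are then routine consequences of the unitarity and domain-preservation properties of $W_{0}\left(  t\right)$ together with Theorem \ref{the.3.22} and Corollary \ref{cor.3.40}. The only (immaterial) difference is organizational: the paper derives Eqs.~(\ref{equ.5.40})--(\ref{equ.5.41}) by approximating with $\mathcal{P}_{M}\varphi$, whereas you read them off directly from Corollary \ref{cor.3.40} applied to the degree-one polynomial.
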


\begin{proof}
Recall from Proposition \ref{pro.2.3} that
\[
v\left(  t\right)  :=\frac{\partial^{2}H^{\text{cl}}}{\partial\alpha
\partial\overline{\alpha}}\left(  \alpha\left(  t\right)  \right)
\in\mathbb{R}\text{ and }u\left(  t\right)  :=\frac{\partial^{2}H^{\text{cl}}%
}{\partial\overline{\alpha}^{2}}\left(  \alpha\left(  t\right)  \right)
\in\mathbb{C}.
\]
With this notation, the commutator formulas in Corollary \ref{cor.3.8} with
$\alpha=\alpha\left(  t\right)  $ may be written as,
\begin{align*}
\left[  H_{2}\left(  \alpha\left(  t\right)  :a,a^{\dag}\right)  ,a\right]
&  =-v\left(  t\right)  a-u\left(  t\right)  \left(  \alpha\left(  t\right)
\right)  a^{\dag}\\
\left[  H_{2}\left(  \alpha\left(  t\right)  :a,a^{\dag}\right)  ,a^{\dag
}\right]   &  =\bar{u}\left(  t\right)  a+v\left(  t\right)  a^{\dag}.
\end{align*}

For $\varphi\in\mathcal{S},$ let
\[
\psi\left(  t\right)  :=W_{0}\left(  t\right)  ^{\ast}aW_{0}\left(  t\right)
\varphi\text{ and }\psi^{\dag}\left(  t\right)  :=W_{0}\left(  t\right)
^{\ast}a^{\dag}W_{0}\left(  t\right)  \varphi.
\]
From Theorem \ref{the.5.12} with $W\left(  t\right)  =W_{0}\left(  t\right)
,$ $Q\left(  t\right)  =H_{2}\left(  \alpha\left(  t\right)  :a,a^{\dag
}\right)  ,$ and $\mathcal{R}=a$ and $\mathcal{R}=a^{\dag},$ we find
\begin{align*}
i\frac{d}{dt}\psi\left(  t\right)   &  =W_{0}\left(  t\right)  ^{\ast}\left[
v\left(  t\right)  a+u\left(  t\right)  \left(  \alpha\left(  t\right)
\right)  a^{\dag}\right]  W_{0}\left(  t\right)  \varphi\\
&  =v\left(  t\right)  \psi\left(  t\right)  +u\left(  t\right)  \psi^{\dag
}\left(  t\right) \\
i\frac{d}{dt}\psi^{\dag}\left(  t\right)   &  =-W_{0}\left(  t\right)  ^{\ast
}\left[  \bar{u}\left(  t\right)  a+v\left(  t\right)  a^{\dag}\right]
W_{0}\left(  t\right)  \varphi\\
&  =-\bar{u}\left(  t\right)  \psi\left(  t\right)  +v\left(  t\right)
\psi^{\dag}\left(  t\right)  .
\end{align*}
In other words,
\[
i\frac{d}{dt}\left[
\begin{array}
[c]{c}%
\psi\left(  t\right) \\
\psi^{\dag}\left(  t\right)
\end{array}
\right]  =\left[
\begin{array}
[c]{cc}%
v\left(  t\right)  & u\left(  t\right) \\
-\bar{u}\left(  t\right)  & -\bar{v}\left(  t\right)
\end{array}
\right]  \left[
\begin{array}
[c]{c}%
\psi\left(  t\right) \\
\psi^{\dag}\left(  t\right)
\end{array}
\right]  \in L^{2}\left(  m\right)  \times L^{2}\left(  m\right)  .
\]
This linear differential equation has a unique solution which, using
Proposition \ref{pro.2.3}, is given by
\[
\left[
\begin{array}
[c]{c}%
\psi\left(  t\right) \\
\psi^{\dag}\left(  t\right)
\end{array}
\right]  =\Lambda\left(  t\right)  \left[
\begin{array}
[c]{c}%
\psi\left(  0\right) \\
\psi^{\dag}\left(  0\right)
\end{array}
\right]  =\Lambda\left(  t\right)  \left[
\begin{array}
[c]{c}%
a\varphi\\
a^{\dag}\varphi
\end{array}
\right]
\]
where $\Lambda\left(  t\right)  $ is the $2\times2$ matrix given in Eq.
(\ref{equ.2.6}). This completes the proof of Eq. (\ref{equ.5.35}) since
\[
\text{ }\left[
\begin{array}
[c]{c}%
W_{0}\left(  t\right)  ^{\ast}aW_{0}\left(  t\right)  \varphi\\
W_{0}\left(  t\right)  ^{\ast}a^{\dag}W_{0}\left(  t\right)  \varphi
\end{array}
\right]  =\left[
\begin{array}
[c]{c}%
\psi\left(  t\right) \\
\psi^{\dag}\left(  t\right)
\end{array}
\right]  \text{ and }\Lambda\left(  t\right)  \left[
\begin{array}
[c]{c}%
a\varphi\\
a^{\dag}\varphi
\end{array}
\right]  =\left[
\begin{array}
[c]{c}%
a\left(  t\right)  \varphi\\
a^{\dag}\left(  t\right)  \varphi
\end{array}
\right]  .
\]

The statements in Eqs. (\ref{equ.5.36}), (\ref{equ.5.37}) and (\ref{equ.5.38})
are easy consequences of the fact that $W_{0}\left(  t\right)  $ is a unitary
operator on $L^{2}\left(  m\right)  $ which preserves $D\left(  \mathcal{N}%
\right)  $ (see Corollary \ref{cor.5.14}). Using Eqs. (\ref{equ.5.36}) and
(\ref{equ.5.37}) along with Theorem \ref{the.3.22} shows,
\[
\overline{a^{\dag}\left(  t\right)  }=W_{0}\left(  t\right)  ^{\ast}%
\overline{a^{\dag}}W_{0}\left(  t\right)  =W_{0}\left(  t\right)  ^{\ast
}a^{\ast}W_{0}\left(  t\right)  =a\left(  t\right)  ^{\ast}%
\]
which gives Eq. (\ref{equ.5.39}).

If $\varphi\in D\left(  \mathcal{N}\right)  ,$ using item 3. of Theorem
\ref{the.3.22} and the formula for $a\left(  t\right)  $ and $a^{\dag}\left(
t\right)  $ in Eqs. (\ref{equ.1.8}) and (\ref{equ.1.9}) we find
\begin{align*}
\lim_{M\rightarrow\infty}a\left(  t\right)  \mathcal{P}_{M}\varphi &
=\lim_{M\rightarrow\infty}\left[  \gamma\left(  t\right)  a\mathcal{P}%
_{M}\varphi+\delta\left(  t\right)  a^{\dag}\mathcal{P}_{M}\varphi\right] \\
&  =\gamma\left(  t\right)  \overline{a}\varphi+\delta\left(  t\right)
a^{\ast}\varphi
\end{align*}
\begin{align*}
\lim_{M\rightarrow\infty}a^{\dag}\left(  t\right)  \mathcal{P}_{M}\varphi &
=\lim_{M\rightarrow\infty}\left[  \delta\left(  t\right)  a\mathcal{P}%
_{M}\varphi+\overline{\gamma\left(  t\right)  }a^{\dag}\mathcal{P}_{M}%
\varphi\right] \\
&  =\overline{\delta\left(  t\right)  }\bar{a}\varphi+\overline{\gamma\left(
t\right)  }a^{\ast}\varphi.
\end{align*}
The above two equations along with Corollary \ref{cor.3.40} show Eqs.
(\ref{equ.5.40}) and (\ref{equ.5.41}).
\end{proof}

\section{Bounds on the Quantum Evolution\label{sec.6}}

Throughout this section and the rest of the paper, let $H\in\mathbb{R}%
\left\langle \theta,\theta^{\ast}\right\rangle $ be a non-commutative
polynomial satisfying Assumption \ref{ass.1}. Before getting to the proof of
the main theorems we need to address some domain issues. Recall as in
Assumption \ref{ass.1} we let $H_{\hbar}:=\overline{H\left(  a_{\hbar
},a_{\hbar}^{\dag}\right)  }.$

The following abstract proposition (Stone's theorem) is a routine application
of the spectral theorem, see \citep[p.265]{Reed1980} for details.

\begin{proposition}
\label{pro.6.1} Supposed $H$ is a self-adjoint operator on a separable Hilbert
space, $\mathcal{K},$ and there is a $C\in\mathbb{R}$ and $\varepsilon>0$ such
that $H+CI\geq\varepsilon I.$ For any $\beta\geq0$ let $\left\Vert
\cdot\right\Vert _{\left(  H+CI\right)  ^{\beta}}~\left(  \geq\varepsilon
\left\Vert \cdot\right\Vert _{\mathcal{K}}\right)  $ be the Hilbertian norm on
$D\left(  \left(  H+CI\right)  ^{\beta}\right)  $ defined by,
\[
\left\Vert f\right\Vert _{\left(  H+CI\right)  ^{\beta}}=\left\Vert \left(
H+CI\right)  ^{\beta}f\right\Vert _{\mathcal{K}}\text{ }\forall~f\in D\left(
\left(  H+CI\right)  ^{\beta}\right)  .
\]
Then for all $t\in\mathbb{R}$ and $\beta\geq0,$
\begin{align*}
e^{-itH}D(\left(  H+CI\right)  ^{\beta})  &  =D(\left(  H+CI\right)  ^{\beta
})\text{ and}\\
\left\Vert e^{-itH}\psi\right\Vert _{\left(  H+CI\right)  ^{\beta}}  &
=\left\Vert \psi\right\Vert _{\left(  H+CI\right)  ^{\beta}}\text{ }%
\forall~\psi\in D(\left(  H+CI\right)  ^{\beta}).
\end{align*}
Moreover, if $\beta\geq0$ and $\varphi\in D(\left(  H+CI\right)  ^{\beta+1}),$
then
\[
\left\Vert \cdot\right\Vert _{\left(  H+CI\right)  ^{\beta}}-\frac{d}%
{dt}e^{-iHt}\varphi=-iHe^{-iHt}\varphi=-ie^{-iHt}H\varphi.
\]

\end{proposition}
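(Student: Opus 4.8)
The plan is to reduce everything to the spectral theorem for the self-adjoint operator $H$. Write $H=\int_{\mathbb{R}}\lambda\,dE\left(\lambda\right)$ for its projection valued spectral measure $E$. The hypothesis $H+CI\geq\varepsilon I$ forces $E$ to be supported in $\left[\varepsilon-C,\infty\right)$, so $\lambda+C\geq\varepsilon>0$ on the spectrum and the functional calculus makes $\left(H+CI\right)^{\beta}=\int\left(\lambda+C\right)^{\beta}\,dE\left(\lambda\right)$ a positive self-adjoint operator with
\[
D\left(\left(H+CI\right)^{\beta}\right)=\left\{f\in\mathcal{K}:\int\left(\lambda+C\right)^{2\beta}\,d\left\Vert E\left(\lambda\right)f\right\Vert ^{2}<\infty\right\},\quad\left\Vert f\right\Vert _{\left(H+CI\right)^{\beta}}^{2}=\int\left(\lambda+C\right)^{2\beta}\,d\left\Vert E\left(\lambda\right)f\right\Vert ^{2}.
\]

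First I would handle the invariance and isometry claims. Since $e^{-itH}=\int e^{-it\lambda}\,dE\left(\lambda\right)$ is a bounded function of $H$ it commutes with every $E\left(\lambda\right)$, hence $d\left\Vert E\left(\lambda\right)e^{-itH}f\right\Vert ^{2}=\left\vert e^{-it\lambda}\right\vert ^{2}\,d\left\Vert E\left(\lambda\right)f\right\Vert ^{2}=d\left\Vert E\left(\lambda\right)f\right\Vert ^{2}$. Substituting this into the formula for the $\left(H+CI\right)^{\beta}$-norm shows at once that $f\in D\left(\left(H+CI\right)^{\beta}\right)$ iff $e^{-itH}f\in D\left(\left(H+CI\right)^{\beta}\right)$, and that $\left\Vert e^{-itH}f\right\Vert _{\left(H+CI\right)^{\beta}}=\left\Vert f\right\Vert _{\left(H+CI\right)^{\beta}}$; applying this with $t$ replaced by $-t$ and using $e^{itH}e^{-itH}=I$ upgrades the inclusion to the equality $e^{-itH}D\left(\left(H+CI\right)^{\beta}\right)=D\left(\left(H+CI\right)^{\beta}\right)$.

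For the differentiation statement I would fix $\varphi\in D\left(\left(H+CI\right)^{\beta+1}\right)$. The identity $\left(H+CI\right)^{\beta}H=\left(H+CI\right)^{\beta+1}-C\left(H+CI\right)^{\beta}$ at the level of functional calculus shows $H\varphi\in D\left(\left(H+CI\right)^{\beta}\right)$, and since $H$ and $e^{-itH}$ are both functions of $H$ it gives $He^{-itH}\varphi=e^{-itH}H\varphi$, which is the second equality. For the first, write the difference quotient as
\[
\frac{e^{-iH\left(t+h\right)}\varphi-e^{-iHt}\varphi}{h}-\left(-iHe^{-iHt}\varphi\right)=e^{-iHt}\left(\frac{e^{-iHh}-I}{h}+iH\right)\varphi,
\]
and use the isometry just proved to reduce to showing $\left\Vert \left(\frac{e^{-iHh}-I}{h}+iH\right)\varphi\right\Vert _{\left(H+CI\right)^{\beta}}\to0$ as $h\to0$. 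By the spectral theorem this norm squared equals $\int\left\vert \frac{e^{-ih\lambda}-1}{h}+i\lambda\right\vert ^{2}\left(\lambda+C\right)^{2\beta}\,d\left\Vert E\left(\lambda\right)\varphi\right\Vert ^{2}$; the integrand tends to $0$ pointwise in $\lambda$, and using $\left\vert e^{-ih\lambda}-1\right\vert \leq\left\vert h\lambda\right\vert $ together with $\left\vert \lambda\right\vert \leq\left(1+\left\vert C\right\vert /\varepsilon\right)\left(\lambda+C\right)$ on the spectrum, it is dominated by $4\left(1+\left\vert C\right\vert /\varepsilon\right)^{2}\left(\lambda+C\right)^{2\beta+2}$, which is $d\left\Vert E\left(\lambda\right)\varphi\right\Vert ^{2}$-integrable precisely because $\varphi\in D\left(\left(H+CI\right)^{\beta+1}\right)$. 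Dominated convergence then finishes the proof.

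There is no genuine obstacle beyond bookkeeping; the hypotheses enter only at two points: (i) $H+CI\geq\varepsilon I$ to guarantee the weight $\left(\lambda+C\right)^{\beta}$ is continuous and bounded below on the spectrum (so the functional calculus is unambiguous and $\left\vert \lambda\right\vert $ is controlled by $\lambda+C$), and (ii) the $\beta+1$ regularity of $\varphi$ to supply the integrable majorant $\left(\lambda+C\right)^{2\beta+2}$ in the dominated convergence argument. Separability of $\mathcal{K}$ is not actually needed but does no harm, and the whole statement is the standard consequence of Stone's theorem cited from \citep[p.265]{Reed1980}.
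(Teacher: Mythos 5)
Your proof is correct; the paper offers no proof of Proposition \ref{pro.6.1} at all, describing it only as ``a routine application of the spectral theorem'' with a citation to Reed--Simon, and your argument is precisely that routine application carried out in full. The domain invariance and isometry via commutation of $e^{-itH}$ with the spectral measure, and the dominated-convergence treatment of the difference quotient with the majorant $(\lambda+C)^{2\beta+2}$ supplied by the $\beta+1$ regularity of $\varphi$, are exactly what the cited reference provides.
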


In this section we are going to show, as a consequence of Proposition
\ref{pro.6.5} below, that
\begin{equation}
e^{iH_{\hbar}t/\hbar}\bar{a}e^{-iH_{\hbar}t/\hbar}\mathcal{S}\text{ and
}e^{iH_{\hbar}t/\hbar}a^{\ast}e^{-iH_{\hbar}t/\hbar}\mathcal{S}\subseteq
\mathcal{S}. \label{equ.6.1}%
\end{equation}

\begin{lemma}
\label{lem.6.4} For any unbounded operator $T$ and constant $C\in\mathbb{R},$
then for any $n\in\mathbb{N}_{0},$
\[
D\left(  \left(  T+C\right)  ^{n}\right)  =D\left(  T^{n}\right)  .
\]

\end{lemma}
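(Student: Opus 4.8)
The plan is to reduce the statement to a single inclusion plus a symmetry observation, the whole thing being an induction on $n$ in which the only work is bookkeeping with domains. First I would record two elementary preliminary facts. Since $CI$ is everywhere defined and bounded, $D(T+CI)=D(T)$, and hence $D((T+C)^{n})$ is defined by the same recursion as $D(T^{n})$ but with $T$ replaced by $T+CI$. Second, a quick induction shows that if $f\in D(T^{m})$ and $0\le k\le m$ then $T^{k}f\in D(T^{m-k})$; in particular $D(T^{m})\subseteq D(T^{k})$ for $k\le m$.

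The heart of the argument is the following claim, proved by induction on $n$: for every $f\in D(T^{n})$ one has $f\in D((T+C)^{n})$ and the binomial identity $(T+C)^{n}f=\sum_{k=0}^{n}\binom{n}{k}C^{n-k}T^{k}f$ holds. The cases $n=0,1$ are immediate. For the step, let $f\in D(T^{n+1})$; then $f\in D(T^{n})$, so the inductive hypothesis gives $f\in D((T+C)^{n})$ and $(T+C)^{n}f=\sum_{k=0}^{n}\binom{n}{k}C^{n-k}T^{k}f$. By the preliminary fact each summand $T^{k}f$ with $0\le k\le n$ lies in $D(T)$ (because $f\in D(T^{n+1})$ forces $T^{k}f\in D(T^{n+1-k})\subseteq D(T)$), hence the finite sum $(T+C)^{n}f$ lies in $D(T)=D(T+CI)$, so $f\in D((T+C)^{n+1})$. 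Applying $T+CI$ to the expansion, distributing over the finite sum, reindexing, and using Pascal's rule yields the degree-$(n+1)$ identity. This proves $D(T^{n})\subseteq D((T+C)^{n})$.

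For the reverse inclusion I would apply the claim verbatim with $T+CI$ in place of $T$ and $-C$ in place of $C$: since $(T+CI)+(-C)I=T$, the claim gives $D((T+C)^{n})\subseteq D\bigl(((T+C)-C)^{n}\bigr)=D(T^{n})$. Combining the two inclusions gives $D((T+C)^{n})=D(T^{n})$, as desired.

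The main (and really only) obstacle is the domain bookkeeping inside the inductive step: one must verify that every term of the partially expanded binomial sum already lies in $D(T)$ before one is entitled to apply $T+CI$ termwise. This is exactly what the preliminary fact $f\in D(T^{m})\Rightarrow T^{k}f\in D(T^{m-k})$ delivers. There is no analytic subtlety here — no closability, no spectral theorem — merely the recursive definition of $D(T^{n})$ and finite linearity.
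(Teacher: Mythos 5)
Your proof is correct and follows essentially the same strategy as the paper's: establish one inclusion by induction on $n$ using the binomial expansion of $(T+C)^{n}f$ together with the fact that $T^{k}f\in D(T^{m-k})$ for $f\in D(T^{m})$, then obtain the reverse inclusion by substituting $T\mapsto T+C$, $C\mapsto -C$. The only cosmetic difference is which inclusion you prove directly (you do $D(T^{n})\subseteq D((T+C)^{n})$, the paper does the opposite one), and you are slightly more careful in carrying the binomial identity through the induction, which the paper uses implicitly.
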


\begin{proof}
We first show by induction that $D\left(  \left(  T+C\right)  ^{n}\right)
\subset D\left(  T^{n}\right)  $ for all $n\in\mathbb{N}.$ The case $n=1$ is
trivial. Then the induction step is
\begin{align*}
f  &  \in D\left(  \left(  T+C\right)  ^{n+1}\right)  \implies f\in D\left(
\left(  T+C\right)  ^{n}\right)  \text{ and }\left(  T+C\right)  ^{n}f\in
D\left(  T+C\right) \\
&  \implies f\in D\left(  \left(  T+C\right)  ^{n}\right)  \text{ and }\left(
T+C\right)  ^{n}f\in D\left(  T\right) \\
&  \implies f\in D\left(  T^{n}\right)  \text{ and }\left(  T+C\right)
^{n}f\in D\left(  T\right)
\end{align*}
But
\[
\left(  T+C\right)  ^{n}f=T^{n}f+\sum_{k=0}^{n-1}\binom{n}{k}C^{n-k}%
T^{k}f=T^{n}f+g
\]
where $g\in D\left(  T\right)  $ and hence
\[
T^{n}f=\left(  T+C\right)  ^{n}f-g\in D\left(  T\right)  \implies f\in
D\left(  T^{n+1}\right)  .
\]
finishing the inductive step.

To finish the proof, we replace $T$ by $T-C$ above to learn%
\[
D\left(  T^{n}\right)  =D\left(  \left(  T-C+C\right)  ^{n}\right)  \subset
D\left(  \left(  T-C\right)  ^{n}\right)
\]
and then replace $C$ by $-C$ to find $D\left(  T^{n}\right)  \subset D\left(
\left(  T+C\right)  ^{n}\right)  .$
\end{proof}

\begin{proposition}
\label{pro.6.5}Let $H\left(  \theta,\theta^{\ast}\right)  $ and $\eta>0$ be as
in Assumption \ref{ass.1}, then $\exp\left(  -iH_{\hbar}t\right)  $ leaves
$\mathcal{S}$ invariant and more explicitly, it is $\exp\left(  -iH_{\hbar
}t\right)  \mathcal{S}=\mathcal{S}$ for all $t\in\mathbb{R}.$
\end{proposition}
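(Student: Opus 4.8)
The plan is to give an intrinsic description of Schwartz space in terms of $H_{\hbar}$ and then invoke Stone's theorem. Fix $\hbar\in(0,\eta)$ and let $C$ be the constant of Assumption~\ref{ass.1}, so that $H_{\hbar}$ is self-adjoint with $H_{\hbar}+C\geq I$. I claim that
\[
\mathcal{S}=\bigcap_{n=0}^{\infty}D\bigl(H_{\hbar}^{n}\bigr)=\bigcap_{\beta\geq0}D\bigl((H_{\hbar}+C)^{\beta}\bigr).
\]
Granting this, the proposition is immediate: by Proposition~\ref{pro.6.1} (applied to the self-adjoint operator $H_{\hbar}$ with $\varepsilon=1$), $e^{-iH_{\hbar}t}$ maps $D((H_{\hbar}+C)^{\beta})$ onto itself for every $\beta\geq0$ and $t\in\mathbb{R}$, hence it maps the intersection $\bigcap_{\beta}D((H_{\hbar}+C)^{\beta})=\mathcal{S}$ into itself; replacing $t$ by $-t$ and using $e^{iH_{\hbar}t}e^{-iH_{\hbar}t}=I$ upgrades this to the equality $e^{-iH_{\hbar}t}\mathcal{S}=\mathcal{S}$.

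For the inclusion $\mathcal{S}\subseteq\bigcap_{n}D(H_{\hbar}^{n})$, recall from Definition~\ref{def.2.16} that the polynomial differential operator $H(a_{\hbar},a_{\hbar}^{\dag})$ maps $\mathcal{S}$ into $\mathcal{S}$ and that $H_{\hbar}=\overline{H(a_{\hbar},a_{\hbar}^{\dag})}$ extends it. Thus if $\psi\in\mathcal{S}$ then $H_{\hbar}\psi=H(a_{\hbar},a_{\hbar}^{\dag})\psi\in\mathcal{S}$, and a straightforward induction on $n$ gives $\psi\in D(H_{\hbar}^{n})$ with $H_{\hbar}^{n}\psi=H(a_{\hbar},a_{\hbar}^{\dag})^{n}\psi\in\mathcal{S}$ for all $n$. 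By Lemma~\ref{lem.6.4} we have $D(H_{\hbar}^{n})=D((H_{\hbar}+C)^{n})$, and since $H_{\hbar}+C\geq I$ the spectral calculus gives $D((H_{\hbar}+C)^{n})\subseteq D((H_{\hbar}+C)^{\beta})$ whenever $n\geq\beta$; intersecting over $n$ therefore yields $\bigcap_{n}D((H_{\hbar}+C)^{n})=\bigcap_{\beta\geq0}D((H_{\hbar}+C)^{\beta})$, and we conclude $\mathcal{S}\subseteq\bigcap_{\beta}D((H_{\hbar}+C)^{\beta})$.

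For the reverse inclusion, let $\psi\in\bigcap_{n}D(H_{\hbar}^{n})=\bigcap_{n}D((H_{\hbar}+C)^{n})$ and fix $\beta\geq0$. Choosing an integer $n\geq\beta$ we get $\psi\in D((H_{\hbar}+C)^{n})\subseteq D((H_{\hbar}+C)^{\beta})$ exactly as above, and Assumption~\ref{ass.1}(2) — which by the definition of $\preceq$ in Notation~\ref{not.1.11} contains the domain inclusion $D((H_{\hbar}+C)^{\beta})\subseteq D(\mathcal{N}_{\hbar}^{\beta})$ — gives $\psi\in D(\mathcal{N}_{\hbar}^{\beta})=D(\mathcal{N}^{\beta})$, the last equality because $\mathcal{N}_{\hbar}=\hbar\mathcal{N}$ with $\hbar>0$. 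Since $\beta\geq0$ was arbitrary, $\psi\in\bigcap_{\beta\geq0}D(\mathcal{N}^{\beta})=\mathcal{S}$ by Eq.~(\ref{equ.3.34}). This establishes the displayed identity and hence the proposition.

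The step requiring the most attention — and the only place where Assumption~\ref{ass.1} is genuinely used — is the reverse inclusion: one must exploit the form comparisons $\mathcal{N}_{\hbar}^{\beta}\preceq C_{\beta}(H_{\hbar}+C)^{\beta}$ purely for their domain content, and keep track that $D(\mathcal{N}^{\beta})$ does not depend on $\hbar$ so that Eq.~(\ref{equ.3.34}) applies. The first inclusion is essentially bookkeeping, though it does rest on the (routine) verification that $H_{\hbar}^{n}$ agrees on $\mathcal{S}$ with the honest differential operator $H(a_{\hbar},a_{\hbar}^{\dag})^{n}$, which is what makes the induction legitimate.
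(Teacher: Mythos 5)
Your proof is correct and follows essentially the same route as the paper's: characterize $\mathcal{S}$ as $\bigcap_{n}D(H_{\hbar}^{n})$, using the fact that $H(a_{\hbar},a_{\hbar}^{\dag})$ preserves $\mathcal{S}$ for one inclusion and the domain content of Eq. (\ref{equ.1.14}) together with Eq. (\ref{equ.3.34}) for the other, then invoke Proposition \ref{pro.6.1} and invert to get equality. The only difference is that you spell out the passage through $(H_{\hbar}+C)^{\beta}$ for non-integer $\beta$ via Lemma \ref{lem.6.4} and the spectral calculus, which the paper handles implicitly by working with integer powers only.
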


\begin{proof}
The fact that $\mathcal{S}\subseteq H_{\hbar}^{n}$ for all $n\in\mathbb{N}$
along with Eq. (\ref{equ.1.14}) in the Assumption \ref{ass.1} and Eq.
(\ref{equ.3.51}), we learn that
\[
\mathcal{S}(\mathbb{R})\subset\bigcap_{n=1}^{\infty}D\left(  H_{\hbar}%
^{n}\right)  \subseteq\bigcap_{n=1}^{\infty}D(\mathcal{N}_{\hbar}%
^{n})=\mathcal{S}(\mathbb{R})
\]
This shows $\mathcal{S}(\mathbb{R})=\bigcap_{n=1}^{\infty}D\left(  H_{\hbar
}^{n}\right)  $ and this finishes the proof since, see Proposition
\ref{pro.6.1}, $\exp\left(  -iH_{\hbar}t\right)  $ leaves $\bigcap
_{n=1}^{\infty}D\left(  H_{\hbar}^{n}\right)  \ $invariant, i.e.,$\exp\left(
-iH_{\hbar}t\right)  \mathcal{S}\subseteq\mathcal{S}$ for all $t\in
\mathbb{R}.$ By multiplying $\exp\left(  iH_{\hbar}t\right)  $ on both sides,
we yield $\mathcal{S}\subseteq\exp\left(  iH_{\hbar}t\right)  \mathcal{S}.$
Therefore, $\exp\left(  -iH_{\hbar}t\right)  \mathcal{S}=\mathcal{S}$ is
resulted if we replacing $t$ to $-t.$
\end{proof}

\begin{lemma}
\label{lem.6.6}If $P\in\mathbb{C}\left\langle \theta,\theta^{\ast
}\right\rangle ,$ $\delta:=\deg_{\theta}P\in\mathbb{N}_{0},$ and $C\left(
P\right)  :=\sum_{k=0}^{\delta}\left\vert P_{k}\right\vert k^{k/2},$ then
\begin{equation}
\left\Vert P\left(  \bar{a}_{\hbar},a_{\hbar}^{\ast}\right)  \psi\right\Vert
\leq C\left(  P\right)  \left\Vert \left(  I+\mathcal{N}_{\hbar}\right)
^{\delta/2}\psi\right\Vert ~\forall~0<\hbar\leq1\text{ and }\psi\in D\left(
\mathcal{N}^{\delta/2}\right)  . \label{equ.6.2}%
\end{equation}

\end{lemma}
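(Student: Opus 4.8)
The plan is to reduce everything to Corollary \ref{cor.3.40} (which is the $\hbar=1$ case) by pulling the scaling factor $\sqrt{\hbar}$ out of each annihilation and creation operator and then recombining the accumulated powers of $\hbar$ with the power of $\hbar$ that is hidden inside $\mathcal{N}_\hbar$.

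First I would record the elementary scaling identities. Since $a_\hbar = \sqrt{\hbar}\,a$ on $\mathcal{S}$ by Definition \ref{def.1.3}, passing to adjoints and closures gives $\bar a_\hbar = \sqrt{\hbar}\,\bar a$ and $a_\hbar^{\ast} = \sqrt{\hbar}\,a^{\ast}$ as closed operators (the domains are unchanged because $\hbar\neq0$), whence $\mathcal{N}_\hbar = a_\hbar^{\ast}\bar a_\hbar = \hbar\,a^{\ast}\bar a = \hbar\mathcal{N}$ and in particular $D(\mathcal{N}_\hbar^{\beta})=D(\mathcal{N}^{\beta})$ for every $\beta\ge0$. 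Writing $P=\sum_{k=0}^{\delta}P_k$ for the homogeneous decomposition of $P$ as in Eq. \eqref{equ.2.21}, a monomial of length $k$ in $\{\bar a_\hbar,a_\hbar^{\ast}\}$ is $\hbar^{k/2}$ times the corresponding monomial in $\{\bar a,a^{\ast}\}$ (a composition of nonzero scalar multiples, with identical domain), so
\[
P(\bar a_\hbar,a_\hbar^{\ast})=\sum_{k=0}^{\delta}\hbar^{k/2}\,P_k(\bar a,a^{\ast})
\]
as operators on $D(\mathcal{N}^{\delta/2})$; each summand is defined there since $D(P_k(\bar a,a^{\ast}))=D(\mathcal{N}^{k/2})\supseteq D(\mathcal{N}^{\delta/2})$ by Corollary \ref{cor.3.40}.

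Now apply Corollary \ref{cor.3.40} to the homogeneous polynomial $P_k$ with $\beta=0$ to get $\|P_k(\bar a,a^{\ast})\psi\|\le k^{k/2}|P_k|\,\|(\mathcal{N}+I)^{k/2}\psi\|$ for $\psi\in D(\mathcal{N}^{\delta/2})$. The decisive step is the rewriting
\[
\hbar^{k/2}(\mathcal{N}+I)^{k/2}=\bigl(\hbar\mathcal{N}+\hbar I\bigr)^{k/2}=(\mathcal{N}_\hbar+\hbar I)^{k/2},
\]
together with the spectral inequalities $(\mathcal{N}_\hbar+\hbar I)^{k/2}\le(\mathcal{N}_\hbar+I)^{k/2}\le(\mathcal{N}_\hbar+I)^{\delta/2}$, valid because $0<\hbar\le1$, $k\le\delta$, and $\mathcal{N}_\hbar\ge0$. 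These give $\hbar^{k/2}\|P_k(\bar a,a^{\ast})\psi\|\le k^{k/2}|P_k|\,\|(\mathcal{N}_\hbar+I)^{\delta/2}\psi\|$, and summing over $0\le k\le\delta$ with the triangle inequality yields Eq. \eqref{equ.6.2} with $C(P)=\sum_{k=0}^{\delta}|P_k|\,k^{k/2}$.

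The one place to be careful --- and where a naive argument fails --- is to \emph{not} throw away the $\hbar^{k/2}$ factors using $\hbar^{k/2}\le1$; retaining them is exactly what turns the $\hbar$-independent $(\mathcal{N}+I)^{k/2}$ produced by Corollary \ref{cor.3.40} into $(\mathcal{N}_\hbar+\hbar I)^{k/2}$, which is then controlled by $(\mathcal{N}_\hbar+I)^{\delta/2}$. Everything else is routine bookkeeping: verifying the scalar-scaling identities for products of unbounded operators (legitimate since $\hbar\neq0$) and checking that $(\mathcal{N}_\hbar+I)^{\delta/2}$, $(\mathcal{N}+I)^{k/2}$, and all the $P_k(\bar a,a^{\ast})$ are defined on $D(\mathcal{N}^{\delta/2})$.
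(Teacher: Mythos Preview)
Your proof is correct and follows essentially the same route as the paper's: decompose $P$ into homogeneous parts $P_k$, pull out the factor $\hbar^{k/2}$ via $\bar a_\hbar=\sqrt{\hbar}\,\bar a$ and $a_\hbar^{\ast}=\sqrt{\hbar}\,a^{\ast}$, apply Corollary~\ref{cor.3.40} with $\beta=0$ and $d=k$ to each $P_k$, recombine $\hbar^{k/2}(\mathcal{N}+I)^{k/2}=(\mathcal{N}_\hbar+\hbar I)^{k/2}\le(\mathcal{N}_\hbar+I)^{\delta/2}$, and sum. Your emphasis on \emph{not} discarding the $\hbar^{k/2}$ factor before recombining it with $\mathcal{N}$ is exactly the point of the argument.
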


\begin{proof}
Let $P_{k}$ be the degree $k$ homogeneous component of $P$ as in Eq.
(\ref{equ.2.22}). Then according to Corollary \ref{cor.3.40} with $\beta=0$
and $d=k$ we have,
\begin{align*}
\left\Vert P_{k}\left(  \bar{a}_{\hbar},a_{\hbar}^{\ast}\right)
\psi\right\Vert  &  =\hbar^{k/2}\left\Vert P_{k}\left(  \bar{a},a^{\ast
}\right)  \psi\right\Vert \\
&  \leq\left\vert P_{k}\right\vert k^{k/2}\hbar^{k/2}\left\Vert \psi
\right\Vert _{k/2}\\
&  =\left\vert P_{k}\right\vert k^{k/2}\hbar^{k/2}\left\Vert \left(
I+\mathcal{N}\right)  ^{k/2}\psi\right\Vert \\
&  =\left\vert P_{k}\right\vert k^{k/2}\left\Vert \left(  \hbar I+\mathcal{N}%
_{\hbar}\right)  ^{k/2}\psi\right\Vert \\
&  \leq\left\vert P_{k}\right\vert k^{k/2}\left\Vert \left(  I+\mathcal{N}%
_{\hbar}\right)  ^{k/2}\psi\right\Vert \leq\left\vert P_{k}\right\vert
k^{k/2}\left\Vert \left(  I+\mathcal{N}_{\hbar}\right)  ^{\delta/2}%
\psi\right\Vert .
\end{align*}
Summing this inequality on $k$ using $P=\sum_{k=0}^{\delta}P_{k}$ and the
triangle inequality leads directly to Eq. (\ref{equ.6.2}).
\end{proof}

The next important result may be found in Heinz \cite{Heinz1951}, also see
Kato \cite[Theorem 2]{Kato1952} and \cite[Proposition 10.14, p.232]%
{Schmudgen2012}.

\begin{theorem}
[L\"{o}wner-Heinz inequality]\label{the.6.7}Let $A$ and $B$ be non-negative
self-adjoint operators on a Hilbert space. If $A\leq B$ (see Notation
\ref{not.1.11}), then $A^{r}\leq B^{r}$ for $0\leq r\leq1.$
\end{theorem}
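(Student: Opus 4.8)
The plan is to prove the equivalent assertion that $t\mapsto t^{r}$ is operator monotone on $[0,\infty)$ for $0\le r\le 1$: first for bounded non--negative operators by the classical integral representation of $t^{r}$, and then to transfer the conclusion to the possibly unbounded self--adjoint $A,B$ in the statement by a resolvent (Yosida) approximation together with the spectral theorem. The only serious input is the bounded case; the rest is soft.

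First I would record the variational identity
\[
\langle (I+\varepsilon A)^{-1}x,x\rangle=\sup_{z\in D(\sqrt A)}\Big(2\operatorname{Re}\langle x,z\rangle-\|z\|^{2}-\varepsilon\|\sqrt A z\|^{2}\Big),
\]
valid for every $\varepsilon>0$, every non--negative self--adjoint $A$, and every $x$ (the supremum is attained at $z=(I+\varepsilon A)^{-1}x$, which one checks by the Euler equation). The hypothesis $A\le B$ is exactly the statement that $D(\sqrt B)\subseteq D(\sqrt A)$ and $\|\sqrt A z\|\le\|\sqrt B z\|$ on $D(\sqrt B)$; enlarging the admissible set of $z$ and shrinking the subtracted form term then gives $\langle (I+\varepsilon B)^{-1}x,x\rangle\le\langle (I+\varepsilon A)^{-1}x,x\rangle$ for all $x$, i.e. $(I+\varepsilon B)^{-1}\le (I+\varepsilon A)^{-1}$ as bounded operators. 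Hence the bounded Yosida approximants $A_{\varepsilon}:=\varepsilon^{-1}\big(I-(I+\varepsilon A)^{-1}\big)=A(I+\varepsilon A)^{-1}$ satisfy $0\le A_{\varepsilon}\le B_{\varepsilon}$ for every $\varepsilon>0$.

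Next, the bounded case. For bounded $0\le C\le D$ I would use
\[
t^{r}=\frac{\sin \pi r}{\pi}\int_{0}^{\infty}s^{r-1}\,\frac{t}{t+s}\,ds\qquad(t\ge 0,\ 0<r<1),
\]
which by the spectral theorem yields $C^{r}=\tfrac{\sin \pi r}{\pi}\int_{0}^{\infty}s^{r-1}\big(I-s(C+sI)^{-1}\big)\,ds$, the integral converging in operator norm because $\|C(C+sI)^{-1}\|\le\min(1,\|C\|/s)$. For fixed $s>0$ we have $0<C+sI\le D+sI$, so $(D+sI)^{-1}\le (C+sI)^{-1}$ by the elementary antitonicity of inversion on strictly positive bounded operators, whence $\tfrac{C}{C+s}\le\tfrac{D}{D+s}$; multiplying by the non--negative weight $\tfrac{\sin \pi r}{\pi}s^{r-1}$ and integrating preserves the order, giving $C^{r}\le D^{r}$ (the cases $r=0,1$ being trivial). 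Applying this to $C=A_{\varepsilon}$, $D=B_{\varepsilon}$ gives $\|A_{\varepsilon}^{r/2}\psi\|\le\|B_{\varepsilon}^{r/2}\psi\|$ for all $\psi$. Since $t(1+\varepsilon t)^{-1}\uparrow t$ pointwise as $\varepsilon\downarrow 0$, the functions $\big(t(1+\varepsilon t)^{-1}\big)^{r}$ increase to $t^{r}$, so by the spectral theorem $\|A_{\varepsilon}^{r/2}\psi\|^{2}\uparrow\|A^{r/2}\psi\|^{2}$ in $[0,\infty]$, and likewise for $B$; fixing $\psi\in D(B^{r/2})$ keeps the $B$--side bounded, which forces $\psi\in D(A^{r/2})$ and $\|A^{r/2}\psi\|\le\|B^{r/2}\psi\|$ in the limit. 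That is precisely $A^{r}\le B^{r}$.

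The main obstacle is Step 2: the operator monotonicity of $t\mapsto t^{r}$ for bounded operators — all the genuine mathematics lives there, and within it the point needing care is justifying that the norm--convergent integral of the operators $I-s(C+sI)^{-1}$ really represents $C^{r}$ and respects the order, which the uniform bound $\|C(C+sI)^{-1}\|\le\min(1,\|C\|/s)$ and the spectral calculus handle. A secondary, purely technical nuisance is the unbounded--domain bookkeeping in Steps 1 and 3, which the variational formula for the resolvent and monotone convergence for spectral measures dispatch cleanly; alternatively, one may simply cite Heinz \cite{Heinz1951} or \citep[Proposition 10.14]{Schmudgen2012} for the whole statement.
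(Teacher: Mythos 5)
Your proposal is correct. The paper itself gives no proof of Theorem \ref{the.6.7}: it simply records the statement and cites Heinz, Kato, and \citep[Proposition 10.14]{Schmudgen2012}, which is also the fallback you mention in your last sentence. Your argument is the classical self-contained route: the Beta-function representation $t^{r}=\frac{\sin\pi r}{\pi}\int_{0}^{\infty}s^{r-1}\frac{t}{t+s}\,ds$ reduces the bounded case to the antitonicity of inversion on strictly positive operators, and the Yosida regularization $A_{\varepsilon}=A(I+\varepsilon A)^{-1}$ transfers the result to unbounded $A,B$. The two nontrivial hinges both check out: the variational formula for $\langle(I+\varepsilon A)^{-1}x,x\rangle$ correctly converts the form inequality $A\le B$ of Notation \ref{not.1.11} (domain inclusion plus norm inequality on $D(\sqrt{B})$) into the resolvent inequality $(I+\varepsilon B)^{-1}\le(I+\varepsilon A)^{-1}$, and the monotone convergence $\bigl(t(1+\varepsilon t)^{-1}\bigr)^{r}\uparrow t^{r}$ applied to the spectral measures recovers exactly the conclusion $D(B^{r/2})\subseteq D(A^{r/2})$ with $\|A^{r/2}\psi\|\le\|B^{r/2}\psi\|$, which is the paper's definition of $A^{r}\le B^{r}$. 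What your version buys over the paper's citation is a proof that works verbatim for unbounded operators in the precise form-order sense the paper uses, so the reader need not chase down whether the referenced versions are stated in that generality; the cost is about a page of standard analysis that the paper chose to outsource.
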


\begin{corollary}
\label{cor.6.8}Let $H\left(  \theta,\theta^{\ast}\right)  \in\mathbb{R}%
\left\langle \theta,\theta^{\ast}\right\rangle ,$ $1>\eta>0,$ and $C$ be as in
Assumption \ref{ass.1} and set $\tilde{C}:=C+1.$ Then for each $\beta\geq0,$
there exists constants $\widetilde{C}_{\beta}<\infty$ and $\widetilde
{D}_{\beta}<\infty$ such that, for all $0\leq\hbar<\eta,$
\begin{align}
\left(  \mathcal{N}_{\hbar}+I\right)  ^{\beta}  &  \leq\widetilde{C}_{\beta
}\left(  H_{\hbar}+\widetilde{C}\right)  ^{\beta}\text{ and }\label{equ.6.3}\\
\left(  H_{\hbar}+\widetilde{C}\right)  ^{\beta}  &  \leq\widetilde{D}_{\beta
}\left(  \mathcal{N}_{\hbar}+I\right)  ^{\beta d/2}. \label{equ.6.4}%
\end{align}

\end{corollary}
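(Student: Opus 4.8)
The plan is to deduce both inequalities from Assumption \ref{ass.1} together with the Löwner–Heinz inequality (Theorem \ref{the.6.7}), handling separately the cases $0 \le \beta \le 1$ and $\beta > 1$. For the lower-bound inequality \eqref{equ.6.3}, I would first observe that Assumption \ref{ass.1} part (1) gives $H_\hbar + C \ge I$, so $H_\hbar + \tilde C = H_\hbar + C + I \ge 2I \ge I$, and in particular $(H_\hbar + \tilde C)^\beta$ is well defined as a non-negative self-adjoint operator. Taking $\beta = 1$ in \eqref{equ.1.14} of Assumption \ref{ass.1} gives $\mathcal{N}_\hbar \preceq C_1 (H_\hbar + C)$, hence $\mathcal{N}_\hbar + I \preceq C_1(H_\hbar + C) + I \preceq C_1(H_\hbar + C) + C_1 I = C_1(H_\hbar + \tilde C)$ (using $C_1 \ge 1$ without loss of generality, which one may always arrange by enlarging the constant). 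Since both sides are non-negative self-adjoint operators with $\mathcal{N}_\hbar + I \ge I$, this operator inequality in the form "$\le$" of Notation \ref{not.1.11}(3) lets us apply Theorem \ref{the.6.7}: for $0 \le r \le 1$ we get $(\mathcal{N}_\hbar + I)^r \le C_1^r (H_\hbar + \tilde C)^r$, which is \eqref{equ.6.3} for $0 \le \beta \le 1$ with $\widetilde C_\beta = C_1^\beta$. For $\beta > 1$, write $\beta = n + r$ with $n \in \mathbb{N}$ and $0 \le r < 1$; the cleanest route is instead to apply \eqref{equ.1.14} directly with exponent $\beta$ (Assumption \ref{ass.1}(2) is stated for \emph{all} $\beta \ge 0$), giving $\mathcal{N}_\hbar^\beta \preceq C_\beta (H_\hbar + C)^\beta$, and then absorb the $+I$ on the left: $(\mathcal{N}_\hbar + I)^\beta \le 2^{\beta-1}(\mathcal{N}_\hbar^\beta + I) \preceq 2^{\beta-1}(C_\beta (H_\hbar+C)^\beta + I) \le \widetilde C_\beta (H_\hbar + \tilde C)^\beta$, using $(H_\hbar + \tilde C)^\beta \ge I$.

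For the upper-bound inequality \eqref{equ.6.4}, I would use the fact that $H_\hbar = \overline{H(a_\hbar, a_\hbar^\dag)}$ with $\deg_\theta H = d$, so that by Lemma \ref{lem.6.6} (applied with $P = H$, $\delta = d$) one has, for all $0 < \hbar \le 1$ and $\psi \in D(\mathcal{N}_\hbar^{d/2})$,
\[
\left\Vert H_\hbar \psi \right\Vert \le C(H) \left\Vert (I + \mathcal{N}_\hbar)^{d/2} \psi \right\Vert.
\]
Since $\mathcal{S} \subset D(\mathcal{N}_\hbar^{d/2})$ is a core for both $H_\hbar$ and $(I + \mathcal{N}_\hbar)^{d/2}$ (the latter by Eq. \eqref{equ.3.34} and the diagonal form from Corollary \ref{cor.3.26}), this extends to $\psi \in D((I+\mathcal{N}_\hbar)^{d/2})$ and in particular gives $D((\mathcal{N}_\hbar + I)^{d/2}) \subseteq D(H_\hbar)$ together with the relative bound $\|H_\hbar \psi\| \le C(H)\|(\mathcal{N}_\hbar + I)^{d/2}\psi\|$. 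This yields $\|(H_\hbar + \tilde C)\psi\| \le (C(H) + \tilde C)\|(\mathcal{N}_\hbar + I)^{d/2}\psi\|$, i.e., in squared form, $(H_\hbar + \tilde C)^2 \le (C(H)+\tilde C)^2 (\mathcal{N}_\hbar + I)^d$ as non-negative self-adjoint operators (both being bounded below by $I$). Applying Theorem \ref{the.6.7} with $r = \beta/2 \in [0,1]$ when $0 \le \beta \le 2$ gives $(H_\hbar + \tilde C)^\beta \le (C(H)+\tilde C)^\beta (\mathcal{N}_\hbar + I)^{\beta d/2}$, which is \eqref{equ.6.4} in that range. For $\beta > 2$ I would iterate: writing $(H_\hbar + \tilde C)^\beta = \big((H_\hbar+\tilde C)^2\big)^{\beta/2}$ and noting $(H_\hbar + \tilde C)^2 \le (C(H)+\tilde C)^2(\mathcal{N}_\hbar + I)^d$, I apply Theorem \ref{the.6.7} again — but now $\beta/2$ may exceed $1$, so instead one uses the submultiplicativity argument: for integer $m$ with $2m \ge \beta$, bound $(H_\hbar + \tilde C)^{2m} = ((H_\hbar+\tilde C)^2)^m$ and interpolate. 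Concretely, take $N \in \mathbb{N}$ with $N \ge \beta/2$; since $(H_\hbar + \tilde C)^2 \le (C(H)+\tilde C)^2(\mathcal{N}_\hbar+I)^d$, and both sides commute is \emph{not} available, so one argues via $\|(H_\hbar+\tilde C)^N\psi\| \le \|(H_\hbar+\tilde C)(H_\hbar+\tilde C)^{N-1}\psi\|$ and an induction showing $(H_\hbar+\tilde C)^{N-1}\psi \in D((\mathcal{N}_\hbar+I)^{d/2})$ with the appropriate norm bound — this is where the domain bookkeeping is slightly delicate. Having established $(H_\hbar+\tilde C)^{2N} \le \widetilde D_{2N}(\mathcal{N}_\hbar + I)^{Nd}$, a final application of Theorem \ref{the.6.7} with $r = \beta/(2N) \le 1$ produces \eqref{equ.6.4} for general $\beta$.

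The main obstacle I anticipate is not any single inequality but the \emph{domain matching} required before Theorem \ref{the.6.7} can legitimately be invoked: Löwner–Heinz requires the two operators to be genuinely non-negative self-adjoint with the "$\le$" relation of Notation \ref{not.1.11}(3), which involves a domain inclusion of the square roots, whereas Assumption \ref{ass.1} and Lemma \ref{lem.6.6} initially give only form-type or relative-bound statements on the core $\mathcal{S}$. Closing this gap — i.e., upgrading "$\preceq$ on $\mathcal{S}$" to "$\le$ between the closed operators" — uses that $\mathcal{S}$ is a core for all relevant powers of $\mathcal{N}_\hbar$ (Eq. \eqref{equ.3.34}) and that $H_\hbar$ is self-adjoint with $\mathcal{S} \subset D(H_\hbar^n)$ for all $n$ (Proposition \ref{pro.6.5} and its proof). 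One should also track uniformity in $\hbar \in [0,\eta)$: all constants $C(H)$, $C_\beta$, $\widetilde C_\beta$, $\widetilde D_\beta$ produced above are $\hbar$-independent because Lemma \ref{lem.6.6} is uniform for $0 < \hbar \le 1$ and $C_\beta$ in Assumption \ref{ass.1} is uniform for $\hbar \in (0,\eta)$; the $\hbar = 0$ endpoint is vacuous or handled by the classical estimate \eqref{equ.3.13}. I would present the $0 \le \beta \le 1$ (resp. $0 \le \beta \le 2$) cases in full and indicate the iteration for larger $\beta$ briefly, since it is routine once the base case and the domain subtleties are in place.
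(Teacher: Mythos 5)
Your treatment of Eq. (\ref{equ.6.3}), and of Eq. (\ref{equ.6.4}) in the range $0\le\beta\le 2$, matches the paper's: the paper proves (\ref{equ.6.3}) for all $\beta\ge0$ at once via $(x+1)^{\beta}\le 2^{(\beta-1)_{+}}(x^{\beta}+1)$ and Eq. (\ref{equ.1.14}), and upgrades $\preceq$ to $\le$ by invoking Lemma 10.10 of Schm\"{u}dgen rather than a core argument, but that is a cosmetic difference. The genuine gap is in Eq. (\ref{equ.6.4}) for $\beta>2$. The iteration you sketch does not close: to run the induction on $\Vert (H_{\hbar}+\tilde{C})^{N}\psi\Vert$ you must bound $\Vert(\mathcal{N}_{\hbar}+I)^{d/2}(H_{\hbar}+\tilde{C})^{N-1}\psi\Vert$ by a multiple of $\Vert(\mathcal{N}_{\hbar}+I)^{Nd/2}\psi\Vert$, i.e. you need the inductive statement not merely as an $L^{2}\rightarrow L^{2}$ bound but as a bound between shifted Sobolev norms; and you cannot shortcut this by applying Theorem \ref{the.6.7} to the exponent $\beta/2>1$, since L\"{o}wner--Heinz fails for powers exceeding one. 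You flag this step as ``delicate,'' but none of the ingredients you list resolves it.

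The paper avoids the iteration entirely with one observation you are missing: for each $n\in\mathbb{N}$ the element $P^{(n)}(\theta,\theta^{\ast}):=(H(\theta,\theta^{\ast})+\tilde{C})^{n}$ is itself a non-commutative polynomial with $\deg_{\theta}P^{(n)}=dn$, and $(H_{\hbar}+\tilde{C})^{n}\psi=P^{(n)}(\bar{a}_{\hbar},a_{\hbar}^{\ast})\psi$ for $\psi\in D(\mathcal{N}^{dn/2})$. A single application of Lemma \ref{lem.6.6} to $P^{(n)}$ gives $\Vert(H_{\hbar}+\tilde{C})^{n}\psi\Vert\le C(P^{(n)})\Vert(I+\mathcal{N}_{\hbar})^{dn/2}\psi\Vert$ uniformly in $0<\hbar\le1$, which is precisely the operator inequality $(H_{\hbar}+\tilde{C})^{2n}\le C(P^{(n)})^{2}(I+\mathcal{N}_{\hbar})^{dn}$; no commuting of $H_{\hbar}$ past $\mathcal{N}_{\hbar}$ is ever required. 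Since $n$ is arbitrary, Theorem \ref{the.6.7} with $r=\beta/(2n)\le1$ then yields (\ref{equ.6.4}) for every $\beta\ge0$. If you insist on salvaging your route, you would first have to prove an $\hbar$-uniform Sobolev-scale bound of the form $\Vert(H_{\hbar}+\tilde{C})\psi\Vert_{\beta^{\prime}}\le C\Vert\psi\Vert_{\beta^{\prime}+d/2}$ for all $\beta^{\prime}\ge0$ (a weighted analogue of Corollary \ref{cor.3.40}), which is considerably more work than the paper's one-line device.
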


\begin{proof}
Using the simple estimate,
\begin{equation}
\left(  x+1\right)  ^{\beta}\leq2^{\left(  \beta-1\right)  _{+}}\left(
x^{\beta}+1\right)  \text{ }\forall~x,\beta\geq0, \label{e.6.5}%
\end{equation}
along with Eq. (\ref{equ.1.14}) implies,
\begin{align}
\left(  \mathcal{N}_{\hbar}+I\right)  ^{\beta}  &  \preceq2^{\left(
\beta-1\right)  _{+}}\left(  \mathcal{N}_{\hbar}^{\beta}+I\right)
\preceq2^{\left(  \beta-1\right)  _{+}}\left(  C_{\beta}\left(  H_{\hbar
}+C\right)  ^{\beta}+I\right) \nonumber\\
&  \preceq2^{\left(  \beta-1\right)  _{+}}C_{\beta}\left(  H_{\hbar
}+C+I\right)  ^{\beta}, \label{equ.6.6}%
\end{align}
wherein we have assumed $C_{\beta}\geq1$ without loss of generality. Lemma
10.10 of \citep[p.230]{Schmudgen2012} asserts, if $A$ and $B$ are non-negative
self-adjoint operators and $A\preceq B,$ then $A\leq B.$ Therefore we can
deduce from Eq. (\ref{equ.6.6}) that
\[
\left(  \mathcal{N}_{\hbar}+I\right)  ^{\beta}\leq2^{\left(  \beta-1\right)
_{+}}C_{\beta}\left(  H_{\hbar}+C+I\right)  ^{\beta}%
\]
which gives Eq. (\ref{equ.6.3}).

We now turn to the proof of Eq. (\ref{equ.6.4}). For $n\in\mathbb{N},$ let
$P^{\left(  n\right)  }\in\mathbb{C}\left\langle \theta,\theta^{\ast
}\right\rangle $ be defined by
\[
P^{\left(  n\right)  }\left(  \theta,\theta^{\ast}\right)  :=\left(  H\left(
\theta,\theta^{\ast}\right)  +\tilde{C}\right)  ^{n}%
\]
so that $\deg_{\theta}P^{\left(  n\right)  }=dn$ and for $\psi\in D\left(
\mathcal{N}^{dn/2}\right)  ,$ we have
\[
\left(  H_{\hbar}+\widetilde{C}\right)  ^{n}\psi=P^{\left(  n\right)  }\left(
\bar{a}_{\hbar},a_{\hbar}^{\ast}\right)  \psi.
\]
With these observations, we may apply Lemma \ref{lem.6.6} to find for any
$0<\hbar<\eta\leq1$ that
\[
\left\Vert \left(  H_{\hbar}+\widetilde{C}\right)  ^{n}\psi\right\Vert \leq
C\left(  P^{\left(  n\right)  }\right)  \left\Vert \left(  I+\mathcal{N}%
_{\hbar}\right)  ^{\frac{dn}{2}}\psi\right\Vert ~\forall~\psi\in D\left(
\mathcal{N}^{dn/2}\right)  .
\]
The last displayed equation is equivalent (see Notation \ref{not.1.11}) to the
operator inequality,
\[
\left(  H_{\hbar}+\widetilde{C}\right)  ^{2n}\leq C\left(  P^{\left(
2n\right)  }\right)  \left(  I+\mathcal{N}_{\hbar}\right)  ^{dn}.
\]
Hence if $0\leq\beta\leq2n,$ we may apply the L\"{o}wner-Heinz inequality
(Theorem \ref{the.6.7}) with $r=\beta/2n$ to conclude
\[
\left(  H_{\hbar}+\widetilde{C}\right)  ^{\beta}\leq\left[  C\left(
P^{\left(  n\right)  }\right)  \right]  ^{\beta/2n}\left(  I+\mathcal{N}%
_{\hbar}\right)  ^{\beta d/2}.
\]
As $n\in\mathbb{N}$ was arbitrary, the proof is complete.
\end{proof}

\begin{theorem}
\label{the.6.11}Let $H\left(  \theta,\theta^{\ast}\right)  \in\mathbb{R}%
\left\langle \theta,\theta^{\ast}\right\rangle ,$ $d=\deg_{\theta}H,$ and
$1>\eta>0$ be as in Assumption \ref{ass.1} and suppose $0<\hbar<\eta\leq1.$

\begin{enumerate}
\item If $\beta\geq0$ then
\begin{equation}
e^{-iH_{\hbar}t/\hbar}D\left(  \mathcal{N}^{\beta d/2}\right)  \subseteq
D\left(  \mathcal{N}^{\beta}\right)  . \label{equ.6.7}%
\end{equation}
and there exists $C_{\beta}<\infty$ such that
\begin{equation}
\left\Vert e^{-iH_{\hbar}t/\hbar}\right\Vert _{\beta d/2\rightarrow\beta}\leq
C_{\beta}\hbar^{-\beta}\text{ for all }t\in\mathbb{R}. \label{equ.6.8}%
\end{equation}

\item If $\beta\geq0$ and $\psi\in D\left(  \mathcal{N}^{\left(
\beta+1\right)  d/2}\right)  \subset D\left(  H_{\hbar}^{\beta+1}\right)  ,$
then
\[
e^{-iH_{\hbar}t/\hbar}\psi,~H_{\hbar}e^{-iH_{\hbar}t/\hbar}\psi,~\text{and
}e^{-iH_{\hbar}t/\hbar}H_{\hbar}\psi
\]
are all in $D\left(  \mathcal{N}^{\beta}\right)  $ for all $t\in\mathbb{R}$
and moreover,
\begin{equation}
i\hbar\left(  \left\Vert \cdot\right\Vert _{\beta}\text{-}\frac{d}{dt}\right)
e^{-iH_{\hbar}t/\hbar}\psi=H_{\hbar}e^{-iH_{\hbar}t/\hbar}\psi=e^{-iH_{\hbar
}t/\hbar}H_{\hbar}\psi, \label{equ.6.9}%
\end{equation}
where, as before, $\left\Vert \cdot\right\Vert _{\beta}$-$\frac{d}{dt}$
indicates the derivative is taken in $\beta$ -- norm topology.
\end{enumerate}
\end{theorem}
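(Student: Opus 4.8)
The plan is to derive Theorem \ref{the.6.11} from the spectral-theoretic Proposition \ref{pro.6.1} applied to the self-adjoint operator $H_{\hbar}$ (which is self-adjoint with $H_{\hbar}+C\ge I$ by Assumption \ref{ass.1}), together with the two-sided comparison between $(\mathcal{N}_{\hbar}+I)$ and $(H_{\hbar}+\widetilde{C})$ recorded in Corollary \ref{cor.6.8}. Throughout write $\widetilde{C}=C+1$ so that $H_{\hbar}+\widetilde{C}\ge I$. The key point is that Proposition \ref{pro.6.1} tells us $e^{-iH_{\hbar}t/\hbar}$ preserves each $D((H_{\hbar}+\widetilde{C})^{\beta})$ and is an \emph{isometry} for the norm $\|\cdot\|_{(H_{\hbar}+\widetilde{C})^{\beta}}$, so all the work is just translating between the $H_{\hbar}$-scale of norms and the $\mathcal{N}$-scale (equivalently $\mathcal{N}_{\hbar}$-scale) of norms using Corollary \ref{cor.6.8}.

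For item (1): let $\beta\ge0$ and $\psi\in D(\mathcal{N}^{\beta d/2})$. First I would note that by Eq. (\ref{equ.6.4}) of Corollary \ref{cor.6.8}, $D(\mathcal{N}^{\beta d/2})=D((\mathcal{N}_{\hbar}+I)^{\beta d/2})\subseteq D((H_{\hbar}+\widetilde{C})^{\beta})$, so $\psi$ is in the domain on which Proposition \ref{pro.6.1} operates. Then
\begin{align*}
\|(\mathcal{N}_{\hbar}+I)^{\beta}e^{-iH_{\hbar}t/\hbar}\psi\|
&\le \widetilde{C}_{\beta}^{1/2}\,\|(H_{\hbar}+\widetilde{C})^{\beta}e^{-iH_{\hbar}t/\hbar}\psi\|\\
&= \widetilde{C}_{\beta}^{1/2}\,\|(H_{\hbar}+\widetilde{C})^{\beta}\psi\|
\le \widetilde{C}_{\beta}^{1/2}\widetilde{D}_{\beta}^{1/2}\,\|(\mathcal{N}_{\hbar}+I)^{\beta d/2}\psi\|,
\end{align*}
where the first inequality is Eq. (\ref{equ.6.3}) (in the form $A\le B\Rightarrow\|\sqrt{A}\,\cdot\|\le\|\sqrt{B}\,\cdot\|$, applied with the half-integer powers via L\"owner--Heinz already built into Corollary \ref{cor.6.8}), the middle equality is the isometry statement of Proposition \ref{pro.6.1}, and the last inequality is Eq. (\ref{equ.6.4}). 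This shows $e^{-iH_{\hbar}t/\hbar}\psi\in D(\mathcal{N}^{\beta})$, giving Eq. (\ref{equ.6.7}). To get the quantitative bound Eq. (\ref{equ.6.8}), I rewrite the left-hand side in terms of $\mathcal{N}$ rather than $\mathcal{N}_{\hbar}=\hbar\mathcal{N}$: since $(\mathcal{N}_{\hbar}+I)=\hbar(\mathcal{N}+\hbar^{-1}I)\ge\hbar(\mathcal{N}+I)$ for $\hbar\le1$, one has $\|(\mathcal{N}+I)^{\beta}f\|\le\hbar^{-\beta}\|(\mathcal{N}_{\hbar}+I)^{\beta}f\|$, and conversely $\|(\mathcal{N}_{\hbar}+I)^{\beta d/2}\psi\|\le\|(\mathcal{N}+I)^{\beta d/2}\psi\|$ for $\hbar\le1$. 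Combining, $\|e^{-iH_{\hbar}t/\hbar}\psi\|_{\beta}\le (\widetilde{C}_{\beta}\widetilde{D}_{\beta})^{1/2}\hbar^{-\beta}\|\psi\|_{\beta d/2}$, which is Eq. (\ref{equ.6.8}) with $C_{\beta}:=(\widetilde{C}_{\beta}\widetilde{D}_{\beta})^{1/2}$; note the bound is uniform in $t$ precisely because of the isometry in Proposition \ref{pro.6.1}.

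For item (2): take $\psi\in D(\mathcal{N}^{(\beta+1)d/2})$. By item (1) (applied with $\beta$ replaced by $\beta+1$, or directly) all three vectors $e^{-iH_{\hbar}t/\hbar}\psi$, $e^{-iH_{\hbar}t/\hbar}H_{\hbar}\psi$, and $H_{\hbar}e^{-iH_{\hbar}t/\hbar}\psi$ lie in $D(\mathcal{N}^{\beta})$: for the first this is Eq. (\ref{equ.6.7}) with exponent $\beta+1\ge\beta$; for the second, $H_{\hbar}\psi\in D(\mathcal{N}^{\beta d/2})$ by Corollary \ref{cor.3.40} plus Corollary \ref{cor.6.8} (one checks $H_{\hbar}$ maps $D(\mathcal{N}^{(\beta+1)d/2})$ into $D(\mathcal{N}^{\beta d/2})$ since $H_{\hbar}$ lowers the $\mathcal{N}$-scale by at most $d/2$), and then apply Eq. (\ref{equ.6.7}); the third equals the second because $e^{-iH_{\hbar}t/\hbar}$ commutes with $H_{\hbar}$ on $D(H_{\hbar})$. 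The differentiation formula Eq. (\ref{equ.6.9}) in $L^{2}$-norm is the content of the last line of Proposition \ref{pro.6.1} (with $H$ there equal to $H_{\hbar}/\hbar$, up to the scalar $i\hbar$). To upgrade the convergence of the difference quotients from the $L^{2}$-norm to the $\beta$-norm, I would write, for $h\ne0$,
\[
\frac{e^{-iH_{\hbar}(t+h)/\hbar}\psi-e^{-iH_{\hbar}t/\hbar}\psi}{h}
= e^{-iH_{\hbar}t/\hbar}\cdot\frac{e^{-iH_{\hbar}h/\hbar}\psi-\psi}{h},
\]
observe that $\frac{e^{-iH_{\hbar}h/\hbar}\psi-\psi}{h}\to -\tfrac{i}{\hbar}H_{\hbar}\psi$ in the $(H_{\hbar}+\widetilde{C})^{\beta}$-norm (again Proposition \ref{pro.6.1}, since $\psi\in D((H_{\hbar}+\widetilde{C})^{\beta+1})$ by Eq. (\ref{equ.6.4})), and that $e^{-iH_{\hbar}t/\hbar}$ is bounded (indeed isometric) on that space; finally translate back to the $\beta$-norm via the equivalence of $\|\cdot\|_{(H_{\hbar}+\widetilde{C})^{\beta}}$ and $\|\cdot\|_{\beta}$ on $D(\mathcal{N}^{\beta d/2})$ supplied by Corollary \ref{cor.6.8}. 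This gives Eq. (\ref{equ.6.9}) as a $\beta$-norm derivative.

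The main obstacle I anticipate is bookkeeping rather than substance: one must be careful that Corollary \ref{cor.6.8} is stated as an operator inequality ``$\le$'' in the sense of Notation \ref{not.1.11}(3) between the \emph{fractional} powers, so that it genuinely transfers to a norm estimate $\|(\mathcal{N}_{\hbar}+I)^{\beta}f\|\le\widetilde{C}_{\beta}^{1/2}\|(H_{\hbar}+\widetilde{C})^{\beta}f\|$ with matching domain inclusions, and that all the powers of $\hbar$ are tracked consistently when passing between $\mathcal{N}$ and $\mathcal{N}_{\hbar}=\hbar\mathcal{N}$ (the $\hbar^{-\beta}$ in Eq. (\ref{equ.6.8}) comes entirely from this rescaling, since the $H_{\hbar}$-flow itself is isometric on its own scale). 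A minor secondary point is verifying that $H_{\hbar}$ decreases the $\mathcal{N}$-order by at most $d/2$ uniformly in $\hbar\in(0,\eta)$, which follows from Lemma \ref{lem.6.6} (or Corollary \ref{cor.3.40} after the $\hbar$-rescaling), so that the domain statements in item (2) close up.
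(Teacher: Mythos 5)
Your proposal is correct and follows essentially the same route as the paper: Proposition \ref{pro.6.1} gives the isometry of $e^{-iH_{\hbar}t/\hbar}$ on the $(H_{\hbar}+\widetilde{C})^{\beta}$-scale, Corollary \ref{cor.6.8} transfers between that scale and the $\mathcal{N}$-scale, and the $\hbar^{-\beta}$ comes from the rescaling $\mathcal{N}_{\hbar}=\hbar\mathcal{N}$, exactly as in the paper. The only bookkeeping slip is that the norm inequality $\Vert(\mathcal{N}_{\hbar}+I)^{\beta}f\Vert\leq c\Vert(H_{\hbar}+\widetilde{C})^{\beta}f\Vert$ requires Corollary \ref{cor.6.8} with $\beta$ replaced by $2\beta$ (so the constant is $\sqrt{\widetilde{C}_{2\beta}}$, not $\widetilde{C}_{\beta}^{1/2}$), a point you already flag and which the paper handles the same way.
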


\begin{proof}
If $\beta\geq0,$ it follows from Corollary \ref{cor.6.8} (with $\beta$
replaced by $2\beta)$ that
\begin{equation}
D\left(  \mathcal{N}^{\beta d/2}\right)  =D\left(  \mathcal{N}_{\hbar}^{\beta
d/2}\right)  \subset D\left(  \left(  H_{\hbar}+\widetilde{C}\right)  ^{\beta
}\right)  \subset D\left(  \mathcal{N}_{\hbar}^{\beta}\right)  =D\left(
\mathcal{N}^{\beta}\right)  \label{equ.6.10}%
\end{equation}
and
\[
\left\Vert \psi\right\Vert _{\left(  \mathcal{N}_{\hbar}+I\right)  ^{\beta}%
}\leq\sqrt{\widetilde{C}_{2\beta}}\left\Vert \psi\right\Vert _{\left(
H_{\hbar}+\widetilde{C}\right)  ^{\beta}}~\forall~\psi\in D\left(  \left(
H_{\hbar}+\widetilde{C}\right)  ^{\beta}\right)  .
\]
Moreover if $0<\hbar<\eta\leq1,$ a simple calculus inequality shows
\[
\hbar^{\beta}\left\Vert \psi\right\Vert _{\beta}=\hbar^{\beta}\left\Vert
\psi\right\Vert _{\left(  \mathcal{N}+I\right)  ^{\beta}}\leq\left\Vert
\psi\right\Vert _{\left(  \mathcal{N}_{\hbar}+I\right)  ^{\beta}}%
\]
and hence
\begin{equation}
\left\Vert \psi\right\Vert _{\beta}\leq\hbar^{-\beta}\sqrt{\widetilde
{C}_{2\beta}}\left\Vert \psi\right\Vert _{\left(  H_{\hbar}+\widetilde
{C}\right)  ^{\beta}}~\forall~\psi\in D\left(  \left(  H_{\hbar}+\widetilde
{C}\right)  ^{\beta}\right)  . \label{equ.6.11}%
\end{equation}

From Proposition \ref{pro.6.1} we know for all $t\in\mathbb{R}$ that
\begin{align*}
e^{-iH_{\hbar}t/\hbar}D\left(  \left(  H_{\hbar}+\widetilde{C}\right)
^{\beta}\right)   &  =D\left(  \left(  H_{\hbar}+\widetilde{C}\right)
^{\beta}\right)  \text{ and }\\
\left\Vert e^{-iH_{\hbar}t/\hbar}\psi\right\Vert _{\left(  H_{\hbar
}+\widetilde{C}\right)  ^{\beta}}  &  =\left\Vert \psi\right\Vert _{\left(
H_{\hbar}+\widetilde{C}\right)  ^{\beta}}.
\end{align*}
Combining these statements with Eqs. (\ref{equ.6.10}) and (\ref{equ.6.11})
respectively shows,%
\[
e^{-iH_{\hbar}t/\hbar}D\left(  \mathcal{N}^{\beta d/2}\right)  \subset
e^{-iH_{\hbar}t/\hbar}D\left(  \left(  H_{\hbar}+\widetilde{C}\right)
^{\beta}\right)  =D\left(  \left(  H_{\hbar}+\widetilde{C}\right)  ^{\beta
}\right)  \subset D\left(  \mathcal{N}^{\beta}\right)  .
\]
Moreover, if $\varphi\in D\left(  \mathcal{N}^{\beta d/2}\right)  \subset
D\left(  \left(  H_{\hbar}+\widetilde{C}\right)  ^{\beta}\right)  ,$ then%
\[
\left\Vert e^{-iH_{\hbar}t/\hbar}\varphi\right\Vert _{\beta}\leq\hbar^{-\beta
}\sqrt{\widetilde{C}_{2\beta}}\left\Vert e^{-iH_{\hbar}t/\hbar}\varphi
\right\Vert _{\left(  H_{\hbar}+\widetilde{C}\right)  ^{\beta}}=\hbar^{-\beta
}\sqrt{\widetilde{C}_{2\beta}}\left\Vert \varphi\right\Vert _{\left(
H_{\hbar}+\widetilde{C}\right)  ^{\beta}}.
\]
However, from Eq. (\ref{equ.6.4}) (again with $\beta\rightarrow2\beta)$ we
also know
\[
\left\Vert \varphi\right\Vert _{\left(  H_{\hbar}+\widetilde{C}\right)
^{\beta}}\leq\sqrt{\widetilde{D}_{2\beta}}\cdot\left\Vert \varphi\right\Vert
_{\left(  \mathcal{N}_{\hbar}+I\right)  ^{\beta d/2}}\leq\sqrt{\widetilde
{D}_{2\beta}}\cdot\left\Vert \varphi\right\Vert _{\left(  \mathcal{N}%
+I\right)  ^{\beta d/2}}.
\]
Combining the last two displayed equations proves the estimate in Eq.
(\ref{equ.6.8}) with $C_{\beta}:=\sqrt{\widetilde{C}_{2\beta}\cdot
\widetilde{D}_{2\beta}}.$

If we now further assume that $\psi\in D\left(  \mathcal{N}^{\left(
\beta+1\right)  d/2}\right)  $, then $\psi\in D\left(  H_{\hbar}^{\beta
+1}\right)  $ by Eq. (\ref{equ.6.10}) then, by Proposition \ref{pro.6.1}, it
follows that
\[
H_{\hbar}e^{-iH_{\hbar}t/\hbar}\psi\,=e^{-iH_{\hbar}t/\hbar}H_{\hbar}\psi\in
D\left(  \left(  H_{\hbar}+\widetilde{C}\right)  ^{\beta}\right)  \subset
D\left(  \mathcal{N}^{\beta}\right)
\]
and
\begin{equation}
i\hbar\left(  \left\Vert \cdot\right\Vert _{H_{\hbar}^{\beta}}\text{-}\frac
{d}{dt}\right)  \psi\left(  t\right)  =H_{\hbar}\psi\left(  t\right)
=e^{-iH_{\hbar}t/\hbar}H_{\hbar}\psi_{0}. \label{equ.6.12}%
\end{equation}
Owing to Eq. (\ref{equ.6.11}) the $\beta$ -- norm is weaker than $\left\Vert
\cdot\right\Vert _{H_{\hbar}^{\beta}}$ -- norm and hence Eq. (\ref{equ.6.12})
directly implies the weaker Eq. (\ref{equ.6.9}).
\end{proof}

\section{A Key One Parameter Family of Unitary Operators\label{sec.7}}

In this section (except for Lemma \ref{lem.7.3}) we will always suppose that
$H\left(  \theta,\theta^{\ast}\right)  $ and $1\geq\eta>0$ are as in
Assumption \ref{ass.1}, $\alpha_{0}\in\mathbb{C},$ and $\alpha\left(
t\right)  $ denotes the solution to Hamilton's classical equations
(\ref{equ.1.1}) of motion with $\alpha\left(  0\right)  =\alpha_{0}.$ From
Corollary \ref{cor.3.10}, $U_{\hbar}\left(  \alpha_{0}\right)  \psi$ is a
state on $L^{2}\left(  m\right)  $ which has position and momentum
concentrated at $\xi_{0}+i\pi_{0}=\sqrt{2}\alpha_{0}$ in the limit as
$\hbar\downarrow0.$ Thus if quantum mechanics is to limit to classical
mechanics as $\hbar\downarrow0,$ one should expect that the quantum evolution,
$\psi_{\hbar}\left(  t\right)  :=e^{-iH_{\hbar}t/\hbar}U_{\hbar}\left(
\alpha_{0}\right)  \psi,$ of the state, $U_{\hbar}\left(  \alpha_{0}\right)
\psi,$ should be concentrated near $\alpha\left(  t\right)  $ in phase space
as $\hbar\downarrow0.$ One possible candidate for these approximate states
would be $U_{\hbar}\left(  \alpha\left(  t\right)  \right)  \psi$ or more
generally any state of the form, $U_{\hbar}\left(  \alpha\left(  t\right)
\right)  W_{0}\left(  t\right)  \psi,$ where $\left\{  W_{0}\left(  t\right)
:t\in\mathbb{R}\right\}  $ are unitary operators on $L^{2}\left(  m\right)  $
which preserve $\mathcal{S}.$ All states of this form concentrate their
position and momentum expectations near $\sqrt{2}\alpha\left(  t\right)  ,$
see Remark \ref{rem.3.11}. These remarks then motivate us to consider the one
parameter family of unitary operators $V_{\hbar}\left(  t\right)  $ defined
by,
\begin{equation}
V_{\hbar}\left(  t\right)  :=U_{\hbar}\left(  -\alpha\left(  t\right)
\right)  e^{-iH_{\hbar}t/\hbar}U_{\hbar}\left(  \alpha_{0}\right)  =U_{\hbar
}\left(  \alpha\left(  t\right)  \right)  ^{\ast}e^{-iH_{\hbar}t/\hbar
}U_{\hbar}\left(  \alpha_{0}\right)  . \label{equ.7.1}%
\end{equation}

Because of Propositions \ref{pro.2.6} and \ref{pro.6.5}, we know $V_{\hbar
}\left(  t\right)  \mathcal{S=S}$ for all $0<\hbar<\eta$ and in particular,
$V_{\hbar}\left(  t\right)  \mathcal{S}=\mathcal{S}\subset D\left(  P\left(
a,a^{\dag}\right)  \right)  $ for any $P\left(  \theta,\theta^{\ast}\right)
\in\mathbb{C}\left\langle \theta,\theta^{\ast}\right\rangle .$ The main point
of this section is to study the basic properties of this family of unitary
operators with an eye towards showing that $\lim_{\hbar\downarrow0}V_{\hbar
}\left(  t\right)  $ exists (modulo a phase factor). Our first task is to
differentiate $V_{\hbar}\left(  t\right)  $ for which we will need the
following differentiation lemma.

\begin{lemma}
[Product Rule]\label{lem.7.1} Let $P\left(  \theta,\theta^{\ast}\right)
\in\mathbb{C}\left\langle \theta,\theta^{\ast}\right\rangle ,$ $k:=\deg
_{\theta}P\left(  \theta,\theta^{\ast}\right)  \in\mathbb{N}_{0},$ and
$P:=P\left(  a,a^{\dag}\right)  .$ Suppose that $U\left(  t\right)  $ and
$T\left(  t\right)  $ are unitary operators on $L^{2}\left(  m\right)  $ which
preserve $\mathcal{S}. $ We further assume;

\begin{enumerate}
\item for each $\varphi\in\mathcal{S},$ $t\rightarrow U\left(  t\right)
\varphi$ and $t\rightarrow T\left(  t\right)  \varphi$ are $\left\Vert
\cdot\right\Vert _{\beta}$ -- differentiable for all $\beta\geq0.$ We denote
the derivative by $\dot{U}\left(  t\right)  \varphi$ and $\dot{T}\left(
t\right)  \varphi$ respectively. [Notice that $\dot{U}\left(  t\right)
\varphi$ and $\dot{T}\left(  t\right)  \varphi$ are all in $\cap_{\beta\geq
0}D\left(  \mathcal{N}^{\beta}\right)  =\mathcal{S},$ see Eq. (\ref{equ.3.34})
for the last equality, i.e. $\dot{U}\left(  t\right)  $ and $\dot{T}\left(
t\right)  $ preserves $\mathcal{S}.$]

\item For each $\beta\geq0$ there exists $\alpha\geq0$ and $\varepsilon>0$
such that
\[
K:=\sup_{\left\vert \Delta\right\vert \leq\varepsilon}\left\Vert U\left(
t+\Delta\right)  \right\Vert _{\alpha\rightarrow\beta}<\infty.
\]
Then for any $\beta\geq0,$
\begin{equation}
\left\Vert \cdot\right\Vert _{\beta}\text{-}\frac{d}{dt}\left[  U\left(
t\right)  PT\left(  t\right)  \varphi\right]  =\dot{U}\left(  t\right)
PT\left(  t\right)  \varphi+U\left(  t\right)  P\dot{T}\left(  t\right)
\varphi. \label{equ.7.2}%
\end{equation}

\end{enumerate}
\end{lemma}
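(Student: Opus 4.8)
The plan is to differentiate $U\left(  t\right)  PT\left(  t\right)  \varphi$ directly from the definition of the $\left\Vert \cdot\right\Vert _{\beta}$-derivative, via the telescoping identity
\begin{align*}
\frac{U\left(  t+\Delta\right)  PT\left(  t+\Delta\right)  \varphi-U\left(
t\right)  PT\left(  t\right)  \varphi}{\Delta}  &  =U\left(  t+\Delta\right)
P\left[  \frac{T\left(  t+\Delta\right)  \varphi-T\left(  t\right)  \varphi
}{\Delta}\right] \\
&  \quad+\left[  \frac{U\left(  t+\Delta\right)  -U\left(  t\right)  }{\Delta
}\right]  PT\left(  t\right)  \varphi .
\end{align*}
Thus the proof reduces to showing that, as $\Delta\rightarrow0$, the first term on the right tends in $\left\Vert \cdot\right\Vert _{\beta}$ to $U\left(  t\right)  P\dot{T}\left(  t\right)  \varphi$ and the second to $\dot{U}\left(  t\right)  PT\left(  t\right)  \varphi.$ First I would record that every vector appearing is in $\mathcal{S}$: since $T\left(  t\right)  ,$ $T\left(  t+\Delta\right)  $ and $P=P\left(  a,a^{\dag}\right)  $ all preserve $\mathcal{S}$ (Definition \ref{def.2.16}), the vectors $PT\left(  t\right)  \varphi$ and $PT\left(  t+\Delta\right)  \varphi$ lie in $\mathcal{S}$; and hypothesis 1 together with Eq. (\ref{equ.3.34}) forces $\dot{T}\left(  t\right)  \varphi\in\bigcap_{\beta\geq0}D\left(  \mathcal{N}^{\beta}\right)  =\mathcal{S},$ hence $P\dot{T}\left(  t\right)  \varphi\in\mathcal{S}$ as well. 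This is precisely where the bracketed comment in hypothesis 1 is used.

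The second term is essentially immediate: $PT\left(  t\right)  \varphi$ is a \emph{fixed} element of $\mathcal{S},$ so hypothesis 1 applied to $s\mapsto U\left(  s\right)  \left(  PT\left(  t\right)  \varphi\right)  $ at $s=t$ says exactly that $\Delta^{-1}\left[  U\left(  t+\Delta\right)  -U\left(  t\right)  \right]  PT\left(  t\right)  \varphi\rightarrow\dot{U}\left(  t\right)  PT\left(  t\right)  \varphi$ in $\left\Vert \cdot\right\Vert _{\beta}.$

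The first term is where the two hypotheses must cooperate, and I expect this to be the only mildly delicate point. Set $\psi_{\Delta}:=\Delta^{-1}\left(  T\left(  t+\Delta\right)  \varphi-T\left(  t\right)  \varphi\right)  \in\mathcal{S};$ by hypothesis 1, $\psi_{\Delta}\rightarrow\dot{T}\left(  t\right)  \varphi$ in $\left\Vert \cdot\right\Vert _{\gamma}$ for \emph{every} $\gamma\geq0.$ Fix $\beta,$ and let $\alpha\geq0,$ $\varepsilon>0,$ and $K:=\sup_{\left\vert \Delta\right\vert \leq\varepsilon}\left\Vert U\left(  t+\Delta\right)  \right\Vert _{\alpha\rightarrow\beta}<\infty$ be as in hypothesis 2. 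Since $\overline{P\left(  a,a^{\dag}\right)  }=P\left(  \bar{a},a^{\ast}\right)  $ is $\left(  \alpha+k/2\right)  \rightarrow\alpha$ bounded by Corollary \ref{cor.3.40}, applying this bound to $\psi_{\Delta}-\dot{T}\left(  t\right)  \varphi$ (taking $\gamma=\alpha+k/2$) gives $P\psi_{\Delta}\rightarrow P\dot{T}\left(  t\right)  \varphi$ in $\left\Vert \cdot\right\Vert _{\alpha}.$ Then for $\left\vert \Delta\right\vert \leq\varepsilon$ I would estimate
\begin{align*}
\left\Vert U\left(  t+\Delta\right)  P\psi_{\Delta}-U\left(  t\right)
P\dot{T}\left(  t\right)  \varphi\right\Vert _{\beta}  &  \leq\left\Vert
U\left(  t+\Delta\right)  \right\Vert _{\alpha\rightarrow\beta}\left\Vert
P\psi_{\Delta}-P\dot{T}\left(  t\right)  \varphi\right\Vert _{\alpha}\\
&  \quad+\left\Vert \left[  U\left(  t+\Delta\right)  -U\left(  t\right)
\right]  P\dot{T}\left(  t\right)  \varphi\right\Vert _{\beta}\\
&  \leq K\left\Vert P\psi_{\Delta}-P\dot{T}\left(  t\right)  \varphi
\right\Vert _{\alpha}+\left\Vert \left[  U\left(  t+\Delta\right)  -U\left(
t\right)  \right]  P\dot{T}\left(  t\right)  \varphi\right\Vert _{\beta} .
\end{align*}
The first summand tends to $0$ by the convergence just noted; the second tends to $0$ because $P\dot{T}\left(  t\right)  \varphi\in\mathcal{S}$ and the $\left\Vert \cdot\right\Vert _{\beta}$-differentiability of $s\mapsto U\left(  s\right)  \left(  P\dot{T}\left(  t\right)  \varphi\right)  $ at $s=t$ (hypothesis 1) in particular entails $\left\Vert \cdot\right\Vert _{\beta}$-continuity there. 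Combining this with the easy second term and the telescoping identity yields Eq. (\ref{equ.7.2}). The structural heart of the argument is therefore the familiar ``$3\varepsilon$'' move, in the spirit of Remark \ref{rem.5.10}: hypothesis 2 provides a \emph{uniform} operator bound $\left\Vert U\left(  t+\Delta\right)  \right\Vert _{\alpha\rightarrow\beta}\leq K,$ against which the variation of the vector $P\psi_{\Delta},$ which converges only in the stronger $\alpha$-norm, can be pushed to the limit.
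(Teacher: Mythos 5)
Your proof is correct and follows essentially the same route as the paper's: the identical telescoping decomposition, the same three error terms (the paper groups them into one display, you split the first term into two pieces), the same use of the uniform bound $K$ from hypothesis 2 together with the $\left(\alpha+k/2\right)\rightarrow\alpha$ boundedness of $P$ (the paper cites Eq. (\ref{equ.3.41}) of Proposition \ref{pro.3.39} rather than Corollary \ref{cor.3.40}, but it is the same fact), and the same appeal to hypothesis 1 for the two terms involving a fixed Schwartz vector. No gaps.
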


\begin{proof}
Let $\varphi\in\mathcal{S}$ and then define $\varphi\left(  t\right)
=U\left(  t\right)  PT\left(  t\right)  \varphi.$ To shorten notation let
$\Delta f$ denote $f\left(  t+\Delta\right)  -f\left(  t\right)  .$ We then
have,
\[
\frac{\Delta\varphi}{\Delta}=\left[  U\left(  t+\Delta\right)  P\frac{\Delta
T}{\Delta}+\frac{\Delta U}{\Delta}PT\left(  t\right)  \right]  \varphi
\]
and so
\begin{align}
\frac{\Delta\varphi}{\Delta}  &  -U\left(  t\right)  P\dot{T}\left(  t\right)
\varphi-\dot{U}\left(  t\right)  PT\left(  t\right)  \varphi\nonumber\\
&  =U\left(  t+\Delta\right)  P\left[  \frac{\Delta T}{\Delta}-\dot{T}\left(
t\right)  \right]  \varphi+\left[  \Delta U\right]  P\dot{T}\left(  t\right)
\varphi+\left[  \frac{\Delta U}{\Delta}-\dot{U}\left(  t\right)  \right]
PT\left(  t\right)  \varphi. \label{equ.7.3}%
\end{align}
Using the assumptions of the theorem it follows that for each $\beta<\infty,$
since $P\dot{T}\left(  t\right)  \varphi\in\mathcal{S},$ we may conclude that
\[
\left\Vert \left[  \Delta U\right]  P\dot{T}\left(  t\right)  \varphi
\right\Vert _{\beta}\rightarrow0\text{ as }\Delta\rightarrow0,\text{ and}%
\]
\[
\left\Vert \left[  \frac{\Delta U}{\Delta}-\dot{U}\left(  t\right)  \right]
PT\left(  t\right)  \varphi\right\Vert _{\beta}\rightarrow0\text{ as }%
\Delta\rightarrow0.
\]
Furthermore, using the assumptions along with Eq. (\ref{equ.3.41}) in the
Proposition \ref{pro.3.39}, it follows that when $\triangle\rightarrow0,$
\begin{multline*}
\left\Vert U\left(  t+\Delta\right)  P\left[  \frac{\Delta T}{\Delta}-\dot
{T}\left(  t\right)  \right]  \varphi\right\Vert _{\beta}\\
\leq\left\Vert U\left(  t+\Delta\right)  \right\Vert _{\alpha\rightarrow\beta
}\left\Vert P\right\Vert _{\alpha+\frac{k}{2}\rightarrow\alpha}\left\Vert
\left[  \frac{\Delta T}{\Delta}-\dot{T}\left(  t\right)  \right]
\varphi\right\Vert _{\alpha+\frac{k}{2}}\rightarrow0.
\end{multline*}
which combined with Eq. (\ref{equ.7.3}) shows $\varphi\left(  t\right)
=U\left(  t\right)  PT\left(  t\right)  \varphi$ is $\left\Vert \cdot
\right\Vert _{\beta}$ -- differentiable and the derivative is given as in Eq.
(\ref{equ.7.2}).
\end{proof}

\begin{lemma}
\label{lem.7.3}If $\alpha:\mathbb{R\rightarrow C}$ is \textbf{any} $C^{1}$ --
function and $V_{\hbar}\left(  t\right)  $ is defined as in Eq. (\ref{equ.7.1}%
), then for all $\psi\in\mathcal{S},$ $t\rightarrow V_{\hbar}\left(  t\right)
\psi$ and $t\rightarrow V_{\hbar}^{\ast}\left(  t\right)  \psi$ are
$\left\Vert \cdot\right\Vert _{\beta}$-norm differentiable for all
$\beta<\infty$ and moreover,
\begin{align}
\frac{d}{dt}V_{\hbar}\left(  t\right)  \psi &  =\Gamma_{\hbar}\left(
t\right)  V_{\hbar}\left(  t\right)  \psi\text{ and }\label{equ.7.4}\\
\frac{d}{dt}V_{\hbar}^{\ast}\left(  t\right)  \psi &  =-V_{\hbar}^{\ast
}\left(  t\right)  \Gamma_{\hbar}\left(  t\right)  \psi\label{equ.7.5}%
\end{align}
where
\begin{equation}
\Gamma_{\hbar}\left(  t\right)  :=\frac{1}{\hbar}\left(  \overline{\dot
{\alpha}\left(  t\right)  }a_{\hbar}-\dot{\alpha}\left(  t\right)  a_{\hbar
}^{\dag}+i\operatorname{Im}\left(  \alpha\left(  t\right)  \overline
{\dot{\alpha}\left(  t\right)  }\right)  -iH\left(  a_{\hbar}+\alpha\left(
t\right)  ,a_{\hbar}^{\dag}+\bar{\alpha}\left(  t\right)  \right)  \right)  .
\label{equ.7.6}%
\end{equation}

\end{lemma}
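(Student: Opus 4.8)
The plan is to differentiate the three-factor product $V_{\hbar}(t)=U_{\hbar}(-\alpha(t))\,e^{-iH_{\hbar}t/\hbar}\,U_{\hbar}(\alpha_{0})$ factor by factor via the product rule of Lemma~\ref{lem.7.1}, and then to use Proposition~\ref{pro.2.6} to absorb the factor of $H_{\hbar}$ that results into a translated polynomial. Concretely, write $V_{\hbar}(t)\psi=U_{\hbar}(-\alpha(t))\cdot I\cdot T(t)\psi$ with $T(t):=e^{-iH_{\hbar}t/\hbar}U_{\hbar}(\alpha_{0})$, and apply Lemma~\ref{lem.7.1} with the trivial polynomial $P(\theta,\theta^{\ast})=1$. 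Its hypotheses hold: $U_{\hbar}(-\alpha(t))$ and $T(t)$ are unitaries preserving $\mathcal{S}$ (Propositions~\ref{pro.2.6} and~\ref{pro.6.5}); for every $\beta\geq0$ and $\varphi\in\mathcal{S}$ the map $t\mapsto U_{\hbar}(-\alpha(t))\varphi$ is $\|\cdot\|_{\beta}$-differentiable by Corollary~\ref{cor.5.17} applied to the $C^{1}$ function $t\mapsto-\alpha(t)/\sqrt{\hbar}$ (recall $U_{\hbar}(\beta)=U(\beta/\sqrt{\hbar})$), and a one-line substitution using $\sqrt{\hbar}\,a=a_{\hbar}$ identifies the generator as
\[
\frac{d}{dt}\bigl[U_{\hbar}(-\alpha(t))\varphi\bigr]=\frac{1}{\hbar}\Bigl(\overline{\dot{\alpha}(t)}\,a_{\hbar}-\dot{\alpha}(t)\,a_{\hbar}^{\dag}+i\operatorname{Im}\bigl(\alpha(t)\overline{\dot{\alpha}(t)}\bigr)\Bigr)U_{\hbar}(-\alpha(t))\varphi ;
\]
the map $t\mapsto T(t)\varphi=e^{-iH_{\hbar}t/\hbar}\bigl(U_{\hbar}(\alpha_{0})\varphi\bigr)$ is $\|\cdot\|_{\beta}$-differentiable with $\dot{T}(t)\varphi=-\tfrac{i}{\hbar}H_{\hbar}e^{-iH_{\hbar}t/\hbar}U_{\hbar}(\alpha_{0})\varphi\in\mathcal{S}$ by Theorem~\ref{the.6.11}, since $U_{\hbar}(\alpha_{0})\varphi\in\mathcal{S}\subset D(\mathcal{N}^{(\beta+1)d/2})$; and the uniform bound $\sup_{|\Delta|\leq\varepsilon}\|U_{\hbar}(-\alpha(t+\Delta))\|_{\beta\rightarrow\beta}<\infty$ required in Lemma~\ref{lem.7.1}(2) follows from Corollary~\ref{cor.5.15} together with local boundedness of $\alpha$.

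Granting this, Lemma~\ref{lem.7.1} gives that $t\mapsto V_{\hbar}(t)\psi$ is $\|\cdot\|_{\beta}$-differentiable with
\[
\frac{d}{dt}V_{\hbar}(t)\psi=\frac{1}{\hbar}\Bigl(\overline{\dot{\alpha}}\,a_{\hbar}-\dot{\alpha}\,a_{\hbar}^{\dag}+i\operatorname{Im}(\alpha\overline{\dot{\alpha}})\Bigr)V_{\hbar}(t)\psi-\frac{i}{\hbar}\,U_{\hbar}(-\alpha(t))H_{\hbar}\,e^{-iH_{\hbar}t/\hbar}U_{\hbar}(\alpha_{0})\psi .
\]
To finish, I would rewrite the last term. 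On $\mathcal{S}$, conjugation by the unitary $U_{\hbar}(-\alpha(t))$ is an algebra homomorphism, and by Proposition~\ref{pro.2.6} it sends $a_{\hbar}\mapsto a_{\hbar}+\alpha(t)$ and $a_{\hbar}^{\dag}\mapsto a_{\hbar}^{\dag}+\overline{\alpha(t)}$; hence $U_{\hbar}(-\alpha(t))\,H(a_{\hbar},a_{\hbar}^{\dag})\,U_{\hbar}(-\alpha(t))^{\ast}=H\bigl(a_{\hbar}+\alpha(t),a_{\hbar}^{\dag}+\overline{\alpha(t)}\bigr)$ on $\mathcal{S}$. Since $e^{-iH_{\hbar}t/\hbar}U_{\hbar}(\alpha_{0})\psi\in\mathcal{S}$ and $H_{\hbar}$ agrees with $H(a_{\hbar},a_{\hbar}^{\dag})$ there, the last term equals $-\tfrac{i}{\hbar}H\bigl(a_{\hbar}+\alpha(t),a_{\hbar}^{\dag}+\overline{\alpha(t)}\bigr)V_{\hbar}(t)\psi$, and summing the two contributions produces exactly $\Gamma_{\hbar}(t)V_{\hbar}(t)\psi$, which is Eq.~(\ref{equ.7.4}).

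For Eq.~(\ref{equ.7.5}) I would transpose Eq.~(\ref{equ.7.4}). Because $H$ is symmetric, $H\bigl(a_{\hbar}+\alpha(t),a_{\hbar}^{\dag}+\overline{\alpha(t)}\bigr)$ is symmetric on $\mathcal{S}$, so $\Gamma_{\hbar}(t)^{\dag}=-\Gamma_{\hbar}(t)$ on $\mathcal{S}$; differentiating $\langle V_{\hbar}^{\ast}(t)\psi,\phi\rangle=\langle\psi,V_{\hbar}(t)\phi\rangle$ for $\psi,\phi\in\mathcal{S}$ then yields the weak identity $V_{\hbar}^{\ast}(t)\psi=\psi-\int_{0}^{t}V_{\hbar}^{\ast}(\tau)\Gamma_{\hbar}(\tau)\psi\,d\tau$. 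Since $\tau\mapsto\Gamma_{\hbar}(\tau)\psi$ is $\|\cdot\|_{\beta}$-continuous (the coefficients are continuous and $\psi\in\mathcal{S}$) and, using the decomposition $V_{\hbar}^{\ast}(t)=U_{\hbar}(\alpha_{0})^{\ast}e^{iH_{\hbar}t/\hbar}U_{\hbar}(\alpha(t))$ together with the uniform bounds of Corollary~\ref{cor.5.15} and Theorem~\ref{the.6.11}, $\tau\mapsto V_{\hbar}^{\ast}(\tau)\chi$ is $\|\cdot\|_{\beta}$-continuous for $\chi\in\mathcal{S}$, the integrand is $\|\cdot\|_{\beta}$-continuous; the fundamental theorem of calculus then upgrades the weak identity to the asserted strong $\|\cdot\|_{\beta}$-derivative $-V_{\hbar}^{\ast}(t)\Gamma_{\hbar}(t)\psi$.

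The purely algebraic part (the homomorphism property and Proposition~\ref{pro.2.6}) is routine; the part that needs care, and which I expect to be the only real obstacle, is the functional-analytic bookkeeping — ensuring that every differentiation genuinely takes place in the $\|\cdot\|_{\beta}$-topology rather than merely in $L^{2}(m)$ or weakly. This is exactly what forces the appeal to the uniform operator-norm bounds of Corollary~\ref{cor.5.15} and Theorem~\ref{the.6.11}, and, for the adjoint, the weak-derivative-plus-fundamental-theorem-of-calculus argument.
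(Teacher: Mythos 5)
Your derivation of Eq. (\ref{equ.7.4}) is essentially the paper's own: the same three-factor decomposition fed into Lemma \ref{lem.7.1} with $P=1$ (the paper absorbs $U_{\hbar}\left(  \alpha_{0}\right)  $ into the vector $\varphi=U_{\hbar}\left(  \alpha_{0}\right)  \psi$ rather than into $T\left(  t\right)  $, an immaterial difference), the same identification of the generator of $U_{\hbar}\left(  -\alpha\left(  t\right)  \right)  $ via Proposition \ref{pro.2.10} and Corollaries \ref{cor.5.15} and \ref{cor.5.17}, and the same use of Proposition \ref{pro.2.6} to convert $U_{\hbar}\left(  -\alpha\left(  t\right)  \right)  H_{\hbar}U_{\hbar}\left(  \alpha\left(  t\right)  \right)  $ into $H\left(  a_{\hbar}+\alpha\left(  t\right)  ,a_{\hbar}^{\dag}+\bar{\alpha}\left(  t\right)  \right)  .$ For Eq. (\ref{equ.7.5}) you genuinely diverge: the paper simply applies Lemma \ref{lem.7.1} a second time to the factorization $V_{\hbar}^{\ast}\left(  t\right)  =\left[  U_{\hbar}^{\ast}\left(  \alpha_{0}\right)  e^{iH_{\hbar}t/\hbar}\right]  U_{\hbar}\left(  \alpha\left(  t\right)  \right)  ,$ whereas you transpose Eq. (\ref{equ.7.4}) using the skew-symmetry $\Gamma_{\hbar}\left(  t\right)  ^{\dag}=-\Gamma_{\hbar}\left(  t\right)  $ on $\mathcal{S},$ obtain the integral identity $V_{\hbar}^{\ast}\left(  t\right)  \psi=\psi-\int_{0}^{t}V_{\hbar}^{\ast}\left(  \tau\right)  \Gamma_{\hbar}\left(  \tau\right)  \psi\,d\tau$ by density of $\mathcal{S},$ and upgrade it to a strong $\left\Vert \cdot\right\Vert _{\beta}$-derivative by the fundamental theorem of calculus once the integrand is shown $\left\Vert \cdot\right\Vert _{\beta}$-continuous. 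Both routes are sound. The paper's is shorter because Lemma \ref{lem.7.1} was engineered to be used twice; yours trades the second verification of the product-rule hypotheses for the continuity check on $\tau\rightarrow V_{\hbar}^{\ast}\left(  \tau\right)  \Gamma_{\hbar}\left(  \tau\right)  \psi,$ which you correctly reduce to Corollaries \ref{cor.5.15}, \ref{cor.5.17} and Theorem \ref{the.6.11} plus the operator bounds of Corollary \ref{cor.3.40}. The mild advantage of your version is that one need not recompute the generator of $t\rightarrow U_{\hbar}\left(  \alpha\left(  t\right)  \right)  $ (the sign-flipped $Q$ of Eq. (\ref{equ.7.8})); the cost is the weak-to-strong bootstrap, which you have set up correctly.
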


\begin{proof}
Let $U\left(  t\right)  :=U_{\hbar}\left(  -\alpha\left(  t\right)  \right)
=U\left(  -\alpha\left(  t\right)  /\sqrt{\hbar}\right)  ,$ $T\left(
t\right)  :=e^{-iH_{\hbar}t/\hbar}$ and $\varphi:=U_{\hbar}\left(  \alpha
_{0}\right)  \psi.$ From Propositions \ref{pro.2.6} and \ref{pro.2.10} we know
$U\left(  t\right)  \mathcal{S=\mathcal{S}}$ and
\begin{equation}
i\frac{d}{dt}U\left(  t\right)  f=Q\left(  t\right)  U\left(  t\right)
f\text{ for }f\in\mathcal{S}.\label{equ.7.7}%
\end{equation}
where
\begin{equation}
Q\left(  t\right)  =i\left(  -\frac{\dot{\alpha}\left(  t\right)  }%
{\sqrt{\hbar}}a^{\dag}+\frac{\overline{\dot{\alpha}\left(  t\right)  }}%
{\sqrt{\hbar}}a\right)  -\frac{1}{\hbar}\operatorname{Im}\left(  \alpha\left(
t\right)  \overline{\dot{\alpha}\left(  t\right)  }\right)  .\label{equ.7.8}%
\end{equation}
As $Q\left(  t\right)  $ is linear in $a$ and $a^{\dag},$ we may apply
Corollaries \ref{cor.5.15} and \ref{cor.5.17} in order to conclude that
$U\left(  t\right)  $ satisfies the hypothesis in Lemma \ref{lem.7.1}.
Moreover, by Proposition \ref{pro.6.5} and the item 2 in Theorem
\ref{the.6.11}, we also know that $T\left(  t\right)  \mathcal{S}=\mathcal{S}$
and it satisfies the hypothesis of Lemma \ref{lem.7.1}. Therefore by taking
$P\left(  \theta,\theta^{\ast}\right)  =1$ (so $P=I)$ in Lemma \ref{lem.7.1},
we learn
\begin{align*}
\frac{d}{dt}V_{\hbar}\left(  t\right)  \psi= &  \dot{U}\left(  t\right)
T\left(  t\right)  \varphi+U\left(  t\right)  \dot{T}\left(  t\right)
\varphi\\
= &  \left[  \left(  -\frac{\dot{\alpha}\left(  t\right)  }{\sqrt{\hbar}%
}a^{\dag}+\frac{\overline{\dot{\alpha}\left(  t\right)  }}{\sqrt{\hbar}%
}a\right)  +\frac{i}{\hbar}\operatorname{Im}\left(  \alpha\left(  t\right)
\overline{\dot{\alpha}\left(  t\right)  }\right)  \right]  U\left(  t\right)
T\left(  t\right)  \varphi\\
&  \qquad+U\left(  t\right)  \frac{H_{\hbar}}{i\hbar}T\left(  t\right)
\varphi\\
= &  \frac{1}{\hbar}\left[  \left(  -\dot{\alpha}\left(  t\right)  a_{\hbar
}^{\dag}+\overline{\dot{\alpha}\left(  t\right)  }a_{\hbar}^{\dag}\right)
+i\operatorname{Im}\left(  \alpha\left(  t\right)  \overline{\dot{\alpha
}\left(  t\right)  }\right)  \right]  V_{\hbar}\left(  t\right)  \psi\\
&  \qquad+U_{\hbar}\left(  -\alpha\left(  t\right)  \right)  \frac{H_{\hbar}%
}{i\hbar}U_{\hbar}\left(  \alpha\left(  t\right)  \right)  U_{\hbar}\left(
-\alpha\left(  t\right)  \right)  T\left(  t\right)  \varphi\\
= &  \Gamma_{\hbar}\left(  t\right)  V_{\hbar}\left(  t\right)  \psi,
\end{align*}
wherein the last equality we have used Proposition \ref{pro.2.6} to conclude,
\[
U_{\hbar}\left(  -\alpha\left(  t\right)  \right)  H\left(  a_{\hbar}%
,a_{\hbar}^{\dag}\right)  U_{\hbar}\left(  \alpha\left(  t\right)  \right)
=H\left(  a_{\hbar}+\alpha\left(  t\right)  ,a_{\hbar}^{\dag}+\bar{\alpha
}\left(  t\right)  \right)  .
\]
This completes the proof of Eq. (\ref{equ.7.4}). We now turn to the proof of
Eq. (\ref{equ.7.5}).

Now let $U\left(  t\right)  =U_{\hbar}^{\ast}\left(  \alpha_{0}\right)
e^{iH_{\hbar}t/\hbar}$ and $T\left(  t\right)  :=U_{\hbar}\left(
\alpha\left(  t\right)  \right)  $ and observe by taking adjoint of Eq.
(\ref{equ.7.1}) that
\[
V_{\hbar}^{\ast}\left(  t\right)  :=U_{\hbar}^{\ast}\left(  \alpha_{0}\right)
e^{iH_{\hbar}t/\hbar}U_{\hbar}\left(  \alpha\left(  t\right)  \right)
=U\left(  t\right)  T\left(  t\right)  .
\]
Working as above, we again easily show that both $U\left(  t\right)  $ and
$T\left(  t\right)  $ satisfy the hypothesis of Lemma \ref{lem.7.1} and
moreover by replacing $\alpha$ by $-\alpha$ in Eq. (\ref{equ.7.8}) we know
\[
i\frac{d}{dt}T\left(  t\right)  \psi=T\left(  t\right)  \left[  i\left(
\frac{\dot{\alpha}\left(  t\right)  }{\sqrt{\hbar}}a^{\dag}-\frac
{\overline{\dot{\alpha}\left(  t\right)  }}{\sqrt{\hbar}}a\right)  +\frac
{1}{\hbar}\operatorname{Im}\left(  \alpha\left(  t\right)  \overline
{\dot{\alpha}\left(  t\right)  }\right)  \right]  \psi.\text{ }%
\]
We now apply Lemma \ref{lem.7.1} with $P\left(  \theta,\theta^{\ast}\right)
=1$ and $\varphi=\psi$ along with some basic algebraic manipulations to show
Eq. (\ref{equ.7.5}) is also valid.
\end{proof}

Specializing our choice of $\alpha\left(  t\right)  $ in Lemma \ref{lem.7.3}
leads to the following important result.

\begin{theorem}
\label{the.7.5}Let $\Gamma_{\hbar}\left(  t\right)  $ be as in Eq.
(\ref{equ.7.6}). If $\alpha\left(  t\right)  $ satisfies Hamilton's equations
of motion (Eq. (\ref{equ.1.1}), $V_{\hbar}\left(  t\right)  $ is defined as in
Eq. (\ref{equ.7.1}), then
\begin{align}
\Gamma_{\hbar}\left(  t\right)  =  &  \frac{i}{\hbar}\operatorname{Im}\left(
\alpha\left(  t\right)  \overline{\dot{\alpha}\left(  t\right)  }\right)
-\frac{i}{\hbar}H^{\text{cl}}\left(  \alpha\left(  t\right)  \right)
\nonumber\\
&  \quad-iH_{2}\left(  \alpha\left(  t\right)  :a,a^{\dag}\right)  -\frac
{i}{\hbar}H_{\geq3}\left(  \alpha\left(  t\right)  :a_{\hbar},a_{\hbar}^{\dag
}\right)  , \label{equ.7.9}%
\end{align}
on $\mathcal{S}$ where $H^{\text{cl}},$ $H_{2}$ and $H_{\geq3}$ are as in Eq.
(\ref{equ.2.31}) by replacing $P$ by $H.$
\end{theorem}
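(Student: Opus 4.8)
The plan is to substitute the specialized dynamics of $\alpha(t)$ into the general formula (\ref{equ.7.6}) for $\Gamma_\hbar(t)$ and expand the Hamiltonian term about $\alpha(t)$ using the algebraic identity from Theorem \ref{the.2.22}. First I would apply Eq. (\ref{equ.2.31}) with $P$ replaced by $H$, but with the indeterminates $\theta,\theta^\ast$ replaced by the operators $a_\hbar, a_\hbar^\dag$, to write
\begin{align*}
H\left(a_\hbar+\alpha(t),a_\hbar^\dag+\bar\alpha(t)\right)
&= H^{\mathrm{cl}}(\alpha(t))\,I
+\left[\frac{\partial H^{\mathrm{cl}}}{\partial\alpha}(\alpha(t))\,a_\hbar
+\frac{\partial H^{\mathrm{cl}}}{\partial\bar\alpha}(\alpha(t))\,a_\hbar^\dag\right]\\
&\quad
+H_2\left(\alpha(t):a_\hbar,a_\hbar^\dag\right)
+H_{\geq3}\left(\alpha(t):a_\hbar,a_\hbar^\dag\right)
\end{align*}
as operators on $\mathcal{S}$. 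Here I would use the scaling relation (\ref{equ.2.28}): since $H_2(\alpha(t):\theta,\theta^\ast)$ is homogeneous of degree $2$, we have $H_2(\alpha(t):a_\hbar,a_\hbar^\dag)=\hbar\,H_2(\alpha(t):a,a^\dag)$, which produces the $-iH_2(\alpha(t):a,a^\dag)$ term (without a $1/\hbar$) after dividing by $\hbar$ in (\ref{equ.7.6}).

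**Key steps.** The second ingredient is Hamilton's equation (\ref{equ.1.1}), $i\dot\alpha(t)=\left(\partial H^{\mathrm{cl}}/\partial\bar\alpha\right)(\alpha(t))$, equivalently $\dot\alpha(t)=-i\left(\partial H^{\mathrm{cl}}/\partial\bar\alpha\right)(\alpha(t))$ and, taking conjugates and using that $H^{\mathrm{cl}}$ is real valued (Remark \ref{rem.2.19}), $\overline{\dot\alpha(t)}=i\left(\partial H^{\mathrm{cl}}/\partial\alpha\right)(\alpha(t))$. Then the degree-one (in $a_\hbar, a_\hbar^\dag$) part of $\Gamma_\hbar(t)$ coming from the creation/annihilation terms in (\ref{equ.7.6}) is
\[
\frac{1}{\hbar}\left(\overline{\dot\alpha(t)}\,a_\hbar-\dot\alpha(t)\,a_\hbar^\dag\right)
=\frac{i}{\hbar}\left(\frac{\partial H^{\mathrm{cl}}}{\partial\alpha}(\alpha(t))\,a_\hbar
+\frac{\partial H^{\mathrm{cl}}}{\partial\bar\alpha}(\alpha(t))\,a_\hbar^\dag\right),
\]
which is exactly $+\frac{i}{\hbar}$ times the degree-one term in the expansion of $H(a_\hbar+\alpha(t),a_\hbar^\dag+\bar\alpha(t))$. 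Hence in $\Gamma_\hbar(t)=\frac1\hbar\bigl(\overline{\dot\alpha}a_\hbar-\dot\alpha a_\hbar^\dag+i\operatorname{Im}(\alpha\overline{\dot\alpha})-iH(a_\hbar+\alpha,a_\hbar^\dag+\bar\alpha)\bigr)$ the degree-one pieces cancel, and we are left with the constant term $\frac{i}{\hbar}\operatorname{Im}(\alpha(t)\overline{\dot\alpha(t)})-\frac{i}{\hbar}H^{\mathrm{cl}}(\alpha(t))$, the degree-two term $-iH_2(\alpha(t):a,a^\dag)$, and the higher-order remainder $-\frac{i}{\hbar}H_{\geq3}(\alpha(t):a_\hbar,a_\hbar^\dag)$. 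This is precisely (\ref{equ.7.9}).

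**Main obstacle.** There is no deep obstacle here — the result is essentially a bookkeeping identity once (\ref{equ.2.31}) and (\ref{equ.1.1}) are in hand. The one point requiring a little care is the legitimacy of the operator expansion: Eq. (\ref{equ.2.31}) is an identity in the noncommutative polynomial algebra $\mathbb{C}\langle\theta,\theta^\ast\rangle$, and applying the (involution-preserving) substitution homomorphism $\theta\mapsto a_\hbar$, $\theta^\ast\mapsto a_\hbar^\dag$ (valid by the universal property recalled after Definition \ref{def.2.14}, with $\mathcal{A}$ the algebra of operators preserving $\mathcal{S}$) transports it to an identity of operators on $\mathcal{S}$; one should note that $U_\hbar(-\alpha(t))H(a_\hbar,a_\hbar^\dag)U_\hbar(\alpha(t))=H(a_\hbar+\alpha(t),a_\hbar^\dag+\bar\alpha(t))$ on $\mathcal{S}$ by Eqs. (\ref{equ.2.17})–(\ref{equ.2.18}) of Proposition \ref{pro.2.6}, which is how this term entered (\ref{equ.7.6}) in the first place. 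The only other bookkeeping item is tracking the factor $\hbar$: every monomial of degree $k$ in $a_\hbar, a_\hbar^\dag$ equals $\hbar^{k/2}$ times the corresponding monomial in $a, a^\dag$ (Eq. (\ref{equ.2.28})), so the degree-$2$ component loses exactly one power of $\hbar$ upon dividing by $\hbar$, giving the $\hbar$-independent term $-iH_2(\alpha(t):a,a^\dag)$, while the degree-one cancellation and the surviving $1/\hbar$ in front of the constant and the $H_{\geq3}$ terms are as displayed. Assembling these observations yields (\ref{equ.7.9}).
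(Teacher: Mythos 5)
Your proposal is correct and follows essentially the same route as the paper: expand $H\left(a_{\hbar}+\alpha(t),a_{\hbar}^{\dag}+\bar{\alpha}(t)\right)$ via Theorem \ref{the.2.22}, use Hamilton's equations (\ref{equ.1.1}) to cancel the terms linear in $a_{\hbar}$ and $a_{\hbar}^{\dag}$, and collect the remaining pieces. Your explicit remark that homogeneity of degree $2$ gives $H_{2}\left(\alpha(t):a_{\hbar},a_{\hbar}^{\dag}\right)=\hbar H_{2}\left(\alpha(t):a,a^{\dag}\right)$, so that the $1/\hbar$ disappears from that term, is a detail the paper leaves implicit but is exactly the intended bookkeeping.
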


\begin{proof}
From the expansion of $H\left(  \theta+\alpha,\theta^{\ast}+\bar{\alpha
}\right)  $ described in Eq. (\ref{equ.2.29}) and Theorem \ref{the.2.22} we
have
\begin{align}
H  &  \left(  a_{\hbar}+\alpha\left(  t\right)  ,a_{\hbar}^{\dag}+\bar{\alpha
}\left(  t\right)  \right) \nonumber\\
&  =H^{\text{cl}}\left(  \alpha\left(  t\right)  \right)  +\left(
\frac{\partial H^{\text{cl}}}{\partial\alpha}\right)  \left(  \alpha\left(
t\right)  \right)  a_{\hbar}+\left(  \frac{\partial H^{\text{cl}}}%
{\partial\overline{\alpha}}\right)  \left(  \alpha\left(  t\right)  \right)
a_{\hbar}^{\dag}\nonumber\\
&  +H_{2}\left(  \alpha\left(  t\right)  :a_{\hbar},a_{\hbar}^{\dag}\right)
+H_{\geq3}\left(  \alpha\left(  t\right)  :a_{\hbar},a_{\hbar}^{\dag}\right)
. \label{equ.7.10}%
\end{align}
So if $\alpha\left(  t\right)  $ satisfies Hamilton's equations of motion,
\begin{equation}
i\dot{\alpha}\left(  t\right)  =\left(  \frac{\partial}{\partial\bar{\alpha}%
}H^{\text{cl}}\right)  \left(  \alpha\left(  t\right)  \right)  \text{ with
}\alpha\left(  0\right)  =\alpha_{0}, \label{equ.7.11}%
\end{equation}
it follows using Eq. (\ref{equ.7.10}) in Eq. (\ref{equ.7.6}) that we may
cancel all the terms linear in $a_{\hbar}$ or $a_{\hbar}^{\dag}$ in which case
$\Gamma_{\hbar}\left(  t\right)  $ in Eq. (\ref{equ.7.6}) may be written as in
Eq. (\ref{equ.7.9}).
\end{proof}

In order to remove a (non-essential) highly oscillatory phase
factor\footnote{As usual in quantum mechanics, the overall phase factor will
not affect the expected values of observables and so we may safely ignore it
in this introductory description.} from $V_{\hbar}\left(  t\right)  $ let
\begin{equation}
f\left(  t\right)  :=\int_{0}^{t}\left(  H^{\text{cl}}\left(  \alpha\left(
\tau\right)  \right)  -\operatorname{Im}\left(  \alpha\left(  \tau\right)
\overline{\dot{\alpha}\left(  \tau\right)  }\right)  \right)  d\tau
\label{equ.7.12}%
\end{equation}
and then define
\begin{equation}
W_{\hbar}\left(  t\right)  =e^{\frac{i}{\hbar}f\left(  t\right)  }V_{\hbar
}\left(  t\right)  =e^{\frac{i}{\hbar}f\left(  t\right)  }U_{\hbar}\left(
-\alpha\left(  t\right)  \right)  e^{-iH_{\hbar}t/\hbar}U_{\hbar}\left(
\alpha_{0}\right)  . \label{equ.7.13}%
\end{equation}
More generally for $s,t\in\mathbb{R},$ let
\begin{equation}
W_{\hbar}\left(  t,s\right)  =W_{\hbar}\left(  t\right)  W_{\hbar}^{\ast
}\left(  s\right)  =e^{\frac{i}{\hbar}\left[  f\left(  t\right)  -f\left(
s\right)  \right]  }U_{\hbar}\left(  -\alpha\left(  t\right)  \right)
e^{-iH_{\hbar}\left(  t-s\right)  /\hbar}U_{\hbar}\left(  \alpha\left(
s\right)  \right)  . \label{equ.7.14}%
\end{equation}

\begin{proposition}
\label{pro.7.6} Let $H\left(  \theta,\theta^{\ast}\right)  \in\mathbb{R}%
\left\langle \theta,\theta^{\ast}\right\rangle $ and $\eta>0$ satisfy
Assumption \ref{ass.1}, $d=\deg_{\theta}H,$ and $W_{\hbar}\left(  t,s\right)
$ be as in Eq. (\ref{equ.7.14}). Then
\begin{equation}
W_{\hbar}\left(  t,s\right)  D\left(  \mathcal{N}^{\beta\frac{d}{2}}\right)
\subseteq D\left(  \mathcal{N}^{\beta}\right)  \text{~}\forall~s,t\in
\mathbb{R}\text{ and }\beta\geq0. \label{equ.7.15}%
\end{equation}
Moreover, we have $W_{\hbar}\left(  t,s\right)  \mathcal{S}=\mathcal{S}$ for
all $s,t\in\mathbb{R}.$
\end{proposition}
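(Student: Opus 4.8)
The plan is to read off Proposition \ref{pro.7.6} by composing the domain and $\mathcal{S}$-invariance properties of the three types of factors appearing in the defining formula (\ref{equ.7.14}) for $W_{\hbar}\left(t,s\right)$: the scalar phase $e^{\frac{i}{\hbar}\left[f\left(t\right)-f\left(s\right)\right]}$, the Weyl operators $U_{\hbar}\left(\pm\alpha\left(\cdot\right)\right)$, and the quantum evolution $e^{-iH_{\hbar}\left(t-s\right)/\hbar}$. The phase is a unimodular constant, so it is irrelevant for both assertions. For the Weyl operators, since $U_{\hbar}\left(\alpha\right)=U\left(\alpha/\sqrt{\hbar}\right)$ by Definition \ref{def.1.6}, Corollary \ref{cor.5.15} (applied with $\alpha$ replaced by $\alpha\left(s\right)/\sqrt{\hbar}$, resp. $-\alpha\left(t\right)/\sqrt{\hbar}$) gives $U_{\hbar}\left(\alpha\left(s\right)\right)D\left(\mathcal{N}^{\gamma}\right)=D\left(\mathcal{N}^{\gamma}\right)$ and $U_{\hbar}\left(-\alpha\left(t\right)\right)D\left(\mathcal{N}^{\gamma}\right)=D\left(\mathcal{N}^{\gamma}\right)$ for every $\gamma\geq0$, while $U_{\hbar}\left(\pm\alpha\left(\cdot\right)\right)\mathcal{S}=\mathcal{S}$ is already contained in Proposition \ref{pro.2.6}. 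For the quantum evolution, Theorem \ref{the.6.11} item 1 (Eq. (\ref{equ.6.7})) gives $e^{-iH_{\hbar}t/\hbar}D\left(\mathcal{N}^{\beta d/2}\right)\subseteq D\left(\mathcal{N}^{\beta}\right)$ for all $t$ and $\beta\geq0$, and Proposition \ref{pro.6.5} (with $t$ replaced by $t/\hbar$) gives $e^{-iH_{\hbar}t/\hbar}\mathcal{S}=\mathcal{S}$.

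To prove (\ref{equ.7.15}), I would fix $\psi\in D\left(\mathcal{N}^{\beta d/2}\right)$ and chain the three steps: first $U_{\hbar}\left(\alpha\left(s\right)\right)\psi\in D\left(\mathcal{N}^{\beta d/2}\right)$ since the Weyl operator preserves this domain; then $e^{-iH_{\hbar}\left(t-s\right)/\hbar}U_{\hbar}\left(\alpha\left(s\right)\right)\psi\in D\left(\mathcal{N}^{\beta}\right)$ by Eq. (\ref{equ.6.7}); finally $U_{\hbar}\left(-\alpha\left(t\right)\right)$ preserves $D\left(\mathcal{N}^{\beta}\right)$ and the phase factor changes nothing, so $W_{\hbar}\left(t,s\right)\psi\in D\left(\mathcal{N}^{\beta}\right)$. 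The only point requiring care is the bookkeeping of powers of $\mathcal{N}$: the Weyl operators preserve $D\left(\mathcal{N}^{\gamma}\right)$ for \emph{every} $\gamma$, so they do not consume any of the $d/2$ loss, which is incurred solely by the middle factor $e^{-iH_{\hbar}\left(t-s\right)/\hbar}$; one should also recall that $D\left(\mathcal{N}_{\hbar}^{\gamma}\right)=D\left(\mathcal{N}^{\gamma}\right)$, so the $\hbar$-scaling hidden in $\mathcal{N}_{\hbar}$ does not interfere.

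For the statement $W_{\hbar}\left(t,s\right)\mathcal{S}=\mathcal{S}$, I would observe that $W_{\hbar}\left(t,s\right)$ is a composition of bijections of $\mathcal{S}$ onto itself (the phase, $U_{\hbar}\left(\alpha\left(s\right)\right)$, $e^{-iH_{\hbar}\left(t-s\right)/\hbar}$, and $U_{\hbar}\left(-\alpha\left(t\right)\right)$, using the $\mathcal{S}$-invariance facts recalled above), hence is itself a bijection of $\mathcal{S}$. Alternatively, one can deduce $W_{\hbar}\left(t,s\right)\mathcal{S}\subseteq\mathcal{S}$ directly from (\ref{equ.7.15}) together with the identity $\mathcal{S}=\bigcap_{\beta\geq0}D\left(\mathcal{N}^{\beta}\right)=\bigcap_{\beta\geq0}D\left(\mathcal{N}^{\beta d/2}\right)$ from Eq. (\ref{equ.3.34}) (using $d\geq2$), and then obtain the reverse inclusion from $W_{\hbar}\left(t,s\right)^{-1}=W_{\hbar}\left(s,t\right)$, which follows from (\ref{equ.7.14}) and the unitarity of $W_{\hbar}\left(t\right)$, $W_{\hbar}\left(s\right)$ via $W_{\hbar}\left(t,s\right)W_{\hbar}\left(s,t\right)=W_{\hbar}\left(t\right)W_{\hbar}^{\ast}\left(s\right)W_{\hbar}\left(s\right)W_{\hbar}^{\ast}\left(t\right)=I$. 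There is no genuine obstacle here: the proposition is an organized reassembly of Proposition \ref{pro.2.6}, Corollary \ref{cor.5.15}, Proposition \ref{pro.6.5}, and Theorem \ref{the.6.11}, and the only attention needed is to track the loss of regularity through the one factor that actually causes it.
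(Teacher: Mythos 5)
Your proposal is correct and follows essentially the same route as the paper's proof: Eq. (\ref{equ.7.15}) is obtained by chaining the domain preservation of the Weyl operators (Corollary \ref{cor.5.15}) with the $D(\mathcal{N}^{\beta d/2})\to D(\mathcal{N}^{\beta})$ mapping property of $e^{-iH_{\hbar}t/\hbar}$ (Theorem \ref{the.6.11}, item 1), and the $\mathcal{S}$-invariance is deduced exactly as in your second alternative, namely from Eq. (\ref{equ.3.34}) together with $W_{\hbar}(t,s)^{-1}=W_{\hbar}(s,t)$.
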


\begin{proof}
Eq. (\ref{equ.7.15}) is a direct consequence from $U_{\hbar}\left(
\alpha\left(  \cdot\right)  \right)  \mathcal{N}^{\beta}=\mathcal{N}^{\beta}$
in Corollary \ref{cor.5.15} and $e^{-iH_{\hbar}t/\hbar}D\left(  \mathcal{N}%
^{\beta\frac{d}{2}}\right)  \subseteq D\left(  \mathcal{N}^{\beta}\right)  $
from the item 1 in Theorem \ref{the.6.11}. Then, by Eq. (\ref{equ.3.34}), it
follows that $W_{\hbar}\left(  t,s\right)  \mathcal{S}\subseteq\mathcal{S}.$
By multiplying $W_{\hbar}\left(  t,s\right)  ^{-1}=W_{\hbar}\left(
s,t\right)  $ on both sides of the last inclusion, we can conclude that
$W_{\hbar}\left(  t,s\right)  \mathcal{S}=\mathcal{S}.$
\end{proof}

\begin{definition}
\label{def.7.7}For $\hbar>0$ and $t\in\mathbb{R},$ $L_{\hbar}\left(  t\right)
$ be the operator on $\mathcal{S}$ defined as,
\begin{align}
L_{\hbar}\left(  t\right)   &  =\frac{1}{\hbar}\left(  H\left(  a_{\hbar
}+\alpha\left(  t\right)  ,a_{\hbar}^{\dag}+\bar{\alpha}\left(  t\right)
\right)  -H^{\text{cl}}\left(  \alpha\left(  t\right)  \right)  -H_{1}\left(
\alpha\left(  t\right)  :a_{\hbar},a_{\hbar}^{\dag}\right)  \right)
\nonumber\\
&  =H_{2}\left(  \alpha\left(  t\right)  :a,a^{\dag}\right)  +\frac{1}{\hbar
}H_{\geq3}\left(  \alpha\left(  t\right)  :a_{\hbar},a_{\hbar}^{\dag}\right)
. \label{equ.7.16}%
\end{align}

\end{definition}

\begin{theorem}
\label{the.7.8}Both $t\rightarrow W_{\hbar}\left(  t,s\right)  $ and
$s\rightarrow W_{\hbar}\left(  t,s\right)  $ are strongly continuous on
$L^{2}\left(  m\right)  .$ Moreover, if $\psi\in\mathcal{S}$ and $\beta\geq0,$
then
\begin{align}
i\left(  \left\Vert \cdot\right\Vert _{\beta}\text{-}\partial_{t}\right)
W_{\hbar}\left(  t,s\right)  \psi &  =L_{\hbar}\left(  t\right)  W_{\hbar
}\left(  t,s\right)  \psi,\text{ and}\label{equ.7.17}\\
i\left(  \left\Vert \cdot\right\Vert _{\beta}\text{-}\partial_{s}\right)
W_{\hbar}\left(  t,s\right)  \psi &  =-W_{\hbar}\left(  t,s\right)  L_{\hbar
}\left(  s\right)  \psi. \label{equ.7.18}%
\end{align}

\end{theorem}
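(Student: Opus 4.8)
The plan is to combine Lemma \ref{lem.7.3} (which differentiates the unitaries $V_\hbar$ with generator $\Gamma_\hbar(t)$), Theorem \ref{the.7.5} (the simplified form of $\Gamma_\hbar(t)$ once $\alpha(t)$ solves Hamilton's equations), and the elementary fact that the scalar phase $e^{\frac i\hbar f(t)}$ absorbs exactly the two $c$-number terms in $\Gamma_\hbar(t)$. First I would observe that by Proposition \ref{pro.7.6} both $W_\hbar(t,s)$ and its adjoint preserve $\mathcal S$, so all the expressions in \eqref{equ.7.17}--\eqref{equ.7.18} make sense on $\mathcal S$, and that $L_\hbar(t)$ is, up to the scalar $\frac1\hbar H^{\mathrm{cl}}(\alpha(t)) - \frac1\hbar\operatorname{Im}(\alpha(t)\overline{\dot\alpha(t)})$, equal to $i\Gamma_\hbar(t)$; precisely, comparing \eqref{equ.7.9} with \eqref{equ.7.16},
\[
i\Gamma_\hbar(t) = -\frac1\hbar\Big(H^{\mathrm{cl}}(\alpha(t)) - \operatorname{Im}(\alpha(t)\overline{\dot\alpha(t)})\Big)I - L_\hbar(t) = -\dot f(t)\hbar^{-1} I - L_\hbar(t),
\]
using the definition \eqref{equ.7.12} of $f$.

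Next, for fixed $s$ and $\psi\in\mathcal S$, write $W_\hbar(t,s)\psi = e^{\frac i\hbar[f(t)-f(s)]} V_\hbar(t)V_\hbar^\ast(s)\psi$. Since $V_\hbar^\ast(s)\psi\in\mathcal S$, Lemma \ref{lem.7.3} gives that $t\mapsto V_\hbar(t)V_\hbar^\ast(s)\psi$ is $\|\cdot\|_\beta$-differentiable for every $\beta$ with derivative $\Gamma_\hbar(t)V_\hbar(t)V_\hbar^\ast(s)\psi$; the scalar factor $e^{\frac i\hbar[f(t)-f(s)]}$ is smooth in $t$ (as $\alpha(\cdot)$ is $C^1$, hence $f\in C^1$), and it multiplies a $\|\cdot\|_\beta$-valued differentiable curve, so the ordinary product rule for a scalar times a Banach-space-valued function applies in each $\|\cdot\|_\beta$. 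Carrying out that product rule,
\[
\Big(\|\cdot\|_\beta\text{-}\partial_t\Big)W_\hbar(t,s)\psi
= \frac i\hbar \dot f(t)\,W_\hbar(t,s)\psi + e^{\frac i\hbar[f(t)-f(s)]}\Gamma_\hbar(t)V_\hbar(t)V_\hbar^\ast(s)\psi
= \Big(\tfrac i\hbar\dot f(t)I + \Gamma_\hbar(t)\Big)W_\hbar(t,s)\psi,
\]
where in the last step I commute the scalar $e^{\frac i\hbar[f(t)-f(s)]}$ past $\Gamma_\hbar(t)$. Multiplying by $-i$ and using the displayed identity $i\Gamma_\hbar(t) = -\dot f(t)\hbar^{-1}I - L_\hbar(t)$ collapses the $\dot f$ terms and yields \eqref{equ.7.17}. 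Strong continuity of $t\mapsto W_\hbar(t,s)$ on $L^2(m)$ follows because $W_\hbar(t,s)$ is unitary and $t\mapsto W_\hbar(t,s)\psi$ is $\|\cdot\|_0$-differentiable (hence continuous) on the dense set $\mathcal S$, combined with the uniform bound $\|W_\hbar(t,s)\|_{B(L^2(m))}=1$ and a standard $3\varepsilon$-argument.

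For the $s$-derivative I would proceed symmetrically, using the second formula in Lemma \ref{lem.7.3}, namely $\frac{d}{ds}V_\hbar^\ast(s)\psi = -V_\hbar^\ast(s)\Gamma_\hbar(s)\psi$, together with $\frac{d}{ds}e^{-\frac i\hbar f(s)} = -\frac i\hbar\dot f(s)e^{-\frac i\hbar f(s)}$; writing $W_\hbar(t,s) = W_\hbar(t,0)W_\hbar(0,s)$ and $W_\hbar(0,s) = e^{-\frac i\hbar f(s)}V_\hbar^\ast(s)$ (since $f(0)=0$), the product rule from Lemma \ref{lem.7.1} (with $P=I$, $U(s):=W_\hbar(t,0)\cdot e^{-\frac i\hbar f(s)}$ absorbed appropriately, $T(s):=V_\hbar^\ast(s)$) gives
\[
i\Big(\|\cdot\|_\beta\text{-}\partial_s\Big)W_\hbar(t,s)\psi
= W_\hbar(t,s)\Big(\dot f(s)\hbar^{-1}I + i\,(-\Gamma_\hbar(s))\Big)\psi\cdot(-1)\cdot(-1),
\]
and again $\dot f(s)\hbar^{-1}I - i\Gamma_\hbar(s)$ equals $L_\hbar(s)$ up to sign bookkeeping, producing \eqref{equ.7.18}. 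The hypothesis of Lemma \ref{lem.7.1} that $\sup_{|\Delta|\le\varepsilon}\|U(t+\Delta)\|_{\alpha\to\beta}<\infty$ is met here because $W_\hbar(t,0)$ is a fixed operator that preserves every $D(\mathcal N^\beta)$ with $\beta\to\beta$ bound finite (Corollary \ref{cor.5.15} and Theorem \ref{the.6.11}) and the scalar $e^{-\frac i\hbar f(s)}$ has modulus one.

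The main obstacle I anticipate is purely bookkeeping: keeping the several $c$-number terms — $\operatorname{Im}(\alpha(t)\overline{\dot\alpha(t)})$, $H^{\mathrm{cl}}(\alpha(t))$, and $\dot f(t)$ — and their signs and factors of $\hbar$ straight so that they cancel exactly, leaving precisely $L_\hbar(t)$. There is no analytic difficulty: differentiability in each $\|\cdot\|_\beta$ is inherited verbatim from Lemma \ref{lem.7.3}, and the only genuinely "new" input is verifying that the three scalar contributions telescope, which is immediate from the definition \eqref{equ.7.12} of $f$ and the formula \eqref{equ.7.9} for $\Gamma_\hbar(t)$.
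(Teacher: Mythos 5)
Your proposal is correct and follows the paper's own route exactly: Lemma \ref{lem.7.3} and Theorem \ref{the.7.5} supply the $\left\Vert \cdot\right\Vert _{\beta}$-derivative of $V_{\hbar}\left(  t\right)  V_{\hbar}^{\ast}\left(  s\right)  $ with generator $\Gamma_{\hbar}\left(  t\right)  ,$ and the product rule applied to the scalar phase $e^{\frac{i}{\hbar}\left[  f\left(  t\right)  -f\left(  s\right)  \right]  }$ cancels the $c$-number terms. The one slip is a sign in your displayed identity: comparing Eq. (\ref{equ.7.9}) with Eq. (\ref{equ.7.16}) gives $i\Gamma_{\hbar}\left(  t\right)  =+\hbar^{-1}\dot{f}\left(  t\right)  I+L_{\hbar}\left(  t\right)  ,$ equivalently $-i\Gamma_{\hbar}\left(  t\right)  =-\hbar^{-1}\dot{f}\left(  t\right)  I-L_{\hbar}\left(  t\right)  ,$ rather than $i\Gamma_{\hbar}\left(  t\right)  =-\hbar^{-1}\dot{f}\left(  t\right)  I-L_{\hbar}\left(  t\right)  ;$ with that correction your ``multiply by $-i$'' step collapses the $\dot{f}$ terms exactly as you claim.
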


\begin{proof}
The strong continuity of $W_{\hbar}\left(  t,s\right)  $ in $s$ and in $t$
follows from the strong continuity of both $U\left(  \alpha\left(  t\right)
\right)  $ and $e^{-iH_{\hbar}t/\hbar},$ see Corollary \ref{cor.5.14} and
Proposition \ref{pro.6.1}. The derivative formulas in Eqs. (\ref{equ.7.17})
and (\ref{equ.7.18}) follow directly from Lemma \ref{lem.7.3} and Theorem
\ref{the.7.5} along with the an additional term coming from the product rule
involving the added scalar factor, $e^{\frac{i}{\hbar}\left[  f\left(
t\right)  -f\left(  s\right)  \right]  }.$
\end{proof}

For the rest of the paper the following notation will be in force.

\begin{notation}
\label{not.7.10}Let $\alpha_{0}\in\mathbb{C},$ $H\left(  \theta,\theta\right)
\in\mathbb{\mathbb{R}}\left\langle \theta,\theta^{\ast}\right\rangle $ satisfy
the Assumption \ref{ass.1}, $t\rightarrow\alpha\left(  t\right)  $ solve the
Hamiltonian' s equation Eq. (\ref{equ.1.1}) with $\alpha\left(  0\right)
=\alpha_{0},$ and $H_{2}\left(  \alpha\left(  \tau\right)  :\theta
,\theta^{\ast}\right)  $ be the degree $2$ homogeneous component of $H\left(
\theta+\alpha\left(  \tau\right)  ,\theta^{\ast}+\overline{\alpha}\left(
\tau\right)  \right)  $ as in Proposition \ref{pro.3.5}. Further let
\begin{equation}
W_{0}\left(  t,s\right)  :=W_{0}\left(  t\right)  W_{0}^{\ast}\left(
s\right)  \label{equ.7.19}%
\end{equation}
where $W_{0}\left(  t\right)  $ is the unique one parameter strongly
continuous family of unitary operators satisfying,
\begin{equation}
i\frac{\partial}{\partial t}W_{0}\left(  t\right)  =\overline{H_{2}\left(
\alpha\left(  t\right)  :a,a^{\dag}\right)  }W_{0}\left(  t\right)  \text{
with }W_{0}\left(  0\right)  =I\label{equ.7.20}%
\end{equation}
as described in Corollary \ref{cor.5.14}.
\end{notation}

\begin{remark}
\label{rem.7.11}Since
\[
\frac{i}{\hbar}H_{\geq3}\left(  \alpha\left(  t\right)  :a_{\hbar},a_{\hbar
}^{\dag}\right)  =i\sqrt{\hbar}\sum_{l\geq3}\hbar^{\left(  l-3\right)
/2}H_{l}\left(  \alpha\left(  t\right)  ,a,a^{\dag}\right)  ,
\]
it follows that $L_{\hbar}\left(  t\right)  $ in Eq. (\ref{equ.7.16})
satisfies,
\[
\lim_{\hbar\downarrow0}L_{\hbar}\left(  t\right)  \psi=H_{2}\left(
\alpha\left(  t\right)  :a,a^{\dag}\right)  \psi\text{ for all }\psi
\in\mathcal{S}.
\]
From this observation it is reasonable to expect\,$W_{\hbar}\left(  t\right)
\rightarrow W_{0}\left(  t\right)  $ where $W_{0}\left(  t\right)  $ is as in
Notation \ref{not.7.10}. This is in fact the key content of this paper, see
Theorem \ref{the.9.4} below. To complete the proof we will still need a fair
number of preliminary results.
\end{remark}

\subsection{Crude Bounds on $W_{\hbar}$\label{sec.7.1}}

\begin{theorem}
\label{the.7.12}Suppose that $H\left(  \theta,\theta^{\ast}\right)
\in\mathbb{R}\left\langle \theta,\theta^{\ast}\right\rangle $ and
$0<\hbar<\eta\leq1$ satisfy Assumption \ref{ass.1}, $d=\deg_{\theta}H,$ and
$W_{\hbar}\left(  t,s\right)  $ is as in Eq. (\ref{equ.7.14}). Then for all
$\beta\geq0,$ there exists $C_{\beta,H}<\infty$ depending only on $\beta\geq0$
and $H$ such that, for all $s,t\in\mathbb{R},$
\begin{align}
W_{\hbar}\left(  t,s\right)  D\left(  \mathcal{N}^{\beta d/2}\right)   &
\subset D\left(  \mathcal{N}^{\beta}\right)  \text{ and}\nonumber\\
\left\Vert \mathcal{N}^{\beta}W_{\hbar}\left(  t,s\right)  \psi\right\Vert  &
\leq\hbar^{-\beta}C_{\beta,H}\left\Vert \psi\right\Vert _{\frac{\beta d}{2}}.
\label{equ.7.21}%
\end{align}
[This bound is crude in the sense that $\hbar^{-\beta}C_{\beta,H}%
\uparrow\infty$ as $\hbar\downarrow0.$ We will do much better later in Theorem
\ref{the.9.1}.]
\end{theorem}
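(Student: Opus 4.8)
The plan is to recall the definition $W_{\hbar}\left(  t,s\right)  = e^{\frac{i}{\hbar}\left[  f\left(  t\right)  -f\left(  s\right)  \right]  }U_{\hbar}\left(  -\alpha\left(  t\right)  \right)  e^{-iH_{\hbar}\left(  t-s\right)  /\hbar}U_{\hbar}\left(  \alpha\left(  s\right)  \right)$ from Eq. (\ref{equ.7.14}) and estimate each of the three operator factors on the scale of $\beta$ -- norms, composing them via the submultiplicativity in Eq. (\ref{equ.3.40}) of Remark \ref{rem.3.38}. The overall phase $e^{\frac{i}{\hbar}\left[  f\left(  t\right)  -f\left(  s\right)  \right]}$ has modulus one and is irrelevant for the norm bounds. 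The middle factor $e^{-iH_{\hbar}\left(  t-s\right)  /\hbar}$ is controlled by Theorem \ref{the.6.11}: for $0<\hbar<\eta\le1$ we have $e^{-iH_{\hbar}\tau/\hbar}D\left(  \mathcal{N}^{\beta d/2}\right)  \subseteq D\left(  \mathcal{N}^{\beta}\right)$ and $\left\Vert e^{-iH_{\hbar}\tau/\hbar}\right\Vert _{\beta d/2\rightarrow\beta}\leq C_{\beta}\hbar^{-\beta}$ for all $\tau\in\mathbb{R}$ (Eqs. (\ref{equ.6.7}) and (\ref{equ.6.8})). The two outer Weyl operators are controlled by Corollary \ref{cor.5.15}: $U_{\hbar}\left(  \beta_{0}\right)  D\left(  \mathcal{N}^{\gamma}\right)  =D\left(  \mathcal{N}^{\gamma}\right)$ and, from Eq. (\ref{equ.5.32}) applied with $\alpha$ replaced by $\alpha/\sqrt{\hbar}$ (recall $U_{\hbar}\left(  \alpha\right)  =U\left(  \alpha/\sqrt{\hbar}\right)$), $\left\Vert U_{\hbar}\left(  \alpha\right)  \right\Vert _{\gamma\rightarrow\gamma}\le\exp\left(  8\gamma\cdot 3^{\left\vert \gamma-1\right\vert }\left\vert \alpha\right\vert /\sqrt{\hbar}\right)$.

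The key device for taming the $1/\sqrt{\hbar}$ inside these Weyl-operator bounds is Proposition \ref{pro.3.13}: under Assumption \ref{ass.1} the classical trajectory stays bounded, $\left\vert \alpha\left(  t\right)  \right\vert ^{2}\le C_{1}\left(  H^{\mathrm{cl}}\left(  \alpha_{0}\right)  +C\right)$, uniformly in $t$. Set $R:=\sqrt{C_{1}\left(  H^{\mathrm{cl}}\left(  \alpha_{0}\right)  +C\right)}$, so $\left\vert \alpha\left(  t\right)  \right\vert \le R$ for every $t\in\mathbb{R}$. Then for every $\gamma\ge0$ and every $t$,
\begin{equation}
\left\Vert U_{\hbar}\left(  \pm\alpha\left(  t\right)  \right)  \right\Vert _{\gamma\rightarrow\gamma}\le\exp\left(  8\gamma\cdot 3^{\left\vert \gamma-1\right\vert }R/\sqrt{\hbar}\right)=:E_{\gamma}\left(  \hbar\right).
\end{equation}
Since $R$ depends only on $H$ and $\alpha_{0}$ (which are fixed throughout this section), $E_{\gamma}\left(  \hbar\right)$ depends only on $\gamma$ and $H$ (and $\alpha_{0}$), not on $s$ or $t$.

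Chaining the three bounds (outer Weyl factor with $\gamma=\beta$, the Hamiltonian factor from $\beta d/2$ down to $\beta$, and the inner Weyl factor with $\gamma=\beta d/2$) gives, for $0<\hbar<\eta\le1$ and all $s,t\in\mathbb{R}$,
\begin{align}
\left\Vert \mathcal{N}^{\beta}W_{\hbar}\left(  t,s\right)  \psi\right\Vert
&\le\left\Vert W_{\hbar}\left(  t,s\right)  \psi\right\Vert _{\beta}\nonumber\\
&\le E_{\beta}\left(  \hbar\right)\cdot C_{\beta}\hbar^{-\beta}\cdot E_{\beta d/2}\left(  \hbar\right)\cdot\left\Vert \psi\right\Vert _{\beta d/2},
\end{align}
and the domain inclusion $W_{\hbar}\left(  t,s\right)  D\left(  \mathcal{N}^{\beta d/2}\right)  \subseteq D\left(  \mathcal{N}^{\beta}\right)$ follows from the corresponding inclusions for the three factors (already recorded in Proposition \ref{pro.7.6}). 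This is precisely Eq. (\ref{equ.7.21}) once one absorbs the three $\hbar$--dependent scalars into a single constant; the claimed form $\hbar^{-\beta}C_{\beta,H}$ is slightly optimistic because $E_{\beta}\left(  \hbar\right)E_{\beta d/2}\left(  \hbar\right)$ grows like $\exp\left(c/\sqrt{\hbar}\right)$, so strictly speaking one should either state the bound with this exponential prefactor or note that the theorem is only used for fixed $\hbar$ bounded away from $0$ in a compact range — in any case $\hbar^{-\beta}C_{\beta,H}\uparrow\infty$ as $\hbar\downarrow0$ as the statement already warns. The main (and essentially only) obstacle is bookkeeping: making sure the exponents on $\mathcal{N}$ match up correctly through the composition (the Hamiltonian propagator costs a factor $d/2$ in the exponent while the Weyl operators are exponent-preserving), and confirming that the constant genuinely depends only on $\beta$ and $H$ via the uniform-in-time bound $R$ from Proposition \ref{pro.3.13}. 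No delicate analysis is needed beyond citing Theorems \ref{the.6.11} and Corollary \ref{cor.5.15} and Proposition \ref{pro.3.13}.
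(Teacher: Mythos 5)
Your decomposition into the three factors of Eq. (\ref{equ.7.14}) and your treatment of the middle factor via Theorem \ref{the.6.11} are exactly what the paper does, but your handling of the two Weyl factors leaves a genuine gap: you only obtain the bound with prefactor $e^{c/\sqrt{\hbar}}\hbar^{-\beta}$ and then propose to live with that weaker statement. That is not an option here. Theorem \ref{the.7.12} is invoked in the proof of Theorem \ref{the.9.1} (in the estimate of the cross term $B$, Eq. (\ref{equ.9.5})) precisely so that the loss $\hbar^{-2\beta}$ it produces is cancelled by the gain $\hbar^{2\beta+\frac{d}{2}-1}$ supplied by Theorem \ref{the.8.3}; an exponential factor $e^{c/\sqrt{\hbar}}$ cannot be absorbed by any power of $\hbar$, so the downstream argument would collapse. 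You are right to distrust the assertion that $\sup_{t}\Vert U_{\hbar}(\alpha(t))\Vert_{\beta\rightarrow\beta}$ is bounded by an $\hbar$-independent constant --- it is false (testing on $\Omega_{0}$ shows $\Vert U_{\hbar}(\alpha)\Vert_{1\rightarrow1}$ is at least of order $\vert\alpha\vert^{2}/\hbar$), and the paper's own write-up of this step, which cites Corollary \ref{cor.5.15}, glosses over exactly this point --- but the correct response is to repair the argument, not to weaken the theorem.

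The repair is to do all of the bookkeeping in the $\hbar$-dependent scales $\Vert(\mathcal{N}_{\hbar}+I)^{\gamma}\cdot\Vert$ and $\Vert(H_{\hbar}+\widetilde{C})^{\gamma}\cdot\Vert$, converting to the $\Vert\cdot\Vert_{\beta}$ scale only once, at the very end. Concretely: (i) $(\mathcal{N}+I)^{\beta}\preceq\hbar^{-\beta}(\mathcal{N}_{\hbar}+I)^{\beta}$ for $\hbar\leq1$ (cf. the display preceding Eq. (\ref{equ.6.11})), and this single conversion is the sole source of the factor $\hbar^{-\beta}$; (ii) in the $(\mathcal{N}_{\hbar}+I)^{\gamma}$ scale the Weyl operators are bounded \emph{uniformly in} $\hbar$, because by Proposition \ref{pro.2.6} one has $U_{\hbar}(\alpha)^{\ast}(\mathcal{N}_{\hbar}+I)U_{\hbar}(\alpha)=\mathcal{N}_{\hbar}+\bar{\alpha}a_{\hbar}+\alpha a_{\hbar}^{\dag}+\vert\alpha\vert^{2}+I\preceq(1+\vert\alpha\vert)^{2}(\mathcal{N}_{\hbar}+I)$ on $\mathcal{S}$, with integer powers of this identity controlled by Lemma \ref{lem.6.6} and fractional powers by L\"{o}wner--Heinz (Theorem \ref{the.6.7}); combined with $\sup_{t}\vert\alpha(t)\vert<\infty$ from Proposition \ref{pro.3.13}, the resulting constants depend only on $\beta$, $H$ and $\alpha_{0}$; (iii) Corollary \ref{cor.6.8} converts between the $(\mathcal{N}_{\hbar}+I)^{\gamma}$ and $(H_{\hbar}+\widetilde{C})^{\gamma}$ scales with $\hbar$-independent constants, and $e^{-iH_{\hbar}\tau/\hbar}$ is an isometry on the latter by Proposition \ref{pro.6.1}; (iv) finally $\Vert\psi\Vert_{(\mathcal{N}_{\hbar}+I)^{\beta d/2}}\leq\Vert\psi\Vert_{\beta d/2}$ since $\hbar\leq1$. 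Chaining (i)--(iv) through the three factors yields exactly $\Vert\mathcal{N}^{\beta}W_{\hbar}(t,s)\psi\Vert\leq\hbar^{-\beta}C_{\beta,H}\Vert\psi\Vert_{\beta d/2}$. The moral you should extract is that $U_{\hbar}(\alpha)$ translates $a_{\hbar}$ by the $\hbar$-independent amount $\alpha$, so it is tame on the $\mathcal{N}_{\hbar}$-Sobolev scale even though it is badly unbounded (as $\hbar\downarrow0$) on the $\mathcal{N}$-Sobolev scale you worked in.
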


\begin{proof}
Let $\beta\geq0.$ From Proposition \ref{pro.7.6} it follows that $W_{\hbar
}\left(  t,s\right)  D\left(  \mathcal{N}^{\beta d/2}\right)  \subseteq
D\left(  \mathcal{N}^{\beta}\right)  .$ Moreover,
\begin{align*}
&  \left\Vert \mathcal{N}^{\beta}W_{\hbar}\left(  t,s\right)  \psi\right\Vert
\leq\left\Vert W_{\hbar}\left(  t,s\right)  \psi\right\Vert _{\beta}\\
&  \qquad=\left\Vert U_{\hbar}\left(  -\alpha\left(  t\right)  \right)
e^{-iH_{\hbar}\left(  t-s\right)  /\hbar}U_{\hbar}\left(  \alpha\left(
s\right)  \right)  \psi\right\Vert _{\beta}\\
&  \qquad\leq\left\Vert U_{\hbar}^{\ast}\left(  \alpha\left(  t\right)
\right)  \right\Vert _{\beta\rightarrow\beta}\left\Vert e^{-iH_{\hbar}\left(
t-s\right)  /\hbar}\right\Vert _{\beta d/2\rightarrow\beta}\left\Vert
U_{\hbar}\left(  \alpha\left(  s\right)  \right)  \right\Vert _{\beta
d/2\rightarrow\beta d/2}\left\Vert \psi\right\Vert _{\beta d/2}.
\end{align*}
Note that $\kappa:=\sup_{t\in\mathbb{R}}\left\vert \alpha\left(  t\right)
\right\vert <\infty$ from Proposition \ref{pro.3.13} , then by the Corollary
\ref{cor.5.15}, there exists a constant $C=C\left(  \beta,d,\kappa\right)  $
such that
\[
\sup_{t\in\mathbb{R}}\left\Vert U_{\hbar}^{\ast}\left(  \alpha\left(
t\right)  \right)  \right\Vert _{\beta\rightarrow\beta}\vee\sup_{s\in
\mathbb{R}}\left\Vert U_{\hbar}\left(  \alpha\left(  s\right)  \right)
\right\Vert _{\beta d/2\rightarrow\beta d/2}\leq C\left(  \beta,d,\kappa
\right)  .
\]
Then, combing all above inequalities along with Eq. (\ref{equ.6.8}) in Theorem
\ref{the.6.11}, we have
\[
\left\Vert \mathcal{N}^{\beta}W_{\hbar}\left(  t,s\right)  \psi\right\Vert
\leq C_{\beta,H}\hbar^{-\beta}\left\Vert \psi\right\Vert _{\beta d/2}%
\]
and therefore, Eq. (\ref{equ.7.21}) follows immediately.
\end{proof}

\section{Asymptotics of the Truncated Evolutions\label{sec.8}}

As in Section \ref{sec.7}, we assume that $H\left(  \theta,\theta^{\ast
}\right)  \in\mathbb{R}\left\langle \theta,\theta^{\ast}\right\rangle $ and
$\eta>0$ are as in Assumption \ref{ass.1}, $\alpha_{0}\in\mathbb{C},$ and
$\alpha\left(  t\right)  $ denotes the solution to Eq. (\ref{equ.1.1}) with
$\alpha\left(  0\right)  =\alpha_{0}.$ Further let $L_{\hbar}\left(  t\right)
$ be as in Eq. (\ref{equ.7.16}), i.e.
\begin{equation}
L_{\hbar}\left(  t\right)  =\sum_{k=2}^{d}\hbar^{\frac{k}{2}-1}H_{k}\left(
\alpha\left(  t\right)  :a,a^{\dag}\right)  . \label{equ.8.1}%
\end{equation}

\begin{definition}
[Truncated Evolutions]\label{def.8.1}For $0\leq M<\infty$ and $0<\hbar
<\infty,$ let $L_{\hbar}^{M}\left(  t\right)  =\mathcal{P}_{M}L_{\hbar}\left(
t\right)  \mathcal{P}_{M}$ be the level $M$ truncation of $L_{\hbar}\left(
t\right)  $ (see Notation \ref{not.3.45}) and let $W_{\hbar}^{M}\left(
t,s\right)  $ be the associated \textbf{truncated evolution }defined to be the
solution to the ordinary differential equation,
\begin{equation}
i\frac{d}{dt}W_{\hbar}^{M}\left(  t,s\right)  =L_{\hbar}^{M}\left(  t\right)
W_{\hbar}^{M}\left(  t,s\right)  \text{ with }W^{M}\left(  s,s\right)  =I
\label{equ.8.2}%
\end{equation}
as in Section \ref{sec.4.1}. We further let $W_{\hbar}^{M}\left(  t\right)
=W_{\hbar}^{M}\left(  t,0\right)  .$
\end{definition}

From the results of Theorem \ref{the.4.5} with $Q_{M}\left(  t\right)
=L_{\hbar}^{M}\left(  t\right)  $ and $U^{M}\left(  t,s\right)  =W_{\hbar}%
^{M}\left(  t,s\right)  ,$ we know that $W_{\hbar}^{M}\left(  t,s\right)  $ is
unitary on $L^{2}\left(  m\right)  $ and
\[
W_{\hbar}^{M}\left(  t,s\right)  =W_{\hbar}^{M}\left(  t,0\right)  W_{\hbar
}^{M}\left(  0,s\right)  =W_{\hbar}^{M}\left(  t\right)  W_{\hbar}^{M}\left(
s\right)  ^{\ast}%
\]
and in particular, $W_{\hbar}^{M}\left(  t\right)  ^{\ast}=W_{\hbar}%
^{M}\left(  0,t\right)  .$

\begin{proposition}
\label{pro.8.2}Suppose that $H\left(  \theta,\theta^{\ast}\right)
\in\mathbb{R}\left\langle \theta,\theta^{\ast}\right\rangle $ and $\eta>0$
satisfy Assumption \ref{ass.1}, $d=\deg_{\theta}H>0\in2\mathbb{N},$ and
further let $W_{\hbar}\left(  t,s\right)  ,$ $W_{0}\left(  t,s\right)  $ and
$W_{\hbar}^{M}\left(  t,s\right)  $ be as in Eq. (\ref{equ.7.14}), Notation
\ref{not.7.10}, and Definition \ref{def.8.1} respectively. If $\psi\in
D\left(  \mathcal{N}^{\frac{d}{2}}\right)  $ and $0<\hbar<\eta,$ then
\begin{equation}
W_{\hbar}\left(  t,s\right)  \psi-W_{\hbar}^{M}\left(  t,s\right)  \psi
=i\int_{s}^{t}W_{\hbar}\left(  t,\tau\right)  \left[  L_{\hbar}^{M}\left(
\tau\right)  -\overline{L_{\hbar}\left(  \tau\right)  }\right]  W_{\hbar}%
^{M}\left(  \tau,s\right)  \psi d\tau\label{equ.8.3}%
\end{equation}
and
\begin{equation}
W_{\hbar}\left(  t,s\right)  \psi-W_{0}\left(  t,s\right)  \psi=i\int_{s}%
^{t}W_{\hbar}\left(  t,\tau\right)  \left[  H_{2}\left(  \alpha\left(
\tau\right)  :\bar{a},a^{\ast}\right)  -\overline{L_{\hbar}\left(
\tau\right)  }\right]  W_{0}\left(  \tau,s\right)  \psi d\tau\label{equ.8.4}%
\end{equation}
where $L_{\hbar}\left(  t\right)  $ and $H_{2}\left(  \alpha\left(
\tau\right)  :\bar{a},a^{\ast}\right)  $ are as in Eqs. (\ref{equ.7.16}) and
(\ref{equ.7.20}) and $L_{\hbar}^{M}\left(  \tau\right)  =\mathcal{P}%
_{M}L_{\hbar}\left(  t\right)  \mathcal{P}_{M}$ as in Definition
\ref{def.8.1}. [The integrands in Eqs. (\ref{equ.8.3}) and (\ref{equ.8.4}) are
$L^{2}\left(  m\right)  $-norm continuous functions of $\tau$ and therefore
the integrals above are well defined.]
\end{proposition}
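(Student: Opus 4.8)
The plan is to prove both formulas by the standard "Duhamel" (variation of parameters) argument, treating $W_\hbar(t,s)$ as the "true" evolution and $W_\hbar^M(t,s)$ (resp.\ $W_0(t,s)$) as a comparison evolution with a slightly different generator. First I would fix $\psi \in D(\mathcal{N}^{d/2})$ and $s,t \in \mathbb{R}$, and set $d_0 = \deg_\theta L_\hbar(t) \le d$. The key input for Eq.~(\ref{equ.8.3}) is that both $t \mapsto W_\hbar(t,s)\psi$ and $t \mapsto W_\hbar^M(t,s)\psi$ are $\|\cdot\|_0$-differentiable with the derivative formulas from Theorem~\ref{the.7.8} (for $W_\hbar$, with $\beta = 0$) and Theorem~\ref{the.4.5} item~3 (for $W_\hbar^M$, again $\beta = 0$, after noting $L_\hbar^M(\tau) = [P(\tau:a,a^\dagger)]_M$ for the appropriate symmetric polynomial $P$ with $\deg_\theta P \le d$). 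I would then consider the map $\tau \mapsto g(\tau) := W_\hbar(t,\tau) W_\hbar^M(\tau,s)\psi$, where $W_\hbar(t,\tau) = W_\hbar(t)W_\hbar(\tau)^*$ via Eq.~(\ref{equ.7.14}), and differentiate it in $\tau$ using the product rule. The $\partial_\tau$ derivative of $W_\hbar(t,\tau)\varphi$ is $+i\, W_\hbar(t,\tau) \overline{L_\hbar(\tau)}\varphi$ (this is Eq.~(\ref{equ.7.18}), since $W_\hbar(t,\tau) = W_\hbar(t,s')W_\hbar(s',\tau)$-type relations and unitarity let us read off the adjoint-side derivative), while $\partial_\tau W_\hbar^M(\tau,s)\psi = -i\, L_\hbar^M(\tau) W_\hbar^M(\tau,s)\psi$ by Eq.~(\ref{equ.4.10}). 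Combining gives
\[
\frac{d}{d\tau} g(\tau) = i\, W_\hbar(t,\tau)\bigl[\overline{L_\hbar(\tau)} - L_\hbar^M(\tau)\bigr] W_\hbar^M(\tau,s)\psi,
\]
and integrating from $s$ to $t$ (noting $g(s) = W_\hbar(t,s)\psi$ and $g(t) = W_\hbar^M(t,s)\psi$) yields Eq.~(\ref{equ.8.3}) after a sign change. The derivation of Eq.~(\ref{equ.8.4}) is identical with $W_\hbar^M$ replaced by $W_0$: here $\partial_\tau W_0(\tau,s)\psi = -i\, \overline{H_2(\alpha(\tau):a,a^\dagger)} W_0(\tau,s)\psi$ by Corollary~\ref{cor.5.14} (and the remark there that $\overline{H_2(\cdots:a,a^\dagger)}$ and $H_2(\cdots:\bar a, a^*)$ agree on $D(\mathcal{N})$), and $W_0(\tau,s)\psi \in \mathcal{S}$ or at least in $D(\mathcal{N}^{d/2})$ by Corollary~\ref{cor.5.14} item~2.

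The main technical point — and the step I expect to require the most care — is the justification of the product rule used to differentiate $g(\tau)$, i.e.\ that one may combine the two one-sided strong derivatives. The issue is that $W_\hbar(t,\tau)$ is only a strongly continuous family (not norm-continuous) and $W_\hbar^M(\tau,s)\psi$ moves in $D(\mathcal{N}^{d/2})$, so to apply a Leibniz rule one needs local uniform bounds. I would handle this exactly as in Lemma~\ref{lem.7.1} (the Product Rule lemma) and Remark~\ref{rem.5.10}: write the difference quotient of $g$ as a sum of two terms, use that $W_\hbar(t,\tau)$ is uniformly bounded on $D(\mathcal{N}^{\beta})$ for $|\tau|$ in a compact set (Theorem~\ref{the.7.12}, or directly Eq.~(\ref{equ.7.21})), and that $\bigl[\overline{L_\hbar(\tau)} - L_\hbar^M(\tau)\bigr]$ maps $D(\mathcal{N}^{d/2})$ boundedly into $L^2(m)$ by Proposition~\ref{pro.3.39}/Corollary~\ref{cor.3.40} and Proposition~\ref{pro.3.46}. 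The strong continuity of $\tau \mapsto W_\hbar(t,\tau)$ together with Remark~\ref{rem.5.10} then lets one pass to the limit in the difference quotients. The same bounds, plus the strong $L^2$-continuity of all the factors, show the integrand in both equations is an $L^2(m)$-norm-continuous function of $\tau$ on the compact interval $J_{st}$, so the Riemann integrals make sense — this is the parenthetical remark at the end of the statement.

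Finally, I should double-check domain consistency throughout: $\psi \in D(\mathcal{N}^{d/2})$ guarantees $W_\hbar^M(\tau,s)\psi \in D(\mathcal{N}^{d/2})$ (by Theorem~\ref{the.4.5} item~3 with $\beta = d/2$, since $W_\hbar^M$ preserves each $D(\mathcal{N}^\beta)$) and likewise $W_0(\tau,s)\psi \in D(\mathcal{N}^{d/2})$ by Corollary~\ref{cor.5.14}, so that $\overline{L_\hbar(\tau)}$ and $H_2(\alpha(\tau):\bar a, a^*)$ may legitimately be applied; and $W_\hbar(t,s)\psi \in D(\mathcal{N})$ by Proposition~\ref{pro.7.6}, so Eq.~(\ref{equ.7.18}) applies. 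With these pieces in place the argument is a routine fundamental-theorem-of-calculus computation, and no genuinely new estimate beyond those already established in Sections~\ref{sec.3}--\ref{sec.7} is needed.
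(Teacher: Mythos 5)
Your overall strategy is exactly the paper's: write $g(\tau):=W_{\hbar}(t,\tau)W_{\hbar}^{M}(\tau,s)\psi$ (resp.\ with $W_{0}$ in place of $W_{\hbar}^{M}$), differentiate via the product rule of Lemma \ref{lem.7.1} using Eqs. (\ref{equ.7.18}) and (\ref{equ.4.10}), and integrate; your sign bookkeeping is correct and your identification of which continuity/boundedness facts make the integrand $L^{2}(m)$-continuous matches the paper's.

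There is, however, one genuine gap in the way you order the argument. You fix $\psi\in D(\mathcal{N}^{d/2})$ at the outset and propose to differentiate $g(\tau)$ directly for such $\psi$. The obstruction is in the second term of the difference quotient, $\Delta^{-1}\left[W_{\hbar}(t,\tau+\Delta)-W_{\hbar}(t,\tau)\right]W_{\hbar}^{M}(\tau,s)\psi$: the derivative formula (\ref{equ.7.18}) for $s\mapsto W_{\hbar}(t,s)\varphi$ is established in Theorem \ref{the.7.8} only for $\varphi\in\mathcal{S}$ (indeed $L_{\hbar}(s)$ is defined as an operator on $\mathcal{S}$), and $W_{\hbar}^{M}(\tau,s)\psi$ lies only in $D(\mathcal{N}^{d/2})$, not in $\mathcal{S}$, when $\psi\notin\mathcal{S}$. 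The uniform bounds you invoke (Theorem \ref{the.7.12}, Proposition \ref{pro.3.46}) do not supply this missing derivative on non-Schwartz vectors. Your remark that ``$W_{\hbar}(t,s)\psi\in D(\mathcal{N})$ by Proposition \ref{pro.7.6}, so Eq. (\ref{equ.7.18}) applies'' conflates membership of the image in $D(\mathcal{N})$ with differentiability at a non-Schwartz argument. The paper's fix is to split the proof in two: first run your argument verbatim for $\psi=\varphi\in\mathcal{S}$ (where $W_{\hbar}^{M}(\tau,s)\varphi\in\mathcal{S}$ and Lemma \ref{lem.7.1} applies cleanly), obtaining the integral identities on $\mathcal{S}$; then extend the \emph{integrated} identities to all $\psi\in D(\mathcal{N}^{d/2})$ by density of $\mathcal{S}$ in $D(\mathcal{N}^{d/2})$, using that both sides are bounded from $\left(D(\mathcal{N}^{d/2}),\left\Vert\cdot\right\Vert_{d/2}\right)$ to $L^{2}(m)$ uniformly in $\tau\in J_{s,t}$ (the $\sup_{\tau}\left\Vert F(\tau)\right\Vert_{d/2\rightarrow 0}<\infty$ bound). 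With that reordering your proof is complete.
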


\begin{proof}
Let $B\left(  D\left(  \mathcal{N}^{\frac{d}{2}}\right)  ,L^{2}\left(
m\right)  \right)  $ denote the space of bounded linear operators from
$D\left(  \mathcal{N}^{\frac{d}{2}}\right)  $ to $L^{2}\left(  m\right)  .$
The integrals in Eq. (\ref{equ.8.3}) and (\ref{equ.8.4}) may be interpreted as
$L^{2}\left(  m\right)  $ -- valued Riemann integrals because their integrands
are $L^{2}\left(  m\right)  $ -- continuous functions of $\tau.$ This is
consequence of the observations that both
\begin{align*}
F\left(  \tau\right)   &  :=W_{\hbar}\left(  t,\tau\right)  \left[  L_{\hbar
}^{M}\left(  \tau\right)  -\overline{L_{\hbar}\left(  \tau\right)  }\right]
W_{\hbar}^{M}\left(  \tau,s\right)  \text{ and}\\
G\left(  \tau\right)   &  :=W_{\hbar}\left(  t,\tau\right)  \left[
H_{2}\left(  \alpha\left(  \tau\right)  :\bar{a},a^{\ast}\right)
-\overline{L_{\hbar}\left(  \tau\right)  }\right]  W_{0}\left(  \tau,s\right)
\end{align*}
are strongly continuous $B\left(  D\left(  \mathcal{N}^{\frac{d}{2}}\right)
,L^{2}\left(  m\right)  \right)  $ -- valued functions of $\tau.$ To verify
this assertion recall that;

\begin{enumerate}
\item $\tau\rightarrow W_{\hbar}^{M}\left(  \tau,s\right)  $ is $\left\Vert
\cdot\right\Vert _{d/2\rightarrow d/2}$ continuous by Item 3. of Theorem
\ref{the.4.5} and $\tau\rightarrow W_{0}\left(  \tau,s\right)  \psi$ is
$\left\Vert \cdot\right\Vert _{\frac{d}{2}}$ -- continuous by Corollary
\ref{cor.5.14}.

\item Both $L_{\hbar}^{M}\left(  \tau\right)  -\overline{L_{\hbar}\left(
\tau\right)  }$ and $H_{2}\left(  \alpha\left(  \tau\right)  :\bar{a},a^{\ast
}\right)  -\overline{L_{\hbar}\left(  \tau\right)  }$ are easily seen to be
strongly continuous as functions of $\tau$ with values in $B\left(  D\left(
\mathcal{N}^{\frac{d}{2}}\right)  ,L^{2}\left(  m\right)  \right)  $ by using
Corollary \ref{cor.3.40} and noting that the coefficients of the four
operators depend continuously on $\tau.$

\item The map, $\tau\rightarrow W_{\hbar}\left(  t,\tau\right)  $ is strongly
continuous on $L^{2}\left(  m\right)  $ by Theorem \ref{the.7.8}.
\end{enumerate}

As strong continuity is preserved under operator products, it follows that
both $F\left(  \tau\right)  $ and $G\left(  \tau\right)  $ are strongly continuous.

By Remark \ref{rem.4.7} and Proposition \ref{pro.7.6} we know that $W_{\hbar
}^{M}\left(  t,s\right)  \mathcal{S}=\mathcal{S}$ and $W_{\hbar}\left(
t,s\right)  \mathcal{S}=\mathcal{S}.$ Moreover, from item 3. of Theorem
\ref{the.4.5} and Theorem \ref{the.7.8}, if $\varphi\in\mathcal{S},$ then both
$t\rightarrow W_{\hbar}^{M}\left(  t,s\right)  \varphi$ and $t\rightarrow
W_{\hbar}\left(  t,s\right)  \varphi$ and are $\left\Vert \cdot\right\Vert
_{\beta}$-differentiable for $\beta\geq0.$ Since $W_{\hbar}\left(  t,s\right)
$ is unitary (see Eq. (\ref{equ.7.14})), it follows that $\sup_{t,s\in
\mathbb{R}}\left\Vert W_{\hbar}\left(  t,s\right)  \right\Vert _{0\rightarrow
0}=1.$ Therefore, by applying Lemma \ref{lem.7.1} with $U\left(  \tau\right)
=W_{\hbar}\left(  t,\tau\right)  ,$ $P\left(  \theta,\theta^{\ast}\right)
=1,$ and $T\left(  \tau\right)  =W_{\hbar}^{M}\left(  \tau,s\right)  $ while
making use of Eqs. (\ref{equ.7.18}) and (\ref{equ.8.2}) to find,
\[
i\frac{d}{d\tau}W_{\hbar}\left(  t,\tau\right)  W_{\hbar}^{M}\left(
\tau,s\right)  \varphi=F\left(  \tau\right)  \varphi.
\]
A similar arguments using Corollary \ref{cor.5.14} in place of Theorem
\ref{the.4.5} shows,
\[
i\frac{d}{d\tau}W_{\hbar}\left(  t,\tau\right)  W_{0}\left(  \tau,s\right)
\varphi=G\left(  \tau\right)  \varphi.
\]
Equations (\ref{equ.8.3}) and (\ref{equ.8.4}) now follow for $\psi=\varphi
\in\mathcal{S}$ by integrating the last two displayed equations and making use
of the fundamental theorem of calculus.

By the uniform boundedness principle (or by direct estimates already
provided), it follows that
\[
\sup_{\tau\in J_{s,t}}\left\Vert F\left(  \tau\right)  \right\Vert _{\frac
{d}{2}\rightarrow0}<\infty\text{ and }\sup_{\tau\in J_{s,t}}\left\Vert
G\left(  \tau\right)  \right\Vert _{\frac{d}{2}\rightarrow0}<\infty,
\]
where $J_{s,t}:=\left[  \min\left(  s,t\right)  ,\max\left(  s,t\right)
\right]  . $ Because of these observation and the fact that $\mathcal{S}$ is
dense in $D\left(  \mathcal{N}^{\frac{d}{2}}\right)  ,$ it follows that by a
standard \textquotedblleft$\varepsilon/3$ -- argument\textquotedblright that
Eqs. (\ref{equ.8.3}) and (\ref{equ.8.4}) are valid for all $\psi\in D\left(
\mathcal{N}^{\frac{d}{2}}\right)  .$
\end{proof}

\begin{theorem}
\label{the.8.3}Let $0<\eta\leq1$, $H\left(  \theta,\theta^{\ast}\right)
\in\mathbb{R}\left\langle \theta,\theta^{\ast}\right\rangle $ be a polynomial
of degree $d$ satisfying Assumption \ref{ass.1} and $d\geq2$ be an even
number. Then for all $\beta\geq d/2$ and $-\infty<S<T<\infty,$ there exists a
constant, $K\left(  \beta,\alpha_{0},H,S,T\right)  <\infty$ such that
\begin{equation}
\sup_{S<s,t<T}\left\Vert W_{\hbar}\left(  t,s\right)  -W_{\hbar}^{\hbar^{-1}%
}\left(  t,s\right)  \right\Vert _{\beta\rightarrow0}\leq K\left(
\beta,\alpha_{0},H,S,T\right)  \hbar^{\beta-1} \text{~}\forall~0<\hbar<\eta.
\label{equ.8.5}%
\end{equation}

\end{theorem}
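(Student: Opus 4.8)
The plan is to combine the integral identity of Proposition \ref{pro.8.2} with the crude $\hbar$-dependent bounds on $W_\hbar$ and the truncation estimates of Section \ref{sub.3.5}. First I would invoke Eq. (\ref{equ.8.3}) with $M=\hbar^{-1}$, so that for $\psi\in D(\mathcal{N}^{d/2})$,
\[
W_\hbar(t,s)\psi-W_\hbar^{\hbar^{-1}}(t,s)\psi
=i\int_s^t W_\hbar(t,\tau)\bigl[L_\hbar^{\hbar^{-1}}(\tau)-\overline{L_\hbar(\tau)}\bigr]W_\hbar^{\hbar^{-1}}(\tau,s)\psi\,d\tau.
\]
Since $W_\hbar(t,\tau)$ is unitary on $L^2(m)$, the norm of the left side is bounded by $\int_{J_{s,t}}\bigl\|[L_\hbar^{\hbar^{-1}}(\tau)-\overline{L_\hbar(\tau)}]W_\hbar^{\hbar^{-1}}(\tau,s)\psi\bigr\|\,d\tau$, so the whole task reduces to estimating, uniformly for $\tau\in[S,T]$, the quantity $\bigl\|[L_\hbar^{\hbar^{-1}}(\tau)-\overline{L_\hbar(\tau)}]\,\varphi\bigr\|$ applied to $\varphi=W_\hbar^{\hbar^{-1}}(\tau,s)\psi$, and then controlling $\|\varphi\|$ in a suitable $\beta'$-norm.

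Next I would unpack $L_\hbar(\tau)=\sum_{k=2}^d \hbar^{k/2-1}H_k(\alpha(\tau):a,a^\dag)$ from Eq. (\ref{equ.8.1}). Each $H_k(\alpha(\tau):a,a^\dag)$ is a polynomial operator of $\theta$-degree $k$ whose coefficients are continuous (indeed smooth) in $\tau$ and bounded on $[S,T]$ because $\sup_t|\alpha(t)|<\infty$ by Proposition \ref{pro.3.13}; Theorem \ref{the.2.22} / Proposition \ref{pro.3.5} give explicit control. The key input is Corollary \ref{cor.3.50}: with $M=\hbar^{-1}\ge d$ (valid once $\hbar<\eta$ is small; one absorbs the finitely many larger $\hbar$ into the constant), for $\alpha'\ge\beta'+k/2$ one has $\|[H_k(a,a^\dag)]_{M}-H_k(\bar a,a^\ast)\|_{\alpha'\to\beta'}\le C(M-k+2)^{\beta'+k/2-\alpha'}$. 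Taking $\beta'=0$ and $\alpha'=d/2$ (so $\alpha'\ge 0+k/2$ for all $k\le d$), this gives a bound of order $(\hbar^{-1})^{k/2-d/2}=\hbar^{d/2-k/2}$. Multiplying by the prefactor $\hbar^{k/2-1}$ produces $\hbar^{k/2-1}\cdot\hbar^{d/2-k/2}=\hbar^{d/2-1}$, a bound that is uniform in $k$ — this is the numerology that makes the theorem work. Summing over $2\le k\le d$ and using that $[\,\cdot\,]_M$ is linear in the polynomial, I get $\|L_\hbar^{\hbar^{-1}}(\tau)-\overline{L_\hbar(\tau)}\|_{d/2\to 0}\le C(H,\alpha_0,S,T)\,\hbar^{d/2-1}$, where I also use Corollary \ref{cor.3.40} to identify $\overline{L_\hbar(\tau)}$ acting on $D(\mathcal N^{d/2})$ with $\sum_k \hbar^{k/2-1}H_k(\bar a,a^\ast)$.

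Then I would control $\|W_\hbar^{\hbar^{-1}}(\tau,s)\psi\|_{d/2}$. Here I apply Corollary \ref{cor.4.8} to the truncated generator. Writing $\hbar L_\hbar(t)=\sum_{k=2}^d \hbar^{k/2}H_k(\alpha(t):a,a^\dag)=\sum_k H_k(\alpha(t):a_\hbar,a_\hbar^\dag)$, we are precisely in the situation of Corollary \ref{cor.4.8} with $P_1\equiv 0$, so with $M=\hbar^{-1}$ we get $\hbar M+1=2$ and
\[
\bigl\|W_\hbar^{\hbar^{-1}}(t,s)\bigr\|_{\beta'\to\beta'}\le \exp\!\Bigl(K(\beta',d)\,2^{d/2-1}\sum_{k=2}^d\textstyle\int_{J_{s,t}}|H_k(\tau:\theta,\theta^\ast)|\,d\tau\Bigr),
\]
which is finite and uniform for $S\le s,t\le T$ and all $0<\hbar<\eta$; call this bound $C_{\beta'}(\alpha_0,H,S,T)$. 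Applying this with $\beta'=d/2$ turns $\|W_\hbar^{\hbar^{-1}}(\tau,s)\psi\|_{d/2}$ into $C_{d/2}\|\psi\|_{d/2}$. Putting the three pieces together,
\[
\bigl\|W_\hbar(t,s)\psi-W_\hbar^{\hbar^{-1}}(t,s)\psi\bigr\|
\le |t-s|\cdot C(H,\alpha_0,S,T)\hbar^{d/2-1}\cdot C_{d/2}\|\psi\|_{d/2},
\]
and since $|t-s|\le T-S$ and $\|\psi\|_{d/2}\le\|\psi\|_\beta$ for $\beta\ge d/2$, and $\hbar^{d/2-1}\le\hbar^{\beta-1}$ is \emph{false} in general — so here I must be a little careful: the natural estimate gives $\hbar^{d/2-1}$, which for $\beta=d/2$ is exactly $\hbar^{\beta-1}$; for larger $\beta$ one re-runs the Corollary \ref{cor.3.50} step with $\alpha'=\beta$ rather than $d/2$ (legitimate since $\psi\in D(\mathcal N^\beta)$), which replaces $(\hbar^{-1})^{k/2-d/2}$ by $(\hbar^{-1})^{k/2-\beta}$ and hence yields $\hbar^{k/2-1}\hbar^{\beta-k/2}=\hbar^{\beta-1}$, and likewise one takes $\beta'=\beta$ in the Corollary \ref{cor.4.8} bound. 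Assembling gives Eq. (\ref{equ.8.5}) with $K(\beta,\alpha_0,H,S,T)=(T-S)\,C(\beta,H,\alpha_0,S,T)\,C_\beta(\alpha_0,H,S,T)$.

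The main obstacle is making sure the two exponents line up: the decay from truncation (Corollary \ref{cor.3.50}) must beat the blow-up of the prefactors $\hbar^{k/2-1}$ for \emph{every} $k$ in the range $2\le k\le d$ simultaneously, which is exactly why the choice $M=\hbar^{-1}$ (rather than some other power of $\hbar^{-1}$) is forced, and why one needs $\psi$ in a high enough Sobolev class $D(\mathcal N^\beta)$ with $\beta\ge d/2$ so that Corollary \ref{cor.3.50} applies with $\alpha'=\beta\ge\beta'+k/2$ for all relevant $k$. A secondary technical point is justifying that the integrand in Eq. (\ref{equ.8.3}) is $L^2$-norm continuous and that all the operator products are well defined on $\mathcal S$ (hence the identity extends to $D(\mathcal N^{d/2})$ by density) — but this is already handled inside Proposition \ref{pro.8.2}, so I would simply cite it.
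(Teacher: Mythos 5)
Your proposal is correct and follows essentially the same route as the paper's proof: the integral identity of Proposition \ref{pro.8.2} with $M=\hbar^{-1}$, unitarity of $W_{\hbar}(t,\tau)$ on $L^{2}(m)$, Corollary \ref{cor.3.50} (applied with target norm $0$ and source norm $\beta$) to get the $\hbar^{\beta-1}$ decay after cancelling the $\hbar^{k/2-1}$ prefactors, and Eq. (\ref{equ.4.15}) of Corollary \ref{cor.4.8} to bound $\|W_{\hbar}^{\hbar^{-1}}(\tau,s)\|_{\beta\rightarrow\beta}$ uniformly since $\hbar M+1=2$. The paper likewise disposes of the regime $\hbar\geq d^{-1}$ (where $M=\hbar^{-1}<d$) by the trivial bound on a difference of unitaries, exactly as you indicate.
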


\begin{proof}
Since $W_{\hbar}\left(  t,s \right)  $ and $W^{\hbar^{-1}}_{\hbar}\left(
t,s\right)  $ are unitary from Theorem \ref{the.4.5} and Eq. (\ref{equ.7.14})
and $\left\Vert \cdot\right\Vert _{\beta}\geq\left\Vert \cdot\right\Vert _{0}
$ in Remark \ref{rem.3.31}, it follows
\begin{equation}
\sup_{S<s,t<T}\left\Vert W_{\hbar}\left(  t,s\right)  -W_{\hbar}^{\hbar^{-1}%
}\left(  t,s\right)  \right\Vert _{\beta\rightarrow0}\leq1,
\end{equation}
and hence Eq. (\ref{equ.8.5}) holds if $\eta\wedge d^{-1}\leq\hbar< \eta$. The
remaining thing to show is Eq.(\ref{equ.8.5}) still holds for $0<\hbar
<\eta\wedge d^{-1}$.

Let $\psi\in D\left(  \mathcal{N}^{\beta}\right)  \subset D\left(
\mathcal{N}^{d/2}\right)  .$ Taking the $L^{2}\left(  m\right)  $ -- norm of
Eq. (\ref{equ.8.3}) implies,
\begin{equation}
\left\Vert \left[  W_{\hbar}\left(  t,s\right)  -W_{\hbar}^{M}\left(
t,s\right)  \right]  \psi\right\Vert \leq\int_{J_{s,t}}\left\Vert W_{\hbar
}\left(  t,\tau\right)  \left[  L_{\hbar}^{M}\left(  \tau\right)
-\overline{L_{\hbar}\left(  \tau\right)  }\right]  W_{\hbar}^{M}\left(
\tau,s\right)  \psi\right\Vert d\tau, \label{equ.8.7}%
\end{equation}
where
\begin{align}
&  \left\Vert W_{\hbar}\left(  t,\tau\right)  \left[  L_{\hbar}^{M}\left(
\tau\right)  -\overline{L_{\hbar}\left(  \tau\right)  }\right]  W_{\hbar}%
^{M}\left(  \tau,s\right)  \psi\right\Vert \nonumber\\
&  \qquad=\left\Vert \left[  L_{\hbar}^{M}\left(  \tau\right)  -\overline
{L_{\hbar}\left(  \tau\right)  }\right]  W_{\hbar}^{M}\left(  \tau,s\right)
\psi\right\Vert \nonumber\\
&  \qquad\leq\left\Vert L_{\hbar}^{M}\left(  \tau\right)  -\overline{L_{\hbar
}\left(  \tau\right)  }\right\Vert _{\beta\rightarrow0}\left\Vert W_{\hbar
}^{M}\left(  \tau,s\right)  \right\Vert _{\beta\rightarrow\beta}\left\Vert
\psi\right\Vert _{\beta}. \label{equ.8.8}%
\end{align}
In order to simplify this estimate further, let
\[
P\left(  \hbar,t:\theta,\theta^{\ast}\right)  =\sum_{k=2}^{d}\hbar^{\frac
{k}{2}-1}H_{k}\left(  \alpha\left(  t\right)  :\theta,\theta^{\ast}\right)  ,
\]
in which case, $L_{\hbar}\left(  t\right)  =P\left(  \hbar,t:a,a^{\dag
}\right)  .$ It follows from Corollary \ref{cor.3.50} with $\beta=0$ and
$\alpha\rightarrow\beta$ that (for $M\geq d)$
\begin{align*}
\left\Vert L_{\hbar}^{M}\left(  \tau\right)  -\overline{L_{\hbar}\left(
\tau\right)  }\right\Vert _{\beta\rightarrow0}  &  \leq\sum_{k=2}^{d}%
\hbar^{\frac{k}{2}-1}\left\vert H_{k}\left(  \alpha\left(  t\right)
:\theta,\theta^{\ast}\right)  \right\vert \left(  M-k+2\right)  ^{k/2-\beta}\\
&  \leq K\left(  \alpha_{0},H\right)  \hbar^{-1}\sum_{k=2}^{d}\left(  \hbar
M-k\hbar+2\hbar\right)  ^{k/2}\left(  M-k+2\right)  ^{-\beta}\text{ }%
\end{align*}
and from Eq. (\ref{equ.4.15}) that
\begin{align*}
\left\Vert W_{\hbar}^{M}\left(  \tau,s\right)  \right\Vert _{\beta
\rightarrow\beta}  &  \leq e^{K\left(  \beta,d\right)  \left(  \hbar
M+1\right)  ^{\frac{d}{2}-1}\sum_{k=2}^{d}\int_{J_{s,\tau}}\left\vert
\hbar^{\frac{k}{2}-1}H_{k}\left(  \alpha\left(  \sigma\right)  :\theta
,\theta^{\ast}\right)  \right\vert d\sigma}\\
&  \leq e^{\tilde{K}\left(  \beta,d,H\right)  \left(  \hbar M+1\right)
^{\frac{d}{2}-1}\left\vert t-s\right\vert }.
\end{align*}
Thus reducing to the case where $M=\hbar^{-1}$ (i.e. $M\hbar=1)$ we see there
exists $\tilde{K}\left(  \beta,\alpha_{0},H,S,T\right)  <\infty$ such that
\[
\left\Vert L_{\hbar}^{\hbar^{-1}}\left(  \tau\right)  -\overline{L_{\hbar
}\left(  \tau\right)  }\right\Vert _{\beta\rightarrow0}\left\Vert W_{\hbar
}^{\hbar^{-1}}\left(  \tau,s\right)  \right\Vert _{\beta\rightarrow\beta}%
\leq\tilde{K}\left(  \beta,\alpha_{0},H,S,T\right)  \hbar^{\beta-1}%
\]
which combined with Eqs. (\ref{equ.8.7}) and (\ref{equ.8.8}) implies Eq.
(\ref{equ.8.5}) with $K\left(  \beta,\alpha_{0},H,S,T\right)  =\tilde
{K}\left(  \beta,\alpha_{0},H,S,T\right)  \left[  T-S\right]  .$
\end{proof}

\section{Proof of the main Theorems\label{sec.9}}

The next theorem combines the crude bound in Theorem \ref{the.7.12} with the
asymptotics of the truncated evolutions in Theorem \ref{the.8.3} in order to
give a much improved version of Theorem \ref{the.7.12}.

\begin{theorem}
[$N$ -- Sobolev Boundedness of $W_{\hbar}\left(  t\right)  $]\label{the.9.1}%
Suppose that $H\left(  \theta,\theta^{\ast}\right)  \in\mathbb{R}\left\langle
\theta,\theta^{\ast}\right\rangle $ and $\eta>0$ satisfy Assumption
\ref{ass.1}, $d=\deg_{\theta}H>0\in2\mathbb{N},$ and $W_{\hbar}\left(
t,s\right)  $ and $W_{\hbar}\left(  t\right)  $ be as in Eqs. (\ref{equ.7.14})
and (\ref{equ.7.13}) respectively. Then for each $\beta\geq0, $ $-\infty
<S<T<\infty,$ there exists $K_{\beta}\left(  S,T\right)  <\infty$ such that
for all $\psi\in D\left(  \mathcal{N}^{\left(  2\beta+1\right)  d}\right)  ,$
all $0<\hbar<\eta\leq1,$ and all $S\leq s,t\leq T$ we have
\begin{equation}
\left\Vert \mathcal{N}^{\beta}W_{\hbar}\left(  t,s\right)  \psi\right\Vert
\leq K_{\beta}\left(  S,T\right)  \left\Vert \psi\right\Vert _{\left(
2\beta+1\right)  d}, \label{equ.9.1}%
\end{equation}
and
\begin{equation}
\sup_{S\leq s,t\leq T}\left\Vert W_{\hbar}\left(  t,s\right)  \right\Vert
_{\left(  2\beta+1\right)  d\rightarrow\beta}\leq\tilde{K}_{\beta}\left(
S,T\right)  , \label{equ.9.2}%
\end{equation}
where
\begin{equation}
\tilde{K}_{\beta}\left(  S,T\right)  :=\left(  1+K_{\beta}\left(  S,T\right)
\right)  2^{\left(  \beta-1\right)  _{+}}. \label{equ.9.3}%
\end{equation}
In particular this estimate implies, for $0<\hbar<\eta\leq1,$
\begin{equation}
\sup_{S\leq t\leq T}\left[  \left\Vert W_{\hbar}\left(  t\right)  \right\Vert
_{\left(  2\beta+1\right)  d\rightarrow\beta}\vee\left\Vert W_{\hbar}^{\ast
}\left(  t\right)  \right\Vert _{\left(  2\beta+1\right)  d\rightarrow\beta
}\right]  \leq\tilde{K}_{\beta}\left(  S,T\right)  . \label{equ.9.4}%
\end{equation}
[The bound in Eq. (\ref{equ.9.2}) improves on the crude bound in Eq.
(\ref{equ.8.5}) in that the bound now does not blow up as $\hbar\downarrow0.]$
\end{theorem}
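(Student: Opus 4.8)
The plan is to bootstrap the crude bound of Theorem~\ref{the.7.12} (where the constant blows up like $\hbar^{-\beta}$) into the $\hbar$-independent bound~(\ref{equ.9.1}) by using Theorem~\ref{the.8.3} to replace $W_\hbar(t,s)$ by its truncation $W_\hbar^{\hbar^{-1}}(t,s)$, whose $\beta\rightarrow\beta$ operator norm is controlled by Corollary~\ref{cor.4.8} uniformly in $\hbar\in(0,\eta]$. The key point is that in Eq.~(\ref{equ.4.15}) the bound on $\|W_\hbar^M(t,s)\|_{\beta\to\beta}$ involves the factor $(\hbar M+1)^{d/2-1}$, which is exactly $2^{d/2-1}$ when $M=\hbar^{-1}$; so the truncated evolution at truncation level $M=\hbar^{-1}$ is bounded on $D(\mathcal{N}^\beta)$ by a constant depending only on $\beta$, $d$, $H$, and the time interval $[S,T]$ (through $\sup_{S\le t\le T}|H_k(\alpha(t):\theta,\theta^\ast)|$, which is finite since $t\mapsto\alpha(t)$ is continuous and, by Proposition~\ref{pro.3.13}, bounded).

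First I would fix $\beta\ge0$ and $\psi\in D(\mathcal{N}^{(2\beta+1)d})$ and write
\begin{equation}
\mathcal{N}^\beta W_\hbar(t,s)\psi=\mathcal{N}^\beta W_\hbar^{\hbar^{-1}}(t,s)\psi+\mathcal{N}^\beta\bigl[W_\hbar(t,s)-W_\hbar^{\hbar^{-1}}(t,s)\bigr]\psi. \label{equ.9.pf1}
\end{equation}
For the first term, since $\mathcal{N}^\beta\le(\mathcal{N}+1)^\beta$ and $W_\hbar^{\hbar^{-1}}(t,s)$ preserves $D(\mathcal{N}^\beta)$ (Theorem~\ref{the.4.5}, item~3), I estimate $\|\mathcal{N}^\beta W_\hbar^{\hbar^{-1}}(t,s)\psi\|\le\|W_\hbar^{\hbar^{-1}}(t,s)\|_{\beta\to\beta}\|\psi\|_\beta$ and invoke Corollary~\ref{cor.4.8} (with $P_1\equiv0$, which holds since $L_\hbar(t)$ starts at degree $2$, and with the substitution $P_k\to\hbar^{k/2-1}H_k(\alpha(\cdot):\theta,\theta^\ast)$, $M=\hbar^{-1}$) to bound this by $C_\beta(S,T)\|\psi\|_\beta\le C_\beta(S,T)\|\psi\|_{(2\beta+1)d}$. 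For the second term I apply Theorem~\ref{the.8.3} with $\beta$ there replaced by $(2\beta+1)d$ (note $(2\beta+1)d\ge d/2$, so the hypothesis is met and $\psi\in D(\mathcal{N}^{(2\beta+1)d})$ is in the relevant domain), which gives
\begin{equation}
\bigl\|\bigl[W_\hbar(t,s)-W_\hbar^{\hbar^{-1}}(t,s)\bigr]\psi\bigr\|_{L^2}\le K\bigl((2\beta+1)d,\alpha_0,H,S,T\bigr)\,\hbar^{(2\beta+1)d-1}\,\|\psi\|_{(2\beta+1)d}. \label{equ.9.pf2}
\end{equation}
This controls the difference only in $L^2$-norm, not in $\beta$-norm, so to apply $\mathcal{N}^\beta$ I interpolate: both $W_\hbar$ and $W_\hbar^{\hbar^{-1}}$ are unitary, so by the crude bound of Theorem~\ref{the.7.12} (applied to $W_\hbar$) and the analogous estimate for $W_\hbar^{\hbar^{-1}}$ coming from Eq.~(\ref{equ.4.15}) at $M=\hbar^{-1}$, the difference $D_\hbar:=[W_\hbar(t,s)-W_\hbar^{\hbar^{-1}}(t,s)]$ satisfies $\|D_\hbar\psi\|_{2\beta d}\lesssim\hbar^{-2\beta d}\|\psi\|_{(2\beta+1)d^2/?}$ — here I must be a little careful with the exponents; more cleanly, I use $\|\mathcal{N}^\beta D_\hbar\psi\|\le\|\mathcal{N}^{2\beta}D_\hbar\psi\|^{1/2}\|D_\hbar\psi\|^{1/2}$ (an interpolation inequality for the self-adjoint operator $\mathcal{N}$), bound $\|\mathcal{N}^{2\beta}D_\hbar\psi\|$ by the crude bounds ($\lesssim\hbar^{-2\beta}\|\psi\|_{2\beta d}$ for each of the two evolutions, using Theorem~\ref{the.7.12} and Corollary~\ref{cor.4.8}), and bound $\|D_\hbar\psi\|$ by~(\ref{equ.9.pf2}); the product of exponents is $\tfrac12(-2\beta)+\tfrac12((2\beta+1)d-1)\ge0$ because $d\ge2$, so this term is $O(1)$ in $\hbar$ times $\|\psi\|_{(2\beta+1)d}$.

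Combining the two pieces proves~(\ref{equ.9.1}); then~(\ref{equ.9.2}) follows from $\|W_\hbar(t,s)\psi\|_\beta^2=\sum_n|\langle\cdot\rangle|^2(n+1)^{2\beta}\le 2^{2(\beta-1)_+}(\|\mathcal{N}^\beta W_\hbar(t,s)\psi\|^2+\|W_\hbar(t,s)\psi\|^2)$ together with unitarity of $W_\hbar(t,s)$ on $L^2$, giving the stated $\tilde K_\beta(S,T)$; and~(\ref{equ.9.4}) is the special case $s=0$ (for $W_\hbar(t)$) together with the observation that $W_\hbar^\ast(t)=W_\hbar(0,t)$, so the same bound applies with $s$ and $t$ swapped inside $[S,T]$. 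The main obstacle I anticipate is the interpolation step controlling $\|\mathcal{N}^\beta D_\hbar\psi\|$: one has good $L^2$ decay of $D_\hbar\psi$ from Theorem~\ref{the.8.3} but only the crude (blowing-up) bound in higher Sobolev norms, and the whole argument works precisely because the power of $\hbar$ gained in Theorem~\ref{the.8.3} is large enough (thanks to $d\ge2$ and the factor $(2\beta+1)d-1$) to absorb the loss $\hbar^{-2\beta}$ from the crude high-norm bound after taking the geometric mean. I would double-check the bookkeeping of these exponents and the precise Sobolev index $(2\beta+1)d$ needed on $\psi$ so that all invocations of Theorems~\ref{the.7.12},~\ref{the.8.3} and Corollary~\ref{cor.4.8} are legitimate.
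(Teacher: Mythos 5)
Your proposal is correct and follows essentially the same route as the paper: crude bound (Theorem \ref{the.7.12}) plus the truncation asymptotics (Theorem \ref{the.8.3}) plus the $\hbar$-uniform bound on $W_{\hbar}^{\hbar^{-1}}$ from Corollary \ref{cor.4.8}, with the $L^{2}$-smallness of the difference traded against the $\hbar^{-2\beta}$ blow-up of the high Sobolev norm via a square-root/Cauchy--Schwarz balance. Your interpolation step $\left\Vert \mathcal{N}^{\beta}D_{\hbar}\psi\right\Vert ^{2}=\left\langle \mathcal{N}^{2\beta}D_{\hbar}\psi,D_{\hbar}\psi\right\rangle$ is just a reorganization of the paper's expansion of $\left\langle W_{\hbar}\psi,\mathcal{N}^{2\beta}W_{\hbar}\psi\right\rangle$ into the truncated quadratic form plus cross terms, and your exponent check $\tfrac{1}{2}\left(-2\beta+(2\beta+1)d-1\right)=\tfrac{1}{2}(d-1)(2\beta+1)>0$ is sound.
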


\begin{remark}
\label{rem.9.3}The bound in Eq.(\ref{equ.9.1}) is not tight in that the index,
$\left(  2\beta+1\right)  d,$ of the norm on the right side of this equation
is not claimed to be optimal.
\end{remark}

\begin{proof}
The case $\beta=0$ is a trivial and so we now assume $\beta>0.$ If $\psi\in
D\left(  \mathcal{N}^{\left(  2\beta+1\right)  d}\right)  ,$ then by
Proposition \ref{pro.7.6} $W_{\hbar}\left(  t,s\right)  \psi\in D\left(
\mathcal{N}^{2\left(  2\beta+1\right)  }\right)  .$ Some simple algebra then
shows $\left\langle W_{\hbar}\left(  t,s\right)  \psi,\mathcal{N}^{2\beta
}W_{\hbar}\left(  t,s\right)  \psi\right\rangle =A+B,$ where
\begin{align*}
A  &  :=\left\langle W_{\hbar}^{\hbar^{-1}}\left(  t,s\right)  \psi
,\mathcal{N}^{2\beta}W_{\hbar}^{\hbar^{-1}}\left(  t,s\right)  \psi
\right\rangle \text{ and }\\
B  &  :=\left\langle \left[  W_{\hbar}\left(  t,s\right)  -W_{\hbar}%
^{\hbar^{-1}}\left(  t,s\right)  \right]  \psi,\mathcal{N}^{2\beta}W_{\hbar
}\left(  t,s\right)  \psi\right\rangle \\
&  +\left\langle \mathcal{N}^{2\beta}W_{\hbar}^{\hbar^{-1}}\left(  t,s\right)
\psi,\left[  W_{\hbar}\left(  t,s\right)  -W_{\hbar}^{\hbar^{-1}}\left(
t,s\right)  \right]  \psi\right\rangle .
\end{align*}
The $\left\vert B\right\vert $ term is bounded by the following two terms.
\begin{align*}
\left\vert B\right\vert  &  \leq\left\Vert \left[  W_{\hbar}\left(
t,s\right)  -W_{\hbar}^{\hbar^{-1}}\left(  t,s\right)  \right]  \psi
\right\Vert \cdot\left\Vert \mathcal{N}^{2\beta}W_{\hbar}\left(  t,s\right)
\psi\right\Vert \\
&  +\left\Vert \left[  W_{\hbar}\left(  t,s\right)  -W_{\hbar}^{\hbar^{-1}%
}\left(  t,s\right)  \right]  \psi\right\Vert \cdot\left\Vert \mathcal{N}%
^{2\beta}W_{\hbar}^{\hbar^{-1}}\left(  t,s\right)  \psi\right\Vert .
\end{align*}
Therefore, using Eq. (\ref{equ.4.15}) in Corollary \ref{cor.4.8}, Theorem
\ref{the.8.3} with $\beta$ replaced by $\frac{d}{2}+2\beta,$ and Theorem
\ref{the.7.12}, it follows that
\begin{align}
\left\vert B\right\vert  &  \leq\left\Vert \left[  W_{\hbar}\left(
t,s\right)  -W_{\hbar}^{\hbar^{-1}}\left(  t,s\right)  \right]  \psi
\right\Vert \cdot\left(  \left\Vert \mathcal{N}^{2\beta}W_{\hbar}\left(
t,s\right)  \psi\right\Vert +\left\Vert \mathcal{N}^{2\beta}W_{\hbar}%
^{\hbar^{-1}}\left(  t,s\right)  \psi\right\Vert \right) \nonumber\\
&  \leq C\hbar^{2\beta+\frac{d}{2}-1}\left\Vert \psi\right\Vert _{\frac{d}%
{2}+2\beta}\cdot\left(  \hbar^{-2\beta}\left\Vert \left(  \mathcal{N}%
+I\right)  ^{\beta d}\psi\right\Vert +\left\Vert \left(  \mathcal{N}+I\right)
^{2\beta}\psi\right\Vert \right) \nonumber\\
&  \leq C\hbar^{\frac{d}{2}-1}\left\Vert \psi\right\Vert _{\frac{d}{2}+2\beta
}\left(  \left\Vert \psi\right\Vert _{\beta d}+\hbar^{2\beta}\left\Vert
\psi\right\Vert _{2\beta}\right) \nonumber\\
&  \leq C\hbar^{\frac{d}{2}-1}\left\Vert \psi\right\Vert _{\left(
2\beta+1\right)  d}^{2}<\infty\text{ for all }S\leq s,t\leq T\text{ and }
0<\hbar<\eta. \label{equ.9.5}%
\end{align}
In the last inequality we have used, $\frac{d}{2}+2\beta\leq\left(
2\beta+1\right)  d$ when $\beta>0$ and $d\geq2.$ Corollary \ref{cor.4.8}
directly implies there exists $C>0$ such that
\[
\left\vert A\right\vert =\left\Vert \mathcal{N}^{\beta}W_{\hbar}^{\hbar^{-1}%
}\left(  t,s\right)  \psi\right\Vert _{\beta}^{2}\leq C\left\Vert
\psi\right\Vert _{\beta}^{2}\leq C\left\Vert \psi\right\Vert _{\left(
2\beta+1\right)  d}^{2}%
\]
for all $S\leq s,t\leq T$ and therefore, we get
\begin{equation}
\left\Vert \mathcal{N}^{\beta}W_{\hbar}\left(  t,s\right)  \psi\right\Vert
^{2}=\left\langle W_{\hbar}\left(  t,s\right)  \psi,\mathcal{N}^{2\beta
}W_{\hbar}\left(  t,s\right)  \psi\right\rangle \leq\left(  K_{\beta}\left(
S,T\right)  \right)  ^{2}\left\Vert \psi\right\Vert _{\left(  2\beta+1\right)
d}^{2} \label{equ.9.6}%
\end{equation}
for an appropriate constant $K_{\beta}\left(  S,T\right)  .$ Equation
(\ref{equ.9.1}) is proved and Eq. (\ref{equ.9.2}) is a consequence of Eq.
(\ref{equ.9.1}) and the inequality in Eq. (\ref{e.6.5}). Equation
(\ref{equ.9.2}) also implies Eq. (\ref{equ.9.4}) because $W_{\hbar
}(t)=W_{\hbar}\left(  t,0\right)  $and $W_{\hbar}^{\ast}\left(  t\right)
=W_{\hbar}\left(  0,t\right)  .$
\end{proof}

\begin{theorem}
\label{the.9.4}Suppose that $H\left(  \theta,\theta^{\ast}\right)
\in\mathbb{R}\left\langle \theta,\theta^{\ast}\right\rangle $ and $0<\eta
\leq1$ satisfy Assumptions \ref{ass.1}. Let $d=\deg_{\theta}H\in2\mathbb{N},$
$W_{\hbar}\left(  t,s\right)  ,$ and $W_{0}\left(  t,s\right)  $ be as in Eq.
(\ref{equ.7.14}) and Notation \ref{not.7.10} respectively. Then $W_{\hbar
}\left(  t,s\right)  \overset{s}{\rightarrow}W_{0}\left(  t,s\right)  $ as
$\hbar\downarrow0.$ Moreover for all $\beta\geq0$ and $-\infty<S<T<\infty$
there exists $K=K_{\beta}\left(  S,T\right)  <\infty$ such that, for
$0<\hbar<\eta\leq1$,
\begin{equation}
\sup_{S\leq s,t\leq T}\left\Vert \mathcal{N}^{\beta}\left(  W_{0}\left(
t,s\right)  -W_{\hbar}\left(  t,s\right)  \right)  \psi\right\Vert \leq
K\sqrt{\hbar}\left\Vert \psi\right\Vert _{\frac{d}{2}\left(  4\beta+3\right)
}~\forall~\psi\in D\left(  \mathcal{N}^{\frac{d}{2}\left(  4\beta+3\right)
}\right)  \label{equ.9.7}%
\end{equation}
and, with $\tilde{K}:=\left(  1+K\right)  2^{\left(  \beta-1\right)  _{+}},$%
\begin{equation}
\sup_{s,t\in\left[  S,T\right]  }\left\Vert W_{0}\left(  t,s\right)
-W_{\hbar}\left(  t,s\right)  \right\Vert _{\frac{d}{2}\left(  4\beta
+3\right)  \rightarrow\beta}\leq\tilde{K}\sqrt{\hbar}. \label{equ.9.8}%
\end{equation}
In particular, for $0<\hbar<\eta\leq1,$
\begin{equation}
\sup_{S\leq t\leq T}\left\Vert W_{0}\left(  t\right)  -W_{\hbar}\left(
t\right)  \right\Vert _{\frac{d}{2}\left(  4\beta+3\right)  \rightarrow\beta
}\vee\left\Vert W_{0}^{\ast}\left(  t\right)  -W_{\hbar}^{\ast}\left(
t\right)  \right\Vert _{\frac{d}{2}\left(  4\beta+3\right)  \rightarrow\beta
}\leq\tilde{K}\sqrt{\hbar}. \label{equ.9.9}%
\end{equation}

\end{theorem}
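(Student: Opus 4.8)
The plan is to combine the integral identity (\ref{equ.8.4}) of Proposition \ref{pro.8.2} with the uniform Sobolev bounds of Theorem \ref{the.9.1}. The starting observation is the algebraic simplification of the operator appearing in (\ref{equ.8.4}): on $D\left(\mathcal{N}^{d/2}\right)$, Corollary \ref{cor.3.40} together with the definition (\ref{equ.7.16}) of $L_{\hbar}$ gives
\begin{equation*}
H_{2}\left(\alpha\left(\tau\right):\bar{a},a^{\ast}\right)-\overline{L_{\hbar}\left(\tau\right)}=-\frac{1}{\hbar}H_{\geq3}\left(\alpha\left(\tau\right):\bar{a}_{\hbar},a_{\hbar}^{\ast}\right)=-\sqrt{\hbar}\sum_{k=3}^{d}\hbar^{\frac{k-3}{2}}H_{k}\left(\alpha\left(\tau\right):\bar{a},a^{\ast}\right),
\end{equation*}
so this operator is exactly $\sqrt{\hbar}$ times a non-commutative polynomial operator of $\theta$-degree at most $d$ whose $\ell^{1}$-norm of coefficients is bounded uniformly for $0<\hbar\leq1$ and for $\tau$ in a compact interval (using $\sup_{\tau}\left\vert\alpha\left(\tau\right)\right\vert<\infty$ from Proposition \ref{pro.3.13}). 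By Corollary \ref{cor.3.40} this operator therefore maps $D\left(\mathcal{N}^{\left(2\beta+1\right)d+d/2}\right)$ into $D\left(\mathcal{N}^{\left(2\beta+1\right)d}\right)$ with $\left(2\beta+1\right)d+d/2\rightarrow\left(2\beta+1\right)d$ operator norm $O\left(\sqrt{\hbar}\right)$, uniformly on compact $\tau$-intervals.

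Next I would apply $\mathcal{N}^{\beta}$ to (\ref{equ.8.4}) and pull it inside the $\tau$-integral; this is legitimate since the integrand is an $L^{2}\left(m\right)$-continuous function of $\tau$ (Proposition \ref{pro.8.2}) lying in $D\left(\mathcal{N}^{\beta}\right)$ whenever $\psi\in D\left(\mathcal{N}^{\frac{d}{2}\left(4\beta+3\right)}\right)$. Inside the integral I would estimate the three factors in order: Theorem \ref{the.9.1} bounds $\mathcal{N}^{\beta}W_{\hbar}\left(t,\tau\right)$ by an $\hbar$-independent constant at the cost of the index $\left(2\beta+1\right)d$; the estimate of the preceding paragraph supplies the decisive factor $\sqrt{\hbar}$ at the cost of a further $d/2$ in the index; and Corollary \ref{cor.5.14} absorbs $W_{0}\left(\tau,s\right)=W_{0}\left(\tau\right)W_{0}\left(s\right)^{\ast}$ with an index-preserving constant depending only on $S,T$. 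Since $\left(2\beta+1\right)d+d/2=\frac{d}{2}\left(4\beta+3\right)$ and $\left\vert J_{s,t}\right\vert\leq T-S$, integrating over $\tau$ produces (\ref{equ.9.7}).

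The remaining assertions are formal consequences. For (\ref{equ.9.8}) I would use the elementary inequality $\left(x+1\right)^{\beta}\leq2^{\left(\beta-1\right)_{+}}\left(x^{\beta}+1\right)$ from (\ref{e.6.5}) to write $\left\Vert\cdot\right\Vert_{\beta}^{2}\leq2^{\left(\beta-1\right)_{+}}\left(\left\Vert\mathcal{N}^{\beta}\cdot\right\Vert^{2}+\left\Vert\cdot\right\Vert^{2}\right)$, bound the $\mathcal{N}^{\beta}$-term by (\ref{equ.9.7}) and the $L^{2}$-term by the $\beta=0$ case of (\ref{equ.9.7}) (valid because $\frac{3d}{2}\leq\frac{d}{2}\left(4\beta+3\right)$), which yields the stated $\tilde{K}$. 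Then (\ref{equ.9.9}) is the specialization to $s=0$ and to $t=0$, using $W_{\hbar}^{\ast}\left(t\right)=W_{\hbar}\left(0,t\right)$ and $W_{0}^{\ast}\left(t\right)=W_{0}\left(0,t\right)$. Finally, the strong convergence $W_{\hbar}\left(t,s\right)\overset{s}{\rightarrow}W_{0}\left(t,s\right)$ follows from the $\beta=0$ case of (\ref{equ.9.7}) on the dense subspace $D\left(\mathcal{N}^{3d/2}\right)\supset\mathcal{S}$ together with the uniform bound $\left\Vert W_{\hbar}\left(t,s\right)\right\Vert_{0\rightarrow0}=\left\Vert W_{0}\left(t,s\right)\right\Vert_{0\rightarrow0}=1$ and a standard $\varepsilon/3$ argument. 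I expect the only step needing genuine care to be the justification of moving the closed operator $\mathcal{N}^{\beta}$ through the $L^{2}\left(m\right)$-valued integral in (\ref{equ.8.4}) — and, conceptually, the recognition that it is exactly the $\hbar$-free bound of Theorem \ref{the.9.1} (not the crude bound of Theorem \ref{the.7.12}, which would contribute a fatal $\hbar^{-\beta}$) that makes the right-hand side $O\left(\sqrt{\hbar}\right)$; the rest is bookkeeping of Sobolev indices.
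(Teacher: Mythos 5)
Your proposal is correct and follows essentially the same route as the paper's proof: the integral identity (\ref{equ.8.4}), the identification of $H_{2}\left(\alpha\left(\tau\right):\bar{a},a^{\ast}\right)-\overline{L_{\hbar}\left(\tau\right)}$ with $-\frac{1}{\hbar}H_{\geq3}\left(\alpha\left(\tau\right):\bar{a}_{\hbar},a_{\hbar}^{\ast}\right)=-\sqrt{\hbar}\,H_{\geq3}\left(\alpha\left(\tau\right),\sqrt{\hbar}:\bar{a},a^{\ast}\right)$, and the three-factor operator-norm estimate $\left\Vert W_{\hbar}\left(t,\tau\right)\right\Vert _{p\rightarrow\beta}\left\Vert \cdot\right\Vert _{q\rightarrow p}\left\Vert W_{0}\left(\tau,s\right)\right\Vert _{q\rightarrow q}$ with $p=\left(2\beta+1\right)d$ and $q=p+\frac{d}{2}$, using precisely Theorem \ref{the.9.1} (not the crude Theorem \ref{the.7.12}) for the first factor. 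The derivations of (\ref{equ.9.8}), (\ref{equ.9.9}) and the strong convergence also match the paper's.
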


\begin{proof}
The claimed strong convergence now follows from Eq. (\ref{equ.9.7}) with
$\beta=0$ along with a standard density argument. To simplify notation, let
\[
p=d\left(  2\beta+1\right)  \text{ and }q=\frac{d}{2}\left(  4\beta+3\right)
=p+\frac{d}{2}.
\]
If $\psi\in D\left(  \mathcal{N}^{q}\right)  \subseteq D\left(  \mathcal{N}%
^{\frac{d}{2}}\right)  ,$ then by Eq. (\ref{equ.8.4}) in Proposition
\ref{pro.8.2}, Eq. (\ref{equ.7.16}), and Corollary \ref{cor.3.40},
\begin{align*}
W_{\hbar}\left(  t,s\right)  \psi-W_{0}\left(  t,s\right)  \psi &  =i\int
_{s}^{t}W_{\hbar}\left(  t,\tau\right)  \left[  H_{2}\left(  \alpha\left(
\tau\right)  :\bar{a},a^{\ast}\right)  -\bar{L}_{\hbar}\left(  \tau\right)
\right]  W_{0}\left(  \tau,s\right)  \psi d\tau\\
&  =i\int_{s}^{t}W_{\hbar}\left(  t,\tau\right)  \left[  \frac{1}{\hbar
}H_{\geq3}\left(  \alpha\left(  \tau\right)  :\bar{a}_{\hbar},a_{\hbar}^{\ast
}\right)  \right]  W_{0}\left(  \tau,s\right)  \psi d\tau.
\end{align*}
Then, by using theorem \ref{the.9.1}, we find for all $0<\hbar<\eta\leq1 $ and
$S\leq s,t\leq T$ (with $d=\deg_{\theta} H)$ that
\begin{align}
&  \left\Vert \left(  W_{\hbar}\left(  t,s\right)  -W_{0}\left(  t,s\right)
\right)  \psi\right\Vert _{\beta}\nonumber\\
&  \leq\int_{J_{s,t}}\left\Vert W_{\hbar}\left(  t,\tau\right)  \left[
\frac{1}{\hbar}H_{\geq3}\left(  \alpha\left(  \tau\right)  :\bar{a}_{\hbar
},a_{\hbar}^{\ast}\right)  \right]  W_{0}\left(  \tau,s\right)  \psi
\right\Vert _{\beta}d\tau\nonumber\\
&  \leq\int_{S}^{T}\left\Vert W_{\hbar}\left(  t,\tau\right)  \right\Vert
_{p\rightarrow\beta}\left\Vert \left[  \frac{1}{\hbar}H_{\geq3}\left(
\alpha\left(  \tau\right)  :\bar{a}_{\hbar},a_{\hbar}^{\ast}\right)  \right]
W_{0}\left(  \tau,s\right)  \psi\right\Vert _{p}d\tau\nonumber\\
&  \leq K\int_{S}^{T}\left\Vert \frac{1}{\hbar}H_{\geq3}\left(  \alpha\left(
\tau\right)  :\bar{a}_{\hbar},a_{\hbar}^{\ast}\right)  \right\Vert
_{q\rightarrow p}\left\Vert W_{0}\left(  t,\tau\right)  \right\Vert
_{q\rightarrow q}\left\Vert \psi\right\Vert _{q}d\tau\nonumber\\
&  \leq K\sqrt{\hbar}\int_{S}^{T}\left\Vert H_{\geq3}\left(  \alpha\left(
\tau\right)  , \sqrt{\hbar} :\bar{a},a^{\ast}\right)  \right\Vert
_{q\rightarrow p}\left\Vert W_{0}\left(  t,\tau\right)  \right\Vert
_{q\rightarrow q}d\tau\left\Vert \psi\right\Vert _{q}. \label{e.9.10}%
\end{align}
where $H_{\geq3}\left(  \alpha\left(  \tau\right)  ,\sqrt{\hbar}:\theta
,\theta^{\ast}\right)  \in\mathbb{R}\left[  \alpha\left(  \tau\right)
,\sqrt{\hbar}\right]  \left\langle \theta,\theta^{\ast}\right\rangle $ is a
polynomial in $\left(  \alpha\left(  \tau\right)  ,\sqrt{\hbar}:\theta
,\theta^{\ast}\right)  $ which is a sum of terms homogeneous of degree three
or more in the $\left\{  \theta,\theta^{\ast}\right\}  $ -- grading. By Eq.
(\ref{equ.3.45}) in Corollary \ref{cor.3.40} and Eq. (\ref{equ.5.29}) in
Corollary \ref{cor.5.14},
\[
\sup_{S\leq t\leq T}\int_{S}^{T}\left\Vert H_{\geq3}\left(  \alpha\left(
\tau\right)  :\bar{a}_{\hbar},a_{\hbar}^{\ast}\right)  \right\Vert
_{q\rightarrow p}\left\Vert W_{0}\left(  t,\tau\right)  \right\Vert
_{q\rightarrow q}d\tau<\infty
\]
which along with Eq. (\ref{e.9.10}) completes the proof of Eq. (\ref{equ.9.7}%
). Equation (\ref{equ.9.8}) follows directly from Eq. (\ref{equ.9.7}) after
making use of Eq. (\ref{e.6.5}). Equation (\ref{equ.9.9}) is a special case of
Eq. (\ref{equ.9.8}) because of the identities; $W_{\hbar}\left(  t\right)
=W_{\hbar}\left(  t,0\right)  ,$ $W_{\hbar}^{\ast}\left(  t\right)  =W_{\hbar
}\left(  0,t\right)  ,$ $W_{0}\left(  t\right)  =W_{0}\left(  t,0\right)  $
and $W_{0}\left(  t\right)  ^{\ast}=W_{0}\left(  0,t\right)  .$
\end{proof}

\subsection{Proof of Theorem \ref{the.1.20}\label{sub.9.1}}

We now finish this paper by showing that Eqs. (\ref{equ.9.4}) and
(\ref{equ.9.9}) can be used to prove the main theorems of this paper, namely
Theorem \ref{the.1.20} and Corollaries \ref{cor.1.22} and \ref{cor.1.24}. For
the rest of Section \ref{sec.9}, we always assume that $H\in\mathbb{R}%
\left\langle \theta,\theta^{\ast}\right\rangle $ and $1\geq\eta>0$ satisfy
Assumption \ref{ass.1}, \thinspace$d=\deg_{\theta}H>0\in2\mathbb{N},$
$W_{\hbar}\left(  t\right)  $ is defined as in Eq. (\ref{equ.7.13}), and
$W_{0}\left(  t\right)  $ is as in Notation \ref{not.7.10}.

\begin{notation}
\label{not.9.5}For $\hbar\geq0,$ let%
\begin{equation}
a\left(  \hbar:t\right)  :=W_{\hbar}^{\ast}\left(  t\right)  aW_{\hbar}\left(
t\right)  \text{ and }a^{\dag}\left(  \hbar:t\right)  :=W_{\hbar}^{\ast
}\left(  t\right)  a^{\dag}W_{\hbar}\left(  t\right)  \label{equ.9.11}%
\end{equation}
as operator on $\mathcal{S}.$ It should be noted that under Assumption
\ref{ass.1} we have $a^{\dag}\left(  \hbar:t\right)  =a\left(  \hbar:t\right)
^{\dag}$ for $0\leq\hbar<\eta.$
\end{notation}

According to Theorem \ref{the.5.18}, if $a\left(  t\right)  $ and $a^{\dag
}\left(  t\right)  $ are as in Eqs. (\ref{equ.1.8}) and (\ref{equ.1.9})
respectively then satisfies,%
\begin{align}
a\left(  t\right)   &  =W_{0}^{\ast}\left(  t\right)  aW_{0}\left(  t\right)
=a\left(  0:t\right)  \text{ and}\label{equ.9.12}\\
a^{\dag}\left(  t\right)   &  =W_{0}^{\ast}\left(  t\right)  a^{\dag}%
W_{0}\left(  t\right)  =a^{\dag}\left(  0:t\right)  \label{equ.9.13}%
\end{align}
as operators on $\mathcal{S}.$ For this reason we will typically write
$a\left(  t\right)  $ and $a^{\dag}\left(  t\right)  $ for $a\left(
0:t\right)  $ and $a\left(  0:t\right)  $ respectively.

By Proposition \ref{pro.2.6} and Eq.(\ref{equ.7.13}), the operator $A_{\hbar
}\left(  t\right)  $ defined in Eq. (\ref{equ.1.22}) satisfies,
\begin{align}
U_{\hbar}^{\ast}\left(  \alpha_{0}\right)  A_{\hbar}\left(  t\right)
U_{\hbar}\left(  \alpha_{0}\right)   &  =U_{\hbar}^{\ast}\left(  \alpha
_{0}\right)  e^{itH_{\hbar}/\hbar}a_{\hbar}e^{-itH_{\hbar}/\hbar}U_{\hbar
}\left(  \alpha_{0}\right) \nonumber\\
&  =W_{\hbar}^{\ast}\left(  t\right)  \left(  a_{\hbar}+\alpha\left(
t\right)  \right)  W_{\hbar}\left(  t\right) \nonumber\\
&  =\alpha\left(  t\right)  +\sqrt{\hbar}W_{\hbar}^{\ast}\left(  t\right)
aW_{\hbar}\left(  t\right) \nonumber\\
&  =\alpha\left(  t\right)  +\sqrt{\hbar}~a\left(  \hbar:t\right)  \text{ on
}\mathcal{S}. \label{equ.9.14}%
\end{align}

\begin{notation}
\label{not.9.8}For $t\in\mathbb{R}$ and $0\leq\hbar<\eta,$ let
\begin{align*}
B_{\theta}\left(  \hbar:t\right)   &  :=\overline{a\left(  \hbar:t\right)
}=W_{\hbar}^{\ast}\left(  t\right)  \bar{a}W_{\hbar}\left(  t\right)  \text{
and}\\
B_{\theta^{\ast}}\left(  \hbar:t\right)   &  :=a\left(  \hbar:t\right)
^{\ast}=W_{\hbar}^{\ast}\left(  t\right)  a^{\ast}W_{\hbar}\left(  t\right)  .
\end{align*}
When $\hbar=0$ we will denote $B_{b}\left(  0:t\right)  $ more simply as
$B_{b}\left(  t\right)  $ for $b\in\left\{  \theta,\theta^{\ast}\right\}  .$
\end{notation}

\begin{lemma}
\label{lem.9.9}Let $\eta>0$ and $d>0\in2\mathbb{N}$ be as in Theorem
\ref{the.9.1}, $b\in\left\{  \theta,\theta^{\ast}\right\}  ,$ $t\in\left[
S,T\right]  ,$ and $B_{b}\left(  \hbar:t\right)  $ be as in Notation
\ref{not.9.8}. Then, for any $\beta\geq0,$ there exists a constant $C\left(
\beta,S,T\right)  >0$ such that
\begin{equation}
\sup_{t\in\left[  S,T\right]  }\max_{b\in\left\{  \theta,\theta^{\ast
}\right\}  }\left\Vert B_{b}\left(  \hbar:t\right)  \right\Vert _{g\left(
\beta\right)  \rightarrow\beta}\leq C\left(  \beta,S,T\right)  \text{ for
}0<\hbar<\eta\label{equ.9.15}%
\end{equation}
where $g\left(  \beta\right)  =4d^{2}\beta+2d\left(  d+1\right)  .$
\end{lemma}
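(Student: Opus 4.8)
The plan is to bound each $B_b(\hbar:t)$ by writing $W_\hbar^*(t)\,\mathcal{C}\,W_\hbar(t)$ with $\mathcal{C}\in\{\bar a, a^\ast\}$ and inserting factors of powers of $(\mathcal N+I)$ to control the intermediate spaces, exactly as one controls a conjugated operator by the operator norm bounds already established in Theorem \ref{the.9.1}. Concretely, for $\varphi\in D(\mathcal N^{g(\beta)})$, I would estimate
\[
\left\Vert B_b(\hbar:t)\varphi\right\Vert_\beta = \left\Vert W_\hbar^*(t)\,\mathcal C\,W_\hbar(t)\varphi\right\Vert_\beta
\le \left\Vert W_\hbar^*(t)\right\Vert_{\gamma_1\to\beta}\,\left\Vert \mathcal C\right\Vert_{\gamma_1+\frac12\to\gamma_1}\,\left\Vert W_\hbar(t)\varphi\right\Vert_{\gamma_1+\frac12},
\]
using Proposition \ref{pro.3.39} (with $k=1$) for the middle factor, Eq.~(\ref{equ.9.4}) of Theorem \ref{the.9.1} for the two outer factors, and the fact (Theorem \ref{the.5.18} / Corollary \ref{cor.3.40}) that $\bar a$ and $a^\ast$ are $\frac12\to 0$, hence $(\gamma+\frac12)\to\gamma$, bounded for every $\gamma\ge 0$. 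The only book-keeping is to choose the intermediate indices so that each application of Eq.~(\ref{equ.9.4}) is legal: Eq.~(\ref{equ.9.4}) says $W_\hbar(t)$ (and $W_\hbar^*(t)$) map the $(2\beta'+1)d$-norm into the $\beta'$-norm with a bound uniform in $\hbar\in(0,\eta)$ and in $t\in[S,T]$.

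I would carry this out in two applications. First, to estimate $\left\Vert W_\hbar(t)\varphi\right\Vert_{\gamma_1+\frac12}$ I need $\gamma_1+\frac12=\beta'$ for some $\beta'$ with $(2\beta'+1)d\le g(\beta)$; taking $\beta' = \gamma_1+\frac12$ this requires $g(\beta)\ge (2\gamma_1+2)d$. Second, for the outer factor I need $W_\hbar^*(t)$ to map the $\gamma_1$-norm into the $\beta$-norm, i.e.~$\gamma_1\ge (2\beta+1)d$. Choosing $\gamma_1 := (2\beta+1)d$ then forces $g(\beta)\ge (2\gamma_1+2)d = (2(2\beta+1)d+2)d = (4d\beta+2d+2)d = 4d^2\beta+2d^2+2d = 4d^2\beta+2d(d+1)$, which is exactly the stated value $g(\beta)=4d^2\beta+2d(d+1)$. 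Collecting the three bounds gives
\[
\left\Vert B_b(\hbar:t)\varphi\right\Vert_\beta \le \tilde K_\beta(S,T)\cdot 1^{1/2}\cdot 2^{1/2}\cdot \tilde K_{\gamma_1+1/2}(S,T)\,\left\Vert\varphi\right\Vert_{g(\beta)},
\]
uniformly in $t\in[S,T]$ and $\hbar\in(0,\eta)$, which is Eq.~(\ref{equ.9.15}) with $C(\beta,S,T)$ the product of these constants (and $b$-independent since the constant from Proposition \ref{pro.3.39} for a degree-one monomial is the same, namely $\le\sqrt 2$, for $\bar a$ and $a^\ast$).

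There is essentially no hard analytic step here: the content is entirely in Theorem \ref{the.9.1}, and the remaining work is the arithmetic of threading the Sobolev indices through two uses of Eq.~(\ref{equ.9.4}) and one use of Proposition \ref{pro.3.39}. The one point that requires a moment's care — and which I would flag as the place most likely to hide an off-by-a-constant slip — is verifying that the definition $g(\beta)=4d^2\beta+2d(d+1)$ is genuinely large enough, i.e.~that with $\gamma_1=(2\beta+1)d$ one has both $\gamma_1\ge(2\beta+1)d$ (trivially, with equality) and $g(\beta)\ge(2(\gamma_1+\tfrac12)+1)d=(2\gamma_1+2)d$; the computation above shows these hold with equality, so no slack is wasted and Remark \ref{rem.9.3}'s caveat about non-tightness applies. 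Finally, since $W_\hbar(t)\mathcal S=\mathcal S=W_\hbar^*(t)\mathcal S$ (Proposition \ref{pro.7.6}) and $\mathcal S$ is dense in every $D(\mathcal N^\gamma)$ (Proposition \ref{pro.3.32}), the estimate extends from $\mathcal S$ to all of $D(\mathcal N^{g(\beta)})$ by the usual density argument, completing the proof.
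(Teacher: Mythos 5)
Your proposal is correct and is essentially identical to the paper's own proof: the same factorization $B_b(\hbar:t)=W_\hbar^*(t)\,\mathcal{C}\,W_\hbar(t)$, the same intermediate index $q=(2\beta+1)d$, and the same two applications of Eq.~(\ref{equ.9.4}) together with Eq.~(\ref{equ.3.41}) for the middle factor, yielding exactly $g(\beta)=4d^2\beta+2d(d+1)$. (The only blemish is the explicit value ``$\le\sqrt2$'' for the degree-one bound from Proposition \ref{pro.3.39}, which should be $2^{\gamma_1}$, but since the lemma only requires some finite uniform constant this is immaterial.)
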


\begin{proof}
For definiteness, suppose that $b=\theta^{\ast}$ as the case $b=\theta$ is
proved analogously. If $q=\left(  2\beta+1\right)  d$ and
\[
p=\left[  2\left(  q+\frac{1}{2}\right)  +1\right]  d=4d^{2}\beta+2d\left(
d+1\right)  ,
\]
then%
\[
\left\Vert B_{b}\left(  \hbar:t\right)  \right\Vert _{p\rightarrow\beta}%
\leq\left\Vert W_{\hbar}^{\ast}\left(  t\right)  \right\Vert _{q\rightarrow
\beta}\left\Vert a^{\ast}\right\Vert _{q+\frac{1}{2}\rightarrow q}\left\Vert
W_{\hbar}\left(  t\right)  \right\Vert _{p\rightarrow q+\frac{1}{2}}%
\]
which combined with the estimates in Eqs. (\ref{equ.3.41}) and (\ref{equ.9.4})
gives the estimate in Eq. (\ref{equ.9.15}).
\end{proof}

\begin{lemma}
\label{lem.9.11}Let $\beta\geq0,$ $b\in\left\{  \theta,\theta^{\ast}\right\}
,$ $-\infty<S<T<\infty,$ $\eta>0,$ and $d>0\in2\mathbb{N}$ be the same as
Lemma \ref{lem.9.9}. Then there exists a constant $C\left(  \beta,S,T\right)
>0$ such that
\begin{equation}
\sup_{t\in\left[  S,T\right]  }\left\Vert B_{b}\left(  \hbar:t\right)
-B_{b}\left(  t\right)  \right\Vert _{r\left(  \beta\right)  \rightarrow\beta
}\leq C\left(  \beta,S,T\right)  \sqrt{\hbar}\text{ for }0\leq\hbar
<\eta\label{equ.9.16}%
\end{equation}
where $r\left(  \beta\right)  =\left(  4d^{2}\right)  \beta+\left(
3d+2\right)  d. $
\end{lemma}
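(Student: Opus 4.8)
The plan is to obtain the estimate \eqref{equ.9.16} by writing the difference $B_b(\hbar:t)-B_b(t)$ as a telescoping sum and estimating each piece using the operator–norm bounds already established in Lemma \ref{lem.9.9} (for $B_b(\hbar:t)$ itself) and in Theorem \ref{the.9.4} (for the difference $W_\hbar(t)-W_0(t)$, equivalently $W_\hbar^\ast(t)-W_0^\ast(t)$). Recall that for $b=\theta^\ast$ we have $B_b(\hbar:t)=W_\hbar^\ast(t)\,a^\ast\,W_\hbar(t)$ and $B_b(t)=W_0^\ast(t)\,a^\ast\,W_0(t)$, so the natural decomposition is
\begin{align*}
B_b(\hbar:t)-B_b(t)
&=\bigl(W_\hbar^\ast(t)-W_0^\ast(t)\bigr)a^\ast W_\hbar(t)
+W_0^\ast(t)\,a^\ast\bigl(W_\hbar(t)-W_0(t)\bigr).
\end{align*}
The case $b=\theta$ is handled identically with $a^\ast$ replaced by $\bar a$ and $q+\tfrac12$ replaced by the appropriate index (here $D(\bar a)=D(\mathcal N^{1/2})$ by Corollary \ref{cor.3.36}, so the index shift is the same).

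First I would fix $\beta\ge 0$ and $-\infty<S<T<\infty$ and choose the chain of Sobolev indices so that every factor above lands in a space on which we already have a bound. Set $q:=(2\beta+1)d$ so that $W_0^\ast(t):D(\mathcal N^q)\to D(\mathcal N^\beta)$ is bounded by Eq. \eqref{equ.9.4} (with $\hbar=0$, which is covered by Corollary \ref{cor.5.14}); then $a^\ast:D(\mathcal N^{q+1/2})\to D(\mathcal N^q)$ is bounded by Eq. \eqref{equ.3.41} of Proposition \ref{pro.3.39}; then set $p_1:=\tfrac d2(4(q+\tfrac12)+3)$ so that $W_\hbar(t)-W_0(t):D(\mathcal N^{p_1})\to D(\mathcal N^{q+1/2})$ is bounded by $\tilde K\sqrt\hbar$ via Eq. \eqref{equ.9.9} of Theorem \ref{the.9.4}. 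For the first term, $W_\hbar(t):D(\mathcal N^{p})\to D(\mathcal N^{q+1/2})$ with $p=\bigl[2(q+\tfrac12)+1\bigr]d$ is bounded uniformly in $\hbar\in(0,\eta)$ by Eq. \eqref{equ.9.4}; $a^\ast:D(\mathcal N^{q+1})\to D(\mathcal N^{q+1/2})$; and $W_\hbar^\ast(t)-W_0^\ast(t):D(\mathcal N^{p_2})\to D(\mathcal N^{q+1})$ for a suitable $p_2$ is bounded by $\tilde K\sqrt\hbar$, again by Eq. \eqref{equ.9.9}. Composing and using multiplicativity of the $\beta_1\to\beta_2$ operator norms (Remark \ref{rem.3.38}), each of the two terms is bounded by a constant depending only on $\beta,S,T,H$ times $\sqrt\hbar$. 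The final index $r(\beta)$ is then the largest of the indices appearing on the far-right, and a short arithmetic check shows $r(\beta)=(4d^2)\beta+(3d+2)d$ dominates all of them (using $d\ge 2$, $\beta\ge0$); one simply replaces any smaller index by $r(\beta)$ at the cost of enlarging the constant, since $\|\cdot\|_{\beta_1}\le\|\cdot\|_{\beta_2}$ for $\beta_1\le\beta_2$.

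The case $\hbar=0$ in \eqref{equ.9.16} is trivial since the left side is zero; for $0<\hbar<\eta$ the above gives the bound. I expect the only delicate point to be bookkeeping: choosing the index chain so that (i) each intermediate domain actually contains the range of the previous operator, and (ii) the terminal index is uniformly $\le r(\beta)$. This is purely arithmetic — there is no analytic obstacle, since every operator-norm bound we need (the uniform $W_\hbar$ bound, the $\sqrt\hbar$ difference bound, and the $a,a^\ast$ mapping bounds) has already been proved in Theorem \ref{the.9.1}, Theorem \ref{the.9.4}, and Proposition \ref{pro.3.39} respectively. I would present the proof for $b=\theta^\ast$ in detail and remark that $b=\theta$ is identical, absorbing all $d$-dependent constants into $C(\beta,S,T)$.
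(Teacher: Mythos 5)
Your proposal is correct and follows essentially the same route as the paper: the same two-term decomposition of $B_b(\hbar:t)-B_b(t)$, with each factor estimated by composing the uniform bound of Eq. (\ref{equ.9.4}), the $\sqrt{\hbar}$ difference bound of Eq. (\ref{equ.9.9}), and the $\tfrac12$-index shift of Eq. (\ref{equ.3.41}), the first term being the one that forces the terminal index $r(\beta)=4d^{2}\beta+(3d+2)d$. The only blemish is that in your chain for the first term the sources and targets are transposed (the output index of $W_\hbar$ must equal the input index of $a^{\ast}$, i.e. one needs $W_\hbar: p\to q_1+\tfrac12$, $a^{\ast}: q_1+\tfrac12\to q_1$, $W_\hbar^{\ast}-W_0^{\ast}: q_1\to\beta$ with $q_1=\tfrac d2(4\beta+3)$), but this is exactly the bookkeeping you flag and it resolves as you predict.
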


\begin{proof}
Let us suppose that $b=\theta$ as the proof for $b=\theta^{\ast}$ is very
similar. Given $p\geq\beta$ (to be chosen later) we have,%
\begin{align}
&  \left\Vert B_{b}\left(  \hbar:t\right)  -B_{b}\left(  t\right)  \right\Vert
_{p\rightarrow\beta}\nonumber\\
&  \qquad=\left\Vert W_{\hbar}^{\ast}\left(  t\right)  \bar{a}W_{\hbar}\left(
t\right)  -W_{0}^{\ast}\left(  t\right)  \bar{a}W_{0}\left(  t\right)
\right\Vert _{p\rightarrow\beta}\nonumber\\
&  \qquad\leq\left\Vert \left[  W_{\hbar}^{\ast}\left(  t\right)  -W_{0}%
^{\ast}\left(  t\right)  \right]  \bar{a}W_{\hbar}\left(  t\right)
\right\Vert _{p\rightarrow\beta}+\left\Vert W_{0}^{\ast}\left(  t\right)
\bar{a}\left[  W_{\hbar}\left(  t\right)  -W_{0}\left(  t\right)  \right]
\right\Vert _{p\rightarrow\beta}. \label{equ.9.17}%
\end{align}
Using Eqs. (\ref{equ.3.41}), (\ref{equ.9.4}), and (\ref{equ.9.9}), there
exists a constant $C_{1}:=C_{1}\left(  \beta,S,T\right)  $ such that the first
term will become
\begin{align*}
&  \left\Vert \left[  W_{\hbar}^{\ast}\left(  t\right)  -W_{0}^{\ast}\left(
t\right)  \right]  \bar{a}W_{\hbar}\left(  t\right)  \right\Vert
_{p_{1}\rightarrow\beta}\\
&  \qquad\leq\left\Vert \left[  W_{\hbar}^{\ast}\left(  t\right)  -W_{0}%
^{\ast}\left(  t\right)  \right]  \right\Vert _{q_{1}\rightarrow\beta
}\left\Vert \bar{a}\right\Vert _{q_{1}+\frac{1}{2}\rightarrow q_{1}}\left\Vert
W_{\hbar}\left(  t\right)  \right\Vert _{p_{1}\rightarrow q_{1}+\frac{1}{2}%
}\leq C_{1}\sqrt{\hbar}%
\end{align*}
where
\[
q_{1}=\frac{d}{2}\left(  4\beta+3\right)  \text{ and }p_{1}=\left(  2\left(
q_{1}+\frac{1}{2}\right)  +1\right)  d=\left(  4d^{2}\right)  \beta+\left(
3d+2\right)  d.
\]
Likewise, using Eqs. (\ref{equ.3.41}), (\ref{equ.5.29}) and (\ref{equ.9.9}),
there exists a constant $C_{2}:=C_{2}\left(  \beta,S,T\right)  $ such that the
second term will become
\begin{align*}
&  \left\Vert W_{0}^{\ast}\left(  t\right)  \bar{a}\left[  W_{\hbar}\left(
t\right)  -W_{0}\left(  t\right)  \right]  \right\Vert _{p_{2}\rightarrow
\beta}\\
&  \qquad\leq\left\Vert W_{0}^{\ast}\left(  t\right)  \right\Vert
_{q_{2}\rightarrow\beta}\left\Vert \bar{a}\right\Vert _{q_{2}+\frac{1}%
{2}\rightarrow q_{2}}\left\Vert W_{\hbar}\left(  t\right)  -W_{0}\left(
t\right)  \right\Vert _{p_{2}\rightarrow q_{2}+\frac{1}{2}}\leq C_{2}%
\sqrt{\hbar}%
\end{align*}
where
\[
q_{2}=\beta\text{ and }p_{2}=\frac{d}{2}\left(  4\left(  q_{2}+\frac{1}%
{2}\right)  +3\right)  =\left(  2d\right)  \beta+\frac{5d}{2}.
\]
Since $d\geq2$ and $\beta\geq0,$ it follows that $p_{2}\leq p_{1}$ and so
taking $p=p_{1}$ in Eq. (\ref{equ.9.17}) and making use of the previous
estimates proves Eq. (\ref{equ.9.16}).
\end{proof}

\begin{notation}
\label{not.9.12}For $n\in\mathbb{N},$ let $d=\deg_{\theta} H>0$ and
\begin{equation}
\sigma_{n}:=\left(  4d^{2}\right)  2d\left(  d+1\right)  \frac{\left(
4d^{2}\right)  ^{n}-1}{4d^{2}-1}+\left(  3d+2\right)  d. \label{equ.9.18}%
\end{equation}

\end{notation}

\begin{lemma}
\label{lem.9.13}Let $S$, $T$, $d$ and $\eta$ be the same as Lemma
\ref{lem.9.9} and $\sigma_{n}$ be as in Notation \ref{not.9.8} for
$n\in\mathbb{N}.$ Then there exists $C_{n}\left(  S,T\right)  <\infty$ such
that for any $\mathbf{b}=\left(  b_{1},\ldots,b_{n}\right)  \in\left\{
\theta,\theta^{\ast}\right\}  ^{n},$ $0\leq\hbar<\eta,$ and $\left(
t_{1},\ldots,t_{n}\right)  \in\left[  S,T\right]  $ we have
\begin{equation}
\left\Vert B_{1}\left(  \hbar\right)  \dots B_{n}\left(  \hbar\right)
-B_{1}\dots B_{n}\right\Vert _{\sigma_{n}\rightarrow0}\leq C_{n}\left(
S,T\right)  \sqrt{\hbar}, \label{equ.9.19}%
\end{equation}
where $B_{i}\left(  \hbar\right)  :=B_{b_{i}}\left(  \hbar:t_{i}\right)  $ and
$B_{i}:=B_{i}\left(  0\right)  =B_{b_{i}}\left(  t_{i}\right)  $ for $1\leq
i\leq n,$ see Notation \ref{not.9.8}.
\end{lemma}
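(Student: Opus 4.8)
The plan is to prove Eq.~(\ref{equ.9.19}) by a telescoping argument over the $n$ factors, reducing the difference of products to a sum of $n$ terms each of which differs in a single factor, and then controlling each such term by the one-factor estimate of Lemma~\ref{lem.9.11} together with the individual boundedness estimates of Lemma~\ref{lem.9.9} for the remaining factors. The only slightly delicate bookkeeping is the choice of Sobolev indices: since each $B_{i}(\hbar)$ loses roughly a factor of $4d^{2}$ in the Sobolev scale (that is the content of $g(\beta)=4d^{2}\beta+2d(d+1)$ in Lemma~\ref{lem.9.9} and $r(\beta)=4d^{2}\beta+(3d+2)d$ in Lemma~\ref{lem.9.11}), composing $n$ of them forces the index on the far right to grow geometrically, and $\sigma_{n}$ in Eq.~(\ref{equ.9.18}) is precisely the index obtained by iterating $\beta\mapsto g(\beta)$ $n$ times starting from the single-factor worst case.

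In more detail, first I would write the standard telescoping identity
\[
B_{1}(\hbar)\cdots B_{n}(\hbar)-B_{1}\cdots B_{n}
=\sum_{j=1}^{n}B_{1}(\hbar)\cdots B_{j-1}(\hbar)\,\bigl(B_{j}(\hbar)-B_{j}\bigr)\,B_{j+1}\cdots B_{n},
\]
valid as an identity of operators on $\mathcal{S}$ (all factors preserve $\mathcal{S}$ since $W_{\hbar}(t)\mathcal{S}=\mathcal{S}$ by Proposition~\ref{pro.7.6}, $W_{0}(t)\mathcal{S}=\mathcal{S}$ by Corollary~\ref{cor.5.14}, and $\bar a, a^{\ast}$ preserve $D(\mathcal{N}^{\beta})$-scales by Theorem~\ref{the.3.22}). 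Then for the $j$-th summand I would estimate its $\|\cdot\|_{\sigma_n\to 0}$ norm by Remark~\ref{rem.3.38} (submultiplicativity of the $\beta_1\to\beta_2$ operator norms): reading the composition from right to left, the $n-j$ rightmost factors $B_{j+1}\cdots B_{n}$ are bounded from $D(\mathcal{N}^{\sigma_n})$ to some intermediate index by iterating Lemma~\ref{lem.9.9}, the middle factor $B_{j}(\hbar)-B_{j}$ contributes the gain $C(\cdot)\sqrt{\hbar}$ by Lemma~\ref{lem.9.11}, and the $j-1$ leftmost factors $B_{1}(\hbar)\cdots B_{j-1}(\hbar)$ are again bounded by iterating Lemma~\ref{lem.9.9} (uniformly in $0<\hbar<\eta$). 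The index $\sigma_n$ is chosen large enough that every intermediate index appearing in this chain is nonnegative and the indices match up; one verifies by the geometric-series formula in Eq.~(\ref{equ.9.18}) that $\sigma_n$ dominates the index needed after stripping off up to $n$ factors via $g$ and one factor via $r$. Summing the $n$ terms and absorbing the $t_{i}$-dependence into a supremum over $[S,T]^{n}$ (finite because all the constants in Lemmas~\ref{lem.9.9} and~\ref{lem.9.11} are uniform on $[S,T]$) gives Eq.~(\ref{equ.9.19}) with $C_{n}(S,T)$ the sum of the $n$ products of the relevant constants.

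The main obstacle is purely the index arithmetic: one must check carefully that $\sigma_n$ as defined is actually $\ge$ the index produced by the recursion, i.e. that applying Lemma~\ref{lem.9.9} $(n-1)$ times and Lemma~\ref{lem.9.11} once, in the worst ordering (the middle factor being the last one, $j=n$), lands inside $D(\mathcal{N}^{\sigma_n})$. Concretely, define $\tau_0=0$ and $\tau_{k+1}=g(\tau_k)=4d^{2}\tau_k+2d(d+1)$, so $\tau_k = 2d(d+1)\frac{(4d^{2})^{k}-1}{4d^{2}-1}$; one needs the index fed into the left-most block of $k\le n-1$ factors $B_1(\hbar)\cdots B_k(\hbar)$ mapping into $D(\mathcal{N}^{0})$, which requires those factors to be applied starting from $D(\mathcal{N}^{\tau_k})$, and then the single-factor difference estimate upgrades $\tau_{n-1}$ to $r(\tau_{n-1})=4d^{2}\tau_{n-1}+(3d+2)d$, and finally the right block of the remaining factors is handled by further iterations of $g$. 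A short computation shows $r(\tau_{n-1}) = 4d^{2}\cdot 2d(d+1)\frac{(4d^{2})^{n-1}-1}{4d^{2}-1}+(3d+2)d$, which after re-indexing equals $\sigma_n$ as in Eq.~(\ref{equ.9.18}); it also dominates any index arising from the right block since that block has fewer factors. Once this is checked everything else is the routine submultiplicativity bookkeeping above, so I do not expect any genuine analytic difficulty beyond what is already packaged in Lemmas~\ref{lem.9.9} and~\ref{lem.9.11}.
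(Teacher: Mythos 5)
Your overall strategy is exactly the paper's: telescope the difference of products into $\sum_{j}B_{1}(\hbar)\cdots B_{j-1}(\hbar)\left[B_{j}(\hbar)-B_{j}\right]B_{j+1}\cdots B_{n}$, bound each summand by submultiplicativity of the $\left\Vert \cdot\right\Vert _{\beta_{1}\rightarrow\beta_{2}}$ norms, use Lemma \ref{lem.9.9} on the left block, and extract the $\sqrt{\hbar}$ from the middle factor via Lemma \ref{lem.9.11}. The gap is in your treatment of the right block $B_{j+1}\cdots B_{n}$. You propose to bound it "by iterating Lemma \ref{lem.9.9}," i.e.\ by further iterations of $g$, and then assert that $\sigma_{n}$ dominates the resulting index "since that block has fewer factors." That monotonicity claim is false: $g$ and $r$ are affine with the same slope $4d^{2}$ and $r(x)=g(x)+d^{2}$, so the index your scheme requires for the $j$-th term is $g^{n-j}\bigl(r(g^{j-1}(0))\bigr)=g^{n}(0)+(4d^{2})^{n-j}d^{2}$, which strictly exceeds $\sigma_{n}=r(g^{n-1}(0))=g^{n}(0)+d^{2}$ for every $j<n$. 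Fewer factors does not help because each iteration of $g$ multiplies the index by $4d^{2}$, so postponing the expensive iterations to after the $r$-step makes things worse, not better. Since $\left\Vert T\right\Vert _{\sigma\rightarrow0}$ is a \emph{stronger} statement for smaller $\sigma$, proving the bound only for an index larger than $\sigma_{n}$ does not establish the lemma as stated.

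The repair (and what the paper actually does) is to notice that the right block consists of the $\hbar=0$ operators $B_{i}=W_{0}^{\ast}(t_{i})\,\mathcal{A}\,W_{0}(t_{i})$ with $\mathcal{A}\in\{\bar{a},a^{\ast}\}$, and these are far cheaper than Lemma \ref{lem.9.9} suggests: $W_{0}(t)$ and $W_{0}^{\ast}(t)$ are bounded on every $D(\mathcal{N}^{\beta})$ uniformly for $t\in[S,T]$ by Corollary \ref{cor.5.14}, and $\bar{a},a^{\ast}$ map $D(\mathcal{N}^{u+1/2})$ boundedly into $D(\mathcal{N}^{u})$ by Proposition \ref{pro.3.39}, so each right-block factor costs only $\tfrac{1}{2}$ in the Sobolev index. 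Hence $\left\Vert B_{j+1}\cdots B_{n}\right\Vert _{\sigma\rightarrow u}<\infty$ with $\sigma=u+\tfrac{1}{2}(n-j)$, the index needed for the $j$-th term is $\sigma_{n}(j)=r\bigl(g^{j-1}(0)\bigr)+\tfrac{1}{2}(n-j)$, and this is increasing in $j$ with maximum $\sigma_{n}(n)=\sigma_{n}$. With that one change your index bookkeeping closes and the rest of your argument is correct.
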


\begin{proof}
By a telescoping series arguments,
\begin{align*}
B_{1}\left(  \hbar\right)   &  \dots B_{n}\left(  \hbar\right)  -B_{1}\dots
B_{n}\\
&  =\sum_{i=1}^{n}\left[  B_{1}\left(  \hbar\right)  \dots B_{i}\left(
\hbar\right)  B_{i+1}\dots B_{n}-B_{1}\left(  \hbar\right)  \dots
B_{i-1}\left(  \hbar\right)  B_{i}\dots B_{n}\right] \\
&  =\sum_{i=1}^{n}B_{1}\left(  \hbar\right)  \dots B_{i-1}\left(
\hbar\right)  \left[  B_{i}\left(  \hbar\right)  -B_{i}\right]  B_{i+1}\dots
B_{n}%
\end{align*}
and therefore
\begin{align}
&  \left\Vert B_{1}\left(  \hbar\right)  \dots B_{n}\left(  \hbar\right)
-B_{1}\dots B_{n}\right\Vert _{\sigma_{n}\rightarrow0}\nonumber\\
&  \qquad\leq\sum_{i=1}^{n}\left\Vert B_{1}\left(  \hbar\right)  \dots
B_{i-1}\left(  \hbar\right)  \left[  B_{i}\left(  \hbar\right)  -B_{i}\right]
B_{i+1}\dots B_{n}\right\Vert _{\sigma_{n}\rightarrow0}. \label{equ.9.21}%
\end{align}
To finish the proof it suffices to show for $1\leq i\leq n$ that
\[
\left\Vert B_{1}\left(  \hbar\right)  \dots B_{i-1}\left(  \hbar\right)
\left[  B_{i}\left(  \hbar\right)  -B_{i}\right]  B_{i+1}\dots B_{n}%
\right\Vert _{\sigma_{n}\rightarrow0}\leq C\sqrt{\hbar}.
\]

Now
\begin{multline*}
\left\Vert B_{1}\left(  \hbar\right)  \dots B_{i-1}\left(  \hbar\right)
\left[  B_{i}\left(  \hbar\right)  -B_{i}\right]  B_{i+1}\dots B_{n}%
\right\Vert _{\sigma_{n}\rightarrow0}\\
\leq\left\Vert B_{1}\left(  \hbar\right)  \dots B_{i-1}\left(  \hbar\right)
\right\Vert _{v\rightarrow0}\left\Vert B_{i}\left(  \hbar\right)
-B_{i}\right\Vert _{u\rightarrow v}\left\Vert B_{i+1}\dots B_{n}\right\Vert
_{\sigma_{n}\rightarrow u}%
\end{multline*}
where we will choose all $\sigma_{n},$ $u,$ and $v\geq0$ appropriately. First
off if $\beta\geq0$ and $\mathcal{A}=\overline{a}$ or $a^{\ast},$ then (see
Proposition \ref{pro.3.39}) $\mathcal{A}:D\left(  \mathcal{N}^{\beta+\frac
{1}{2}}\right)  \rightarrow D\left(  \mathcal{N}^{\beta}\right)  $ and (see
Corollary \ref{cor.5.14}) $W_{0}\left(  t\right)  :\mathcal{N}^{\beta
}\rightarrow\mathcal{N}^{\beta}$ are bounded operators and therefore,%
\begin{equation}
\left\Vert B_{i+1}\dots B_{n}\right\Vert _{\sigma_{n}\rightarrow u}%
<\infty\text{ if }\sigma_{n}=u+\frac{1}{2}\left(  n-i\right)  .
\label{equ.9.22}%
\end{equation}
Also, with $r\left(  v\right)  $ as in Lemma \ref{lem.9.11}, there exists $C$
such that, for $0<\hbar<\eta, $
\begin{equation}
\left\Vert B_{i}\left(  \hbar\right)  -B_{i}\right\Vert _{u\rightarrow v}\leq
C\sqrt{\hbar}\text{ if }u=r\left(  v\right)  . \label{equ.9.23}%
\end{equation}
Using Lemma \ref{lem.9.9}, there exists $C>0$ such that, for $0< \hbar<\eta,$
\[
\left\Vert B_{1}\left(  \hbar\right)  \dots B_{i-1}\left(  \hbar\right)
\right\Vert _{v\rightarrow0}\leq C
\]
provided that
\begin{equation}
v=g^{i-1}\left(  0\right)  =2d\left(  d+1\right)  \frac{\left(  4d^{2}\right)
^{i}-1}{4d^{2}-1}. \label{equ.9.24}%
\end{equation}
If we let $1\leq i\leq n$ and
\begin{align*}
\sigma_{n}\left(  i\right)   &  =r\left(  g^{i-1}\left(  0\right)  \right)
+\frac{1}{2}\left(  n-i\right) \\
&  =\left(  4d^{2}\right)  2d\left(  d+1\right)  \frac{\left(  4d^{2}\right)
^{i}-1}{4d^{2}-1}+\left(  3d+2\right)  d+\frac{1}{2}\left(  n-i\right)  ,
\end{align*}
then the by the above bounds it follows that%
\begin{equation}
\left\Vert B_{1}\left(  \hbar\right)  \dots B_{i-1}\left(  \hbar\right)
\left[  B_{i}\left(  \hbar\right)  -B_{i}\right]  B_{i+1}\dots B_{n}%
\right\Vert _{\sigma_{n}\left(  i\right)  \rightarrow0}<\infty.
\label{equ.9.25}%
\end{equation}
One shows $\sigma_{n}\left(  i\right)  $ is increasing in $i$ and therefore
$\max_{1\leq i\leq n}\sigma_{n}\left(  i\right)  =\sigma_{n}\left(  n\right)
=\sigma_{n}$ where $\sigma_{n}$ is as in Notation \ref{not.9.12}. Equation
(\ref{equ.9.19}) now follows from Eqs. (\ref{equ.9.21}) and (\ref{equ.9.25})
with $\sigma_{n}\left(  i\right)  $ increased to $\sigma_{n}.$
\end{proof}

We finish the proof of Theorem \ref{the.1.20} with Lemma \ref{lem.9.13}.

\begin{proof}
[Proof of Theorem \ref{the.1.20}]Note that we have already shown that
$A_{\hbar}\left(  t_{i}\right)  $ and $A_{\hbar}^{\dag}\left(  t_{i}\right)  $
preserve $\mathcal{S}$ from Eq. (\ref{equ.6.1}) and $U_{\hbar}\left(
\alpha_{0}\right)  \mathcal{S}=\mathcal{S}$ and $U_{\hbar}\left(  \alpha
_{0}\right)  ^{\ast}\mathcal{S}=\mathcal{S}$ from Proposition \ref{pro.2.6}.
To show Eq.(\ref{equ.1.23}), for $\psi\in\mathcal{S},$ we have
\begin{align}
&  \left\langle P\left(  \left\{  A_{\hbar}\left(  t_{i}\right)
-\alpha\left(  t_{i}\right)  ,A_{\hbar}^{\dag}\left(  t_{i}\right)
-\overline{\alpha}\left(  t_{i}\right)  \right\}  _{i=1}^{n}\right)
\right\rangle _{U_{\hbar}\left(  \alpha_{0}\right)  \psi}\nonumber\\
=  &  \left\langle P\left(  \left\{  U_{\hbar}^{\ast}\left(  \alpha
_{0}\right)  A_{\hbar}\left(  t_{i}\right)  U_{\hbar}\left(  \alpha
_{0}\right)  -\alpha\left(  t_{i}\right)  ,U_{\hbar}^{\ast}\left(  \alpha
_{0}\right)  A_{\hbar}^{\dag}\left(  t_{i}\right)  U_{\hbar}\left(  \alpha
_{0}\right)  -\overline{\alpha}\left(  t_{i}\right)  \right\}  _{i=1}%
^{n}\right)  \right\rangle _{\psi}\nonumber\\
=  &  \left\langle P\left(  \left\{  \sqrt{\hbar}a\left(  \hbar:t_{i}\right)
,\sqrt{\hbar}a^{\dag}\left(  \hbar:t_{i}\right)  \right\}  _{i=1}^{n}\right)
\right\rangle _{\psi} \label{equ.9.26}%
\end{align}
where $\left\langle \cdot\right\rangle _{\psi}$ is defined in Definition
\ref{def.1.8} and the last step is asserted by Eq. (\ref{equ.9.14}). Supposed
$p=\operatorname{deg}\left(  P\left(  \left\{  \theta,\theta^{\ast}\right\}
_{i=1}^{n}\right)  \right)  $ and $p_{\min}$ is then minimum degree of each
non-constant term in $P\left(  \left\{  \theta_{i},\theta_{i}^{\ast}\right\}
_{i=1}^{n}\right)  .$ As $p=0$ is a trivial case, we assume $p>0.$ Then, it
follows
\begin{equation}
P\left(  \left\{  \theta_{i},\theta_{i}^{\ast}\right\}  _{i=1}^{n}\right)
=P_{0}+\sum_{k=p_{min}}^{p}P_{k}\left(  \left\{  \theta_{i},\theta_{i}^{\ast
}\right\}  _{i=1}^{n}\right)  \label{equ.9.27}%
\end{equation}
where $P_{0}\in\mathbb{C}$ and%
\[
P_{k}\left(  \left\{  \theta_{i},\theta_{i}^{\ast}\right\}  _{i=1}^{n}\right)
=\sum_{b_{1},.\dots,b_{k}\in\left\{  \theta_{i},\theta_{i}^{\ast}\right\}
_{i=1}^{n}}c\left(  b_{1},\dots,b_{k}\right)  b_{1}\dots b_{k}%
\]
is a homogeneous polynomial of $\left\{  \theta_{i},\theta_{i}^{\ast}\right\}
_{i=1}^{n}$ with degree $k.$ Plugging Eq.(\ref{equ.9.27}) into
Eq.(\ref{equ.9.26}) gives,
\begin{align}
&  \left\langle P\left(  \left\{  \sqrt{\hbar}a\left(  \hbar:t_{i}\right)
,\sqrt{\hbar}a^{\dag}\left(  \hbar:t_{i}\right)  \right\}  _{i=1}^{n}\right)
\right\rangle _{\psi}\nonumber\\
=  &  P_{0}+\sum_{k=p_{\min}}^{p}\hbar^{\frac{k}{2}}\left\langle P_{k}\left(
\left\{  a\left(  \hbar:t_{i}\right)  ,a^{\dag}\left(  \hbar:t_{i}\right)
\right\}  _{i=1}^{n}\right)  \right\rangle _{\psi} \label{equ.9.28}%
\end{align}
wherein we have used the fact that $P_{k}$ is a homogeneous polynomial of
degree $k$ in $\left\{  \theta_{i},\theta_{i}^{\ast}\right\}  _{i=1}^{n}.$ By
Lemma \ref{lem.9.13}, for $0<\hbar<\eta,$ we have
\[
\left\Vert P_{k}\left(  \left\{  a\left(  \hbar:t_{i}\right)  ,a^{\dag}\left(
\hbar:t_{i}\right)  \right\}  _{i=1}^{n}\right)  \psi\right\Vert =\left\Vert
P_{k}\left(  \left\{  a\left(  t_{i}\right)  ,a^{\dag}\left(  t_{i}\right)
\right\}  _{i=1}^{n}\right)  \psi\right\Vert +O\left(  \sqrt{\hbar}\right)  .
\]
Therefore, for $k\geq1,$ we have
\begin{multline}
\hbar^{\frac{k}{2}}\left\langle P_{k}\left(  \left\{  a\left(  \hbar
:t_{i}\right)  ,a^{\dag}\left(  \hbar:t_{i}\right)  \right\}  _{i=1}%
^{n}\right)  \right\rangle _{\psi}\\
=\hbar^{\frac{k}{2}}\left\langle P_{k}\left(  \left\{  a\left(  t_{i}\right)
,a^{\dag}\left(  t_{i}\right)  \right\}  _{i=1}^{n}\right)  \right\rangle
_{\psi}+O\left(  \hbar^{\frac{k+1}{2}}\right)  . \label{equ.9.29}%
\end{multline}
Applying Eq.(\ref{equ.9.29}) to Eq.(\ref{equ.9.28}), we have
\begin{align*}
&  \left\langle P\left(  \left\{  \sqrt{\hbar}a\left(  \hbar:t_{i}\right)
,\sqrt{\hbar}a^{\dag}\left(  \hbar:t_{i}\right)  \right\}  _{i=1}^{n}\right)
\right\rangle _{\psi}\\
&  \qquad=P_{0}+\sum_{k=p_{\min}}^{p}\hbar^{\frac{k}{2}}\left\langle
P_{k}\left(  \left\{  a\left(  t_{i}\right)  ,a^{\dag}\left(  t_{i}\right)
\right\}  _{i=1}^{n}\right)  \right\rangle _{\psi}+O\left(  \hbar^{\frac
{k+1}{2}}\right) \\
&  \qquad=\left\langle P\left(  \left\{  \sqrt{\hbar}a\left(  t_{i}\right)
,\sqrt{\hbar}a^{\dag}\left(  t_{i}\right)  \right\}  _{i=1}^{n}\right)
\right\rangle _{\psi}+O\left(  \hbar^{\frac{p_{\min}+1}{2}}\right)  .
\end{align*}
Therefore, Eq.(\ref{equ.1.23}) follows immediately.
\end{proof}

\subsection{Proof of Corollary \ref{cor.1.22}}

Let $P\left(  \left\{  \theta_{i},\theta_{i}^{\ast}\right\}  _{i=1}%
^{n}\right)  \in\mathbb{C}\left\langle \left\{  \theta_{i},\theta_{i}^{\ast
}\right\}  _{i=1}^{n}\right\rangle $ be a non-commutative polynomial, $\psi
\in\mathcal{S}$ and $\left\{  t_{1},\dots,t_{n}\right\}  \subseteq\mathbb{R}.$
With out loss of generality, we may assume $\operatorname{deg}\left(
P\right)  \geq1.$ We define, (see Notation \ref{not.2.20}),
\begin{align*}
\widetilde{P}\left(  \left\{  \alpha\left(  t_{i}\right)  :\theta_{i}%
,\theta_{i}^{\ast}\right\}  _{i=1}^{n}\right)   &  =P\left(  \left\{
\theta_{i}+\alpha\left(  t_{i}\right)  ,\theta_{i}^{\ast}+\overline{\alpha
}\left(  t_{i}\right)  \right\}  _{i=1}^{n}\right)  \\
&  \in\mathbb{C}\left[  \left\{  \alpha\left(  t_{i}\right)  ,\overline
{\alpha\left(  t_{i}\right)  }\right\}  _{i=1}^{n}\right]  \left\langle
\left\{  \theta_{i},\theta_{i}^{\ast}\right\}  _{i=1}^{n}\right\rangle .
\end{align*}
Note that $\operatorname{deg}_{\theta}\left(  \widetilde{P}\right)
=\operatorname{deg}\left(  P\right)  $ (see Notation \ref{not.2.20}) and
$\widetilde{p}_{\min}\geq1$ because $\operatorname{deg}\left(  \widetilde
{P}\right)  \geq1.$ By Theorem \ref{the.1.20} , for $0<\hbar<\eta,$ we have
\begin{align*}
&  \left\langle P\left(  \left\{  A_{\hbar}\left(  t_{i}\right)  ,A_{\hbar
}^{\dag}\left(  t_{i}\right)  \right\}  _{i=1}^{n}\right)  \right\rangle
_{U_{\hbar}\left(  \alpha_{0}\right)  \psi}\\
&  =\left\langle \widetilde{P}\left(  \left\{  \alpha\left(  t_{i}\right)
:A_{\hbar}\left(  t_{i}\right)  -\alpha\left(  t_{i}\right)  ,A_{\hbar}^{\dag
}\left(  t_{i}\right)  -\overline{\alpha}\left(  t_{i}\right)  \right\}
_{i=1}^{n}\right)  \right\rangle _{U_{\hbar}\left(  \alpha_{0}\right)  \psi}\\
&  =\left\langle \widetilde{P}\left(  \left\{  \alpha\left(  t_{i}\right)
:\sqrt{\hbar}a\left(  t_{i}\right)  ,\sqrt{\hbar}a^{\dag}\left(  t_{i}\right)
\right\}  _{i=1}^{n}\right)  \right\rangle _{\psi}+O\left(  \hbar
^{\frac{\widetilde{p}_{\min}+1}{2}}\right)  \\
&  =\left\langle P\left(  \left\{  \alpha\left(  t_{i}\right)  +\sqrt{\hbar
}a\left(  t_{i}\right)  ,\overline{\alpha}\left(  t_{i}\right)  +\sqrt{\hbar
}a^{\dag}\left(  t_{i}\right)  \right\}  _{i=1}^{n}\right)  \right\rangle
_{\psi}+O\left(  \hbar^{\frac{\widetilde{p}_{\min}+1}{2}}\right)  \\
&  =\left\langle P\left(  \left\{  \alpha\left(  t_{i}\right)  +\sqrt{\hbar
}a\left(  t_{i}\right)  ,\overline{\alpha}\left(  t_{i}\right)  +\sqrt{\hbar
}a^{\dag}\left(  t_{i}\right)  \right\}  _{i=1}^{n}\right)  \right\rangle
_{\psi}+O\left(  \hbar\right)  .
\end{align*}
The last equality is because $\widetilde{p}_{\min}$ is at least $1$.
Therefore, Eq. (\ref{equ.1.25}) follows.

\subsection{Proof of Corollary \ref{cor.1.24}}

By Eqs. (\ref{equ.1.8}) and (\ref{equ.1.9}) in Definition \ref{def.1.3}, the
term $\left\langle P_{1}\left(  \left\{  \alpha\left(  t_{i}\right)  :a\left(
t_{i}\right)  ,a^{\dag}\left(  t_{i}\right)  \right\}  _{i=1}^{n}\right)
\right\rangle _{\psi}$ in Eq.(\ref{equ.1.26}) is bounded independent of
$\hbar$ for $\psi\in\mathcal{S}.$ Therefore, by setting $\hbar\rightarrow0$ in
Eq.(\ref{equ.1.26}), Eq.(\ref{equ.1.28}) follows. To show Eq.(\ref{equ.1.29}),
let $p_{\min}$ be the minimum degree of all non constant terms in $P\left(
\left\{  \theta_{i},\theta_{i}^{\ast}\right\}  _{i=1}^{n}\right)  .$ We assume
$p_{\min}\geq1$ as usual. Otherwise, it means $P$ is a constant polynomial
which is a trivial case in Eq. (\ref{equ.1.29}). With the same notations as in
Eq. (\ref{equ.9.27}), we have
\[
P\left(  \left\{  \theta_{i},\theta_{i}^{\ast}\right\}  _{i=1}^{n}\right)
=P_{0}+\sum_{k=p_{\min}}^{p}P_{k}\left(  \left\{  \theta_{i},\theta_{i}^{\ast
}\right\}  _{i=1}^{n}\right)  .
\]
Then, we apply Eq.(\ref{equ.1.23}) on each term $P_{k}$ where $k\geq1,$ and
get
\begin{align}
&  \left\langle P_{k}\left(  \left\{  A_{\hbar}\left(  t_{i}\right)
-\alpha\left(  t_{i}\right)  ,A_{\hbar}^{\dag}\left(  t_{i}\right)
-\overline{\alpha}\left(  t_{i}\right)  \right\}  _{i=1}^{n}\right)
\right\rangle _{U_{\hbar}\left(  \alpha_{0}\right)  \psi}\nonumber\\
=  &  \left\langle P_{k}\left(  \left\{  \sqrt{\hbar}a\left(  t_{i}\right)
,\sqrt{\hbar}a^{\dag}\left(  t_{i}\right)  \right\}  _{i=1}^{n}\right)
\right\rangle _{\psi}+O\left(  \hbar^{\frac{k+1}{2}}\right) \nonumber\\
=  &  \hbar^{\frac{k}{2}}\left(  \left\langle P_{k}\left(  \left\{  a\left(
t_{i}\right)  ,a^{\dag}\left(  t_{i}\right)  \right\}  _{i=1}^{n}\right)
\right\rangle _{\psi}+O\left(  \hbar^{\frac{1}{2}}\right)  \right)  .
\label{equ.9.30}%
\end{align}
By applying Eq.(\ref{equ.9.30}), we have
\begin{align*}
&  \left\langle P\left(  \left\{  \frac{A_{\hbar}\left(  t_{i}\right)
-\alpha\left(  t_{i}\right)  }{\sqrt{\hbar}},\frac{A_{\hbar}^{\dag}\left(
t_{i}\right)  -\overline{\alpha}\left(  t_{i}\right)  }{\sqrt{\hbar}}\right\}
_{i=1}^{n}\right)  \right\rangle _{U_{\hbar}\left(  \alpha_{0}\right)  \psi}\\
&  =P_{0}+\sum_{k=p_{\min}}^{p}\frac{1}{\hbar^{\frac{k}{2}}}\left\langle
P_{k}\left(  \left\{  A_{\hbar}\left(  t_{i}\right)  -\alpha\left(
t_{i}\right)  ,A_{\hbar}^{\dag}\left(  t_{i}\right)  -\overline{\alpha}\left(
t_{i}\right)  \right\}  _{i=1}^{n}\right)  \right\rangle _{U_{\hbar}\left(
\alpha_{0}\right)  \psi}\\
&  =P_{0}+\sum_{k=p_{\min}}^{p}\left\langle P_{k}\left(  \left\{  a\left(
t_{i}\right)  ,a^{\dag}\left(  t_{i}\right)  \right\}  _{i=1}^{n}\right)
\right\rangle _{\psi}+O\left(  \hbar^{\frac{1}{2}}\right) \\
&  =\left\langle P\left(  \left\{  a\left(  t_{i}\right)  ,a^{\dag}\left(
t_{i}\right)  \right\}  _{i=1}^{n}\right)  \right\rangle _{\psi}+O\left(
\hbar^{\frac{1}{2}}\right)  .
\end{align*}
Eq.(\ref{equ.1.29}) follows.

\section{Appendix: Main Theorems in terms of the standard CCRs\label{sec.10}}

Let
\[
\hat{a}_{\hbar}=\frac{1}{\sqrt{2}}\left(  M_{x}+\hbar\frac{d}{dx}\right)
\text{ and }\hat{a}_{\hbar}^{\dag}=\frac{1}{\sqrt{2}}\left(  M_{x}-\hbar
\frac{d}{dx}\right)
\]
(as an operator on $\mathcal{S})$ be the more standard representation for the
annihilation and creation operators form of the CCRs used in the physics
literature. We will reformulate Theorem \ref{the.1.20}, Corollaries
\ref{cor.1.22} and \ref{cor.1.24} in the standard CCRs. The following lemma
(whose proof is left to the reader) implements the equivalence of our
representation of the canonical commutation relations (CCRs) to the standard
representation of the CCRs.

\begin{lemma}
\label{lem.10.1}For $\rho>0,$ let $S_{\rho}:L^{2}\left(  \mathbb{R}\right)
\rightarrow L^{2}\left(  \mathbb{R}\right)  $ be the unitary map defined by
\[
\left(  S_{\rho}f\right)  \left(  x\right)  :=\sqrt{\rho}f\left(  \rho
x\right)  \text{ for }x\in\mathbb{R}.
\]
Then $S_{\rho}\mathcal{S}=\mathcal{S}$ and it follows that
\[
\hat{a}_{\hbar}=S_{\hbar^{-1/2}}a_{\hbar}S_{\hbar^{1/2}}\text{ and }\hat
{a}_{\hbar}^{\dag}=S_{\hbar^{-1/2}}a_{\hbar}^{\dag}S_{\hbar^{1/2}}%
\]

\end{lemma}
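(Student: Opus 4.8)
The plan is to prove all three assertions by direct calculation, since $S_{\rho}$ is nothing more than the $L^{2}$‑unitary implementation of the dilation $x\mapsto\rho x$. First I would record the elementary properties of $S_{\rho}$. Unitarity: for $f\in L^{2}(\mathbb{R})$ the substitution $y=\rho x$ gives $\|S_{\rho}f\|^{2}=\rho\int_{\mathbb{R}}|f(\rho x)|^{2}\,dx=\int_{\mathbb{R}}|f(y)|^{2}\,dy=\|f\|^{2}$. A one‑line check shows $S_{\rho}S_{1/\rho}=I=S_{1/\rho}S_{\rho}$, so $S_{\rho}^{-1}=S_{1/\rho}$. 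Finally $S_{\rho}\mathcal{S}\subseteq\mathcal{S}$, because $x\mapsto\rho x$ is a smooth diffeomorphism of $\mathbb{R}$ under which rapid decay and all derivatives are preserved (each derivative of $S_{\rho}f$ is $\rho^{k+1/2}$ times a dilate of $f^{(k)}$); applying this with $\rho$ replaced by $1/\rho$ and using $S_{\rho}^{-1}=S_{1/\rho}$ upgrades this to the equality $S_{\rho}\mathcal{S}=\mathcal{S}$.

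Next, for the conjugation identities, I would fix $f\in\mathcal{S}$, set $g:=S_{\hbar^{1/2}}f$ so that $g(x)=\hbar^{1/4}f(\sqrt{\hbar}\,x)$ and hence $g'(x)=\hbar^{3/4}f'(\sqrt{\hbar}\,x)$, recall $a_{\hbar}=\sqrt{\hbar/2}\,(M_{x}+\partial_{x})$ from Definition \ref{def.1.3}, and evaluate $S_{\hbar^{-1/2}}a_{\hbar}g$ in stages. Computing $(a_{\hbar}g)$ at the point $x/\sqrt{\hbar}$ and then multiplying by the normalization $\hbar^{-1/4}$ coming from $S_{\hbar^{-1/2}}$, one finds
\[
\bigl(S_{\hbar^{-1/2}}a_{\hbar}S_{\hbar^{1/2}}f\bigr)(x)=\hbar^{-1/4}\,(a_{\hbar}g)\!\left(\tfrac{x}{\sqrt{\hbar}}\right)=\hbar^{-1/4}\sqrt{\tfrac{\hbar}{2}}\Bigl(\hbar^{-1/4}\,x f(x)+\hbar^{3/4}f'(x)\Bigr)=\tfrac{1}{\sqrt{2}}\bigl(x f(x)+\hbar f'(x)\bigr),
\]
which is exactly $(\hat{a}_{\hbar}f)(x)$. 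The identity $\hat{a}_{\hbar}^{\dag}=S_{\hbar^{-1/2}}a_{\hbar}^{\dag}S_{\hbar^{1/2}}$ follows by the same computation with $+\partial_{x}$ replaced by $-\partial_{x}$ throughout — or, alternatively, by taking formal adjoints of the first identity and using that $S_{\rho}$ is unitary and preserves $\mathcal{S}$, so conjugation by $S_{\hbar^{-1/2}}$ intertwines the formal‑adjoint operation.

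There is no genuine analytic obstacle here; the one thing requiring care is the bookkeeping of the fractional powers of $\hbar$. The $\sqrt{\rho}$ normalization built into $S_{\rho}$ contributes factors $\hbar^{\pm1/4}$, and the chain rule applied to the $\sqrt{\hbar}$‑dilate contributes an additional $\hbar^{1/2}$; the content of the lemma is precisely that these combine so as to convert the prefactor $\sqrt{\hbar/2}$ of $a_{\hbar}$ into the $\tfrac{1}{\sqrt{2}}$ and the $\hbar$ multiplying $d/dx$ that appear in $\hat{a}_{\hbar}$. Once the exponents are tracked correctly the verification is immediate, which is why the proof may reasonably be left to the reader.
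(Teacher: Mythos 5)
Your proof is correct, and since the paper explicitly leaves this verification to the reader, your direct computation (unitarity and invertibility of $S_{\rho}$ via substitution, preservation of $\mathcal{S}$ under dilation, and the pointwise evaluation of $S_{\hbar^{-1/2}}a_{\hbar}S_{\hbar^{1/2}}f$ with careful tracking of the $\hbar^{\pm 1/4}$ and chain-rule factors) is exactly the intended argument. The exponent bookkeeping checks out: $\hbar^{-1/4}\cdot\hbar^{1/2}\cdot\hbar^{-1/4}=1$ for the multiplication term and $\hbar^{-1/4}\cdot\hbar^{1/2}\cdot\hbar^{3/4}=\hbar$ for the derivative term, recovering $\hat{a}_{\hbar}=\frac{1}{\sqrt{2}}\left(M_{x}+\hbar\frac{d}{dx}\right)$.
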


\begin{definition}
\label{def.10.2}For $\hbar>0$ and $\alpha:=\left(  \xi+i\pi\right)  /\sqrt{2},
$ let
\[
\hat{U}_{\hbar}\left(  \alpha\right)  =\exp\left(  \frac{1}{\hbar}\left(
\overline{\alpha\hat{a}_{\hbar}^{\dag}-\bar{\alpha}\hat{a}_{\hbar}}\right)
\right)
\]
be the unitary operator on $L^{2}\left(  \mathbb{R}\right)  $ which implements
translation by $\left(  \xi,\pi\right)  $ in phase space.
\end{definition}

Using the more standard representation of the CCRs instead, we have an
immediate corollary from Theorem \ref{the.1.20}.

\begin{theorem}
\label{the.10.4}Suppose $H\left(  \theta,\theta^{\ast}\right)  \in
\mathbb{R}\left<  \theta,\theta^{\ast}\right>  $ is a non-commutative
polynomial in two indeterminates, $d=deg H>0$ and $0<\eta\leq1$ satisfying the
same assumptions in Theorem \ref{the.1.20}. Let $\hat{H}_{\hbar}%
:=\overline{H\left(  \hat{a}_{\hbar},\hat{a}_{\hbar}^{\dag}\right)  }.$ We
define
\[
\hat{A}_{\hbar}\left(  t\right)  :=e^{i\hat{H}_{\hbar}t/\hbar}\hat{a}_{\hbar
}e^{-i\hat{H}_{\hbar}t/\hbar}%
\]
denote $\hat{a}_{\hbar}$ in the Heisenberg picture. Furthermore for all
$\psi\in\mathcal{S},$ $\alpha_{0}\in\mathbb{C},$ $0<\hbar<\eta,$ real numbers
$\left\{  t_{i}\right\}  _{i=1}^{n}\subset\mathbb{R},$ and non-commutative
polynomial, $P\left(  \left\{  \theta_{i},\theta_{i}^{*}\right\}  _{i=1}%
^{n}\right)  \in\mathbb{C}\left\langle \left\{  \theta_{i},\theta_{i}%
^{*}\right\}  _{i=1}^{n}\right\rangle ,$ in $2n$ -- indeterminants where
$p_{\min}$ be the minimum degree of all non constant terms in $P\left(
\left\{  \theta_{i},\theta_{i}^{\ast}\right\}  _{i=1}^{n}\right)  ,$ the
following weak limits (in the sense of non-commutative probability) hold;
\begin{align}
&  \left\langle P\left(  \left\{  \hat{A}_{\hbar}\left(  t_{i}\right)
-\alpha\left(  t_{i}\right)  ,\hat{A}_{\hbar}^{\dag}\left(  t_{i}\right)
-\overline{\alpha}\left(  t_{i}\right)  \right\}  _{i=1}^{n}\right)
\right\rangle _{\hat{U}_{\hbar}\left(  \alpha_{0}\right)  S_{\hbar^{-1/2}}%
\psi}\nonumber\\
&  \quad\quad=\left\langle P\left(  \left\{  \sqrt{\hbar}a\left(
t_{i}\right)  ,\sqrt{\hbar}a^{\dag}\left(  t_{i}\right)  \right\}  _{i=1}%
^{n}\right)  \right\rangle _{\psi}+O\left(  \hbar^{\frac{p_{\min}+1}{2}%
}\right)  . \label{equ.10.1}%
\end{align}
where $a\left(  t\right)  $ and $a^{\dag}\left(  t\right)  $ are as in Eqs.
(\ref{equ.1.8}) and (\ref{equ.1.9}).
\end{theorem}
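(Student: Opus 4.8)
The plan is to deduce Theorem \ref{the.10.4} directly from Theorem \ref{the.1.20} by conjugating everything in sight by the unitary scaling map $S_{\rho}$ of Lemma \ref{lem.10.1} with $\rho=\hbar^{-1/2}$. First I would record the basic intertwining identities: by Lemma \ref{lem.10.1} we have $\hat a_{\hbar}=S_{\hbar^{-1/2}}a_{\hbar}S_{\hbar^{1/2}}$ and $\hat a_{\hbar}^{\dag}=S_{\hbar^{-1/2}}a_{\hbar}^{\dag}S_{\hbar^{1/2}}$, and since $S_{\hbar^{1/2}}=S_{\hbar^{-1/2}}^{\ast}$ these are genuine unitary conjugations preserving $\mathcal{S}$. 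Because $P\mapsto P(\cdot,\cdot)$ is an algebra homomorphism that also respects the involution (as recalled in the discussion preceding Definition \ref{def.2.15}), conjugation by $S_{\hbar^{-1/2}}$ sends $H(a_{\hbar},a_{\hbar}^{\dag})$ to $H(\hat a_{\hbar},\hat a_{\hbar}^{\dag})$, hence $\hat H_{\hbar}=\overline{H(\hat a_{\hbar},\hat a_{\hbar}^{\dag})}=S_{\hbar^{-1/2}}H_{\hbar}S_{\hbar^{1/2}}$, and therefore $e^{-i\hat H_{\hbar}t/\hbar}=S_{\hbar^{-1/2}}e^{-iH_{\hbar}t/\hbar}S_{\hbar^{1/2}}$ by the functional calculus. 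Consequently the Heisenberg-picture operator transforms as $\hat A_{\hbar}(t)=S_{\hbar^{-1/2}}A_{\hbar}(t)S_{\hbar^{1/2}}$, and likewise $\hat A_{\hbar}^{\dag}(t)=S_{\hbar^{-1/2}}A_{\hbar}^{\dag}(t)S_{\hbar^{1/2}}$.

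Next I would check that the Weyl operators match up, i.e. $\hat U_{\hbar}(\alpha)=S_{\hbar^{-1/2}}U_{\hbar}(\alpha)S_{\hbar^{1/2}}$. This again follows from $\hat a_{\hbar}=S_{\hbar^{-1/2}}a_{\hbar}S_{\hbar^{1/2}}$ together with the definitions in Definition \ref{def.1.6} and Definition \ref{def.10.2}: conjugating the skew-adjoint generator $\tfrac1\hbar(\overline{\alpha a_{\hbar}^{\dag}-\bar\alpha a_{\hbar}})$ by $S_{\hbar^{-1/2}}$ yields $\tfrac1\hbar(\overline{\alpha\hat a_{\hbar}^{\dag}-\bar\alpha\hat a_{\hbar}})$, and exponentials of unitarily conjugate generators are unitarily conjugate. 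Combining this with the previous paragraph, the test state in Theorem \ref{the.10.4} satisfies $\hat U_{\hbar}(\alpha_{0})S_{\hbar^{-1/2}}\psi=S_{\hbar^{-1/2}}U_{\hbar}(\alpha_{0})S_{\hbar^{1/2}}S_{\hbar^{-1/2}}\psi=S_{\hbar^{-1/2}}U_{\hbar}(\alpha_{0})\psi$, and this is $L^{2}$-normalized whenever $\psi$ is, since $S_{\hbar^{-1/2}}$ is unitary and $U_{\hbar}(\alpha_0)\mathcal S=\mathcal S$.

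With these identities in hand, I would compute the left-hand side of \eqref{equ.10.1}. For a monomial $b_{1}\cdots b_{k}$ in the letters $\{\hat A_{\hbar}(t_{i})-\alpha(t_{i}),\hat A_{\hbar}^{\dag}(t_{i})-\bar\alpha(t_{i})\}$, each factor equals $S_{\hbar^{-1/2}}\big(\Xi(b_j)(t_j)-(\text{scalar})\big)S_{\hbar^{1/2}}$ where the scalar is real and thus commutes with $S_{\hbar^{1/2}}S_{\hbar^{-1/2}}=I$; telescoping the intermediate $S_{\hbar^{1/2}}S_{\hbar^{-1/2}}=I$ pairs gives
\begin{align*}
P\!\left(\left\{\hat A_{\hbar}(t_{i})-\alpha(t_{i}),\hat A_{\hbar}^{\dag}(t_{i})-\overline{\alpha}(t_{i})\right\}_{i=1}^{n}\right)
&=S_{\hbar^{-1/2}}\,P\!\left(\left\{A_{\hbar}(t_{i})-\alpha(t_{i}),A_{\hbar}^{\dag}(t_{i})-\overline{\alpha}(t_{i})\right\}_{i=1}^{n}\right)S_{\hbar^{1/2}}.
\end{align*}
Hence the expectation of the left side of \eqref{equ.10.1} in the state $\hat U_{\hbar}(\alpha_{0})S_{\hbar^{-1/2}}\psi$ equals the expectation of $P(\{A_{\hbar}(t_{i})-\alpha(t_{i}),A_{\hbar}^{\dag}(t_{i})-\bar\alpha(t_{i})\}_{i=1}^{n})$ in the state $S_{\hbar^{1/2}}\,\hat U_{\hbar}(\alpha_{0})S_{\hbar^{-1/2}}\psi=U_{\hbar}(\alpha_{0})\psi$, using unitarity of $S_{\hbar^{-1/2}}$ inside the inner product. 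This is precisely the left side of \eqref{equ.1.23} in Theorem \ref{the.1.20}, which equals $\langle P(\{\sqrt{\hbar}a(t_{i}),\sqrt{\hbar}a^{\dag}(t_{i})\}_{i=1}^{n})\rangle_{\psi}+O(\hbar^{(p_{\min}+1)/2})$, giving \eqref{equ.10.1}.

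The only mild subtlety — and the closest thing to an obstacle — is justifying the functional-calculus identity $e^{-i\hat H_{\hbar}t/\hbar}=S_{\hbar^{-1/2}}e^{-iH_{\hbar}t/\hbar}S_{\hbar^{1/2}}$ at the level of unbounded self-adjoint operators, and the fact that $\hat H_{\hbar}$ so defined is exactly the operator $\overline{H(\hat a_{\hbar},\hat a_{\hbar}^{\dag})}$ appearing in the statement; this amounts to observing that unitary conjugation commutes with closure and with adjoints, so $S_{\hbar^{-1/2}}\overline{H(a_{\hbar},a_{\hbar}^{\dag})}S_{\hbar^{1/2}}=\overline{S_{\hbar^{-1/2}}H(a_{\hbar},a_{\hbar}^{\dag})S_{\hbar^{1/2}}}=\overline{H(\hat a_{\hbar},\hat a_{\hbar}^{\dag})}$, and then a conjugated self-adjoint operator generates the conjugated unitary group by uniqueness in Stone's theorem. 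One also needs that $S_{\rho}$ preserves $\mathcal{S}$ (stated in Lemma \ref{lem.10.1}) so that all the domain bookkeeping from Remark \ref{rem.1.21} carries over verbatim. Everything else is purely formal manipulation, so I would keep the written proof short, essentially a paragraph reducing \eqref{equ.10.1} to \eqref{equ.1.23}.
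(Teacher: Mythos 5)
Your proposal is correct and follows essentially the same route as the paper: conjugation by the unitary scaling $S_{\hbar^{-1/2}}$ from Lemma \ref{lem.10.1} to identify the hatted expectation with the unhatted one, followed by an application of Eq. (\ref{equ.1.23}). The extra care you take with the functional-calculus identity for $e^{-i\hat H_\hbar t/\hbar}$ and the telescoping of $S_{\hbar^{1/2}}S_{\hbar^{-1/2}}=I$ inside monomials is exactly the justification the paper leaves implicit.
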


\begin{proof}
By the Lemma \ref{lem.10.1}, we have
\[
A_{\hbar}\left(  t_{i}\right)  =S_{\hbar^{1/2}}\hat{A}_{\hbar}\left(
t_{i}\right)  S_{\hbar^{-1/2}}\text{ and }U_{\hbar}\left(  \alpha_{0}\right)
=S_{\hbar^{1/2}}\hat{U}_{\hbar}\left(  \alpha_{0}\right)  S_{\hbar^{-1/2}}%
\]
on $\mathcal{S}.$ Therefore,
\begin{align*}
&  \left\langle P\left(  \left\{  \hat{A}_{\hbar}\left(  t_{i}\right)
-\alpha\left(  t_{i}\right)  ,\hat{A}_{\hbar}^{\dag}\left(  t_{i}\right)
-\overline{\alpha}\left(  t_{i}\right)  \right\}  _{i=1}^{n}\right)
\right\rangle _{\hat{U}_{\hbar}\left(  \alpha_{0}\right)  S_{\hbar^{-1/2}}%
\psi}\\
=  &  \left\langle P\left(  \left\{  S_{\hbar^{\frac{1}{2}}}\hat{A}_{\hbar
}\left(  t_{i}\right)  S_{\hbar^{-\frac{1}{2}}}-\alpha\left(  t_{i}\right)
,S_{\hbar^{\frac{1}{2}}}\hat{A}_{\hbar}^{\dag}\left(  t_{i}\right)
S_{\hbar^{-\frac{1}{2}}}-\overline{\alpha}\left(  t_{i}\right)  \right\}
_{i=1}^{n}\right)  \right\rangle _{S_{\hbar^{\frac{1}{2}}}\hat{U}_{\hbar
}\left(  \alpha_{0}\right)  S_{\hbar^{-\frac{1}{2}}}\psi}\\
=  &  \left\langle P\left(  \left\{  A_{\hbar}\left(  t_{i}\right)
-\alpha\left(  t_{i}\right)  ,A_{\hbar}^{\dag}\left(  t_{i}\right)
-\overline{\alpha}\left(  t_{i}\right)  \right\}  _{i=1}^{n}\right)
\right\rangle _{U_{\hbar}\left(  \alpha_{0}\right)  \psi}.
\end{align*}
Then, Eq.(\ref{equ.10.1}) follows by applying Eq.(\ref{equ.1.23})..
\end{proof}

Likewise we can show two corollaries of Theorem \ref{the.10.4} below which
behave like Corollaries \ref{cor.1.22} and \ref{cor.1.24}.

\begin{corollary}
\label{cor.10.5}Under the same notations and assumptions in Theorem
\ref{the.10.4}, then ,for $0<\hbar<\eta,$ we have
\begin{align}
&  \left\langle P\left(  \left\{  \hat{A}_{\hbar}\left(  t_{i}\right)
,\hat{A}_{\hbar}^{\dag}\left(  t_{i}\right)  \right\}  _{i=1}^{n}\right)
\right\rangle _{\hat{U}_{\hbar}\left(  \alpha_{0}\right)  S_{\hbar^{-1/2}}%
\psi}\nonumber\\
&  \qquad=\left\langle P\left(  \left\{  \alpha\left(  t_{i}\right)
+\sqrt{\hbar}a\left(  t_{i}\right)  ,\overline{\alpha}\left(  t_{i}\right)
+\sqrt{\hbar}a^{\dag}\left(  t_{i}\right)  \right\}  \right)  \right\rangle
_{\psi}+O\left(  \hbar\right)  . \label{equ.10.2}%
\end{align}

\end{corollary}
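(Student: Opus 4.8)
The plan is to deduce Corollary \ref{cor.10.5} from Corollary \ref{cor.1.22} by conjugating everything with the scaling unitary $S_{\hbar^{1/2}}$, exactly as in the proof of Theorem \ref{the.10.4}. First I would invoke Lemma \ref{lem.10.1} together with the definition $\hat{H}_{\hbar}:=\overline{H(\hat{a}_{\hbar},\hat{a}_{\hbar}^{\dag})}$ to observe that $\hat{H}_{\hbar}=S_{\hbar^{-1/2}}H_{\hbar}S_{\hbar^{1/2}}$ and hence $e^{-i\hat{H}_{\hbar}t/\hbar}=S_{\hbar^{-1/2}}e^{-iH_{\hbar}t/\hbar}S_{\hbar^{1/2}}$ on $\mathcal{S}$ (using $S_{\rho}\mathcal{S}=\mathcal{S}$). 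Consequently $\hat{A}_{\hbar}(t_{i})=S_{\hbar^{-1/2}}A_{\hbar}(t_{i})S_{\hbar^{1/2}}$ and, from Definition \ref{def.10.2} and Lemma \ref{lem.10.1}, $\hat{U}_{\hbar}(\alpha_{0})=S_{\hbar^{-1/2}}U_{\hbar}(\alpha_{0})S_{\hbar^{1/2}}$ as operators on $\mathcal{S}$. Taking adjoints gives the analogous identity with $\hat{A}_{\hbar}^{\dag}(t_{i})$ and $A_{\hbar}^{\dag}(t_{i})$.

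Next I would substitute these identities into the left side of Eq. (\ref{equ.10.2}). Since each factor of the monomial expansion of $P(\{\hat{A}_{\hbar}(t_{i}),\hat{A}_{\hbar}^{\dag}(t_{i})\}_{i=1}^{n})$ is conjugated by the same unitary $S_{\hbar^{\pm1/2}}$, all the internal $S_{\hbar^{\pm1/2}}S_{\hbar^{\mp1/2}}$ pairs cancel (here one uses that $P$ is a noncommutative polynomial in these operators, which all preserve $\mathcal{S}$), leaving
\[
P\!\left(\left\{\hat{A}_{\hbar}(t_{i}),\hat{A}_{\hbar}^{\dag}(t_{i})\right\}_{i=1}^{n}\right)
= S_{\hbar^{-1/2}}\, P\!\left(\left\{A_{\hbar}(t_{i}),A_{\hbar}^{\dag}(t_{i})\right\}_{i=1}^{n}\right) S_{\hbar^{1/2}}\quad\text{on }\mathcal{S}.
\]
Pairing against the state $\hat{U}_{\hbar}(\alpha_{0})S_{\hbar^{-1/2}}\psi = S_{\hbar^{-1/2}}U_{\hbar}(\alpha_{0})\psi$ and using unitarity of $S_{\hbar^{1/2}}$ (so that $S_{\hbar^{1/2}}^{\ast}=S_{\hbar^{-1/2}}$ and inner products are preserved) converts the expectation on the left of Eq. (\ref{equ.10.2}) into exactly the expectation $\langle P(\{A_{\hbar}(t_{i}),A_{\hbar}^{\dag}(t_{i})\}_{i=1}^{n})\rangle_{U_{\hbar}(\alpha_{0})\psi}$ appearing on the left of Eq. (\ref{equ.1.25}).

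Finally I would apply Corollary \ref{cor.1.22} directly: its conclusion Eq. (\ref{equ.1.25}) gives that this quantity equals $\langle P(\{\alpha(t_{i})+\sqrt{\hbar}a(t_{i}),\overline{\alpha}(t_{i})+\sqrt{\hbar}a^{\dag}(t_{i})\}_{i=1}^{n})\rangle_{\psi}+O(\hbar)$, which is precisely the right side of Eq. (\ref{equ.10.2}). The argument is entirely routine; the only point requiring a little care — the "main obstacle," such as it is — is bookkeeping the domain issues so that the conjugation identities and cancellations are legitimate: one must check that $\hat{A}_{\hbar}(t_{i})$, $\hat{A}_{\hbar}^{\dag}(t_{i})$, and $\hat{U}_{\hbar}(\alpha_{0})$ all preserve $\mathcal{S}$ (which follows from Remark \ref{rem.1.21}, Proposition \ref{pro.2.6}, and $S_{\rho}\mathcal{S}=\mathcal{S}$), so that all expressions below are well defined and the telescoping cancellation of the $S_{\hbar^{\pm1/2}}$ factors inside the polynomial is valid on $\mathcal{S}$.
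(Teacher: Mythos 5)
Your proposal is correct and follows the same route as the paper: conjugate by the scaling unitary $S_{\hbar^{1/2}}$ via Lemma \ref{lem.10.1} to rewrite the left side as $\langle P(\{A_{\hbar}(t_{i}),A_{\hbar}^{\dag}(t_{i})\}_{i=1}^{n})\rangle_{U_{\hbar}(\alpha_{0})\psi}$, then apply Eq. (\ref{equ.1.25}) of Corollary \ref{cor.1.22}. Your additional care with the telescoping cancellation and the preservation of $\mathcal{S}$ is a welcome (if routine) elaboration of what the paper leaves implicit.
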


\begin{proof}
It is a similar proof as Theorem \ref{the.10.4}. Using Lemma \ref{lem.10.1},
we can conclude
\[
\left\langle P\left(  \left\{  \hat{A}_{\hbar}\left(  t_{i}\right)  ,\hat
{A}_{\hbar}^{\dag}\left(  t_{i}\right)  \right\}  _{i=1}^{n}\right)
\right\rangle _{\hat{U}_{\hbar}\left(  \alpha_{0}\right)  S_{\hbar^{-1/2}}%
\psi}=\left\langle P\left(  \left\{  A_{\hbar}\left(  t_{i}\right)  ,A_{\hbar
}^{\dag}\left(  t_{i}\right)  \right\}  _{i=1}^{n}\right)  \right\rangle
_{U_{\hbar}\left(  \alpha_{0}\right)  \psi}.
\]
Then, the rest of the proof is simply to apply Eq.(\ref{equ.1.25}) and hence,
Eq.(\ref{equ.10.2}) follows.
\end{proof}

\begin{corollary}
\label{cor.10.6}Under the same notations and assumptions in Theorem
\ref{the.10.4}, let $\hat{\psi}_{\hbar}=\hat{U}_{\hbar}\left(  \alpha
_{0}\right)  S_{\hbar^{-1/2}}\psi.$ As $\hbar\rightarrow0^{+},$ we have
\[
\left\langle P\left(  \left\{  \hat{A}_{\hbar}\left(  t_{i}\right)  ,\hat
{A}_{\hbar}^{\dag}\left(  t_{i}\right)  \right\}  _{i=1}^{n}\right)
\right\rangle _{\hat{\psi}_{\hbar}}\rightarrow P\left(  \left\{  \alpha\left(
t_{i}\right)  ,\overline{\alpha}\left(  t_{i}\right)  \right\}  _{i=1}%
^{n}\right)  .
\]
and
\begin{equation}
\left\langle P\left(  \left\{  \frac{\hat{A}_{\hbar}\left(  t_{i}\right)
-\alpha\left(  t_{i}\right)  }{\sqrt{\hbar}},\frac{\hat{A}_{\hbar}^{\dag
}\left(  t_{i}\right)  -\bar{\alpha}\left(  t_{i}\right)  }{\sqrt{\hbar}%
}\right\}  _{i=1}^{n}\right)  \right\rangle _{\hat{\psi}_{\hbar}}%
\rightarrow\left\langle P\left(  \left\{  a\left(  t_{i}\right)  ,a^{\dag
}\left(  t_{i}\right)  \right\}  _{i=1}^{n}\right)  \right\rangle _{\psi}.
\label{equ.10.3}%
\end{equation}
We abbreviate this convergence by saying
\[
\operatorname*{Law}\nolimits_{\hat{\psi}_{\hbar}}\left(  \left\{  \frac
{\hat{A}_{\hbar}\left(  t_{i}\right)  -\alpha\left(  t_{i}\right)  }%
{\sqrt{\hbar}},\frac{\hat{A}_{\hbar}^{\dag}\left(  t_{i}\right)  -\bar{\alpha
}\left(  t_{i}\right)  }{\sqrt{\hbar}}\right\}  _{i=1}^{n}\right)
\rightarrow\operatorname*{Law}\nolimits_{\psi}\left(  \left\{  a\left(
t_{i}\right)  ,a^{\dag}\left(  t_{i}\right)  \right\}  _{i=1}^{n}\right)  .
\]

\end{corollary}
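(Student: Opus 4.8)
The plan is to reduce Corollary \ref{cor.10.6} to Corollary \ref{cor.1.24} via the unitary scaling map $S_{\hbar^{1/2}}$ of Lemma \ref{lem.10.1}, exactly as Corollary \ref{cor.10.5} was reduced to Corollary \ref{cor.1.22}. First I would invoke Lemma \ref{lem.10.1} to write $\hat a_{\hbar}=S_{\hbar^{-1/2}}a_{\hbar}S_{\hbar^{1/2}}$ and $\hat a_{\hbar}^{\dag}=S_{\hbar^{-1/2}}a_{\hbar}^{\dag}S_{\hbar^{1/2}}$ on $\mathcal{S}$. Since $\hat H_{\hbar}=\overline{H(\hat a_{\hbar},\hat a_{\hbar}^{\dag})}$ and $H_{\hbar}=\overline{H(a_{\hbar},a_{\hbar}^{\dag})}$, conjugation by the unitary $S_{\hbar^{1/2}}$ intertwines them, so $e^{-i\hat H_{\hbar}t/\hbar}=S_{\hbar^{-1/2}}e^{-iH_{\hbar}t/\hbar}S_{\hbar^{1/2}}$, and consequently $\hat A_{\hbar}(t)=S_{\hbar^{-1/2}}A_{\hbar}(t)S_{\hbar^{1/2}}$ and $\hat A_{\hbar}^{\dag}(t)=S_{\hbar^{-1/2}}A_{\hbar}^{\dag}(t)S_{\hbar^{1/2}}$ on $\mathcal{S}$. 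Likewise $\hat U_{\hbar}(\alpha_0)=S_{\hbar^{-1/2}}U_{\hbar}(\alpha_0)S_{\hbar^{1/2}}$ from Definitions \ref{def.1.6} and \ref{def.10.2}.

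Next I would observe that with $\hat\psi_{\hbar}=\hat U_{\hbar}(\alpha_0)S_{\hbar^{-1/2}}\psi = S_{\hbar^{-1/2}}U_{\hbar}(\alpha_0)\psi = S_{\hbar^{-1/2}}\psi_{\hbar}$, and since $S_{\hbar^{-1/2}}$ is unitary with $S_{\hbar^{1/2}}=S_{\hbar^{-1/2}}^{\ast}$, for any non-commutative polynomial $Q$ in $2n$ indeterminates and any $\psi\in\mathcal{S}$ one has
\begin{align*}
\left\langle Q\!\left(\left\{\hat A_{\hbar}(t_i),\hat A_{\hbar}^{\dag}(t_i)\right\}_{i=1}^{n}\right)\right\rangle_{\hat\psi_{\hbar}}
&=\left\langle S_{\hbar^{-1/2}}Q\!\left(\left\{A_{\hbar}(t_i),A_{\hbar}^{\dag}(t_i)\right\}_{i=1}^{n}\right)S_{\hbar^{1/2}}\,S_{\hbar^{-1/2}}\psi_{\hbar},\,S_{\hbar^{-1/2}}\psi_{\hbar}\right\rangle\\
&=\left\langle Q\!\left(\left\{A_{\hbar}(t_i),A_{\hbar}^{\dag}(t_i)\right\}_{i=1}^{n}\right)\right\rangle_{\psi_{\hbar}}.
\end{align*}
Here $\psi_{\hbar}=U_{\hbar}(\alpha_0)\psi$ exactly as in Corollary \ref{cor.1.24}. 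Applying this identity first with $Q=P$ gives the first limit, namely $\langle P(\{\hat A_{\hbar}(t_i),\hat A_{\hbar}^{\dag}(t_i)\})\rangle_{\hat\psi_{\hbar}}=\langle P(\{A_{\hbar}(t_i),A_{\hbar}^{\dag}(t_i)\})\rangle_{\psi_{\hbar}}\to P(\{\alpha(t_i),\bar\alpha(t_i)\})$ by Eq. \eqref{equ.1.28}. For the second limit I would take $Q$ to be the polynomial obtained from $P$ by replacing each $\theta_i$ by $\hbar^{-1/2}(\theta_i)$ and $\theta_i^{\ast}$ by $\hbar^{-1/2}(\theta_i^{\ast})$ — equivalently note that
$$P\!\left(\left\{\tfrac{\hat A_{\hbar}(t_i)-\alpha(t_i)}{\sqrt{\hbar}},\tfrac{\hat A_{\hbar}^{\dag}(t_i)-\bar\alpha(t_i)}{\sqrt{\hbar}}\right\}\right)=S_{\hbar^{-1/2}}\,P\!\left(\left\{\tfrac{A_{\hbar}(t_i)-\alpha(t_i)}{\sqrt{\hbar}},\tfrac{A_{\hbar}^{\dag}(t_i)-\bar\alpha(t_i)}{\sqrt{\hbar}}\right\}\right)S_{\hbar^{1/2}}$$
on $\mathcal{S}$, since $S_{\hbar^{\pm1/2}}$ are unitary and scalars commute through them — so that the $\hat\psi_{\hbar}$-expectation of the left side equals the $\psi_{\hbar}$-expectation of $P(\{(A_{\hbar}(t_i)-\alpha(t_i))/\sqrt{\hbar},(A_{\hbar}^{\dag}(t_i)-\bar\alpha(t_i))/\sqrt{\hbar}\})$, which by Eq. \eqref{equ.1.29} converges to $\langle P(\{a(t_i),a^{\dag}(t_i)\})\rangle_{\psi}$. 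This gives Eq. \eqref{equ.10.3} and hence the asserted convergence of laws.

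The only point requiring a little care is the domain bookkeeping in the conjugation identities: all the relevant operators ($a_{\hbar}$, $a_{\hbar}^{\dag}$, $A_{\hbar}(t)$, $A_{\hbar}^{\dag}(t)$, $U_{\hbar}(\alpha_0)$, $S_{\hbar^{\pm1/2}}$) preserve $\mathcal{S}$ — by Definition \ref{def.1.3}, Remark \ref{rem.1.21}, Proposition \ref{pro.2.6}, and Lemma \ref{lem.10.1} — so all the products above are well-defined operators on $\mathcal{S}$ and the expectations against $\psi\in\mathcal{S}$ make sense. I do not anticipate a genuine obstacle here; the content is entirely bundled in Corollary \ref{cor.1.24} and the intertwining Lemma \ref{lem.10.1}, and this is the expected pattern (Theorem \ref{the.10.4} and Corollary \ref{cor.10.5} are proved the same way). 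If one wants to be fully explicit, the ``hard'' (but still routine) step is simply verifying $e^{-i\hat H_{\hbar}t/\hbar}=S_{\hbar^{-1/2}}e^{-iH_{\hbar}t/\hbar}S_{\hbar^{1/2}}$ as bounded operators, which follows from $\hat H_{\hbar}=S_{\hbar^{-1/2}}H_{\hbar}S_{\hbar^{1/2}}$ (a consequence of Lemma \ref{lem.10.1} applied to the polynomial $H$, together with the fact that unitary conjugation commutes with taking operator closures) and the functional calculus for self-adjoint operators.
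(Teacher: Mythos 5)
Your proposal is correct and follows essentially the same route as the paper: the paper's proof likewise uses Lemma \ref{lem.10.1} to intertwine the hatted observables and state with the unhatted ones via $S_{\hbar^{\pm 1/2}}$, converting both expectations into those appearing in Corollary \ref{cor.1.24}, and then cites that corollary. Your additional remarks on $e^{-i\hat H_{\hbar}t/\hbar}=S_{\hbar^{-1/2}}e^{-iH_{\hbar}t/\hbar}S_{\hbar^{1/2}}$ and on domain preservation merely make explicit details the paper leaves implicit.
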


\begin{proof}
Similar to the proof in Theorem \ref{the.10.4}, by using Lemma \ref{lem.10.1},
we have
\[
\left\langle P\left(  \left\{  \hat{A}_{\hbar}\left(  t_{i}\right)  ,\hat
{A}_{\hbar}^{\dag}\left(  t_{i}\right)  \right\}  _{i=1}^{n}\right)
\right\rangle _{\hat{\psi}_{\hbar}}=\left\langle P\left(  \left\{  A_{\hbar
}\left(  t_{i}\right)  ,A_{\hbar}^{\dag}\left(  t_{i}\right)  \right\}
_{i=1}^{n}\right)  \right\rangle _{U_{\hbar}\left(  \alpha_{0}\right)  \psi,}%
\]
and
\begin{align*}
&  \left\langle P\left(  \left\{  \frac{\hat{A}_{\hbar}\left(  t_{i}\right)
-\alpha\left(  t_{i}\right)  }{\sqrt{\hbar}},\frac{\hat{A}_{\hbar}^{\dag
}\left(  t_{i}\right)  -\bar{\alpha}\left(  t_{i}\right)  }{\sqrt{\hbar}%
}\right\}  _{i=1}^{n}\right)  \right\rangle _{\hat{\psi}_{\hbar}}\\
=  &  \left\langle P\left(  \left\{  \frac{A_{\hbar}\left(  t_{i}\right)
-\alpha\left(  t_{i}\right)  }{\sqrt{\hbar}},\frac{A_{\hbar}^{\dag}\left(
t_{i}\right)  -\bar{\alpha}\left(  t_{i}\right)  }{\sqrt{\hbar}}\right\}
_{i=1}^{n}\right)  \right\rangle _{U_{\hbar}\left(  \alpha_{0}\right)  \psi}.
\end{align*}
Therefore, the corollary is a direct consequence of Corollary \ref{cor.1.24}.
\end{proof}

 \bibliographystyle{plain}
\bibliography{semi-classical}
\end{document}